\numberwithin{equation}{section}
\let\c@table\c@figure \makeatother
\renewcommand{\nllabel}[1]
 {{\let\@currentlabel\algocf@currentlabel
  \let\@currentcounter\algocf@currentcounter
  \label{#1}}}%
\renewcommand{\algocf@nl@sethref}[1]{%
  \renewcommand{\theHAlgoLine}{\thealgocfproc.#1}%
  \hyper@refstepcounter{AlgoLine}%
  \gdef\algocf@currentlabel{#1}%
  \gdef\algocf@currentcounter{AlgoLine}%
 }%
\newcommand*{\AssertKeywordText}{verify}
\newcommand*{\RefuteKeywordText}{verify not}
\newcommand{\Func}[2]{\SetAlgoNoEnd\SetAlgoNoLine{}\FuncHelper{#1}{\SetAlgoShortEnd\SetAlgoVlined{}#2}}
\newcommand{\Proc}[2]{\SetAlgoNoEnd\SetAlgoNoLine{}\ProcHelper{#1}{\SetAlgoShortEnd\SetAlgoVlined{}#2}}
\let\oldnl\nl%
\newcommand*{\nonl}{\renewcommand{\nl}{\let\nl\oldnl}}%
\newcommand*{\citelinktext}[2]{%
  \nocite{#1}%
  \hyper@@link[cite]{}{cite.\the\c@refsection @#1}{#2}%
}
\providecommand{\texorpdfstring}[2]{#1} %
\def\mycopyright#1{%
    \protected@xdef \@thanks {\@thanks \protect \footnotetext [\the \c@footnote ]{#1}}%
}
\newcommand{\nocontentsline}[3]{}
\newcommand\stoptoc{\let\origcontentsline\addcontentsline\let\addcontentsline\nocontentsline}
\newcommand\resumetoc{\let\addcontentsline\origcontentsline}
\def\nbdash-{\nobreakdash-\hspace{0pt}}
\newcommand*{\hashsep}{{\#}}
\newcommand*{\dollarsep}{{\$}}
\newcommand*{\Wlog}{w.l.o.g.\@\xspace}
\newcommand*{\Wrt}{w.r.t.\@\xspace}
\newcommand*{\HausdSeqRequirementI}{h1\xspace}
\newcommand*{\HausdSeqRequirementII}{h2\xspace}
\newcommand*{\cmark}{\textnormal{\ding{51}}}
\newcommand*{\xmark}{\textnormal{\ding{55}}}
\newcommand*{\oawarelegal}{oracle-aware\xspace}
\newcommand*{\oawarelegalemph}{oracle-\emph{aware}\xspace}
\newcommand*{\ounawarelegal}{oracle-unaware\xspace}
\newcommand*{\ounawarelegalemph}{oracle-\emph{un}\/aware\xspace}
\newcommand*{\ounawarelegalemphlegal}{oracle-unaware\xspace}
\newcommand*{\ounawarelegalemphboth}{oracle-\emph{un}\/aware\xspace}
\newcommand*{\obothunawarelegal}{oracle-(un)aware\xspace}
\newcommand*{\DatalogPM}{Datalog\textsuperscript{$\pm$}\xspace}
\newcommand*{\CNF}{\textrm{CNF}\xspace}
\newcommand*{\DNF}{\textrm{DNF}\xspace}
\newcommand*{\Minex}{MinEx\xspace}
\newcommand*{\Minexs}{MinExes\xspace}
\newcommand*{\SigmaKFormulaPrefix}[1]{$\ol{\Sigma}_{#1}$}
\newcommand*{\SigmaKFormula}[1]{\SigmaKFormulaPrefix{#1}\nbdash-formula\xspace}
\newcommand*{\SigmaKFormulas}[1]{\SigmaKFormulaPrefix{#1}\nbdash-formulas\xspace}
\newcommand*{\SigmaKSentences}[1]{\SigmaKFormulaPrefix{#1}\nbdash-sentences\xspace}
\newcommand*{\SigmaKEvenFormulaPrefix}[1]{$\ol{\Sigma}_{#1{}\mhyphen{}\forall}$}
\newcommand*{\SigmaKEvenFormula}[1]{\SigmaKEvenFormulaPrefix{#1}\nbdash-formula\xspace}
\newcommand*{\SigmaKEvenFormulas}[1]{\SigmaKEvenFormulaPrefix{#1}\nbdash-formulas\xspace}
\newcommand*{\SigmaKOddFormulaPrefix}[1]{$\ol{\Sigma}_{#1{}\mhyphen{}\exists}$}
\newcommand*{\SigmaKOddFormula}[1]{\SigmaKOddFormulaPrefix{#1}\nbdash-formula\xspace}
\newcommand*{\SigmaKOddFormulas}[1]{\SigmaKOddFormulaPrefix{#1}\nbdash-formulas\xspace}
\newcommand*{\CompactSigmaKEvenFormulaPrefix}[1]{$\ol{\Sigma}_{#1{}\mhyphen{}\forall}\ol{\Sigma}_1$}
\newcommand*{\CompactSigmaKEvenFormula}[1]{\CompactSigmaKEvenFormulaPrefix{#1}\nbdash-formula\xspace}
\newcommand*{\CompactSigmaKEvenFormulas}[1]{\CompactSigmaKEvenFormulaPrefix{#1}\nbdash-formulas\xspace}
\newcommand*{\CompactSigmaKOddFormulaPrefix}[1]{$\ol{\Sigma}_{#1{}\mhyphen{}\exists}\ol{\Pi}_1$}
\newcommand*{\CompactSigmaKOddFormula}[1]{\CompactSigmaKOddFormulaPrefix{#1}\nbdash-formula\xspace}
\newcommand*{\CompactSigmaKOddFormulas}[1]{\CompactSigmaKOddFormulaPrefix{#1}\nbdash-formulas\xspace}
\newcommand*{\SimpleSigmaKEvenFormulaPrefix}[1]{\ensuremath{\Sigma_{#1{}\mhyphen{}\forall}\ol{\Sigma}_1}}
\newcommand*{\SimpleSigmaKEvenFormula}[1]{\SimpleSigmaKEvenFormulaPrefix{#1}\nbdash-formula\xspace}
\newcommand*{\SimpleSigmaKEvenFormulas}[1]{\SimpleSigmaKEvenFormulaPrefix{#1}\nbdash-formulas\xspace}
\newcommand*{\SimpleSigmaKEvenSentences}[1]{\SimpleSigmaKEvenFormulaPrefix{#1}\nbdash-sentences\xspace}
\newcommand*{\SimpleSigmaKOddFormulaPrefix}[1]{\ensuremath{\Sigma_{#1{}\mhyphen{}\exists}\ol{\Pi}_1}}
\newcommand*{\SimpleSigmaKOddFormula}[1]{\SimpleSigmaKOddFormulaPrefix{#1}\nbdash-formula\xspace}
\newcommand*{\SimpleSigmaKOddFormulas}[1]{\SimpleSigmaKOddFormulaPrefix{#1}\nbdash-formulas\xspace}
\newcommand*{\FormulaTMHausdorffParametricNumberOfBits}[1]{\Psi_{\Language{D},w}^{#1}}
\newcommand*{\FormulaTMHausdorffWithP}{\FormulaTMHausdorffParametricNumberOfBits{\PolynomialBoundingEverything(\StringLength{w})}}
\newcommand*{\PolynomialSizeCertificates}{p_u}
\newcommand*{\PolynomialTimeMachineOmega}{p_{\Omega}}
\newcommand*{\PolynomialLengthHausPredD}{p_{\Language{D}}}
\newcommand*{\PolynomialBoundingEverything}{p}
\newcommand*{\Proofsep}{%
\smallbreak
\begingroup%
\centering%
\tikz \calligraphy[copperplate] (0,0) .. controls +(0.25,-0.25) and +(-0.25,0.25) .. ++(2,0) [this stroke style={light}]
+(-0.5,0.05) .. controls +(0.25,-0.25) and +(-0.25,0.25) .. ++(0.5,-0.05) [this stroke style={light}]
+(-0.5,0.05) .. controls +(0.25,-0.25) and +(-0.25,0.25) .. ++(1.5,0.05) [this stroke style={light}];
\par%
\endgroup%
\ignorespacesafterend}
\newcommand*{\defin}[1]{\emph{#1}}
\newcommand*{\Language}[1]{\mathcal{#1}}
\newcommand*{\HausdSeq}[1]{\mathcal{#1}}
\newcommand*{\Relation}[1]{#1}%
\newcommand*{\Predicate}[1]{#1}%
\newcommand*{\Machine}[1]{#1}%
\newcommand*{\SetOfSets}[1]{\mathcal{#1}}
\newcommand*{\VarSet}[1]{\bar{#1}}
\renewcommand*{\Vec}[1]{\overline{#1}}
\newcommand*{\StringTup}[1]{\overline{#1}}
\newcommand*{\DLFrag}[1]{\mathsf{#1}}
\newcommand*{\BooleanEncoding}[1]{\bm{#1}}
\newcommand*{\Interpr}[1]{\mathcal{#1}}
\newcommand*{\EvalInterpr}[2]{{#1}^{#2}}
\newcommand*{\DB}{D}
\newcommand*{\Dep}{\Sigma}
\newcommand*{\KB}{\mi{KB}}
\newcommand*{\KBDetails}{\tup{\DB,\Dep}}
\newcommand*{\KBDef}[1]{\tup{#1}}
\newcommand*{\Query}{q}
\newcommand*{\Expl}{E}
\mathchardef\mhyphen="2D
\renewcommand*{\emptyset}{\varnothing}
\newcommand*{\ProofRightarrow}{\ensuremath{(\mkern-2mu{\Rightarrow}\mkern-2.8mu)}}
\newcommand*{\ProofLeftarrow}{\ensuremath{(\mkern-2.8mu{\Leftarrow}\mkern-2mu)}}
\newcommand*{\ProofRightarrowItem}{\ensuremath{\bm{\ProofRightarrow}}\xspace}
\newcommand*{\ProofLeftarrowItem}{\ensuremath{\bm{\ProofLeftarrow}}\xspace}
\newcommand*{\PrefixMatrixSeparator}{\mskip \medmuskip}
\newcommand*{\symbolwithin}[2]{{\mathmakebox[\widthof{\ensuremath{{}#2{}}}][c]{{#1}}}}
\newcommand*{\iExp}[3][2]{{\prescript{#2}{}{#1}^{#3}}}
\newcommand*{\iExpPolFunctions}[2][2]{\iExp[#1]{#2}{\mkern-2mu\PolFunctions\mkern+1mu}}
\newcommand*{\iLog}[2]{{\prescript{#1}{}{\log{}} #2}}
\newcommand*{\mycdot}{\, \cdot \,}
\newcommand*{\SOQ}{\mathcal{Q}}
\newcommand*{\FOQ}{Q}
\newcommand*{\ol}[1]{\overline{#1}}
\newcommand*{\wt}[1]{\widetilde{#1}}
\newcommand*{\mi}[1]{\mathit{#1}}
\newcommand*{\valtrue}{\mathrm{\mathit{true}}}
\newcommand*{\valfalse}{\mathrm{\mathit{false}}}
\newcommand*{\ra}{\rightarrow}
\newcommand*{\tup}[1]{( #1 )}
\newcommand*{\set}[1]{\{ #1 \}}
\newcommand*{\pair}[1]{\tup{#1}}
\newcommand*{\setsymmdifference}{\bigtriangleup}
\newcommand*{\compl}[1]{\overline{#1}}
\newcommand*{\LanguageOf}[1]{\mathscr{L}( #1 )}
\newcommand*{\TilingSyst}[1]{\mathcal{\uppercase{#1}}}
\newcommand*{\emptystring}{\varepsilon}
\newcommand*{\alphabet}{\Sigma}
\newcommand*{\StringUniverse}{\alphabet^*}
\newcommand*{\NaturalsDomain}{\mathbb{N}}
\newcommand*{\HausdPredDomain}{\StringUniverse \! \times \NaturalsDomain}
\newcommand*{\ChiLan}[1]{\chi_{#1}}
\newcommand*{\StringLength}[1]{\| #1 \|}
\newcommand*{\SetSize}[1]{| #1 |}
\newcommand*{\querystate}{q_{\mi{?}}}
\newcommand*{\yesanswerstate}{q_{\mi{yes}}}
\newcommand*{\noanswerstate}{q_{\mi{no}}}
\newcommand*{\blanksymbol}{\mathrlap{/}{b}}
\newcommand*{\QueryInComp}[2]{{q_{#2}}(#1)}
\newcommand*{\AnsInComp}[2]{a_{#2}(#1)}
\newcommand*{\AllQueriesGenericOracle}[2]{Q^{?}_{#1}(#2)}
\newcommand*{\QueryInCompPar}[2]{Q_{#2}(#1)}
\newcommand*{\QueryYESinCompPar}[2]{Y_{#2}(#1)}
\newcommand*{\AnsInCompPar}[2]{A_{#2}(#1)}
\newcommand*{\AnsYESvectorInCompPar}[1]{\Vec{Y}(#1)}
\newcommand*{\PortionCompToQuery}[2]{\{#1(q_{#2})\}}
\newcommand*{\IDinCompQuery}[2]{#1(q_{#2})}
\newcommand*{\PortionCompToAns}[2]{\{#1(a_{#2})\}}
\newcommand*{\IDinCompAns}[2]{#1(a_{#2})}
\newcommand*{\lexsucc}{\succ}
\newcommand*{\lexsucceq}{\succcurlyeq}
\newcommand*{\lexprec}{\prec}
\newcommand*{\PredHSucc}[2][]{\Predicate{#2}^{\lexsucc}_{#1}}
\newcommand*{\PredHAccCurr}[2][]{\Predicate{#2}^{\cmark}_{#1}}
\newcommand*{\PredHRejCurr}[2][]{\Predicate{#2}^{\xmark}_{#1}}
\newcommand*{\Sat}{\textnormal{\textsc{Sat}}\xspace}
\newcommand*{\NiExpNjExpGenericProblem}[2]{\textnormal{\textsc{Accept\-Reject}}${}^{#2}_{#1}$\xspace}
\newcommand*{\BCQ}{\textrm{BCQ}\xspace}
\newcommand*{\BCQs}{\textrm{BCQs}\xspace}
\newcommand*{\UCQ}{\textrm{UCQ}\xspace}
\newcommand*{\QBSF}{\textrm{QBSF}\xspace}
\newcommand*{\QBSFs}{\textrm{QBSFs}\xspace}
\newcommand*{\MinExRelMinCard}{{\ensuremath{\textnormal{\textsc{MinEx\nbdash-Rel}}_{\leq}}}\xspace}
\newcommand*{\MinExIrrelMinCard}{{\ensuremath{\textnormal{\textsc{MinEx\nbdash-Irrel}}_{\leq}}}\xspace}
\newcommand*{\MinExNecMinCard}{{\ensuremath{\textnormal{\textsc{MinEx\nbdash-Nec}}_{\leq}}}\xspace}
\newcommand*{\ExpTilProb}{\textnormal{\textsc{Exp-Tiling}}\xspace}
\newcommand*{\MaxSatSigmaFormula}[1]{${\ol{\Sigma}_{#1}}$\nbdash-\textnormal{\textsc{\mbox{MaxSat}}}\xspace}
\newcommand*{\LexMaxSigmaFormula}[1]{${\ol{\Sigma}_{#1}}$\nbdash-\textnormal{\textsc{\mbox{LexMax}}}\xspace}
\newcommand*{\LexMaxFuncSigmaFormula}[1]{${\ol{\Sigma}_{#1}}$\nbdash-\textnormal{\textsc{\mbox{LexMaxFunc}}}\xspace}
\newcommand{\mathremovespaces}[1]{%
  \hbox{\ensuremath{%
    \medmuskip=0mu %
    #1}}}
\newcommand*{\myparallel}{{\mkern3mu\vphantom{\perp}\vrule depth 0pt\mkern2mu\vrule depth 0pt\mkern3mu}}
\newcommand*{\yeslbl}{`yes'\xspace}
\newcommand*{\nolbl}{`no'\xspace}
\newcommand*{\InstaceSuffix}{\nbdash-instance\xspace}
\newcommand*{\InstacesSuffix}{\nbdash-instances\xspace}
\newcommand*{\AnswerSuffix}{\nbdash-answer\xspace}
\newcommand*{\AnswersSuffix}{\nbdash-answers\xspace}
\newcommand*{\yesinst}{\yeslbl{}\InstaceSuffix}
\newcommand*{\noinst}{\nolbl{}\InstaceSuffix}
\newcommand*{\yesinsts}{\yeslbl{}\InstacesSuffix}
\newcommand*{\noinsts}{\nolbl{}\InstacesSuffix}
\newcommand*{\yesansw}{\yeslbl{}\AnswerSuffix}
\newcommand*{\noansw}{\nolbl{}\AnswerSuffix}
\newcommand*{\yesansws}{\yeslbl{}\AnswersSuffix}
\newcommand*{\noansws}{\nolbl{}\AnswersSuffix}
\newcommand*{\LogFunctions}{\mathit{L\mkern-.75mu og}}
\newcommand*{\PolFunctions}{{\mathchoice{\mathit{P\mkern-3.5mu ol}}{\mathit{P\mkern-3.5mu ol}}{\mathit{P\mkern-2mu ol}}{\mathit{P\mkern-2mu ol}}}}
\newcommand*{\BHText}{Boolean Hierarchy\xspace}
\newcommand*{\PHText}{Polynomial Hierarchy\xspace}
\newcommand*{\SEHText}{Strong Exponential Hierarchy\xspace}
\newcommand*{\WEHText}{Exponential Hierarchy\xspace}
\newcommand*{\iWEHText}[1]{\ensuremath{{#1}}\nbdash-Exponential Hierarchy\xspace}
\newcommand*{\iWEHsText}[1]{\ensuremath{{#1}}\nbdash-Exponential Hierarchies\xspace}
\newcommand*{\WEHStressedText}{(Weak) Exponential Hierarchy\xspace}
\newcommand*{\SEHDeltaLevel}{\ensuremath{\Delta}\nbdash-level\xspace}
\newcommand*{\SEHThetaLevel}{\ensuremath{\Theta}\nbdash-level\xspace}
\newcommand*{\ItExpHMetaText}{Iterated Exponentials Meta-Hiearchy\xspace}
\newcommand*{\iExponential}[1]{\mathremovespaces{#1}\nbdash-ex\-po\-nen\-tial\xspace}
\newcommand*{\BoundedOracle}[4][]{\textnormal{\ensuremath{{#2}^{{#3}\ifthenelse{\equal{#4}{\empty}}{\empty}{[#4]}}\hspace*{-1\scriptspace}{#1}}}\xspace}
\newcommand*{\DoubleBoundedParOracle}[5][]{\textnormal{\ensuremath{{#2}^{{#3}\ifthenelse{\equal{#4}{\empty}}{\empty}{[#4]}}_{\ifthenelse{\equal{#5}{\empty}}{\myparallel}{\parallel \mkern -1.5mu \langle #5 \rangle}}\hspace*{-1\scriptspace}{#1}}}\xspace}
\newcommand*{\DoubleBoundedPlusParOracle}[5][]{{\setbox0=\hbox{${\scriptstyle [#4]}$}\setbox1=\hbox{${\scriptscriptstyle +}$}\textnormal{\ensuremath{{#2}^{{#3}\ifthenelse{\equal{#4}{\empty}}{\empty}{{[#4]}{\raisebox{(\ht0 - \ht1 - \dp1)/2 + \dp1}{${\scriptscriptstyle +}$}}}}_{\ifthenelse{\equal{#5}{\empty}}{\myparallel}{\parallel \mkern -1.5mu \langle #5 \rangle}}\hspace*{-1\scriptspace}{#1}}}}\xspace}
\newcommand*{\Oracle}[3][]{\BoundedOracle[#1]{#2}{#3}{\empty}}
\newcommand*{\ParOracle}[3][]{\DoubleBoundedParOracle[#1]{#2}{#3}{\empty}{\empty}}
\newcommand*{\BoundedParOracle}[4][]{\DoubleBoundedParOracle[#1]{#2}{#3}{#4}{\empty}}
\newcommand*{\ParBoundedOracle}[4][]{\DoubleBoundedParOracle[#1]{#2}{#3}{\empty}{#4}}
\newcommand*{\LogOracle}[2]{\BoundedOracle{#1}{#2}{\mkern-1mu \LogFunctions}}
\newcommand*{\PolOracle}[2]{\BoundedOracle{#1}{#2}{\PolFunctions}}
\newcommand*{\ExpOracle}[2]{\BoundedOracle{#1}{#2}{2^\PolFunctions}}
\newcommand*{\ComplementPrefix}{\textnormal{co\nbdash-}}
\newcommand*{\ComplementPrefixKerned}{\textnormal{co\nbdash-}\kern-0.08em{}}
\newcommand*{\HardSuffix}{\textnormal{\nbdash-hard}\xspace}
\newcommand*{\CompleteSuffix}{\textnormal{\nbdash-complete}\xspace}
\newcommand*{\DTime}[1]{\textnormal{{DTIME}}(#1)}
\newcommand*{\DTimeOracle}[2]{\textnormal{{DTIME}}^{#2}(#1)}
\newcommand*{\NTime}[1]{\textnormal{{NTIME}}(#1)}
\newcommand*{\NTimeOracle}[2]{\textnormal{{NTIME}}^{#2}(#1)}
\newcommand*{\DSpace}[1]{\textnormal{{DSPACE}}(#1)}
\newcommand*{\DSpaceOracle}[2]{\textnormal{{DSPACE}}^{#2}(#1)}
\newcommand*{\NSpace}[1]{\textnormal{{NSPACE}}(#1)}
\newcommand*{\ComplexityClass}[1]{\mathbf{#1}}
\newcommand*{\ATime}[1]{\textnormal{{ATIME}}(#1)}
\newcommand*{\BoundedExATime}[2]{\Sigma_{#1}\mhyphen\textnormal{{TIME}}(#2)}
\newcommand*{\BoundedUnATime}[2]{\Pi_{#1}\mhyphen\textnormal{{TIME}}(#2)}
\newcommand*{\ASpace}[1]{\textnormal{{ASPACE}}(#1)}
\newcommand*{\LogSpace}{\textnormal{\textsc{LogSpace}}\xspace}
\newcommand*{\PTime}{\textnormal{P}\xspace}
\newcommand*{\NPTime}{\textnormal{NP}\xspace}
\newcommand*{\NPTimeh}{\NPTime{}\HardSuffix}
\newcommand*{\CoNPTime}{\ComplementPrefixKerned\NPTime{}\xspace}
\newcommand*{\BH}[1][]{{\ifthenelse{\equal{#1}{\empty}}{\textnormal{{BH}}}{\NPTime(#1)}}\xspace}
\newcommand*{\DP}[1][]{\textnormal{\ensuremath{\textnormal{{D}}^\mathrm{{P}}_{#1}\hspace*{-1\scriptspace}}}\xspace}
\newcommand*{\BHThree}[1][]{{\ifthenelse{\equal{#1}{\empty}}{\ensuremath{\textnormal{{BH}}_3}}{\ensuremath{\textnormal{{BH}}_3(#1)}}}\xspace}
\newcommand*{\BHGeneric}[1]{{\ensuremath{\textnormal{{BH}}(#1)}}\xspace}
\newcommand*{\PolHier}{\textnormal{PH}\xspace}
\newcommand*{\ThetaP}[1]{\textnormal{\ensuremath{\Theta^\mathrm{{P}}_{#1}\hspace*{-1\scriptspace}}}\xspace}
\newcommand*{\DeltaP}[1]{\textnormal{\ensuremath{\Delta^\mathrm{{P}}_{#1}\hspace*{-1\scriptspace}}}\xspace}
\newcommand*{\SigmaP}[1]{\textnormal{\ensuremath{\Sigma^\mathrm{{P}}_{#1}\hspace*{-1\scriptspace}}}\xspace}
\newcommand*{\PiP}[1]{\textnormal{\ensuremath{\Pi^\mathrm{{P}}_{#1}\hspace*{-1\scriptspace}}}\xspace}
\newcommand*{\PSpace}{\textnormal{\textsc{PSpace}}\xspace}
\newcommand*{\ExpTime}{\textnormal{\textsc{Exp}}\xspace}
\newcommand*{\NExpTime}{\textnormal{\textsc{NExp}}\xspace}
\newcommand*{\NExpTimeh}{\NExpTime{}\HardSuffix}
\newcommand*{\NExpTimec}{\NExpTime{}\CompleteSuffix}
\newcommand*{\CoNExpTime}{\ComplementPrefixKerned\NExpTime{}\xspace}
\newcommand*{\BHNExp}[1][]{{\ifthenelse{\equal{#1}{\empty}}{\textnormal{\textsc{ExpBH}}}{\NExpTime(#1)}}\xspace}
\newcommand*{\DExp}{\textnormal{\ensuremath{\textnormal{{D}}^\textnormal{\textsc{Exp}}\hspace*{-1\scriptspace}}}\xspace}
\newcommand*{\CoDExp}{\ComplementPrefixKerned\DExp{}\xspace}
\newcommand*{\SExpHier}{\textnormal{{SEH}}\xspace}
\newcommand*{\SigmaSExp}[1]{\textnormal{\ensuremath{\Sigma^\mathrm{SE}_{#1}\hspace*{-1\scriptspace}}}\xspace}
\newcommand*{\PiSExp}[1]{\textnormal{\ensuremath{\Pi^\mathrm{SE}_{#1}\hspace*{-1\scriptspace}}}\xspace}
\newcommand*{\DeltaSExp}[1]{\textnormal{\ensuremath{\Delta^\mathrm{SE}_{#1}\hspace*{-1\scriptspace}}}\xspace}
\newcommand*{\PNExp}{\textnormal{\Oracle{\PTime}{\NExpTime}}\xspace}
\newcommand*{\PNExph}{\PNExp{}\HardSuffix}
\newcommand*{\PNExpPar}{\textnormal{\ParOracle{\PTime}{\NExpTime}}\xspace}
\newcommand*{\PNExpLog}{\textnormal{\LogOracle{\PTime}{\NExpTime}}\xspace}
\newcommand*{\PNExpLogh}{\PNExpLog{}\HardSuffix}
\newcommand*{\PNExpLogc}{\PNExpLog{}\CompleteSuffix}
\newcommand*{\NPNExp}{\textnormal{\Oracle{\NPTime}{\NExpTime}}\xspace}
\newcommand*{\WExpHier}{\textnormal{{EH}}\xspace}
\newcommand*{\SigmaWExp}[1]{\textnormal{\ensuremath{\Sigma^\mathrm{E}_{#1}\hspace*{-1\scriptspace}}}\xspace}
\newcommand*{\SigmaWExph}[1]{\SigmaWExp{#1}{}\HardSuffix}
\newcommand*{\PiWExp}[1]{\textnormal{\ensuremath{\Pi^\mathrm{E}_{#1}\hspace*{-1\scriptspace}}}\xspace}
\newcommand*{\DeltaWExp}[1]{\textnormal{\ensuremath{\Delta^\mathrm{E}_{#1}\hspace*{-1\scriptspace}}}\xspace}
\newcommand*{\DeltaWExpBound}[2]{\textnormal{\ensuremath{\Delta^{\mathrm{E}[#2]}_{#1}\hspace*{-1\scriptspace}}}\xspace}
\newcommand*{\ExpSpace}{\textnormal{\textsc{ExpSpace}}\xspace}
\newcommand*{\iExpTime}[1]{\textnormal{\mathremovespaces{#1}\textsc{Exp}}\xspace}
\newcommand*{\iNExpTime}[1]{\textnormal{N{}\mathremovespaces{#1}\textsc{Exp}}\xspace}
\newcommand*{\CoINExpTime}[1]{\ComplementPrefixKerned\iNExpTime{#1}\xspace}
\newcommand*{\iWExpHier}[1]{\textnormal{\ensuremath{\mathremovespaces{#1} \textnormal{{EH}}}}\xspace}
\newcommand*{\SigmaIWExp}[2]{\textnormal{\ensuremath{\Sigma^{#1 \mathrm{{E}}}_{#2}\hspace*{-1\scriptspace}}}\xspace}
\newcommand*{\PiIWExp}[2]{\textnormal{\ensuremath{\Pi^{#1 \mathrm{{E}}}_{#2}\hspace*{-1\scriptspace}}}\xspace}
\newcommand*{\DeltaIWExpBound}[3]{\textnormal{\ensuremath{\Delta^{#1 \mathrm{{E}}[#3]}_{#2}\hspace*{-1\scriptspace}}}\xspace}
\newcommand*{\iExpSpace}[1]{\textnormal{\mathremovespaces{#1}\textsc{ExpSpace}}\xspace}
\newcommand*{\iNExpSpace}[1]{\textnormal{N{}\mathremovespaces{#1}\textsc{ExpSpace}}\xspace}
\newcommand*{\Elementary}{\textnormal{\textsc{Elementary}}\xspace}
\newcommand*{\Regular}{\textnormal{\textsc{Reg}}\xspace}
\newcommand*{\KarpRed}[1][p]{\leq_{\mathrm{m}}^{#1}}
\newcommand*{\TTRed}[1][p]{\leq_{\mathrm{tt}}^{#1}}
\newcommand*{\TTRedCLASS}[2][p]{{\leq}_{\mathrm{tt}}^{#1}\mkern-1mu(#2)}
\newcommand*{\TuringRed}[1][p]{\leq_{\mathrm{T}}^{#1}}
\newcommand*{\BoundedHausdCLASS}[2]{#2{}(\mathremovespaces{#1})}
\newcommand*{\GenericLength}{\ast}%
\newcommand*{\HausdIndex}[2]{\hat{z}_{#2}(#1)}
\newcommand*{\Circuit}[1]{\mathscr{#1}}
\newcommand*{\CirctuitValue}[2]{\mi{val}_{#1}(#2)}
\theoremstyle{plain}
\newtheorem*{theorem*}{Theorem}
\newtheorem{theorem}{Theorem}[section]
\newtheorem{lemma}[theorem]{Lemma}
\newtheorem{corollary}[theorem]{Corollary}
\newtheorem{proofsubproperty}{Property}[theorem]
\newenvironment{subproof}[1][\proofname]{%
  \begin{adjustwidth}{\parindent}{}%
  \begin{proof}[#1]%
}{%
  \end{proof}%
  \end{adjustwidth}%
  \addvspace{\topsep}
}
\theoremstyle{definition}
\newtheorem{definition}[theorem]{Definition}
\newtheorem{example}[theorem]{Example}
\newtheoremstyle{problemstyle}%
  {\topsep}%
  {\topsep}%
  {\normalfont}%
  {}%
  {\bfseries}%
  {:}%
  { }%
  {}%
\theoremstyle{problemstyle}
\newtheorem*{probenvironment}{Problem}
\newcommand{\Problem}[3]{%
\begin{samepage}
\begin{probenvironment}
#1\par
\begin{description}[nosep]
\item[\textnormal{\textit{Input:}}] #2
\item[\textnormal{\textit{Output:}}] #3
\end{description}
\end{probenvironment}
\end{samepage}}
\title{Hausdorff Reductions and the Exponential Hierarchies\thanks{This work was supported by the European Union -- NextGenerationEU programme, through the Italian Ministry of University and Research (MUR) PRIN~2022-PNRR grant P2022KHTX7 ``DISTORT''---CUP:~J53D23015000001, under the Italian ``National Recovery and Resilience Plan'' (PNRR), Mission~4 Component~1.}}
\date{}
\author{Enrico Malizia\,\orcidlink{0000-0002-6780-4711}}
\affil{DISI, University of Bologna, Italy}
\affil{enrico.malizia@unibo.it}
\begin{document}

\pagenumbering{roman}

\maketitle

\thispagestyle{empty}

\begin{abstract}
We introduce Hausdorff (complexity) classes, which provide canonical characterizations of the intermediate levels of the iterated exponential hierarchies, including the Polynomial Hierarchy, the (Weak) Exponential Hierarchy, and higher\nbdash-order exponential hierarchies.
As certificates characterize main hierarchy levels without oracles, Hausdorff classes give an oracle\nbdash-free characterization of intermediate hierarchy levels.

The Hausdorff perspective provides a structural explanation for many known equivalences between oracle classes.
In particular, seemingly different oracle classes corresponding to the same intermediate level are shown to arise from just three different, yet equivalent, oracle\nbdash-aided approaches to deciding languages in a single Hausdorff class, thus replacing multiple oracle\nbdash-based views with a unique characterization.
It also explains the collapse of the \SEHText, showing that $\PNExp = \NPNExp$ arises because both classes coincide with the same Hausdorff class, thereby resolving a question of Hemachandra.

Finally, we define canonical complete problems yielding matching lower bounds for $\PNExpLog$ problems whose hardness was left open due to the lack of known $\PNExpLogc$ problems.
\end{abstract}

\Proofsep

\clearpage

\tableofcontents

\clearpage

\pagenumbering{arabic}

\section{Introduction}
\label{sec_intro}

In this paper, we introduce \emph{Hausdorff (complexity) classes}, which provide canonical characterizations of the intermediate levels of the iterated exponential hierarchies, including the Polynomial Hierarchy ($\PolHier$), the (Weak) Exponential Hierarchy ($\WExpHier$), and their higher\nbdash-order generalizations. %
Assuming no hierarchy collapse, Hausdorff classes uniquely determine these levels and act as natural normal forms for them.
In particular, they capture intermediate levels without referring to oracle machines, analogously to how alternating quantifiers and certificates characterize main hierarchy levels, yielding a unified, oracle\nbdash-free view of these hierarchies.

The Hausdorff perspective sheds light on two well\nbdash-known phenomena in structural complexity theory that, while seemingly unrelated, share a common origin.

First, many complexity classes defined via seemingly different oracle mechanisms coincide, e.g.,%
\footnote{$\Oracle[\ensuremath{\langle\mkern-2mu \PolFunctions \rangle}]{\NExpTime}{\NExpTime}$ is the class of languages decided by $\NExpTime$ machines via \emph{polynomially}-long queries to $\NExpTime$ oracles.}
\begin{flalign*}
&\ParOracle{\PTime}{\NPTime} = \LogOracle{\PTime}{\NPTime} = \Oracle{\LogSpace}{\NPTime}; & \text{\cite{KoblerSW87,Wagner1987,Wagner1990,Beigel1991,Buss1991}} \\
&\mathrlap{\PNExp = \NPNExp = \ParOracle{\ExpTime}{\NPTime} = \PolOracle{\ExpTime}{\NPTime} = \Oracle{\PSpace}{\NPTime} = \Oracle[\ensuremath{\langle\mkern-2mu \PolFunctions \rangle}]{\NExpTime}{\NExpTime};} \\
& & \text{\cite{SchoningW88,Hemachandra1989,Beigel1991,Hemaspaandra1994,Gottlob1995,Mocas1996,AllenderKRR2011}} \\
&\Oracle{\ExpTime}{\NExpTime} = \Oracle{\NExpTime}{\NExpTime}. & \text{\cite{SchoningW88,Hemaspaandra1994}}
\end{flalign*}
These equivalences suggest that intermediate hierarchy levels have a deeper structure not captured by their oracle definitions.
These also raise broader questions about higher exponential oracle classes, given the many ways these classes can combine.
For instance, how do classes such as $\Oracle{\iNExpTime{2}}{\iNExpTime{3}}$, $\Oracle{\iNExpTime{3}}{\iNExpTime{2}}$, and $\Oracle{\iNExpTime{4}}{\NExpTime}$ relate?%
\footnote{(N)$\iExpTime{i}$ is the class of languages decidable in (non)deterministic \iExponential{i} time, where $\iExpTime{0} = \PTime$ and $\iNExpTime{0} = \NPTime$; $\iExpSpace{i}$ is analogously defined, with $\iExpSpace{-1} = \LogSpace$ and $\iExpSpace{0} = \PSpace$.}
Do they coincide, or even equal $\iNExpTime{5}$?

Second, \citeauthor{Hemachandra1989} showed that $\PNExp = \NPNExp$, implying that the \SEHText ($\SExpHier$) collapses to $\PNExp$~\cite{Hemachandra1987,Hemachandra1989}.
His proof relied on a census argument and did not reveal an underlying structural reason for this collapse.
Indeed, he asked whether such an explanation could be obtained via ``quantifier manipulations'' of suitable certificate\nbdash-based or alternating Turing machine characterizations of $\SExpHier$, leaving the question open.
Although $\PNExp = \NPNExp$ was later reproven via different methods~\cite{SchoningW88,Beigel1991,Gottlob1995,AllenderKRR2011}, none of these closed \citeauthor{Hemachandra1989}'s question, which has thus far remained unanswered.

The Hausdorff classes introduced here provide precisely such an explanation.
They reveal that both the coincidence of seemingly different oracle classes and the collapse of $\SExpHier$ arise from the same underlying mechanism: the oracle classes involved are captured by the same Hausdorff class.

We note that the Hausdorff characterization also yields the certificate\nbdash-based and alternating machine characterizations sought by \citeauthor{Hemachandra1989}.
However, it is the Hausdorff characterization itself that reveals the structural reason for the collapse.
The lack of such an explanation was also reflected in the historical treatment of $\WExpHier$ and $\SExpHier$.
Specifically, the collapse of $\SExpHier$ was regarded as surprising~\cite{Hemachandra1986,Hartmanis1990,Beigel1991}, and $\SExpHier$ and $\WExpHier$ were often treated as distinct hierarchies~\cite{Hemachandra1986,Hemachandra1989,Hartmanis1990},%
\footnote{\label{fn_real_exp_hierarchy}\citet{Hemachandra1986} wonders which the ``real'' exponential hierarchy is, between the strong one and the weak one, and \citet{Hartmanis1990} refers to the ``\emph{other} exponential hierarchy [...] not known to collapse'' (emphasis added).}
suggesting that the nature of $\SExpHier$ was not fully clarified.
The Hausdorff perspective clarifies these aspects as well.

\subsection{The Hausdorff perspective: notions}
\label{sec_intro_Hausdorff_notions}

The key idea of the Hausdorff perspective is that intermediate hierarchy levels above a main level $\ComplexityClass{C}$ correspond precisely to languages obtained by Boolean combinations of languages from $\ComplexityClass{C}$, where the combination length may depend on the input size.
This idea is rooted in a classical result of \citet{Hausdorff1962}, which, for complexity classes, yields a canonical characterization of Boolean combinations of languages.
Let $\ComplexityClass{C}$ be a complexity class containing the empty language and closed under disjunction and conjunction.
Every language $\Language{L}$ in the Boolean closure\footnote{The Boolean closure of $\ComplexityClass{C}$ is the class of languages obtained from those in $\ComplexityClass{C}$ via union, intersection, and difference.} of $\ComplexityClass{C}$ can be expressed as a \emph{Hausdorff summation}
\(
\Language{L} = (\Language{D}_1 \setminus \Language{D}_2) \cup (\Language{D}_3 \setminus \Language{D}_4) \cup \dots \cup (\Language{D}_{m-1} \setminus \Language{D}_m),
\)
where $\Language{D}_1 \supseteq \allowbreak \Language{D}_2 \supseteq \dots \supseteq \Language{D}_m$ are languages from $\ComplexityClass{C}$, and $m$ is a constant depending on $\Language{L}$.%
\footnote{If $m$ is odd, the summation ends with `${} \cup \Language{D}_m$'.}
Membership in $\Language{L}$ is thus determined solely by parity:
a string $w$ belongs to $\Language{L}$ iff the largest index $z$ such that $w \in \Language{D}_z$ is odd.

The key step in extending this result beyond the Boolean closure of $\ComplexityClass{C}$ to intermediate hierarchy levels above $\ComplexityClass{C}$ is to relax the constraint on the Hausdorff summation's length.
We allow it to depend on $\StringLength{w}$ rather than having it fixed.
This leads to what we call \emph{Hausdorff predicates}.
Intuitively, instead of a sequence of languages $\Language{D}_1 \supseteq \Language{D}_2 \supseteq \dots$ from $\ComplexityClass{C}$, we use a predicate $\Language{D}(w,z)$ of complexity $\ComplexityClass{C}$ such that $\Language{D}(w,z) = 1$ iff $w \in \Language{D}_z$.
A language $\Language{L}$ \emph{Hausdorff reduces} to $\Language{D}$ if membership of a string $w$ in $\Language{L}$ can be determined from the parity of indices, i.e., $w \in \Language{L}$ iff the largest $z$ such that $\Language{D}(w,z) = 1$ is odd.
\emph{Hausdorff classes} are then defined as the set of languages that Hausdorff reduce to Hausdorff predicates of a given type.%
\footnote{One might consider defining Hausdorff reductions via ``infinite Hausdorff sequences''. However, if defined naively, non\nbdash-recursively enumerable languages could be Hausdorff characterized via regular ones, flattening the analysis by masking distinctions between Hausdorff classes. By contrast, the formulation via Hausdorff predicates reveals a rich structure and enables the Hausdorff\nbdash-characterization of the intermediate levels of the iterated exponential hierarchies.}
We now formalize these notions.

\getkeytheorem{DefHausdPred}

Defining Hausdorff classes requires two properties of Hausdorff predicates:
length and complexity.

We begin with length.
A \defin{Hausdorff length function} $g\colon \NaturalsDomain \to \NaturalsDomain$ is a strictly positive nondecreasing function.
A Hausdorff predicate $\Language{D}$ is \defin{$g(n)$\nbdash-long} if $\HausdIndex{w}{\Language{D}} \le g(\StringLength{w})$ for all strings $w$.
We say that $\Language{D}$ is \defin{bounded} if it is $g(n)$\nbdash-long for some function $g(n)$.
Consider now predicate complexity.
A Hausdorff predicate $\Language{D}$ has complexity $\ComplexityClass{C}$ if deciding whether $\pair{w,z} \in \Language{D}$ is in $\ComplexityClass{C}$ with respect to the \emph{size of $w$ alone}.
This restriction reflects that Hausdorff predicates generalize finite Hausdorff sequences and prevents arbitrarily large indices $z$ from acting as padding for the pairs $\pair{w,z}$.
E.g., $\Language{D}$ is in $\Oracle{\iNExpTime{i}}{\SigmaP{c-1}}$ if a nondeterministic oracle machine with a $\SigmaP{c-1}$ oracle decides $\Language{D}(w,z)$ in \iExponential{i} time \Wrt $\StringLength{w}$~alone.

An important property of Hausdorff predicates is that bounded time or space implies bounded length.
To see this, consider the case of time\nbdash-bounded computation.
If $\pair{w,z} \in \Language{D}$ is decided by a machine $\Machine{M}$ in time $t(\StringLength{w})$, then within this bound $\Machine{M}$ can read at most $t(\StringLength{w})$ bits of the index $z$.
Hence, if $\Machine{M}$ answers correctly on every pair $\pair{w,z}$, the Hausdorff index of $w$ \Wrt $\Language{D}$ must be representable using at most $t(\StringLength{w})$ bits.
Intuitively, indices $z$ whose binary representations are too large are indistinguishable to machines that cannot read them within the available time.
Formalizing this intuition yields the following result (proof in \zcref{sec_Hausdorff_reductions_classes}).

\getkeytheorem{HausdPrefBoundedComplexityImplyBoundedLength}

The choice of measuring Hausdorff predicate complexity with respect to $w$ alone is therefore not merely technical:
it ensures that bounded predicate complexity entails bounded Hausdorff length.
Without it, Hausdorff predicates could admit arbitrarily large indices, and the resulting classes would no longer reflect the structure of the iterated exponential hierarchies (see \zcref{sec_intro_Hausdorff_structural_insights}).

With Hausdorff predicates, we define the notion of Hausdorff reduction/characterization.
Our definition of Hausdorff reduction combines, generalizes, and simplifies those in \cite{Wagner1987,Wagner1990}.

\getkeytheorem{DefHausdRed}

Hausdorff reductions then define Hausdorff complexity classes.

\getkeytheorem{DefHausdClass}

Observe that Hausdorff complexity classes are \emph{not} defined as classes of Hausdorff predicates.
For instance, the sets of Hausdorff predicates of polynomial and exponential length (see \zcref{theo_Hausd_pred_bounded_complexity_imply_bounded_length}), with the former included in the latter, both are subsets of $\NPTime$.
In contrast, the $\NPTime$ Hausdorff complexity classes of polynomial and exponential length contain $\NPTime$, as they equal $\ThetaP{2}$ and $\DeltaP{2}$, respectively.

For a family of functions $G$, we define the Hausdorff complexity class $\BoundedHausdCLASS{G}{\ComplexityClass{C}} = \bigcup_{g(n) \in G} \set{\BoundedHausdCLASS{g(n)}{\ComplexityClass{C}}}$.
E.g., $\BoundedHausdCLASS{2^{\PolFunctions}}{\Oracle{\NExpTime}{\SigmaP{c-1}}}$ is the class of all $\Oracle{\NExpTime}{\SigmaP{c-1}}$ Hausdorff languages of length $g(n)$, for some $g(n) \in 2^{\PolFunctions}$, or, more roughly, the class of all $\Oracle{\NExpTime}{\SigmaP{c-1}}$ Hausdorff languages of exponential length.

We can show that Hausdorff classes over deterministic classes do not yield new language classes.
Intuitively, the Hausdorff summation of languages from a deterministic class remains in the same class, as deterministic classes are closed under union, intersection, and complement (see \zcref{sec_Hausdorff_reductions_classes}).
Moreover, by Savitch's theorem, deterministic and nondeterministic space coincide for all space\nbdash-bounded classes from polynomial space upward.
Since we do not consider Hausdorff classes over nondeterministic logspace, we may disregard nondeterministic space classes as well.
Hence, we will focus on Hausdorff classes defined over nondeterministic $i$\nbdash-exponential time classes, with $\SigmaP{c-1}$ oracles, i.e., the main levels of the iterated exponential hierarchies.

\subsection{The Hausdorff perspective: relation to earlier notions}
\label{sec_intro_earlier_notions}

The Hausdorff notions introduced above generalize the Boolean Hierarchies (see \zcref{sec_extended_Boolean_Hierarchies}).
Early work \cite{Wagner1987,Wagner1988,Buss1988} extended the Boolean hierarchy over $\NPTime$ to polynomially\nbdash-long Boolean combinations of $\NPTime$ languages.
This was achieved via an initial notion of Hausdorff reduction, introduced by \citet{Wagner1987} as a special case of polynomial tt\mbox{-}reductions.
We briefly recall it.
Let $\ChiLan{\Language{L}}$ denote the characteristic function of a language $\Language{L} \subseteq \StringUniverse$, and let $\Language{A}, \Language{B} \subseteq \StringUniverse$ be two languages.
In \cite{Wagner1987}, $\Language{A}$ is said to \emph{polynomially Hausdorff reduce} to $\Language{B}$ if there exists a polynomial\nbdash-time computable function $f$ such that, for every $w \in \StringUniverse$, $f(w) = \tup{v_1,\dots,v_{2k}}$, where the $v_i$ are strings (and $k$ depends on $w$), and
\(
\ChiLan{\Language{A}}(w)
= (\ChiLan{\Language{B}}(v_1) \land \lnot \ChiLan{\Language{B}}(v_2))
\lor \dots \lor
(\ChiLan{\Language{B}}(v_{2k-1}) \land \lnot \ChiLan{\Language{B}}(v_{2k})),
\)
with $\ChiLan{\Language{B}}(v_1) \ge \ChiLan{\Language{B}}(v_2) \ge \dots \ge \ChiLan{\Language{B}}(v_{2k})$.
This formulation reflects Hausdorff's results.

Since Hausdorff reductions in those works were polynomial tt\nbdash-reductions, what could be investigated was significantly constrained, as such reductions are not expressive enough to capture Hausdorff classes beyond $\PolHier$ or Hausdorff summations of super\nbdash-polynomial length.
Thus, \citeauthor{Wagner1987}'s Hausdorff reductions, as defined in \cite{Wagner1987}, cannot capture Hausdorff classes beyond $\BoundedHausdCLASS{\PolFunctions}{\NPTime}$, i.e., the class $\ThetaP{2}$.
These limitations are problematic for our purposes, as we aim to apply the Hausdorff perspective to the iterated exponential hierarchies.

Our notion of Hausdorff reduction, based on Hausdorff predicates rather than polynomial reduction functions as in \cite{Wagner1987}, is both simpler and more general.
In particular, our reductions do not transform the string $w$, unlike those in \cite{Wagner1987,JennerKL1989,Buss1991,ArvindKM1993}.
This streamlined formulation also enables us to consider Hausdorff predicates of super\nbdash-polynomial complexity, making it possible to apply the Hausdorff perspective naturally to the iterated exponential hierarchies.

A Hausdorff reduction notion closer in spirit to ours, involving similar predicates, was implicitly used (though not formally defined) by \citeauthor{Wagner1990} in \cite{Wagner1990}.
That work again focused on $\NPTime$ Hausdorff languages, but the revised reduction concept made it possible to capture the Hausdorff class $\BoundedHausdCLASS{2^{\PolFunctions}}{\NPTime}$, i.e., the class $\DeltaP{2}$.
Nonetheless, the analysis remained within the \emph{polynomial} setting.
Indeed, \citeauthor{Wagner1990} observed that his results could be extended to the levels $\ThetaP{k}$ and $\DeltaP{k}$ of $\PolHier$, but he did not consider extensions to the exponential hierarchies.

This restriction explains why the Hausdorff reduction notion in \cite{Wagner1990} was less precise and less general than ours, as such precision was not required. %
There, the focus was on $\PolHier$, where we now know that the longest meaningful Hausdorff summations are exponentially\nbdash-long (see \zcref{theo_Hausd_pred_bounded_complexity_imply_bounded_length}).
Hence, there was no need to consider indices $z$ of super\nbdash-polynomial size in the pairs $\pair{w,z}$ of Hausdorff predicates.
For this reason, issues such as defining the complexity of Hausdorff predicates with respect to the size of the string $w$ alone, independently of the index $z$, did not arise.
Thus, the techniques and results developed there were difficult to extend beyond $\PolHier$, since the framework was not sufficiently general to investigate higher exponential hierarchies.

In contrast, for Hausdorff summations of (iterated) exponential length, such distinctions become unavoidable, motivating the more explicit formulation of Hausdorff predicates adopted here (see \zcref{sec_intro_Hausdorff_notions}).
In fact, our seemingly minor technical choice of defining the complexity of Hausdorff predicates with respect to the size of the string $w$ alone turns out to be crucial:
this makes Hausdorff predicates and the resulting Hausdorff classes particularly well suited to capture the intermediate levels of the iterated exponential hierarchies.

It is worth noting that, even at the time, understanding analogous notions in higher exponential hierarchies was already implicitly required.
The questions posed by \citet{Hemachandra1987} on $\SExpHier$ were already known, and addressing them would likely have required extending those Hausdorff notions beyond $\PolHier$.
Nevertheless, the possibility of extending the Hausdorff perspective in this direction appears not to have been noticed.
A plausible explanation is that the available notions and techniques were either too closely tailored to the polynomial setting \cite{Wagner1987,Wagner1988,Buss1988,Buss1991}, or not formulated with sufficient generality and precision to suggest natural extensions \cite{Wagner1990}.
Thus, the generalization developed here does not seem to be an incremental step that could have been immediately apparent from earlier formulations.
Had a suitable generalization been identified at that time, it might already have resolved the questions left open by \citeauthor{Hemachandra1987}.

Having clarified the relation to previous notions, we discuss the insights of the Hausdorff perspective.

\subsection{The Hausdorff perspective: structural insights}
\label{sec_intro_Hausdorff_structural_insights}

Our central technical result shows that the intermediate levels above a main hierarchy level $\ComplexityClass{C}$ coincide exactly with $\ComplexityClass{C}$ Hausdorff classes of increasing length.
We now discuss the structural insights provided by this Hausdorff characterization.
The formal statements of the results leading to it, together with their proof intuitions and differences from previous techniques, are deferred to \zcref{sec_intro_selected_results}.

For an integer $i \geq 0$, we define the \defin{iterated exponential} function $\smash{\iExp{i}{x} = {\underbracket[.5pt]{2^{2^{{\iddots^{\raisebox{-0.7ex}{${\scriptscriptstyle 2}$}}}}}}_{i\mhyphen\text{times}}}^{\mkern -4mu \scriptscriptstyle x}}$, where $\iExp{0}{x} = x$, and $\iExp{-i}{x} = \log^i (x)$---notation freely inspired by \cite{Ginsburg1945,Goodstein1947,Knoebel1981}.

We define the (main) levels of the \iWEHText{i} $\iWExpHier{i}$, for $c \geq 1$, by generalizing $\WExpHier$:
\[
  \SigmaIWExp{i}{c} = \Oracle{\iNExpTime{i}}{\SigmaP{c-1}},
\]
where $\SigmaIWExp{i}{0} = \iExpTime{i}$, %
and we define $\PiIWExp{i}{c} = \ComplementPrefixKerned\SigmaIWExp{i}{c}$, for $c \geq 0$;
$\iWExpHier{i}$ is then defined as $\iWExpHier{i} = \bigcup_{c \geq 0} \SigmaIWExp{i}{c}$.

The intermediate levels of these iterated exponential hierarchies are defined via oracles, analogously to those of \WExpHier.
They correspond to the classes akin to $\ThetaP{c+1}$ and $\DeltaP{c+1}$ in $\PolHier$.
The intermediate levels between the $c$\nbdash-th and $(c{+}1)$\nbdash-th main levels of $\iWExpHier{i}$ are:
\[
  \DeltaIWExpBound{i}{c+1}{\iExpPolFunctions{j}} = \BoundedOracle{\iExpTime{i}}{\SigmaP{c}}{\iExpPolFunctions{j}},
  \quad \text{for all $j$ such that $-1 \leq j \leq i$},
\]
where the notation $\DeltaIWExpBound{i}{c+1}{\, \cdot \,}$ is inspired by~\cite{Mocas1996}, $\PolFunctions$ is the set of polynomial functions~\cite{BalcazarDG1990,Mocas1996}, $\iExpPolFunctions{j}$ is the set of functions $\iExp{j}{p(n)}$ with $p(n) \in \PolFunctions$, and $\BoundedOracle{\iExpTime{i}}{\SigmaP{c}}{\iExpPolFunctions{j}}$ is the class of languages decided in \iExponential{i} time by deterministic oracle machines issuing at most $\iExpPolFunctions{j}$ many (adaptive) queries to a $\SigmaP{c}$ oracle;%
\footnote{For now, we are intentionally not considering the case of parallel queries, as the latter will also be reconciled into one cohesive big picture obtained via the Hausdorff perspective.}
we let $\iExpPolFunctions{-1} = \LogFunctions$, i.e., the set of logarithmic functions~\cite{BalcazarDG1990,Mocas1996}.
A hierarchy's \defin{$c$\nbdash-th step} is the union of the $c$\nbdash-th main level and all the intermediate ones between the $c$\nbdash-th and $(c{+}1)$\nbdash-th main levels.
Notice that $\iWExpHier{0} = \PolHier$ and $\iWExpHier{1} = \WExpHier$, level by level.

A key result linking Hausdorff classes to intermediate hierarchy levels states that:
\begin{equation}\label{eq_first_insight}
    \BoundedOracle{\iExpTime{i}}{\SigmaP{c}}{\iExpPolFunctions{j}} = \BoundedHausdCLASS{\iExpPolFunctions{j+1}}{\Oracle{\iNExpTime{i}}{\SigmaP{c-1}}},
    \quad \text{for $i \geq 0$, $j \geq -1$, and $j \leq i$}.
\end{equation}

Thus, each intermediate level between the $c$\nbdash-th and $(c{+}1)$\nbdash-th main levels of $\iWExpHier{i}$, where at most $\iExpPolFunctions{j}$ queries are allowed, coincides with a Hausdorff class.
Specifically, it is the class of $\Oracle{\iNExpTime{i}}{\SigmaP{c-1}}$ Hausdorff languages of $\iExpPolFunctions{j+1}$ length.
This Hausdorff characterization reveals a uniform underlying structure in the iterated exponential hierarchies.
Each step consists of a main level (Hausdorff classes of length~1), its Boolean closure (Hausdorff classes of constant length), and a sequence of intermediate levels given by Hausdorff classes of increasing length.
Moreover, higher\nbdash-order exponential hierarchies admit more intermediate levels, as $\iExpTime{i}$ oracle machines may issue up to \iExponential{i}{}ly many queries.
This is captured by \zcref{theo_Hausd_pred_bounded_complexity_imply_bounded_length}, which implies that the complexity of a Hausdorff predicate by itself bounds its possible Hausdorff length.
Specifically, $\Oracle{\iNExpTime{i}}{\SigmaP{c-1}}$ predicates can have length at most \iExponential{(i+1)}.
The Hausdorff length bound thus reflects that the number of intermediate levels is bounded, and that larger bounds yield more such levels.

This uniformity suggests viewing the iterated exponential hierarchies as a hierarchy of hierarchies, which we may call the \defin{Iterated Exponentials (Meta\nbdash-){}Hierarchy}.
Its (meta-)levels correspond to hierarchies of increasing orders of exponentiation.
This meta\nbdash-hierarchy is precisely $\Elementary = \bigcup_{i \geq 1} \DTime{\iExp{i}{n}}$~\cite{Simon1975,Papadimitriou1994}.
\zcref{fig_iterated_exponentials_meta-hierarchy} 
illustrates several of its levels together with their corresponding Hausdorff classes.

Like main hierarchy levels, defined via alternating quantifiers and certificates without resorting to oracles, Hausdorff classes provide oracle\nbdash-free definitions of intermediate levels.
Main hierarchy levels are characterized by two ingredients:
the size of the certificates and the number of quantifier alternations.
The Hausdorff characterization of intermediate levels likewise depends on two ingredients:
the main level on which they sit (whose definition does not require oracles) and the length of the Hausdorff languages in the class.
For this reason, unlike oracle\nbdash-based definitions, where seemingly different oracle classes may coincide (e.g., $\ParOracle{\PTime}{\NPTime} = \LogOracle{\PTime}{\NPTime} = \Oracle{\LogSpace}{\NPTime}$), Hausdorff classes identify these levels uniquely (assuming no hierarchy collapse).

The Hausdorff characterization reveals a unifying insight.
If $\Language{L}$ is a language from an intermediate hierarchy level above a main level $\ComplexityClass{C}$, then $\Language{L}$ belongs to a Hausdorff class $\BoundedHausdCLASS{g(n)}{\ComplexityClass{C}}$, for some length function $g(n)$.
Thus, deciding $\Language{L}$ reduces to determining the Hausdorff index $\HausdIndex{w}{\Language{D}}$ of an input $w$ with respect to a suitable Hausdorff predicate $\Language{D}$, and checking its parity.
This yields three natural oracle\nbdash-based approaches to determine $\HausdIndex{w}{\Language{D}}$:
\begin{enumerate}[noitemsep,label=(\roman*)]
  \item querying an oracle for $\Language{D}$ in parallel on all predicates $\Language{D}(w,1),\dots,\Language{D}(w,g(\StringLength{w}))$;
  \item computing $\HausdIndex{w}{\Language{D}}$ via a binary search over $[1,g(\StringLength{w})]$ using an oracle for $\Language{D}$; and
  \item guessing $\HausdIndex{w}{\Language{D}}$ and verifying the guess with \emph{two} (parallel) oracle queries to an oracle for $\Language{D}$.
\end{enumerate}

This explains many equivalences between apparently different oracle classes that correspond to the same intermediate levels:
they correspond to different ways of deciding the languages of the same Hausdorff class, namely by determining the Hausdorff index in different ways.

For example, $\ParOracle{\PTime}{\NPTime} = \LogOracle{\PTime}{\NPTime}$ follows because both classes equal $\BoundedHausdCLASS{\PolFunctions}{\NPTime}$ and they are its oracle classes of type~(i) and type~(ii), respectively.
Similarly, $\PNExp = \NPNExp$, underlying the collapse of $\SExpHier$, can be explained via the Hausdorff perspective: both classes equal $\BoundedHausdCLASS{2^{\PolFunctions}}{\NExpTime}$ and they are its type~(ii) and type~(iii) oracle classes, respectively---%
for more on what the Hausdorff perspective tells us about $\SExpHier$ see \zcref{sec_intro_SEH}.

Another key result enabling us to chart the iterated exponential hierarchies is:
\begin{equation}\label{eq_second_insight}
  \Oracle{\iNExpTime{i}}{\Oracle{\iNExpTime{j}}{\SigmaP{c-1}}}
  = \BoundedHausdCLASS{\iExpPolFunctions{i+1}}{\Oracle{\iNExpTime{(i+j)}}{\SigmaP{c-1}}},
  \quad \text{for $i \geq 0$  and  $j \geq 1$.}
\end{equation}

This result shows that such ``doubly\nbdash-nondeterministic'' oracle classes $\Oracle{\iNExpTime{i}}{\Oracle{\iNExpTime{j}}{\SigmaP{c-1}}}$ coincide with Hausdorff classes and hence, together with \zcref{eq_first_insight}, correspond to specific intermediate levels. %
To illustrate, consider $\Oracle{\iNExpTime{3}}{\iNExpTime{2}}$, $\Oracle{\iNExpTime{2}}{\iNExpTime{3}}$, and $\Oracle{\iNExpTime{4}}{\NExpTime}$.
All three involve nondeterministic oracles running in \iExponential{5} time (since the caller may issue long queries), but differ in the running time of the caller machines.
More precisely, their different running times yield different guessing capabilities, suggesting that these classes are distinct.
This intuition is confirmed by our results, which show that they form three distinct intermediate levels of the first step of $\iWExpHier{5}$.
From the Hausdorff perspective, this difference corresponds precisely to the ability of the caller machine to guess Hausdorff indices of different sizes, i.e., to act as a type~(iii) oracle, and hence to decide Hausdorff languages of different lengths.
In particular (see \zcref{sec_discussion_Hausdorff_perspective}):
\begin{gather*}
\begin{aligned}
  \Oracle{\iNExpTime{3}}{\iNExpTime{2}} = \BoundedHausdCLASS{\iExpPolFunctions{4}}{\iNExpTime{5}} = \BoundedOracle{\iExpTime{5}}{\NPTime}{\iExpPolFunctions{3}}; & \quad &
  \Oracle{\iNExpTime{2}}{\iNExpTime{3}} = \BoundedHausdCLASS{\iExpPolFunctions{3}}{\iNExpTime{5}} = \BoundedOracle{\iExpTime{5}}{\NPTime}{\iExpPolFunctions{2}};
\end{aligned}\\
  \Oracle{\iNExpTime{4}}{\NExpTime} = \BoundedHausdCLASS{\iExpPolFunctions{5}}{\iNExpTime{5}} = \BoundedOracle{\iExpTime{5}}{\NPTime}{\iExpPolFunctions{4}}.
\end{gather*}

Altogether, the Hausdorff perspective shows that the structure of the intermediate hierarchy levels is governed by the length of the Hausdorff classes that characterize them, while different oracle models correspond to different ways of determining the Hausdorff index.

We next present selected results, formalizing these structural insights and outlining their main proof ideas.

\subsection{A selection of the results obtained}
\label{sec_intro_selected_results}

In this section, we present selected results, focusing on their consequences and their relation to previous work.
After stating them, we highlight the key ideas underlying their proofs, especially where new techniques are required; full details are deferred to subsequent sections.

\subsubsection{Intermediate levels are indeed Hausdorff classes}
\label{sec_intro_intermediate_levels_are_Hausdorff_classes}

The Hausdorff characterization of the intermediate hierarchy levels, anticipated in \zcref{eq_first_insight}, follows as a corollary of the theorem below.
In the statement, $\DoubleBoundedParOracle{\ComplexityClass{X}}{\ComplexityClass{Y}}{r(n)}{s(n)}$ denotes the class of languages decided by $\ComplexityClass{X}$ oracle machines that issue at most $s(n)$ rounds of queries, each with at most $r(n)$ parallel queries, to an oracle in $\ComplexityClass{Y}$. 
For $\DoubleBoundedPlusParOracle{\ComplexityClass{X}}{\ComplexityClass{Y}}{r(n)}{s(n)}$, an additional query is allowed in the first round.

\getkeytheorem{ENcontainment}

\zcref[S]{theo_exp_nexp_containment} generalizes the classical equalities
$\LogOracle{\PTime}{\NPTime} = \BoundedHausdCLASS{\PolFunctions}{\NPTime}$
and
$\Oracle{\PTime}{\NPTime} = \BoundedHausdCLASS{2^\PolFunctions}{\NPTime}$~\cite{Buss1988,Wagner1990} to all levels of the \iWEHsText{i}.
\zcref[S]{theo_exp_nexp_containment} comprises two inclusions.
The proof of the second uses a ``generalized'' binary search to compute the Hausdorff index of an input string.
At each round, $r(n)$ parallel queries (instead of~$1$) are issued to a Hausdorff predicate oracle.
This is similar to the simpler binary\nbdash-search arguments used to show
$\BoundedHausdCLASS{\PolFunctions}{\NPTime} \subseteq \LogOracle{\PTime}{\NPTime}$
and
$\BoundedHausdCLASS{2^{\PolFunctions}}{\NPTime} \subseteq \Oracle{\PTime}{\NPTime}$~\cite{Buss1988,Buss1991,Wagner1988,Wagner1990}.

For the first inclusion, instead, the techniques in the literature proving
$\LogOracle{\PTime}{\NPTime} \subseteq \BoundedHausdCLASS{\PolFunctions}{\NPTime}$
and
$\Oracle{\PTime}{\NPTime} \subseteq \BoundedHausdCLASS{2^{\PolFunctions}}{\NPTime}$ do not extend beyond the \emph{polynomial} setting, as they rely on two key features:
Hausdorff reductions are formulated as \emph{polynomial} tt\nbdash-reductions, and both the caller machine and the oracle run in \emph{polynomial} time.
This allows one to encode each oracle computation into a Boolean formula, combine these formulas into a larger one, and analyze its ``mind changes'' as the number of positive oracle answers increases~\cite{Buss1988,Buss1991,Wagner1988,Wagner1990}.
This proof strategy works because oracle computations can be encoded into Boolean formulas of polynomial length (via \citeauthor{Cook1971}'s theorem~\cite{Cook1971}), and the number of oracle queries are polynomially bounded.
For this reason, this approach does not readily generalize to higher exponential hierarchies.
First, exponential\nbdash-time oracle computations cannot be encoded into polynomially\nbdash-long Boolean formulas.
Second, exponential\nbdash-time caller machines may issue more than polynomially\nbdash-many queries, so the assembled formula would be too large.
Thus, the earlier approach is too closely tied to the polynomial setting to capture the intermediate levels of the iterated exponential hierarchies or the collapse of $\SExpHier$.

Our proof takes a different approach.
Since our reductions are no longer tt\nbdash-reductions, we cannot rely on Boolean encodings of oracle computations, and thus cannot appeal to mind changes or directly extend previous techniques.
Instead, we work directly with the oracle computations.

At a high level, the goal is as follows.
For a language
\(
\Language{L} \in \DoubleBoundedParOracle{\iExpTime{i}}{\Oracle{\iNExpTime{j}}{\SigmaP{c-1}}}{r(n)}{s(n)},
\)
let $\Machine{M}$ be an oracle machine and $\Omega$ an oracle such that $\LanguageOf{\Oracle{\Machine{M}}{\Omega}} = \Language{L}$.
We want to show that
\(
\Language{L} \in \BoundedHausdCLASS{{(r(n)+1)}^{s(n)}}{\Oracle{\iNExpTime{(i+j)}}{\SigmaP{c-1}}}.
\)
A natural approach would be to transform the computation of $\Oracle{\Machine{M}}{\Omega}$ into a single $\Oracle{\iNExpTime{(i+j)}}{\SigmaP{c-1}}$ predicate.
This predicate could then be decided by a nondeterministic machine $\Machine{N}$.
However, a direct simulation of $\Oracle{\Machine{M}}{\Omega}$ by $\Machine{N}$ encounters a fundamental obstacle:
$\Machine{N}$ can verify \emph{positive} oracle answers by guessing accepting computations of $\Omega$ on queries issued by $\Machine{M}$, but \emph{negative} oracle answers lack equally direct witnesses.
Thus, the main obstacle is not the oracle's running time, but the asymmetry between positive and negative answers.

To overcome this difficulty, we exploit the fact that our goal is to characterize membership of $w$ in $\Language{L}$ via the parity of a Hausdorff index with respect to a suitable Hausdorff predicate.
This allows us to avoid reconstructing the oracle computation exactly in a single predicate.

The key idea is to consider \emph{all} computations consistent with the caller machine's transition function, regardless of the correctness of their oracle answers, and then identify the correct one.
To this end, we introduce \emph{\ounawarelegal} computations, which are sequences of configurations consistent with the transition function of the caller machine, whose oracle answers need not be correct.

Observe that all \ounawarelegal computations of $\Machine{M}$ on input $w$ can be totally ordered and thus enumerated.
The order is induced by a vector that records, for each round of parallel queries, the number of \yesansws supposedly returned to $\Machine{M}$ by its oracle in the \ounawarelegal computation.
The lexicographic order of these vectors yields a total order on all such computations.

Let us call an \ounawarelegal computation $\pi$ \emph{plausible} if every oracle \yesansw in $\pi$ is correct.
Importantly, plausibility can be checked within
$\Oracle{\iNExpTime{(i+j)}}{\SigmaP{c-1}}$,
as it suffices to verify only positive oracle answers. %

Crucially, under the above lexicographic order, the real computation of $\Oracle{\Machine{M}}{\Omega}$ on input $w$ can be identified purely combinatorially:
\emph{the real computation of $\Machine{M}$ on $w$, using the actual oracle answers of $\Omega$, is the maximum plausible computation.}
Indeed, any \ounawarelegal computation that assumes more positive oracle
answers than the real one is implausible, whereas the real computation remains plausible and is the last such computation.

This characterization allows us to define the Hausdorff predicate.
The predicate $\Language{D}(w,z)$ detects whether the $z$-th computation in the ordering is the real computation of $\Oracle{\Machine{M}}{\Omega}$ on $w$.
Specifically, $\Language{D}(w,z)$ checks whether a plausible \ounawarelegal computation with index greater than $z$ exists.
This test can be carried out within $\Oracle{\iNExpTime{(i+j)}}{\SigmaP{c-1}}$.
If no such computation exists, the $z$\nbdash-th computation must be the real computation of $\Oracle{\Machine{M}}{\Omega}$ on input $w$, and we set $\Language{D}(w,z)$ to reflect whether $w \in \Language{L}$, up to a parity adjustment.
Thus, the actual oracle computation on $w$ is encoded in the Hausdorff index, and deciding whether $w \in \Language{L}$ reduces to computing this index and checking its parity.
This yields the desired Hausdorff characterization.

\subsubsection{Charting the exponential hierarchies}
\label{intro_charting_exp_hierarchies}

An important consequence of the Hausdorff characterization is that it provides a uniform and unambiguous way to identify the intermediate levels of the iterated exponential hierarchies within \Elementary.
This allows us to study these hierarchies without becoming entangled in the intricate web of equivalences between oracle classes exhibited in previous work (see, e.g., \cite{SchoningW88,Hemachandra1989,Hemaspaandra1994} and references therein).
The results ``charting'' these hierarchies are discussed in detail in \zcref{sec_charting_top}.
A corollary of the following theorem yields the characterization anticipated in \zcref{eq_second_insight}, as well as our proof of the equivalence $\PNExp = \NPNExp$.

\getkeytheorem{NNcontainment}

The proof of the first inclusion faces the same obstacle as that of \zcref{theo_exp_nexp_containment}, and uses ideas reminiscent of the pseudo\nbdash-complement technique~\cite{Mahaney82,Kadin1989} to define the Hausdorff predicate.

The goal is to show that if
\(
\Language{L} \in \Oracle{\iNExpTime{i}}{\Oracle{\iNExpTime{j}}{\SigmaP{c-1}}}
\)
then
\(
\BoundedHausdCLASS{\iExpPolFunctions{i+1}}{\Oracle{\iNExpTime{(i+j)}}{\SigmaP{c-1}}}.
\)
Let $\Machine{M}$ be an oracle machine and $\Omega$ be an oracle such that $\LanguageOf{\Oracle{\Machine{M}}{\Omega}} = \Language{L}$.
As before, a direct simulation of $\Oracle{\Machine{M}}{\Omega}$ by a single $\Oracle{\iNExpTime{(i+j)}}{\SigmaP{c-1}}$ predicate fails due to the asymmetry between positive and negative oracle answers.
Again, we avoid exact simulation by reducing membership of $w$ in $\Language{L}$ to the parity of the Hausdorff index of $w$ with respect to a suitable Hausdorff predicate.
This allows us to reason about all \ounawarelegal computations of $\Machine{M}$ on $w$.

Unlike in \zcref{theo_exp_nexp_containment}, where the real computation is identified via an ordering of \ounawarelegal ones, we instead single it out by validating oracle answers.
While positive answers can be verified within $\Oracle{\iNExpTime{(i+j)}}{\SigmaP{c-1}}$, negative ones cannot, which prevents direct validation.

We overcome this as follows.
Since $\Machine{M}$ runs in time $\iExp{i}{p(\StringLength{w})}$, for some polynomial $p(n)$, all its queries have length at most $\iExp{i}{p(\StringLength{w})}$.
Thus, it suffices to determine the behavior of $\Omega$ on queries of bounded length.
The key idea is to reason about \emph{sets of queries} answered positively by the oracle, thereby avoiding the need to validate negative answers.
Let $Y$ be a set of queries of length at most $\iExp{i}{p(\StringLength{w})}$.
Within $\Oracle{\iNExpTime{(i+j)}}{\SigmaP{c-1}}$, we can check whether all queries in $Y$ are answered positively by $\Omega$.
This allows us to define a Hausdorff predicate that identifies the real computation via the Hausdorff index.

The predicate $\Language{D}(w,z)$ is defined as follows.
First, we check whether there exists a set of more than $z$ queries of length at most $\iExp{i}{p(\StringLength{w})}$ answered positively by $\Omega$.
If no such set exists, then there are at most $z$ such queries, and hence exactly $z$.
We then guess a set $Y$ containing exactly these queries and check its correctness.
Given $Y$, we guess an \ounawarelegal computation $\pi$ of $\Machine{M}$ on $w$ and verify that all oracle answers in $\pi$ are consistent with $Y$.
This validates both positive and negative answers, since $Y$ captures all queries answered positively by $\Omega$, and $\Machine{M}$ does not ask longer queries.
If successful, $\pi$ is the real computation of $\Machine{M}$ on $w$ with oracle $\Omega$.
We then set $\Language{D}(w,z)$ according to whether $w \in \Language{L}$, up to a parity adjustment.

The remaining inclusion relationships of \zcref{theo_nexp_nexp_containment} follow more or less directly from the definition of Hausdorff languages via type~(iii) oracle procedures.

The same approach extends to space\nbdash-bounded oracle classes, and Hausdorff classes can be related to
$\Oracle{\iExpSpace{(i-1)}}{\Oracle{\iNExpTime{j}}{\SigmaP{c-1}}}$.
We adopt the deterministic query model~\cite{RuzzoST84}, which is equivalent to the unrestricted model when deterministic space oracle classes are considered~\cite{LadnerL76}.
The unrestricted query model has also been used beyond \LogSpace (see, e.g., \cite{Hemaspaandra1994,Gottlob1995,Dawar1998}).

\getkeytheorem{SNcontainment}

At first glance, one might expect the argument for \zcref{theo_nexp_nexp_containment} to apply directly and yield the same bound.
Indeed, an $\iExpSpace{(i-1)}$ machine can run for \iExponential{i} time, issue queries of up to \iExponential{i} length, and hence there are \iExponential{(i+1)}{}ly\nbdash-many distinct such queries.
However, this would instead yield the weaker bound
\(
\Oracle{\iExpSpace{(i-1)}}{\Oracle{\iNExpTime{j}}{\SigmaP{c-1}}}
\subseteq
\BoundedHausdCLASS{\iExpPolFunctions{i+1}}{\Oracle{\iNExpTime{(i+j)}}{\SigmaP{c-1}}}.
\)

The stronger bound follows from a key property of deterministic space\nbdash-bounded computations:
across all \ounawarelegal computations, the number of \emph{distinct} queries is only \iExponential{i}, even though queries may have \iExponential{i} length.
Thus, the construction used for \zcref{theo_nexp_nexp_containment} can be refined by restricting the guessed set $Y$ to queries that actually occur in \ounawarelegal computations.
Hence, $Y$ ranges over only \iExponential{i}{}ly\nbdash-many queries, yielding the tighter Hausdorff length bound;
see \zcref{sec_expspace_nexp_oracle_classes} for details.

\subsubsection{The Strong Exponential Hierarchy}
\label{sec_intro_SEH}

The Hausdorff perspective also resolves the questions about $\SExpHier$ raised by \citet{Hemachandra1989}.
Indeed, the general results above yield Hausdorff characterizations of the $\SExpHier$ levels (see \zcref{sec_top_seh} for details).
Similarly to $\LogOracle{\PTime}{\NPTime}$, which is considered part of $\PolHier$ by \citet{Wagner1990}, $\PNExpLog$ is here regarded as a level of $\SExpHier$.

\getkeytheorem{ThetaLevelPolHausdorff}

\getkeytheorem{DeltaLevelExpHausdorff}

These characterizations yield a structural explanation for $\PNExp = \NPNExp$ and hence for the collapse of $\SExpHier$.
Indeed, both classes coincide with the Hausdorff class $\BoundedHausdCLASS{2^{\PolFunctions}}{\NExpTime}$, and thus correspond to different ways of deciding languages in that class:
$\PNExp$ and $\NPNExp$ are type~(ii) and type~(iii) oracle classes, respectively.

More broadly, the Hausdorff perspective shows that $\SExpHier$ is not a hierarchy of main levels, but of \emph{intermediate} ones.
Specifically, the classes
$\BoundedHausdCLASS{\PolFunctions}{\NExpTime}$,
$\BoundedHausdCLASS{2^{\PolFunctions}}{\NExpTime}$, and
$\BoundedHausdCLASS{2^{2^{\PolFunctions}}}{\NExpTime}$
are the intermediate levels of the first step of $\WExpHier$.
Thus, $\SExpHier$ corresponds to a portion of that first step (see \zcref{fig_iterated_exponentials_meta-hierarchy}).
Although \citet{Hemachandra1989} observed that higher levels of $\SExpHier$ are no harder than lower levels of $\WExpHier$, this connection was not explicitly recognized.
This is likely due to the lack of a notion, such as Hausdorff classes, that reveals the equivalence between $\SExpHier$ levels and the intermediate levels of $\WExpHier$.
Notice that \citeauthor{Hemachandra1989} did consider the class $\PNExpLog$, but only showed that it coincides with languages decided by $\NPTime$ oracle machines with a $\mi{sparse}\mhyphen\NExpTime$ oracle~\cite[Theorem~4.11, Part~2]{Hemachandra1989}.

Since $\NPNExp = \BoundedHausdCLASS{2^\PolFunctions}{\NExpTime}$, the certificate characterization of $\SExpHier$ sought by \citet{Hemachandra1989} follows directly.
This characterization is also implicit in the $\PNExph${}ness proof for the Extended Tiling Problem in~\cite{EiterTechRep2016}, although their argument relies on $\PNExp = \NPNExp$.
Here, it follows from the fact that a language in $\NPNExp$ can be decided by an $\NPTime$ oracle machine that guesses the Hausdorff index of the input and verifies it via two parallel queries to a $\NExpTime$ oracle.

\getkeytheorem{NPNexpCertificates}

An alternating-machine characterization of $\NPNExp$ appears less natural, since alternation is better suited to capture main hierarchy levels rather than intermediate ones (see \zcref{sec_delta_level_hausdorff}).
In contrast, the Hausdorff characterization directly reflects the intrinsic structure of the class.

Finally, several previously known characterizations of $\PNExp$ (see the first page of the introduction) follow directly from the Hausdorff perspective developed here (see \zcref{sec_delta_level_hausdorff}).

\subsubsection{Hard problems}
\label{sec_intro_hard_problems}

In \zcref{sec_canonical_hard_problems}, we use Hausdorff classes to obtain canonical hard problems for the intermediate levels of the iterated exponential hierarchies.
We give canonical complete problems for the first intermediate level of each step, and a family of canonical problems for all intermediate levels except the first and the last of the first step.

In \zcref{sec_qbsf_hard_problems}, we exhibit hard problems for all intermediate levels of $\WExpHier$ based on Quantified Boolean Second\nbdash-order Formulas~(QBSFs)~\cite{Luck2016-techrep,Jiang2023}, that is, quantified Boolean formulas in which Boolean functions can also be quantified.
We consider formulas of the following form:
\[
  \Phi(\VarSet{g},\VarSet{y}) =
    (\SOQ_1 \VarSet{f}^1) \cdots (\SOQ_n \VarSet{f}^n)
    (\FOQ_1 \VarSet{x}^1) \cdots (\FOQ_m \VarSet{x}^m)\;
      \phi(\VarSet{f}^1,\dots,\VarSet{f}^n,\VarSet{g},\VarSet{x}^1,\dots,\VarSet{x}^m,\VarSet{y}),
\]
where $\VarSet{f}^i$ and $\VarSet{g}$ are Boolean function variables, $\VarSet{x}^j$ and $\VarSet{y}$ are propositional variables, $\SOQ_i,\FOQ_j \in \set{\exists,\forall}$ are second- and first-order alternating quantifiers, and $\phi$ is quantifier\nbdash-free.
An example is:
\[
\Phi(h,z) = (\exists f,g) (\forall x) (\exists y)\; (y \lor \lnot f(z,x) \land \lnot z \leftrightarrow h(y,0,x)) \rightarrow (g(1,y,z) \land \lnot x \lor f(y,y) \land 1).
\]

\SigmaKFormulas{c} are \QBSFs with $c$ alternating second\nbdash-order quantifiers, starting with $\exists$.
Deciding their validity is complete for $\SigmaWExp{c} = \Oracle{\NExpTime}{\SigmaP{c-1}}$~\cite{Lohrey2012,Luck2016-techrep}.
We define the following problems:

\Problem{\MaxSatSigmaFormula{c}}%
{A satisfiable \SigmaKFormula{c} $\Phi(\VarSet{x})$, where $\VarSet{x} = \set{x_1,\dots,x_n}$ are propositional variables.}%
{Is the weight of a maximum-Hamming-weight model of $\Phi(\VarSet{x})$ odd/even?}

\Problem{\LexMaxSigmaFormula{c}}%
{A satisfiable \SigmaKFormula{c} $\Phi(\VarSet{x})$, where $\VarSet{x} = \set{x_n,\dots,x_1}$ are ordered propositional variables.}%
{Does the lexicographic-maximum model of $\Phi(\VarSet{x})$ assign $\valtrue$/$\valfalse$ to $x_1$?} 

\Problem{\LexMaxFuncSigmaFormula{c}}%
{A satisfiable \SigmaKFormula{c} $\Phi(\VarSet{f})$, where $\VarSet{f} = \set{f_n,\dots,f_1}$ are ordered function variables.}%
{Does the lexicographic-maximum model $\Interpr{I}$ of $\Phi(\VarSet{x})$  satisfy $\EvalInterpr{f_1}{\Interpr{I}}(0,\dots,0) = \valtrue$/$\valfalse$?}

These problems are complete for $\BoundedOracle{\ExpTime}{\SigmaP{c}}{\LogFunctions}$, $\BoundedOracle{\ExpTime}{\SigmaP{c}}{\PolFunctions}$, and $\Oracle{\ExpTime}{\SigmaP{c}}$, respectively (see \zcref{sec_qbsf_hard_problems}).
Hardness follows from Hausdorff characterizations via \QBSF encodings of $\Oracle{\NExpTime}{\SigmaP{c-1}}$ computations.

\medskip

In \zcref{sec_datalog_hard_problems}, we also obtain matching lower bounds for problems known from the literature to be in $\PNExpLog$, but whose hardness was left open due to the lack of $\PNExpLogc$ problems.

These problems are defined over \DatalogPM knowledge bases~\cite{CaliGL2012}, i.e., pairs $\KB = \KBDetails$ consisting of a database $\DB$ and a set of rules $\Dep$.
A Boolean conjunctive query (\BCQ) $\Query$ is evaluated over $\DB$ extended with the consequences of $\Dep$.
A $\leq$-minimal explanation ($\leq$-\Minex) for $\Query$ is a subset of $\DB$ that entails $\Query$ via $\Dep$ and is minimal with respect to cardinality~\cite{CeylanLMV2019}.

We consider three problems over $\leq$-\Minexs~\cite{CeylanLMMV2021}:
(i)~\textsc{MinEx-Rel}, asking whether some $\leq$-\Minex contains a fact $f$;
(ii)~\textsc{MinEx-Nec}, asking whether all $\leq$-\Minexs contain $f$;
(iii)~\textsc{MinEx-Irrel}, asking whether some $\leq$-\Minex avoids all facts in given forbidden sets.
We show that these problems are \PNExpLog-complete in the $\mi{ba}$\nbdash-combined complexity (see \zcref{sec_datalog_hard_problems}).

\subsection{Organization of the paper}

We start by providing preliminaries in \zcref{sec_specific_preliminaries}, where the iterated exponential hierarchies are introduced;
more detailed background on notation and basic notions used throughout the paper is given in \zcref{sec_general_preliminaries}.
We then introduce Hausdorff reductions and classes in \zcref[S]{sec_Hausdorff_reductions_classes}.
Next, we study the iterated exponential hierarchies in \zcref{sec_charting_top}, presenting the main results linking oracle classes to Hausdorff classes and analyzing $\SExpHier$, thereby resolving \citeauthor{Hemachandra1989}'s open questions.
Finally, \zcref{sec_hard_problems} presents canonical complete problems for the intermediate hierarchy levels and establishes the $\PNExpLog$-hardness results.

\begingroup %

\section{Preliminaries}
\label{sec_specific_preliminaries}

In this section, we first define the iterated exponential functions.
We next recall the complexity classes here considered%
and two classical complexity hierarchies. These hierarchies will then be generalized to obtain the iterated exponential hierarchies.
The families of logarithmic and polynomial functions are~\cite{BalcazarDG1990,Mocas1996}:
\begin{align*}
  \LogFunctions & = \set{f\colon \NaturalsDomain \to \NaturalsDomain \mid f(n) = c \cdot \lceil \log_2 n \rceil, \text{ for some integer constant }c \geq 1} \\
  \PolFunctions & = \set{f\colon \NaturalsDomain \to \NaturalsDomain \mid f(n) = c \cdot n^k, \text{ for some integer constants }c,k \geq 1}.  
\end{align*}

For two arbitrary function $g, h \colon \NaturalsDomain \to \NaturalsDomain$, when we write $g(n) \leq h(n)$ (resp., $g(n) < h(n)$) we mean that, for all integers $n \geq 1$, it holds $g(n) \leq h(n)$ (resp., $g(n) < h(n)$).
Let $w$ be a string, $\StringLength{w}$ is its length.
By the \defin{(canonical) binary representation} of an integer $n \geq 0$ we mean its binary form whose most significant bit is `$1$' when $n \geq 1$, i.e., there are \emph{no} leading `$0$'s in this representation, and is just ``0'' (one bit) when $n = 0$~\cite{Kobayashi1985}.
Unless stated otherwise, non\nbdash-negative integers~$n$ are assumed to be represented in their canonical binary form starting with the most significant bits, i.e., most significant bits appear on the left of the tapes;
$\StringLength{n}$ is the length of this representation.
Analogously, unless stated otherwise, generic lists/sequences of symbols/numbers are lexicographically ordered by looking at left\nbdash-most elements first.

\subsection{Iterated Exponentials}
\label{sec_prelim_iterated_exponentials}

For an integer $i$, let the \defin{iterated exponential} function $\iExp{i}{x}$ be (notation freely inspired by \cite{Ginsburg1945,Goodstein1947,Knoebel1981}):
\[
\iExp{i}{x} = 
\begin{cases}
  {\underbracket[.5pt]{2^{2^{2^{\iddots^{\raisebox{-0.7ex}{${\scriptscriptstyle 2}$}}}}}}_{i\mhyphen\text{times}}}^{\mkern -4mu \scriptscriptstyle x}, & \text{if } i > 0 \\
  x, & \text{if } i = 0 \\
  \underbracket[.5pt]{\log \cdots \log}_{i\mhyphen\text{times}} x, & \text{if } i < 0.
\end{cases}
\]

For notational convenience, for an integer $i \geq 1$, in lieu of $\iExp{-i}{x}$, we may write $\smash{\iLog{i}{x} = \overbracket[.5pt]{\log \cdots \log}^{{i\mhyphen\text{times}}} x}$, i.e., an \defin{iterated logarithm}.
In this paper, the iterated exponential functions $\iExp{i}{f(x)}$ of interest have $i \geq -1$.

Since iterated exponentials characterize the complexity class definitions below, we highlight few inclusion relationships between some asymptotic classes involving them.
The results below are not necessarily tight, but are those useful for our subsequent discussion.
We let $\iExpPolFunctions{-1} = \LogFunctions$.
The proofs are deferred to \zcref{sec_detailed_proofs_properties_iterated_exponentials}.

\begin{lemma}[store=asymptoticClassPropertiesIteratedExponentialsNonContainment]
\label{theo_asymptotic_class_properties_iterated_exponentials_non_containment}
Let $i \geq 1$, $j \geq 0$, and $k \geq 1$, be integers.
Then, $\smash{\iExp{i}{O(\iExp{j}{n^k})} \nsubseteq O(\iExp{i+j}{n^{k}})}$.
\end{lemma}

\begin{lemma}[store=asymptoticClassPropertiesIteratedExponentialsContainment]
\label{theo_asymptotic_class_properties_iterated_exponentials_containment}
Let $i, j, k \geq 0$ be integers.
Then, $\smash{\iExp{i}{O(\iExp{j}{n^k})} \subseteq O(\iExp{i+j}{n^{k+1}})}$.
\end{lemma}

\begin{lemma}[store=operationsBetweenIteratedExponentials]
\label{theo_operations_between_iterated_exponentials}
\label{theo_polynomial_of_iterated_exponentials}
\label{theo_multiplication_between_iterated_exponentials}
\label{theo_addition_between_iterated_exponentials}
\label{theo_exponentiation_between_iterated_exponentials}
\label{theo_composition_iterated_expentials}
Let $i, j \geq -1$ be integers, and let $\smash{f(n) \in O(\iExpPolFunctions{i})}$ and $\smash{g(n) \in O(\iExpPolFunctions{j})}$ be functions.
Then,
\begin{enumerate}[nosep,label=\arabic*)]
  \item $O(c \cdot \! {f(n)}^k) \subseteq O(\iExpPolFunctions{\max\set{0,i}})$, for integer constants $c,k \geq 1$;
  \item $O(f(n) + g(n)) \subseteq O(\iExpPolFunctions{\max\set{i,j}})$;
  \item $O(f(n) \cdot g(n)) \subseteq O(\iExpPolFunctions{\max\set{0,i,j}})$;
  \item $O({f(n)}^{g(n)}) \subseteq O(\iExpPolFunctions{\max\set{1,i,j+1}})$; and
  \item $O(f(g(n))) \subseteq O(\iExpPolFunctions{\max\set{i,i+j}})$ (by which, for $i,j \geq 0$, it holds $\iExp{i}{O(\iExpPolFunctions{j})} \subseteq O(\iExpPolFunctions{i+j})$).
\end{enumerate}
\end{lemma}

\subsection{Turing Machines and Central Complexity Classes}
\label{sec_prelim_central_complexity_classes}

Unless differently stated, we consider multi\nbdash-tape Turing machines with bidirectional semi\nbdash-infinite tapes with a \emph{read-only} input tape and (possibly several) \emph{read/write} work tapes.
As we consider only \emph{recursive decision} problems, here machines accept, or reject, their input by \emph{halting} in an accepting, or non\nbdash-accepting, state, respectively.
A Turing machine can be represented via a binary string encoding its transition function (see, e.g., \cite{Hopcroft1979}).

Remember that the computation of a deterministic machine is always characterized by a single next step at any given moment, whereas the computation of a \emph{non}\/deterministic machine may have several possible next steps at some points.
In this paper, all machines are deterministic, unless otherwise specified.
The computation that a deterministic (resp., a nondeterministic) machine~$\Machine{M}$ performs (resp., may perform) over an input string~$w$ can be described by the sequence of configurations that~$\Machine{M}$ traverses (resp., may traverse) when executing on~$w$.

A \defin{configuration}, or \defin{instantaneous description (ID)}, of a (non)deterministic machine $\Machine{M}$ is a comprehensive snapshot of $\Machine{M}$'s execution at a particular moment.
An ID of $\Machine{M}$ includes the current (control) state, the current content of all tapes, and the current positions of all heads on the tapes.
From an ID and the knowledge of the machine's transition function, for a deterministic (resp., nondeterministic) machine $\Machine{M}$ it is possible to know what the next action performed by $\Machine{M}$ is (resp., what the possible next actions available to $\Machine{M}$ are).

A \defin{partial computation} of $\Machine{M}$ on~$w$, or for $\Machine{M}(w)$, is a sequence of IDs which is legal \Wrt $\Machine{M}$'s transition function.
A \defin{computation} for $\Machine{M}(w)$ is one of its partial computations whose first ID refers to the beginning of the computation of $\Machine{M}$ over $w$, and its last ID is final, i.e., it does not admit any legal successor.
A computation is \defin{accepting} or \defin{rejecting} if its last ID contains an accepting or non\nbdash-accepting state, respectively.
Since a (partial) computation $\pi$ for $\Machine{M}(w)$ is a sequence of IDs, $\pi$ can be represented into a string over the binary alphabet via a suitable encoding;
such an encoding is characterized by a linear overhead only (see, e.g., \cite{Hopcroft1979}).

For a (non)deterministic machine $\Machine{M}$, we denote by $\Machine{M}(w)$ the \defin{output of the machine $\Machine{M}$ on input $w$}, where $\Machine{M}(w) = 1$, if $\Machine{M}$ accepts $w$, otherwise $\Machine{M}(w) = 0$;
we denote by $\LanguageOf{\Machine{M}}$ the \defin{language decided by the machine $\Machine{M}$};
we say that $\Machine{M}$ decides/solves a language/problem $\Language{L}$ iff $\Language{L} = \LanguageOf{\Machine{M}}$.

The \defin{computation time} (resp., \defin{computation space}) \defin{of a machine $\Machine{M}$ on input $w$} is the length of the longest computation (resp., is the maximum number of distinct cells, on all $\Machine{M}$'s \emph{work} tapes, scanned in any computation) in the computation tree of $\Machine{M}$ over $w$.
For a strictly positive~\cite{BalcazarDG1995} nondecreasing~\cite{LewisSH1965,HartmanisS1965,StearnsHL1965} function $t\colon \NaturalsDomain \to \NaturalsDomain$ (resp., $s\colon \NaturalsDomain\to\NaturalsDomain$), the machine $\Machine{M}$ has \defin{running time} $t(n)$ (resp., \defin{running space} $s(n)$) iff, on all but finitely\nbdash-many inputs $w$, the computation time (resp., space) of $\Machine{M}$ over $w$ does not exceed $t(\StringLength{w})$ (resp., $s(\StringLength{w})$)~\cite{Kozen2006};
$t(n)$ (resp., $s(n)$) is also named \defin{time function} (resp., \defin{space function}).
A time function $t(n)$ (resp., space function $s(n)$) is \defin{time constructible} (resp., \defin{space constructible}) iff there is a Turing machine $\Machine{M}$ which, on every input of \emph{length} $n$, halts in exactly $t(n)$ steps (resp., marks off $s(n)$ work tape cells and halts, never using more than $s(n)$ space in the process)~\cite{HartmanisS1965,StearnsHL1965,BalcazarDG1995,Kozen2006}.
A different characterization of time and space constructibility can be proven equivalent to the one above.
A time function $t(n)$ that, for all but finitely many values $n$, is such that $t(n) \geq (1{+}\epsilon) \cdot n$, where $\epsilon > 0$, (resp., a space function $s(n)$) is time (resp., space) constructible iff there is a Turing machine that on an input of \emph{length} $n$ outputs the binary\nbdash-represented value $t(n)$ (resp., $s(n)$) in time $O(t(n))$ (resp., in space $O(s(n))$)---see, e.g., \cite{Kannan1982,Kobayashi1985,BalcazarDG1995}.
A machine \defin{$\Machine{M}$ decides a language $\Language{L}$ in time $t(n)$ (resp., space $s(n)$)} iff $\Language{L} = \LanguageOf{\Machine{M}}$ and $\Machine{M}$ has running time $t(n)$ (resp., space $s(n)$).

By $\DTime{t(n)}$ and $\DSpace{s(n)}$ (resp., $\NTime{t(n)}$ and $\NSpace{s(n)}$) we denote the class of languages decided by deterministic (resp., nondeterministic) machines in time $t(n)$ and space $s(n)$, respectively.
The functions $t(n)$ and $s(n)$ are generally assumed to be time and space constructible, respectively, as the complexity classes defined via these kinds of functions enjoy some desirable properties (see, e.g., \cite{Hopcroft1979,BalcazarDG1995}).

We now introduce the complexity classes that we will deal with.
By the definition of the iterated exponentials and the relationships between their asymptotic classes, for $i \geq 0$ and $j \geq -1$, we define the complexity classes:
\begin{align*}
  \iExpTime{i} &= \bigcup_{k \geq 1} \DTime{\iExp{i}{n^k}}   &   \iExpSpace{j} &= \bigcup_{k \geq 1} \DSpace{\iExp{j}{n^k}} \\
  \iNExpTime{i} &= \bigcup_{k \geq 1} \NTime{\iExp{i}{n^k}}   &   \iNExpSpace{j} &= \bigcup_{k \geq 1} \NSpace{\iExp{j}{n^k}}.
\end{align*}
These definitions follow and generalize those in~\cite{Simon1975,Hartmanis1985}.%
\footnote{The class $\iNExpTime{i}$ is denoted $S_i$ by~\citet{Simon1975}, who defines these classes via ``higher order'' objects, i.e., starting from strings, the author considers sets of strings, sets of sets of strings, and so on.}
Observe that, for $i \geq 0$ and $j \geq -1$, functions $\smash{\iExp{i}{n^k}}$ and $\smash{\iExp{j}{n^k}}$ are time and space constructible, respectively (see the property above and \zcref{sec_maths_complexity}).
The reader can easily check that, by the equivalence with the standard definitions, $\LogSpace = \iExpSpace{-1}$, $\PTime = \iExpTime{0}$, $\NPTime = \iNExpTime{0}$, $\PSpace = \iExpSpace{0}$, $\ExpTime = \iExpTime{1}$, $\NExpTime = \iNExpTime{1}$, and $\ExpSpace = \iExpSpace{1}$.%
\footnote{Other exponential-time classes considered in the literature are the deterministic and the nondeterministic exponential-time with linear exponent classes. These classes are defined as $\mathrm{E} = \textsc{ETime} = \bigcup_{c \geq 1}\DTime{2^{c \cdot n}}$ and $\mathrm{NE} = \textsc{NETime} = \bigcup_{c \geq 1}\NTime{2^{c \cdot n}}$, respectively (see, e.g.,~\cite{Hartmanis1985,Hemachandra1989,Mocas1996,Dawar1998}). We will not however consider them in this paper.}
In what follows, we use the standard notation for these complexity classes, unless we provide general results.
Below, when we say that a language, a problem, a relation, or a predicate, can be decided within a given time or space bound, without mentioning whether deterministic or not, we mean that it can be decided within that resource bound by a \emph{deterministic} machine.
When we will refer to \emph{non}\/deterministic classes, we will explicitly mention it.

By \citeauthor{Savitch1970}'s theorem~\cite{Savitch1970}, for all $i \geq 0$, $\iExpSpace{i} = \iNExpSpace{i}$.
As we will not consider nondeterministic logspace in this paper, we can here avoid to explicitly deal with nondeterministic space classes.

By the linear speed-up theorem~\cite{HartmanisS1965}, for $i \geq 0$, it holds that $\DTime{\iExp{i}{n^k}} = \DTime{O(\iExp{i}{n^k})}$ and $\NTime{\iExp{i}{n^k}} = \NTime{O(\iExp{i}{n^k})}$, and, by the space compression theorem~\cite{StearnsHL1965}, for $j \geq -1$, it holds that $\DSpace{\iExp{j}{n^k}} = \DSpace{O(\iExp{j}{n^k})}$ (see also \cite{Hopcroft1979,Papadimitriou1994,BalcazarDG1995,Kozen2006}).
Therefore, if $\Language{L} \in (\mathrm{N})\iExpTime{i}$, there is a (non)deterministic machine deciding $\Language{L}$ in time $\iExp{i}{n^k}$.
By $\iExp{i}{O(\iExp{j}{n^k})} \subseteq O(\iExp{i+j}{n^{k+1}})$ (see \zcref{theo_asymptotic_class_properties_iterated_exponentials_containment}), if $\Language{L} \in \DTime{\iExp{i}{O(\iExp{j}{n^k})}}$, then $\Language{L} \in \DTime{O(\iExp{i+j}{n^{k+1}})} = \linebreak[0] \DTime{\iExp{i+j}{n^{k+1}}}$, and hence we have $\Language{L} \in \iExpTime{(i+j)}$, as expected.
Similar remarks hold for $\iNExpTime{i}$ and $\iExpSpace{i}$ as well.

Languages in \NPTime (resp., \NExpTime) enjoy this property:
each \yesinst $w$ of $\Language{L}$ has a \emph{certificate}~$u$ of polynomial (resp., exponential) size in~$\StringLength{w}$, which witnesses $w \in \Language{L}$ and that can be checked in \emph{polynomial} time \Wrt $\StringLength{w} + \StringLength{u}$.
Languages in \CoNPTime (resp., \CoNExpTime) have instead polynomial (resp., exponential) certificates for \noinsts.
By this, a language $\Language{L}$ is in \NPTime, \CoNPTime, \NExpTime, or \CoNExpTime (see, e.g.,~\cite{Karp1972,Hartmanis1985,Hemachandra1989,Goldreich2008}) iff there is a polynomial $p(n)$ and a \emph{deterministic polynomial-time} binary predicate $R$ such that, for~every~string~$w$,
\begin{flalign*}
    \NPTime &\colon w \in \Language{L} \Leftrightarrow (\exists u \in \alphabet^{\leq p(\StringLength{w})}) \PrefixMatrixSeparator R(w,u) = 1; & \CoNPTime &\colon w \in \Language{L} \Leftrightarrow (\forall u \in \alphabet^{\leq p(\StringLength{w})}) \PrefixMatrixSeparator R(w,u) = 1; \\
    \NExpTime &\colon w \in \Language{L} \Leftrightarrow (\exists u \in \alphabet^{\leq 2^{p(\StringLength{w})}}) \PrefixMatrixSeparator R(w,u) = 1; & \CoNExpTime &\colon w \in \Language{L} \Leftrightarrow (\forall u \in \alphabet^{\leq 2^{p(\StringLength{w})}}) \PrefixMatrixSeparator R(w,u) = 1.
\end{flalign*}

Generalizations of standard techniques (see, e.g., \cite{Wrathall1976,ChandraKS81,BalcazarDG1995,Goldreich2008,Arora2009}) show that certificate-based characterizations can be provided also for higher\nbdash-order exponential time classes. %
A language $\Language{L}$ belongs to $\iNExpTime{i}$ (resp., \CoINExpTime{i}), with $i \geq 0$, iff there exist a polynomial $p(n)$ and a \emph{deterministic polynomial-time} binary predicate $R$ such that, for every string $w$,
\begin{flalign*}
\iNExpTime{i} &\colon w \in \Language{L} \Leftrightarrow (\exists u \in \alphabet^{\leq \iExp{i}{p(\StringLength{w})}}) \mskip 1mu R(w,u) = 1; & \CoINExpTime{i} &\colon w \in \Language{L} \Leftrightarrow (\forall u \in \alphabet^{\leq \iExp{i}{p(\StringLength{w})}}) \mskip 1mu R(w,u) = 1.
\end{flalign*}

The \emph{certificate-based} characterization of $\iNExpTime{i}$, for $i \geq 0$, is rather interesting, as it highlights that languages $\Language{L} \in \iNExpTime{i}$ can be decided by machines working in two phases:
first an \iExponential{i}{}ly-long certificate is \emph{nondeterministically} guessed, and then the validity of the guessed certificate is \emph{deterministically} checked in \emph{polynomial time} \Wrt the combined size of the input string and the guessed certificate.

The complexity classes $\iNExpTime{i}$ and $\CoINExpTime{i}$, for $i \geq 0$, are closed under conjunction and disjunction;
deterministic complexity classes are also closed under complement.

Standard results in complexity theory (see, e.g., \cite{BalcazarDG1995}), show that, for all $i \geq 0$, the inclusion relationships (which are all currently
believed to be strict) between these complexity classes are:%
\footnote{For these inclusion relationships, the notation ``\!$A\subseteq B, C \subseteq D$'' is a shorthand for: $A \subseteq B$; $A\subseteq C$; $B\subseteq D$; and $C\subseteq D$. Moreover, no inclusion relationships are currently known between $B$ and $C$, and it is currently believed that $B \not\subseteq C$ and $C \not\subseteq B$.}%
\begin{equation*}
  \cdots \subseteq \iExpSpace{(i-1)} \subseteq \iExpTime{i} \subseteq \iNExpTime{i}, \CoINExpTime{i} \subseteq \iExpSpace{i} \subseteq \cdots.
\end{equation*}

\subsection{Oracle Machines and Oracle Complexity Classes}
\label{sec_prelim_Oracle_Classes}

An \defin{oracle Turing machine $\Oracle{\Machine{M}}{?}$}, is a (non)deterministic Turing machine $\Machine{M}$ that, during its computation, may ask to an oracle to answer membership queries at \emph{unit} cost. %
For this reason, an oracle machine $\Oracle{\Machine{M}}{?}$ is equipped with an additional \emph{write-only} \emph{unidirectional} work tape, called the \defin{query tape}, through which the machine passes its queries to the oracle.
The definition of $\Oracle{\Machine{M}}{?}$ is \emph{independent} from its oracle, and the symbol ``$?$'' indicates that oracles for different languages can be ``attached'' to $\Machine{M}$~\cite{Papadimitriou1994}.
For a language $\Language{A}$, by $\Oracle{\Machine{M}}{\Language{A}}$ we mean that the oracle attached to $\Oracle{\Machine{M}}{?}$ decides~$\Language{A}$.
For an oracle machine $\Oracle{\Machine{M}}{?}$ and a language $\Language{A}$, $\Oracle{\Machine{M}}{\Language{A}}(w)$ denotes the output of the oracle machine on input $w$ when the oracle decides $\Language{A}$;
$\LanguageOf{\Oracle{\Machine{M}}{\Language{A}}}$ denotes the language decided by the oracle machine $\Oracle{\Machine{M}}{?}$ with an oracle for~$\Language{A}$.
Like for non\nbdash-oracle machines, oracle machine computations can be described via sequences of IDs.
Oracle machine IDs are very similar to the IDs of non-oracle machines:
the query tape is simply an additional tape, whose content and head's position are part of the ID.
Notice that this makes the oracle machine IDs here considered different from those sometimes considered in the literature, where the query tape is \emph{not} included in the IDs (see, e.g., \cite{LadnerL76,Hemaspaandra1994}).
For presentation purposes, we could sometimes refer to the oracle as to an additional machine which the oracle machine (i.e., the caller) can ask questions to.
Hence, we might sometimes refer to the computation carried out by the oracle;
nonetheless, the time cost paid by the caller for the computation carried out by the oracle is always \emph{one} step.
If it is clear from the context that $\Oracle{\Machine{M}}{?}$ is actually an oracle Turing machine, to streamline the notation, we may refer to it just by~$\Machine{M}$.

For a strictly positive nondecreasing function $f\colon \NaturalsDomain\to\NaturalsDomain$ (resp., for a constant integer $k$), $\BoundedOracle{\Machine{M}}{?}{f(n)}$ (resp., $\BoundedOracle{\Machine{M}}{?}{k}$) denotes an oracle machine allowed to issue at most $f(n)$ (resp., at most $k$) queries to its oracle, where $n$ is the input string size.
The notation $\ParOracle{\Machine{M}}{?}$ state that the oracle machine is allowed to issue a \emph{single} round of \defin{parallel queries}, i.e.,
the queries that $\ParOracle{\Machine{M}}{?}$ submits to its oracle are collected and asked all at once.
This way of querying the oracle is also called \defin{nonadaptive}, in contrast with the standard sequential way of asking queries, which instead can be \defin{adaptive}~\cite{Book1988}.
For parallel queries, we might relax the constraint on the \emph{single} round of parallel queries;
for a function $f\colon \NaturalsDomain\to\NaturalsDomain$ (resp., a constant integer $k$), $\ParBoundedOracle{\Machine{M}}{?}{f(n)}$ (resp., $\ParBoundedOracle{\Machine{M}}{?}{k}$) denotes an oracle machine allowed to issue at most $f(n)$ (resp., at most $k$) rounds of parallel queries to its oracle, where $n$ is the input string size.
We can also combine the constraints:
for two strictly positive nondecreasing functions $f(n)$ and $g(n)$, $\DoubleBoundedParOracle{\Machine{M}}{?}{f(n)}{g(n)}$ denotes an oracle machine allowed to perform $g(n)$ rounds of parallel queries, and in each round at most $f(n)$ queries can be asked;
the notation $\BoundedParOracle{\Machine{M}}{?}{f(n)}$ has the natural meaning, and the functions $f(n)$ and $g(n)$ can be replaced in the notation by constant values, with the natural meanings.

If $\ComplexityClass{C}$ is a (non)deterministic \emph{time}\nbdash-bounded complexity class and $\Language{A}$ is a language, $\ComplexityClass{C}^\Language{A}$ is the class of languages decided by oracle machines in $\ComplexityClass{C}$ querying an oracle for $\Language{A}$; %
for a complexity class $\ComplexityClass{D}$, we denote by $\ComplexityClass{C}^\ComplexityClass{D}$ the class of languages decided by oracle machines in $\ComplexityClass{C}$ querying an oracle for a language in~$\ComplexityClass{D}$.
In the following, when we say that an oracle machine $\Oracle{\Machine{M}}{?}$ queries an oracle in~$\ComplexityClass{D}$, 
we mean that $\Machine{M}$ queries an oracle for a $\ComplexityClass{D}$\CompleteSuffix language.
The notation introduced to denote constraints over the queries is naturally extended to oracle complexity classes, e.g., $\DoubleBoundedParOracle{\ComplexityClass{C}}{\ComplexityClass{D}}{f(n)}{g(n)}$.
For a family $F$ of functions, $\BoundedOracle{\ComplexityClass{C}}{\ComplexityClass{D}}{F}$ (resp., $\BoundedParOracle{\ComplexityClass{C}}{\ComplexityClass{D}}{F}$) is the class of languages decided by oracle machines in $\ComplexityClass{C}$ querying an oracle in $\ComplexityClass{D}$ at most $f(n)$\nbdash-many times (resp., with at most $f(n)$\nbdash-many parallel queries), with $f(n) \in F$;
more formally, $\BoundedOracle{\ComplexityClass{C}}{\ComplexityClass{D}}{F} = \bigcup_{f(n) \in F} \BoundedOracle{\ComplexityClass{C}}{\ComplexityClass{D}}{f(n)}$ and $\BoundedParOracle{\ComplexityClass{C}}{\ComplexityClass{D}}{F} = \bigcup_{f(n) \in F} \BoundedParOracle{\ComplexityClass{C}}{\ComplexityClass{D}}{f(n)}$.

Defining \emph{space}\nbdash-bounded oracle machines is not so straightforward~\cite{Hartmanis1988}, as the space used on the query tape might, or might not, contribute toward the computation space of the oracle machine.
Different definitions have been proposed in the literature, giving rise to different space\nbdash-bounded oracle complexity classes (see, e.g., \cite{Michel1992}).
In this paper, for space\nbdash-bounded oracle classes we adopt the \emph{deterministic query model}.
This query model does \emph{not} impose space constraints over the query tape, but it requires oracle machines to \emph{always} act deterministically from the moment they start to write a query on the tape, until the query is issued to the oracle.
Nonetheless, since our results regarding space\nbdash-bounded oracle machines will only be about deterministic ones, for what concerns us here, the deterministic and the unrestricted query model, in which no restriction at all is imposed, are equivalent.
Notice that the unrestricted query model has been adopted multiple times in the literature also for space\nbdash-bounded oracle classes above \LogSpace (see, e.g., \cite{Hemaspaandra1994,Gottlob1995,Dawar1998}).

\subsection{The Polynomial and the (Weak) Exponential Hierarchies}
\label{sec_prelim_PH_ExpH}

The Polynomial and the (Weak) Exponential Hierarchies are two well known complexity hierarchies.
Below, we simultaneously introduce them, so to highlight their similarities.
We start by defining the hierarchy ``main levels'', and afterwards their ``intermediate levels''.
We will generalize these definitions in \zcref{sec_def_iterated_exp_hierarchy}.

For $c \geq 1$, the (main) $c$\nbdash-th levels of the Polynomial and the (Weak) Exponential Hierarchies are
\begin{alignat*}{3}
  \SigmaP{c} &{}= \Oracle{\NPTime}{\SigmaP{c-1}} & \qquad \text{and} \qquad & \SigmaWExp{c} &{}= \Oracle{\NExpTime}{\SigmaP{c-1}},
\end{alignat*}
respectively, where we define $\SigmaP{0} = \PTime$ and $\SigmaWExp{0} = \ExpTime$ as the ground levels of the respective hierarchies.
For $c \geq 0$, define $\PiP{c} = \ComplementPrefixKerned\SigmaP{c}$ and $\PiWExp{c} = \ComplementPrefixKerned\SigmaWExp{c}$.
Notice that $\SigmaP{c},\PiP{c} \subseteq \SigmaP{c+1},\PiP{c+1}$ and $\SigmaWExp{c},\PiWExp{c} \subseteq \SigmaWExp{c+1},\PiWExp{c+1}$.
The Polynomial and the (Weak) Exponential Hierarchies are $\PolHier = \bigcup_{c \geq 0} \SigmaP{c}$ and $\WExpHier = \bigcup_{c \geq 0} \SigmaWExp{c}$, respectively.%
\footnote{Classically, $\WExpHier$ denoted the Exponential-Time with linear exponent Hierarchy, i.e., the Exponential-Time Hierarchy built over $\mathrm{E}$, rather than over $\ExpTime$.
Since we do not consider $\mathrm{E}$ here, for notational convenience we use the notation $\SigmaWExp{c}$, $\PiWExp{c}$, $\DeltaWExp{c}$, and $\WExpHier$, in lieu of the longer $\Sigma^{\textsc{Exp}}_c$, $\Pi^{\textsc{Exp}}_c$, $\Delta^{\textsc{Exp}}_c$, and $\textnormal{\textsc{ExpH}}$, respectively (see, e.g., \cite{Hartmanis1985,Hemachandra1989,Mocas1996}).}

The (\emph{Weak}) Exponential Hierarchy was named in this way to distinguish it from the \emph{Strong} Exponential Hierarchy, which was differently defined and separately introduced also at the time~\cite{Hemachandra1989}.
We will provide additional details on the \SEHText when needed in \zcref{sec_top_seh}.

Observe that the levels of $\WExpHier$ are defined as $\Oracle{\NExpTime}{\SigmaP{c-1}}$ and \emph{not} as $\Oracle{\NExpTime}{\SigmaWExp{c-1}}$.
For example, the third level of $\WExpHier$ is $\Oracle{\NExpTime}{\Oracle{\NPTime}{\NPTime}}$ and \emph{not} $\Oracle{\NExpTime}{\Oracle{\NExpTime}{\NExpTime}}$.
The reason behind this specific definition will be explained below.

Like $\NPTime$ and $\CoNPTime$, also $\SigmaP{c}$ and $\PiP{c}$~\cite{Stockmeyer1976,Wrathall1976} and $\SigmaWExp{c}$ and $\PiWExp{c}$~\cite{Hartmanis1985,Hemachandra1989} have cer\-tif\-i\-cate-based characterizations.
A language $\Language{L}$ is in $\SigmaP{c}$, $\SigmaWExp{c}$ (resp., $\PiP{c}$, $\PiWExp{c}$) iff there exists a polynomial $p(n)$ and a \emph{deterministic polynomial-time $(c{+}1)$\nbdash-ary predicate $R$ such that, for every string $w$},
\begin{alignat*}{2}
    \SigmaP{c}&\colon &\quad&w \in \Language{L} \Leftrightarrow (\exists u_1 \in \alphabet^{\leq p(\StringLength{w})}) (\forall u_2 \in \alphabet^{\leq p(\StringLength{w})}) \cdots (Q_c \: u_c \in \alphabet^{\leq p(\StringLength{w})}) \: R(w,u_1,\dots,u_c) = 1 \\
    \SigmaWExp{c}&\colon &\quad&w \in \Language{L} \Leftrightarrow (\exists u_1 \in \alphabet^{\leq 2^{p(\StringLength{w})}}) (\forall u_2 \in \alphabet^{\leq 2^{p(\StringLength{w})}}) \cdots (Q_c \: u_c\in \alphabet^{\leq 2^{p(\StringLength{w})}}) \: R(w,u_1,\dots,u_c) = 1 \\
    \PiP{c}&\colon &\quad&w \in \Language{L} \Leftrightarrow (\forall u_1 \in \alphabet^{\leq p(\StringLength{w})}) (\exists u_2 \in \alphabet^{\leq p(\StringLength{w})}) \cdots (Q_c \: u_c \in \alphabet^{\leq p(\StringLength{w})}) \: R(w,u_1,\dots,u_c) = 1 \\
    \PiWExp{c}&\colon &\quad&w \in \Language{L} \Leftrightarrow (\forall u_1 \in \alphabet^{\leq 2^{p(\StringLength{w})}}) (\exists u_2 \in \alphabet^{\leq 2^{p(\StringLength{w})}}) \cdots (Q_c \: u_c\in \alphabet^{\leq 2^{p(\StringLength{w})}}) \: R(w,u_1,\dots,u_c) = 1,
\end{alignat*}
where $Q_c$ is $\forall$ or $\exists$ depending on $c$ being even or odd  (resp., odd or even), respectively.

By this characterization, $\Oracle{\NPTime}{\SigmaP{c}} = \BoundedOracle{\NPTime}{\SigmaP{c}}{1}$ (resp., $\Oracle{\NExpTime}{\SigmaP{c}} = \BoundedOracle{\NExpTime}{\SigmaP{c}}{1}$), for all $c \geq 1$, i.e., allowing an $\NPTime$ (resp., a $\NExpTime$) oracle machine to query more than once its $\SigmaP{c}$ oracle makes no difference~\cite{Wagner1990,BalcazarDG1995}.

Besides the certificate-based characterizations, the Polynomial and the (Weak) Exponential Hierarchies can be defined via alternating machines.
Indeed, $\SigmaP{c}$ and $\PiP{c}$ (resp., $\SigmaWExp{c}$ and $\PiWExp{c}$) can equivalently be defined as the classes of languages that can be decided by $\Sigma_c$\nbdash-alternating and $\Pi_c$\nbdash-alternating machines, respectively, in polynomial~\cite{ChandraKS81} (resp., exponential~\cite{Mocas1996}) time. 

The levels of $\WExpHier$ were defined as $\NExpTime \subseteq \Oracle{\NExpTime}{\NPTime} \subseteq \Oracle{\NExpTime}{\Oracle{\NPTime}{\NPTime}} \subseteq {} \cdots$, rather than $\NExpTime \subseteq \Oracle{\NExpTime}{\NExpTime} \subseteq \Oracle{\NExpTime}{\Oracle{\NExpTime}{\NExpTime}} \subseteq {} \cdots$,
because $\Oracle{\NExpTime}{\NExpTime}$ contains $\iExpTime{2}$ and $\ExpSpace$.
Instead, the intent was to obtain a $\NExpTime$-based hierarchy analogue to $\PolHier$, with alike certificate-based and alternating machines characterizations~\cite{Hemachandra1989,Mocas1996}, and where $\PolHier$ is contained in $\PSpace$ and not vice-versa.

The Polynomial and the (Weak) Exponential Hierarchies were defined to include some intermediate classes.
The intermediate levels \emph{above} the $c$\nbdash-th main level and \emph{beneath} the $(c{+}1)$\nbdash-th main level were defined as
\begin{alignat*}{3}
  \DeltaP{c+1} &{}= \Oracle{\PTime}{\SigmaP{c}} & \qquad \text{and} \qquad & \DeltaWExp{c+1} &{}= \Oracle{\ExpTime}{\SigmaP{c}},
\end{alignat*}
for \PolHier and \WExpHier, respectively.
Notice that $\Oracle{\PTime}{\SigmaP{c}} = \BoundedOracle{\PTime}{\SigmaP{c}}{\PolFunctions}$ and $\Oracle{\ExpTime}{\SigmaP{c}} = \BoundedOracle{\ExpTime}{\SigmaP{c}}{2^{\PolFunctions}}$.

Additional intermediate levels
of $\PolHier$ and $\WExpHier$ were studied by \citeauthor{Wagner1990} and \citeauthor{Mocas1996}, respectively, revealing a richer structure between the main levels.
They defined the classes $\ThetaP{c+1} = \LogOracle{\PTime}{\SigmaP{c}}$~\cite{Wagner1990} and $\DeltaWExpBound{c+1}{\PolFunctions} = \PolOracle{\ExpTime}{\SigmaP{c}}$~\cite{Mocas1996}, between the $c$\nbdash-th and the $(c{+}1)$\nbdash-th main levels of \PolHier and \WExpHier, respectively.
For $c \geq 1$, the inclusion relationships, all currently believed to be strict, between the mentioned hierarchy classes are:
\begin{alignat}{4}
\PiP{c},\SigmaP{c} &= \Oracle{\NPTime}{\SigmaP{c-1}} & \subseteq \LogOracle{\PTime}{\SigmaP{c}} & {} \subseteq \PolOracle{\PTime}{\SigmaP{c}} & & {} \subseteq \Oracle{\NPTime}{\SigmaP{c}} & & {} = \SigmaP{c+1},\PiP{c+1} \quad \text{and} \nonumber \\
\PiWExp{c},\SigmaWExp{c} &= \Oracle{\NExpTime}{\SigmaP{c-1}} & & {} \subseteq \PolOracle{\ExpTime}{\SigmaP{c}} \subseteq \ExpOracle{\ExpTime}{\SigmaP{c}} & & {} \subseteq \Oracle{\NExpTime}{\SigmaP{c}} & & {} = \SigmaWExp{c+1},\PiWExp{c+1}. \label{eq_hint_missing_intermediate_level}
\end{alignat}

The $\WExpHier$ intermediate levels mentioned in \zcref{eq_hint_missing_intermediate_level} and analyzed by \citet{Mocas1996} are not, in some way, ``exhaustive'', as
we can also define $\LogOracle{\ExpTime}{\SigmaP{c}}$, and this might have already caught the reader's eye from the way in which we have purposefully arranged the classes in \zcref{eq_hint_missing_intermediate_level}.%
\footnote{\citet{Mocas1996} defined $\DeltaWExpBound{c+1}{F} = \BoundedOracle{\ExpTime}{\SigmaP{c}}{F}$, for a family $F$ of functions, but only $\DeltaWExpBound{c+1}{\PolFunctions}$ was then investigated.
\citet{EiterGL1997} reckoned $\DeltaWExpBound{c+1}{\PolFunctions} = \PolOracle{\ExpTime}{\SigmaP{c}}$ as the analogue of $\LogOracle{\PTime}{\SigmaP{c}}$, whereas we stress that $\LogOracle{\ExpTime}{\SigmaP{c}}$ can be defined as well.}
Observe that the number of the intermediate levels could undoubtedly be inflated by progressively limiting the amount of queries available to the oracle machines.
For example, we could define the classes $\BoundedOracle{\PTime}{\NPTime}{O(\log n)} \supseteq \BoundedOracle{\PTime}{\NPTime}{O(\log \log n)} \supseteq \BoundedOracle{\PTime}{\NPTime}{O(\log \log \log n)} \supseteq \cdots$, and so on,%
\footnote{For the iterated logarithm, we might envisage a logarithmic function like $\max\set{1,\lceil \log n \rceil}$, to guarantee the possibility of repeatedly evaluating the function even on relatively small numbers $n$.}
and these classes would still be supersets of the classes imposing a constant number of possible queries---conversely, the amount of allowed queries cannot be increased at will, because, e.g., a $\PTime$ oracle machine cannot issue more than polynomially-many queries.
The bounded-query oracle classes considered here, besides those with a constant number of queries, are those admitting at least logarithmically-many queries.

The \BHText over $\SigmaP{c}$ (resp., $\SigmaWExp{c}$), for $c \geq 1$, which equals the class of languages in $\BoundedOracle{\PTime}{\SigmaP{c}}{O(1)}$ (resp., $\BoundedOracle{\ExpTime}{\SigmaP{c}}{O(1)}$)~\cite{Beigel1991}, can be seen as an additional intermediate level sitting between $\SigmaP{c}$ (resp., $\SigmaWExp{c}$) and $\LogOracle{\PTime}{\SigmaP{c}}$ (resp., $\LogOracle{\ExpTime}{\SigmaP{c}}$)---see \zcref{sec_extended_Boolean_Hierarchies} for more details on the Boolean Hierarchies;
notice, however, that some parts of this appendix rely on material introduced in \zcref{sec_Hausdorff_reductions_classes}.
The \BHText over \NPTime was extensively studied in various papers (see the works cited in \zcref{sec_extended_Boolean_Hierarchies}), and the \BHText over $\SigmaP{2}$ was investigated in~\cite{ChangK1996}, whereas the \BHText over \NExpTime was considered in~\cite{Dawar1998}.

It is well known that the Polynomial and the (Weak) Exponential Hierarchies are contained within \PSpace and \ExpSpace, respectively~\cite{Stockmeyer1976,Wrathall1976,Mocas1996},
which, in turn, equal alternating polynomial and exponential time (with \emph{no} bounds on the number of alterations), respectively~\cite{ChandraKS81}.

\subsection{Iterated Exponential Hierarchies}
\label{sec_def_iterated_exp_hierarchy}

By generalizing \PolHier and \WExpHier, we now introduce general complexity hierarchies.
These will be the double-, the triple-\WEHText, and so on.
We will generically refer to them as the iterated exponential hierarchies.
Besides defining the main levels of these generalized hierarchies, we also define their intermediate levels.

\begin{definition}
For $c \geq 1$, the (main) $c$\nbdash-th level of the \iWEHText{i} is defined as
\[
  \SigmaIWExp{i}{c} = \Oracle{\iNExpTime{i}}{\SigmaP{c-1}},
\]
where we define $\SigmaIWExp{i}{0} = \iExpTime{i}$ as the hierarchy's ground level, and we define $\PiIWExp{i}{c} = \ComplementPrefixKerned\SigmaIWExp{i}{c}$, for $c \geq 0$.
The \iWEHText{i} $\iWExpHier{i}$ is defined as $\iWExpHier{i} = \bigcup_{c \geq 0} \SigmaIWExp{i}{c}$.
\end{definition}

As for the Polynomial and the Exponential Hierarchies, for all $i \geq 0$ and all $c \geq 0$, $\SigmaIWExp{i}{c}, \PiIWExp{i}{c} \subseteq \SigmaIWExp{i}{c+1},\PiIWExp{i}{c+1}$.

Generalizations of known techniques (see, e.g., \cite{Wrathall1976,ChandraKS81,BalcazarDG1995,Goldreich2008,Arora2009}) show that $\SigmaIWExp{i}{c}$ and $\PiIWExp{i}{c}$ can equally be defined via certificates and alternating machines.
A language $\Language{L}$ is in $\SigmaIWExp{i}{c}$ (resp., $\PiIWExp{i}{c}$) iff there exist a polynomial $p(n)$ and a \emph{deterministic polynomial-time} $(c{+}1)$\nbdash-ary predicate $R$ such that, for every string $w$,
\begin{equation*}
\begin{alignedat}{2}
    \SigmaIWExp{i}{c}&\colon &\quad&w \in \Language{L} \Leftrightarrow (\exists u_1 \in \alphabet^{\leq \iExp{i}{p(\StringLength{w})}}) (\forall u_2 \in \alphabet^{\leq \iExp{i}{p(\StringLength{w})}}) \cdots (Q_c \: u_c\in \alphabet^{\leq \iExp{i}{p(\StringLength{w})}}) \: R(w,u_1,\dots,u_c) = 1 \\
    \PiIWExp{i}{c}&\colon &\quad&w \in \Language{L} \Leftrightarrow (\forall u_1 \in \alphabet^{\leq \iExp{i}{p(\StringLength{w})}}) (\exists u_2 \in \alphabet^{\leq \iExp{i}{p(\StringLength{w})}}) \cdots (Q_c \: u_c\in \alphabet^{\leq \iExp{i}{p(\StringLength{w})}}) \: R(w,u_1,\dots,u_c) = 1,
\end{alignedat}
\end{equation*}
where $Q_c$ is $\forall$ or $\exists$ if $c$ is even or odd (resp., odd or even), respectively.
The classes $\SigmaIWExp{i}{c}$ and $\PiIWExp{i}{c}$ are also the classes of languages decided by $\Sigma_c$\nbdash-alternating and $\Pi_c$\nbdash-alternating machines, respectively, in $O(\iExpPolFunctions{i})$ time.

Since, for a space-constructible function $f(n)$, $\ATime{f(n)} = \DSpace{f(n)^2}$~\cite{ChandraKS81},
we have $\ATime{\iExpPolFunctions{i}} = \iExpSpace{i}$, because all functions in $\iExpPolFunctions{i}$ are space-constructible.
Therefore, by $\SigmaIWExp{i}{c} = \BoundedExATime{c}{\iExpPolFunctions{i}} \subseteq  \ATime{\iExpPolFunctions{i}}$, $\iWExpHier{i}$ is contained in $\iExpSpace{i}$, similarly to what happens for $\PolHier$ and $\WExpHier$, which are contained in $\PSpace$ and $\ExpSpace$, respectively.

The following definition generalizes the notion of intermediate levels of $\PolHier$ and $\WExpHier$ to $\iWExpHier{i}$. 

\begin{samepage}
\begin{definition}
For $c \geq 1$, the intermediate levels between the $c$\nbdash-th and the $(c{+}1)$\nbdash-th main levels of $\iWExpHier{i}$ are:
\begin{itemize}[nosep,label=--]
  \item $\DeltaIWExpBound{i}{c+1}{O(1)} = \BoundedOracle{\iExpTime{i}}{\SigmaP{c}}{O(1)}$, equalling the Boolean Hierarchy over $\SigmaIWExp{i}{c}$, denoted by $\BHGeneric{\SigmaIWExp{i}{c}}$;%
      \footnote{The equivalence follows from a result in~\cite{Beigel1991}, and it can also be obtained from our results in \zcref{sec_iEXP_jNEXP}. For more on the Boolean Hierarchies see \zcref{sec_extended_Boolean_Hierarchies}; remember that parts of this appendix rely on concepts introduced in \zcref{sec_Hausdorff_reductions_classes}.}
      and
  \item $\DeltaIWExpBound{i}{c+1}{\iExpPolFunctions{j}} = \BoundedOracle{\iExpTime{i}}{\SigmaP{c}}{\iExpPolFunctions{j}}$, for all $j$ such that $-1 \leq j \leq i$.
\end{itemize}
\end{definition}
\end{samepage}

By definition, $\iWExpHier{0} = \PolHier$ and $\iWExpHier{1} = \WExpHier$, with all their main and intermediate levels.
The inclusion relationships, which we can believe to be strict, between the mentioned classes, for all $i \geq 0$ and $c \geq 1$, are:
\begin{equation*}
  \cdots \subseteq \SigmaIWExp{i}{c-1},\PiIWExp{i}{c-1} \subseteq \DeltaIWExpBound{i}{c}{\LogFunctions} \subseteq \DeltaIWExpBound{i}{c}{\PolFunctions} \subseteq \DeltaIWExpBound{i}{c}{2^{\PolFunctions}} \subseteq \DeltaIWExpBound{i}{c}{\iExpPolFunctions{2}} \subseteq \cdots \subseteq \DeltaIWExpBound{i}{c}{\iExpPolFunctions{i}} \subseteq \SigmaIWExp{i}{c},\PiIWExp{i}{c} \subseteq \cdots.
\end{equation*}

We define the \defin{$c$\nbdash-th step} of a complexity hierarchy as the union of the $c$\nbdash-th main level and all the intermediate levels \emph{above} the $c$\nbdash-th main level and beneath the $(c{+}1)$\nbdash-th main level.
For example, $\SigmaP{2} \cup \LogOracle{\PTime}{\SigmaP{2}} \cup \Oracle{\PTime}{\SigmaP{2}}$ is the second step of $\PolHier$, whereas $\NExpTime \cup \LogOracle{\ExpTime}{\NPTime} \cup \PolOracle{\ExpTime}{\NPTime} \cup \Oracle{\ExpTime}{\NPTime}$ is the first step of $\WExpHier$.
In these examples notice that, although the (classically\nbdash-defined) level indices would suggest grouping together $\ThetaP{2}$, $\DeltaP{2}$, and $\SigmaP{2}$, we instead define the second step of the \PHText to consist of the classes $\SigmaP{2}$, $\ThetaP{3}$, and $\DeltaP{3}$.
We adopt this grouping because, as we will show below, $\ThetaP{3}$ and $\DeltaP{3}$ can be defined as Hausdorff complexity classes built upon $\SigmaP{2}$.
Needless to say, the union of a step's classes is \emph{not} larger than its highest intermediate class, as the classes within a step are linearly ordered by inclusion.
We introduce the notion of hierarchy steps as a concept through which we look at the hierarchies and organize their investigation, rather than as a structural definition.

Since $\iExpSpace{i} \subseteq \iExpTime{(i+1)}$ (see \zcref{sec_prelim_central_complexity_classes}), and the latter is the ground level of $\iWExpHier{(i+1)}$, the iterated exponential hierarchies are ``stacked'', i.e., for all $i \geq 0$,
\[
  \cdots \subseteq \iExpSpace{(i-1)} \subseteq \iExpTime{i} \subseteq \iWExpHier{i} \subseteq \iExpSpace{i} \subseteq \iExpTime{(i+1)} \subseteq \iWExpHier{(i+1)} \subseteq \iExpSpace{(i+1)} \subseteq \cdots.
\]

Moreover, the iterated exponential hierarchies have very similar structures, which replicate with minimal differences from one hierarchy to the one right above.
The feature changing between these hierarchies is the number of intermediate levels, which are $i+2$ in $\iWExpHier{i}$---in this count, the \BHText over a main level is not considered as an intermediate level.
We could hence think of defining an \defin{Iterated Exponentials Meta-Hierarchy}, which is a kind of ``hierarchy of hierarchies'', whose meta-levels and meta-steps are the classes $\iExpTime{i}$ and the hierarchies $\iWExpHier{i}$s, respectively.
This meta-hierarchy equals the class $\Elementary = \bigcup_{i \geq 1} \DTime{\iExp{i}{n}}$~\cite{Simon1975,Papadimitriou1994}.
In \zcref{fig_iterated_exponentials_meta-hierarchy}, some levels and hierarchies of the meta-hierarchy are depicted---in the figure are also reported the ``Hausdorff complexity classes'', introduced below.
One of the main goals of this paper will be showing that the intermediate levels of the iterated exponential hierarchies can equivalently be characterized via Hausdorff complexity classes, which we will define and extensively discuss in the next section.

\section{Hausdorff Reductions and Classes}
\label{sec_Hausdorff_reductions_classes}

In this section, we introduce the main notions of this paper:
we first define Hausdorff predicates and reductions, and then, through the latter, we define Hausdorff complexity classes.

\Citet{Wagner1987} was the first to introduce a notion of Hausdorff reduction, defined as a special case of \emph{polynomial} tt\nbdash-reductions.
For this reason, \citet{Buss1991} named \citeauthor{Wagner1987}'s Hausdorff reductions as ``normalized tt-reductions''.
This definition was however actually proposed \emph{only} for the polynomial case, and these authors did not consider a general notion going beyond the polynomial setting.
Since we aim at providing a Hausdorff-inspired characterization of the intermediate levels of the \emph{iterated exponential} hierarchies, we here need a different definition.
Interestingly enough, the definition that we here propose, although more general, is actually simpler than the one in the literature.
Both \citeauthor{Wagner1987}'s and our Hausdorff reduction notions
stem from a result due to \citeauthor{Hausdorff1962} regarding Boolean algebras over rings of sets, which we now recall.

A \defin{ring} $\SetOfSets{R}$ (of sets) is a collection of sets that is closed under set\nbdash-theoretic union and intersection;
the collection $\SetOfSets{R}$ may be infinite, and its elements may be sets of infinite cardinality \cite{Hausdorff1962}.
The \defin{Boolean closure of}, or \defin{Boolean algebra over}, $\SetOfSets{R}$ is the closure of $\SetOfSets{R}$ under set-theoretic difference~\cite{Hausdorff1962}.
This entity is named Boolean closure/algebra because it can equivalently be defined as the least superset of $\SetOfSets{R}$ closed under union, intersection, and complement in the \emph{universe} of~$\SetOfSets{R}$, i.e., the set of all elements from the sets of~$\SetOfSets{R}$.
A \defin{Hausdorff sequence} over $\SetOfSets{R}$ is a sequence $\set{S_1,\dots,S_m}$ of sets from $\SetOfSets{R}$ such that $S_1 \supseteq S_2 \supseteq\linebreak[0] \dots\linebreak[0] \supseteq\linebreak[0] S_m$.
The \defin{Hausdorff summation}, or \defin{combination}, of $\set{S_1,\dots,S_m}$ is defined as $(S_1 \setminus S_2) \cup (S_3 \setminus S_4) \cup \dots \cup (S_{m-1} \setminus S_m)$, where, if $m$ is an odd number, then this summation ends with `${} \cup S_m$', rather than with `${} \cup (S_{m-1} \setminus S_m)$'.
The terms Hausdorff sequence and Hausdorff summation are not standard and are introduced here for convenience.

Let $\SetOfSets{R}$ now be a ring including the empty set.
\citet{Hausdorff1962} proved that, for every set $A$ in the Boolean closure of such a ring $\SetOfSets{R}$, there exists a Hausdorff sequence $\set{S_1,\dots,S_{m}}$ over $\SetOfSets{R}$, where $m$ depends on $A$, whose Hausdorff summation equals $A$.
By this, an element $x$ from the universe of $\SetOfSets{R}$ belongs to $A$ iff $(x \in S_1 \land x \notin S_2) \lor \dots \lor (x \in S_{m-1} \land x \notin S_{m})$.
Furthermore, since $S_1 \supseteq S_2 \supseteq \dots \supseteq S_{m}$, we also have that $x \in A$ iff the number of true statements in the sequence `$x \in S_1$', `$x \in S_2$', \dots, `$x \in S_{m}$' is odd---in this sequence, if $x \in S_i$ then $x \in S_{i-1}$, for all $i > 1$, or, equivalently, if $x \notin S_i$ then $x \notin S_{i+1}$, for all $i \geq 1$.

The characterization individuated by \citeauthor{Hausdorff1962} can naturally be carried over to classes of languages, and hence to complexity classes.
Let $\ComplexityClass{C}$ be a language class closed under union and intersection, and including the empty language. 
Remember that a language is just a set of strings, therefore such a class $\ComplexityClass{C}$ is by assumption a ring including the empty set.
By \Citeauthor{Hausdorff1962}'s result, for every language $\Language{L}$ in the Boolean closure $\BHGeneric{\ComplexityClass{C}}$ of~$\ComplexityClass{C}$,%
\footnote{Classically, $\BHGeneric{\ComplexityClass{C}}$ denotes the Boolean closure of $\ComplexityClass{C}$, as the latter equals the \emph{B}\/oolean \emph{H}\/ierarchy over $\ComplexityClass{C}$ (see \zcref{sec_extended_Boolean_Hierarchies} for more).}
there exists a Hausdorff sequence $\set{\Language{D}_1,\dots,\Language{D}_m}$ of languages from $\ComplexityClass{C}$ (with $\Language{D}_1 \supseteq \Language{D}_2 \supseteq \dots \supseteq \Language{D}_m$)
whose Hausdorff summation equals $\Language{L}$.
By this, a string $w$ belongs to $\Language{L}$ iff $\SetSize{\set{z \mid 1 \leq z \leq m \land w \in \Language{D}_z}}$ is odd.
We say that $\set{\Language{D}_1,\dots,\Language{D}_m}$ is a \defin{Hausdorff sequence of length $m$ (over $\ComplexityClass{C}$) characterizing~$\Language{L}$}.

Our discussion now departs from the concept of Hausdorff reduction introduced by~\Citeauthor{Wagner1987} in~\cite{Wagner1987}.%
\footnote{\label{footnote_wagner_hausdorff_reductions_limited_to_polynomial_case}%
A Hausdorff reduction notion closer in spirit to ours was employed by \Citeauthor{Wagner1990} in~\cite{Wagner1990}.
However, the notion in~\cite{Wagner1990} was not formally defined and was limited to the polynomial case, making it difficult to generalize to the (iterated) exponential complexity classes.
Had a suitable generalization been identified, we believe it could have answered the questions about the Strong Exponential Hierarchy that had been left open by \citet{Hemachandra1987} a few years earlier.
Nevertheless, this possibility remained unnoticed, probably because the Hausdorff reduction notion in \cite{Wagner1990} was not precise or general enough to inspire further investigation.}
Our Hausdorff reductions generalize the property above, and transform the task of deciding whether a string $w$ belongs to a language $\Language{L}$ into a task akin to deciding whether $w$ belongs to the Hausdorff summation of a Hausdorff sequence whose length might depend on $\StringLength{w}$.
We start by introducing Hausdorff predicates.%
\footnote{One might consider defining generalized Hausdorff reductions via ``infinite Hausdorff sequences''. However, without careful thought, such a definition can become overly broad, even allowing non recursively enumerable languages to be Hausdorff characterized by regular languages, due to the infinitely\nbdash-long Hausdorff summations involved (see \zcref{sec_infinite_hausdorff_sequences} for more).
This would flatten the analysis by masking distinctions between Hausdorff classes. The definition via Hausdorff predicates here proposed reveals instead a rich structure of Hausdorff classes, and enables the Hausdorff\nbdash-characterization of the intermediate levels of the iterated exponential hierarchies.}

\begin{definition}[store=DefHausdPred]
\label{def_Hausdorff_predicate}
A \defin{Hausdorff predicate} is a binary relation $\Language{D} \subseteq \HausdPredDomain$ such that:
\begin{itemize}[nosep]
  \item[\textbf{(\HausdSeqRequirementI)}] $\Language{D}(w,z) \geq \Language{D}(w,z+1)$, for every string $w$ and every integer $z \geq 1$; and
  \item[\textbf{(\HausdSeqRequirementII)}] for every string $w$, there exists a non\nbdash-negative integer $z$ such that $\Language{D}(w, z+1) = 0$; the \defin{Hausdorff index} of $w$ \Wrt $\Language{D}$, denoted $\HausdIndex{w}{\Language{D}}$, is the maximum integer $z \geq 0$ with this property.
\end{itemize}
\end{definition}
Notice that, by requirement~(\HausdSeqRequirementI), the Hausdorff index $\HausdIndex{w}{\Language{D}}$ is also the maximum index $z \geq 0$ at which $\Language{D}(w,z) = 1$---for uniformity, for all Hausdorff predicates $\Language{D}$ and all strings $w$, we assume $\Language{D}(w,0) = 1$.

Via Hausdorff predicates, we will define Hausdorff reductions as the basis of Hausdorff complexity classes.
This requires us to specify two characteristics of Hausdorff predicates:
length/boundedness and complexity.

Let us start with the former.
From the notion of Hausdorff index stems that of bounded Hausdorff predicate. 
Intuitively, a Hausdorff predicate $\HausdSeq{D}$ is \defin{bounded} iff there is some function bounding every string's Hausdorff index \Wrt $\HausdSeq{D}$.
More precisely, like time and space functions,
a \defin{Hausdorff length function} $g\colon \NaturalsDomain \to \NaturalsDomain$ is a strictly positive nondecreasing
function.
A Hausdorff predicate $\HausdSeq{D}$ is \defin{$g(n)$\nbdash-long} iff $g(n)$ is a Hausdorff length function such that, for all strings~$w$, it holds that $\HausdIndex{w}{\HausdSeq{D}} \leq g(\StringLength{w})$.
Notice that the boundedness of a 
Hausdorff predicate $\Language{D}$ does \emph{not} imply the existence of a constant value $U$ such that $\HausdIndex{w}{\Language{D}} \leq U$ for all strings~$w$.

Let us now focus on Hausdorff predicates' complexity.
We start by introducing a tailored Turing machine model, with \emph{two} input tapes, deciding Hausdorff predicates.
This model easily lends itself to the definition of suitable time and space complexity notions for Hausdorff predicates.
Toward the end of this section, based on a few properties of Hausdorff predicates that we will highlight, we will see that for our purposes it is enough to rely on a more standard machine model with just one input tape, which will be adopted in the rest of the paper.

A Turing machine $\Machine{M}$ deciding a Hausdorff predicate is a machine with \emph{two} (read\nbdash-only) input tapes and possibly many (read/write) work tapes.
Such a machine $\Machine{M}$ receives its input, which is constituted by pairs $\pair{w,z} \in \HausdPredDomain$, on the two input tapes:
on the first tape the string $w$, and on the second tape the integer $z$ in its canonical binary representation.
The machine $\Machine{M}$ decides the Hausdorff predicate $\Language{D} \subseteq \HausdPredDomain$ iff, for every pair $\pair{w,z} \in \HausdPredDomain$, when $\Language{D}(w,z) = 1$ (resp., when $\Language{D}(w,z) = 0$), $\Machine{M}$ accepts (resp., rejects) in \emph{finite} time the pair $\pair{w,z}$.
The running time and space of $\Machine{M}$ is measured \Wrt to the size of $w$ alone, i.e., \Wrt to the size of the input on the first tape alone;
this means that, for the running space, the second input tape is regarded as a work tape.
The complexity bound is imposed \Wrt $\StringLength{w}$ alone to abide by the spirit of finite Hausdorff sequences (see above), and hence to avoid that hugely inflated indices~$z$ act as padding for the input pairs $\pair{w,z}$.
A Hausdorff predicate $\Language{D} \subseteq \HausdPredDomain$ can be decided in (non)deterministic $f(n)$ time (resp., space) iff pairs $\tup{w,z} \in \HausdPredDomain$ can be decided to belong to $\Language{D}$ or not in time (resp., space) $f(\StringLength{w})$ by a suitable (non)deterministic machine.
By this, we say that a Hausdorff predicate $\Language{D}$ belongs to a(n oracle) complexity class $\ComplexityClass{C}$, or that $\Language{D}$ is a $\ComplexityClass{C}$ Hausdorff predicate, if $\Language{D}$'s pairs $\tup{w,z}$ can be decided by a $\ComplexityClass{C}$ (oracle) machine whose running time or space is measured \Wrt \emph{the size of~$w$ alone}.
For example, a Hausdorff predicate $\Language{D}$ is in $\Oracle{\iNExpTime{i}}{\SigmaP{c-1}}$ if there is a nondeterministic oracle machine that, aided by a $\SigmaP{c-1}$ oracle, decides $\Language{D}(w,z)$ in time $\iExp{i}{p(\StringLength{w})}$, where $p(n) \in \PolFunctions$ is a polynomial.

Below we highlight two properties of Hausdorff predicates.
An intuitive first property is that if a Hausdorff predicate's length is bounded by some function $g(n)$, then it is also bounded by any greater function $h(n)$.
A less evident second property is that every Hausdorff predicate decidable in bounded time or space has bounded length.
We provide an intuition for the latter by considering bounded time (the rationale behind bounded space is similar).
If $\Language{D}$ is a Hausdorff predicate whose pairs $\pair{w,z}$ can be decided by a machine $\Machine{M}$ in $t(\StringLength{w})$ time,
then $t(\StringLength{w})$ also bounds the amount of bits of $z$ that $\Machine{M}$ can read from the second input tape within the allowed time.
By this, if $\Machine{M}$ returns the correct answer on \emph{every} pair $\pair{w,z}$, then the Hausdorff index of $w$ \Wrt $\Language{D}$ can be represented within $t(\StringLength{w})$ bits.
The two \zcref*[typeset=name,nocap]{theo_Hausd_pred_extendable,theo_Hausd_pred_bounded_complexity_imply_bounded_length} below formalize these properies.
Since the first property directly stems from the definition of Hausdorff predicates' length, its straightforward formal proof is omitted.

\begin{lemma}[store=HausdPredExtendable]
\label{theo_Hausd_pred_extendable}
Let $\Language{D} \subseteq \HausdPredDomain$ be a Hausdorff predicate, and let $g$ and $h$ be two Hausdorff length functions such that $g(n) \leq h(n)$, for all $n \geq 1$.
If $\Language{D}$ is $g(n)$\nbdash-long, then $\Language{D}$ is also $h(n)$\nbdash-long.
\end{lemma}

\begin{lemma}[store=HausdPrefBoundedComplexityImplyBoundedLength]
\label{theo_Hausd_pred_bounded_complexity_imply_bounded_length}
Let $f\colon \NaturalsDomain \to \NaturalsDomain$ be a time\nbdash-constructible time (resp., a space\nbdash-constructible space) function, and let $\Language{A}$ be a language.
If $\Language{D} \subseteq \HausdPredDomain$ is a Hausdorff predicate decidable in (non)deterministic $f(n)$ time (resp., space), possibly with the aid of an oracle for $\Language{A}$, then $\Language{D}$ is $(2^{f(n)}{-}1)$\nbdash-long.
\end{lemma}

\begin{proof}
Because $\Language{D}$ can be decided in $f(n)$ time (resp., space), with possibly the aid of an oracle for $\Language{A}$, there exists a(n oracle) Turing machine $\Machine{M}$ deciding, for every pair $\pair{w,z} \in \HausdPredDomain$, whether $\pair{w,z} \in \Language{D}$ or not in time (resp., space) $f(\StringLength{w})$.
Assume by contradiction that $\Language{D}$ is \emph{not} $(2^{f(n)}{-}1)$\nbdash-long.

Since $\Language{D}$ is not $(2^{f(n)}{-}1)$\nbdash-long, there exists a string $s$ whose Hausdorff index $\HausdIndex{s}{\Language{D}}$ \Wrt $\Language{D}$ is strictly greater than $2^{f(\StringLength{s})}-1$.
This implies that $\HausdIndex{s}{\Language{D}} > 0$, as $f$ is strictly positive.
Consider the value $\tilde z = 2 \cdot \HausdIndex{s}{\Language{D}}$.
By definition, $\tilde z > \HausdIndex{s}{\Language{D}}$ as $\HausdIndex{s}{\Language{D}} > 0$.
The canonical binary representation of $\tilde z$, compared to that of $\HausdIndex{s}{\Language{D}}$, has an extra digit `$0$' as the least significant bit;
the rest of their canonical binary representations are the same.
By this, $\StringLength{\tilde z} > \StringLength{\HausdIndex{s}{\Language{D}}}$.
Since $2^{f(\StringLength{s})}-1$ is the greatest value representable in binary over $f(\StringLength{s})$ bits, and we are assuming $\HausdIndex{s}{\Language{D}} > 2^{f(\StringLength{s})}-1$, we have $\StringLength{\tilde z} > \StringLength{\HausdIndex{s}{\Language{D}}} > f(\StringLength{s})$.
By definition of $\HausdIndex{s}{\Language{D}}$ and $\tilde z$, it holds that ${\Language{D}}(s,\HausdIndex{s}{\Language{D}}) = 1$ and ${\Language{D}}(s,\tilde z) = 0$, because $\tilde z > \HausdIndex{s}{\Language{D}}$.
Hence, the machine $\Machine{M}$ must answer \yeslbl on $\pair{s,\HausdIndex{s}{\Language{D}}}$ \emph{and} \nolbl on $\pair{s,\tilde z}$.
However, since $\Machine{M}$ performs at most $f(\StringLength{s})$ computation steps (resp., $\Machine{M}$ may scan at most $f(\StringLength{s})$ tape cells besides those on the first input tape),
$\Machine{M}$ does not have enough time (resp., enough allowed tape cells to scan) to read from the second input tape the least significant bit of $\HausdIndex{s}{\Language{D}}$ and the two least significant bits of $\tilde z$---remember that the least significant bits of $\tilde z$ and $\HausdIndex{s}{\Language{D}}$ are written on the right\nbdash-hand side of the second input tape, and the machine $\Machine{M}$ starts its computation by having the head of the second input tape over the left\nbdash-most bit.
Therefore, from a time\nbdash-bounded (resp., space\nbdash-bounded) computation perspective, the input pairs $\pair{s,\HausdIndex{s}{\Language{D}}}$ and $\pair{s,\tilde z}$ are completely indistinguishable to $\Machine{M}$.
Nevertheless, $\Machine{M}$ must provide two different answers on these two inputs:
a contradiction.
Thus, the Hausdorff predicate $\Language{D}$ must be $(2^{f(n)}{-}1)$\nbdash-long.
\end{proof}

Via Hausdorff predicates we define the notion of Hausdorff reduction/characterization.
Our definition of Hausdorff reduction combines, generalizes, and simplifies, those in \cite{Wagner1987,Wagner1990}.%
\footnote{A different (polynomial) Hausdorff reduction notion whose translation to Hausdorff predicates would not require for them to satisfy (\HausdSeqRequirementI) was proposed by \citet{ArvindKM1993}. Also their definition was anyway based on polynomial tt-reductions.}

\begin{definition}[store=DefHausdRed]
\label{def_Hausdorff_reduction}
A language $\Language{L} \subseteq \StringUniverse$ \defin{Hausdorff reduces} to, or is \defin{Hausdorff characterized} by, a Hausdorff predicate $\Language{D} \subseteq \HausdPredDomain$ if and only if
$w \in \Language{L} \Leftrightarrow \HausdIndex{w}{\Language{D}}$ is odd, for every string $w$.
If $\Language{D}$ is $g(n)$\nbdash-long, 
we equivalently say that
$\Language{L}$ has a $g(n)$\nbdash-long Hausdorff characterization, or that
$\Language{L}$ is $g(n)$\nbdash-Hausdorff reducible to $\Language{D}$, or that
there is a Hausdorff reduction of length $g(n)$ from $\Language{L}$ to $\Language{D}$.  
\end{definition}

Notice that our notion of Hausdorff reduction does not involve any transformation of the string $w$ when it is ``passed'' from $\Language{L}$ to $\Language{D}$, on the contrary of what happens 
in~\cite{Wagner1987,JennerKL1989,Buss1991,ArvindKM1993}.
In the present paper, a Hausdorff reduction is a very simple mapping from $\Language{L}$ to $\Language{D}$.

Using the concept of Hausdorff reduction, we can now introduce the notion of Hausdorff complexity class.

\begin{definition}[store=DefHausdClass]
Let $g\colon \NaturalsDomain \to \NaturalsDomain$ be a 
Hausdorff length function, and let $\ComplexityClass{C}$ be a(n oracle) complexity class.
The $\ComplexityClass{C}$ \defin{Hausdorff (complexity) class of length $g(n)$}, denoted by $\BoundedHausdCLASS{g(n)}{\ComplexityClass{C}}$, is the class of all languages that are Hausdorff reducible to some $\ComplexityClass{C}$ Hausdorff predicate of length~$g(n)$.
\end{definition}

Notice that Hausdorff complexity classes are \emph{not} defined as classes of Hausdorff predicates.
To make this clearer for the reader, let us consider the following example involving $\NPTime$. By \zcref{theo_Hausd_pred_bounded_complexity_imply_bounded_length}, there are $\NPTime$ Hausdorff predicates of polynomial and exponential length.
The class of $\NPTime$ Hausdorff predicates of polynomial (resp., exponential) length is a set of Hausdorff predicates, i.e., a set of peculiar languages with a specific structure (see \zcref{def_Hausdorff_predicate}) within the class $\NPTime$.
Since $\NPTime$ Hausdorff predicates of polynomial length are also $\NPTime$ Hausdorff predicates of exponential length (see \zcref{theo_Hausd_pred_extendable}), we have that the set of $\NPTime$ Hausdorff predicates of polynomial length is contained in the set of $\NPTime$ Hausdorff predicates of exponential length, which in turn is contained in $\NPTime$.
On the other hand, the $\NPTime$ Hausdorff complexity classes of polynomial and exponential length contain $\NPTime$, as these classes will be shown to equal $\ThetaP{2}$ and $\DeltaP{2}$, respectively.

If $\Language{L}$ is a language characterized by some ($g(n)$\nbdash-long) Hausdorff predicate of complexity $\ComplexityClass{C}$, we will also say that $\Language{L}$ is a \defin{$\ComplexityClass{C}$ Hausdorff language (of length $g(n)$)}---notice the difference between a Hausdorff predicate of length $g(n)$ and a Hausdorff language of length $g(n)$.
By this, $\BoundedHausdCLASS{g(n)}{\ComplexityClass{C}}$ is also the class of all $\ComplexityClass{C}$ Hausdorff languages of length~$g(n)$.
Observe that, if $\Language{L}$ is a language $g(n)$\nbdash-Hausdorff reducible to a Hausdorff predicate $\Language{D}$, for every string $w$, it holds that $w \in \Language{L} \Leftrightarrow \SetSize{\set{z \mid 1 \leq z \leq g(\StringLength{w}) \land \Language{D}(w,z) = 1}}$ is odd.
For this reason, we can just focus on the $g(\StringLength{w})$\nbdash-long prefix of the sequence $\Language{D}(w,1), \dots, \Language{D}(w,z), \dots$, to decide whether $w \in \Language{L}$.

If $\BoundedHausdCLASS{g(n)}{\ComplexityClass{C}}$ is a Hausdorff class, $\ComplementPrefix\BoundedHausdCLASS{g(n)}{\ComplexityClass{C}}$ is the class of languages whose complements are in $\BoundedHausdCLASS{g(n)}{\ComplexityClass{C}}$.
By this, a language $\Language{L}$ belongs to $\ComplementPrefix\BoundedHausdCLASS{g(n)}{\ComplexityClass{C}}$ iff there is a $\ComplexityClass{C}$ Hausdorff predicate $\Language{D}$ of length $g(n)$ such that, for every string $w$, it holds that $w \in \Language{L}$ iff $\SetSize{\set{z \mid z \geq 1 \land \Language{D}(w,z) = 1}} \text{ is \emph{even}}$, or, equivalently, iff $\HausdIndex{w}{\Language{D}}$ is \emph{even}.

For a family of functions $G$, we define the Hausdorff complexity class $\BoundedHausdCLASS{G}{\ComplexityClass{C}} = \bigcup_{g(n) \in G} \set{\BoundedHausdCLASS{g(n)}{\ComplexityClass{C}}}$.
For example, $\BoundedHausdCLASS{2^{\PolFunctions}}{\Oracle{\NExpTime}{\SigmaP{c-1}}}$ is the class of all $\Oracle{\NExpTime}{\SigmaP{c-1}}$ Hausdorff languages of length $g(n)$, for some $g(n) \in 2^{\PolFunctions}$, or, more roughly, the class of all $\Oracle{\NExpTime}{\SigmaP{c-1}}$ Hausdorff languages of exponential length.
By $\BoundedHausdCLASS{\GenericLength}{\ComplexityClass{C}}$ we denote the class of $\ComplexityClass{C}$ Hausdorff languages of \emph{any} length---notice that $\BoundedHausdCLASS{\GenericLength}{\ComplexityClass{C}}$ contains all $\ComplexityClass{C}$ Hausdorff predicates of \emph{bounded} length, and \emph{not} $\ComplexityClass{C}$ Hausdorff predicates that are \emph{un}\/bounded.
The class $\ComplementPrefix\BoundedHausdCLASS{G}{\ComplexityClass{C}}$ is defined in the natural way.

We now show that considering $\ComplexityClass{C}$ Hausdorff classes where $\ComplexityClass{C}$ is deterministic does not individuate any new interesting class of languages, and hence we can focus on $\ComplexityClass{C}$ Hausdorff classes with $\ComplexityClass{C}$ nondeterministic.
The intuition behind this is that the Hausdorff summation of languages from a deterministic class like $\iExpTime{i}$ is still a language in $\iExpTime{i}$, because $\iExpTime{i}$ is closed under union, intersection, and complement.

\begin{lemma}\label{theo_hausdorff_useless_for_deterministic_classes}
Let $f\colon \NaturalsDomain \to \NaturalsDomain$ be a time\nbdash-constructible time function (resp., space\nbdash-constructible space function), such that $f(n)$ is at least linear, and let $\Language{A}$ be a decidable language.
Then,
\begin{align*}
\BoundedHausdCLASS{\GenericLength}{\Big(\DTimeOracle{f(n)}{\Language{A}}\Big)} & \subseteq \DTimeOracle{O(f(n)^2)}{\Language{A}} \\
\BoundedHausdCLASS{\GenericLength}{\Big(\DSpaceOracle{f(n)}{\Language{A}}\Big)} & \subseteq \DSpaceOracle{O(f(n))}{\Language{A}}.
\end{align*}
\end{lemma}

\begin{proof}
Let $\Language{L} \in \BoundedHausdCLASS{\GenericLength}{\Big(\DTimeOracle{f(n)}{\Language{A}}\Big)}$ be a language.
We prove $\Language{L} \in \DTimeOracle{O(f(n)^2)}{\Language{A}}$.

Since $\Language{L} \in \BoundedHausdCLASS{\GenericLength}{\Big(\DTimeOracle{f(n)}{\Language{A}}\Big)}$, $\Language{L}$ Hausdorff reduces to a $\DTimeOracle{f(n)}{\Language{A}}$ Hausdorff predicate $\Language{D}$ which, by \zcref{theo_Hausd_pred_bounded_complexity_imply_bounded_length}, is $(2^{f(n)}{-}1)$\nbdash-long.
Hence, for every string $w$, it holds $w \in \Language{L} \Leftrightarrow \HausdIndex{w}{\Language{D}}$ is odd, and $\HausdIndex{w}{\Language{D}} \leq 2^{f(\StringLength{w})} - 1$.
By this, we can decide whether $w \in \Language{L}$ by computing the Hausdorff index $\HausdIndex{w}{\Language{D}}$ of $w$ \Wrt $\Language{D}$ and answer \yeslbl iff $\HausdIndex{w}{\Language{D}}$ is odd.
Because we have $\HausdIndex{w}{\Language{D}} \leq 2^{f(\StringLength{w})} - 1$, to compute $\HausdIndex{w}{\Language{D}}$ it suffices to carry out a binary search within the domain $[0, 2^{f(\StringLength{w})}-1]$, and at each step test whether $\Language{D}(w,z) = 1$ or not.
Observe that a binary search in that domain requires at most ${f(\StringLength{w})}$ tests, and each test can be carried out in \emph{deterministic} $f(\StringLength{w})$ time (\Wrt the size of $w$ only), with possibly the aid of an oracle for $\Language{A}$.
Since testing $\Language{D}(w,z) = 1$ is feasible in deterministic time, the very same machine progressing in the binary search can carry out the test as well, without the need of calling an additional oracle, besides the possible calls to $\Language{A}$.
Once the Hausdorff index of $w$ \Wrt $\Language{D}$ is computed, we can answer \yeslbl according to its parity.
Notice that the entire procedure is feasible in $O(f(n)^2)$ deterministic time, with possibly the aid of an oracle for $\Language{A}$.

Proving $\BoundedHausdCLASS{\GenericLength}{\Big(\DSpaceOracle{f(n)}{\Language{A}}\Big)}  \subseteq \DSpaceOracle{O(f(n))}{\Language{A}}$ is similar.
Just consider the procedure outlined above, and notice that the space required amounts to that needed to store the indices $z$ for the binary search, which are $f(n)$ bits, plus the space needed to carry out the tests $\Language{D}(w,z) = 1$, which by assumption is $f(n)$.
\end{proof}

From the previous \zcref*[typeset=name,nocap]{theo_hausdorff_useless_for_deterministic_classes}, the next corollary easily follows (the straightforward formal proof is omitted).

\begin{corollary}
Let $i \geq 0$ be an integer and let $\ComplexityClass{D}$ be a complexity class.
Then,
\begin{align*}
  \BoundedHausdCLASS{\GenericLength}{\Oracle{\iExpTime{i}}{\ComplexityClass{D}}} & = \Oracle{\iExpTime{i}}{\ComplexityClass{D}} \\
  \BoundedHausdCLASS{\GenericLength}{\Oracle{\iExpSpace{i}}{\ComplexityClass{D}}} & = \Oracle{\iExpSpace{i}}{\ComplexityClass{D}}.
\end{align*}
\end{corollary}

The above results state that Hausdorff classes over deterministic time and space classes do not add much to the plain classes.
Therefore, it makes sense to focus our analysis on Hausdorff classes over nondeterministic classes.
However, by Savitch's theorem, for all integers $i \geq 0$, it holds that $\iExpSpace{i} = \iNExpSpace{i}$.
Hence, since in this paper we will not deal with Hausdorff classes defined over nondeterministic logspace, we can also avoid to deal with Hausdorff classes defined over nondeterministic space classes.
By all these considerations, in this paper we will look at Hausdorff classes defined over nondeterministic time classes.
More specifically, we will consider the Hausdorff complexity classes $\BoundedHausdCLASS{\iExpPolFunctions{j}}{\Oracle{\iNExpTime{i}}{\SigmaP{c-1}}}$,
with $i,j \geq 0$ and $c \geq 1$.

To conclude this section, we observe that from the above two properties of Hausdorff predicates, stated in \zcref{theo_Hausd_pred_extendable,theo_Hausd_pred_bounded_complexity_imply_bounded_length} directly follow two corresponding properties of Hausdorff complexity classes.
The first, descending from \zcref{theo_Hausd_pred_extendable}, is pretty intuitive and it suggests that ``longer'' Hausdorff classes are at least as expressive as shorter ones;
this is needed for some of the proofs in the rest of our discussion.
The rather interesting second property, stemming from \zcref{theo_Hausd_pred_bounded_complexity_imply_bounded_length}, highlights that there is a threshold above which having longer Hausdorff classes does not add anything.
The two \zcref*[typeset=name,nocap]{theo_longer_Hausdorff_as_expressive,theo_longer_Hausdorff_bound} below formalize the two properties.

The first property can easily be proven by noticing that, if $\Language{L}$ is a language in $\BoundedHausdCLASS{g(n)}{\Big(\NTimeOracle{t(n)}{\Language{A}}\Big)}$, there must be a $\NTimeOracle{t(n)}{\Language{A}}$ Hausdorff predicate $\Language{D}$ of length $g(n)$ which $\Language{L}$ Hausdorff reduces to.
Since, by \zcref{theo_Hausd_pred_extendable}, $\Language{D}$ has also length $h(n)$, if $g(n) \leq h(n)$, the language $\Language{L}$ also belongs to $\BoundedHausdCLASS{h(n)}{\Big(\NTimeOracle{t(n)}{\Language{A}}\Big)}$.

\begin{lemma}[store=longerHausdorffAsExpressive]
\label{theo_longer_Hausdorff_as_expressive}
Let $t\colon \NaturalsDomain \to \NaturalsDomain$ be a time\nbdash-constructible time function, let $\Language{A}$ be a decidable language, and let $g$ and $h$ be two Hausdorff length functions such that $g(n) \leq h(n)$, for all $n \geq 1$.
Then, \[\BoundedHausdCLASS{g(n)}{\Big(\NTimeOracle{t(n)}{\Language{A}}\Big)} \subseteq \BoundedHausdCLASS{h(n)}{\Big(\NTimeOracle{t(n)}{\Language{A}}\Big)}.\]
\end{lemma}

From the above result, the next corollary easily follows (the straightforward formal proof is omitted).

\begin{corollary}
\label{theo_restricted_main_levels_longer_Hausdorff_as_expressive}
Let $i \geq 0$ be an integer, let $\ComplexityClass{D}$ be a complexity class, and let $g$ and $h$ be two Hausdorff length functions such that $g(n) \leq h(n)$ for all $n \geq 1$.
Then,
\[
\BoundedHausdCLASS{g(n)}{\Oracle{\iNExpTime{i}}{\ComplexityClass{D}}} \subseteq \BoundedHausdCLASS{h(n)}{\Oracle{\iNExpTime{i}}{\ComplexityClass{D}}}.
\]
\end{corollary}

Also the second property is easily proven.
If $\Language{L}$ is a language in $\BoundedHausdCLASS{\GenericLength}{\Big(\NTimeOracle{t(n)}{\Language{A}}\Big)}$, then $\Language{L}$ Hausdorff reduces to a $\NTimeOracle{t(n)}{\Language{A}}$ Hausdorff predicate $\Language{D}$.
Since, by \zcref{theo_Hausd_pred_bounded_complexity_imply_bounded_length}, $\Language{D}$ has length $2^{t(n)}{-}1$, $\Language{L}$ belongs to $\BoundedHausdCLASS{2^{t(n)}}{\Big(\NTimeOracle{t(n)}{\Language{A}}\Big)}$, too;
clearly, it also holds $\BoundedHausdCLASS{2^{t(n)}}{\Big(\NTimeOracle{t(n)}{\Language{A}}\Big)} \subseteq \BoundedHausdCLASS{\GenericLength}{\Big(\NTimeOracle{t(n)}{\Language{A}}\Big)}$.

\begin{lemma}[store=longerHausdorffBound]
\label{theo_longer_Hausdorff_bound}
Let $t\colon \NaturalsDomain \to \NaturalsDomain$ be a time\nbdash-constructible time function and let $\Language{A}$ be a decidable language.
Then, \[\BoundedHausdCLASS{\GenericLength}{\Big(\NTimeOracle{t(n)}{\Language{A}}\Big)} = \BoundedHausdCLASS{2^{t(n)}}{\Big(\NTimeOracle{t(n)}{\Language{A}}\Big)}.\]
\end{lemma}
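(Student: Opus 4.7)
The plan is to exploit the fact that a nondeterministic oracle machine $\Machine{M}$ computing $\Language{D}$ in $\NTimeOracle{f(n)}{\Language{A}}$ is time-bounded with respect to $|w|$ only, so on input $\pair{w,z}$ it can scan at most $c \cdot f(|w|)$ cells of its input tape. From this I would first prove a structural bound: for every $w$, the maximum Hausdorff index $\hat z_w = \maxmod (\set{z \geq 1 \mid \Language{D}(w,z) = 1})$ is either $\infty$ (meaning $\Language{D}(w,z) = 1$ for all $z$) or at most $2^{O(f(|w|))}$. Let $R = c \cdot f(|w|) - |w|$. For any $z$ with $|z| > R$, $\Machine{M}$ cannot reach the end-of-$z$ marker, so its accepting behavior, across all nondeterministic branches and oracle interactions, depends only on the first $R$ bits of $z$; hence two such long $z_1, z_2$ sharing the same $R$-bit prefix satisfy $\Language{D}(w,z_1) = \Language{D}(w,z_2)$. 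Writing $c_p \in \set{0,1}$ for this common value on the class of long $z$'s with prefix $p$, the Hausdorff monotonicity forces all $c_p$ to agree: for prefixes $p_1 < p_2$, picking both classes at length $R+1$ gives $z_1 < z_2$ and so $c_{p_1} \geq c_{p_2}$, while picking the $p_1$-class at a much larger length and the $p_2$-class at length $R+1$ reverses the numeric comparison and yields $c_{p_1} \leq c_{p_2}$. Thus a single constant $c_w$ governs all long $z$'s: if $c_w = 1$ then $\Language{D}(w,z) = 1$ for arbitrarily large $z$ and, by monotonicity, for all $z$, so $\hat z_w = \infty$; if $c_w = 0$ then $\hat z_w < 2^{R+1} \leq 2^{c \cdot f(|w|)+1}$.

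Given this structural bound, I would define a new Hausdorff predicate that absorbs the old length $g$ into the predicate itself,
\[
\Language{D}'(w, z) \;=\; \Language{D}(w, z) \,\wedge\, (z \leq g(|w|)).
\]
Both conjuncts are nonincreasing in $z$, so $\Language{D}'$ is a Hausdorff predicate with threshold $\hat z_w' = \min(\hat z_w, g(|w|))$. Since the length function of a Hausdorff reduction in this class is understood to be computable within the machine's time budget (as in \cref{theo_longer_Hausdorff_not_less_expressive}), $g(|w|)$ can be produced in $O(f(|w|))$ time and has length $O(f(|w|))$; hence the test $z \leq g(|w|)$ is implementable in $O(f(|w|))$ time by first checking via the tape's end marker that $z$ has at most $|g(|w|)|$ bits and then, if the lengths coincide, performing a bitwise comparison. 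Thus $\Language{D}' \in \NTimeOracle{f(n)}{\Language{A}}$, and combining the structural bound on $\hat z_w$ with $g(|w|) \leq 2^{O(f(|w|))}$ yields $\hat z_w' \leq 2^{c'' \cdot f(|w|)}$ for some constant $c''$.

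Taking the new length to be $G(n) = 2^{c'' \cdot f(n)}$, we have $G(|w|) \geq \hat z_w'$ for every $w$, so the Hausdorff count associated with $(\Language{D}', G)$ equals
\[
\SetSize{\set{z \mid 1 \leq z \leq G(|w|) \land \Language{D}'(w,z) = 1}} \;=\; \min(\hat z_w', G(|w|)) \;=\; \hat z_w' \;=\; \min(\hat z_w, g(|w|)),
\]
which is exactly the original Hausdorff count, so its parity agrees. This places $\Language{L}$ in $\BoundedHausdRedCLASS{2^{c'' \cdot f(n)}}{\Big(\NTimeOracle{f(n)}{\Language{A}}\Big)}$ and hence in $\BoundedHausdRedCLASS{2^{O(f(n))}}{\Big(\NTimeOracle{f(n)}{\Language{A}}\Big)}$, giving the claimed inclusion.

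The main obstacle is the structural bound on $\hat z_w$, specifically the symmetry step that forces all prefix classes $c_p$ to collapse to a common constant. This requires carefully exploiting the fact that, once $z$-values of different bit-lengths are admitted, the numeric order on $\NaturalsDomain$ no longer aligns with the lexicographic order on $R$-bit prefixes, so the Hausdorff monotonicity must be applied in both directions to pin the classes together. Once the bound is in hand, the remainder of the argument is routine bookkeeping.
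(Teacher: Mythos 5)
Your first part (the prefix-collapse argument showing that $\hat z_w$ is either $\infty$ or at most $2^{O(f(\StringLength{w}))}$) is correct, and it is in fact a more careful rendering of the one observation the paper's proof rests on. The gap is in the second part: you assume that the original length function $g$ is computable within the $O(f(n))$ time budget, and consequently that $g(\StringLength{w}) \leq 2^{O(f(\StringLength{w}))}$. Neither is granted here. The class $\BoundedHausdRedCLASS{\GenericLength}{\bigl(\NTimeOracle{f(n)}{\Language{A}}\bigr)}$ ranges over \emph{all} length functions admitted by \cref{def_Hausdorff_reduction}, i.e., any strictly positive nondecreasing \emph{computable} $g$, with no resource bound and no growth bound; the $O(f(n))$-time computability of the length is an explicit hypothesis of \cref{theo_longer_Hausdorff_not_less_expressive}, not a standing convention. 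Without it, your predicate $\Language{D}'(w,z) = \Language{D}(w,z) \land (z \leq g(\StringLength{w}))$ need not be in $\NTimeOracle{f(n)}{\Language{A}}$: the value $g(\StringLength{w})$ may be astronomically large, or small yet require far more than $O(f(\StringLength{w}))$ time to produce, so the test $z \leq g(\StringLength{w})$ is not implementable within the bound. Likewise the inequality $\hat z'_w \leq G(\StringLength{w})$ that your final parity computation needs fails when $\hat z_w = \infty$ and $g(\StringLength{w}) > 2^{c''\cdot f(\StringLength{w})}$. Note also that if $g$ \emph{were} $O(f(n))$-time computable, then $g(n) \leq 2^{O(f(n))}$ would hold automatically and the statement would follow essentially immediately from \cref{theo_longer_Hausdorff_not_less_expressive}; so under your reading the structural analysis is superfluous, while the actual statement (arbitrary computable lengths) is left unproved.

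The paper's proof avoids this trap by never moving the cutoff into the predicate: it keeps $\Language{D}$ itself and argues, via exactly your structural observation, that the indices relevant to the parity can be restricted to those representable over $O(f(\StringLength{w}))$ bits, so that a reduction from $\Language{L}$ to the \emph{same} $\Language{D}$ of length in $2^{O(f(n))}$ exists. All dependence on $g$ is thereby pushed into the new length function, which by \cref{def_Hausdorff_reduction} only has to be computable, with no time bound: it may, for instance, agree with $g(n)$ while $g(n)$ stays below the $2^{O(f(n))}$ cap and otherwise take a capped value with the parity of $g(n)$, which is what keeps the count's parity unchanged in the case where $\Language{D}(w,\cdot)$ is eventually $1$. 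Your Part~1 slots directly into such an argument; what is missing in your write-up is this bookkeeping on the length side, rather than the inadmissible test $z \leq g(\StringLength{w})$ on the predicate side.
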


Thus, by the last lemma, Hausdorff reductions asymptotically longer than $2^{t(n)}$ to Hausdorff predicates in $\NTimeOracle{t(n)}{\Language{A}}$ are not more powerful than $2^{t(n)}$\nbdash-long Hausdorff reductions to the same Hausdorff predicates.

From the above result, the next corollary easily follows (the straightforward formal proof is omitted).

\begin{corollary}
\label{theo_restricted_main_levels_longer_Hausdorff_bound}
Let $i \geq 0$ be an integer and let $\ComplexityClass{D}$ be a complexity class.
Then,
\[
\BoundedHausdCLASS{\GenericLength}{\Oracle{\iNExpTime{i}}{\ComplexityClass{D}}} = \BoundedHausdCLASS{\iExpPolFunctions{i+1}}{\Oracle{\iNExpTime{i}}{\ComplexityClass{D}}}.
\]
\end{corollary}

Since we will only deal with Hausdorff predicates decidable in nondeterministic polynomial time, and up, the above two-input-tapes machine can conveniently be replaced by a more standard one-input-tape machine.
More specifically, in what follows we will have that a machine $\Machine{M}$ decides a Hausdorff predicate $\Language{D} \subseteq \HausdPredDomain$ iff, for every pair $\pair{w,z} \in \HausdPredDomain$, when $\Language{D}(w,z) = 1$ (resp., when $\Language{D}(w,z) = 0$), $\Machine{M}$ accepts (resp., rejects) in \emph{finite} time the input string ``$w \hashsep z$'', where `$\hashsep$' is a tape symbol neither appearing in $w$ nor in $z$, and the integer $z$ is written on tape in canonical binary form.
In accordance with the definition above, the running time of such a machine is measured \Wrt the size of the $w$ part alone of the input string $w \hashsep z$.
It is not hard to see that this different machine model does not impact the study of the asymptotic time complexity of Hausdorff predicates.

\section{Charting the Iterated Exponential Hierarchies via Hausdorff Classes}
\label{sec_charting_top}

One of this paper's aims is showing that the intermediate levels of the iterated exponential hierarchies can equivalently be characterized via classes of Hausdorff languages of increasing lengths.
This provides us a unifying and elegant perspective on the levels of these hierarchies.
More specifically, we will show that:
\begin{alignat*}{3}
  \SigmaIWExp{i}{c} & \qquad {} = {} \qquad & & \BoundedHausdCLASS{1}{\Oracle{\iNExpTime{i}}{\SigmaP{c-1}}} \\
  \BHGeneric{\SigmaIWExp{i}{c}} & \qquad {} = {} \qquad & & \BoundedHausdCLASS{O(1)}{\Oracle{\iNExpTime{i}}{\SigmaP{c-1}}} \\
  \BoundedOracle{\iExpTime{i}}{\SigmaP{c}}{\LogFunctions} & \qquad = \qquad & & \BoundedHausdCLASS{\PolFunctions}{\Oracle{\iNExpTime{i}}{\SigmaP{c-1}}} \\
  \BoundedOracle{\iExpTime{i}}{\SigmaP{c}}{\PolFunctions} & \qquad = \qquad & & \BoundedHausdCLASS{2^\PolFunctions}{\Oracle{\iNExpTime{i}}{\SigmaP{c-1}}} \\
  & \qquad \symbolwithin{\vdots}{=} \qquad & & \\
  \BoundedOracle{\iExpTime{i}}{\SigmaP{c}}{\iExpPolFunctions{i-1}} & \qquad = \qquad & & \BoundedHausdCLASS{\iExpPolFunctions{i}}{\Oracle{\iNExpTime{i}}{\SigmaP{c-1}}} \\
  \Oracle{\iExpTime{i}}{\SigmaP{c}} = \BoundedOracle{\iExpTime{i}}{\SigmaP{c}}{\iExpPolFunctions{i}} & \qquad = \qquad & & \BoundedHausdCLASS{\iExpPolFunctions{i+1}}{\Oracle{\iNExpTime{i}}{\SigmaP{c-1}}}.
\end{alignat*}

Besides providing a novel characterization of the intermediate levels of the iterated exponential hierarchies, Hausdorff classes are also a powerful tool to chart these hierarchies.
Indeed, Hausdorff classes offer a simple way to individuate equivalent oracle classes spanning the \ItExpHMetaText.
More specifically, via the results in this section, for $i,j \geq 0$ and a class $\ComplexityClass{C} = \Oracle{\iNExpTime{i}}{\iNExpTime{j}}$, we will be able for example to answer questions like:
What is $\ComplexityClass{C}$'s relationship with $\Oracle{\iNExpTime{k}}{\iNExpTime{\ell}}$, when $k,\ell \geq 0$?
Where is $\ComplexityClass{C}$ located within the meta-hierarchy?
Is $\ComplexityClass{C}$ a main or an intermediate level of one of the iterated exponential hierarchies?
Which level and which hierarchy, more precisely?
We can hence build a precise map of the \ItExpHMetaText, where the hierarchy levels, both main and intermediate, are linked to the respective Hausdorff classes.
In \zcref{fig_iterated_exponentials_meta-hierarchy}, we depict some hierarchies and classes of the meta-hierarchy, together with the equivalent Hausdorff classes;
in the figure, we can see how the \iWEHsText{i} are stacked one over the other, and how the number of their intermediate levels increase from one hierarchy to the one above.

\begin{figure}
  \centering
  \includegraphics[width=.98\textwidth]{./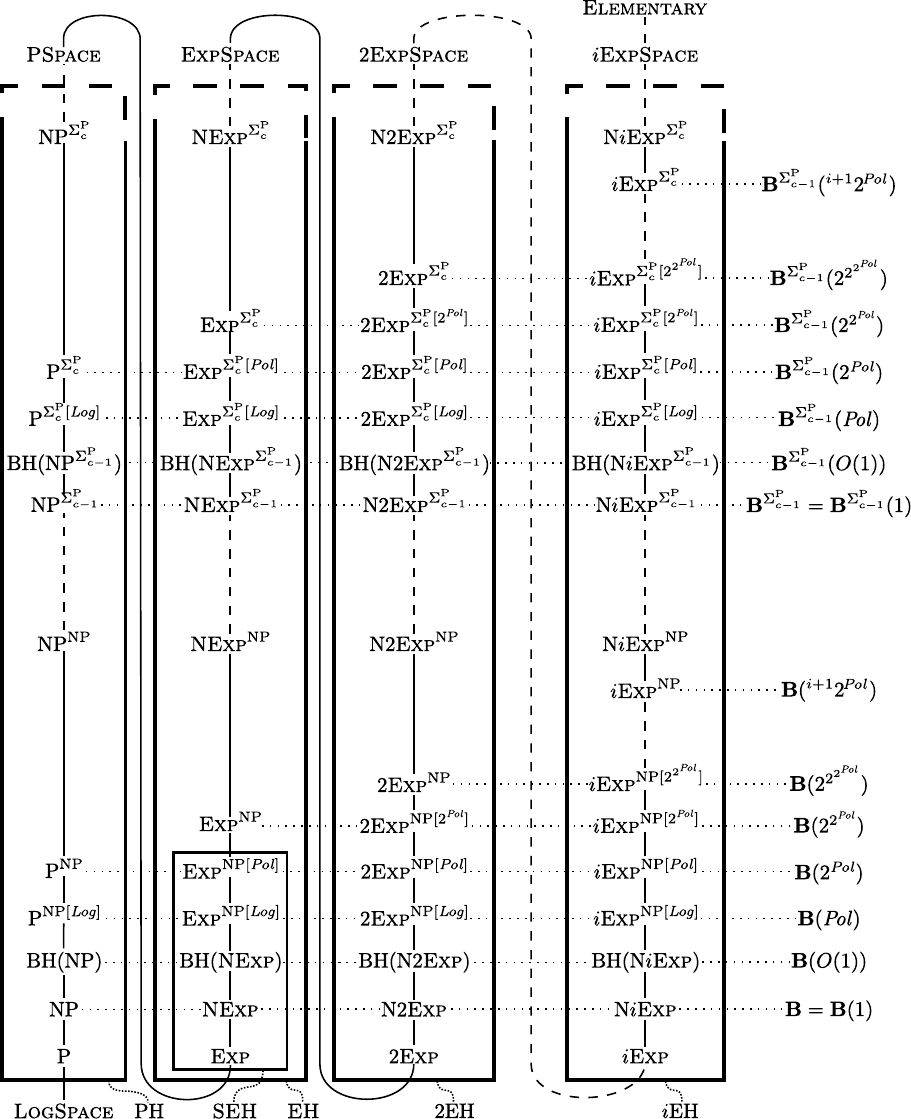}
  \caption{Some hierarchies of the Iterated Exponentials Meta-Hierarchy. The two on the left are the Polynomial ($\PolHier$) and the (Weak) Exponential ($\WExpHier$) Hierarchies. Observe how the Strong Exponential Hierarchy ($\SExpHier$) is a portion of $\WExpHier$'s first step.
  The other two hierarchies depicted are the $2$\nbdash-Exponential ($\iWExpHier{2}$) and the $i$\nbdash-Exponential ($\iWExpHier{i}$) Hierarchy. The symbol $\ComplexityClass{B}$ refers to the first main level of the \iWEHText{i} considered.}
  \label{fig_iterated_exponentials_meta-hierarchy}
\end{figure}

To chart the meta-hierarchy, in this section we investigate the oracle classes built via combinations of $\iExpTime{i}$, $\iNExpTime{i}$, and $\iExpSpace{i}$, oracle machines and oracles.
In \zcref{sec_charting_exp_oracles}, we look at the oracle complexity classes $\Oracle{\iExpTime{i}}{\Oracle{\iExpTime{j}}{\SigmaP{c}}}$, $\Oracle{\iNExpTime{i}}{\Oracle{\iExpTime{j}}{\SigmaP{c}}}$, and $\Oracle{\iExpSpace{i}}{\Oracle{\iExpTime{j}}{\SigmaP{c}}}$.
Then, in \zcref{sec_charting_nexp_oracles},
we analyze 
$\Oracle{\iExpTime{i}}{\Oracle{\iNExpTime{j}}{\SigmaP{c-1}}}$~(\zcref{sec_iEXP_jNEXP}), $\Oracle{\iNExpTime{i}}{\Oracle{\iNExpTime{j}}{\SigmaP{c-1}}}$~(\zcref{sec_details_NEXP_NEXP}), and $\Oracle{\iExpSpace{i}}{\Oracle{\iNExpTime{j}}{\SigmaP{c-1}}}$~(\zcref{sec_expspace_nexp_oracle_classes}).
We continue by discussing in \zcref{sec_discussion_Hausdorff_perspective} how the Hausdorff classes characterization provides structural insights into the equivalences between oracle classes.
We conclude with \zcref{sec_top_seh}, where we analyze the \SEHText via the Hausdorff classes perspective.
Through the latter we close \citeauthor{Hemachandra1989}'s~\cite{Hemachandra1989} open questions, and we see how the \SEHText is actually a portion of the \WEHStressedText{}'s first step.

\subsection{Deterministic Exponential-Time Oracles}
\label{sec_charting_exp_oracles}

In this section we look at the classes $\Oracle{\iExpTime{i}}{\Oracle{\iExpTime{j}}{\SigmaP{c}}}$, $\Oracle{\iNExpTime{i}}{\Oracle{\iExpTime{j}}{\SigmaP{c}}}$, and $\Oracle{\iExpSpace{i}}{\Oracle{\iExpTime{j}}{\SigmaP{c}}}$.
The obtained inclusion and the equivalence relationships are summarized in \zcref{fig_subset_relation_classes_exp_oracle} and \zcref{fig_equivalence_relation_classes_exp_oracle}, respectively.
These results' proofs, which are deferred to \zcref{sec_detailed_proofs_charting_exp_oracles}, rely on the analysis of machine computations and padding arguments.

\begin{figure}
  \centering
  {%
  \renewcommand{\arraystretch}{1.75}
  \begin{tabular}{|r|c|l|}
    \hline
    $\Oracle{\iExpTime{i}}{\Oracle{\iExpTime{j}}{\SigmaP{c}}}$ & \multirow{3}{*}{${} \subseteq \Oracle{\iExpTime{(i+j)}}{\SigmaP{c}} \subseteq {}$} & $\BoundedOracle{\iExpTime{k}}{\Oracle{\iExpTime{(i+j-k)}}{\SigmaP{c}}}{1}$ \\
    \cline{1-1}\cline{3-3}
    \raisebox{1ex}{*} \hspace{3em} $\Oracle{\iNExpTime{i}}{\Oracle{\iExpTime{j}}{\SigmaP{c}}}$ &  & $\BoundedOracle{\iNExpTime{k}}{\Oracle{\iExpTime{(i+j-k)}}{\SigmaP{c}}}{1}$ \hspace{3em} \raisebox{1ex}{*} \\
    \cline{1-1}\cline{3-3}
    $\Oracle{\iExpSpace{(i-1)}}{\Oracle{\iExpTime{j}}{\SigmaP{c}}}$ &  & $\BoundedOracle{\iExpSpace{(k-1)}}{\Oracle{\iExpTime{(i+j-k)}}{\SigmaP{c}}}{1}$ \\
    \hline
  \end{tabular}%
  }
  \caption{Inclusion relationships between oracle classes. In the table, $i,j,c \geq 0$ and $0 \leq k \leq i+j$. {*}In this row, $j$ is assumed to be such that $j \geq 1$; for $j = 0$, the equivalence $\Oracle{\iNExpTime{i}}{\Oracle{\iExpTime{j}}{\SigmaP{c}}} = \Oracle{\iNExpTime{i}}{\SigmaP{c}}$ holds.}
  \label{fig_subset_relation_classes_exp_oracle}
\end{figure}

\begin{figure}
  \centering
  {%
  \renewcommand{\arraystretch}{1.75}
  \begin{tabular}{|r|c|l|}
    \hline
    $\Oracle{\iExpTime{i}}{\Oracle{\iExpTime{j}}{\SigmaP{c}}}$ & \multirow{3}{*}{${} = \Oracle{\iExpTime{\ell}}{\SigmaP{c}} = {}$} & $\BoundedOracle{\iExpTime{i'}}{\Oracle{\iExpTime{j'}}{\SigmaP{c}}}{1}$ \\
    \cline{1-1}\cline{3-3}
    \raisebox{1ex}{*} \hspace{3em} $\Oracle{\iNExpTime{i}}{\Oracle{\iExpTime{j}}{\SigmaP{c}}}$ &  & $\BoundedOracle{\iNExpTime{i'}}{\Oracle{\iExpTime{j'}}{\SigmaP{c}}}{1}$ \hspace{3em} \raisebox{1ex}{*} \\
    \cline{1-1}\cline{3-3}
    $\Oracle{\iExpSpace{(i-1)}}{\Oracle{\iExpTime{j}}{\SigmaP{c}}}$ &  & $\BoundedOracle{\iExpSpace{(i'-1)}}{\Oracle{\iExpTime{j'}}{\SigmaP{c}}}{1}$ \\
    \hline
  \end{tabular}%
  }
  \caption{Equivalence relationships between oracle classes. In the table, $i,i',j,j',c \geq 0$ and $i+j = \ell = i'+j'$. {*}In this row, $j,j'$ are assumed to be such that $j,j' \geq 1$.}
  \label{fig_equivalence_relation_classes_exp_oracle}
\end{figure}

We start with the classes $\Oracle{\iExpTime{i}}{\Oracle{\iExpTime{j}}{\SigmaP{c}}}$.
Intuitively, since two \emph{deterministic} $i$- and \iExponential{j}{}\nbdash-time machines are chained, they give rise to a deterministic class where the exponential orders $i$ and $j$ are combined.

\begin{theorem}[store=EEcontainment]
\label{theo_exp_exp_containment}
Let $i,j,k \geq 0$ be integers with $k \leq i+j$.
Then, for all integers $c \geq 0$, \[\Oracle{\iExpTime{i}}{\Oracle{\iExpTime{j}}{\SigmaP{c}}} \subseteq \Oracle{\iExpTime{(i+j)}}{\SigmaP{c}} \subseteq \BoundedOracle{\iExpTime{k}}{\Oracle{\iExpTime{(i+j-k)}}{\SigmaP{c}}}{1}.\]
\end{theorem}

\begin{corollary}[store=EEequivalence]
Let $i,i',j,j' \geq 0$, and $\ell$, be integers with $i + j = \ell = i' + j'$.
Then, for all integers $c \geq 0$,
\[\Oracle{\iExpTime{i}}{\Oracle{\iExpTime{j}}{\SigmaP{c}}} = \Oracle{\iExpTime{\ell}}{\SigmaP{c}} =  \BoundedOracle{\iExpTime{i'}}{\Oracle{\iExpTime{j'}}{\SigmaP{c}}}{1}.\]
\end{corollary}

We now look at the complexity classes
$\Oracle{\iNExpTime{i}}{\Oracle{\iExpTime{j}}{\SigmaP{c}}}$.
For $i \geq 0$ and $j = 0$, the equivalence $\Oracle{\iNExpTime{i}}{\Oracle{\iExpTime{j}}{\SigmaP{c}}} = \Oracle{\iNExpTime{i}}{\SigmaP{c}}$ holds, for all integers $c \geq 0$.
The \zcref*[typeset=name,nocap]{theo_nexp_exp_containment} below focuses on the cases when $i \geq 0$ and $j \geq 1$.

\begin{theorem}[store=NEcontainment]
\label{theo_nexp_exp_containment}
Let $i,k \geq 0$ and $j \geq 1$ be integers with $k \leq i+j$.
Then, for all integers $c \geq 0$,
\[\Oracle{\iNExpTime{i}}{\Oracle{\iExpTime{j}}{\SigmaP{c}}} \subseteq \Oracle{\iExpTime{(i+j)}}{\SigmaP{c}} \subseteq \BoundedOracle{\iNExpTime{k}}{\Oracle{\iExpTime{(i+j-k)}}{\SigmaP{c}}}{1}.\]
\end{theorem}

\begin{corollary}[store=NEequivalence]
\label{theo_nexp_exp_equivalence}
Let $i,i' \geq 0$, $j,j' \geq 1$, and $\ell$, be integers with $i + j = \ell = i' + j'$.
Then, for all integers $c \geq 0$,
\[\Oracle{\iNExpTime{i}}{\Oracle{\iExpTime{j}}{\SigmaP{c}}} = \Oracle{\iExpTime{\ell}}{\SigmaP{c}} =  \BoundedOracle{\iNExpTime{i'}}{\Oracle{\iExpTime{j'}}{\SigmaP{c}}}{1}.\]
\end{corollary}

We conclude this section by looking at the complexity classes $\Oracle{\iExpSpace{i}}{\Oracle{\iExpTime{j}}{\SigmaP{c}}}$.

\begin{theorem}[store=SEcontainment]
\label{theo_expspace_exp_containment}
Let $i, j, k \geq 0$ be integers with $k \leq i + j$.
Then, for all integers $c \geq 0$,
\[\Oracle{\iExpSpace{(i-1)}}{\Oracle{\iExpTime{j}}{\SigmaP{c}}} \subseteq \Oracle{\iExpTime{(i+j)}}{\SigmaP{c}} \subseteq \BoundedOracle{\iExpSpace{(k-1)}}{\Oracle{\iExpTime{(i+j-k)}}{\SigmaP{c}}}{1}.\]
\end{theorem}

\begin{corollary}[store=SEequivalence]
Let $i,i',j, j' \geq 0$, and $\ell$, be integers with $i + j = \ell = i' + j'$.
Then, for all integers $c \geq 0$,
\[\Oracle{\iExpSpace{(i-1)}}{\Oracle{\iExpTime{j}}{\SigmaP{c}}} = \Oracle{\iExpTime{\ell}}{\SigmaP{c}} =  \BoundedOracle{\iExpSpace{(i'-1)}}{\Oracle{\iExpTime{j'}}{\SigmaP{c}}}{1}.\]
\end{corollary}

\subsection{Nondeterministic Exponential-Time Oracles}
\label{sec_charting_nexp_oracles}

In this section, we analyze the classes $\Oracle{\iExpTime{i}}{\Oracle{\iNExpTime{j}}{\SigmaP{c-1}}}$, $\Oracle{\iNExpTime{i}}{\Oracle{\iNExpTime{j}}{\SigmaP{c-1}}}$, and $\Oracle{\iExpSpace{i}}{\Oracle{\iNExpTime{j}}{\SigmaP{c-1}}}$, and some of their bounded-query variants.
These classes will be shown to be related to the intermediate levels of the iterated exponential hierarchies, and we will prove that these levels can uniformly 
be defined via Hausdorff classes.

It was shown that $\LogOracle{\PTime}{\NPTime} = \BoundedHausdCLASS{\PolFunctions}{\NPTime}$ and $\Oracle{\PTime}{\NPTime} = \BoundedHausdCLASS{2^\PolFunctions}{\NPTime}$~\cite{Buss1988,Wagner1990}.
Nonetheless, these equivalences have not previously been generalized to higher\nbdash-order exponential hierarchies.
Arguably, this generalization has been out of reach, because the results in the literature rely on proofs highly tailored for the \PHText, and hence hardly generalizable (see \zcref{footnote_wagner_hausdorff_reductions_limited_to_polynomial_case}, and \zcref{sec_intro_intermediate_levels_are_Hausdorff_classes} for additional details).
In contrast, our results are obtained via the notion of ``\ounawarelegal computations'', which enables an analysis based on machine computations, yielding proofs that readily generalize to higher\nbdash-order exponential hierarchies.

Intuitively, an \emph{\ounawarelegal computation} $\pi$ for an oracle machine $\Oracle{\Machine{M}}{?}$ on input $w$ is a sequence of IDs for $\Machine{M}$ that is legal \Wrt the transition function of $\Machine{M}$, but that does \emph{not} necessarily report the correct oracle answers to the queries by $\Machine{M}$ appearing in $\pi$;
i.e., $\pi$ can be seen as an oracle-\emph{agnostic} plausible computation for $\Oracle{\Machine{M}}{?}(w)$.

Embryonic ideas of this concept appeared in a work by \citet{LadnerL76}, where they defined the \emph{ID graph for an input string $w$ and an oracle machine $\Oracle{\Machine{M}}{?}$}.
Via this notion they aimed at analyzing all the queries that $\Oracle{\Machine{M}}{?}$, when executing on~$w$, may ever ask to an \emph{arbitrary} oracle.
Intuitively, an ID graph is an extended computation tree representing all the possible computations of $\Oracle{\Machine{M}}{?}$ over $w$, for all the possible oracle answers.

Our notion of \ounawarelegal (partial) computation for $\Oracle{\Machine{M}}{?}(w)$ is essentially a computation path within \citeauthor{LadnerL76}'s ID graph for $w$ and $\Oracle{\Machine{M}}{?}$.
Nonetheless, despite they were just a step away from singling out the concept of \ounawarelegal computation, \citeauthor{LadnerL76} missed the opportunity to define, and subsequently pivot on, this notion.
Indeed, to support their arguments, they then introduced the \emph{query graph for $w$, $\Oracle{\Machine{M}}{?}$, and an (oracle) language $\Language{A}$}.
Roughly, their query graph is a ``summary'' of their ID graph, in which only the start ID, the answers IDs, 
the accepting IDs, and the links between them, are retained from the original ID graph.\footnote{Despite their name, ``query graphs'' do \emph{not} explicitly report the queries generated by the oracle machine. These graphs' name stems from their vertices being IDs (\emph{excluding} the content of the query tape) associated with ``important'' events for the queries.}
However, the key aspect of query graphs is that these are defined \Wrt to a \emph{specific} oracle $\Language{A}$.
For this reason, paths associated with computations of $\Oracle{\Machine{M}}{?}$ on $w$ \emph{in}\/compatible with oracle answers for $\Language{A}$ are not in the query graph for $\Oracle{\Machine{M}}{?}$, $w$, and $\Language{A}$. 
We have to say though that \citeauthor{LadnerL76} adopt a different approach in one of their algorithms.
Indeed, in that algorithm, they resort to a complete ID graph and explore its key IDs
as the ID graph were a query graph for an unknown oracle language.
However, they assume in input to this algorithm also the oracle answers for a \emph{specific} oracle language.
Altogether, it does not seem that \citeauthor{LadnerL76} raise the notion of \ounawarelegal computation to the status of a primary object, or tool, of investigation.

A later paper by \citet{Kadin1989} introduced the \emph{query tree of an oracle machine $\Oracle{\Machine{M}}{?}$ on an input string $w$}.
This notion is tightly linked to \citeauthor{LadnerL76}'s query graphs, which remain however unmentioned in \citeauthor{Kadin1989}'s work.
A \citeauthor{Kadin1989}'s query tree connects all the successive queries, and not IDs, that $\Oracle{\Machine{M}}{?}$ might pose to its \emph{unspecified} oracle, based on every possible sequence of oracle answers.
In a query tree, a \emph{valid path} \Wrt an oracle $\Language{A}$ is the path associated with the sequence of queries issued to the oracle when the correct answers for $\Language{A}$ are considered.
Our \emph{\oawarelegal computation} notion, that is an oracle-machine computation in which query answers are correct \Wrt a specified oracle, relates to \citeauthor{Kadin1989}'s valid path notion.
Nonetheless, we stress that our \ounawarelegalemph computation notion is more general than \citeauthor{Kadin1989}'s on a couple of aspects.
First, our \ounawarelegal computations are sequences of IDs, and not just sequence of queries.
By this, we can more easily reason and operate on \ounawarelegal computations, as all the relevant information is represented in the sequence of IDs.
Second, our notion also applies to the case of (multiple rounds) of nonadaptive queries, which makes it more versatile.

A paper by \citet{Hemaspaandra1994} is another work where ideas similar to our \obothunawarelegal computation notions are adopted.
In an algorithm proposed in~\cite{Hemaspaandra1994}, \citeauthor{Hemaspaandra1994} employs the technique of first guessing an oracle-machine computation, together with witnesses for the oracle answers, and then checking the correctness of the guessed computation.
In our terminology, this approach consists in first guessing an \ounawarelegal computation and then checking whether the guessed computation is actually \oawarelegal.
In the same paper~\cite{Hemaspaandra1994}, a technique more similar to \citeauthor{Kadin1989}'s query tree is used in a different algorithm.
Indeed, in this second algorithm \citeauthor{Hemaspaandra1994} focuses on pairs of IDs individuating portions of computations delimited by the events of when an answer is received and of when the next query is submitted.

Although these authors proposed notions not too far from ours \ounawarelegal computation, they did not significantly pivot on this concept.
The extent to which we employ the notion of \ounawarelegal computation is considerably broader than what these authors did.
In fact, we resort to this concept in various contexts, and not just to refer to guessed computations that must later be verified to contain the correct oracle answers.
For example, we will define Hausdorff predicates revolving around checking the existence of \ounawarelegal computations meeting different criteria, before selecting the \oawarelegal computation of interest.

We also emphasize that, unlike what \citet{LadnerL76} and \citet{Hemaspaandra1994} did, our definition of IDs for oracle machines \emph{includes the query tape}.
This distinction enhances even more the effectiveness of our \ounawarelegal computation notion in simplifying proof arguments.
For example, if the query tape were not explicitly represented in the IDs, referencing queries in a proof, or accessing them in an algorithm, would require more complex reasoning or additional algorithmic steps.
Indeed, without the query tape in the IDs, queries would need to be pieced together indirectly from other elements, making the process more complicated.

Let us now more formally define the above intuitively introduced concepts.

\begin{definition}[store=OracleUnawareDef]
\label{def_oracle-unaware-legal-comp}
For an oracle machine $\Oracle{\Machine{M}}{?}$, a language $\Language{A}$, and a string $w$, a sequence $\pi$ of IDs is:
\begin{itemize}[nosep,label=--,left=0pt]
  \item
    an \oawarelegal (partial) computation for $\Oracle{\Machine{M}}{\Language{A}}$ over $w$, or for $\Oracle{\Machine{M}}{\Language{A}}(w)$, if $\pi$ is a (partial) computation in the computation tree of $\Oracle{\Machine{M}}{\Language{A}}$ over~$w$; and
  \item
    an \ounawarelegalemph (partial) computation for $\Oracle{\Machine{M}}{?}$ over $w$, or for $\Oracle{\Machine{M}}{?}(w)$, if there exists an oracle $\Language{B}$ such that $\pi$ is a (partial) computation in the computation tree of $\Oracle{\Machine{M}}{\Language{B}}$ over~$w$.
\end{itemize}
\end{definition}

Notice that, by its definition, $\pi$ is an \ounawarelegalemph computation for $\Oracle{\Machine{M}}{?}$ over $w$ iff $\pi$ is ``oracle\nbdash-consistent'', i.e., for every pair of oracle queries appearing in $\pi$, whenever the two queries are the same, then they receive the same answer in $\pi$, and, for every two IDs $\alpha$ and $\beta$ in $\pi$, with $\beta$ immediately following $\alpha$, it holds that:
\begin{itemize}[nosep,label=--,left=0pt]
  \item if $\alpha$'s control state is \emph{not} the query state, then $\beta$ is a legal $\alpha$'s next ID according to $\Machine{M}$'s transition function; and
  \item  if $\alpha$'s control state \emph{is} the query state, then $\beta$ is just a meaningful $\alpha$'s next ID, i.e., $\beta$ is either a \yeslbl-successor or \nolbl-successor of~$\alpha$ (the ID $\beta$ is only required to make sense as an $\alpha$'s next ID when the oracle machine receives the answer; the answer itself is not important).
\end{itemize}
These definitions are extended in the natural way to the scenario of parallel queries.

\subsubsection%
[\texorpdfstring{${i}$}{i}E\texorpdfstring{{\smaller XP}}{XP} Oracle Machines with N\texorpdfstring{${j}$}{j}E\texorpdfstring{{\smaller XP}}{XP} Oracles]%
{\texorpdfstring{$\boldsymbol{i}$}{i}E\texorpdfstring{{\smaller XP}}{XP} Oracle Machines with N\texorpdfstring{$\boldsymbol{j}$}{j}E\texorpdfstring{{\smaller XP}}{XP} Oracles}
\label{sec_iEXP_jNEXP}

We start our analysis with the classes $\Oracle{\iExpTime{i}}{\Oracle{\iNExpTime{j}}{\SigmaP{c-1}}}$.
The main result of this section is the next \zcref*[typeset=name,nocap]{theo_exp_nexp_containment}, which links the intermediate levels of the iterated exponential hierarchies to the Hausdorff classes defined over the main hierarchy levels.
We obtain this by showing that, given functions $r(n)$ and $s(n)$, if a language $\Language{L}$ belongs to $\DoubleBoundedParOracle{\iExpTime{i}}{\Oracle{\iNExpTime{j}}{\SigmaP{c-1}}}{r(n)}{s(n)}$, then there is a $\mathremovespaces{{(r(n) + 1)}^{s(n)}}$\nbdash-long Hausdorff reduction from $\Language{L}$ to an $\Oracle{\iNExpTime{(i+j)}}{\SigmaP{c-1}}$ Hausdorff predicate, and (almost) vice-versa---we say ``almost'' because an extra query to the oracle is needed.

Below, $\DoubleBoundedPlusParOracle{\ComplexityClass{X}}{\ComplexityClass{Y}}{r(n)}{s(n)}$ is the class of languages decided by oracle machines $\ParOracle{\Machine{M}}{?} \in \ComplexityClass{X}$ issuing to an oracle in $\ComplexityClass{Y}$ at most $s(n)$ rounds of at most $r(n)$ parallel queries, but for the first round when an additional query is allowed.

\begin{theorem}[store=ENcontainment]
\label{theo_exp_nexp_containment}
Let $i,j,k \geq 0$ and $g,h \geq -1$ be integers, with $\max\set{g,h} \leq i$ and $\max\set{g,h} \leq k \leq i+j$.
Let $r(n) \in O(\iExpPolFunctions{g})$ and $s(n) \in O(\iExpPolFunctions{h})$ be strictly positive nondecreasing functions computable within \iExponential{\max \set{0,g}} time and \iExponential{\max \set{0,h}} time, respectively.
Then, for all integers $c \geq 1$,
\[\DoubleBoundedParOracle{\iExpTime{i}}{\Oracle{\iNExpTime{j}}{\SigmaP{c-1}}}{r(n)}{s(n)} \subseteq \BoundedHausdCLASS{{(r(n)+1)}^{s(n)}}{\Oracle{\iNExpTime{(i+j)}}{\SigmaP{c-1}}} \subseteq \DoubleBoundedPlusParOracle{\iExpTime{k}}{\Oracle{\iNExpTime{(i+j-k)}}{\SigmaP{c-1}}}{r(n)}{s(n)}.\]
\end{theorem}

\begin{proof}
Before proceeding with the proof, we summarize in the \zcref*[typeset=name,nocap]{subprop_sizes_rp1_s_rp1TOs} below the representation sizes of some values that will be referred to in the rest of this proof.

\setcounter{proofsubproperty}{-1}

\begin{proofsubproperty}
\label{subprop_sizes_rp1_s_rp1TOs}
The values $(r(n) + 1)$, $s(n)$, and ${(r(n) + 1)}^{s(n)}$, have binary representation sizes which are $O(\iExpPolFunctions{g-1})$, $O(\iExpPolFunctions{h-1})$, and $O(\iExpPolFunctions{\max \set{0,g-1,h}})$, respectively, and so at most $i$- and \iExponential{k} \Wrt~$\StringLength{w}$.
\end{proofsubproperty}

\begin{subproof}
Remember that $r(n)$ is assumed to be $O(\iExpPolFunctions{g})$, hence also $(r(n) + 1)$ is $O(\iExpPolFunctions{g})$, whereas $s(n)$ is assumed to be $O(\iExpPolFunctions{h})$.
Thus, the binary representation \emph{sizes} of $(r(n) + 1)$ and $s(n)$ are bounded by functions $x(n) \in O(\iExpPolFunctions{g-1})$ and $y(n) \in O(\iExpPolFunctions{h-1})$, respectively.
The representation size of ${(r(n) + 1)}^{s(n)}$ is $O(2^{y(n)}\cdot x(n))$ (see \zcref{sec_maths_complexity}, Exponentiation), where $2^{y(n)}$ is $O(\iExpPolFunctions{h})$ (see \zcref{theo_composition_iterated_expentials}.5).
Hence, the representation size of ${(r(n) + 1)}^{s(n)}$, which is $O(2^{y(n)}\cdot x(n))$, is $O(\iExpPolFunctions{\max \set{0,g-1,h}})$ (see \zcref{theo_multiplication_between_iterated_exponentials}.3).
To conclude, since we are assuming $i \geq 0$, $k \geq 0$, $\max \set{g,h} \leq i$, and $\max \set{g,h} \leq k$, we have that all these sizes are at most \iExponential{i} and \iExponential{k} \Wrt~$\StringLength{w}$.
\end{subproof}

\Proofsep

We first prove that $\DoubleBoundedParOracle{\iExpTime{i}}{\Oracle{\iNExpTime{j}}{\SigmaP{c-1}}}{r(n)}{s(n)} \subseteq \BoundedHausdCLASS{{(r(n)+1)}^{s(n)}}{\Oracle{\iNExpTime{(i+j)}}{\SigmaP{c-1}}}$.

Let $\Language{L} \in \DoubleBoundedParOracle{\iExpTime{i}}{\Oracle{\iNExpTime{j}}{\SigmaP{c-1}}}{r(n)}{s(n)}$ be a language.
We prove $\Language{L} \in \BoundedHausdCLASS{{(r(n)+1)}^{s(n)}}{\Oracle{\iNExpTime{(i+j)}}{\SigmaP{c-1}}}$ by showing that $\Language{L}$ can be characterized by an $\Oracle{\iNExpTime{(i+j)}}{\SigmaP{c-1}}$ Hausdorff predicate~$\Language{D}$ of length ${(r(n)+1)}^{s(n)}$.

Since $\Language{L} \in \DoubleBoundedParOracle{\iExpTime{i}}{\Oracle{\iNExpTime{j}}{\SigmaP{c-1}}}{r(n)}{s(n)}$, there are an $\iExpTime{i}$ oracle machine $\DoubleBoundedParOracle{\Machine{M}}{?}{r(n)}{s(n)}$ and an $\Oracle{\iNExpTime{j}}{\SigmaP{c-1}}$ oracle $\Omega$ such that $\Language{L} = \LanguageOf{\DoubleBoundedParOracle{\Machine{M}}{\Omega}{r(n)}{s(n)}}$.
Assume \Wlog that $\Machine{M}$ issues exactly $s(n)$ rounds of $r(n)$ parallel queries.
Indeed, if $M$ needed to ask strictly fewer than $r(n)$ questions in a round, and/or needed to perform strictly fewer than $s(n)$ rounds of queries, then there would be a machine equivalent to $\Machine{M}$ that can perform dummy rounds of queries and issue dummy queries to the oracle, just to precisely hit the values $r(n)$ and $s(n)$.

For an integer $a \geq 1$, let $V_{s,r}^a$ be the space of vectors of $s(a)$\nbdash-many integers with values between $0$ and $r(a)$, i.e., $V_{s,r}^a = \set{0,1,\dots,r(a)}^{s(a)}$.
We impose over $V_{s,r}^a$ the usual lexicographic order:
if $\Vec{m}, \Vec{n} \in V_{s,r}^a$ are two vectors, $\Vec{m}$ (lexicographically) dominates $\Vec{n}$, denoted $\Vec{m} \lexsucc \Vec{n}$, iff the left\nbdash-most position at which they differ has the element at that position in $\Vec{m}$ (strictly) greater than that in $\Vec{n}$;
relations $\lexsucceq$ and $\lexprec$ are defined in the natural way.
Let $\pi$ be an \ounawarelegal (partial) computation for an oracle machine issuing multiple rounds of parallel queries, $\AnsYESvectorInCompPar{\pi}$ is the vector of the \emph{numbers} of \yesansws appearing in $\pi$ for the respective rounds of queries.

Let us define the following three predicates.
These will be combined to define a predicate that serves as the basis for the Hausdorff predicate characterizing~$\Language{L}$.
Below, $w$ is a string, and $\Vec{m} \in V_{s,r}^{\StringLength{w}}$ is a vector of integers.

\begin{itemize}[noitemsep]
  \item $\PredHSucc[\Machine{M},\Omega]{A}(w,\Vec{m})$: $\valtrue$ iff
      there exists an \ounawarelegalemph computation $\pi$ for $\DoubleBoundedParOracle{\Machine{M}}{?}{r(n)}{s(n)}(w)$ such that $\AnsYESvectorInCompPar{\pi} \lexsucc \Vec{m}$ and, for all queries $q$ receiving in $\pi$ a \yesansw, $\Omega(q) = 1$;

  \item $\PredHAccCurr[\Machine{M},\Omega]{A}(w,\Vec{m})$: $\valtrue$ iff
      there exists an \emph{accepting} \ounawarelegalemph computation~$\pi$ for $\DoubleBoundedParOracle{\Machine{M}}{?}{r(n)}{s(n)}(w)$ such that $\AnsYESvectorInCompPar{\pi} = \Vec{m}$ and, for all queries $q$ receiving in $\pi$ a \yesansw, $\Omega(q) = 1$;
      and

  \item $\PredHRejCurr[\Machine{M},\Omega]{A}(w,\Vec{m})$: $\valtrue$ iff
      there exists a \emph{rejecting} \ounawarelegalemph computation $\pi$ for $\DoubleBoundedParOracle{\Machine{M}}{?}{r(n)}{s(n)}(w)$ such that $\AnsYESvectorInCompPar{\pi} = \Vec{m}$ and, for all queries $q$ receiving in $\pi$ a \yesansw, $\Omega(q) = 1$.
\end{itemize}

We show that the three above predicates can be decided in $\Oracle{\iNExpTime{(i+j)}}{\SigmaP{c-1}}$ \Wrt~$\StringLength{w}$ only.
First notice that, since $\Vec{m} \in V_{s,r}^{\StringLength{w}}$, the vector $\Vec{m}$ contains \iExponential{h}{}ly-many (\Wrt $\StringLength{w}$) components, whose \emph{values} are \iExponential{g}{}ly bounded (\Wrt $\StringLength{w}$).
Hence, each $\Vec{m}$'s component can be represented in \iExponential{(g-1)} space in~$\StringLength{w}$.
Because $\max \set{g,h} \leq i$, the representation size of $\Vec{m}$ is \iExponential{i}{}ly bounded (\Wrt $\StringLength{w}$).

Let us now focus on the complexity of deciding $\PredHSucc[\Machine{M},\Omega]{A}(w,\Vec{m})$.
Observe that, since $\Machine{M}$ is an \iExponential{i}-time (\Wrt $\StringLength{w}$) oracle machine, $\Machine{M}$ cannot issue more than \iExponential{i}{}ly-many (\Wrt $\StringLength{w}$) queries to its oracle.
To answer \yeslbl on $\PredHSucc[\Machine{M},\Omega]{A}(w,\Vec{m})$, we first guess an \iExponential{i}{}ly-long (\Wrt $\StringLength{w}$) \ounawarelegal computation $\pi$ for $\DoubleBoundedParOracle{\Machine{M}}{?}{r(n)}{s(n)}(w)$, together with the \iExponential{i}{}ly-many 
certificates witnessing $\Omega(q) = 1$ for all the queries $q$ receiving a \yesansw in $\pi$.
These certificates are accepting \oawarelegalemph computations for the $\iNExpTime{j}$ ``part'' of $\Omega$---remember that $\Omega \in \Oracle{\iNExpTime{j}}{\SigmaP{c-1}}$;
i.e., we leave out from the guess the part of computation associated with the $\SigmaP{c-1}$ oracle.
These accepting computations for $\Omega$ are \iExponential{(i+j)}{}ly-long, because $\Omega$ may receive \iExponential{i}{}-long queries from $\Machine{M}$.
Hence, these are \iExponential{i}{}ly-many \iExponential{(i+j)}{}ly-long certificates.
This guess can therefore be carried out in nondeterministic \iExponential{(i+j)} time.

After the guess, we check three conditions:
(i)~that $\pi$ is an \ounawarelegal computation for $\DoubleBoundedParOracle{\Machine{M}}{?}{r(n)}{s(n)}(w)$, which is feasible in \iExponential{i} time \Wrt $\StringLength{w}$;
(ii)~that $\AnsYESvectorInCompPar{\pi} \lexsucc \Vec{m}$, which is feasible in \iExponential{i} time \Wrt $\StringLength{w}$, because the representation sizes of $\AnsYESvectorInCompPar{\pi}$ and $\Vec{m}$ are \iExponential{i}{}ly bounded \Wrt $\StringLength{w}$ (see above); and
(iii) that the certificates for $\Omega(q) = 1$, for all queries $q$, are valid, which is feasible in \iExponential{(i+j)} time \Wrt $\StringLength{w}$ with the aid of a $\SigmaP{c-1}$ oracle.
Hence, the overall procedure is in $\Oracle{\iNExpTime{(i+j)}}{\SigmaP{c-1}}$, and the time bound is \Wrt $\StringLength{w}$ only.
Remember that the predicate $\PredHSucc[\Machine{M},\Omega]{A}(w,\Vec{m})$ is defined over the two arguments $w$ and $Vec{m}$, and the representation size of $\Vec{m}$ is \iExponential{i}{}ly bounded (\Wrt $\StringLength{w}$) (see above).
Therefore, a procedure running in \iExponential{(i+j)} time (\Wrt $\StringLength{w}$) only has enough time to entirely read~$\Vec{m}$ as well.

The intuition why this predicate is in $\Oracle{\iNExpTime{(i+j)}}{\SigmaP{c-1}}$ is that the guessed $\pi$ needs to be an \ounawarelegalemph computation and \emph{not} \oawarelegalemph.
Thus, apart from the queries positively answered in~$\pi$, for which we are required to provide witnesses for $\Omega$ accepting them, it is \emph{not} required that $\Omega$ actually rejects the queries negatively answered in~$\pi$.
This is why $\PredHSucc[\Machine{M},\Omega]{A}(w,\Vec{m})$ is in $\Oracle{\iNExpTime{(i+j)}}{\SigmaP{c-1}}$, as we do not need to carry out the $\ComplementPrefixKerned\Oracle{\iNExpTime{(i+j)}}{\SigmaP{c-1}}$ tasks of checking that the queries receiving \noansws in~$\pi$ are actually rejected by $\Omega$.

Similarly, to decide $\PredHAccCurr[\Machine{M},\Omega]{A}(w,\Vec{m})$ (resp., $\PredHRejCurr[\Machine{M},\Omega]{A}(w,\Vec{m})$), we first guess an \ounawarelegal computation $\pi$ for $\DoubleBoundedParOracle{\Machine{M}}{?}{r(n)}{s(n)}(w)$ and the certificates witnessing $\Omega(q) = 1$ for all the queries receiving a \yesansw in $\pi$.
This guess is carried out in nondeterministic \iExponential{(i+j)} time (see above).
After the guess, we check:
(i)~that $\pi$ is an accepting (resp., a rejecting) \ounawarelegal computation for $\DoubleBoundedParOracle{\Machine{M}}{?}{r(n)}{s(n)}(w)$;
(ii)~that $\AnsYESvectorInCompPar{\pi} = \Vec{m}$; and
(iii)~that the certificates are valid.
The overall check is feasible in \iExponential{(i+j)} time with the aid of a $\SigmaP{c-1}$ oracle (see above).
Hence, also these predicates are in $\Oracle{\iNExpTime{(i+j)}}{\SigmaP{c-1}}$, and the time bound is \Wrt $\StringLength{w}$ only.

\medbreak

We now show that, via the three predicates above, we can define a binary predicate $\Predicate{B}_{\Machine{M},\Omega}(w,z) \subseteq \HausdPredDomain$,
from which we will obtain a Hausdorff predicate $\Language{D}(w,z)$ of length ${(r(n)+1)}^{s(n)}$ characterizing~$\Language{L}$.
Similarly to Hausdorff predicates, $\Predicate{B}_{\Machine{M},\Omega}(w,z)$ will be such that $\Predicate{B}_{\Machine{M},\Omega}(w,z) \geq \Predicate{B}_{\Machine{M},\Omega}(w,z + 1)$, but in $\Predicate{B}_{\Machine{M},\Omega}(w,z)$ we will have $z \geq 0$, instead of $z \geq 1$ like in Hausdorff predicates (see \zcref{def_Hausdorff_predicate}).
We focus on $\Predicate{B}_{\Machine{M},\Omega}(w,z)$ just to streamline the presentation.
The Hausdorff predicate $\Language{D}(w,z)$ obtained from $\Predicate{B}_{\Machine{M},\Omega}(w,z)$ will then just be a simple ``shift'' of $\Predicate{B}_{\Machine{M},\Omega}$, i.e., we will have $\Language{D}(w,z) = \Predicate{B}_{\Machine{M},\Omega}(w,z-1)$.
For this reason, we will need to show that $w \in \Language{L}$ iff the highest value of $z$ at which $\Predicate{B}_{\Machine{M},\Omega}(w,z) = 1$ is \emph{even}.

To define $\Predicate{B}_{\Machine{M},\Omega}(w,z)$, we need to refer to the $z$\nbdash-th vector $V_{s,r}^{\StringLength{w}}[z]$ in the \emph{lexicographically ordered} space $V_{s,r}^{\StringLength{w}}$ (see above).
The first vector $V_{s,r}^{\StringLength{w}}[0]$ in $V_{s,r}^{\StringLength{w}}$ is the vector of all zeroes.
Notice that the vector $V_{s,r}^{\StringLength{w}}[z]$ is essentially the representation of the number $z$ in base $(r(\StringLength{w})+1)$ over $s(\StringLength{w})$ digits.
We claim (see below) that the vector $V_{s,r}^{\StringLength{w}}[z]$ can be computed from~$z$ in \iExponential{i} time \Wrt $\StringLength{w}$.

First, observe that ${(r(\StringLength{w}) + 1)}^{s(\StringLength{w})} - 1$ is the highest value of $z$ associated with a vector $V_{s,r}^{\StringLength{w}}[z]$ in $V_{s,r}^{\StringLength{w}}$.
Hence, by \zcref{subprop_sizes_rp1_s_rp1TOs}, the representation sizes of $z$ and of $r(\StringLength{w}) + 1$ are \iExponential{i} \Wrt~$\StringLength{w}$.
To obtain $V_{s,r}^{\StringLength{w}}[z]$ from~$z$, we repeatedly divide $z$, and the successive quotients, by $r(\StringLength{w}) + 1$ and take the remainders of the integer divisions.
Because $r(n)$ is assumed to be computable in \iExponential{\max \set{0,g}} time, and we assume $i \geq 0$ and $g \leq i$, the value $r(\StringLength{w}) + 1$ can be computed in \iExponential{i} time.
The divisions to obtain $V_{s,r}^{\StringLength{w}}[z]$ from~$z$ can be carried out in time that is polynomial in the operands' sizes (see \zcref{sec_maths_complexity}, Arithmetic).
Since the biggest operands have \iExponential{i} sizes (see above), each division can be evaluated in \iExponential{i} time.
We perform $s(\StringLength{w}) - 1$ divisions, which are \iExponential{h}{}ly-many, and hence also \iExponential{i}{}ly-many, as $h \leq i$ by assumption.
By this, $V_{s,r}^{\StringLength{w}}[z]$ can be obtained from~$z$ in \iExponential{i} time \Wrt $\StringLength{w}$---observe that, since the size of $z$ is \iExponential{i} in $\StringLength{w}$ (see above), a procedure running in \iExponential{i} time in $\StringLength{w}$ has enough time to entirely read $z$ from tape.

Via the previous three predicates, we define the predicate $\Predicate{B}_{\Machine{M},\Omega}(w,z)$ as follows:
\[
\Predicate{B}_{\Machine{M},\Omega}(w,z) =
\begin{cases}
  \lnot \PredHSucc[\Machine{M},\Omega]{A}(w,V_{s,r}^{\StringLength{w}}[z]) \to \PredHAccCurr[\Machine{M},\Omega]{A}(w,V_{s,r}^{\StringLength{w}}[z]),
  & {\text{if } z < {(r(\StringLength{w})+1)}^{s(\StringLength{w})} \text{ and } z \text{ is even}} \\
  \lnot \PredHSucc[\Machine{M},\Omega]{A}(w,V_{s,r}^{\StringLength{w}}[z]) \to \PredHRejCurr[\Machine{M},\Omega]{A}(w,V_{s,r}^{\StringLength{w}}[z]),
  & {\text{if } z < {(r(\StringLength{w})+1)}^{s(\StringLength{w})} \text{ and } z \text{ is odd}} \\
  \valfalse, & \text{if } z \geq {(r(\StringLength{w})+1)}^{s(\StringLength{w})}.
\end{cases}
\]
We claim that $\Predicate{B}_{\Machine{M},\Omega}$ is in $\Oracle{\iNExpTime{(i+j)}}{\SigmaP{c-1}}$ \Wrt $\StringLength{w}$ only, as the following procedure shows.
First, we compute $\mi{max} = {(r(\StringLength{w})+1)}^{s(\StringLength{w})}$, which can be obtained in time that is polynomial in the size of the result, which is \iExponential{i} \Wrt $\StringLength{w}$ (see \zcref{subprop_sizes_rp1_s_rp1TOs}, and Exponentiation in \zcref{sec_maths_complexity}).
Then, we compare $z$ with $\mi{max}$.
Also this comparison is feasible in \iExponential{i} time \Wrt $\StringLength{w}$, because, by \zcref{subprop_sizes_rp1_s_rp1TOs}, the size of ${(r(\StringLength{w})+1)}^{s(\StringLength{w})}$ is \iExponential{i} \Wrt $\StringLength{w}$, hence in \iExponential{i} time it is possible to read from tape the relevant part of $z$ and perform the comparison.
In fact, if $z$ is too big, there is no need to read it entirely, as we are assuming that the most significant bit of $z$ on tape is `$1$', hence, if the representation of $z$ is longer than the representation of $\mi{max}$, then $z$ is surely greater than $\mi{max}$.
After this, if $z$ is not less than $\mi{max}$, we answer $\valfalse$.
On the other hand, if $z$ is less than $\mi{max}$, then, by $\lnot a \to b \equiv a \lor b$, we answer $\PredHSucc[\Machine{M},\Omega]{A}(w,V_{s,r}^{\StringLength{w}}[z]) \lor \PredHAccCurr[\Machine{M},\Omega]{A}(w,V_{s,r}^{\StringLength{w}}[z])$ if $z$ is even, and we answer $\PredHSucc[\Machine{M},\Omega]{A}(w,V_{s,r}^{\StringLength{w}}[z]) \lor \PredHRejCurr[\Machine{M},\Omega]{A}(w,V_{s,r}^{\StringLength{w}}[z])$ if $z$ is odd.
In both cases we need to compute $V_{s,r}^{\StringLength{w}}[z]$ from $z$, which can be obtained in \iExponential{i} time \Wrt $\StringLength{w}$ (see above).
Lastly notice that, since the answer is the result of the disjunction of two $\Oracle{\iNExpTime{(i+j)}}{\SigmaP{c-1}}$ predicates and $\Oracle{\iNExpTime{(i+j)}}{\SigmaP{c-1}}$ is closed under disjunction, the answer can be obtained in $\Oracle{\iNExpTime{(i+j)}}{\SigmaP{c-1}}$.

\medbreak

We now prove that, for every string $w$, it holds that $\Predicate{B}_{\Machine{M},\Omega}(w,z) \geq \Predicate{B}_{\Machine{M},\Omega}(w,z+1)$, for all $z \geq 0$.
To this aim, we state some intermediate properties, and we start by introducing some notation.

Let $\pi$ be an \ounawarelegal (partial) computation for an oracle machine $\ParOracle{\Machine{M}}{?}$ issuing multiple rounds of parallel queries.
We define the following functions:
$\QueryInCompPar{\pi}{u}$ is the set of queries that, according to $\pi$, are issued in the $u$\nbdash-th round by $\ParOracle{\Machine{M}}{?}$;
and $\AnsInCompPar{\pi}{u}$ is the set of answers that, according to $\pi$, $\ParOracle{\Machine{M}}{?}$ receives from its oracle to its $u$\nbdash-th round of queries---we assume that we can read from $\AnsInCompPar{\pi}{u}$ the information associating an answer with the respective query.
Remember that $\AnsYESvectorInCompPar{\pi}$ is the vector of the \emph{numbers} of \yesansws appearing in $\pi$ for the respective rounds of queries;
$\QueryYESinCompPar{\pi}{u}$ is the \emph{number} of \yesansws appearing in $\pi$ for the $u$\nbdash-th round queries.

We denote by $\PortionCompToQuery{\pi}{u}$ and $\PortionCompToAns{\pi}{u}$ the initial portion of $\pi$ up to when the $u$\nbdash-th round of queries is issued and answered, respectively;
$\IDinCompQuery{\pi}{u}$ and $\IDinCompAns{\pi}{u}$ denote the ID in $\pi$ when the $u$\nbdash-th round of queries is issued and answered, respectively, and they are the last ID of $\PortionCompToQuery{\pi}{u}$ and $\PortionCompToAns{\pi}{u}$, respectively.

In the following, let $\hat \pi$ be \emph{the} \oawarelegalemph computation for $\DoubleBoundedParOracle{\Machine{M}}{\Omega}{r(n)}{s(n)}(w)$---remember that $\Machine{M}$ is deterministic.

The following \zcref*[typeset=name,nocap]{subprop_reconstruction_of_oracle_aware_from_counting} states that, if $\pi$ is an \ounawarelegalemph computation for $\DoubleBoundedParOracle{\Machine{M}}{?}{r(n)}{s(n)}(w)$ with $\QueryYESinCompPar{\pi}{1} = \QueryYESinCompPar{\hat \pi}{1}, \dots,\linebreak[0] \QueryYESinCompPar{\pi}{u} = \QueryYESinCompPar{\hat \pi}{u}$, and all queries $q$ receiving a \yesansw in $\pi$ are such that $\Omega(q) = 1$,
then $\pi$ and $\hat \pi$ coincide from the initial ID to when the $(u{+}1)$\nbdash-th round of queries is issued.
Observe here that we are imposing the equality on the \emph{numbers} of \yesansws per round, not on the specific \emph{queries} that are positively answered.
Nonetheless, we can prove that this is enough to reconstruct the \oawarelegalemph computation $\hat \pi$.

\begin{proofsubproperty}
\label{subprop_reconstruction_of_oracle_aware_from_counting}
Let $\pi$ be an \ounawarelegal computation for $\DoubleBoundedParOracle{\Machine{M}}{?}{r(n)}{s(n)}(w)$.
For every $u$ with $0 \leq u \leq s(\StringLength{w})-1$ (resp., for $u = s(\StringLength{w})$),
if $\QueryYESinCompPar{\pi}{1} = \QueryYESinCompPar{\hat \pi}{1}, \dots, \QueryYESinCompPar{\pi}{u} = \QueryYESinCompPar{\hat \pi}{u}$, and all queries $q$ receiving a \yesansw in $\pi$ are such that $\Omega(q) = 1$,
then $\PortionCompToQuery{\pi}{u+1} = \PortionCompToQuery{\hat \pi}{u+1}$ (resp., then $\pi = \hat \pi$).
\end{proofsubproperty}

\begin{subproof}
We first prove by induction that the \zcref*[typeset=name,nocap]{subprop_reconstruction_of_oracle_aware_from_counting} holds for all values of $u$ between $0$ and $s(\StringLength{w}) - 1$.
A conclusive remark will deal with the case in which $u = s(\StringLength{w})$.

\medskip

(Base case).
For $u = 0$, we show $\PortionCompToQuery{\pi}{1} = \PortionCompToQuery{\hat \pi}{1}$.
First, observe that $\Machine{M}$ is deterministic and does not issue any oracle call before the first-round queries.
By this, $\Machine{M}$'s computation up to that moment cannot depend on oracle answers, but depends only on the input string $w$ and on $\Machine{M}$'s transition function.
Hence, every \ounawarelegalemphboth computation $\pi$ for $\DoubleBoundedParOracle{\Machine{M}}{?}{r(n)}{s(n)}(w)$ must be such that $\PortionCompToQuery{\pi}{1} = \PortionCompToQuery{\hat \pi}{1}$.

\medskip

(Inductive hypothesis).
Assume the \zcref*[typeset=name,nocap]{subprop_reconstruction_of_oracle_aware_from_counting} holds for all values of $u$ up to $v$, for some $v \leq s(\StringLength{w})-2$.

\medskip

(Inductive step).
We now prove that the \zcref*[typeset=name,nocap]{subprop_reconstruction_of_oracle_aware_from_counting} holds also for $v + 1$.
Let $\pi$ be an \ounawarelegal computation for $\DoubleBoundedParOracle{\Machine{M}}{?}{r(n)}{s(n)}(w)$ such that $\QueryYESinCompPar{\pi}{1} = \QueryYESinCompPar{\hat \pi}{1}, \dots, \QueryYESinCompPar{\pi}{v} = \QueryYESinCompPar{\hat \pi}{v},\linebreak[0] \QueryYESinCompPar{\pi}{v+1} = \QueryYESinCompPar{\hat \pi}{v+1}$, and all queries $q$ receiving a \yesansw in $\pi$ are such that $\Omega(q) = 1$.
We show that $\PortionCompToQuery{\pi}{v+2} = \PortionCompToQuery{\hat \pi}{v+2}$.

By inductive hypothesis, since $\QueryYESinCompPar{\pi}{1} = \QueryYESinCompPar{\hat \pi}{1}, \dots, \QueryYESinCompPar{\pi}{v} = \QueryYESinCompPar{\hat \pi}{v},$ it holds that $\PortionCompToQuery{\pi}{v+1} = \PortionCompToQuery{\hat \pi}{v+1}$, which means that $\QueryInCompPar{\pi}{v+1} = \QueryInCompPar{\hat \pi}{v+1}$.
Let us consider $\PortionCompToAns{\pi}{v+1}$.
The computation portions $\PortionCompToQuery{\pi}{v+1}$ and $\PortionCompToAns{\pi}{v+1}$ differ only for the additional last ID in $\PortionCompToAns{\pi}{v+1}$, which is $\IDinCompAns{\pi}{v+1}$.
The latter is the ID following $\IDinCompQuery{\pi}{v+1}$, and contains the answers to the queries in $\IDinCompQuery{\pi}{v+1}$, which we know to be $\QueryInCompPar{\pi}{v+1}$.
Now notice that it must be the case that $\AnsInCompPar{\pi}{v+1} = \AnsInCompPar{\hat \pi}{v+1}$ for the following reasons:
(i)~$\QueryInCompPar{\pi}{v+1} = \QueryInCompPar{\hat \pi}{v+1}$;
(ii)~$\QueryYESinCompPar{\pi}{v+1} = \QueryYESinCompPar{\hat \pi}{v+1}$ (by assumption on $\pi$); and
(iii)~all queries $q \in \QueryYESinCompPar{\pi}{v+1}$ are such that $\Omega(q) = 1$.
This hence proves that $\PortionCompToAns{\pi}{v+1} = \PortionCompToAns{\hat \pi}{v+1}$.

Consider now the $(v{+}2)$\nbdash-th round queries $\QueryInCompPar{\hat \pi}{v+2}$.
These queries deterministically depend, as $\Machine{M}$ is deterministic, only on the input string $w$ and on the answers $\AnsInCompPar{\hat \pi}{1}, \dots, \AnsInCompPar{\hat \pi}{v+1}$ that $\Machine{M}$ receives from $\Omega$.
Since $\pi$ is an \ounawarelegalemphlegal computation for $\DoubleBoundedParOracle{\Machine{M}}{?}{r(n)}{s(n)}(w)$ and $\AnsInCompPar{\hat \pi}{1}, \dots, \AnsInCompPar{\hat \pi}{v+1}$ are in $\PortionCompToAns{\pi}{v+1}$ (see above), the portion of $\pi$ between $\IDinCompAns{\pi}{v+1}$ and $\IDinCompQuery{\pi}{v+2}$ must be a (legal) partial computation for $\DoubleBoundedParOracle{\Machine{M}}{\Omega}{r(n)}{s(n)}(w)$.
This shows that $\PortionCompToQuery{\pi}{v+2} = \PortionCompToQuery{\hat \pi}{v+2}$, and closes the inductive argument.

\medskip

The induction proves the \zcref*[typeset=name,nocap]{subprop_reconstruction_of_oracle_aware_from_counting} for $u$ up to $s(\StringLength{w})-1$.
Consider now $u = s(\StringLength{w})$.
Let $\pi$ be an \ounawarelegal computation for $\DoubleBoundedParOracle{\Machine{M}}{?}{r(n)}{s(n)}(w)$ such that $\QueryYESinCompPar{\pi}{1} = \QueryYESinCompPar{\hat \pi}{1}, \dots,\linebreak[0] \QueryYESinCompPar{\pi}{s(\StringLength{w}-1)} = \QueryYESinCompPar{\hat \pi}{s(\StringLength{w}-1)},\linebreak[0] \QueryYESinCompPar{\pi}{s(\StringLength{w})} = \QueryYESinCompPar{\hat \pi}{s(\StringLength{w})}$, and all queries $q$ receiving a \yesansw in $\pi$ are such that $\Omega(q) = 1$.

By the inductive argument above, from $\QueryYESinCompPar{\pi}{1} = \QueryYESinCompPar{\hat \pi}{1}, \dots, \QueryYESinCompPar{\pi}{s(\StringLength{w}-1)} = \QueryYESinCompPar{\hat \pi}{s(\StringLength{w}-1)}$ follows $\PortionCompToQuery{\pi}{s(\StringLength{w})} = \PortionCompToQuery{\hat \pi}{s(\StringLength{w})}$.
Since $\QueryYESinCompPar{\pi}{s(\StringLength{w})} = \QueryYESinCompPar{\hat \pi}{s(\StringLength{w})}$ by assumption, an argument similar to the one in the inductive step above shows that $\AnsInCompPar{\pi}{s(\StringLength{w})} = \AnsInCompPar{\hat \pi}{s(\StringLength{w})}$.
Hence, $\PortionCompToAns{\pi}{s(\StringLength{w})} = \PortionCompToAns{\hat \pi}{s(\StringLength{w})}$.
Notice that $\pi$ is an \ounawarelegalemphlegal computation and in its portion $\PortionCompToAns{\pi}{s(\StringLength{w})}$ contains precisely all the actual answers $\AnsInCompPar{\hat \pi}{1}, \dots, \AnsInCompPar{\hat \pi}{s(\StringLength{w})}$ to the actual queries $\QueryInCompPar{\hat \pi}{1}, \dots, \QueryInCompPar{\hat \pi}{s(\StringLength{w})}$ submitted to $\Omega$ by $\Machine{M}$ when executing on $w$.
Therefore, the remaining part of $\pi$ following $\PortionCompToAns{\pi}{s(\StringLength{w})}$ must be a (legal) partial computation for $\DoubleBoundedParOracle{\Machine{M}}{\Omega}{r(n)}{s(n)}(w)$.
Thus, $\pi$ is an \oawarelegal computation for $\DoubleBoundedParOracle{\Machine{M}}{\Omega}{r(n)}{s(n)}(w)$.
\end{subproof}

\begin{proofsubproperty}
\label{subprop_no_unaware_comp_lex_greater_hat_pi}
There is no \ounawarelegal computation $\pi$ for $\DoubleBoundedParOracle{\Machine{M}}{?}{r(n)}{s(n)}(w)$ with $\AnsYESvectorInCompPar{\pi} \lexsucc \AnsYESvectorInCompPar{\hat \pi}$ and such that, for all queries $q$ receiving in $\pi$ a \yesansw, $\Omega(q) = 1$.
\end{proofsubproperty}

\begin{subproof}
Assume by contradiction that there is an \ounawarelegal computation $\pi$ for $\DoubleBoundedParOracle{\Machine{M}}{?}{r(n)}{s(n)}(w)$ with $\AnsYESvectorInCompPar{\pi} \lexsucc \linebreak[0] \AnsYESvectorInCompPar{\hat \pi}$ and such that, for all queries $q$ receiving a \yesansw in $\pi$, $\Omega(q) = 1$.
Let $u$ be the lowest index at which $\QueryYESinCompPar{\pi}{u} \neq \QueryYESinCompPar{\hat \pi}{u}$---%
we are here assuming \Wlog that positions of vector elements are indexed left to right with increasing values.
By \zcref{subprop_reconstruction_of_oracle_aware_from_counting}, $\PortionCompToQuery{\pi}{u} = \PortionCompToQuery{\hat \pi}{u}$, implying that $\QueryInCompPar{\pi}{u} = \QueryInCompPar{\hat \pi}{u}$.
From $\AnsYESvectorInCompPar{\pi} \lexsucc \AnsYESvectorInCompPar{\hat \pi}$ follows $\QueryYESinCompPar{\pi}{u} > \QueryYESinCompPar{\hat \pi}{u}$.
Since by assumption all queries receiving a \yesansw in $\pi$ are accepted by $\Omega$, all the queries $s \in \QueryInCompPar{\pi}{u}$ receiving a \yesansw in $\pi$ are such that $\Omega(s) = 1$.
However, by $\QueryInCompPar{\hat \pi}{u} = \QueryInCompPar{\pi}{u}$ and $\QueryYESinCompPar{\hat \pi}{u} < \QueryYESinCompPar{\pi}{u}$, there are queries in $\QueryInCompPar{\hat \pi}{u}$ receiving a \noansw in $\hat \pi$ which are instead accepted by $\Omega$:
a contradiction, because $\hat \pi$ is an \oawarelegalemph computation.
\end{subproof}

\begin{proofsubproperty}
\label{subprop_A_lex_greater_true_iff_preceeds_hat_pi}
For every vector $\Vec{m} \in V_{s,r}^{\StringLength{w}}$ of integers, $\PredHSucc[\Machine{M},\Omega]{A}(w,\Vec{m}) = 1 \Leftrightarrow (\Vec{m} \lexprec \AnsYESvectorInCompPar{\hat \pi})$.
\end{proofsubproperty}

\begin{subproof}
\ProofLeftarrowItem
Let $\Vec{m} \in V_{s,r}^{\StringLength{w}}$ be a vector such that $\Vec{m} \lexprec \AnsYESvectorInCompPar{\hat \pi}$.
By definition of $\PredHSucc[\Machine{M},\Omega]{A}(w,\Vec{m})$, the computation $\hat \pi$ is actually a witness for $\PredHSucc[\Machine{M},\Omega]{A}(w,\Vec{m}) = 1$.

\ProofRightarrowItem
Let $\Vec{m} \in V_{s,r}^{\StringLength{w}}$ be a vector such that $\Vec{m} \lexsucceq \AnsYESvectorInCompPar{\hat \pi}$.
In order for $\PredHSucc[\Machine{M},\Omega]{A}(w,\Vec{m}) = 1$ to hold true, there must exist an \ounawarelegal computation $\tilde \pi$ for $\DoubleBoundedParOracle{\Machine{M}}{?}{r(n)}{s(n)}(w)$ with $\AnsYESvectorInCompPar{\tilde \pi} \lexsucc \Vec{m}$ such that, for all queries $q$ receiving a \yesansw in $\tilde \pi$, $\Omega(q) = 1$.
For such a computation $\tilde \pi$, we would have $\AnsYESvectorInCompPar{\tilde \pi} \lexsucc \Vec{m} \lexsucceq \AnsYESvectorInCompPar{\hat \pi}$.
However, this contradicts \zcref{subprop_no_unaware_comp_lex_greater_hat_pi}.
\end{subproof}

\noindent
From \zcref{subprop_no_unaware_comp_lex_greater_hat_pi}, and by the definitions of $\PredHAccCurr[\Machine{M},\Omega]{A}(w,\Vec{m})$ and $\PredHRejCurr[\Machine{M},\Omega]{A}(w,\Vec{m})$, the following is easily proven.

\begin{proofsubproperty}
\label{subprop_m_greater_than_hat_pi_THEN_A_accept_reject_false}
For every vector $\Vec{m} \in V_{s,r}^{\StringLength{w}}$ of integers, if $\Vec{m} \lexsucc \AnsYESvectorInCompPar{\hat \pi}$, then $\PredHAccCurr[\Machine{M},\Omega]{A}(w,\Vec{m}) = \PredHRejCurr[\Machine{M},\Omega]{A}(w,\Vec{m}) = 0$.
\end{proofsubproperty}

\noindent
From \zcref{subprop_A_lex_greater_true_iff_preceeds_hat_pi}, we have $\Predicate{B}_{\Machine{M},\Omega}(w,z) = 1$, for all $z \geq 0$ such that $V_{s,r}^{\StringLength{w}}[z] \lexprec \AnsYESvectorInCompPar{\hat \pi}$.
Moreover, by \zcref{subprop_A_lex_greater_true_iff_preceeds_hat_pi,subprop_m_greater_than_hat_pi_THEN_A_accept_reject_false}, we have $\Predicate{B}_{\Machine{M},\Omega}(w,z) = 0$, for all $z \geq 0$ such that $V_{s,r}^{\StringLength{w}}[z] \lexsucc \AnsYESvectorInCompPar{\hat \pi}$.
Therefore, if we denote by $\hat z$ the index such that $V_{s,r}^{\StringLength{w}}[\hat z] = \AnsYESvectorInCompPar{\hat \pi}$, irrespective of whether $\Predicate{B}_{\Machine{M},\Omega}(w, \hat z)$ is actually true or false, it holds that $\Predicate{B}_{\Machine{M},\Omega}(w,z) \geq \Predicate{B}_{\Machine{M},\Omega}(w,z+1)$, for all $z \geq 0$.

\medbreak

We now show that $w \in \Language{L}$ iff the highest value of $z$ at which $\Predicate{B}_{\Machine{M},\Omega}(w,z) = 1$ is \emph{even}---remember that $\Language{L}$ will be characterized by a Hausdorff predicate $\Language{D}(w,z) = \Predicate{B}_{\Machine{M},\Omega}(w,z-1)$.

Let us focus on the truth value of $\Predicate{B}_{\Machine{M},\Omega}(w,\hat z)$,
and consider the truth values of 
$\PredHAccCurr[\Machine{M},\Omega]{A}(w,V_{s,r}^{\StringLength{w}}[\hat z])$ and $\PredHRejCurr[\Machine{M},\Omega]{A}(w,V_{s,r}^{\StringLength{w}}[\hat z])$---remember that $V_{s,r}^{\StringLength{w}}[\hat z]$ is the vector of the \emph{numbers} of queries receiving a \yesansw in $\hat \pi$, for each round of queries;
$V_{s,r}^{\StringLength{w}}[\hat z]$ does \emph{not} contain information on \emph{which} queries receive a \yesansw in~$\hat \pi$.

\begin{proofsubproperty}
\label{subprop_A_accept_reject_correct_at_hat_z}
$\PredHAccCurr[\Machine{M},\Omega]{A}(w,V_{s,r}^{\StringLength{w}}[\hat z]) = \DoubleBoundedParOracle{\Machine{M}}{\Omega}{r(n)}{s(n)}(w)$ and $\PredHRejCurr[\Machine{M},\Omega]{A}(w,V_{s,r}^{\StringLength{w}}[\hat z]) = 1 - \DoubleBoundedParOracle{\Machine{M}}{\Omega}{r(n)}{s(n)}(w)$.
\end{proofsubproperty}

\begin{subproof}
We show $\PredHAccCurr[\Machine{M},\Omega]{A}(w,V_{s,r}^{\StringLength{w}}[\hat z]) = \DoubleBoundedParOracle{\Machine{M}}{\Omega}{r(n)}{s(n)}(w)$.
{Proving $\PredHRejCurr[\Machine{M},\Omega]{A}(w,V_{s,r}^{\StringLength{w}}[\hat z]) = 1 - \DoubleBoundedParOracle{\Machine{M}}{\Omega}{r(n)}{s(n)}(w)$ is similar.}

\ProofRightarrowItem
If $\PredHAccCurr[\Machine{M},\Omega]{A}(w, V_{s,r}^{\StringLength{w}}[\hat z]) = 1$, then there exists an \emph{accepting} \ounawarelegal computation $\pi$ for $\DoubleBoundedParOracle{\Machine{M}}{?}{r(n)}{s(n)}(w)$ with $\AnsYESvectorInCompPar{\pi} = V_{s,r}^{\StringLength{w}}[\hat z] = \AnsYESvectorInCompPar{\hat \pi}$ and such that, for all queries $q$ receiving a \yesansw in $\pi$, $\Omega(q) = 1$.
By \zcref{subprop_reconstruction_of_oracle_aware_from_counting}, such a computation $\pi$ is actually an \oawarelegal computation for $\DoubleBoundedParOracle{\Machine{M}}{\Omega}{r(n)}{s(n)}(w)$.
Since $\pi$ is accepting, it holds that $\DoubleBoundedParOracle{\Machine{M}}{\Omega}{r(n)}{s(n)}(w) = 1$.

\ProofLeftarrowItem
If $\DoubleBoundedParOracle{\Machine{M}}{\Omega}{r(n)}{s(n)}(w) = 1$, then the \oawarelegalemph computation $\hat \pi$ for $\DoubleBoundedParOracle{\Machine{M}}{\Omega}{r(n)}{s(n)}(w)$ is \emph{accepting}.
Hence, $\hat \pi$ is actually a witness for $\PredHAccCurr[\Machine{M},\Omega]{A}(w, V_{s,r}^{\StringLength{w}}[\hat z]) = 1$, as $\AnsYESvectorInCompPar{\hat \pi} = V_{s,r}^{\StringLength{w}}[\hat z]$.
\end{subproof}

\noindent
By \zcref{subprop_A_lex_greater_true_iff_preceeds_hat_pi}, $\PredHSucc[\Machine{M},\Omega]{A}(w,V_{s,r}^{\StringLength{w}}[\hat z]) = 0$, and hence, if $\hat z$ is even, it holds $\Predicate{B}_{\Machine{M},\Omega}(w,\hat z) = \PredHAccCurr[\Machine{M},\Omega]{A}(w,V_{s,r}^{\StringLength{w}}[\hat z])$, whereas, if $\hat z$ is odd, it holds $\Predicate{B}_{\Machine{M},\Omega}(w,\hat z) = \PredHRejCurr[\Machine{M},\Omega]{A}(w,V_{s,r}^{\StringLength{w}}[\hat z])$.
By \zcref{subprop_A_accept_reject_correct_at_hat_z}, and by considering all the possible combinations of cases for $\hat z$ odd/even and for $\PredHAccCurr[\Machine{M},\Omega]{A}(w,V_{s,r}^{\StringLength{w}}[\hat z])$ and $\PredHRejCurr[\Machine{M},\Omega]{A}(w,V_{s,r}^{\StringLength{w}}[\hat z])$ $\valtrue$/$\valfalse$, it is not hard to check that $\DoubleBoundedParOracle{\Machine{M}}{\Omega}{r(n)}{s(n)}(w) = 1$ iff the highest value of $z$ at which $\Predicate{B}_{\Machine{M},\Omega}(w,z) = 1$ is \emph{even}.

\medbreak

To conclude, consider the binary predicate $\Language{D}(w,z) = \Predicate{B}_{\Machine{M},\Omega}(w,z-1)$---remember that indices for $\Predicate{B}_{\Machine{M},\Omega}(w,z)$ start at~$0$.
By the properties of $\Predicate{B}_{\Machine{M},\Omega}$ shown above, $\Language{D}$ is a Hausdorff predicate characterizing $\Language{L}$.
Indeed, we have $\Language{D}(w,z) \geq \Language{D}(w,z+1)$ for all strings $w$ and integers $z \geq 1$, i.e., the requirement (\HausdSeqRequirementI) is met.
Moreover, by the definition of $\Predicate{B}_{\Machine{M},\Omega}$, for every string $w$, it holds that $\Language{D}(w,{(r(\StringLength{w})+1)}^{s(\StringLength{w})} + 1) = \Predicate{B}_{\Machine{M},\Omega}(w,{(r(\StringLength{w})+1)}^{s(\StringLength{w})}) = 0$.
The latter implies that both the requirement (\HausdSeqRequirementII) is met and that the length of the Hausdorff predicate is ${(r(n)+1)}^{s(n)}$.
Lastly, for every string $w$, it holds that $w \in \Language{L}$ iff $\HausdIndex{w}{\Language{D}}$ is odd.

\Proofsep

We now prove $\BoundedHausdCLASS{{(r(n)+1)}^{s(n)}}{\Oracle{\iNExpTime{(i+j)}}{\SigmaP{c-1}}} \subseteq \DoubleBoundedPlusParOracle{\iExpTime{k}}{\Oracle{\iNExpTime{(i+j-k)}}{\SigmaP{c-1}}}{r(n)}{s(n)}$.

Let $\Language{L}$ be a language characterized by an $\Oracle{\iNExpTime{(i+j)}}{\SigmaP{c-1}}$ Hausdorff predicate $\Language{D}$ of length ${(r(n) + 1)}^{s(n)}$.
We show that $\Language{L} \in \DoubleBoundedPlusParOracle{\iExpTime{k}}{\Oracle{\iNExpTime{(i+j-k)}}{\SigmaP{c-1}}}{r(n)}{s(n)}$ by exhibiting a $\iExpTime{k}$ oracle machine $\DoubleBoundedPlusParOracle{\Machine{M}}{?}{r(n)}{s(n)}$ and an oracle $\Omega \in \Oracle{\iNExpTime{(i+j-k)}}{\SigmaP{c-1}}$ such that $\Language{L} = \LanguageOf{\DoubleBoundedPlusParOracle{\Machine{M}}{\Omega}{r(n)}{s(n)}}$.

Consider first the oracle $\Omega$.
The machine $\Omega$ is designed to receive from $\Machine{M}$ pairs $\pair{\wt{w},z}$, where $\wt{w}$ is a padded version of the string~$w$ in input to $\Machine{M}$ to reach \iExponential{k} length (\Wrt~$\StringLength{w}$).
By this, since $\Omega \in \Oracle{\iNExpTime{(i+j-k)}}{\SigmaP{c-1}}$, the machine $\Omega$ can run for \iExponential{(i+j)} time \Wrt~$\StringLength{w}$.
Upon reception of the query $\pair{\wt{w},z}$, $\Omega$ simply ignores the padding of $\wt{w}$ and decides whether $\pair{w,z} \in \Language{D}$.
This can be done by $\Omega$, because $\Language{D} \in \Oracle{\iNExpTime{(i+j)}}{\SigmaP{c-1}}$.
For this reason, below we will regard $\Omega$ as an oracle for $\Language{D}$.

We now focus on the algorithm carried out by $\Machine{M}$.
Since $\Language{D}$ is such that $w \in \Language{L}$ iff the Hausdorff index $\HausdIndex{w}{\Language{D}}$ of $w$ \Wrt $\Language{D}$ is odd, $\Machine{M}$ can decide $\Language{L}$ via a procedure akin to a ``generalized'' binary search with the aid of the oracle $\Omega$ for $\Language{D}$.
We first give an intuition on the algorithm (see \zcref{fig_generalized_binary_search} for an exemplification), then we provide details and show that it can actually be executed in \iExponential{k} time by~$\Machine{M}$.

\begin{figure}[t]
  \centering
  \includegraphics[width=.98\textwidth]{./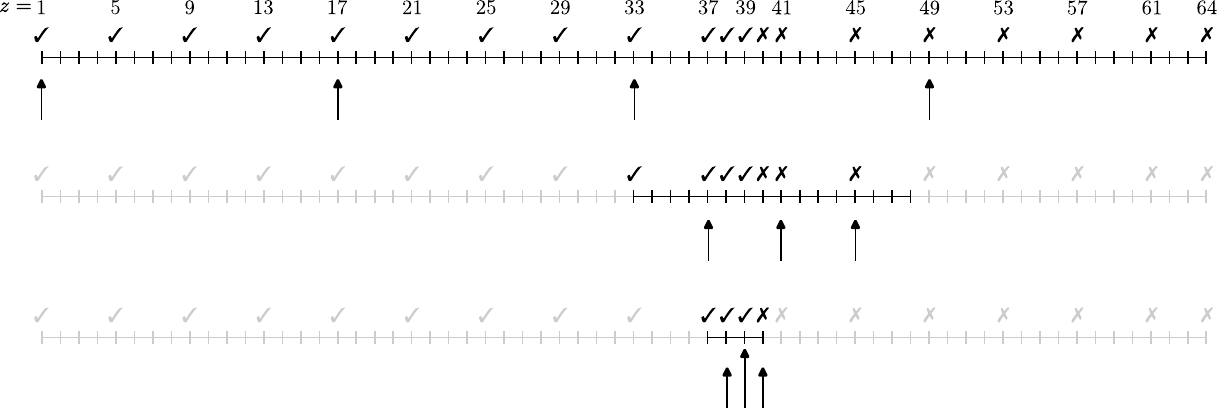}
  \caption{An example of the ``generalized'' binary search used in the proof of \zcref{theo_exp_nexp_containment}, where $r(n) = 3$, and $s(n) = 3$.
  The interval has $(3+1)^3 = 64$ points.
  Symbols `\cmark' and `\xmark' mean that the Hausdorff predicate $\Language{D}(w,z)$ at those positions is $\valtrue$ and $\valfalse$, respectively (to avoid a too cumbersome drawing not all symbols are reported); $39$ is the Hausdorff index $\HausdIndex{w}{\Language{D}}$ of $w$ \Wrt $\Language{D}$.
  The three levels of the drawing relate, from top to bottom, with the three rounds of parallel queries. The arrows individuate the ``samples'' taken at each round---notice the extra query in the first round. The greyed-out areas in the bottom two levels are the interval portions recognized as not relevant. In the bottom level, the longer arrow is the sample individuating~$\HausdIndex{w}{\Language{D}}$.}\label{fig_generalized_binary_search}
\end{figure}

The machine $\Machine{M}$ starts by ``sampling'', via parallel queries to its oracle, the truth value of $\Language{D}(w,\cdot)$ at $(r(\StringLength{w}) + 1)$\nbdash-many uniformly-spaced points within the interval $[1,{(r(\StringLength{w}) + 1)}^{s(\StringLength{w})}]$.
The first sample is at index $1$.
If $\Language{D}(w,1) = 0$, then $\Machine{M}$ can answer \nolbl, as $\HausdIndex{w}{\Language{D}} = 0$ is even.
If $\Language{D}(w,1) = 1$, by knowing the indices of the samples at which the value of $\Language{D}(w,\cdot)$ turns from $\valtrue$ to $\valfalse$, $\Machine{M}$ can delimit a smaller portion of the interval containing $\HausdIndex{w}{\Language{D}}$.
Observe that this smaller portion's length is $\frac{1}{r(\StringLength{w})+1}$ of the entire interval's length.

Within this smaller portion, $\Machine{M}$ can sample $r(\StringLength{w})$\nbdash-many uniformly-spaced points and ask to the oracle the truth value of $\Language{D}(w,\cdot)$ at the sampled points.
This allows $\Machine{M}$ to narrow down even more the portion of interest of the interval.
By doing this $s(\StringLength{w}) - 1$ times (including the first round), $\Machine{M}$ obtains a residual portion of the interval encompassing $\HausdIndex{w}{\Language{D}}$ and containing only $r(\StringLength{w})$\nbdash-many points that have not been sampled yet.
In the last round of queries, $\Machine{M}$ samples the remaining points in the last portion considered and individuates~$\HausdIndex{w}{\Language{D}}$.

\begin{algorithm}[!t]
\caption{The procedure discussed in the proof of \zcref{theo_exp_nexp_containment} carried out by the oracle machine $\DoubleBoundedPlusParOracle{\Machine{M}}{?}{r(n)}{s(n)}$. The machine's oracle is assumed to decide a Hausdorff predicate. The algorithm computes the Hausdorff index of the input string $w$ \Wrt the Hausdorff predicate decided by the oracle, and returns ``$\mathsf{accept}$'' if and only if the computed Hausdorff index is odd. By this, $\DoubleBoundedPlusParOracle{\Machine{M}}{?}{r(n)}{s(n)}$ decides the language $\Language{L}$ characterized by the Hausdorff predicate of length $\mathremovespaces{{(r(n)+1)}^{s(n)}}$ decided by its oracle.}
\label{alg_simulation_by_machine_from_Hausdorff_reduction}

\BlankLine

\KwAssumption{The oracle queried by $\DoubleBoundedPlusParOracle{\Machine{M}}{?}{r(n)}{s(n)}$ decides a Hausdorff predicate of length $\mathremovespaces{{(r(n)+1)}^{s(n)}}$.}

\BlankLine

\KwIn{A string $w$.}
\KwOut{$\mathsf{accept}$, if $w$'s Hausdorff index \Wrt the Hausdorff predicate decided by the oracle is odd;\newline $\mathsf{reject}$, otherwise.}

\BlankLine

\tcc*[h]{At \zcref{alg_generalized_binary_search_input_padding}, the value $k$ is a constant directly encoded in the algorithm depending on the time-bound of the machine meant to run the algorithm}

\BlankLine

\nonl \Proc{{$\DoubleBoundedPlusParOracle{\Machine{M}}{?}{r(n)}{s(n)}\!(w)$}}{

    $\mi{rw} \leftarrow r(\StringLength{w})$\;
    $\mi{sw} \leftarrow s(\StringLength{w})$\;
    $\wt{w} \leftarrow w$ padded to reach $\iExp{k}{\StringLength{w}}$ symbols \nllabel{alg_generalized_binary_search_input_padding} \;
    $\mi{sample}[0] \leftarrow 1$\;
    $\mi{step} \leftarrow {(\mi{rw} + 1)}^{\mi{sw} - 1}$ \nllabel{alg_generalized_binary_search_first_computation_of_step} \;
    
    \lFor{$v \leftarrow 1$ \KwTo $\mi{rw}$}{$\mi{sample}[v] \leftarrow \mi{sample}[v-1] + \mi{step}$}
    
    $\mi{ans}[0, \dots, \mi{rw}] \leftarrow$ \ParCalls ${\tup{\wt{w},\mi{sample}[0]},\tup{\wt{w},\mi{sample}[1]},\dots,\tup{\wt{w},\mi{sample}[\mi{rw}-1]},\tup{\wt{w},\mi{sample}[\mi{rw}]}}$ \nllabel{alg_generalized_binary_search_first_parallel_calls} \;
    
    \lIf(\tcc*[f]{The Hausdorff index of $w$ is $0$, i.e., even}){$\mi{ans}[0] = 0$}{\Return{$\mathsf{reject}$}}
    \lElse{$t\mhyphen{}max \leftarrow$ the maximum index $t$ at which $\mi{ans}[t] = 1$}
    
    \For{$u \leftarrow 2$ \KwTo $\mi{sw}$}{
        $\mi{sample}[0] \leftarrow \mi{sample}[t\mhyphen{}max]$\;
        $\mi{step} \leftarrow {(\mi{rw} + 1)}^{\mi{sw} - u}$ \nllabel{alg_generalized_binary_search_further_computation_of_step} \;
        \lFor{$v \leftarrow 1$ \KwTo $\mi{rw}$}{$\mi{sample}[v] \leftarrow \mi{sample}[v-1] + \mi{step}$}
        $\mi{ans}[1, \dots, \mi{rw}] \leftarrow$ \ParCalls ${\tup{\wt{w},\mi{sample}[1]},\tup{\wt{w},\mi{sample}[2]},\dots,\tup{\wt{w},\mi{sample}[\mi{rw}-1]},\tup{\wt{w},\mi{sample}[\mi{rw}]}}$ \nllabel{alg_generalized_binary_search_further_parallel_calls} \;
        
        \lIf{$\mi{ans}[1] = 0$}{$t\mhyphen{}max = 0$}
        \lElse{$t\mhyphen{}max \leftarrow$ the maximum index $t$ at which $\mi{ans}[t] = 1$}
    }
    
    \tcc*[h]{At this point, $\mi{sample}[t\mhyphen{}max]$ contains the Hausdorff index of $w$}
    
    \lIf(\tcc*[f]{The Hausdorff index of $w$ is odd}){$\mi{sample}[t\mhyphen{}max]$ is odd}{\Return{$\mathsf{accept}$}}
    \lElse(\tcc*[f]{The Hausdorff index of $w$ is even}){\Return{$\mathsf{reject}$}}
}
\end{algorithm}

\zcref[S]{alg_simulation_by_machine_from_Hausdorff_reduction} provides a more precise definition of the procedure carried out by $\Machine{M}$.
Below, we will show that \zcref{alg_simulation_by_machine_from_Hausdorff_reduction} is actually computable in \iExponential{k} time by $\DoubleBoundedPlusParOracle{\Machine{M}}{?}{r(n)}{s(n)}$.

By the definition of \zcref{alg_simulation_by_machine_from_Hausdorff_reduction}, when executing on~$w$, the machine $\Machine{M}$ performs $s(\StringLength{w})$ rounds of $r(\StringLength{w})$ parallel queries, but for the first round in which $r(\StringLength{w})+1$ queries are issued to the oracle.
Therefore, the constraints on the number of queries are met by $\Machine{M}$.
Notice that, the number of queries at each round is $O(\iExpPolFunctions{g})$, and the number of rounds is $O(\iExpPolFunctions{h})$.
Hence, the total number of queries issued by $\Machine{M}$, which is bounded by $(r(n) + 1) \cdot s(n)$, is $O(\iExpPolFunctions{\max \set{0,g,h}})$ (see \zcref{theo_multiplication_between_iterated_exponentials}.3).
Since $k \geq 0$ and $\max\set{g,h} \leq k$ by assumption, the \emph{number} of required queries can be issued by $\Machine{M}$ in \iExponential{k} time.
Thus, \zcref{alg_simulation_by_machine_from_Hausdorff_reduction} runs in \iExponential{k} time as long as the rest of its steps are feasible in \iExponential{k} time \Wrt~$\StringLength{w}$.
The algorithm performs two kinds of operations:
string manipulations and arithmetic operations.

The first kind of operations includes steps such as writing the queries on the query tape and checking the answers from the oracle (to find the maximum index at which a sample has received a \yesansw).
The complexity of these operations is basically related to the sizes of the strings to write and read.
In a round of queries (\zcref{alg_generalized_binary_search_first_parallel_calls,alg_generalized_binary_search_further_parallel_calls} of \zcref{alg_simulation_by_machine_from_Hausdorff_reduction}),
each query contains the string~$\wt{w}$ and the index of the sample.
Remember that~$\wt{w}$ is of \iExponential{k} size.
Also the size of the sample index is \iExponential{k}, because the biggest index is ${(r(\StringLength{w}) + 1)}^{s(\StringLength{w})}-1$, whose size, by \zcref{subprop_sizes_rp1_s_rp1TOs}, is 
\iExponential{k}.
Therefore, the size of a single query is bounded by a function $a(n) \in O(\iExpPolFunctions{k})$.
Since the algorithm writes in each round at most $r(n) + 1$ queries, which is $O(\iExpPolFunctions{g})$, the overall size of all the parallel queries submitted in a single round is bounded by $(r(n) + 1) \cdot a(n)$.
The latter is $O(\iExpPolFunctions{\max \set{0,g,k}})$ by \zcref{theo_multiplication_between_iterated_exponentials}.3, and so $O(\iExpPolFunctions{k})$ by $k \geq 0$ and $g \leq k$.
Thus, writing all the queries of a round is feasible in \iExponential{k} time.
Also reading from tape the oracle answers can be carried out in \iExponential{k} time, because $r(n) + 1$ answers are \iExponential{g}{}ly-many, and $g \leq k$.

We now show that the arithmetic operations to obtain the sample indices are feasible in \iExponential{k} time.

We have observed that the \emph{size} of a sample index is $O(\iExpPolFunctions{k})$.
Thus, simple arithmetic operations over them, such as assignments and sums, are feasible in time that is polynomial in the size of these values (see \zcref{sec_maths_complexity}, Arithmetic), and hence in \iExponential{k} time.
We are left to show that the most involved arithmetic operations, namely the exponentiations at \zcref{alg_generalized_binary_search_first_computation_of_step,alg_generalized_binary_search_further_computation_of_step} of \zcref{alg_simulation_by_machine_from_Hausdorff_reduction}, 
can also be carried out in \iExponential{k} time.

To this aim, consider the value ${(r(\StringLength{w}) + 1)}^{s(\StringLength{w}) - u}$.
By assumption, $r(\StringLength{w})$, and hence $r(\StringLength{w}) + 1$, can be computed in \iExponential{\max \set{0,g}} time, whereas $s(\StringLength{w})$, and hence $s(\StringLength{w}) - u$, for $1 \leq u \leq \mi{sw}$, can be computed in \iExponential{\max \set{0,h}} time.
By $k \geq 0$ and $\max\set{g,h} \leq k$, these can be computed in \iExponential{k} time.
Once $r(\StringLength{w}) + 1$ and $s(\StringLength{w}) - u$ have been computed, ${(r(\StringLength{w}) + 1)}^{s(\StringLength{w}) - u}$ can be computed in time that is polynomial in the result's size (see \zcref{sec_maths_complexity}, Exponentiation), which is \iExponential{k} by \zcref{subprop_sizes_rp1_s_rp1TOs}.
Thus, computing ${(r(\StringLength{w}) + 1)}^{s(\StringLength{w}) - u}$ is feasible in 
\iExponential{k} time \Wrt~$\StringLength{w}$.
\end{proof}

From the previous \zcref*[typeset=name,nocap]{theo_exp_nexp_containment}, we state the following \zcref*[typeset=name,nocap]{theo_general_chain_parallel_hausdorff,theo_general_chain_adaptive_hausdorff,theo_summary_generalized_equivalence_intermediate_levels_Hausdorff} individuating the Hausdorff characterizations of the intermediate levels of the iterated exponential hierarchies.
The proofs are deferred to \zcref{sec_detailed_proofs_charting_nexp_oracles}.

\begin{corollary}[store=ENContainmentParallelCalls]
\label{theo_general_chain_parallel_hausdorff}
Let $i,i',j,j' \geq 0$, $g \geq -1$, and $\ell$, be integers, with $i + j = \ell = i' + j'$ and $g \leq i, i'$.
Let $f(n) \in O(\iExpPolFunctions{g})$ be a function computable in \iExponential{\max \set{0,g}} time.
Then, for all integers $c \geq 1$,
\[
\BoundedHausdCLASS{f(n)}{\Oracle{\iNExpTime{\ell}}{\SigmaP{c-1}}} \subseteq
\BoundedParOracle{\iExpTime{i}}{\Oracle{\iNExpTime{j}}{\SigmaP{c-1}}}{f(n)} \subseteq
\BoundedHausdCLASS{f(n) + 1}{\Oracle{\iNExpTime{\ell}}{\SigmaP{c-1}}} \subseteq
\BoundedParOracle{\iExpTime{i'}}{\Oracle{\iNExpTime{j'}}{\SigmaP{c-1}}}{f(n) + 1}.
\]
\end{corollary}

\begin{corollary}[store=ENContainmentAdaptiveCalls]
\label{theo_general_chain_adaptive_hausdorff}
Let $i,i',j,j' \geq 0$, $g \geq -1$, and $\ell$, be integers, with $i + j = \ell = i' + j'$ and $g \leq i, i'$.
Let $f(n) \in O(\iExpPolFunctions{g})$ be a function computable in \iExponential{\max \set{0,g}} time.
Then, for all integers $c \geq 1$,
\[
\BoundedHausdCLASS{2^{f(n)-1}}{\Oracle{\iNExpTime{\ell}}{\SigmaP{c-1}}} \subseteq
\BoundedOracle{\iExpTime{i}}{\Oracle{\iNExpTime{j}}{\SigmaP{c-1}}}{f(n)} \subseteq
\BoundedHausdCLASS{2^{f(n)}}{\Oracle{\iNExpTime{\ell}}{\SigmaP{c-1}}} \subseteq
\BoundedOracle{\iExpTime{i'}}{\Oracle{\iNExpTime{j'}}{\SigmaP{c-1}}}{f(n)+1}.
\]
\end{corollary}

The following result is immediately proven from the two \zcref*[typeset=name,nocap]{theo_general_chain_parallel_hausdorff,theo_general_chain_adaptive_hausdorff} above and \zcref{theo_restricted_main_levels_longer_Hausdorff_as_expressive}.

\begin{corollary}
\label{theo_summary_generalized_equivalence_intermediate_levels_Hausdorff}
Let $i,i',j,j' \geq 0$, $g \geq -1$, and $\ell$, be integers, with $i + j = \ell = i' + j'$ and $g \leq i, i'$.
Then, for all integers $c \geq 1$,
\begin{gather*}
\SwapAboveDisplaySkip
  \BoundedParOracle{\iExpTime{i}}{\Oracle{\iNExpTime{j}}{\SigmaP{c-1}}}{\iExpPolFunctions{g}} = \BoundedHausdCLASS{\iExpPolFunctions{g}}{\Oracle{\iNExpTime{\ell}}{\SigmaP{c-1}}} = \BoundedParOracle{\iExpTime{i'}}{\Oracle{\iNExpTime{j'}}{\SigmaP{c-1}}}{\iExpPolFunctions{g}} \\
  \BoundedOracle{\iExpTime{i}}{\Oracle{\iNExpTime{j}}{\SigmaP{c-1}}}{\iExpPolFunctions{g}} = \BoundedHausdCLASS{\iExpPolFunctions{g+1}}{\Oracle{\iNExpTime{\ell}}{\SigmaP{c-1}}} = \BoundedOracle{\iExpTime{i'}}{\Oracle{\iNExpTime{j'}}{\SigmaP{c-1}}}{\iExpPolFunctions{g}} \\
  \BoundedOracle{\iExpTime{i}}{\Oracle{\iNExpTime{j}}{\SigmaP{c-1}}}{\iExpPolFunctions{g}} = \BoundedParOracle{\iExpTime{i'}}{\Oracle{\iNExpTime{j'}}{\SigmaP{c-1}}}{\iExpPolFunctions{g+1}} \rlap{\quad\text{(here, $g \leq i$ and $g \leq i' - 1$)}.}
\end{gather*}
\end{corollary}

We now show that deterministic oracle machines issuing a \emph{single} round of (at least polynomially-many) parallel queries are as powerful as deterministic oracle machines issuing a \emph{constant} number of \emph{multiple} rounds of (at least polynomially-many) parallel queries.
An intuition why at least polynomially-many queries are needed for the property to hold is provided after the corollary statement.
The proof is in \zcref{sec_detailed_proofs_charting_nexp_oracles}.
The analogue result restricted to $\PolHier$, i.e., $\ParBoundedOracle{\PTime}{\NPTime}{k} = \ParOracle{\PTime}{\NPTime}$ for every fixed integer $k \geq 1$, was proven by \citet{Buss1991}.

\begin{corollary}[store=ConstantRoundsParallelEqualsSingleRound]
\label{theo_general_constant_rounds_parallel_calls_equals_single_round}
Let $i,i',j,j',g \geq 0$, and $k \geq 1$, be integers, with $i + j = i' + j'$ and $g \leq i, i'$.
Then, for all integers $c \geq 1$,
\[\DoubleBoundedParOracle{\iExpTime{i}}{\Oracle{\iNExpTime{j}}{\SigmaP{c-1}}}{\iExpPolFunctions{g}}{k} = \BoundedParOracle{\iExpTime{i'}}{\Oracle{\iNExpTime{j'}}{\SigmaP{c-1}}}{\iExpPolFunctions{g}}.\]
\end{corollary}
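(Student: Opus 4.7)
The plan is to prove two inclusions. The inclusion $\BoundedParOracle{\iExpTime{i}}{\Oracle{\iNExpTime{j}}{\SigmaP{c}}}{\iExpPolFunctions{g}} \subseteq \DoubleBoundedParOracle{\iExpTime{i}}{\Oracle{\iNExpTime{j}}{\SigmaP{c}}}{\iExpPolFunctions{g}}{k}$ is immediate, because a single round of parallel queries is a trivial special case of $k$ rounds (the extra rounds can simply be dummy rounds that the machine ignores).

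For the nontrivial inclusion $\DoubleBoundedParOracle{\iExpTime{i}}{\Oracle{\iNExpTime{j}}{\SigmaP{c}}}{\iExpPolFunctions{g}}{k} \subseteq \BoundedParOracle{\iExpTime{i}}{\Oracle{\iNExpTime{j}}{\SigmaP{c}}}{\iExpPolFunctions{g}}$, I would invoke \cref{theo_exp_nexp_containment} with $r(n) \in O(\iExpPolFunctions{g})$ (where $g \leq i$) and $s(n) = k$, which is a constant and in particular in $O(\iExpPolFunctions{-1}) \subseteq O(\iExpPolFunctions{g})$ and computable trivially. The theorem yields
\[\DoubleBoundedParOracle{\iExpTime{i}}{\Oracle{\iNExpTime{j}}{\SigmaP{c}}}{r(n)}{k} \subseteq \BoundedHausdRedCLASS{(r(n)+1)^{k}}{\Oracle{\iNExpTime{(i+j)}}{\SigmaP{c}}}.\]
Taking the union over all $r(n) \in O(\iExpPolFunctions{g})$ gives containment into $\BoundedHausdRedCLASS{(r(n)+1)^{k}}{\Oracle{\iNExpTime{(i+j)}}{\SigmaP{c}}}$.

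The key observation is that raising $r(n)+1$ to a constant power $k$ does not escape the class $O(\iExpPolFunctions{g})$: by \cref{theo_polynomial_of_iterated_exponentials}, for $g \geq 0$ and constant $k$ we have $(r(n)+1)^k \in O(\iExpPolFunctions{\max\{0,g\}}) = O(\iExpPolFunctions{g})$. Hence
\[\BoundedHausdRedCLASS{(r(n)+1)^{k}}{\Oracle{\iNExpTime{(i+j)}}{\SigmaP{c}}} \subseteq \BoundedHausdRedCLASS{\iExpPolFunctions{g}}{\Oracle{\iNExpTime{(i+j)}}{\SigmaP{c}}}.\]
Finally, by \cref{theo_summary_generalized_equivalence_intermediate_levels_Hausdorff} (applied with the assumption $g \leq i$), the latter Hausdorff class equals $\BoundedParOracle{\iExpTime{i}}{\Oracle{\iNExpTime{j}}{\SigmaP{c}}}{\iExpPolFunctions{g}}$, which closes the chain of inclusions.

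The argument is essentially bookkeeping once \cref{theo_exp_nexp_containment} and its corollary are in hand; the only potential subtlety is verifying that $(r(n)+1)^k$ remains within $O(\iExpPolFunctions{g})$ when $k$ is a constant, but this is exactly what \cref{theo_polynomial_of_iterated_exponentials} guarantees. No obstacle is truly hard here: the heavy lifting has already been performed in the general containment \cref{theo_exp_nexp_containment}, and the constant number of rounds is absorbed into the Hausdorff length without leaving the ``ground'' family $\iExpPolFunctions{g}$, which is precisely the feature that makes the equality possible.
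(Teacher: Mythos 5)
Your proposal is correct and follows essentially the same route as the paper's own proof: the trivial direction, then applying \cref{theo_exp_nexp_containment} with a constant number of rounds, absorbing ${(r(n)+1)}^k$ into $\iExpPolFunctions{g}$ via \cref{theo_polynomial_of_iterated_exponentials}, and converting the resulting Hausdorff class back to the single-round parallel oracle class (the paper invokes \cref{theo_general_chain_parallel_hausdorff} where you invoke \cref{theo_summary_generalized_equivalence_intermediate_levels_Hausdorff}, which is just the summarized form of the same result).
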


Intuitively, in the statement of the previous \zcref*[typeset=name,nocap]{theo_general_constant_rounds_parallel_calls_equals_single_round}, the integer $g$ of $\iExpPolFunctions{g}$ is required to be at least $0$ for this reason.
Let $\Language{L} \in \DoubleBoundedParOracle{\iExpTime{i}}{\Oracle{\iNExpTime{j}}{\SigmaP{c-1}}}{\iExpPolFunctions{g}}{k}$ be a language.
By this, $\Language{L} \in \DoubleBoundedParOracle{\iExpTime{i}}{\Oracle{\iNExpTime{j}}{\SigmaP{c-1}}}{\iExp{g}{p(n)}}{k}$, for some polynomial $p(n)$, and hence
by \zcref{theo_exp_nexp_containment}, $\Language{L} \in \BoundedHausdCLASS{{(\iExp{g}{p(n)}+1)}^{k}}{\Oracle{\iNExpTime{(i+j)}}{\SigmaP{c-1}}}$.
Now observe that, by \zcref{theo_polynomial_of_iterated_exponentials}, when $g \geq 0$ the function $q(n) = {(\iExp{g}{p(n)}+1)}^{k}$ is still in $\iExpPolFunctions{g}$, that is, $q(n)$ does not grow asymptotically faster then $p(n)$.
On the contrary, if $g = -1$, and hence the bound on the queries is indeed $\log p(n)$, we have that $q(n) = {(\log p(n) + 1)^{k}}$ is actually polylogarithmic, which grows asymptotically faster than $\log p(n)$.

From the previous four \zcref*[typeset=name,nocap]{theo_general_chain_parallel_hausdorff,theo_general_chain_adaptive_hausdorff,theo_summary_generalized_equivalence_intermediate_levels_Hausdorff,theo_general_constant_rounds_parallel_calls_equals_single_round}, the following four are immediately proven by considering $j = j' = 0$.
By this, the statements are more closely related to the intermediate levels of the iterated exponential hierarchies.

\begin{corollary}
\label{theo_specific_chain_parallel_hausdorff}
Let $i \geq 0$ and $g \geq -1$ be integers, with $g \leq i$, and let $f(n) \in O(\iExpPolFunctions{g})$ be a function computable in \iExponential{\max \set{0,g}} time.
Then, for all integers $c \geq 1$,
\[
\BoundedHausdCLASS{f(n)}{\Oracle{\iNExpTime{i}}{\SigmaP{c-1}}} \subseteq
\BoundedParOracle{\iExpTime{i}}{\SigmaP{c}}{f(n)} \subseteq
\BoundedHausdCLASS{f(n) + 1}{\Oracle{\iNExpTime{i}}{\SigmaP{c-1}}} \subseteq
\BoundedParOracle{\iExpTime{i}}{\SigmaP{c}}{f(n) + 1}.
\]
\end{corollary}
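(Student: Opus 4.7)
The plan is to observe that this corollary is the direct specialization of the more general Corollary~\ref{theo_general_chain_parallel_hausdorff} to the case $j = j' = 0$ and $i' = i$. Under that instantiation we have $\ell = i + j = i = i' + j'$, so the length parameter $\iNExpTime{\ell}$ on the Hausdorff side becomes $\iNExpTime{i}$, matching the statement's middle classes $\BoundedHausdRedCLASS{f(n)}{\Oracle{\iNExpTime{i}}{\SigmaP{c}}}$ and $\BoundedHausdRedCLASS{f(n)+1}{\Oracle{\iNExpTime{i}}{\SigmaP{c}}}$. The hypotheses on $f(n)$ (computable in \iExponential{\max\set{0,g}} time, $g \leq i$) transfer verbatim, since they are exactly those demanded by Corollary~\ref{theo_general_chain_parallel_hausdorff}.

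The only remaining step is to recognize that the outer oracle classes simplify. Since $\iNExpTime{0} = \NPTime$ by definition, we have
\[
\Oracle{\iNExpTime{j}}{\SigmaP{c}} = \Oracle{\NPTime}{\SigmaP{c}} = \SigmaP{c+1},
\]
using the standard identity $\Oracle{\NPTime}{\SigmaP{c}} = \SigmaP{c+1}$ recalled in Section~\ref{sec_prelim_PH_ExpH}. The same identification applies on the right-hand class with $j' = 0$. Substituting these equalities into the chain of inclusions supplied by Corollary~\ref{theo_general_chain_parallel_hausdorff} yields exactly
\[
\BoundedHausdRedCLASS{f(n)}{\Oracle{\iNExpTime{i}}{\SigmaP{c}}} \subseteq
\BoundedParOracle{\iExpTime{i}}{\SigmaP{c+1}}{f(n)} \subseteq
\BoundedHausdRedCLASS{f(n)+1}{\Oracle{\iNExpTime{i}}{\SigmaP{c}}} \subseteq
\BoundedParOracle{\iExpTime{i}}{\SigmaP{c+1}}{f(n)+1},
\]
which is the claimed statement. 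There is no real obstacle here: the entire work is already carried out by Theorem~\ref{theo_exp_nexp_containment} (which powers Corollary~\ref{theo_general_chain_parallel_hausdorff}), and this corollary merely records the convenient polynomial-hierarchy-flavored reading of that general result when the inner nondeterministic exponential oracle is collapsed to $\NPTime$.
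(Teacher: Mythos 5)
Your proposal is correct and follows exactly the paper's own route: the paper obtains this corollary from Corollary~\ref{theo_general_chain_parallel_hausdorff} by setting $j = j' = 0$ (and $i' = i$, $\ell = i$), using $\iNExpTime{0} = \NPTime$ and the definitional identity $\Oracle{\NPTime}{\SigmaP{c}} = \SigmaP{c+1}$ to rewrite the outer oracle classes. Nothing is missing.
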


\begin{corollary}
Let $i \geq 0$ and $g \geq -1$ be integers, with $g \leq i$, and let $f(n) \in O(\iExpPolFunctions{g})$ be a function computable in \iExponential{\max \set{0,g}} time.
Then, for all integers $c \geq 1$,
\[\BoundedHausdCLASS{2^{f(n)-1}}{\Oracle{\iNExpTime{i}}{\SigmaP{c-1}}} \subseteq
\BoundedOracle{\iExpTime{i}}{\SigmaP{c}}{f(n)} \subseteq
\BoundedHausdCLASS{2^{f(n)}}{\Oracle{\iNExpTime{i}}{\SigmaP{c-1}}} \subseteq
\BoundedOracle{\iExpTime{i}}{\SigmaP{c}}{f(n) + 1}.\]
\end{corollary}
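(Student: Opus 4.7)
The plan is to derive this statement as a direct specialization of \cref{theo_general_chain_adaptive_hausdorff}, which has the same shape but is parameterized by a second pair of exponent indices $j, j'$ with $i + j = \ell = i' + j'$. I would set $j = 0$, $i' = i$, $j' = 0$, so that $\ell = i + 0 = i = i + 0$, and the hypotheses on $g$ and on $f(n)$ carry over verbatim. All four classes in the conclusion of \cref{theo_general_chain_adaptive_hausdorff} then collapse to the four classes appearing in this statement, and the three containments follow immediately.

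The only bookkeeping required is to verify the class identifications produced by the specialization. On the Hausdorff side, $\Oracle{\iNExpTime{\ell}}{\SigmaP{c}}$ becomes $\Oracle{\iNExpTime{i}}{\SigmaP{c}}$ since $\ell = i$, giving exactly $\BoundedHausdRedCLASS{2^{f(n)-1}}{\Oracle{\iNExpTime{i}}{\SigmaP{c}}}$ and $\BoundedHausdRedCLASS{2^{f(n)}}{\Oracle{\iNExpTime{i}}{\SigmaP{c}}}$. On the oracle-machine side, the inner oracle $\Oracle{\iNExpTime{j}}{\SigmaP{c}}$ and $\Oracle{\iNExpTime{j'}}{\SigmaP{c}}$ both reduce to $\Oracle{\iNExpTime{0}}{\SigmaP{c}} = \Oracle{\NPTime}{\SigmaP{c}} = \SigmaP{c+1}$, using $\iNExpTime{0} = \NPTime$ and the standard oracle-based definition of the \PHText. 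Hence $\BoundedOracle{\iExpTime{i}}{\Oracle{\iNExpTime{j}}{\SigmaP{c}}}{f(n)}$ becomes $\BoundedOracle{\iExpTime{i}}{\SigmaP{c+1}}{f(n)}$, and analogously $\BoundedOracle{\iExpTime{i'}}{\Oracle{\iNExpTime{j'}}{\SigmaP{c}}}{f(n)+1}$ becomes $\BoundedOracle{\iExpTime{i}}{\SigmaP{c+1}}{f(n)+1}$.

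There is no substantive obstacle here: once the parameters are instantiated and the identifications $\iNExpTime{0} = \NPTime$ and $\Oracle{\NPTime}{\SigmaP{c}} = \SigmaP{c+1}$ are invoked, the three displayed containments are precisely the three containments of \cref{theo_general_chain_adaptive_hausdorff} rewritten in the specialized notation. I would therefore present the proof as a two-line invocation of the general corollary, explicitly recording the parameter choices $j = j' = 0$ and $i' = i$ and the resulting class-name rewritings, so that the reader can see that each of the three $\subseteq$-signs in this statement matches one of the three $\subseteq$-signs in \cref{theo_general_chain_adaptive_hausdorff}.
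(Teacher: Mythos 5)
Your proposal is correct and is exactly the paper's own route: the paper obtains this corollary from \cref{theo_general_chain_adaptive_hausdorff} by setting $j = j' = 0$ (hence $i' = i = \ell$) and using $\Oracle{\iNExpTime{0}}{\SigmaP{c}} = \Oracle{\NPTime}{\SigmaP{c}} = \SigmaP{c+1}$. Your parameter instantiation and class-name rewritings match the intended derivation, so nothing further is needed.
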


\begin{corollary}[store=SummaryEqIntermLevelsHausdorff]
\label{theo_summary_equivalence_intermediate_levels_Hausdorff}
Let $i \geq 0$ and $g \geq -1$ be integers, with $g \leq i$.
Then, for all integers $c \geq 1$,
\begin{align*}
  \BoundedParOracle{\iExpTime{i}}{\SigmaP{c}}{\iExpPolFunctions{g}} &= \BoundedHausdCLASS{\iExpPolFunctions{g}}{\Oracle{\iNExpTime{i}}{\SigmaP{c-1}}} \\
  \BoundedOracle{\iExpTime{i}}{\SigmaP{c}}{\iExpPolFunctions{g}} &= \BoundedHausdCLASS{\iExpPolFunctions{g+1}}{\Oracle{\iNExpTime{i}}{\SigmaP{c-1}}} \\
  \BoundedOracle{\iExpTime{i}}{\SigmaP{c}}{\iExpPolFunctions{g}} &= \BoundedParOracle{\iExpTime{i}}{\SigmaP{c}}{\iExpPolFunctions{g+1}} \rlap{\qquad\text{(here, $g \leq i - 1$).}}
\end{align*}
\end{corollary}

\begin{corollary}
Let $i, g \geq 0$, and $k \geq 1$, be integers, with $g \leq i$.
Then, for all integers $c \geq 1$,
\[\DoubleBoundedParOracle{\iExpTime{i}}{\SigmaP{c}}{\iExpPolFunctions{g}}{k} = \BoundedParOracle{\iExpTime{i}}{\SigmaP{c}}{\iExpPolFunctions{g}}.\]
\end{corollary}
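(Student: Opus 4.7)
The plan is to derive this statement directly as the $j = 0$ specialization of the immediately preceding corollary (\cref{theo_general_constant_rounds_parallel_calls_equals_single_round}), which asserts
\[\DoubleBoundedParOracle{\iExpTime{i}}{\Oracle{\iNExpTime{j}}{\SigmaP{c}}}{\iExpPolFunctions{g}}{k} = \BoundedParOracle{\iExpTime{i}}{\Oracle{\iNExpTime{j}}{\SigmaP{c}}}{\iExpPolFunctions{g}}\]
for integers $i,j,g \geq 0$ with $g \leq i$, and $k \geq 1$. Since the target statement differs from the general one only by having $j = 0$ on both sides, it suffices to translate the oracle class obtained at $j = 0$ into the language of the \PHText.

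Concretely, by the definition of the iterated-exponential complexity classes given in \cref{sec_prelim_central_complexity_classes}, we have $\iNExpTime{0} = \NPTime$. Hence, the oracle class $\Oracle{\iNExpTime{0}}{\SigmaP{c}}$ simplifies to $\Oracle{\NPTime}{\SigmaP{c}}$, and the latter equals $\SigmaP{c+1}$ by the definition of the main levels of the \PHText recalled in \cref{sec_prelim_PH_ExpH}. Substituting $\Oracle{\iNExpTime{0}}{\SigmaP{c}} = \SigmaP{c+1}$ on both sides of the general equality produces the desired identity.

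The only subtlety is that the general corollary carries the side-condition $g \leq i$, which is absent from the statement here. This is not a genuine obstacle: for $g \leq i$ the reduction is immediate, and for $g > i$ a $\iExpTime{i}$ oracle machine simply cannot, within its time budget, issue more than $\iExpPolFunctions{i}$-many queries, so the bound $\iExpPolFunctions{g}$ is vacuous and both sides of the asserted equality coincide with $\DoubleBoundedParOracle{\iExpTime{i}}{\SigmaP{c+1}}{\iExpPolFunctions{i}}{k}$ and $\BoundedParOracle{\iExpTime{i}}{\SigmaP{c+1}}{\iExpPolFunctions{i}}$ respectively, to which the general corollary already applies. Thus no new technical work is required beyond invoking the result just proved and unfolding the notation at $j = 0$.
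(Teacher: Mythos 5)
Your proof matches the paper's: the corollary is obtained there precisely by setting $j = 0$ in \cref{theo_general_constant_rounds_parallel_calls_equals_single_round} and unfolding $\Oracle{\iNExpTime{0}}{\SigmaP{c}} = \Oracle{\NPTime}{\SigmaP{c}} = \SigmaP{c+1}$, with no further argument given. Your extra remark for $g > i$ (where the per-round bound $\iExpPolFunctions{g}$ is vacuous for an $\iExpTime{i}$ machine, so both classes coincide with the $g = i$ case) is a correct and welcome patch for the side condition $g \leq i$ that the stated corollary silently drops.
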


By \zcref{theo_specific_chain_parallel_hausdorff}, we obtain the following relationship between successive levels of the \BHText over $\Oracle{\iNExpTime{i}}{\SigmaP{c-1}}$.
The analogue result restricted to the \BHText over \NPTime is in \cite{KoblerSW87,Wagner1990,Beigel1991}.
Their proof of the restricted result relies on the mind changes technique.
Our general result stems from \zcref{theo_exp_nexp_containment}, whose proof is entirely based on machine computation analysis.
The proof is in \zcref{sec_detailed_proofs_charting_nexp_oracles}.

\begin{theorem}[store=ChainBooleanHierarchy]
Let $i \geq 0$ and $k \geq 1$ be integers.
Then, for all integers $c \geq 1$,
\[(\BoundedHausdCLASS{k}{\Oracle{\iNExpTime{i}}{\SigmaP{c-1}}} \cup \ComplementPrefixKerned\BoundedHausdCLASS{k}{\Oracle{\iNExpTime{i}}{\SigmaP{c-1}}}) \subseteq \BoundedParOracle{\iExpTime{i}}{\SigmaP{c}}{k} \subseteq (\BoundedHausdCLASS{k+1}{\Oracle{\iNExpTime{i}}{\SigmaP{c-1}}} \cap \ComplementPrefixKerned\BoundedHausdCLASS{k+1}{\Oracle{\iNExpTime{i}}{\SigmaP{c-1}}}).\]
\end{theorem}

\subsubsection%
[N\texorpdfstring{${i}$}{i}E\texorpdfstring{{\smaller XP}}{XP} Oracle Machines with N\texorpdfstring{${j}$}{j}E\texorpdfstring{{\smaller XP}}{XP} Oracles]%
{N\texorpdfstring{$\boldsymbol{i}$}{i}E\texorpdfstring{{\smaller XP}}{XP} Oracle Machines with N\texorpdfstring{$\boldsymbol{j}$}{j}E\texorpdfstring{{\smaller XP}}{XP} Oracles}
\label{sec_details_NEXP_NEXP}

We now focus on the classes $\Oracle{\iNExpTime{i}}{\Oracle{\iNExpTime{j}}{\SigmaP{c-1}}}$.
For $i \geq 0$ and $j = 0$, we have $\Oracle{\iNExpTime{i}}{\Oracle{\iNExpTime{j}}{\SigmaP{c-1}}} = \Oracle{\iNExpTime{i}}{\SigmaP{c}}$, for all integers $c \geq 1$.
Hence, let us now consider the case in which $i \geq 0$ and $j \geq 1$.
The result below links classes such as $\Oracle{\iNExpTime{i}}{\Oracle{\iNExpTime{j}}{\SigmaP{c-1}}}$ to Hausdorff classes, and hence to the intermediate levels of the iterated exponential hierarchies (see \zcref{theo_exp_nexp_containment}).
In fact, the equivalence between $\PNExp$ and $\NPNExp$, and the resulting collapse of the \SEHText, stems from \zcref{theo_exp_nexp_containment,theo_nexp_nexp_containment} (see \zcref{sec_top_seh} for details).

\begin{theorem}[store=NNcontainment]
\label{theo_nexp_nexp_containment}
Let $i \geq 0$, $j \geq 1$, and $k$, be integers, with $i \leq k \leq i + j$.
Then, for all integers $c \geq 1$,
\[\Oracle{\iNExpTime{i}}{\Oracle{\iNExpTime{j}}{\SigmaP{c-1}}} \subseteq \BoundedHausdCLASS{\iExpPolFunctions{i+1}}{\Oracle{\iNExpTime{(i+j)}}{\SigmaP{c-1}}} \subseteq \BoundedOracle{\iNExpTime{k}}{\bigl(\BoundedHausdCLASS{{\scriptstyle 2}}{\Oracle{\iNExpTime{(i+j-k)}}{\SigmaP{c-1}}}\bigr)}{1} \subseteq \BoundedParOracle{\iNExpTime{k}}{\Oracle{\iNExpTime{(i+j-k)}}{\SigmaP{c-1}}}{2}.\]
\end{theorem}

\begin{proof}
We start by proving $\Oracle{\iNExpTime{i}}{\Oracle{\iNExpTime{j}}{\SigmaP{c-1}}} \subseteq \BoundedHausdCLASS{\iExpPolFunctions{i+1}}{\Oracle{\iNExpTime{(i+j)}}{\SigmaP{c-1}}}$.

Let $\Language{L} \in \Oracle{\iNExpTime{i}}{\Oracle{\iNExpTime{j}}{\SigmaP{c-1}}}$ be a language.
We prove $\Language{L} \in \BoundedHausdCLASS{\iExpPolFunctions{i+1}}{\Oracle{\iNExpTime{(i+j)}}{\SigmaP{c-1}}}$ by showing that $\Language{L}$ can be characterized by an $\Oracle{\iNExpTime{(i+j)}}{\SigmaP{c-1}}$ Hausdorff predicate of length $\iExp{i+1}{p(n)}$ for some polynomial $p(n) \in \PolFunctions$.

Since $\Language{L} \in \Oracle{\iNExpTime{i}}{\Oracle{\iNExpTime{j}}{\SigmaP{c-1}}}$, there are a $\iNExpTime{i}$ oracle machine $\Oracle{\Machine{M}}{?}$ and an $\Oracle{\iNExpTime{j}}{\SigmaP{c-1}}$ oracle $\Omega$ such that $\Language{L} = \LanguageOf{\Oracle{\Machine{M}}{\Omega}}$.
Because $\Machine{M} \in \iNExpTime{i}$, the running time of $\Machine{M}$ is bounded by $\iExp{i}{p_{\Machine{M}}(n)}$, for some polynomial $p_{\Machine{M}}(n)$.
By this, the \emph{size} of the queries issued by $\Machine{M}$ is bounded by $\iExp{i}{p_{\Machine{M}}(n)}$, as well.
Therefore, the \emph{number} of distinct queries (over binary strings) of size at most $\iExp{i}{p_{\Machine{M}}(n)}$ that $\Machine{M}$ might ever ask 
is $\sum_{\ell=0}^{\iExp{i}{p_{\Machine{M}}(n)}} 2^\ell = 2 \cdot \iExp{(i+1)}{p_{\Machine{M}}(n)} - 1$.

Let us define the following three predicates, which will be combined to define a predicate that serves as the basis for the Hausdorff predicate characterizing~$\Language{L}$.
Below, $w$ is a string and $z$ is a non\nbdash-negative integer.

\begin{itemize}[noitemsep]
  \item $\PredHSucc[\Machine{M},\Omega]{C}(w,z)$: $\valtrue$ iff there are \emph{more} than $z$ distinct strings, of length at most $\iExp{i}{p_{\Machine{M}}(\StringLength{w})}$, accepted by $\Omega$;

  \item $\PredHAccCurr[\Machine{M},\Omega]{A}(w,z)$: $\valtrue$ iff there is a set $Y$ of $z$ distinct strings, of length at most $\iExp{i}{p_{\Machine{M}}(\StringLength{w})}$, accepted by $\Omega$, and there is an \emph{accepting} \ounawarelegalemph computation $\pi$ for $\Oracle{\Machine{M}}{?}(w)$ such that every query receiving in $\pi$ a \yesansw belongs to $Y$, and every query receiving a \noansw in $\pi$ does not belong to $Y$;
      and

  \item $\PredHRejCurr[\Machine{M},\Omega]{A}(w,z)$: $\valtrue$ iff there is a set $Y$ of $z$ distinct strings, of length at most $\iExp{i}{p_{\Machine{M}}(\StringLength{w})}$, accepted by $\Omega$, and there is a \emph{rejecting} \ounawarelegalemph computation $\pi$ for $\Oracle{\Machine{M}}{?}(w)$ such that every query receiving in $\pi$ a \yesansw belongs to $Y$, and every query receiving a \noansw in $\pi$ does not belong to $Y$.
\end{itemize}

The three above predicates can be shown in $\Oracle{\iNExpTime{(i+j)}}{\SigmaP{c-1}}$ \Wrt~$\StringLength{w}$ only.
First, remember  
that the number of distinct (binary) strings of length at most $\iExp{i}{p_{\Machine{M}}(\StringLength{w})}$ is $2 \cdot \iExp{(i+1)}{p_{\Machine{M}}(\StringLength{w})} - 1$.
The latter, for $\StringLength{w} \geq 1$, is strictly less than $\iExp{i+1}{2 \cdot p_{\Machine{M}}(\StringLength{w})}$. 
Hence, predicates $\PredHSucc[\Machine{M},\Omega]{C}(w,z)$, $\PredHAccCurr[\Machine{M},\Omega]{A}(w,z)$, and $\PredHRejCurr[\Machine{M},\Omega]{A}(w,z)$, are trivially false when $z \geq \iExp{i+1}{2 \cdot p_{\Machine{M}}(\StringLength{w})}$. 
By this, we only need to consider pairs $\pair{w,z}$ where the \emph{value} of $z$ is \iExponential{(i+1)}{}ly bounded in~$\StringLength{w}$.
Such values of $z$ have a binary representation whose \emph{size} is \iExponential{i} in~$\StringLength{w}$.

Let us now consider $\PredHSucc[\Machine{M},\Omega]{C}(w,z)$.
To answer \yeslbl, we can proceed as follows.
First, we guess $z+1$ distinct strings $q$ of length at most $\iExp{i}{p_{\Machine{M}}(\StringLength{w})}$, i.e., \iExponential{(i+1)}{}ly-many \iExponential{i}{}ly-long strings, together with the respective certificates witnessing $\Omega(q) = 1$.
These certificates are accepting \oawarelegalemph computations for the $\iNExpTime{j}$ ``part'' of $\Omega$---remember that $\Omega \in \Oracle{\iNExpTime{j}}{\SigmaP{c-1}}$;
i.e., we leave out from the guess the part of computation of $\Omega$ associated with the $\SigmaP{c-1}$ oracle.
These accepting computations for $\Omega$ are \iExponential{(i+j)}{}ly-long, because $\Omega$ may receive \iExponential{i}{}-long queries from $\Machine{M}$.
Hence, these are \iExponential{(i+1)}{}ly-many \iExponential{(i+j)}{}ly-long certificates.
Because $j \geq 1$, the overall guess is feasible in nondeterministic \iExponential{(i+j)} time.
We conclude by checking that the guessed strings $q$ are distinct (feasible in \iExponential{(i+1)} time) and the guessed certificates are indeed valid (feasible in \iExponential{(i+j)} time with the aid of an oracle in $\SigmaP{c-1}$).

To decide $\PredHAccCurr[\Machine{M},\Omega]{A}(w,z)$ (resp., $\PredHRejCurr[\Machine{M},\Omega]{A}(w,z)$),
we guess a set $Y$ of $z$ distinct strings $q$ of length at most $\iExp{i}{p_{\Machine{M}}(\StringLength{w})}$, i.e., \iExponential{(i+1)}{}ly-many \iExponential{i}{}ly-long strings, together with the respective \iExponential{(i+j)}{}ly-long certificates witnessing $\Omega(q) = 1$ (see above).
We also guess an accepting (resp., a rejecting) \ounawarelegal computation $\pi$ for $\Oracle{\Machine{M}}{?}(w)$, which is feasible in nondeterministic \iExponential{i} time, as $\Machine{M} \in \iNExpTime{i}$.
The entire guess can hence be carried out in nondeterministic \iExponential{(i+j)} time, because $j \geq 1$ (see above).
Then, we check these four conditions:
(i)~that the guessed strings in $Y$ are distinct (feasible in \iExponential{(i+1)} time);
(ii)~that all the certificates for the strings in $Y$ are valid (feasible in \iExponential{(i+j)} time with the aid of a $\SigmaP{c-1}$ oracle);
(iii)~that $\pi$ is actually an accepting (resp., a rejecting) \ounawarelegal computation $\pi$ for $\Oracle{\Machine{M}}{?}(w)$ (feasible in \iExponential{i} time); and
(iv)~that all queries receiving a \yesansw in $\pi$ belong to $Y$, and all queries receiving a \noansw in $\pi$ do not belong to $Y$ (feasible in \iExponential{(i+1)} time).
Observe that the entire computation can hence be carried out in \iExponential{(i+j)} time, because $j \geq 1$.

\medbreak

We now show that, via the three predicates above, we can define a binary predicate $\Predicate{B}_{\Machine{M},\Omega}(w,z) \subseteq{\HausdPredDomain}$, 
from which we will obtain a Hausdorff predicate $\Language{D}(w,z)$ of length $\iExp{i+1}{2 \cdot p_{\Machine{M}}(n)}$.
Similarly to Hausdorff predicates, $\Predicate{B}_{\Machine{M},\Omega}(w,z)$ will be such that $\Predicate{B}_{\Machine{M},\Omega}(w,z) \geq \Predicate{B}_{\Machine{M},\Omega}(w,z + 1)$, but in $\Predicate{B}_{\Machine{M},\Omega}(w,z)$ we will have $z \geq 0$, instead of $z \geq 1$ like in Hausdorff predicates (see \zcref{def_Hausdorff_predicate}).
The Hausdorff predicate $\Language{D}(w,z)$ obtained from $\Predicate{B}_{\Machine{M},\Omega}(w,z)$ will just be a simple ``shift'' of $\Predicate{B}_{\Machine{M},\Omega}$, i.e., we will have $\Language{D}(w,z) = \Predicate{B}_{\Machine{M},\Omega}(w,z-1)$.
For this reason, we well need to show that $w \in \Language{L}$ iff the highest value of $z$ at which $\Predicate{B}_{\Machine{M},\Omega}(w,z) = 1$ is \emph{even}.
We focus on $\Predicate{B}_{\Machine{M},\Omega}(w,z)$ just to streamline the presentation.

Via the previous three predicates, we define the predicate $\Predicate{B}_{\Machine{M},\Omega}(w,z)$ as follows: 
\[
\Predicate{B}_{\Machine{M},\Omega}(w,z) =
\begin{cases}
  \lnot \PredHSucc[\Machine{M},\Omega]{C}(w,z) \to \PredHAccCurr[\Machine{M},\Omega]{A}(w,z), 
  & \text{if } z < \iExp{i+1}{2 \cdot p_{\Machine{M}}(\StringLength{w})} \text{ and } z \text{ is even} \\
  \lnot \PredHSucc[\Machine{M},\Omega]{C}(w,z) \to \PredHRejCurr[\Machine{M},\Omega]{A}(w,z), 
  & \text{if } z < \iExp{i+1}{2 \cdot p_{\Machine{M}}(\StringLength{w})} \text{ and } z \text{ is odd} \\
  \valfalse,
  & \text{if } z \geq \iExp{i+1}{2 \cdot p_{\Machine{M}}(\StringLength{w})}.
\end{cases}
\]
We claim that $\Predicate{B}_{\Machine{M},\Omega}$ is in $\Oracle{\iNExpTime{(i+j)}}{\SigmaP{c-1}}$ \Wrt to $\StringLength{w}$ only, as the following procedure shows.
First, we compute $\mi{max} = \iExp{i+1}{2 \cdot p_{\Machine{M}}(\StringLength{w})}$ in \iExponential{i} time (see \zcref{sec_maths_complexity}, Iterated exponentials of polynomials), and hence in \iExponential{(i+j)} time, as $j \geq 1$.
We show that comparing $z$ with $\mi{max}$ can be done in \iExponential{i} time.
Indeed, since $z$ and $\mi{max}$ are represented in their canonical binary form, if the size of the representation of $z$ is greater than that of $\mi{max}$, then surely $z > \mi{max}$---%
notice here that it is enough to verify whether the representation of $z$ has even just a single additional bit compared to that of $\mi{max}$.
Conversely, if the representation of $z$ is not longer than that of $\mi{max}$, then we actually compare their values.
Because the representation of $\mi{max}$ has an \iExponential{i} size, comparing $z$ with $\mi{max}$ is feasible in \iExponential{i} time \Wrt $\StringLength{w}$ (and hence in \iExponential{(i+j)} time, as $j \geq 1$).
Regarding the answer to provide, if $z$ is not less than $\mi{max}$, we answer $\valfalse$.
Otherwise, by $\lnot a \to b \equiv a \lor b$, we answer $\PredHSucc[\Machine{M},\Omega]{C}(w,z) \lor \PredHAccCurr[\Machine{M},\Omega]{A}(w,z)$ if $z$ is even, and we answer $\PredHSucc[\Machine{M},\Omega]{C}(w,z) \lor \PredHRejCurr[\Machine{M},\Omega]{A}(w,z)$ if $z$ is odd.
To conclude, since the answer is the disjunction of two predicates in $\Oracle{\iNExpTime{(i+j)}}{\SigmaP{c-1}}$, which is closed under disjunction, the answer can be obtained in $\Oracle{\iNExpTime{(i+j)}}{\SigmaP{c-1}}$.

\medbreak

We start by proving that $\Predicate{B}_{\Machine{M},\Omega}(w,z) \geq \Predicate{B}_{\Machine{M},\Omega}(w,z+1)$, for every string $w$ and $z \geq 0$.
Let $Q$ be the set of all (binary) strings of length at most $\iExp{i}{p_{\Machine{M}}(\StringLength{w})}$.
Let $\hat z \geq 0$ be the \emph{number} of strings in $Q$ accepted by $\Omega$.
By definition of $\PredHSucc[\Machine{M},\Omega]{C}(w,z)$, $\PredHAccCurr[\Machine{M},\Omega]{A}(w,z)$, and $\PredHRejCurr[\Machine{M},\Omega]{A}(w,z)$, the next two properties are immediately proven.

\begin{proofsubproperty}
\label{subprop_C_true_iff_z_smaller_hat_z}
For every integer $z \geq 0$, $\PredHSucc[\Machine{M},\Omega]{C}(w,z) = 1 \Leftrightarrow (z < \hat z)$.
\end{proofsubproperty}

\begin{proofsubproperty}
\label{subprop_if_z_greater_hat_z_A_accept_A_reject_false}
For every integer $z > \hat z$, $\PredHAccCurr[\Machine{M},\Omega]{A}(w,z) = \PredHRejCurr[\Machine{M},\Omega]{A}(w,z) = 0$.
\end{proofsubproperty}

\noindent
By \zcref{subprop_C_true_iff_z_smaller_hat_z}, for all $z$ with $0 \leq z < \hat z$, we have $\Predicate{B}_{\Machine{M},\Omega}(w,z) = 1$.
Furthermore, by \zcref{subprop_C_true_iff_z_smaller_hat_z,subprop_if_z_greater_hat_z_A_accept_A_reject_false}, for all $z$ with $z > \hat z$, we have $\Predicate{B}_{\Machine{M},\Omega}(w,z) = 0$.
Thus, irrespective of the actual truth value of $\Predicate{B}_{\Machine{M},\Omega}(w, \hat z)$, for all $z \geq 0$, it holds that $\Predicate{B}_{\Machine{M},\Omega}(w,z) \geq \Predicate{B}_{\Machine{M},\Omega}(w,z+1)$.

\medbreak

We now show that $w \in \Language{L}$ iff the highest value of $z$ at which $\Predicate{B}_{\Machine{M},\Omega}(w,z) = 1$ is \emph{even}---remember that $\Language{L}$ will be characterized by a Hausdorff predicate $\Language{D}(w,z) = \Predicate{B}_{\Machine{M},\Omega}(w,z-1)$.

We focus on the truth value of $\Predicate{B}_{\Machine{M},\Omega}(w,\hat z)$. 
By \zcref{subprop_C_true_iff_z_smaller_hat_z}, $\PredHSucc[\Machine{M},\Omega]{C}(w,\hat z) = 0$.
Therefore, $\Predicate{B}_{\Machine{M},\Omega}(w,\hat z)$ equals either $\PredHAccCurr[\Machine{M},\Omega]{A}(w,\hat z)$ or $\PredHRejCurr[\Machine{M},\Omega]{A}(w,\hat z)$, depending on $\hat z$ being even or odd, respectively.
We hence analyze the truth value of $\PredHAccCurr[\Machine{M},\Omega]{A}(w,\hat z)$ and $\PredHRejCurr[\Machine{M},\Omega]{A}(w,\hat z)$.
In the proof of the \zcref*[typeset=name,nocap]{subprop_A_accept_reject_correct_at_hat_z_NEXP} below, we use a technique reminiscent of the census and the pseudo-complement adopted by \citet{Mahaney82} and \citet{Kadin1989}, respectively.

\begin{proofsubproperty}
\label{subprop_A_accept_reject_correct_at_hat_z_NEXP}
$\PredHAccCurr[\Machine{M},\Omega]{A}(w,\hat z) = \Oracle{\Machine{M}}{\Omega}(w)$ and $\PredHRejCurr[\Machine{M},\Omega]{A}(w,\hat z) = 1 - \Oracle{\Machine{M}}{\Omega}(w)$.
\end{proofsubproperty}

\begin{subproof}
We show $\PredHAccCurr[\Machine{M},\Omega]{A}(w,\hat z) = \Oracle{\Machine{M}}{\Omega}(w)$.
Proving $\PredHRejCurr[\Machine{M},\Omega]{A}(w,\hat z) = 1 - \Oracle{\Machine{M}}{\Omega}(w)$ is similar.

\ProofRightarrowItem
Assume $\PredHAccCurr[\Machine{M},\Omega]{A}(w,\hat z) = 1$, we prove $\Oracle{\Machine{M}}{\Omega}(w) = 1$.
By $\PredHAccCurr[\Machine{M},\Omega]{A}(w,\hat z) = 1$, there are a subset $Y \subseteq Q$ of strings accepted by $\Omega$, with $\SetSize{Y} = \hat z$, and an accepting \ounawarelegal computation $\pi$ for $\Oracle{\Machine{M}}{?}(w)$, such that every query $q$ appearing in $\pi$ receives in $\pi$ a \yesansw iff $q$ belongs to $Y$.
Remember that $\hat z$ is the exact number of strings in $Q$ accepted by $\Omega$.
Since $\SetSize{Y} = \hat z$, the set $Y$ contains \emph{all and only} the strings from $Q$ accepted by $\Omega$.
Because every query $q$ appearing in $\pi$ receives in $\pi$ a \yesansw iff $q$ belongs to $Y$, we have that $\pi$ is an \oawarelegalemph computation for $\Oracle{\Machine{M}}{\Omega}(w)$.
Since $\pi$ is an accepting computation, $\Oracle{\Machine{M}}{\Omega}(w) = 1$.

\ProofLeftarrowItem
Assume $\Oracle{\Machine{M}}{\Omega}(w) = 1$, we prove $\PredHAccCurr[\Machine{M},\Omega]{A}(w,\hat z) = 1$.
Since $\hat z$ is assumed to be the number of strings in $Q$ accepted by $\Omega$, there must exist a subset $Y \subseteq Q$ of strings accepted by $\Omega$ with $\SetSize{Y} = \hat z$.
The accepting \oawarelegalemph computation $\hat \pi$ for $\Oracle{\Machine{M}}{\Omega}(w)$ witnesses the existence of an accepting \ounawarelegalemph computation for $\Oracle{\Machine{M}}{?}(w)$ such that every query $q$ appearing in $\hat \pi$ receives in $\hat \pi$ a \yesansw iff $q$ is in $Y$.
\end{subproof}

\noindent
By \zcref{subprop_A_accept_reject_correct_at_hat_z_NEXP} and by looking at all possible combinations of $\hat z$ odd/even and $\PredHAccCurr[\Machine{M},\Omega]{A}(w,\hat z)$/$\PredHRejCurr[\Machine{M},\Omega]{A}(w,\hat z)$ $\valtrue$/$\valfalse$, it can be checked that $\Oracle{\Machine{M}}{\Omega}(w) = 1$ iff the highest value of $z$ at which $\Predicate{B}_{\Machine{M},\Omega}(w,z) = 1$ is \emph{even}.

\medbreak

To conclude, consider the binary predicate $\Language{D}(w,z) = \Predicate{B}_{\Machine{M},\Omega}(w,z-1)$---remember that indices for $\Predicate{B}_{\Machine{M},\Omega}$ start at~$0$.
By the properties of $\Predicate{B}_{\Machine{M},\Omega}$ shown above, $\Language{D}$ is a Hausdorff predicate characterizing $\Language{L}$.
Indeed, we have $\Language{D}(w,z) \geq \Language{D}(w,z+1)$ for all strings $w$ and integers $z \geq 1$, i.e., the requirement (\HausdSeqRequirementI) is met.
Moreover, by the definition of $\Predicate{B}_{\Machine{M},\Omega}$, for every string $w$, it holds that $\Language{D}(w,\iExp{i+1}{2 \cdot p_{\Machine{M}}(\StringLength{w})} + 1) = \Predicate{B}_{\Machine{M},\Omega}(w,\iExp{i+1}{2 \cdot p_{\Machine{M}}(\StringLength{w})}) = 0$.
The latter implies that both the requirement (\HausdSeqRequirementII) is met and that the length of the Hausdorff predicate is $\iExp{i+1}{2 \cdot p_{\Machine{M}}(n)}$.
Lastly, for every string $w$, it holds that $w \in \Language{L}$ iff $\HausdIndex{w}{\Language{D}}$ is odd.

\Proofsep

We now show $\BoundedHausdCLASS{\iExpPolFunctions{i+1}}{\Oracle{\iNExpTime{(i+j)}}{\SigmaP{c-1}}} \subseteq \BoundedOracle{\iNExpTime{k}}{\bigl(\BoundedHausdCLASS{{\scriptstyle 2}}{\Oracle{\iNExpTime{(i+j-k)}}{\SigmaP{c-1}}}\bigr)}{1}$.

Let $\Language{L} \in \BoundedHausdCLASS{\iExpPolFunctions{i+1}}{\Oracle{\iNExpTime{(i+j)}}{\SigmaP{c-1}}}$ be a language.
We prove $\Language{L} \in \BoundedOracle{\iNExpTime{k}}{\bigl(\BoundedHausdCLASS{{\scriptstyle 2}}{\Oracle{\iNExpTime{(i+j-k)}}{\SigmaP{c-1}}}\bigr)}{1}$ by exhibiting a $\iNExpTime{k}$ oracle machine $\BoundedOracle{\Machine{M}}{?}{1}$ and a language $\Language{A} \in \BoundedHausdCLASS{2}{\Oracle{\iNExpTime{(i+j-k)}}{\SigmaP{c-1}}}$ such that $\Language{L} = \LanguageOf{\BoundedOracle{\Machine{M}}{\Language{A}}{1}}$.

By $\Language{L} \in \BoundedHausdCLASS{\iExpPolFunctions{i+1}}{\Oracle{\iNExpTime{(i+j)}}{\SigmaP{c-1}}}$, there is an $\Oracle{\iNExpTime{(i+j)}}{\SigmaP{c-1}}$ Hausdorff predicate $\Language{D}$ of length $\iExp{i+1}{p(n)}$, for some polynomial $p(n) \in \PolFunctions$, characterizing $\Language{L}$.
Hence, for every string~$w$, the following relation holds:
\begin{equation}\label{eq_property_for_language_A}
w \in \Language{L} \Leftrightarrow (\exists z \in \NaturalsDomain) \: (1 \leq z \leq \iExp{i+1}{p(\StringLength{w})} \land \text{ $z$ is odd } \land \Language{D}(w,z) = 1 \land \Language{D}(w,z+1) = 0).
\end{equation}

Via \zcref{eq_property_for_language_A}, we define a language $\Language{A}$ so that an oracle machine $\BoundedOracle{\Machine{M}}{?}{1}$ can decide $\Language{L}$ by querying $\Language{A}$~once:
\[\Language{A} = \set{w \dollarsep^m z \mid w,z \in \alphabet^+ \land (z = \text{``}0\text{''} \lor z\text{ starts with `1'}) \land m = \iExp{k}{\StringLength{w}} \land \Language{D}(w,z) = 1 \land {}\linebreak[0] \Language{D}(w,z+1) = 0}.\]
By its definition, $\Language{A}$ contains the strings ``$w \dollarsep^m z$'', where $w$ is an arbitrary non\nbdash-empty binary string, $m = \iExp{k}{\StringLength{w}}$, and $z$ is the canonical binary representation of the \emph{Hausdorff index $\HausdIndex{w}{\Language{D}}$ of $w$ \Wrt $\Language{D}$}.

Let us start by showing that $\Language{L}$ can be decided by a $\iNExpTime{k}$ oracle machine $\BoundedOracle{\Machine{M}}{?}{1}$ issuing a single query to an oracle for $\Language{A}$.
Let $w$ be the input string to $\Machine{M}$.
First, $\Machine{M}$ guesses a binary string $z$ of length \emph{at most} $\iExp{i}{p(\StringLength{w})} + 1$, starting and ending with `$1$' (so the guessed binary string represents an odd number).
With this amount of bits, the guessed string $z$ can represent the value $\iExp{i+1}{p(\StringLength{w})}$ (actually, even greater values).
Next, $\Machine{M}$ issues the query ``$w {\dollarsep}^{m} z$'', with $m = \iExp{k}{\StringLength{w}}$, to its oracle.
To conclude, $\Machine{M}$ returns the very same answer of its oracle.
By the definition of $\Language{A}$, it holds that $\Language{L} = \LanguageOf{\BoundedOracle{\Machine{M}}{\Language{A}}{1}}$.
Notice that $\Machine{M}$, to answer \yeslbl, does \emph{not} need to guess longer strings $z$.
Indeed, since $\Machine{M}$ answers \yeslbl iff it receives a \yesansw from its oracle, and since $\Language{D}$ is a $\iExp{i+1}{p(\StringLength{w})}$\nbdash-long Hausdorff predicate, a necessary condition for $\Machine{M}$ to receive a \yesansw from its oracle is guessing a string $z$ representing a value not exceeding $\iExp{i+1}{p(\StringLength{w})}$.
Observe that the computation of $\Machine{M}$ run within \iExponential{k} time.
Indeed, $\Machine{M}$ guesses a string $z$ of length at most $\iExp{i}{p(\StringLength{w})} + 1$.
Since $k \geq i$, this guess can be carried out in \iExponential{k} time.
Moreover, the query ``$w \dollarsep^m z$'', with $m = \iExp{k}{\StringLength{w}}$, can be written on the query tape in \iExponential{k} time in the following way.
First, we copy the string $w$ from the input tape to the query tape (in linear time \Wrt~$\StringLength{w}$).
Next, we write ``$\dollarsep^m$'' on the query tape.
To this aim, we compute $m = \iExp{k}{\StringLength{w}}$, which can be done in \iExponential{k-1} time \Wrt~$\StringLength{w}$ (see \zcref{sec_maths_complexity}, Iterated exponentials of polynomials).
Observe that the \emph{value} $m$ is \iExponential{k} \Wrt~$\StringLength{w}$.
Once $m$ is computed, we write $m$\nbdash-many `$\dollarsep$' symbols on the query tape.
Hence, ``$\dollarsep^m$'' can be written in \iExponential{k} time.
Then, we write on the query tape the guessed string $z$, and also this can be done in \iExponential{k} time \Wrt~$\StringLength{w}$ (see above).

We are left to prove that $\Language{A} \in \BoundedHausdCLASS{2}{\Oracle{\iNExpTime{(i+j-k)}}{\SigmaP{c-1}}}$.
To this aim, we show that $\Language{A}$ can be characterized by an $\Oracle{\iNExpTime{(i+j-k)}}{\SigmaP{c-1}}$ Hausdorff predicate $\Language{E}$ of length~$2$.
We define $\Language{E}$ as follows:
\begin{equation*}
\Language{E}(s,\ell) = 
\begin{cases}
  \Language{D}(w,z+\ell-1),
  & \text{if }s = \text{``}w \dollarsep^m z\text{''} \land w,z \in \alphabet^+ \land {} \\
  & \qquad (z = \text{``}0\text{''} \lor z\text{ starts with `1'}) \land m = \iExp{k}{\StringLength{w}} \land \ell \leq 2 \\
  \valfalse,
  & \text{otherwise.}
\end{cases}
\end{equation*}

Let us first show that $\Language{E}$ is indeed a Hausdorff predicate of length~$2$.
Let $s$ be an arbitrary string.
There are two cases:
either $s$ is in the form ``$w \dollarsep^m z$'', for some $w,z \in \alphabet^+$, with $z$ either equal to ``$0$'' or starting with `$1$', and $m = \iExp{k}{\StringLength{w}}$,
or $s$ is not in that form.
If $s$ is in that form, the predicate $\Language{E}$ is defined over the Hausdorff predicate $\Language{D}$, and hence $\Language{E}(s,\ell) \geq \Language{E}(s,\ell + 1)$ for all $\ell \geq 1$.
If $s$ is not in that form, $\Language{E}(s,\ell) = 0$ for all $\ell$, and thus $\Language{E}(s,\ell) \geq \Language{E}(s,\ell + 1)$ for all $\ell \geq 1$.
Therefore, the requirement (\HausdSeqRequirementI) is met by~$\Language{E}$.
Notice also that, for $\ell > 2$, the Hausdorff predicate $\Language{E}$ is $\valfalse$ for all strings~$s$.
This, not only implies that $\Language{E}$ fulfils the requirement (\HausdSeqRequirementII), by which we can say that $\Language{E}$ is a Hausdorff predicate, but it also implies that $\Language{E}$ has length~$2$.

We now prove that the Hausdorff predicate $\Language{E}$ characterizes $\Language{A}$.
To this aim, we show that, for every string $s$, it holds that $s \in \Language{A}$ iff the Hausdorff index $\HausdIndex{s}{\Language{E}}$ of $s$ \Wrt $\Language{E}$ is odd.

\smallskip

\ProofRightarrowItem
Assume that $s \in \Language{A}$.
By this, $s$ is a string in the form ``$w \dollarsep^m z$'', for some $w,z \in \alphabet^+$, with $z$ either equal to ``$0$'' or starting with `$1$', and $m = \iExp{k}{\StringLength{w}}$.
By definition of $\Language{A}$ and $s \in \Language{A}$, we have $\Language{D}(w,z) = 1$ and $\Language{D}(w,z+1) = 0$.
Moreover, by definition of $\Language{E}$, it also holds $\Language{E}(s,1) = 1$ and $\Language{E}(s,2) = 0$.
Hence, $\HausdIndex{s}{\Language{E}}$ is odd.

\ProofLeftarrowItem
Assume that $\HausdIndex{s}{\Language{E}}$ is odd.
Since $\Language{E}$ is a Hausdorff predicate of length~$2$, it must be the case that  $\HausdIndex{s}{\Language{E}} = 1$.
Hence, by definition of $\Language{E}$, the string $s$ is in the form ``$w \dollarsep^m z$'', for some $w,z \in \alphabet^+$, with $z$ either equal to ``$0$'' or starting with `$1$', and $m = \iExp{k}{\StringLength{w}}$. 
Moreover, since $\HausdIndex{s}{\Language{E}} = 1$, we also have $\Language{E}(s,1) = 1$ and $\Language{E}(s,2) = 0$, by which $\Language{D}(w,z) = 1$ and $\Language{D}(w,z+1) = 0$.
Altogether, these properties of $s$ imply that $s \in \Language{A}$.

\smallbreak

Lastly, we prove that $\Language{E}$ is in $\Oracle{\iNExpTime{(i+j-k)}}{\SigmaP{c-1}}$, despite being defined over the $\Oracle{\iNExpTime{(i+j)}}{\SigmaP{c-1}}$ Hausdorff predicate~$\Language{D}$.
To this aim, we show that $\Language{E}(s,\ell)$ can be decided in \iExponential{(i+j-k)} time \Wrt the size of~$s$ by a nondeterministic oracle machine $\Oracle{\Machine{N}}{?}$ querying an oracle in $\SigmaP{c-1}$.
The machine $\Machine{N}$ may proceed as follows---remember that the input to $\Machine{N}$ is a string ``$s \hashsep \ell$'', where $\ell$ is a number in its canonical binary from.
\begin{itemize}[nosep,label=--]
\item
First, $\Machine{N}$ checks that $s$ is in the form ``$w \dollarsep^+ z$'', with $w,z \in \alphabet^+$ and $z$ either equals ``$0$'' or starts with `$1$'.
This can be done in linear time \Wrt~$\StringLength{s}$, as we simply need to scan~$s$---we are just checking that $s$ belongs to the regular language $(0|1)^+ \dollarsep^+ ( 0 | ( 1 (0|1)^* ) )$, and we are \emph{not} checking yet that the amount of `$\dollarsep$' symbols in $s$ is correct.
After having scanned $s$, the head of the input tape is on the `$\hashsep$' symbol between $s$ and $\ell$.

\item
Second, $\Machine{N}$ checks whether $\ell > 2$.
If this is the case, $\Machine{N}$ rejects the input. 
This can be done in constant time, as $\Machine{N}$, after having scanned $s$, just needs to look at the first three symbols (of $\ell$) after the `$\hashsep$' symbol.
If all these three symbols are binary digits, then, since $\ell$ is in canonical binary form, $\ell$ is surely strictly greater than~$2$, and $\Machine{N}$ does not need to check the rest of the representation of $\ell$.
If the third symbol after `$\hashsep$' is blank, then $\Machine{N}$ checks whether the two symbols after `$\hashsep$' represent a number greater than~$2$ or not.
Next, $\Machine{N}$ moves the input tape head at the beginning of $s$ (clearly in linear time \Wrt $\StringLength{s}$).

\item
Third, $\Machine{N}$ copies on the second tape the portion $w$ of $s$ from the input tape (in linear time \Wrt~$\StringLength{s}$).

\item
Fourth, $\Machine{N}$ checks that the number of `$\dollarsep$' symbols in~$s$ is $m = \iExp{k}{\StringLength{w}}$.
This task is a bit tricky, as we \emph{cannot} simply compute the value $m$ to then check that the amount of `$\dollarsep$' is correct.
In fact, if there were only ``few'' `$\dollarsep$' symbols on tape and the portion $z$ of $s = \text{``}w \dollarsep^m z\text{''}$ were ``short'', computing $m$ could require \iExponential{(k-1)} time in the size of $s$ (see \zcref{sec_maths_complexity}, Iterated exponentials of polynomials).
However, we are allowed to carry out only a \iExponential{(i+j-k)} time computation in the size of~$s$---observe that it may well be the case that $k - 1 > i + j - k$.
We hence need to employ a different strategy.
Remember that $\iExp{k}{\StringLength{w}}$ is a time constructible function (see \zcref{sec_prelim_central_complexity_classes}), therefore we can start a ``clock'' which ``ticks'' $m = \iExp{k}{\StringLength{w}}$ times (see, e.g., \cite{BalcazarDG1995}).
This ``clock'' is a portion of the transition function of $\Machine{N}$ which forces the machine to perform \emph{exactly} $m$ steps by considering the string $w$ copied on the second tape.
While the ``clock ticks'', we advance on the input tape, one `$\dollarsep$' symbol at a time.
We stop at the earliest of these two events happening:
either when the clock reaches $m$, or when there are no more `$\dollarsep$' symbols on tape.
If we run out of `$\dollarsep$' symbols on tape before the clock reaches $m$, than the amount of `$\dollarsep$' symbols on tape is less than~$m$, and $\Machine{N}$ can reject the input. 
On the other hand, if the clock reaches $m$ and there are more `$\dollarsep$' symbols on tape, then there are too many `$\dollarsep$' symbols on tape, and $\Machine{N}$ can reject the input. 
The key point here is that, with this computation trick, $\Machine{N}$ never performs more computation steps than the amount of `$\dollarsep$' symbols on tape.
Therefore, devised in this way, $\Machine{N}$ checks whether the number of `$\dollarsep$' symbols on tape is correct in linear time \Wrt the size of~$s$.

\item
Fifth, on the second tape $\Machine{N}$ writes `$\hashsep$' at the end of the string $w$ and then copies the portion $z$ of $s$ from the input tape (this can be done in linear time \Wrt $\StringLength{s}$).

\item
Sixth, $\Machine{N}$ adds the value $\ell-1$ to $z$ on the second tape.
This can be done in linear time \Wrt $\StringLength{s}$, as the value to be added to $z$ is either~$0$ or~$1$ (remember that, if $\ell > 2$, the computation halts before this point).

\item
Seventh, to conclude, $\Machine{N}$ decides whether $\pair{w,z+\ell-1} \in \Language{D}$.
Since $\Language{D}$ is an $\Oracle{\iNExpTime{(i+j)}}{\SigmaP{c-1}}$ Hausdorff predicate, there is a nondeterministic oracle machine $\Oracle{\Machine{N}_{\Language{D}}}{?}$ that, aided by a $\SigmaP{c-1}$ oracle $\Omega$, decides whether an arbitrary pair $\pair{v,t}$ belongs to $\Language{D}$ in \iExponential{(i+j)} time \Wrt the size of~$v$ only.
The machine $\Machine{N}$ can simply act as $\Machine{N}_{\Language{D}}$, using the same oracle $\Omega$, by considering the content of the second tape as the input string.
Observe that this computation by $\Machine{N}$ can actually be carried out in \iExponential{(i+j-k)} time \Wrt~$\StringLength{s}$.
Indeed, since $s = w \dollarsep^m z$ and $m = \iExp{k}{\StringLength{w}}$, a run of $\Machine{N}$ that takes \iExponential{(i+j-k)} time \Wrt~$\StringLength{s}$ may actually comprise \iExponential{(i+j)}{}ly\nbdash-many steps \Wrt~$\StringLength{w}$, which is only a portion of $s$.
\end{itemize}

\Proofsep

We now show that $\BoundedOracle{\iNExpTime{k}}{\bigl(\BoundedHausdCLASS{{\scriptstyle 2}}{\Oracle{\iNExpTime{(i+j-k)}}{\SigmaP{c-1}}}\bigr)}{1} \subseteq \BoundedParOracle{\iNExpTime{k}}{\Oracle{\iNExpTime{(i+j-k)}}{\SigmaP{c-1}}}{2}$.

Let $\Language{L} \in \BoundedOracle{\iNExpTime{k}}{\bigl(\BoundedHausdCLASS{{\scriptstyle 2}}{\Oracle{\iNExpTime{(i+j-k)}}{\SigmaP{c-1}}}\bigr)}{1}$ be a language.
By this, there are a $\iNExpTime{k}$ oracle machine $\BoundedOracle{\Machine{M}}{?}{1}$ and a language $\Language{A} \in \BoundedHausdCLASS{2}{\Oracle{\iNExpTime{(i+j-k)}}{\SigmaP{c-1}}}$ such that $\Language{L} = \LanguageOf{\BoundedOracle{\Machine{M}}{\Language{A}}{1}}$.
By $\Language{A} \in \BoundedHausdCLASS{2}{\Oracle{\iNExpTime{(i+j-k)}}{\SigmaP{c-1}}}$, there is an $\Oracle{\iNExpTime{(i+j-k)}}{\SigmaP{c-1}}$ Hausdorff predicate $\Language{D}$ of length~$2$ characterizing~$\Language{A}$. 

We now prove $\Language{L} \in \BoundedParOracle{\iNExpTime{k}}{\Oracle{\iNExpTime{(i+j-k)}}{\SigmaP{c-1}}}{2}$ by exhibiting a $\iNExpTime{k}$ oracle machine $\BoundedParOracle{\Machine{N}}{?}{2}$ such that $\Language{L} = \LanguageOf{\BoundedParOracle{N}{\Language{D}}{2}}$---remember that $\Language{D}$ is an $\Oracle{\iNExpTime{(i+j-k)}}{\SigmaP{c-1}}$ Hausdorff predicate.
The machine $\BoundedParOracle{\Machine{N}}{?}{2}$ decide $\Language{L}$ as follows.
On input~$w$, the machine $\Machine{N}$ first guesses an accepting \ounawarelegal computation $\pi$ for $\BoundedOracle{\Machine{M}}{?}{1}(w)$ (feasible in nondeterministic \iExponential{k} time).
Then, $\Machine{N}$ checks that $\pi$ is legal and accepting (feasible in \iExponential{k} time).
Next, $\Machine{N}$ submits two parallel queries to its oracle $\Language{D}$, namely ``$\tup{\QueryInComp{\pi}{},1}$'' and ``$\tup{\QueryInComp{\pi}{},2}$'', where $\QueryInComp{\pi}{}$ denotes the single query appearing in $\pi$.
To conclude, $\Machine{N}$ verifies that the answer in $\pi$ to $\QueryInComp{\pi}{}$ is \yeslbl iff the answers from its $\Language{D}$ oracle are \yeslbl and \nolbl to ``$\tup{\QueryInComp{\pi}{},1}$'' and ``$\tup{\QueryInComp{\pi}{},2}$'', respectively. 
If this is the case, $\Machine{N}$ answers \yeslbl.
The correctness of the last step stems from $\Language{D}$ being a Hausdorff predicate of length~$2$.
\end{proof}

From the result above we obtain the following corollary.
The proof is in \zcref{sec_detailed_proofs_charting_nexp_oracles}.

\begin{corollary}[store=SummaryNNHausdorff]
\label{theo_summary_nexp_nexp_Hausdorff}
Let $i \geq 0$ and $j \geq 1$ be integers.
Then, for all integers $c \geq 1$,
\[\Oracle{\iNExpTime{i}}{\Oracle{\iNExpTime{j}}{\SigmaP{c-1}}} = \BoundedHausdCLASS{\iExpPolFunctions{i+1}}{\Oracle{\iNExpTime{(i+j)}}{\SigmaP{c-1}}} = \BoundedOracle{\iNExpTime{i}}{\bigl(\BoundedHausdCLASS{{\scriptstyle 2}}{\Oracle{\iNExpTime{j}}{\SigmaP{c-1}}}\bigr)}{1} = \BoundedParOracle{\iNExpTime{i}}{\Oracle{\iNExpTime{j}}{\SigmaP{c-1}}}{2}.\]
\end{corollary}

\zcref[S]{theo_nexp_nexp_containment,theo_summary_nexp_nexp_Hausdorff} can be generalized by also considering a constraint on the size of the queries issued to the oracle (in the spirit of $\PNExp = \Oracle[\ensuremath{\langle\mkern-2mu \PolFunctions \rangle}]{\NExpTime}{\NExpTime}$~\cite[Corollary~4]{SchoningW88} and~\cite[Theorem~24]{AllenderKRR2011}).
Via these generalizations, we can also (re)obtain $\PNExp = \Oracle[\ensuremath{\langle\mkern-2mu \PolFunctions \rangle}]{\NExpTime}{\NExpTime}$ (see \zcref{sec_delta_level_hausdorff} for more).

One of the results of the above corollary, namely $\Oracle{\iNExpTime{i}}{\Oracle{\iNExpTime{j}}{\SigmaP{c-1}}} = \BoundedParOracle{\iNExpTime{i}}{\Oracle{\iNExpTime{j}}{\SigmaP{c-1}}}{2}$, is reminiscent of $\Oracle{\NPTime}{\NPTime} = \BoundedOracle{\NPTime}{\NPTime}{1}$~\cite{Wagner1990}, and more in general of $\Oracle{\iNExpTime{i}}{\Oracle{\iExpTime{j}}{\SigmaP{c-1}}} = \BoundedOracle{\iNExpTime{i}}{\Oracle{\iExpTime{j}}{\SigmaP{c-1}}}{1}$ (see \zcref{theo_nexp_exp_equivalence}).
The fact that in $\Oracle{\iNExpTime{i}}{\Oracle{\iNExpTime{j}}{\SigmaP{c-1}}} = \BoundedParOracle{\iNExpTime{i}}{\Oracle{\iNExpTime{j}}{\SigmaP{c-1}}}{2}$ two parallel queries are needed, and not just one, may be viewed as evidence that the Hausdorff characterization of the intermediate hierarchy levels is indeed natural.
Indeed, two queries, and not just one, are needed to identify the Hausdorff index of the input string.
This is even more supported by the following two results, as they imply that it is rather unlikely that $\Oracle{\iNExpTime{i}}{\Oracle{\iNExpTime{j}}{\SigmaP{c-1}}}$ equals $\BoundedOracle{\iNExpTime{i}}{\Oracle{\iNExpTime{j}}{\SigmaP{c-1}}}{1}$.
We indeed show that, if this were the case, there would be a collapse of the intermediate levels $\BoundedOracle{\iExpTime{(i+j)}}{\SigmaP{c}}{\iExpPolFunctions{g}}$, for all $g$ with $0 \leq g \leq i+j-1$ (i.e., the last intermediate level of the step is \emph{not} involved), to the second level of the \BHText over $\Oracle{\iNExpTime{(i+j)}}{\SigmaP{c-1}}$, which is not expected to happen (see~\cite{Dawar1998} and references therein);
we provide an intuition on this after stating \zcref{theo_NN1_equivalence_BH}.

\begin{theorem}
\label{theo_NN1_containment_BH}
Let $i,k \geq 0$ and $j \geq 1$ be integers, with $1 \leq k \leq i + j$.
Then, for all integers $c \geq 1$,
\[\BoundedOracle{\iNExpTime{i}}{\Oracle{\iNExpTime{j}}{\SigmaP{c-1}}}{1} \subseteq \ComplementPrefixKerned{}\BoundedHausdCLASS{2}{\Oracle{\iNExpTime{(i+j)}}{\SigmaP{c-1}}} \subseteq \BoundedOracle{\iNExpTime{k}}{\Oracle{\iNExpTime{(i+j-k)}}{\SigmaP{c-1}}}{1}.\]
\end{theorem}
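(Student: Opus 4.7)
The plan is to establish the first inclusion via a Skolemization-with-witness-bundling argument, and the second via a nondeterministic choice of which of the two Hausdorff queries to ask.

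For the first inclusion, let $\BoundedOracle{\Machine{M}}{?}{1} \in \iNExpTime{i}$ with oracle $\Omega \in \Oracle{\iNExpTime{j}}{\SigmaP{c}}$ decide $\Language{L}$. I would classify each accepting \ounawarelegal computation $\pi$ of $\Machine{M}$ on $w$ as Type-A (no query), Type-Y (query $q(\pi)$ answered \yeslbl), or Type-N (query $q(\pi)$ answered \nolbl), obtaining $\Language{L} = \Language{A} \cup \Language{B}$, where $\Language{A}(w)$ asserts the existence of either a Type-A path or a Type-Y path $\pi$ with $\Omega(q(\pi)) = 1$, and $\Language{B}(w)$ asserts the existence of a Type-N path $\pi$ with $\Omega(q(\pi)) = 0$. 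Membership $\Language{A} \in \Oracle{\iNExpTime{(i+j)}}{\SigmaP{c}}$ is immediate via the guess-and-verify template used in the proof of \cref{theo_exp_nexp_containment} (guess $\pi$ and, when applicable, an $\Oracle{\iNExpTime{j}}{\SigmaP{c}}$-certificate for $\Omega(q(\pi))=1$). Once we also show $\overline{\Language{B}} \in \Oracle{\iNExpTime{(i+j)}}{\SigmaP{c}}$, the decomposition yields $\Language{L} \in \Oracle{\iNExpTime{(i+j)}}{\SigmaP{c}} \lor \ComplementPrefix\Oracle{\iNExpTime{(i+j)}}{\SigmaP{c}}$, which is exactly $\ComplementPrefix\BoundedHausdRedCLASS{2}{\Oracle{\iNExpTime{(i+j)}}{\SigmaP{c}}}$.

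The crux is therefore $\overline{\Language{B}}(w) \equiv \forall q\,[A_N(w,q) \to \Omega(q) = 1]$, whose naive $\forall\exists$ shape seems to escape $\Oracle{\iNExpTime{(i+j)}}{\SigmaP{c}}$. I would Skolemize by nondeterministically guessing a single bundle $\{v_q\}_q$ providing, for every potential query $q$ of length at most $\iExp{i}{p_\Machine{M}(\StringLength{w})}$, a candidate $\Oracle{\iNExpTime{j}}{\SigmaP{c}}$-witness that $\Omega(q)=1$ (the content of $v_q$ is immaterial for $q \notin \Omega$). Since there are $O(\iExpPolFunctions{(i+1)})$ such queries, each with witness of size $O(\iExpPolFunctions{(i+j)})$, the bundle has total size $O(\iExpPolFunctions{(i+j)})$, the collapse using critically that $j \geq 1$; hence it fits as the existential guess of an $\Oracle{\iNExpTime{(i+j)}}{\SigmaP{c}}$ machine. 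A deterministic \iExponential{(i+j)}-time verifier with a $\SigmaP{c}$ oracle then enumerates every admissible $q$, deterministically tests whether $A_N(w,q)$ holds by enumerating its \iExponential{i}-sized Type-N witnesses, and, whenever it does, checks with the $\SigmaP{c}$ oracle that $v_q$ is a valid witness for $\Omega(q)=1$; the guess is accepted iff every such check succeeds, giving precisely $\overline{\Language{B}}$.

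For the second inclusion, if $\overline{\Language{L}}$ is length-$2$ Hausdorff reducible to $\Language{D} \in \Oracle{\iNExpTime{(i+j)}}{\SigmaP{c}}$, then by monotonicity $\Language{L}(w)=1 \iff \Language{D}(w,1)=0 \lor \Language{D}(w,2)=1$. I would build an $\iNExpTime{k}$ oracle machine that nondeterministically guesses $b \in \{1,2\}$, pads $w$ to length $\iExp{k}{\StringLength{w}}$, and issues the single query $\pair{w_{\mathrm{pad}},b}$ to an oracle $\Omega' \in \Oracle{\iNExpTime{(i+j-k)}}{\SigmaP{c}}$ deciding $\Language{D}$; the padding buys $\Omega'$ enough time over the padded input to carry out the \iExponential{(i+j)}-time computation of $\Language{D}$ on the original $w$. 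The machine accepts on the $b=2$ branch iff the oracle answers \yeslbl, and on the $b=1$ branch iff it answers \nolbl, so acceptance matches $\Language{L}(w)$. The main obstacle is the Skolemization step, whose success hinges on the size collapse $\iExpPolFunctions{(i+1)} \cdot \iExpPolFunctions{(i+j)} = O(\iExpPolFunctions{(i+j)})$; this fails for $j=0$, which is exactly why the theorem's hypothesis requires $j \geq 1$.
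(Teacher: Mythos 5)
Your proposal is correct and follows essentially the same route as the paper's proof: your $\Language{A}$ and $\overline{\Language{B}}$ are exactly the paper's predicates $\Predicate{B}_{\Machine{M},\Omega}$ and $\PredHAccCurr[\Machine{M},\Omega]{A}$-style pair (the existential ``yes-answer'' predicate and the universal ``all no-answered queries are accepted'' predicate), and your second inclusion is the same guess-an-index machine, merely phrased as direct queries on $\Language{D}(w,1)$ and $\Language{D}(w,2)$ with acceptance on a \nolbl for index~1, rather than via the derived predicates $\compl{\Predicate{A}}$ and $\Predicate{B}$ with an answer flip. The only mechanical difference is that you Skolemize the universal predicate by guessing a global witness bundle up front, whereas the paper interleaves the \iExponential{(i+1)}-long enumeration of candidate \ounawarelegal computations with inline certificate guesses---both variants rest on the same size bounds and on the hypothesis $j \geq 1$.
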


\begin{proof}
We start by showing that $\BoundedOracle{\iNExpTime{i}}{\Oracle{\iNExpTime{j}}{\SigmaP{c-1}}}{1} \subseteq \ComplementPrefixKerned{}\BoundedHausdCLASS{2}{\Oracle{\iNExpTime{(i+j)}}{\SigmaP{c-1}}}$.

Let $\Language{L} \in \BoundedOracle{\iNExpTime{i}}{\Oracle{\iNExpTime{j}}{\SigmaP{c-1}}}{1}$ be a language.
We prove that $\Language{L} \in \ComplementPrefixKerned{}\BoundedHausdCLASS{2}{\Oracle{\iNExpTime{(i+j)}}{\SigmaP{c-1}}}$ by exhibiting an $\Oracle{\iNExpTime{(i+j)}}{\SigmaP{c-1}}$ Hausdorff predicate $\Language{D}$ of length~$2$ 
such that, for all strings $w$, $w \in \Language{L}$ iff $\HausdIndex{w}{\Language{D}}$ is \emph{even}.

By $\Language{L} \in \BoundedOracle{\iNExpTime{i}}{\Oracle{\iNExpTime{j}}{\SigmaP{c-1}}}{1}$, there are a $\iNExpTime{i}$ oracle machine $\BoundedOracle{\Machine{M}}{?}{1}$ and an $\Oracle{\iNExpTime{j}}{\SigmaP{c-1}}$ oracle $\Omega$ such that $\Language{L} = \LanguageOf{\BoundedOracle{\Machine{M}}{\Omega}{1}}$.
We define $\Language{D}$ via the predicates $\Predicate{A}_{\Machine{M},\Omega}$ and $\Predicate{B}_{\Machine{M},\Omega}$ below. 
In the following, given an \ounawarelegal computation $\pi$ for the oracle machine $\BoundedOracle{\Machine{M}}{?}{1}$, we denote by $\QueryInComp{\pi}{}$ and $\AnsInComp{\pi}{}$ the single query and the single oracle answer appearing in $\pi$, respectively.
In the definitions of the predicates below, $w$ is a string.

\begin{itemize}[noitemsep]
  \item $\Predicate{A}_{\Machine{M},\Omega}(w)$: $\valtrue$ iff, \emph{all} accepting \ounawarelegal computations $\pi$ for $\Oracle{\Machine{M}}{?}(w)$ with $\AnsInComp{\pi}{} = 0$ are such that $\Omega(\QueryInComp{\pi}{}) = 1$; and
  \item $\Predicate{B}_{\Machine{M},\Omega}(w)$: $\valtrue$ iff \emph{there exists} an accepting \oawarelegal computation $\pi$ for $\Oracle{\Machine{M}}{\Omega}(w)$ with $\Omega(\QueryInComp{\pi}{}) = 1$.
\end{itemize}

We claim that the two above predicates are in $\Oracle{\iNExpTime{(i+j)}}{\SigmaP{c-1}}$.
First, observe the following.
Since $\Machine{M}$ is an \iExponential{i} time nondeterministic machine, we can pick a polynomial $p_{\Machine{M}}(n)$ such that every \ounawarelegal computation $\pi$ for $\Machine{M}$ consists of no more than $\iExp{i}{p_\Machine{M}(n)}$ IDs, \emph{and} each ID of $\pi$ can be described with no more than $\iExp{i}{p_\Machine{M}(n)}$ symbols (see \zcref{sec_proving_qbsf-problems_hard}).
Remember that an ID can be encoded into a binary string with only linear overhead \Wrt to its non\nbdash-binary representation (see, e.g., \cite{Hopcroft1979}).
Hence, the size of a single ID of $\pi$ is bounded by $d \cdot \iExp{i}{p_\Machine{M}(n)}$, for some constant $d$.
By this, $\pi$ can be encoded into a binary string of size at most $d \cdot {\bigl( \iExp{i}{p_\Machine{M}(n)} \bigr)}^{2}$.
For $i \geq 0$, by \zcref{theo_polynomial_of_iterated_exponentials}.1, we have that $d \cdot {\bigl( \iExp{i}{p_\Machine{M}(n)} \bigr)}^{2}$ is \iExponential{i}.

Let us now consider $\Predicate{A}_{\Machine{M},\Omega}(w)$.
To answer true, we can proceed as follows.
We generate in turn all the possible binary strings of length $d \cdot {\bigl( \iExp{i}{p_\Machine{M}(\StringLength{w})} \bigr)}^{2}$.
From the observation above, writing on tape one of these strings takes \iExponential{i} time.
For each of these strings $s$, we check whether $s$ is an accepting \ounawarelegal computation for $\BoundedOracle{\Machine{M}}{?}{1}(w)$.
On a single string $s$, this check can be carried out in \iExponential{i} time.
However, all the possible binary strings of length $d \cdot {\bigl( \iExp{i}{p_\Machine{M}(\StringLength{w})} \bigr)}^{2}$ are \iExponential{(i+1)}{}ly-many.
This is not an obstacle, because we are assuming $j \geq 1$, and hence we can go through all of them in \iExponential{(i+j)} time.
If $s$ is indeed an accepting \ounawarelegal computation for $\BoundedOracle{\Machine{M}}{?}{1}(w)$, we check whether $\AnsInComp{s}{} = 0$ (feasible in \iExponential{i} time).
If this is the case, we check that $\Omega(\QueryInComp{s}{}) = 1$, by guessing a witnessing certificate.
This certificate is an accepting \oawarelegalemph computation $\pi$ for $\Omega(\QueryInComp{s}{})$ of the $\iNExpTime{j}$ ``part'' of $\Omega$---remember that $\Omega \in \Oracle{\iNExpTime{j}}{\SigmaP{c-1}}$;
i.e., we leave out from the guess the part of computation associated with the $\SigmaP{c-1}$ oracle.
Notice that $\pi$ can be \iExponential{(i+j)}{}ly long, as $\Omega$ might receive from $\Machine{M}$ \iExponential{i}{}ly-long queries.
The guess of $\pi$ can hence be carried out in nondeterministic \iExponential{(i+j)} time.
Checking that $\pi$ is an accepting \oawarelegal computation $\pi$ for $\Omega(\QueryInComp{s}{})$ can be done in \iExponential{(i+j)} time with the aid of an oracle in $\SigmaP{c-1}$.
If for all the strings $s$ the checks are passed, we answer \yeslbl.
This procedure is in $\Oracle{\iNExpTime{(i+j)}}{\SigmaP{c-1}}$.

Consider $\Predicate{B}_{\Machine{M},\Omega}(w)$.
Answering true on $\Predicate{B}_{\Machine{M},\Omega}(w)$ is straightforward. 
Simply, we guess an \iExponential{i}{}ly-long accepting \ounawarelegal computation $\pi$ for $\BoundedOracle{\Machine{M}}{?}{1}(w)$ such that $\AnsInComp{\pi}{} = 1$, together with a certificate for $\Omega(\QueryInComp{\pi}{}) = 1$.
As above, this certificate is an accepting \oawarelegalemph computation $\sigma$ for $\Omega(\QueryInComp{\pi}{})$ of the $\iNExpTime{j}$ ``part'' of $\Omega$.
The certificate $\sigma$ is \iExponential{(i+j)}{}ly-long (see above), and hence the guess can be carried out in nondeterministic \iExponential{(i+j)} time.
Checking that $\pi$ is an accepting \ounawarelegal computation for $\BoundedOracle{\Machine{M}}{?}{1}(w)$ such that $\AnsInComp{\pi}{} = 1$ can be done in \iExponential{i} time.
Checking that $\sigma$ is an accepting \oawarelegal computation for $\Omega(\QueryInComp{\pi}{})$ can be carried out in \iExponential{(i+j)} time with the aid of a $\SigmaP{c-1}$ oracle.

We define the next predicate based on the previous two:
\[
\Language{D}(w,z) =
\begin{cases}
  \Predicate{A}_{\Machine{M},\Omega}(w) \lor \Predicate{B}_{\Machine{M},\Omega}(w),
  & \text{if } z = 1 \\
  \Predicate{B}_{\Machine{M},\Omega}(w),
  & \text{if } z = 2 \\
  \valfalse,
  & \text{if } z > 2.
\end{cases}
\]

By definition of $\Language{D}$, it holds that $\Language{D}(w,z) \geq \Language{D}(w,z+1)$, for every string $w$ and for all $z \geq 1$, hence the requirement~(\HausdSeqRequirementI) is met.
Moreover, again by definition of $\Language{D}$, for every string $w$ and every integer $z > 2$, it holds that $\Language{D}(w,z) = 0$.
Hence, the requirement~(\HausdSeqRequirementII) is met, by which $\Language{D}$ is a Hausdorff predicate, and its length is~2.

The predicate $\Language{D}$ can also be shown in $\Oracle{\iNExpTime{(i+j)}}{\SigmaP{c-1}}$ \Wrt~$\StringLength{w}$ only.
For an input string ``$w \hashsep z$'', checking whether $z > 2$ can be done by looking at the first three symbols on the input tape after the `$\hashsep$' symbol: 
If all these three symbols are binary digits, then, since $z$ is in canonical binary form, $z$ is surely strictly greater than~$2$, and there is no need to check the rest of the representation of $z$.
If the third symbol after `$\hashsep$' is blank, then we simply check whether the two symbols after `$\hashsep$' represent a number greater than~$2$ or not. 

After this initial check, if $z = 2$, we answer $\Predicate{B}_{\Machine{M},\Omega}(w)$, which can be done in $\Oracle{\iNExpTime{(i+j)}}{\SigmaP{c-1}}$ (see above).
On the other hand, if $z = 1$, we answer $\Predicate{A}_{\Machine{M},\Omega}(w) \lor \Predicate{B}_{\Machine{M},\Omega}(w)$, which can be done in $\Oracle{\iNExpTime{(i+j)}}{\SigmaP{c-1}}$, as both predicate are in $\Oracle{\iNExpTime{(i+j)}}{\SigmaP{c-1}}$ (see above), and the latter is closed under disjunction.

We now show that, for every string $w$, $w \in \Language{L}$ iff the Hausdorff index $\HausdIndex{w}{\Language{D}}$ of $w$ \Wrt $\Language{D}$ is \emph{even}. 

\smallbreak

\ProofRightarrowItem
Assume that $w \in \Language{L}$.
There are two cases:
either (i) there is an accepting \oawarelegal computation $\pi$ for $\BoundedOracle{\Machine{M}}{\Omega}{1}(w)$ with $\Omega(\QueryInComp{\pi}{}) = 1$,
or (ii) there is not.
Consider case~(i).
In this case, $\Predicate{B}_{\Machine{M},\Omega}(w)$ is true, by which $\Language{D}(w,1) = \Language{D}(w,2) = 1$, and hence $\HausdIndex{w}{\Language{D}} = 2$, which is even.
Consider now case~(ii).
In this case, $\Predicate{B}_{\Machine{M},\Omega}(w)$ is false, by which $\Language{D}(w,2) = 0$ and $\Language{D}(w,1) = \Predicate{A}_{\Machine{M},\Omega}(w)$.
Since there is \emph{no} accepting \oawarelegal computation for $\BoundedOracle{\Machine{M}}{\Omega}{1}(w)$ with $\Omega(\QueryInComp{\pi}{}) = 1$, but we are assuming $w \in \Language{L}$, there must be an accepting \oawarelegal computation $\pi$ for $\BoundedOracle{\Machine{M}}{\Omega}{1}(w)$ with $\Omega(\QueryInComp{\pi}{}) = 0$.
Observe that such an \oawarelegal computation $\pi$ is \emph{also} an \ounawarelegalemph computation for $\BoundedOracle{\Machine{M}}{?}{1}(w)$ with $\AnsInComp{\pi}{} = 0$ such that $\Omega(\QueryInComp{\pi}{}) = 0$.
By this, also $\Predicate{A}_{\Machine{M},\Omega}(w)$ is false, and hence $\Language{D}(w,1) = 0$.
Therefore, $\HausdIndex{w}{\Language{D}} = 0$, which is also in this case even.

\ProofLeftarrowItem
Assume that $\HausdIndex{w}{\Language{D}}$ is even.
Since $\Language{D}$ is of length~$2$, we have that either $\HausdIndex{w}{\Language{D}} = 0$ or $\HausdIndex{w}{\Language{D}} = 2$.
If $\HausdIndex{w}{\Language{D}} = 2$, then $\Predicate{B}_{\Machine{M},\Omega}(w)$ is true, by which there is an accepting \oawarelegal computation for $\BoundedOracle{\Machine{M}}{\Omega}{1}(w)$ (with $\Omega(\QueryInComp{\pi}{}) = 1$, which is not relevant now).
Hence, $w \in \Language{L}$.
Assume now that $\HausdIndex{w}{\Language{D}} = 0$.
In this case $\Predicate{B}_{\Machine{M},\Omega}(w)$ is false, from which it follows that $\Language{D}(w,1) = \Predicate{A}_{\Machine{M},\Omega}(w)$.
Since $\HausdIndex{w}{\Language{D}} = 0$, also $\Predicate{A}_{\Machine{M},\Omega}(w)$ is false, implying that there exists an accepting \ounawarelegal computation $\pi$ for $\BoundedOracle{\Machine{M}}{?}{1}(w)$ such that $\AnsInComp{\pi}{} = 0$ and $\Omega(\QueryInComp{\pi}{}) = 0$.
Observe that $\pi$ is actually an accepting \oawarelegal computation for $\BoundedOracle{\Machine{M}}{\Omega}{1}$.
Therefore, $w \in \Language{L}$.

\Proofsep

We now prove $\ComplementPrefixKerned{}\BoundedHausdCLASS{2}{\Oracle{\iNExpTime{(i+j)}}{\SigmaP{c-1}}} \subseteq \BoundedOracle{\iNExpTime{k}}{\Oracle{\iNExpTime{(i+j-k)}}{\SigmaP{c-1}}}{1}$.

Let $\Language{L} \in \ComplementPrefixKerned{}\BoundedHausdCLASS{2}{\Oracle{\iNExpTime{(i+j)}}{\SigmaP{c-1}}}$ be a language.
We show $\Language{L} \in \BoundedOracle{\iNExpTime{k}}{\Oracle{\iNExpTime{(i+j-k)}}{\SigmaP{c-1}}}{1}$ by exhibiting a $\iNExpTime{k}$ oracle machine $\BoundedOracle{\Machine{M}}{?}{1}$ and an $\Oracle{\iNExpTime{(i+j-k)}}{\SigmaP{c-1}}$ oracle $\Omega$ such that $\Language{L} = \LanguageOf{\BoundedOracle{\Machine{M}}{\Omega}{1}}$.

Since $\Language{L} \in \ComplementPrefixKerned{}\BoundedHausdCLASS{2}{\Oracle{\iNExpTime{(i+j)}}{\SigmaP{c-1}}}$, there exists an $\Oracle{\iNExpTime{(i+j)}}{\SigmaP{c-1}}$ Hausdorff predicate $\Language{D}$ of length~$2$ such that, for every string $w$, $w \in \Language{L}$ iff the Hausdorff index $\HausdIndex{w}{\Language{D}}$ of $w$ \Wrt $\Language{D}$ is \emph{even}.
By this, we have that: 
\begin{equation}\label{eq_property_for_predicates_A_and_B}
w \in \Language{L} \Leftrightarrow \underbracket[.5pt]{(\Language{D}(w,1) = 0 \land \Language{D}(w,2) = 0)}_{{} \Leftrightarrow \Predicate{A}(w) = 1} \lor \underbracket[.5pt]{(\Language{D}(w,1) = 1 \land \Language{D}(w,2) = 1)}_{{} \Leftrightarrow \Predicate{B}(w) = 1}.
\end{equation}

Because $\smash{\Language{D} \in \Oracle{\iNExpTime{(i+j)}}{\SigmaP{c-1}}}$, the predicate $\Predicate{B}(w)$ and the \emph{negation} of the predicate $\Predicate{A}(w)$, that is, $\compl{\Predicate{A}}(w) \equiv (\Language{D}(w,1) = 1 \lor \Language{D}(w,2) = 1)$, are in $\Oracle{\iNExpTime{(i+j)}}{\SigmaP{c-1}}$, as the latter is closed under conjunction and disjunction.

Let us now define the oracle machine $\BoundedOracle{\Machine{M}}{?}{1}$ and the oracle $\Omega$.
We start by looking at $\Omega$.
The machine $\Omega$ is designed to receive from $\Machine{M}$ pairs $\pair{\wt{w},\alpha}$ from the domain $\StringUniverse \! \times \set{\text{`}\compl{a}\text{'},\text{`}b\text{'}}$, where $\wt{w}$ is a padded version of the string~$w$ in input to $\Machine{M}$ to reach \iExponential{k} length (\Wrt~$\StringLength{w}$), and $\set{\text{`}\compl{a}\text{'},\text{`}b\text{'}}$ are two symbols.
By this, since $\Omega \in \Oracle{\iNExpTime{(i+j-k)}}{\SigmaP{c-1}}$, the machine $\Omega$ can run for \iExponential{(i+j)} time \Wrt~$\StringLength{w}$.
Upon reception of the query $\pair{\wt{w},\alpha}$, $\Omega$ simply ignores the padding of $\wt{w}$ and decides the language below:
\[
\LanguageOf{\Omega}(\wt{w},\alpha) =
\begin{cases}
  \compl{\Predicate{A}}(w),
  & \text{if } \alpha = \text{`}\compl{a}\text{'} \\
  \Predicate{B}(w),
  & \text{if } \alpha = \text{`}b\text{'}.
\end{cases}
\]

The oracle machine $\BoundedOracle{\Machine{M}}{?}{1}$ works as follows:
first, $\Machine{M}$ guesses a symbol $\hat \alpha$ from $\set{\text{`}\compl{a}\text{'},\text{`}b\text{'}}$.
Then, $\Machine{M}$ submits to $\Omega$ the query $\tup{\wt{w},\hat \alpha}$, where $\wt{w}$ is an \iExponential{k}{}ly padded version of~$w$.
To conclude, $\Machine{M}$ returns the answer of $\Omega$ if $\hat \alpha = \text{`}b\text{'}$, and the opposite answer of $\Omega$ if $\hat \alpha = \text{`}\compl{a}\text{'}$.
This procedure is feasible in \iExponential{k} time.

We show that $\Language{L} = \LanguageOf{\BoundedOracle{\Machine{M}}{\Omega}{1}}$.
\ProofRightarrowItem
Let $w \in \Language{L}$ be a string.
We prove $\BoundedOracle{\Machine{M}}{\Omega}{1}(w) = 1$.
By $w \in \Language{L}$, we have $\Predicate{A}(w) = 1$ \emph{or} $\Predicate{B}(w) = 1$ (see \zcref{eq_property_for_predicates_A_and_B}).
To accept $w$, $\Machine{M}$ guess $\hat \alpha = \text{`}\compl{a}\text{'}$ if $\Predicate{A}(w) = 1$ or guess $\hat \alpha = \text{`}b\text{'}$ if $\Predicate{B}(w) = 1$.
The right guess of $\hat \alpha$ allows $\Machine{M}$ to answer \yeslbl.
\ProofLeftarrowItem
Le $w \notin \Language{L}$ be a string.
We prove $\BoundedOracle{\Machine{M}}{\Omega}{1}(w) = 0$.
Since $w \notin \Language{L}$, $\Predicate{A}(w) = 0$ \emph{and} $\Predicate{B}(w) = 0$ (see \zcref{eq_property_for_predicates_A_and_B}).
Thus, there is no guess of $\hat a$ enabling $\Machine{M}$ to answer \yeslbl.
\end{proof}

From the above \zcref*[typeset=name,nocap]{theo_NN1_containment_BH} stems the following rather interesting result, which states that for bounded oracle classes $\BoundedOracle{\iNExpTime{i}}{\Oracle{\iNExpTime{j}}{\SigmaP{c-1}}}{1}$, when $j \geq 1$, the exact values of $i$ and $j$ are not important, as all those oracle classes equal $\ComplementPrefixKerned\BoundedHausdCLASS{2}{\Oracle{\iNExpTime{(i+j)}}{\SigmaP{c-1}}}$.  
The proof is in \zcref{sec_detailed_proofs_charting_nexp_oracles}.

\begin{corollary}[store=NN1EquivalenceBH]
\label{theo_NN1_equivalence_BH}
Let $i,i' \geq 0$, $j,j' \geq 1$, and $\ell$, be integers with $i + j = \ell = i' + j'$.
Then, for all integers $c \geq 1$,
\[\BoundedOracle{\iNExpTime{i}}{\Oracle{\iNExpTime{j}}{\SigmaP{c-1}}}{1} = \ComplementPrefixKerned\BoundedHausdCLASS{2}{\Oracle{\iNExpTime{\ell}}{\SigmaP{c-1}}} = \BoundedOracle{\iNExpTime{i'}}{\Oracle{\iNExpTime{j'}}{\SigmaP{c-1}}}{1}.\]
\end{corollary}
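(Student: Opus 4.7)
The plan is to derive this corollary as a direct two-fold application of \cref{theo_NN1_containment_BH}. That theorem gives a sandwich inclusion: starting from an oracle class $\BoundedOracle{\iNExpTime{a}}{\Oracle{\iNExpTime{b}}{\SigmaP{c}}}{1}$ with $a \geq 0$ and $b \geq 1$, it provides, for every $k \geq 0$, the chain
\[
\BoundedOracle{\iNExpTime{a}}{\Oracle{\iNExpTime{b}}{\SigmaP{c}}}{1} \subseteq \ComplementPrefixKerned\BoundedHausdRedCLASS{2}{\Oracle{\iNExpTime{(a+b)}}{\SigmaP{c}}} \subseteq \BoundedOracle{\iNExpTime{k}}{\Oracle{\iNExpTime{(a+b-k)}}{\SigmaP{c}}}{1}.
\]
The parameter $k$ is free on the right-hand side, which is precisely what makes the equality across different splittings possible.

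First, I would instantiate \cref{theo_NN1_containment_BH} with $(a,b) = (i,j)$ and $k = i'$ (legal because $0 \leq i' \leq i+j = \ell$ and $i+j-k = j' \geq 1$), obtaining
\[
\BoundedOracle{\iNExpTime{i}}{\Oracle{\iNExpTime{j}}{\SigmaP{c}}}{1} \subseteq \ComplementPrefixKerned\BoundedHausdRedCLASS{2}{\Oracle{\iNExpTime{\ell}}{\SigmaP{c}}} \subseteq \BoundedOracle{\iNExpTime{i'}}{\Oracle{\iNExpTime{j'}}{\SigmaP{c}}}{1}.
\]
Then I would apply the same theorem with the roles of the two pairs swapped, i.e.\ with $(a,b) = (i',j')$ and $k = i$, yielding
\[
\BoundedOracle{\iNExpTime{i'}}{\Oracle{\iNExpTime{j'}}{\SigmaP{c}}}{1} \subseteq \ComplementPrefixKerned\BoundedHausdRedCLASS{2}{\Oracle{\iNExpTime{\ell}}{\SigmaP{c}}} \subseteq \BoundedOracle{\iNExpTime{i}}{\Oracle{\iNExpTime{j}}{\SigmaP{c}}}{1}.
\]
Composing these two chains forces equality throughout, giving the corollary.

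There is essentially no obstacle beyond checking that the parameter ranges of \cref{theo_NN1_containment_BH} are satisfied in both applications, which reduces to the trivial observations that $i, i' \geq 0$, that $j, j' \geq 1$, and that $0 \leq i, i' \leq \ell = i+j = i'+j'$ so that both uses of $k$ are admissible. The only small subtlety worth flagging explicitly is that the middle term $\ComplementPrefixKerned\BoundedHausdRedCLASS{2}{\Oracle{\iNExpTime{\ell}}{\SigmaP{c}}}$ is the same class in both chains (it depends only on $\ell = i+j = i'+j'$), which is exactly what allows the two sandwich inclusions to be glued into a single equality.
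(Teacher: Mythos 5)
Your proposal is correct and matches the paper's own proof: the paper likewise applies \cref{theo_NN1_containment_BH} twice, first with the pair $(i,j)$ and $k=i'$ and then with $(i',j')$ and $k=i$, and glues the two sandwich inclusions (both sharing the middle class $\ComplementPrefixKerned\BoundedHausdRedCLASS{2}{\Oracle{\iNExpTime{\ell}}{\SigmaP{c}}}$) into the stated equality. Your explicit check that the parameter choices are admissible is a fine, if routine, addition.
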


As anticipated above, \zcref{theo_summary_equivalence_intermediate_levels_Hausdorff,theo_summary_nexp_nexp_Hausdorff,theo_NN1_equivalence_BH} support why $\Oracle{\iNExpTime{i}}{\Oracle{\iNExpTime{j}}{\SigmaP{c-1}}} = \BoundedParOracle{\iNExpTime{i}}{\Oracle{\iNExpTime{j}}{\SigmaP{c-1}}}{2}$, but we do not expect $\Oracle{\iNExpTime{i}}{\Oracle{\iNExpTime{j}}{\SigmaP{c-1}}}$ to equal $\BoundedOracle{\iNExpTime{i}}{\Oracle{\iNExpTime{j}}{\SigmaP{c-1}}}{1}$.
By \zcref{theo_summary_equivalence_intermediate_levels_Hausdorff,theo_summary_nexp_nexp_Hausdorff}, we have that $\Oracle{\iNExpTime{i}}{\Oracle{\iNExpTime{j}}{\SigmaP{c-1}}} = \BoundedHausdCLASS{\iExpPolFunctions{i+1}}{\Oracle{\iNExpTime{(i+j)}}{\SigmaP{c-1}}} = \BoundedOracle{\iExpTime{(i+j)}}{\SigmaP{c}}{\iExpPolFunctions{i}}$.
This means that, for a fixed sum $i + j$, with varying values of $i$ and $j$ the oracle classes $\Oracle{\iNExpTime{i}}{\Oracle{\iNExpTime{j}}{\SigmaP{c-1}}}$ characterize the various intermediate levels above the main level $\Oracle{\iNExpTime{(i+j)}}{\SigmaP{c-1}}$ of the \iWEHText{(i{+}j)}.
On the other hand, by \zcref{theo_NN1_equivalence_BH}, for a fixed sum $i + j$, with varying values of $i$ and $j$ (when $j \geq 1$) the oracle classes $\BoundedOracle{\iNExpTime{i}}{\Oracle{\iNExpTime{j}}{\SigmaP{c-1}}}{1}$ all equal the same Hausdorff class $\ComplementPrefix\BoundedHausdCLASS{2}{\Oracle{\iNExpTime{(i+j)}}{\SigmaP{c-1}}}$.
By this, if $\Oracle{\iNExpTime{i}}{\Oracle{\iNExpTime{j}}{\SigmaP{c-1}}}$ were equal to $\BoundedOracle{\iNExpTime{i}}{\Oracle{\iNExpTime{j}}{\SigmaP{c-1}}}{1}$, we would have that all the intermediate levels above $\Oracle{\iNExpTime{(i+j)}}{\SigmaP{c-1}}$, but the last one as $j \geq 1$, would be equal to $\ComplementPrefix\BoundedHausdCLASS{2}{\Oracle{\iNExpTime{(i+j)}}{\SigmaP{c-1}}}$, which is at the second level of the \BHText over $\Oracle{\iNExpTime{(i+j)}}{\SigmaP{c-1}}$.

\subsubsection%
[\texorpdfstring{${i}$}{i}E\texorpdfstring{{\smaller XP}}{XP}S\texorpdfstring{{\smaller PACE}}{PACE} Oracle Machines with N\texorpdfstring{${j}$}{j}E\texorpdfstring{{\smaller XP}}{XP} Oracles]%
{\texorpdfstring{$\boldsymbol{i}$}{i}E\texorpdfstring{{\smaller XP}}{XP}S\texorpdfstring{{\smaller PACE}}{PACE} Oracle Machines with N\texorpdfstring{$\boldsymbol{j}$}{j}E\texorpdfstring{{\smaller XP}}{XP} Oracles}
\label{sec_expspace_nexp_oracle_classes}

In this section, we look at the classes $\Oracle{\iExpSpace{i}}{\Oracle{\iNExpTime{j}}{\SigmaP{c-1}}}$.

For our first result, namely for $\Oracle{\iExpSpace{(i-1)}}{\Oracle{\iNExpTime{j}}{\SigmaP{c-1}}} \subseteq \BoundedHausdCLASS{\iExpPolFunctions{i}}{\Oracle{\iNExpTime{(i+j)}}{\SigmaP{c-1}}}$, one may think that this can easily be obtained by noticing that $\Oracle{\iExpSpace{(i-1)}}{\Oracle{\iNExpTime{j}}{\SigmaP{c-1}}} \subseteq \Oracle{\iExpTime{i}}{\Oracle{\iNExpTime{j}}{\SigmaP{c-1}}} = \BoundedOracle{\iExpTime{i}}{\Oracle{\iNExpTime{j}}{\SigmaP{c-1}}}{\iExpPolFunctions{i}}$, and then \zcref{theo_summary_generalized_equivalence_intermediate_levels_Hausdorff} applies.
However, this argument implies only that
$\Language{L} \in \BoundedHausdCLASS{\iExpPolFunctions{(i+1)}}{\Oracle{\iNExpTime{(i+j)}}{\SigmaP{c-1}}}$.

Interestingly, showing  $\Oracle{\iExpSpace{(i-1)}}{\Oracle{\iNExpTime{j}}{\SigmaP{c-1}}} \subseteq \BoundedHausdCLASS{\iExpPolFunctions{i}}{\Oracle{\iNExpTime{(i+j)}}{\SigmaP{c-1}}}$ requires a tailored proof.
The intuition why this is required is as follows.
The different queries in all the possible \ounawarelegal computations of a $\Oracle{\iExpTime{i}}{?}$ machine are \iExponential{(i+1)}{}ly-many.
Whereas, the different queries in all the possible \ounawarelegal computations of a $\Oracle{\iExpSpace{(i-1)}}{?}$ machine are only \iExponential{i}{}ly-many (for more details, see the proof of \zcref{theo_expspace_nexp_containment}), and this is a key point of the proof.
In fact, the proof will resemble the technique used to prove \zcref{theo_nexp_nexp_containment} and will rest on the fact that an $\iExpSpace{(i-1)}$ oracle machine cannot generate too many different queries over all its possible \ounawarelegal computations for a given input.

The analogue result restricted to $\PolHier$, i.e., $\ParOracle{\PTime}{\NPTime} = \Oracle{\LogSpace}{\NPTime}$, was independently shown by \citet{Wagner1990} and \citet{Buss1991}.
Both works show that polynomial truth-table reductions to \NPTime (i.e., $\ParOracle{\PTime}{\NPTime}$) coincide with logspace truth-table reductions to \NPTime, and then invoke a result by \citet{LadnerL76} establishing that logspace truth-table and logspace Turing reductions to \NPTime are equivalent.
Our proof, on the other hand, directly show that languages in $\Oracle{\iExpSpace{(i-1)}}{\Oracle{\iNExpTime{j}}{\SigmaP{c-1}}}$ can be characterized by $\Oracle{\iNExpTime{(i+j)}}{\SigmaP{c-1}}$ Hausdorff predicate of \iExponential{i} length, and only marginally shares ideas with \citet{LadnerL76}'s proof. 

As \citet{LadnerL76} did, we base our argument on the notion of an \emph{ID graph for an input string~$x$ and an oracle machine $\Oracle{\Machine{M}}{?}$}.
They introduced this concept to analyze the queries that an oracle machine $\Oracle{\Machine{M}}{?}$, \emph{when executed on~$x$}, may ever ask to a \emph{generic} oracle.
Intuitively, the ID graph is an extended computation tree for $\Oracle{\Machine{M}}{?}(x)$ representing all possible computations on input $x$ under every possible pattern of oracle answers.
Our use of this notion, however, differs substantially from theirs.
\Citeauthor{LadnerL76} construct an explicit algorithm that explores the ID graph as a data structure, following the correct computation path guided by the oracle's answers, which are assumed to be available to the algorithm.
In contrast, we employ the ID graph only as a conceptual tool to reason about the set of queries that an $\iExpSpace{(i-1)}$ oracle machine may issue to an unknown oracle.
In fact, we cannot rely on an explicit exploration of the ID graph, because our proof relies on Hausdorff reducing languages to suitable Hausdorff predicates.
Hence, we need to resort to different methods.

We employ a census technique combined with a variant of the pseudo\nbdash-complement method adopted by \citet{Mahaney82} and \citet{Kadin1989}.
These tools allow us to determine, for a given input, all queries that an $\iExpSpace{(i-1)}$ oracle machine may issue and that are accepted by the oracle.
This approach avoids the need for the ``tt\nbdash-condition generator'' and ``tt\nbdash-condition evaluator'' used in the proof of \citet{LadnerL76}.

\begin{theorem}[store=SNcontainment]
\label{theo_expspace_nexp_containment}
Let $i,j \geq 0$ and $k$ be integers with $i \leq k \leq i + j$.
Then, for all integers $c \geq 1$,
\[\Oracle{\iExpSpace{(i-1)}}{\Oracle{\iNExpTime{j}}{\SigmaP{c-1}}} \subseteq \BoundedHausdCLASS{\iExpPolFunctions{i}}{\Oracle{\iNExpTime{(i+j)}}{\SigmaP{c-1}}} \subseteq \mkern -2mu
\begin{cases}
  \BoundedOracle{\iExpSpace{(k-1)}}{\Oracle{\iNExpTime{(i+j-k)}}{\SigmaP{c-1}}}{\iExpPolFunctions{(i-1)}} \\
  \BoundedParOracle{\iExpSpace{(k-1)}}{\Oracle{\iNExpTime{(i+j-k)}}{\SigmaP{c-1}}}{\iExpPolFunctions{i}}.
\end{cases}\]
\end{theorem}

\begin{proof}
We start by proving $\Oracle{\iExpSpace{(i-1)}}{\Oracle{\iNExpTime{j}}{\SigmaP{c-1}}} \subseteq \BoundedHausdCLASS{\iExpPolFunctions{i}}{\Oracle{\iNExpTime{(i+j)}}{\SigmaP{c-1}}}$.

Let $\Language{L} \in \Oracle{\iExpSpace{(i-1)}}{\Oracle{\iNExpTime{j}}{\SigmaP{c-1}}}$ be a language.
We prove $\Language{L} \in \BoundedHausdCLASS{\iExpPolFunctions{i}}{\Oracle{\iNExpTime{(i+j)}}{\SigmaP{c-1}}}$.
Since $\Language{L} \in \Oracle{\iExpSpace{(i-1)}}{\Oracle{\iNExpTime{j}}{\SigmaP{c-1}}}$, there are an $\iExpSpace{(i-1)}$ oracle machine $\Oracle{\Machine{M}}{?}$ and a $\Oracle{\iNExpTime{j}}{\SigmaP{c-1}}$ oracle $\Omega$ such that $\Language{L} = \LanguageOf{\Oracle{\Machine{M}}{\Omega}}$.
Let $\AllQueriesGenericOracle{\Machine{M}}{w}$ be the set of all queries that $\Oracle{\Machine{M}}{?}$, when executing on~$w$, may issue to its (unknown) oracle---this is equivalent to \citeauthor{LadnerL76}'s set of \emph{queries generated by $\Oracle{\Machine{M}}{?}$ on input~$w$}~\cite{LadnerL76}.
It is known that $\AllQueriesGenericOracle{\Machine{M}}{w}$ contains at most \iExponential{i}{}ly-many queries~\cite{LadnerL76,Hemaspaandra1994}---we provide an intuition below.

The rationale is as follows.
Because the query tape of $\Oracle{\Machine{M}}{?}$ is \emph{write\nbdash-only}, neither its content nor the position of its head affects the machine's subsequent transitions.
In fact, to reconstruct the entire computation of $\Oracle{\Machine{M}}{?}$, it is sufficient to record the content and the position of the head of the tapes that can be read, hence the input and work tapes.
Since $\Oracle{\Machine{M}}{?}$ may write \iExponential{(i-1)}{}ly-many symbols on its work tapes, the total number of distinct IDs of this form, i.e., IDs not including the query tape, is \iExponential{i}.
Consider now the ID graph for $w$ and $\Oracle{\Machine{M}}{?}$ whose nodes are the IDs of this form.
Following \citet{LadnerL76}, call \emph{begin nodes} those corresponding to the initial ID and to IDs at which $\Oracle{\Machine{M}}{?}$ receives an oracle answer.
We introduce the term \emph{end nodes} for those corresponding to IDs at which $\Oracle{\Machine{M}}{?}$ issues a query or halts.
Since $\Oracle{\Machine{M}}{?}$ is deterministic, starting from each begin node there is a \emph{unique} partial computation terminating at an end node.
Hence, every query generated during the execution on $w$ is associated with some node of this ID graph.
Since the number of distinct IDs of this form is \iExponential{i}{}, it follows that $\AllQueriesGenericOracle{\Machine{M}}{w}$ contains at most \iExponential{i}{}ly\nbdash-many distinct queries.

Let $p(n)$ hence be a polynomial such that $\SetSize{\AllQueriesGenericOracle{\Machine{M}}{w}} \leq \iExp{i}{p(\StringLength{w})}$.

Let us define the following three predicates, which will be combined to define a predicate that serves as the basis for the Hausdorff predicate characterizing~$\Language{L}$.
Below, $w$ is a string and $z$ is a non\nbdash-negative integer.

\begin{itemize}[noitemsep]
  \item $\PredHSucc[\Machine{M},\Omega]{C}(w,z)$: $\valtrue$ iff there exist (strictly) more than $z$ distinct queries $q \in \AllQueriesGenericOracle{\Machine{M}}{w}$ such that $\Omega(q) = 1$;

  \item $\PredHAccCurr[\Machine{M},\Omega]{A}(w,z)$: $\valtrue$ iff there exists a subset $Y \subseteq \AllQueriesGenericOracle{\Machine{M}}{w}$ with $\SetSize{Y} = z$ and, for all $q \in Y$, $\Omega(q) = 1$, and there exists an \emph{accepting} \ounawarelegalemph computation $\pi$ for $\Oracle{\Machine{M}}{?}(w)$ such that every query receiving in $\pi$ a \yesansw belongs to $Y$, and every query receiving a \noansw in $\pi$ does not belong to $Y$;
      and

  \item $\PredHRejCurr[\Machine{M},\Omega]{A}(w,z)$: $\valtrue$ iff there exists a subset $Y \subseteq \AllQueriesGenericOracle{\Machine{M}}{w}$ with $\SetSize{Y} = z$ and, for all $q \in Y$, $\Omega(q) = 1$, and there exists a \emph{rejecting} \ounawarelegalemph computation $\pi$ for $\Oracle{\Machine{M}}{?}(w)$ such that every query receiving in $\pi$ a \yesansw belongs to $Y$, and every query receiving a \noansw in $\pi$ does not belong to $Y$.
\end{itemize}

The three above predicates can be shown in $\Oracle{\iNExpTime{(i+j)}}{\SigmaP{c-1}}$ \Wrt $\StringLength{w}$ only.
First, observe the following.
We know that the number of distinct queries that $\Oracle{\Machine{M}}{?}$ may generate on input~$w$ is bounded by $\iExp{i}{p(\StringLength{w})}$.
Therefore, predicates $\PredHSucc[\Machine{M},\Omega]{C}(w,z)$ and  $\PredHAccCurr[\Machine{M},\Omega]{A}(w,z)$/$\PredHRejCurr[\Machine{M},\Omega]{A}(w,z)$ are trivially false when $z \geq \iExp{i}{p(\StringLength{w})}$ and $z > \iExp{i}{p(\StringLength{w})}$, respectively.
By this, we need to only consider pairs $\pair{w,z}$ where the value of $z$ is \iExponential{i}{}ly-bounded in the size of~$w$.
Such values of $z$ have a binary representation of \iExponential{(i-1)} size in the size of~$w$.

Let us now consider $\PredHSucc[\Machine{M},\Omega]{C}(w,z)$.
To answer \yeslbl, we can proceed as follows.
First we guess a set $Y$ of $z + 1$ \iExponential{i}{}ly-long strings (since $\Oracle{\Machine{M}}{?}$ is an \iExponential{(i-1)}-space machine, it cannot generate longer queries).
Remember that $z$'s value is \iExponential{i} in the size of~$w$.
Hence, we need to guess \iExponential{i}{}ly-many \iExponential{i}{}ly-long strings.
We also guess enough \ounawarelegal computations $\Pi$ for $\Oracle{\Machine{M}}{?}(w)$ such that each string in $Y$ appears as a query at least once in the guessed computations.
Observe that we do not need to guess more than \iExponential{i}{}ly-many computations, as in the worst case scenario there is one string from $Y$ in each computation, and each guessed computation is \iExponential{i}{}ly-long, as $\Oracle{\Machine{M}}{?}$ is a \iExponential{(i-1)}-space machine and hence its running time is \iExponential{i}{}ly-bounded.
Furthermore, we guess the certificates witnessing that each string in $Y$ is actually accepted by $\Omega$---remember that $\Omega \in \Oracle{\iNExpTime{j}}{\SigmaP{c-1}}$.
These certificates are accepting computations for the $\iNExpTime{j}$ ``part'' of $\Omega$, i.e., we leave out from the guess the part of computation associated with the $\SigmaP{c-1}$ oracle.
These accepting computations for $\Omega$ are \iExponential{(i+j)}{}ly-long, because $\Omega$ may receive \iExponential{i}{}-long queries from $\Machine{M}$.
Hence, these are \iExponential{i}{}ly-many \iExponential{(i+j)}{}ly-long certificates.
Thus, the nondeterministic guess phase can overall be carried out in \iExponential{(i+j)} time.
The check phase proceeds as follows.
We need to check that the guessed computations in $\Pi$ are indeed \ounawarelegal computations for $\Oracle{\Machine{M}}{?}(w)$ (feasible in \iExponential{i} time).
Then, we check that each string in $Y$ appears in at least one of the guessed computations in $\Pi$ (feasible in \iExponential{i} time).
We conclude by checking the validity of the certificates witnessing that the guessed queries are actually accepted by $\Omega$ (feasible in \iExponential{(i+j)} time with the aid of a $\SigmaP{c-1}$ oracle).

To decide $\PredHAccCurr[\Machine{M},\Omega]{A}(w,z)$ (resp., $\PredHRejCurr[\Machine{M},\Omega]{A}(w,z)$), we proceed in a similar way.
We first guess a set $Y$ of $z$ \iExponential{i}{}ly-long strings (feasible in \iExponential{i} time, see above) together with \ounawarelegal computations $\Pi$ for $\Oracle{\Machine{M}}{?}(w)$ such that each string in $Y$ appears at least once as a query in the guessed computations (feasible in \iExponential{i} time; see above).
We moreover guess the certificates witnessing that each string in $Y$ is actually accepted by $\Omega$ (feasible in \iExponential{(i+j)} time; see above).
We also guess an accepting (resp., a rejecting) \ounawarelegal computation $\pi$ for $\Oracle{\Machine{M}}{?}(w)$ (feasible in \iExponential{i} time).
The entire guess phase can be carried out in nondeterministic \iExponential{(i+j)} time.
The check phase proceeds as follows.
We need to check that the guessed computations $\Pi$ are indeed \ounawarelegal computations for $\Oracle{\Machine{M}}{?}(w)$ (feasible in \iExponential{i} time).
Then, we check that each string in $Y$ appears in at least one of the guessed computations (feasible in \iExponential{i} time).
After this, we check the validity of the certificates witnessing that the guessed queries are actually accepted by $\Omega$ (feasible in \iExponential{(i+j)} time with the aid of a $\SigmaP{c-1}$ oracle).
We continue by checking that $\pi$ is actually an accepting (resp., a rejecting) \ounawarelegal computation $\pi$ for $\Oracle{\Machine{M}}{?}(w)$ (feasible in \iExponential{i} time).
We conclude by checking that all queries receiving in $\pi$ a \yesansw belong to $Y$, and all queries receiving in $\pi$ a \noansw do not belong to $Y$ (feasible in \iExponential{i} time).

The rest of the proof follows that of \zcref{theo_nexp_nexp_containment}, by defining the same predicate $\Predicate{B}_{\Machine{M},\Omega}(w,z)$.

\Proofsep

We now prove \[\BoundedHausdCLASS{\iExpPolFunctions{i}}{\Oracle{\iNExpTime{(i+j)}}{\SigmaP{c-1}}} \subseteq \mkern -2mu
\begin{cases}
  \BoundedOracle{\iExpSpace{(k-1)}}{\Oracle{\iNExpTime{(i+j-k)}}{\SigmaP{c-1}}}{\iExpPolFunctions{(i-1)}} \\
  \BoundedParOracle{\iExpSpace{(k-1)}}{\Oracle{\iNExpTime{(i+j-k)}}{\SigmaP{c-1}}}{\iExpPolFunctions{i}}.
\end{cases}\]

Let $\Language{L} \in \BoundedHausdCLASS{\iExpPolFunctions{i}}{\Oracle{\iNExpTime{(i+j)}}{\SigmaP{c-1}{}}}$ be a language.
We prove $\Language{L} \in \ParOracle{\iExpSpace{(k-1)}}{\Oracle{\iNExpTime{(i+j-k)}}{\SigmaP{c-1}}}$ and $\Language{L} \in \BoundedOracle{\iExpSpace{(k-1)}}{\Oracle{\iNExpTime{(i+j-k)}}{\SigmaP{c-1}}}{\iExpPolFunctions{(i-1)}}$.
Since $\Language{L} \in \BoundedHausdCLASS{\iExpPolFunctions{i}}{\Oracle{\iNExpTime{(i+j)}}{\SigmaP{c-1}}}$, there is a $\Oracle{\iNExpTime{(i+j)}}{\SigmaP{c-1}}$ Hausdorff predicate of length $\iExp{i}{p(n)}$, for some polynomial $p(n)$, such that, for every string $w$, we have $w \in \Language{L}$ iff The Hausdorff index $\HausdIndex{w}{\Language{D}}$ is odd.
We claim that a $\iExpSpace{(k-1)}$ oracle machine $\Oracle{\Machine{M}}{?}$ can decide $\Language{L}$ with the aid of a $\Oracle{\iNExpTime{(i+j-k)}}{\SigmaP{c-1}}$ oracle $\Omega$ for (a slight variation of)~$\Language{D}$ via either adaptive or nonadaptive queries.

More specifically, the machine $\Omega$ is designed to receive from $\Oracle{\Machine{M}}{?}$ pairs $\pair{\wt{w},z}$, where $\wt{w}$ is a \iExponential{k}{}ly padded version of the input string~$w$ (remember that $\Oracle{\Machine{M}}{?} \in \iExpSpace{(k-1)}$, and hence it can run for \iExponential{k} time).
By this, $\Omega$ can run for \iExponential{(i+j)} time \Wrt to the size of~$w$.
Upon reception of the query $\pair{\wt{w},z}$, $\Omega$ simply ignores the padding of $\wt{w}$ and decides whether $\pair{w,z} \in \Language{D}$ or not.
Clearly, this can be done by $\Omega$, because $\Language{D} \in \Oracle{\iNExpTime{(i+j)}}{\SigmaP{c-1}}$.
For this reason, below we will regard $\Omega$ as an oracle for $\Language{D}$.

The machine $\Oracle{\Machine{M}}{?}$ can compute the Hausdorff index of $w$ as follows:
either $\Oracle{\Machine{M}}{?}$ issues in parallel the queries $(\wt{w},z)$ for all values of $z$ with $1 \leq z \leq \iExp{i}{p(\StringLength{w})}$, or $\Oracle{\Machine{M}}{?}$ performs a binary search within the domain $[0,\iExp{i}{p(\StringLength{w})}]$.
To implement these approaches, $\Machine{M}$ must write \iExponential{k}{}ly padded queries, compute the value $\iExp{i}{p(\StringLength{w})}$, operate on integers bounded by this quantity, and issue $\iExp{i}{p(\StringLength{w})}$\nbdash-many queries at most.
We now show that these do not pose any difficulty for $\Machine{M}$.

The value $\iExp{i}{p(\StringLength{w})}$ can be computed in $\iExpSpace{(i-1)}$ (see \zcref{sec_maths_complexity}; Iterated exponentials of polynomials), and hence in $\iExpSpace{(k-1)}$, as $i \leq k$.

The queries $\tup{\wt{w}, z}$, where $\wt{w}$ is a \iExponential{k}{}ly padded version of~$w$, can be asked by $\Machine{M}$ as its query tape has no space constraints, and $\Machine{M}$ can run for \iExponential{k} time.
Whether the machine $\Machine{M}$ issues its queries in parallel or issues its queries sequentially to implement the binary search, $\Machine{M}$ needs to keep track of the indices $z$. 
If $\Machine{M}$ asks all its queries in parallel, the machine requires to store on its work tape a counter needing only \iExponential{(k-1)} space, as $z \leq \iExp{i}{p(\StringLength{w})}$ and $i \leq k$, to be represented (see above).
If $\Machine{M}$ uses a binary search, the machine needs also to perform an integer division by~$2$, which in binary can simply be carried out by dropping the least significant bit of the number, and compare numbers represented in \iExponential{(k-1)} space.

Let us now consider the amount of queries issued.
If $\Machine{M}$ asks all its queries in parallel, $\Machine{M}$ needs to issue $\iExp{i}{p(\StringLength{w})}$ queries.
All these queries can actually be submitted by $\Machine{M}$, because $\Machine{M}$ can run for \iExponential{k} time and we are assuming $i \leq k$.
If $\Machine{M}$ performs a binary search, $\Machine{M}$ issues only \iExponential{(i-1)}{}ly-many queries as the search space is $[0,\iExp{i}{p(\StringLength{w})}]$, and hence again it can be done by $\Machine{M}$. 

Once $\Machine{M}$ has computed the Hausdorff index $\HausdIndex{w}{\Language{D}}$ of $w$, the machine $\Machine{M}$ answers \yeslbl iff $\HausdIndex{w}{\Language{D}}$ is odd.
\end{proof}

From the theorem above, we can derive the following corollary.
A result with a flavor of the corollary below, but not as general as the one here stated, was reported in~\cite{Gottlob1995} and was limited only to $\Oracle{\PSpace}{\SigmaP{c}} = \BoundedOracle{\ExpTime}{\SigmaP{c}}{\PolFunctions}$, which here descends from \zcref{theo_general_expspace_nexp_various_flavors,theo_summary_equivalence_intermediate_levels_Hausdorff}.
The proof is in \zcref{sec_detailed_proofs_charting_nexp_oracles}.

\begin{corollary}[store=SummarySNHausdorff]
\label{theo_general_expspace_nexp_various_flavors}
Let $i,j \geq 0$ be integers.
Then, for all integers $c \geq 1$, 
\begin{multline*}
    \Oracle{\iExpSpace{(i-1)}}{\Oracle{\iNExpTime{j}}{\SigmaP{c-1}}} = \BoundedHausdCLASS{\iExpPolFunctions{i}}{\Oracle{\iNExpTime{(i+j)}}{\SigmaP{c-1}}} = {} \\ \BoundedOracle{\iExpSpace{(i-1)}}{\Oracle{\iNExpTime{j}}{\SigmaP{c-1}}}{\iExpPolFunctions{(i-1)}} = \ParOracle{\iExpSpace{(i-1)}}{\Oracle{\iNExpTime{j}}{\SigmaP{c-1}}}.
\end{multline*}
\end{corollary}

\subsection{Looking through the Lens of Hausdorff Classes}
\label{sec_discussion_Hausdorff_perspective}

The above ``charting'' results, obtained via Hausdorff classes, provide us an interesting tool to easily uncover relationships between classes of the iterated exponential hierarchies.
Consider, for example, the class $\Oracle{\iNExpTime{3}}{\iNExpTime{2}}$.
We can now easily answer questions such as: What is its relation with $\Oracle{\iNExpTime{2}}{\iNExpTime{3}}$? And with $\Oracle{\iNExpTime{4}}{\NExpTime}$?

These three classes share the feature that their oracle can run in nondeterministic \iExponential{5} time (as the oracle may receive long queries).
However, the calling machines differ in running time and thus in the strength of their guesses.
Our results confirm this intuition by placing the three classes at distinct intermediate levels of the first step of the \iWEHText{5}.
Indeed, by \zcref{theo_summary_equivalence_intermediate_levels_Hausdorff,theo_summary_nexp_nexp_Hausdorff}, we have:
\begin{alignat*}{2}
  \Oracle{\iNExpTime{3}}{\iNExpTime{2}} &= \BoundedHausdCLASS{\iExpPolFunctions{4}}{\iNExpTime{5}} & &= \BoundedOracle{\iExpTime{5}}{\NPTime}{\iExpPolFunctions{3}} \\
  \Oracle{\iNExpTime{2}}{\iNExpTime{3}} &= \BoundedHausdCLASS{\iExpPolFunctions{3}}{\iNExpTime{5}} & &= \BoundedOracle{\iExpTime{5}}{\NPTime}{\iExpPolFunctions{2}} \\
  \Oracle{\iNExpTime{4}}{\NExpTime} &= \BoundedHausdCLASS{\iExpPolFunctions{5}}{\iNExpTime{5}} & &= \BoundedOracle{\iExpTime{5}}{\NPTime}{\iExpPolFunctions{4}}. 
\end{alignat*}

Our Hausdorff characterization of the intermediate levels of the iterated exponential hierarchies exhibit an interesting common thread.
If $\BoundedHausdCLASS{g(n)}{\ComplexityClass{C}}$ is the class of Hausdorff languages of length $g(n)$, all its languages can be decided by checking whether the Hausdorff index, for a suitable Hausdorff predicate $\Language{D}$, of an input string $w$ is odd.
There are hence at least three kinds of oracle machines deciding the languages in $\BoundedHausdCLASS{g(n)}{\ComplexityClass{C}}$, and relating to three different ways of identifying the Hausdorff index $\HausdIndex{w}{\Language{D}}$ of the input string $w$:
\begin{enumerate}[nosep,label=(\roman*)]
  \item those finding $\HausdIndex{w}{\Language{D}}$ by asking in parallel to the $\Language{D}$ oracle the truth value of all predicates $\Language{D}(w,1),\dots,\linebreak[0]\Language{D}(w,g(\StringLength{w}))$;
  \item those computing $\HausdIndex{w}{\Language{D}}$ via a binary search aided by the $\Language{D}$ oracle deciding whether the predicates $\Language{D}(w,z)$ are true or not for the ``sampled'' values $z$; and,
  \item those guessing $\HausdIndex{w}{\Language{D}}$ and checking via (two parallel queries to) the $\Language{D}$ oracle whether the guessed $\HausdIndex{w}{\Language{D}}$ is actually the maximum index $z$ of the true predicates $\Language{D}(w,z)$.
\end{enumerate}

The existence of these three approaches to decide the languages in $\ComplexityClass{C}$ does not necessarily entail that the three oracle complexity classes characterized by these three kinds of computation are equivalent.
These three oracle complexity classes simply contain $\ComplexityClass{C}$.
Showing these classes equivalent requires to prove that the oracle machines of these classes cannot decide languages outside $\ComplexityClass{C}$.

For example, let us consider the class $\BoundedHausdCLASS{2^{2^{\PolFunctions}}}{\iNExpTime{2}}$ of languages admitting double-exponentially-long Hausdorff reductions to $\iNExpTime{2}$ languages.
By our results, we have
\begin{alignat*}{2}
  \BoundedHausdCLASS{2^{2^{\PolFunctions}}}{\iNExpTime{2}} &= \ParOracle{\iExpTime{2}}{\NPTime} & & \text{(by \zcref{theo_summary_equivalence_intermediate_levels_Hausdorff})} \\
  \BoundedHausdCLASS{2^{2^{\PolFunctions}}}{\iNExpTime{2}} &= \Oracle{\ExpTime}{\NExpTime} & & \text{(by \zcref{theo_summary_generalized_equivalence_intermediate_levels_Hausdorff})} \\
  \BoundedHausdCLASS{2^{2^{\PolFunctions}}}{\iNExpTime{2}} &= \Oracle{\NExpTime}{\NExpTime} \qquad & & \text{(by \zcref{theo_summary_nexp_nexp_Hausdorff})}. 
\end{alignat*}

The equivalences above easily (re)prove the equivalence $\Oracle{\ExpTime}{\NExpTime} = \Oracle{\NExpTime}{\NExpTime}$~\cite{SchoningW88,Hemaspaandra1994}, and establish, moreover, the equivalence of the latter with $\ParOracle{\iExpTime{2}}{\NPTime}$, which is new.
In this example, the classes $\ParOracle{\iExpTime{2}}{\NPTime}$, $\Oracle{\ExpTime}{\NExpTime}$, and $\Oracle{\NExpTime}{\NExpTime}$, are type~(i), (ii), and~(iii), oracle complexity classes, respectively, associated with $\BoundedHausdCLASS{2^{2^{\PolFunctions}}}{\iNExpTime{2}}$.

However, it is not always the case that three oracle complexity classes associated with the three types above are actually equivalent.
Let us consider for example the class $\BoundedHausdCLASS{2^{2^{\PolFunctions}}}{\NExpTime}$.
On the one hand, we have that the class $\Oracle{\ExpTime}{\NPTime}$, equalling $\BoundedHausdCLASS{2^{2^{\PolFunctions}}}{\NExpTime}$ by \zcref{theo_summary_equivalence_intermediate_levels_Hausdorff}, is a type~(ii) oracle complexity class associated with $\BoundedHausdCLASS{2^{2^{\PolFunctions}}}{\NExpTime}$.
On the other hand, $\Oracle{\NExpTime}{\NPTime}$ seems to be a type~(iii) oracle complexity class associated with $\BoundedHausdCLASS{2^{2^{\PolFunctions}}}{\NExpTime}$.
In fact, we have $\BoundedHausdCLASS{2^{2^{\PolFunctions}}}{\NExpTime} \subseteq \Oracle{\NExpTime}{\NPTime}$, because a \NExpTime oracle machine can guess an integer $\hat z$ of double-exponential \emph{value}, and hence of exponential \emph{size}, and ask the oracle to check the validity of the guess.
Nonetheless, $\Oracle{\NExpTime}{\NPTime}$ does not seem to be subset of $\BoundedHausdCLASS{2^{2^{\PolFunctions}}}{\NExpTime}$, and most likely it is not, because $\Oracle{\NExpTime}{\NPTime}$ is the second (main) level of the \WEHStressedText, whereas $\BoundedHausdCLASS{2^{2^{\PolFunctions}}}{\NExpTime}$ characterizes the highest intermediate level in the first step of the \WEHStressedText.

\subsection{The Strong Exponential Hierarchy}
\label{sec_top_seh}

In this section we investigate the \SEHText via Hausdorff classes.
By looking at the \SEHText in this way, we can answer the questions left open by \citet{Hemachandra1989}.

The \SEHText ($\SExpHier$) was introduced at the time in which also the \WEHStressedText ($\WExpHier$) was investigated.
The definition of the $\SExpHier$ levels was ``specular'' to that of $\WExpHier$.
If the levels of $\WExpHier$ were defined as $\Oracle{\NExpTime}{\SigmaP{k}}$, the levels of $\SExpHier$ were defined as $\Oracle{(\SigmaP{k})}{\NExpTime}$.
More precisely, the complexity classes $\SigmaSExp{k}$, $\PiSExp{k}$, and $\DeltaSExp{k}$, constituting the \defin{\SEHText}~\cite{Hemachandra1989} are defined as:
\begin{alignat*}{3}
  \DeltaSExp{0} &= \ExpTime; & \qquad \qquad \DeltaSExp{1} &= \ExpTime;  & \qquad \qquad \DeltaSExp{k} &= \Oracle{\PTime}{\SigmaSExp{k-1}} = \Oracle{(\DeltaP{k-1})}{\NExpTime}, \quad \text{for }k \geq 2; \\
  \SigmaSExp{0} &= \ExpTime; &\SigmaSExp{1} &= \NExpTime;  &  \SigmaSExp{k} &= \Oracle{\NPTime}{\SigmaSExp{k-1}} = \Oracle{(\SigmaP{k-1})}{\NExpTime}, \quad \text{for }k \geq 2; \\
  \PiSExp{0} &= \ExpTime; & \PiSExp{1} &= \CoNExpTime;  &  \PiSExp{k} &= \ComplementPrefix\SigmaSExp{k} = \Oracle{(\PiP{k-1})}{\NExpTime}, \quad \text{for }k \geq 2.
\end{alignat*}
The \SEHText is defined as $\SExpHier = \bigcup_{k \geq 0} \SigmaSExp{k}$.
\Citet{Hemachandra1989} showed that $\PNExp = \NPNExp$, and hence that $\SExpHier$ collapses to its second level, i.e., $\SExpHier = \PNExp$.
The inclusion relationships, which are all currently believed to be strict, between the \SExpHier levels are:
$\ExpTime \subseteq \NExpTime \subseteq \PNExp = \NPNExp = \SExpHier$.

Interestingly, the \SEHText was classically regarded, and hence investigated, as an entity independent from the \WEHStressedText.
The collapse of the \SEHText was greeted as something ``surprising''~\cite{Hemachandra1986,Hartmanis1990,Beigel1991}, and it was wondered whether the \PHText would have collapsed for similar reasons~\cite{Hemachandra1989}.
This was also due to the fact that the collapse of \SExpHier was obtained by \citet{Hemachandra1989} via a census argument, and hence no deeper insights on the reason for the collapse were provided.
\Citet{Hemachandra1989} himself left in his work some open questions on the \SEHText hinting at the need of bringing to light a structural reason for its collapse.

In this respect, Hausdorff classes come to the rescue.
We will see that the reason why $\PNExp$ equals $\NPNExp$ is simply that $\PNExp$ and $\NPNExp$ are type~(ii) and~(iii) oracle complexity classes associated with the Hausdorff class $\BoundedHausdCLASS{2^\PolFunctions}{\NExpTime}$ of $\NExpTime$ Hausdorff languages of exponential length (see \zcref{sec_discussion_Hausdorff_perspective}).
We will also unveil that, although $\SExpHier$ and $\WExpHier$ were treated as two independent objects, $\SExpHier$ is actually a portion of the first step of $\WExpHier$ (see \zcref{fig_iterated_exponentials_meta-hierarchy})---in particular, an equivalent of the class $\Oracle{\ExpTime}{\NPTime}$, which is the highest intermediate level in the first step of $\WExpHier$, does not belong to $\SExpHier$.
For this reason, the collapse of $\SExpHier$ must not be surprising, given that the levels of $\SExpHier$ are precisely characterized by $\NExpTime$ Hausdorff classes of increasing lengths, and hence they are intermediate levels of a step in a hierarchy.

In what follows, we will look first at the \SEHThetaLevel (i.e., the classes $\PNExpLog = \PNExpPar$) and then at the \SEHDeltaLevel (i.e., the classes $\PNExp = \NPNExp$) of \SExpHier.
Similarly to the classes $\ThetaP{c+1} = \LogOracle{\PTime}{\SigmaP{c}}$, which were considered part of $\PolHier$ by \citet{Wagner1990}, we here consider $\PNExpLog$ as a level of $\SExpHier$.
In these subsections, we will characterize the levels of the \SEHText in different ways and we will provide additional results, among which the answers to \citeauthor{Hemachandra1989}'s~\cite{Hemachandra1989} open questions, which are: 
What is a certificate-based characterization for \SExpHier?
Is there an alternating Turing machine characterization for \SExpHier?

\stoptoc

\subsubsection{The \texorpdfstring{$\Theta$}{\textbackslash{}Theta}-level of the \SEHText}
\label{sec_theta_level_hausdorff}

In this section we deal with the \SEHThetaLevel of $\SExpHier$.
We will see that this hierarchy level is precisely characterized by $\NExpTime$ Hausdorff languages of polynomial length, and hence by many different, but equivalent, oracle classes.

These additional characterizations of the \SEHThetaLevel of \SExpHier stem from the general results on the iterated exponential hierarchies in \zcref{sec_charting_nexp_oracles}.
Below, we will also provide references to the analogue results for $\PolHier$.

From \zcref{theo_general_constant_rounds_parallel_calls_equals_single_round}, any language $\Language{L}$ decidable by a polynomial\nbdash-time oracle Turing machine using a fixed number of rounds of parallel queries to a \NExpTime oracle is also decidable by such a machine using a \emph{single} round of parallel queries to a \NExpTime oracle. 
Combined with \zcref{theo_summary_generalized_equivalence_intermediate_levels_Hausdorff}, we obtain the following.

\begin{corollary}
$\BoundedHausdCLASS{\PolFunctions}{\NExpTime} = \ParBoundedOracle{\PTime}{\NExpTime}{k} = \PNExpPar$, for every fixed integer $k \geq 1$.
\end{corollary}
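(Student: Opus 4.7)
The plan is to derive this corollary as a direct specialization of two earlier general results, chaining them at the appropriate hierarchy level. First I would address the equality $\ParBoundedOracle{\PTime}{\NExpTime}{k} = \PNExpPar$. This follows from \cref{theo_general_constant_rounds_parallel_calls_equals_single_round}, which states that $\DoubleBoundedParOracle{\iExpTime{i}}{\SigmaP{c+1}}{\iExpPolFunctions{g}}{k} = \BoundedParOracle{\iExpTime{i}}{\SigmaP{c+1}}{\iExpPolFunctions{g}}$ for any fixed $k \geq 1$. Instantiating with $i = 0$ so that $\iExpTime{0} = \PTime$, with $g = 0$ so that $\iExpPolFunctions{0} = \PolFunctions$, and recognizing $\SigmaP{1} = \NPTime$ or more directly using the parametric version in \cref{theo_general_chain_parallel_hausdorff} with $i=0$, $j=1$, $c=0$ (so that $\Oracle{\iNExpTime{j}}{\SigmaP{c}} = \NExpTime$), delivers the collapse of $k$-rounds of parallel polynomial-bounded queries to a single round.

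Next I would handle $\PNExpPar = \BoundedHausdRedCLASS{\PolFunctions}{\NExpTime}$. This is precisely the specialization of \cref{theo_summary_generalized_equivalence_intermediate_levels_Hausdorff} (the ``general'' form) with $i = 0$, $j = 1$, $\ell = 1$, $c = 0$, $g = 0$: the equation $\BoundedParOracle{\iExpTime{i}}{\Oracle{\iNExpTime{j}}{\SigmaP{c}}}{\iExpPolFunctions{g}} = \BoundedHausdRedCLASS{\iExpPolFunctions{g}}{\Oracle{\iNExpTime{\ell}}{\SigmaP{c}}}$ reads, under these choices, as $\BoundedParOracle{\PTime}{\NExpTime}{\PolFunctions} = \BoundedHausdRedCLASS{\PolFunctions}{\NExpTime}$, which by definition is $\PNExpPar = \BoundedHausdRedCLASS{\PolFunctions}{\NExpTime}$.

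Concatenating the two equalities then yields the chain $\BoundedHausdRedCLASS{\PolFunctions}{\NExpTime} = \PNExpPar = \ParBoundedOracle{\PTime}{\NExpTime}{k}$ for every fixed $k \geq 1$. There is no genuine obstacle here: the entire argument is bookkeeping, checking that the parameter substitutions $(i,j,g,c) = (0,1,0,0)$ match the definitions (in particular that $\iExpPolFunctions{0} = \PolFunctions$, $\iNExpTime{1} = \NExpTime$, and $\Oracle{\iNExpTime{1}}{\SigmaP{0}} = \NExpTime$), and that the ``$k$ rounds vs.\ one round'' collapse given by the prior corollary applies uniformly to any fixed $k \geq 1$. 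The heavy lifting was already done in the proofs of \cref{theo_exp_nexp_containment} and its corollaries; this statement merely reads off the $\Theta$-level case of \SExpHier from the general Hausdorff characterization.
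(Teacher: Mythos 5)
Your proposal is correct and takes essentially the same route as the paper, which likewise obtains this corollary by combining \cref{theo_general_constant_rounds_parallel_calls_equals_single_round} (instantiated so that the oracle is $\Oracle{\iNExpTime{1}}{\SigmaP{0}} = \NExpTime$) with \cref{theo_summary_generalized_equivalence_intermediate_levels_Hausdorff} at $(i,j,\ell,c,g) = (0,1,1,0,0)$. Only a bookkeeping remark: the rounds-to-single-round collapse for a $\NExpTime$ oracle is exactly \cref{theo_general_constant_rounds_parallel_calls_equals_single_round} with $j=1$, $c=0$ (its oracle is $\Oracle{\iNExpTime{j}}{\SigmaP{c}}$, not $\SigmaP{c+1}$ as you quote it), rather than \cref{theo_general_chain_parallel_hausdorff}, which only provides the single-round chain of inclusions.
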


We can also obtain that $\Oracle{\LogSpace}{\NExpTime}$ equals the \SEHThetaLevel of \SExpHier.
The next result follows from \zcref{theo_general_expspace_nexp_various_flavors}.

\begin{corollary}%
$\BoundedHausdCLASS{\PolFunctions}{\NExpTime} = \Oracle{\LogSpace}{\NExpTime} = \ParOracle{\LogSpace}{\NExpTime} = \LogOracle{\LogSpace}{\NExpTime}$.
\end{corollary}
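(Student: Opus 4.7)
The plan is to derive this corollary as a direct instantiation of \cref{theo_general_expspace_nexp_various_flavors} with the parameter choice $i = 0$, $j = 1$, and $c = 0$. The bulk of the work has already been carried out in the proofs of \cref{theo_expspace_nexp_containment,theo_general_expspace_nexp_various_flavors}; what remains is a translation of the general statement into the concrete names used for these low-level classes and query bounds.

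First I would unfold each of the four expressions appearing in \cref{theo_general_expspace_nexp_various_flavors} under this parameter choice. Using $\iExpSpace{-1} = \LogSpace$, $\iNExpTime{1} = \NExpTime$, $\SigmaP{0} = \PTime$, and the absorption $\Oracle{\NExpTime}{\PTime} = \NExpTime$, the oracle class $\Oracle{\iExpSpace{(i-1)}}{\Oracle{\iNExpTime{j}}{\SigmaP{c}}}$ becomes $\Oracle{\LogSpace}{\NExpTime}$, and the parallel variant $\ParOracle{\iExpSpace{(i-1)}}{\Oracle{\iNExpTime{j}}{\SigmaP{c}}}$ becomes $\ParOracle{\LogSpace}{\NExpTime}$. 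For the Hausdorff class, $\iExpPolFunctions{0} = \PolFunctions$ and $\iNExpTime{(i+j)} = \NExpTime$, so $\BoundedHausdRedCLASS{\iExpPolFunctions{i}}{\Oracle{\iNExpTime{(i+j)}}{\SigmaP{c}}}$ becomes $\BoundedHausdRedCLASS{\PolFunctions}{\NExpTime}$. Finally, since $\iExpPolFunctions{-1} = \LogFunctions$, the bounded-query class $\BoundedOracle{\iExpSpace{(i-1)}}{\Oracle{\iNExpTime{j}}{\SigmaP{c}}}{\iExpPolFunctions{(i-1)}}$ becomes $\BoundedOracle{\LogSpace}{\NExpTime}{\LogFunctions} = \LogOracle{\LogSpace}{\NExpTime}$.

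Having checked that the four classes in the statement are pointwise the instantiations of the four classes asserted equal in \cref{theo_general_expspace_nexp_various_flavors}, the claimed chain of equalities follows immediately. No separate argument is needed, since the parameter choice $i=0$, $j=1$, $c=0$ is admissible (the hypotheses $i,j \geq 0$, $c \geq 0$ are satisfied) and the underlying technical content---in particular the direct Hausdorff analysis from the proof of \cref{theo_expspace_nexp_containment} that avoids invoking \citet{LadnerL76}'s equivalence between logspace truth-table and logspace Turing reductions to \NPTime---has already been carried out in full generality. So no obstacle of substance is expected; the only care required is the bookkeeping with the indices $-1$ (both for $\iExpSpace{}$ and for $\iExpPolFunctions{}$) to make sure the low-end cases coincide with the familiar $\LogSpace$ and $\LogFunctions$ notation, and with $c = 0$ so that the trailing $\SigmaP{c}$ oracle collapses inside $\NExpTime$.
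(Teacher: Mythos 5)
Your proposal is correct and matches the paper's own derivation: the paper obtains this corollary precisely as the instantiation of \cref{theo_general_expspace_nexp_various_flavors} at the low end of the parameter range, with the same identifications $\iExpSpace{-1} = \LogSpace$, $\Oracle{\NExpTime}{\SigmaP{0}} = \NExpTime$, $\iExpPolFunctions{0} = \PolFunctions$, and $\iExpPolFunctions{-1} = \LogFunctions$ that you spell out. No gap; the bookkeeping with $i=0$, $j=1$, $c=0$ is exactly the intended argument.
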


From \zcref{theo_summary_generalized_equivalence_intermediate_levels_Hausdorff}, the equivalence between $\PNExpPar$, $\PNExpLog$, and $\BoundedHausdCLASS{\PolFunctions}{\NExpTime}$ follows.

\begin{corollary}
$\BoundedHausdCLASS{\PolFunctions}{\NExpTime} = \PNExpPar = \PNExpLog$.
\end{corollary}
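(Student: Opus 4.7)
The plan is to derive both equalities as direct instantiations of \cref{theo_summary_generalized_equivalence_intermediate_levels_Hausdorff}, choosing the parameters so that the generalized exponential classes collapse to the familiar $\PNExpPar$ and $\PNExpLog$. All the heavy lifting (Hausdorff characterizations of intermediate levels, the ``generalized'' binary search, and the reverse inclusion via the $\PredHSucc[\Machine{M},\Omega]{A}$-based Hausdorff predicate) has already been done in \cref{theo_exp_nexp_containment} and packaged into \cref{theo_summary_generalized_equivalence_intermediate_levels_Hausdorff}; what remains is only an instantiation.

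For the equality $\BoundedHausdRedCLASS{\PolFunctions}{\NExpTime} = \PNExpPar$, I would set $i = 0$, $j = 1$ (so $\ell = i + j = 1$ and $\iNExpTime{\ell} = \NExpTime$), $g = 0$ (so $\iExpPolFunctions{g} = \PolFunctions$), and $c = 0$ in the first equation of \cref{theo_summary_generalized_equivalence_intermediate_levels_Hausdorff}, obtaining
\[
  \BoundedParOracle{\PTime}{\NExpTime}{\PolFunctions} = \BoundedHausdRedCLASS{\PolFunctions}{\NExpTime}.
\]
Since a polynomial-time oracle machine can write only polynomially many parallel queries on its query tape, $\BoundedParOracle{\PTime}{\NExpTime}{\PolFunctions}$ coincides with the unrestricted $\ParOracle{\PTime}{\NExpTime} = \PNExpPar$, which yields $\PNExpPar = \BoundedHausdRedCLASS{\PolFunctions}{\NExpTime}$. (This equality was already noted in the preceding corollary, but it also falls out of the theorem directly in this way.)

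For the equality $\BoundedHausdRedCLASS{\PolFunctions}{\NExpTime} = \PNExpLog$, I would then apply the second equation of \cref{theo_summary_generalized_equivalence_intermediate_levels_Hausdorff} with the same $i = 0$, $j = 1$, $c = 0$, but now $g = -1$, so that $\iExpPolFunctions{g} = \LogFunctions$ and $\iExpPolFunctions{(g+1)} = \PolFunctions$. This gives
\[
  \BoundedOracle{\PTime}{\NExpTime}{\LogFunctions} = \BoundedHausdRedCLASS{\PolFunctions}{\NExpTime},
\]
which is exactly $\PNExpLog = \BoundedHausdRedCLASS{\PolFunctions}{\NExpTime}$. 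Chaining the two instantiations yields the desired triple equality.

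There is essentially no obstacle beyond bookkeeping; the only point worth spelling out in the write-up is the matching of parameters ($g = 0$ versus $g = -1$, and $i+j = 1$ so that the oracle class becomes $\NExpTime$), together with the remark that imposing a $\PolFunctions$ bound on parallel queries from a $\PTime$ caller is vacuous. Both equalities are obtained by a single application of the theorem each, which is why the statement is presented as a corollary rather than an independent result.
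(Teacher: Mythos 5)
Your proposal is correct and matches the paper's own argument: the paper obtains this corollary precisely by instantiating \cref{theo_summary_generalized_equivalence_intermediate_levels_Hausdorff} (with $i=0$, $j=1$, $c=0$, and $g=0$ resp.\ $g=-1$, exactly as you do), together with the routine observation that a polynomial-time oracle machine can issue only polynomially many parallel queries, so the $\PolFunctions$ bound is vacuous. Nothing further is needed.
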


From \zcref{theo_summary_generalized_equivalence_intermediate_levels_Hausdorff,theo_general_constant_rounds_parallel_calls_equals_single_round}, it follows that $\DoubleBoundedParOracle{\ExpTime}{\NPTime}{\PolFunctions}{k}$, $\LogOracle{\ExpTime}{\NPTime}$, and $\BoundedHausdCLASS{\PolFunctions}{\NExpTime}$ are equivalent.

\begin{corollary}
$\BoundedHausdCLASS{\PolFunctions}{\NExpTime} = \DoubleBoundedParOracle{\ExpTime}{\NPTime}{\PolFunctions}{k} = \LogOracle{\ExpTime}{\NPTime}$, for every fixed integer $k \geq 1$.
\end{corollary}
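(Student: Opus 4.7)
The plan is to derive this corollary as a direct instantiation of the general charting machinery already established, with no new argument required; the work has already been absorbed into Corollaries~\ref{theo_summary_generalized_equivalence_intermediate_levels_Hausdorff} and~\ref{theo_general_constant_rounds_parallel_calls_equals_single_round}. The key observation is that every class appearing in the statement is one of the standard oracle presentations of the same Hausdorff class, sitting at the top of the first step of $\WExpHier$ just below $\NExpTime$.

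First, I would fix the parameters $i=1$, $j=0$, $c=0$ in both general results, so that $\iExpTime{i}=\ExpTime$, $\iNExpTime{j}=\NPTime$, $\Oracle{\iNExpTime{j}}{\SigmaP{c}}=\NPTime$, and $\iNExpTime{\ell}=\NExpTime$ for $\ell=i+j=1$. Then I would extract from the second equation of Corollary~\ref{theo_summary_generalized_equivalence_intermediate_levels_Hausdorff}, specialized at $g=-1$ (which is permitted since $g\leq i=1$) and using $\iExpPolFunctions{-1}=\LogFunctions$ and $\iExpPolFunctions{0}=\PolFunctions$, the identity
\[
  \LogOracle{\ExpTime}{\NPTime} \;=\; \BoundedOracle{\ExpTime}{\NPTime}{\LogFunctions} \;=\; \BoundedHausdRedCLASS{\PolFunctions}{\NExpTime}.
\]

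Next, I would apply Corollary~\ref{theo_general_constant_rounds_parallel_calls_equals_single_round} with the same $i,j,c$ and $g=0$ to collapse the constant number of parallel-query rounds to a single round:
\[
  \DoubleBoundedParOracle{\ExpTime}{\NPTime}{\PolFunctions}{k} \;=\; \BoundedParOracle{\ExpTime}{\NPTime}{\PolFunctions},
\]
and then the first equation of Corollary~\ref{theo_summary_generalized_equivalence_intermediate_levels_Hausdorff}, again at $g=0$, gives
\[
  \BoundedParOracle{\ExpTime}{\NPTime}{\PolFunctions} \;=\; \BoundedHausdRedCLASS{\PolFunctions}{\NExpTime}.
\]

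Chaining these three identities yields the stated triple equality for every fixed $k\geq 1$. There is no real obstacle here: both sides of every equality are already available from the general framework, and the proof is pure parameter-matching. The only mild point to flag is that the $k$-independence of the right-hand sides hinges exactly on Corollary~\ref{theo_general_constant_rounds_parallel_calls_equals_single_round}, which in turn relies on the closure of $\iExpPolFunctions{g}$ under fixed powers (Lemma~\ref{theo_polynomial_of_iterated_exponentials}); this is what ensures that collapsing $k$ rounds of polynomially-many queries into one remains within $\PolFunctions$ and hence within the same Hausdorff class.
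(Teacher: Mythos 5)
Your proposal is correct and follows essentially the same route as the paper: the paper also obtains this corollary by instantiating \cref{theo_summary_generalized_equivalence_intermediate_levels_Hausdorff} (the first equation at $g=0$ and the second at $g=-1$, with $i=1$, $j=0$, $c=0$) together with \cref{theo_general_constant_rounds_parallel_calls_equals_single_round} to collapse the $k$ rounds of polynomially-many parallel queries into a single round. Your parameter choices and the chaining of the three identities match the intended derivation, so no further argument is needed.
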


From the results above, we have that the \SEHThetaLevel of $\SExpHier$ can be defined in different ways, all equivalent.

\begin{theorem}[store=ThetaLevelPolHausdorff]
\label{theo_theta_level_equals_polHausdorff}
The following classes are equivalent and characterize the~\SEHThetaLevel~of~the~\SEHText:
\begin{enumerate}[nosep,label=(\roman*)]
  \item $\BoundedHausdCLASS{\PolFunctions}{\NExpTime}$
  \item $\TTRedCLASS{\NExpTime}$
  \item $\PNExpLog = \ParBoundedOracle{\PTime}{\NExpTime}{k}$, for every fixed integer $k \geq 1$
  \item $\LogOracle{\ExpTime}{\NPTime} = \DoubleBoundedParOracle{\ExpTime}{\NPTime}{\PolFunctions}{k}$, for every fixed integer $k \geq 1$
  \item $\Oracle{\LogSpace}{\NExpTime} = \ParOracle{\LogSpace}{\NExpTime} = \LogOracle{\LogSpace}{\NExpTime}$
\end{enumerate}
\end{theorem}

These results tell us that the \SEHThetaLevel of $\SExpHier$ is \emph{precisely} characterized by $\NExpTime$ Hausdorff languages of polynomial length. 
The oracle complexity classes $\ParOracle{\LogSpace}{\NExpTime}$, $\PNExpPar$, $\BoundedParOracle{\ExpTime}{\NPTime}{\PolFunctions}$, $\LogOracle{\LogSpace}{\NExpTime}$, $\Oracle{\LogSpace}{\NExpTime}$, $\PNExpLog$, and $\LogOracle{\ExpTime}{\NPTime}$, essentially refers to different approaches to decide languages in $\BoundedHausdCLASS{\PolFunctions}{\NExpTime}$.
We know that every language $\Language{L} \in \BoundedHausdCLASS{\PolFunctions}{\NExpTime}$ is characterized by some $\NExpTime$ Hausdorff predicate $\Language{D}_{\Language{L}}$ of polynomial length.
Hence $\Language{L}$ can be decided by looking at the parity of the Hausdorff indices \Wrt $\Language{D}_{\Language{L}}$ of the strings.
With this in mind, machines in $\ParOracle{\LogSpace}{\NExpTime}$, $\PNExpPar$, and $\BoundedParOracle{\ExpTime}{\NPTime}{\PolFunctions}$, decide $\Language{L}$ by issuing in parallel all the polynomially\nbdash-many queries to the $\NExpTime$ oracle for $\Language{D}_{\Language{L}}$, sufficient to individuate the Hausdorff index (i.e., type~(i) oracle complexity classes associated with $\BoundedHausdCLASS{\PolFunctions}{\NExpTime}$; see \zcref{sec_discussion_Hausdorff_perspective}).
Machines in $\LogOracle{\LogSpace}{\NExpTime}$, $\Oracle{\LogSpace}{\NExpTime}$, $\PNExpLog$, and $\LogOracle{\ExpTime}{\NPTime}$, decide $\Language{L}$ by computing the Hausdorff index via a binary search with the aid of a \NExpTime oracle for $\Language{D}_{\Language{L}}$ (i.e., type~(ii) oracle complexity classes associated with $\BoundedHausdCLASS{\PolFunctions}{\NExpTime}$; see \zcref{sec_discussion_Hausdorff_perspective}).

\subsubsection{The \texorpdfstring{$\Delta$}{\textbackslash{}Delta}-level of the \SEHText}
\label{sec_delta_level_hausdorff}

In this section, we deal with the \SEHDeltaLevel of $\SExpHier$.
Below, we will refer to the second level of the \BHText over \NExpTime, i.e., $\NExpTime \land \CoNExpTime$, which is the analogue of $\DP = \NPTime \land \CoNPTime$ located in $\PolHier$.
By analogy with $\DP$, we denote $\NExpTime \land \CoNExpTime$ as \DExp;
its complement is $\NExpTime \lor \CoNExpTime$, denoted as $\CoDExp$.

We start by showing that the \SEHDeltaLevel of \SExpHier is precisely characterized by the $\NExpTime$ Hausdorff languages of exponential length.
By this, many different, but equivalent, oracle classes characterize this level of \SExpHier.
Among these classes there are $\PNExp$ and $\NPNExp$, which are then proven equal, and from this follows the collapse of the \SEHText, that was already obtained by \citet{Hemachandra1989} in a different way.

We then focus on the certificate\nbdash-based characterization of $\NPNExp$, left as an open problem by \citet{Hemachandra1987,Hemachandra1989}.
We obtain this characterization via the equivalence $\NPNExp = \BoundedHausdCLASS{2^\PolFunctions}{\NExpTime}$.
\Citet{Hemachandra1987,Hemachandra1989} also asked for an alternating Turing machine characterization of $\NPNExp$.
Our results show that, although such a characterization exists, it is a ``tailored'' one, and most likely we cannot obtain one that is as natural as those for the \PHText~\cite{ChandraKS81} or the \WEHStressedText~\cite{Hemachandra1989,Mocas1996}.

We now start by looking at the relationship between \PNExp and \NPNExp, and at the characterization of the \SEHDeltaLevel of \SExpHier via Hausdorff classes.
By \zcref{theo_summary_generalized_equivalence_intermediate_levels_Hausdorff} we have the following.

\begin{corollary}
  $\BoundedHausdCLASS{2^{\PolFunctions}}{\NExpTime} = \PNExp$.
\end{corollary}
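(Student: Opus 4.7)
The plan is to derive this as an immediate instantiation of the general Hausdorff characterization of intermediate levels already established in Corollary~\ref{theo_summary_generalized_equivalence_intermediate_levels_Hausdorff}. Specifically, I would apply the second equation of that corollary with the parameter choice $i = 0$, $j = 1$, $g = 0$, $c = 0$, so that $\ell = i + j = 1$. Under these choices one has $\iExpTime{i} = \PTime$, $\iNExpTime{j} = \NExpTime$, $\SigmaP{c} = \PTime$, $\iExpPolFunctions{g} = \PolFunctions$, and $\iExpPolFunctions{g+1} = 2^{\PolFunctions}$, so the equation reads $\BoundedOracle{\PTime}{\NExpTime}{\PolFunctions} = \BoundedHausdRedCLASS{2^{\PolFunctions}}{\NExpTime}$.

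To finish, I would observe that a polynomial-time oracle Turing machine cannot issue more than polynomially many queries along any computation, so the explicit bound $\PolFunctions$ on the number of oracle queries imposes no genuine restriction. Hence $\BoundedOracle{\PTime}{\NExpTime}{\PolFunctions} = \Oracle{\PTime}{\NExpTime} = \PNExp$, and chaining this equality with the one above yields $\BoundedHausdRedCLASS{2^{\PolFunctions}}{\NExpTime} = \PNExp$.

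There is no real obstacle to overcome at this point: the substantive content, including the two-directional generalized binary search/simulation in Theorem~\ref{theo_exp_nexp_containment} and its packaging in Corollary~\ref{theo_summary_generalized_equivalence_intermediate_levels_Hausdorff}, has already been discharged. The role of the present statement is merely to specialize those general results to the parameters that single out the \SEHDeltaLevel of \SExpHier, and to record the trivial fact that bounding a $\PTime$ machine's oracle queries by a polynomial leaves the class unchanged.
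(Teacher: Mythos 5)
Your proposal is correct and matches the paper's own derivation, which likewise obtains this corollary by instantiating Corollary~\ref{theo_summary_generalized_equivalence_intermediate_levels_Hausdorff} (with $i=0$, $j=1$, $g=0$, $c=0$, identifying $\Oracle{\NExpTime}{\SigmaP{0}}$ with $\NExpTime$). The additional remark that the $\PolFunctions$ query bound is vacuous for a $\PTime$ oracle machine, giving $\BoundedOracle{\PTime}{\NExpTime}{\PolFunctions} = \PNExp$, is exactly the implicit step the paper relies on.
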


Moreover, by \zcref{theo_summary_nexp_nexp_Hausdorff} we obtain the following.

\begin{corollary}
\label{theo_NP_NEXP_equals_NEXP_2Pol}
  $\BoundedHausdCLASS{2^{\PolFunctions}}{\NExpTime} = \NPNExp = \BoundedOracle{\NPTime}{\DExp}{1} = \BoundedParOracle{\NPTime}{\NExpTime}{2}$.
\end{corollary}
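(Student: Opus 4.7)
The plan is to obtain the claim as a direct specialization of Corollary~\ref{theo_summary_nexp_nexp_Hausdorff}, which has just been stated in full generality for $\Oracle{\iNExpTime{i}}{\Oracle{\iNExpTime{j}}{\SigmaP{c}}}$. I would instantiate that result with $i = 0$, $j = 1$, and $c = 0$, so that $\iNExpTime{0} = \NPTime$, $\iNExpTime{1} = \NExpTime$, $\SigmaP{0} = \PTime$, and $\iExpPolFunctions{(i+1)} = \iExpPolFunctions{1} = 2^{\PolFunctions}$. Under these substitutions the four classes in Corollary~\ref{theo_summary_nexp_nexp_Hausdorff} read
\[
\Oracle{\NPTime}{\NExpTime} \;=\; \BoundedHausdRedCLASS{2^{\PolFunctions}}{\NExpTime} \;=\; \BoundedOracle{\NPTime}{\bigl(\BoundedHausdRedCLASS{2}{\NExpTime}\bigr)}{1} \;=\; \BoundedParOracle{\NPTime}{\NExpTime}{2},
\]
which is exactly the chain of equalities sought, \emph{except} that the third class needs to be recognized as $\BoundedOracle{\NPTime}{\DExp}{1}$.

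The only genuine verification, therefore, is the identity $\BoundedHausdRedCLASS{2}{\NExpTime} = \DExp$. This is immediate from the Hausdorff-reduction viewpoint: a language $\Language{L}$ is $2$-Hausdorff reducible to some $\NExpTime$ Hausdorff predicate iff there exist $\NExpTime$ predicates $\Language{D}_1 \supseteq \Language{D}_2$ with $w \in \Language{L} \iff \SetSize{\set{z \in \set{1,2} \mid \Language{D}(w,z) = 1}}$ odd, i.e., iff $w \in \Language{D}_1 \setminus \Language{D}_2$; and by definition $\DExp = \NExpTime \land \CoNExpTime = \set{\Language{A} \setminus \Language{B} \mid \Language{A},\Language{B} \in \NExpTime}$. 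This agrees with the standard identification of the second level of the Boolean Hierarchy over $\NExpTime$ with $\DExp$, already discussed in Section~\ref{sec_definition_extended_Boolean_Hierarchies}.

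There is no real obstacle here: once Corollary~\ref{theo_summary_nexp_nexp_Hausdorff} and the identification $\BoundedHausdRedCLASS{2}{\NExpTime} = \DExp$ are in place, the statement follows by specialization. The only point worth flagging is that, as a by-product, one rederives $\PNExp = \NPNExp$ (Hemachandra's collapse), because the preceding corollary identifies $\PNExp$ with $\BoundedHausdRedCLASS{2^{\PolFunctions}}{\NExpTime}$, and the present corollary identifies $\NPNExp$ with the same Hausdorff class. I would leave this observation as a short remark after the proof, since it is precisely the ``structural reason'' for the collapse promoted in the paper's introduction.
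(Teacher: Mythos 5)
Your proposal is correct and follows essentially the same route as the paper, which obtains the corollary by specializing Corollary~\ref{theo_summary_nexp_nexp_Hausdorff} to $i=0$, $j=1$, $c=0$; your explicit check that $\BoundedHausdRedCLASS{2}{\NExpTime} = \DExp$ is exactly the identification the paper relies on implicitly via its definition of $\DExp$ and the Hausdorff-differences characterization of the Boolean Hierarchy in Section~\ref{sec_definition_extended_Boolean_Hierarchies}.
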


The intuition behind $\BoundedParOracle{\NPTime}{\NExpTime}{2} = \NPNExp$, and \emph{not} $\BoundedOracle{\NPTime}{\NExpTime}{1} = \NPNExp$, as for the specular relation $\BoundedOracle{\NPTime}{\NPTime}{1} = \Oracle{\NPTime}{\NPTime}$ in the \PHText, is as follows.

Let $\Oracle{\Machine{M}}{?}$ be an $\NPTime$ oracle machine and let $\Omega$ be an \NPTime oracle, hence $\LanguageOf{\Oracle{\Machine{M}}{\Omega}} \in \Oracle{\NPTime}{\NPTime}$.
Intuitively, an \NPTime oracle machine $\BoundedOracle{\Machine{N}}{?}{1}$ with an \NPTime oracle $\Gamma$ can decide the same language of $\Oracle{\Machine{M}}{\Omega}$ by simulating $\Oracle{\Machine{M}}{\Omega}(x)$ in five phases:
(1)~$\Machine{N}$ guesses a sequence $\pi$ of IDs; then,
(2)~$\Machine{N}$ checks that $\pi$ is a \ounawarelegalemph computation for $\Oracle{\Machine{M}}{?}(x)$;
(3)~the crucial point now is that $\Machine{N}$ \emph{can} check by itself that all the positive answers to the queries in $\pi$ are actually correct, indeed $\Machine{N}$ can guess and check certificates for the respective $\Machine{M}$'s queries being accepted by $\Omega$; on the other hand,
(4)~$\Machine{N}$ \emph{cannot} check by itself that the negative answers to the queries in $\pi$ are correct, nevertheless $\Machine{N}$ can collect all these queries and with just one query to its \NPTime oracle $\Gamma$ check that them all are rejected by $\Omega$ (\NPTime and \CoNPTime are closed under conjunction); to conclude,
(5)~$\Machine{N}$ answers \yeslbl iff $\pi$ is an accepting computation.

This simulation approach however does not work when $\Oracle{\Machine{M}}{?}$ is a $\NExpTime$ oracle machine, that is $\LanguageOf{\Oracle{\Machine{M}}{\Omega}} \in \NPNExp$.
Indeed, to simulate $\Oracle{\Machine{M}}{\Omega}(x)$, an \NPTime oracle machine $\BoundedOracle{\Machine{N}}{?}{1}$ with an \NExpTime oracle in this case \emph{cannot} perform the phase~(3) above by itself, because now the certificates witnessing the positive answers by $\Omega$ are exponentially long, hence there is the need for $\Machine{N}$ to issue an extra call to its \NExpTime oracle.

Beside the above intuition, $\BoundedOracle{\NPTime}{\NExpTime}{1} \neq \NPNExp$ is supported by the fact that if $\BoundedOracle{\NPTime}{\NExpTime}{1}$ and $\NPNExp$ were equal, then the \BHText over \NExpTime would collapse to its second level, which is not expected to happen~\cite{Dawar1998}.
To achieve this, an intermediate result, consequence of \zcref{theo_NN1_equivalence_BH}, is useful.

\begin{corollary}
\label{theo_np_nexp_1_equals_coDExp}
$\BoundedOracle{\NPTime}{\NExpTime}{1} = \CoDExp$ (and, consequently, $\BoundedOracle{\CoNPTime}{\NExpTime}{1} = \DExp$).
\end{corollary}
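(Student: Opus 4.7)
The plan is to derive this as a direct specialization of \Cref{theo_NN1_equivalence_BH} and then translate the resulting Hausdorff class back into the familiar Boolean-hierarchy notation. Specifically, I would instantiate that theorem with $i=0$, $j=1$, and $c=0$, so that $\ell = i+j = 1$. Since $\iNExpTime{0} = \NPTime$, $\iNExpTime{1} = \NExpTime$, and $\SigmaP{0} = \PTime$ (which the $\NExpTime$ oracle absorbs), this yields the equality
\[
  \BoundedOracle{\NPTime}{\NExpTime}{1} \;=\; \ComplementPrefixKerned\BoundedHausdRedCLASS{2}{\NExpTime}.
\]
Here the hypothesis $j \geq 1$ of \Cref{theo_NN1_equivalence_BH} is satisfied, so nothing extra is needed for this step.

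Next, I would identify $\BoundedHausdRedCLASS{2}{\NExpTime}$ with the second level of the \BHText over \NExpTime. This is exactly the observation, already recorded in \cref{sec_definition_extended_Boolean_Hierarchies} for the general case and in the introduction for the NP-case (``$\DP = \BoundedHausdRedCLASS{2}{\NPTime}$''), that a $2$-long Hausdorff reduction to a class $\ComplexityClass{C}$ closed under union and intersection coincides with $\ComplexityClass{C} \land \ComplementPrefix\ComplexityClass{C}$. Applied to $\ComplexityClass{C} = \NExpTime$ this gives $\BoundedHausdRedCLASS{2}{\NExpTime} = \NExpTime \land \CoNExpTime = \DExp$, and therefore $\ComplementPrefixKerned\BoundedHausdRedCLASS{2}{\NExpTime} = \CoDExp = \NExpTime \lor \CoNExpTime$. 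Chaining this with the equality from the previous step gives the main claim $\BoundedOracle{\NPTime}{\NExpTime}{1} = \CoDExp$.

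Finally, for the parenthetical $\BoundedOracle{\CoNPTime}{\NExpTime}{1} = \DExp$, I would simply take complements on both sides of the main equality. A language $\Language{L}$ lies in $\BoundedOracle{\CoNPTime}{\NExpTime}{1}$ iff $\compl{\Language{L}}$ lies in $\BoundedOracle{\NPTime}{\NExpTime}{1}$, hence iff $\compl{\Language{L}} \in \CoDExp$, hence iff $\Language{L} \in \DExp$. No additional argument is needed.

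There is essentially no obstacle in this derivation: all the substantive work lives in \Cref{theo_NN1_containment_BH} (and its corollary \Cref{theo_NN1_equivalence_BH}), which was already proved above. The only thing to be a bit careful about is checking that the parameter conditions ($j \geq 1$ and $i + j = \ell$) of \Cref{theo_NN1_equivalence_BH} are met by the specialization, and that the oracle $\Oracle{\iNExpTime{1}}{\SigmaP{0}}$ collapses to plain $\NExpTime$, both of which are immediate.
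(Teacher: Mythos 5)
Your derivation is correct and matches the paper's intended route: the paper states this corollary as an immediate consequence of \Cref{theo_NN1_equivalence_BH}, and your instantiation with $i=0$, $j=1$, $c=0$, together with the identification $\BoundedHausdRedCLASS{2}{\NExpTime} = \DExp$ from the Hausdorff-style definition of the \BHText and the complementation step for the parenthetical claim, is exactly the argument being invoked. No gaps.
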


Observe that \zcref{theo_np_nexp_1_equals_coDExp} implies these containment relationships:
$\BoundedOracle{\NPTime}{\NExpTime}{1} = \CoDExp \subseteq \BHGeneric{\NExpTime} \subseteq \PNExp = \NPNExp$.
Hence, if it were the case that $\BoundedOracle{\NPTime}{\NExpTime}{1} = \NPNExp$, then the \BHText over \NExpTime, sitting in between $\BoundedOracle{\NPTime}{\NExpTime}{1}$ and $\PNExp$, would be squashed to \CoDExp.
We can therefore state the following.

\begin{theorem}[store=NPNexpSingleCallCollapseBH]
\label{theo_np_nexp_1_not_equal_np_nexp}
If $\BoundedOracle{\NPTime}{\NExpTime}{1} = \NPNExp$, then the \BHText over \NExpTime collapses to its second level.
\end{theorem}

Again from \zcref{theo_summary_generalized_equivalence_intermediate_levels_Hausdorff}, combined with \zcref{theo_general_constant_rounds_parallel_calls_equals_single_round}, the next result follows.

\begin{corollary}
  $\BoundedHausdCLASS{2^{\PolFunctions}}{\NExpTime} = \PolOracle{\ExpTime}{\NPTime} = \ParBoundedOracle{\ExpTime}{\NPTime}{k}$, for every fixed integer $k \geq 1$.
\end{corollary}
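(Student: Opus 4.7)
The corollary is an immediate specialization of the general results, so the plan is to instantiate the right parameters. Throughout I fix $i=1$, $j=0$, $c=0$, which gives $\ell = i+j = 1$, and hence $\iExpTime{i} = \ExpTime$, $\iNExpTime{j} = \NPTime$, $\iNExpTime{\ell} = \NExpTime$, and $\Oracle{\iNExpTime{j}}{\SigmaP{c}} = \NPTime$.

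First, to obtain $\BoundedHausdRedCLASS{2^{\PolFunctions}}{\NExpTime} = \PolOracle{\ExpTime}{\NPTime}$, I would apply the second equivalence of \cref{theo_summary_generalized_equivalence_intermediate_levels_Hausdorff} with the parameter choice above and $g=0$. This parameter choice satisfies $g\leq i$ (since $0\leq 1$), so the corollary yields
\[
  \BoundedOracle{\ExpTime}{\NPTime}{\iExpPolFunctions{0}} \;=\; \BoundedHausdRedCLASS{\iExpPolFunctions{1}}{\NExpTime},
\]
which is precisely $\PolOracle{\ExpTime}{\NPTime} = \BoundedHausdRedCLASS{2^{\PolFunctions}}{\NExpTime}$ since $\iExpPolFunctions{0} = \PolFunctions$ and $\iExpPolFunctions{1} = 2^{\PolFunctions}$.

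Next, I would tackle the equivalence with $\ParBoundedOracle{\ExpTime}{\NPTime}{k}$. Since an \ExpTime oracle machine can only write at most exponentially many parallel queries in a single round, we trivially have $\ParBoundedOracle{\ExpTime}{\NPTime}{k} = \DoubleBoundedParOracle{\ExpTime}{\NPTime}{2^{\PolFunctions}}{k}$. Applying \cref{theo_general_constant_rounds_parallel_calls_equals_single_round} with the same $i=1$, $j=0$, $c=0$, and with $g=1$ (so $\iExpPolFunctions{g} = 2^{\PolFunctions}$), I would collapse the $k$ rounds down to a single round of parallel calls:
\[
  \DoubleBoundedParOracle{\ExpTime}{\NPTime}{2^{\PolFunctions}}{k} \;=\; \BoundedParOracle{\ExpTime}{\NPTime}{2^{\PolFunctions}}.
\]
It then remains to identify this single-round class with $\PolOracle{\ExpTime}{\NPTime}$, which is exactly the content of the third equivalence of \cref{theo_summary_equivalence_intermediate_levels_Hausdorff} (equivalently, the third line of \cref{theo_summary_generalized_equivalence_intermediate_levels_Hausdorff} with $i=1$, $j=0$, $g=0 \leq i-1$): $\BoundedOracle{\ExpTime}{\NPTime}{\PolFunctions} = \BoundedParOracle{\ExpTime}{\NPTime}{2^{\PolFunctions}}$. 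Chaining these three equalities closes the loop.

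There is no real obstacle here beyond careful bookkeeping of the parameters; the heavy lifting has already been done in the proofs of \cref{theo_exp_nexp_containment}, \cref{theo_general_constant_rounds_parallel_calls_equals_single_round}, and \cref{theo_summary_generalized_equivalence_intermediate_levels_Hausdorff}. The only point worth flagging is that the statement is uniform in $k$: the choice of $k$ enters only through \cref{theo_general_constant_rounds_parallel_calls_equals_single_round}, whose proof absorbs any fixed $k$ into the polynomial governing the Hausdorff length via \cref{theo_polynomial_of_iterated_exponentials} (raising a $2^{\PolFunctions}$ bound to a fixed constant power yields another $2^{\PolFunctions}$ bound), so the bound on the number of rounds does not leak into the final characterization.
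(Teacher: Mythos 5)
Your proposal is correct and follows essentially the same route as the paper, which derives this corollary directly by instantiating \cref{theo_summary_generalized_equivalence_intermediate_levels_Hausdorff} (equivalently \cref{theo_summary_equivalence_intermediate_levels_Hausdorff}) together with \cref{theo_general_constant_rounds_parallel_calls_equals_single_round} at $i=1$, $j=0$, $c=0$; your parameter bookkeeping, including the trivial identification $\ParBoundedOracle{\ExpTime}{\NPTime}{k} = \DoubleBoundedParOracle{\ExpTime}{\NPTime}{2^{\PolFunctions}}{k}$, is exactly what the paper's one-line justification leaves implicit.
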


By \zcref{theo_general_expspace_nexp_various_flavors}, we obtain the following;
remember that we adopt for space-bounded oracles the deterministic query model~\cite{RuzzoST84}, coinciding with the unrestricted query model for deterministic space classes~\cite{LadnerL76}.

\begin{corollary}
  $\BoundedHausdCLASS{2^{\PolFunctions}}{\NExpTime} = \Oracle{\PSpace}{\NPTime} = \ParOracle{\PSpace}{\NPTime} = \PolOracle{\PSpace}{\NPTime}$.
\end{corollary}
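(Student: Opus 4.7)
The plan is to obtain this corollary as a direct instantiation of \Cref{theo_general_expspace_nexp_various_flavors} (the general four-way equivalence between $\Oracle{\iExpSpace{(i-1)}}{\Oracle{\iNExpTime{j}}{\SigmaP{c}}}$, the Hausdorff class $\BoundedHausdRedCLASS{\iExpPolFunctions{i}}{\Oracle{\iNExpTime{(i+j)}}{\SigmaP{c}}}$, the bounded-query version $\BoundedOracle{\iExpSpace{(i-1)}}{\Oracle{\iNExpTime{j}}{\SigmaP{c}}}{\iExpPolFunctions{(i-1)}}$, and the parallel-query version $\ParOracle{\iExpSpace{(i-1)}}{\Oracle{\iNExpTime{j}}{\SigmaP{c}}}$), with the parameters $i=1$, $j=0$, $c=0$.

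The key verification is just a matter of matching the shapes of the classes under this substitution. First, $\iExpSpace{(i-1)} = \iExpSpace{0} = \PSpace$; second, $\Oracle{\iNExpTime{j}}{\SigmaP{c}} = \Oracle{\iNExpTime{0}}{\SigmaP{0}} = \Oracle{\NPTime}{\PTime} = \NPTime$; third, $\iExpPolFunctions{i} = \iExpPolFunctions{1} = 2^{\PolFunctions}$; fourth, $\Oracle{\iNExpTime{(i+j)}}{\SigmaP{c}} = \Oracle{\iNExpTime{1}}{\SigmaP{0}} = \NExpTime$; and finally $\iExpPolFunctions{(i-1)} = \iExpPolFunctions{0} = \PolFunctions$. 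Substituting these identifications into the statement of \Cref{theo_general_expspace_nexp_various_flavors} yields precisely
\[
  \Oracle{\PSpace}{\NPTime} \;=\; \BoundedHausdRedCLASS{2^{\PolFunctions}}{\NExpTime} \;=\; \PolOracle{\PSpace}{\NPTime} \;=\; \ParOracle{\PSpace}{\NPTime},
\]
which is the claimed corollary (modulo reordering the equalities).

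Since the corollary is an immediate specialization of a previously established result, there is no genuine obstacle. The only thing to double-check is that the space-bounded oracle model implicit in \Cref{theo_general_expspace_nexp_various_flavors} is the one we want to use here for $\PSpace$ with an $\NPTime$ oracle. As noted in the preamble to \Cref{sec_charting_top}, the deterministic query model is adopted throughout, and for deterministic space-bounded classes this coincides with the unrestricted query model; hence the quoted equivalences hold under exactly the oracle access discipline the statement of the corollary refers to. No further argument is needed.
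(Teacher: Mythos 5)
Your proposal is correct and is exactly the paper's route: the paper derives this corollary by invoking \Cref{theo_general_expspace_nexp_various_flavors} with the same implicit instantiation ($i=1$, $j=0$, $c=0$, so that $\iExpSpace{0}=\PSpace$, $\Oracle{\iNExpTime{0}}{\SigmaP{0}}=\NPTime$, $\Oracle{\iNExpTime{1}}{\SigmaP{0}}=\NExpTime$, $\iExpPolFunctions{1}=2^{\PolFunctions}$, $\iExpPolFunctions{0}=\PolFunctions$), together with the same remark about the deterministic query model for space-bounded oracle machines. Nothing is missing.
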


The reader should be aware of 
the result $\Oracle{\PTime}{\textnormal{NE}} = \Oracle{\PSpace}{\textnormal{NE}}$ (implying $\PNExp = \Oracle{\PSpace}{\NExpTime}$) reported in Theorem~24 and Corollary~25 of~\cite{AllenderKRR2011}.
Their result assume the bounded query model for space-bounded oracle machines, and \emph{not} the deterministic query model  
as it is instead done in \cite{Hemaspaandra1994,Gottlob1995} and here. 

From the results above, we have that the \SEHDeltaLevel of $\SExpHier$ can be defined in different ways, all equivalent.

\begin{theorem}[store=DeltaLevelExpHausdorff]
\label{theo_delta_level_equals_expHausdorff}
The following classes are equivalent and characterize the \SEHDeltaLevel of the \SEHText:
\begin{enumerate}[nosep,label=(\roman*)]
  \item $\BoundedHausdCLASS{2^{\PolFunctions}}{\NExpTime}$
  \item $\PNExp$
  \item $\NPNExp = \BoundedOracle{\NPTime}{\DExp}{1} = \BoundedParOracle{\NPTime}{\NExpTime}{2}$
  \item $\PolOracle{\ExpTime}{\NPTime} = \ParBoundedOracle{\ExpTime}{\NPTime}{k}$, for every fixed integer $k \geq 1$
  \item $\Oracle{\PSpace}{\NPTime} = \ParOracle{\PSpace}{\NPTime} = \PolOracle{\PSpace}{\NPTime}$
\end{enumerate}
\end{theorem}

As a corollary of the previous \zcref*[typeset=name,nocap]{theo_delta_level_equals_expHausdorff}, we obtain a different proof of \citeauthor{Hemachandra1989}'s $\PNExp = \NPNExp$ result \cite{Hemachandra1989}, and the consequent collapse of the \SEHText.

\begin{corollary}[\cite{SchoningW88,Hemachandra1989,Beigel1991,Gottlob1995,AllenderKRR2011}]
\label{theo_p_nexp_equals_np_nexp}
$\PNExp = \NPNExp$.
\end{corollary}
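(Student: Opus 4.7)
The corollary is an immediate consequence of Theorem~\ref{theo_delta_level_equals_expHausdorff}, so my plan is simply to point out which two equivalences from that theorem yield the collapse, and to sketch why the Hausdorff characterization makes both classes coincide with $\BoundedHausdRedCLASS{2^{\PolFunctions}}{\NExpTime}$. Concretely, Theorem~\ref{theo_delta_level_equals_expHausdorff} asserts both $\PNExp = \BoundedHausdRedCLASS{2^{\PolFunctions}}{\NExpTime}$ and $\NPNExp = \BoundedHausdRedCLASS{2^{\PolFunctions}}{\NExpTime}$; chaining these equalities gives $\PNExp = \NPNExp$ in one line.

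To unpack where each equality comes from, the inclusion $\PNExp \subseteq \BoundedHausdRedCLASS{2^{\PolFunctions}}{\NExpTime}$ is the instance $i=0$, $j=1$, $c=0$, $g=0$ of Corollary~\ref{theo_summary_generalized_equivalence_intermediate_levels_Hausdorff}, using that a polynomial-time oracle machine can issue only polynomially many queries and hence $\Oracle{\PTime}{\NExpTime} = \BoundedOracle{\PTime}{\NExpTime}{\PolFunctions}$. The reverse inclusion is the type-(ii) decoding of a Hausdorff language: a deterministic polynomial-time machine runs a binary search on the index space $[0,2^{p(\StringLength{w})}]$, at each step asking the $\NExpTime$ oracle for $\Language{D}(w,z)$, locates the maximum Hausdorff index $\hat z$, and accepts iff $\hat z$ is odd.

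For $\NPNExp$, the inclusion $\NPNExp \subseteq \BoundedHausdRedCLASS{2^{\PolFunctions}}{\NExpTime}$ is the instance $i=0$, $j=1$, $c=0$ of Theorem~\ref{theo_nexp_nexp_containment}, whose proof (via the census/pseudo-complement style Hausdorff predicate $\Predicate{B}_{\Machine{M},\Omega}$) is where all the real work lies. The reverse inclusion is the type-(iii) decoding: an $\NPTime$ oracle machine guesses $\hat z$ together with a bit of parity information, then uses two parallel $\NExpTime$ queries to verify $\Language{D}(w,\hat z)=1$ and $\Language{D}(w,\hat z+1)=0$, matching the equivalence $\BoundedHausdRedCLASS{2^{\PolFunctions}}{\NExpTime} = \BoundedParOracle{\NPTime}{\NExpTime}{2}$ from Theorem~\ref{theo_delta_level_equals_expHausdorff}(iii).

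Thus the whole proof is: cite Theorem~\ref{theo_delta_level_equals_expHausdorff}, read off $\PNExp = \BoundedHausdRedCLASS{2^{\PolFunctions}}{\NExpTime} = \NPNExp$. There is no genuine obstacle here because the heavy lifting was already done in Theorems~\ref{theo_exp_nexp_containment} and~\ref{theo_nexp_nexp_containment}; the collapse $\PNExp = \NPNExp$ is just the observation that these two oracle-style classes are two different algorithmic strategies (binary search versus guess-and-check with two parallel queries) for deciding the parity of the maximum Hausdorff index of the same Hausdorff language class.
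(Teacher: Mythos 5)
Your proposal is correct and matches the paper exactly: the paper states \cref{theo_p_nexp_equals_np_nexp} as an immediate consequence of \cref{theo_delta_level_equals_expHausdorff}, whose items (ii) and (iii) give $\PNExp = \BoundedHausdRedCLASS{2^{\PolFunctions}}{\NExpTime} = \NPNExp$, with the two halves traced back to \cref{theo_summary_generalized_equivalence_intermediate_levels_Hausdorff} (via \cref{theo_exp_nexp_containment}) and to \cref{theo_NP_NEXP_equals_NEXP_2Pol}/\cref{theo_summary_nexp_nexp_Hausdorff} (via \cref{theo_nexp_nexp_containment}), just as you describe. Your sketches of the binary-search (type-(ii)) and guess-and-check (type-(iii)) decodings are consistent with the paper's discussion and add nothing that conflicts with it.
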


An interesting aspect of our rederivation of \citeauthor{Hemachandra1989}'s result is the structural link it reveals between $\PNExp$ and $\NPNExp$.
Both classes are exactly characterized by $\NExpTime$ Hausdorff languages of exponential length.
Deciding such languages reduces to checking whether the Hausdorff index, for a suitable Hausdorff predicate, of the input string is odd.
The classes $\PNExp$ and $\NPNExp$ simply correspond to two broad families of oracle machines that decide languages in the \SEHDeltaLevel of $\SExpHier$ via two different strategies.
Machines from $\PNExp$ are deterministic polynomial-time oracle machines that compute the Hausdorff index via a polynomial-time binary search, whereas machine from $\NPNExp$ are nondeterministic polynomial-time oracle machines that guess the Hausdorff index and then check it.
Thus, $\PNExp$ is a type~(ii) oracle class associated with $\BoundedHausdCLASS{2^\PolFunctions}{\NExpTime}$, while $\NPNExp$ is a type~(iii) oracle class associated with $\BoundedHausdCLASS{2^\PolFunctions}{\NExpTime}$ (see \zcref{sec_discussion_Hausdorff_perspective} for more on oracle classes ``types'').

Some of the characterizations of the \SEHDeltaLevel of \SExpHier in \zcref{theo_delta_level_equals_expHausdorff} were already known, e.g.:
\begin{itemize}[nosep,label=--]
  \item $\PNExp = \ParOracle{\ExpTime}{\NPTime}$~\cite[Theorem~4.10, Part~2]{Hemachandra1989}, \cite[Lemma~3.1]{Mocas1996}, and~\cite[Corollary~25]{AllenderKRR2011};
  \item $\PNExp = \BoundedOracle{\ExpTime}{\NPTime}{\mkern-2mu\PolFunctions}$~\cite[Theorem~8, Parts~3 \&~4]{Gottlob1995} and~\cite[Lemma~3.1]{Mocas1996}; and
  \item $\PNExp = \Oracle{\PSpace}{\NPTime}$~\cite[Corollary~2.2]{Hemaspaandra1994} and~\cite[Theorem~8, Parts~3 \&~4]{Gottlob1995}.
\end{itemize}
These results were often shown in the literature via tailored techniques, whereas we obtain them in a uniform and rather simple manner, via the notion of Hausdorff classes. 
In fact, $\ParOracle{\ExpTime}{\NPTime}$ and $\ParOracle{\PSpace}{\NPTime}$ are type~(i) oracle classes associated with $\BoundedHausdCLASS{2^{\PolFunctions}}{\NExpTime}$;
$\PNExp$, $\PolOracle{\ExpTime}{\NPTime}$, $\Oracle{\PSpace}{\NPTime}$, and $\PolOracle{\PSpace}{\NPTime}$, are type~(ii) oracle classes associated with $\BoundedHausdCLASS{2^{\PolFunctions}}{\NExpTime}$;
and $\NPNExp$ and is a type~(iii) oracle class associated with $\BoundedHausdCLASS{2^{\PolFunctions}}{\NExpTime}$.

An additional interesting equivalence result is
\begin{itemize}[nosep,label=--]
  \item $\PNExp = \Oracle[\ensuremath{\langle\mkern-2mu \PolFunctions \rangle}]{\NExpTime}{\NExpTime}$~\cite[Corollary~4]{SchoningW88} and~\cite[Theorem~24]{AllenderKRR2011},
\end{itemize}
where $\Oracle[\ensuremath{\langle\mkern-2mu \PolFunctions \rangle}]{\NExpTime}{\NExpTime}$ is the class of languages decided by $\NExpTime$ oracle machines querying an $\NExpTime$ oracle with queries whose size is polynomially bounded.
This result also admits a Hausdorff class interpretation:
machines from this oracle class can guess, and subsequently check, the Hausdorff index, making $\Oracle[\ensuremath{\langle\mkern-2mu \PolFunctions \rangle}]{\NExpTime}{\NExpTime}$ a type~(iii) oracle class.
Since both the oracle machine and the oracle are in $\NExpTime$, the query size of the caller must be restricted;
otherwise, the oracle could run for double\nbdash-exponential time in the input length.
This idea of bounding query size readily generalizes to the statements of \zcref{theo_nexp_nexp_containment,theo_summary_nexp_nexp_Hausdorff}.

\medskip

We have been showing that the \SEHText levels are exactly characterized by $\NExpTime$ Hausdorff classes of increasing lengths.
At the base level, the class $\NExpTime$ itself is the class of $\NExpTime$ Hausdorff languages of length~$1$.
Above \NExpTime there is the \BHText over $\NExpTime$, which is the class of Hausdorff languages of constant length (greater than~$1$).
Above this, there is the \SEHThetaLevel of $\SExpHier$, where we have $\PNExpPar = \PNExpLog$, characterized by $\NExpTime$ Hausdorff languages of polynomial length.
And then, there is the \SEHDeltaLevel of $\SExpHier$, where $\PNExp = \NPNExp$, characterized by $\NExpTime$ Hausdorff languages of exponential length.
For this reason, we have that the levels of the \SEHText are actually (some of) the intermediate levels of the first step of the \WEHStressedText (see \zcref{fig_iterated_exponentials_meta-hierarchy})---an equivalent of $\Oracle{\ExpTime}{\NPTime}$ does not belong to \SExpHier.
Therefore, these two hierarchies are not unrelated entities.

Understanding whether $\PNExpPar$ is strictly contained in $\PNExp$ is therefore tantamount to understanding whether exponentially\nbdash-long Hausdorff reductions to $\NExpTime$ Hausdorff predicates are strictly more powerful than polynomially\nbdash-long ones.
Given that these two classes are equivalent to two intermediate levels in the first step of the \WEHStressedText, we conjecture that this is the case, and hence that $\PNExpPar \neq \PNExp$, like it was conjectured that $\ParOracle{\PTime}{\NPTime} \neq \Oracle{\PTime}{\NPTime}$~(see, e.g.,~\cite{Krentel1988,Wagner1990,Beigel1991}).

The following downward separation result however implies that proving $\PNExpPar \neq \PNExp$ will unlikely be easy, as this would cause the separation of \PTime from \NPTime.
Indeed, if $\PTime$ were equal to $\NPTime$, then $\ExpTime$ would equal $\NExpTime$~\cite{Hartmanis1985}.
From this, it would follow a collapse of the \SEHText to \ExpTime~\cite{Hemachandra1989} involving \PNExpPar as well.
Below, we provide also a different proof.

\begin{theorem}
\label{theo_downward_separation}
If $\PNExpPar \neq \PNExp$, then $\PTime \neq \NPTime$.
\end{theorem}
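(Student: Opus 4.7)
The plan is to prove the contrapositive: if $\PTime = \NPTime$, then $\PNExpPar = \PNExp$. The argument will go through the standard upward translation from the polynomial setting to the exponential one via a padding argument due to \citeauthor{Hartmanis1985}, combined with basic closure properties of deterministic time classes under polynomial-time Turing reductions.

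First, I would assume $\PTime = \NPTime$ and invoke the classical padding result that $\PTime = \NPTime \Rightarrow \ExpTime = \NExpTime$. The idea, for completeness, is that any language $\Language{L} \in \NExpTime$ can be padded into a language $\Language{L}' \in \NPTime$ over strings of size exponential in the original ones; under the hypothesis $\PTime = \NPTime$, $\Language{L}'$ is in $\PTime$, and then a deterministic machine for $\Language{L}$ can simulate the polynomial-time machine for $\Language{L}'$ on the padded input in exponential time, yielding $\Language{L} \in \ExpTime$.

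Next I would use that $\ExpTime$ is closed under polynomial-time Turing reductions (since any polynomial-time oracle machine querying an $\ExpTime$ oracle can be simulated in exponential time by in-lining the oracle). Therefore, under the hypothesis, $\PNExp = \Oracle{\PTime}{\NExpTime} = \Oracle{\PTime}{\ExpTime} = \ExpTime$. On the other hand, trivially $\ExpTime = \NExpTime \subseteq \PNExpPar \subseteq \PNExp$, so all three classes coincide and in particular $\PNExpPar = \PNExp$, contradicting the hypothesis of the theorem. Alternatively, and perhaps more in line with the perspective of this paper, one may use the Hausdorff characterizations of \cref{theo_theta_level_equals_polHausdorff,theo_delta_level_equals_expHausdorff}: under $\ExpTime = \NExpTime$, both $\BoundedHausdRedCLASS{\PolFunctions}{\NExpTime}$ and $\BoundedHausdRedCLASS{2^{\PolFunctions}}{\NExpTime}$ reduce to Hausdorff classes over a deterministic class that is already closed under such reductions, and therefore both collapse to $\ExpTime$.

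The argument is essentially routine once the padding step is invoked, so I do not expect any real obstacle; the only subtlety is to make sure the oracle-model conventions used to define $\PNExpPar$ and $\PNExp$ are consistent with the in-lining simulation (i.e., that polynomially many parallel or adaptive queries of polynomial length to an $\ExpTime$ oracle can indeed be absorbed into a single $\ExpTime$ deterministic computation), which is standard.
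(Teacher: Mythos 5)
Your proof is correct: the contrapositive, the padding step $\PTime = \NPTime \Rightarrow \ExpTime = \NExpTime$, the in-lining $\Oracle{\PTime}{\ExpTime} = \ExpTime$, and the sandwich $\NExpTime \subseteq \PNExpPar \subseteq \PNExp = \ExpTime = \NExpTime$ all go through, and the query-model worry you flag is indeed harmless since the queries have polynomial length. However, this is not the route the paper's proof environment takes; it is essentially the classical argument that the paper only sketches in the paragraph \emph{preceding} the theorem (attributing it to \citet{Hartmanis1985} and \citet{Hemachandra1989}) before announcing ``an additional proof''. The paper's actual proof never invokes padding or $\ExpTime = \NExpTime$: it first observes, unconditionally, that $\BoundedParOracle{\NPTime}{\NExpTime}{2} \subseteq \ParOracle{\NPTime}{\NExpTime} \subseteq \NPNExp$ and then uses its Hausdorff-derived results (\cref{theo_NP_NEXP_equals_NEXP_2Pol,theo_p_nexp_equals_np_nexp}, i.e.\ $\BoundedParOracle{\NPTime}{\NExpTime}{2} = \NPNExp = \PNExp$) to conclude $\ParOracle{\NPTime}{\NExpTime} = \PNExp$ and hence $\PNExpPar \subseteq \PNExp$ via this chain; the hypothesis $\PTime = \NPTime$ enters only at the very last step, to replace the deterministic base machine of $\PNExpPar$ by a nondeterministic one and get $\PNExpPar = \ParOracle{\NPTime}{\NExpTime} = \PNExp$. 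What each approach buys: yours is elementary and self-contained, relying only on the standard upward-translation lemma and oracle in-lining, and it in fact yields the stronger conclusion that everything collapses to $\ExpTime$; the paper's version stays entirely inside its Hausdorff/oracle-class machinery, isolating the single conditional step (swapping $\PTime$ for $\NPTime$ in front of the parallel $\NExpTime$ oracle) and thereby reinforcing the paper's thematic point that these collapses are manifestations of the Hausdorff characterization rather than of padding. Your side remark that, under $\ExpTime = \NExpTime$, the Hausdorff classes $\BoundedHausdRedCLASS{\PolFunctions}{\NExpTime}$ and $\BoundedHausdRedCLASS{2^{\PolFunctions}}{\NExpTime}$ collapse to $\ExpTime$ is also fine (binary search for the maximum Hausdorff index over a now-deterministic exponential-time predicate), though as stated it is a sketch rather than a proof.
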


\begin{proof}
Prove the contrapositive: $\PTime = \NPTime \Rightarrow \PNExpPar = \PNExp$.
By \zcref{theo_NP_NEXP_equals_NEXP_2Pol,theo_p_nexp_equals_np_nexp}, $\ParOracle{\NPTime}{\NExpTime} = \NPNExp = \PNExp$.
Clearly, $\PNExpPar \subseteq \ParOracle{\NPTime}{\NExpTime}$.
If $\PTime = \NPTime$ were the case, we would have $\PNExpPar = \ParOracle{\NPTime}{\NExpTime} = \PNExp$. 
\end{proof}

We now give a certificate\nbdash-based characterization of \NPNExp, closing an open problem posed by \citet{Hemachandra1987,Hemachandra1989}.
The characterization follows directly from the equivalence $\NPNExp = \BoundedHausdCLASS{2^\PolFunctions}{\NExpTime}$ established in \zcref{theo_NP_NEXP_equals_NEXP_2Pol}.
The following characterization is also implicit in the proof of the \PNExph{}ness of the Extended Tiling Problem in~\cite{EiterLP2016,EiterTechRep2016}, though that argument relies on the equality $\NPNExp = \PNExp$.

\begin{theorem}[store=NPNexpCertificates]
\label{theo_np_nexp_certificate_characterization}
Let $\Language{L}$ be a language.
Then, $\Language{L} \in \NPNExp$ if and only if there exist
a polynomial $p(n)$ and polynomial-time predicates $R$ and $S$ such that, for every string $w$,
\begin{equation}\label{eq_np_nexp_certificate_characterization}
w \in \Language{L} \Leftrightarrow (\exists t \in \alphabet^{\leq p(\StringLength{w})}) \PrefixMatrixSeparator
((\exists u \in \alphabet^{\leq 2^{p(\StringLength{w})}}) \PrefixMatrixSeparator R(w,t,u) = 1 \land
(\forall v \in \alphabet^{\leq 2^{p(\StringLength{w})}}) \PrefixMatrixSeparator S(w,t,v) = 1).
\end{equation}
\end{theorem}

\begin{proof}
\ProofRightarrowItem
Let $\Language{L}$ be a $\NPNExp$ language.
By \zcref{theo_NP_NEXP_equals_NEXP_2Pol}, $\Language{L} \in \BoundedHausdCLASS{2^\PolFunctions}{\NExpTime}$.
Therefore, there is a $\NExpTime$ Hausdorff predicate $\Language{D}$ of length $2^{q(n)} - 1$, for some polynomial $q(n)$, such that, for every string $w$, $w \in \Language{L}$ iff $\HausdIndex{w}{\Language{D}}$ is odd.
Let us define $\Language{D}'(w,z) = \compl{\Language{D}}(w,z+1)$.
For every string~$w$, we hence have that
\[w \in \Language{L} \Leftrightarrow (\exists z \in \NaturalsDomain) \PrefixMatrixSeparator (1 \leq z \leq 2^{q(\StringLength{w})} - 1 \land \text{ $z$ is odd } \land \Language{D}(w,z) = 1 \land \Language{D}'(w,z) = 1).\]

Observe that the fact that $\Language{D}$ and $\Language{D}'$ are decidable in exponential time with respect to $\StringLength{w}$ only plays no role here.
Indeed, any value of $z$ with $z \leq 2^{q(\StringLength{w})} - 1$ can be represented in canonical binary form with $q(\StringLength{w})$ bits at most.
Therefore, since $\Language{D} \in \NExpTime$ and $\Language{D}' \in \CoNExpTime$, there exist polynomials $q_{\Language{D}}(n)$ and $q_{\Language{D}'}(n)$ and polynomial\nbdash-time predicates $T_{\Language{D}}$ and $T_{\Language{D}'}$ such that, for every string~$w$,
\begin{align*}
  \tup{w,z} \in \Language{D} & \Leftrightarrow (\exists u \in \alphabet^{\leq 2^{q_{\Language{D}}(\StringLength{w})}}) \PrefixMatrixSeparator T_{\Language{D}}(w,z,u) = 1 \\
  \tup{w,z} \in \Language{D}' & \Leftrightarrow (\forall u \in \alphabet^{\leq 2^{q_{\Language{D}'}(\StringLength{w})}}) \PrefixMatrixSeparator T_{\Language{D}'}(w,z,u) = 1.
\end{align*}

By combining the expressions above, we obtain:
\begin{align*}
w \in \Language{L}
{} \Leftrightarrow
    (\exists z \in \alphabet^{\leq q(\StringLength{w})}) \PrefixMatrixSeparator (
        & 1 \leq z \leq 2^{q(\StringLength{w})} - 1 \land \text{ $z$ is odd } \land {}\\
        &(\exists u \in \alphabet^{\leq 2^{q_{\Language{D}}(\StringLength{w})}}) \PrefixMatrixSeparator (
            T_{\Language{D}}(w, z, u) = 1
        ) \land {}\\
        &(\forall v \in \alphabet^{\leq 2^{q_{\Language{D}'}(\StringLength{w})}}) \PrefixMatrixSeparator (
            T_{\Language{D}'}(w, z, v) = 1
        )
    ).
\end{align*}

A rewriting of the above expression allows us to more explicitly map it onto~Eq.~\eqref{eq_np_nexp_certificate_characterization}.
Let $p(n)$ be a polynomial such that $p(n) \geq q(n), q_{\Language{D}}(n), q_{\Language{D}'}(n)$.
We obtain:%
\footnote{%
In the rewriting, we use these equivalences:
For two polynomials $q(n)$ and $p(n)$ such that $q(n) \leq p(n)$ it holds that:
\begin{itemize}[label=--,nosep]
  \item $(\exists u \in \alphabet^{\leq 2{q(\StringLength{w})}}) \PrefixMatrixSeparator T(w,u) \Leftrightarrow (\exists u \in \alphabet^{\leq 2^{p(\StringLength{w})}}) \PrefixMatrixSeparator (\StringLength{u} \leq 2^{q(\StringLength{w})} \land T(w,u))$;
  \item $(\forall v \in \alphabet^{\leq 2^{q(\StringLength{w})}}) \PrefixMatrixSeparator T(w,v) \Leftrightarrow (\forall v \in \alphabet^{\leq 2^{p(\StringLength{w})}}) \PrefixMatrixSeparator (\StringLength{v} \leq 2^{q(\StringLength{w})} \rightarrow T(w,v))$.
\end{itemize}}%
\begin{align*}
w \in \Language{L}
    &\begin{aligned}[t]
        {} \Leftrightarrow
        (\exists z \in \alphabet^{\leq p(\StringLength{w})}) \PrefixMatrixSeparator (
            &1 \leq z \leq 2^{q(\StringLength{w})} - 1 \land \text{ $z$ is odd } \land {}\\
            &(\exists u \in \alphabet^{\leq 2^{p(\StringLength{w})}}) \PrefixMatrixSeparator (
                \StringLength{u} \leq 2^{q_{\Language{D}}(\StringLength{w})} \land T_{\Language{D}}(w, z, u) = 1
            ) \land {}\\
            &(\forall v \in \alphabet^{\leq 2^{p(\StringLength{w})}}) \PrefixMatrixSeparator (
                \StringLength{v} \leq 2^{q_{\Language{D}'}(\StringLength{w})} \rightarrow T_{\Language{D}'}(w, z, v) = 1
            )
        )
     \end{aligned} \displaybreak[0] \\
    &\begin{aligned}[t]
        {} \Leftrightarrow
        (\exists z \in \alphabet^{\leq p(\StringLength{x})}) \PrefixMatrixSeparator
        (
            &(\exists u \in \alphabet^{\leq 2^{p(\StringLength{w})}}) \PrefixMatrixSeparator (
                    1 \leq z \leq 2^{q(\StringLength{w})} - 1 \land \text{ $z$ is odd } \land {} \\
            &\phantom{(\exists u \in \alphabet^{\leq 2^{p(\StringLength{w})}}) \PrefixMatrixSeparator (}
                    \underbracket[.5pt]{\qquad\quad \StringLength{u} \leq 2^{q_{\Language{D}}(\StringLength{w})} \land T_{\Language{D}}(w, z, u) = 1}_{\Leftrightarrow R(w,z,u) = 1}
            ) {} \land {} \\
            &(\forall v \in \alphabet^{\leq 2^{p(\StringLength{w})}}) \PrefixMatrixSeparator
            (
                \underbracket[.5pt]{\StringLength{v} \leq 2^{q_{\Language{D}'}(\StringLength{w})} \rightarrow T_{\Language{D}'}(w, z, v) = 1}_{\Leftrightarrow S(w,z,v) = 1}
            )
        ).
     \end{aligned}
\end{align*}

To conclude, we need to show that $R(w,z,u)$ and $S(w,z,u)$ are deterministic polynomial-time predicates.
Let us focus on $R(w,z,u)$.
Checking whether $z$ is odd is clearly feasible in polynomial time, and checking whether $T_{\Language{D}}(w, z, u) = 1$ can be done in polynomial time, because $T_{\Language{D}}$ is assumed to be polynomial.
We are left to show that checking $z \leq 2^{q(\StringLength{w})} - 1$ and $\StringLength{u} \leq 2^{q_{\Language{D}}(\StringLength{w})}$ can be carried out in polynomial time.

Consider first checking whether $z \leq 2^{q(\StringLength{w})} - 1$.
Notice that $z \leq 2^{q(\StringLength{w})} - 1$ holds whenever $\StringLength{z} \leq q(\StringLength{w})$ (irrespective of whether the string $z$ is actually an integer in canonical form or not, because it is not possible to represent a number greater than $2^{q(\StringLength{w})} - 1$ with ${q(\StringLength{w})}$ bits only).
By this, we simply need to compute the value $q(\StringLength{w})$, which can be carried out in polynomial time (see \zcref{sec_maths_complexity}; Polynomials), and then check that $\StringLength{z}$ does not exceed that value (feasible in polynomial time as well).

Similarly, to check that $\StringLength{u} \leq 2^{q_{\Language{D}}(\StringLength{w})}$ we need to compute $2^{q_{\Language{D}}(\StringLength{w})}$ and compare the length of $\StringLength{u}$ with that value.
This can be done in polynomial time because $2^{q_{\Language{D}}(\StringLength{w})}$ can be computed in polynomial time (see \zcref{sec_maths_complexity}; Iterated exponentials of polynomials).

Similarly, we can show that $S(w,z,u)$ is a deterministic polynomial-time predicate.

\medbreak

\ProofLeftarrowItem
Assume that the language $\Language{L}$ satisfies the certificate-based characterization of~Eq.~\eqref{eq_np_nexp_certificate_characterization}.
We show that $\Language{L}$ can be decided by an $\NPTime$ oracle Turing machine $\Oracle{\Machine{M}}{?}$ querying a $\NExpTime$ oracle.

To decide $w \in \Language{L}$, the machine $\Machine{M}$ simply guesses the polynomially-long string $t$, and then checks, via two oracle calls, that $((\exists u \in \alphabet^{\leq 2^{p(\StringLength{x})}}) \PrefixMatrixSeparator R(w,t,u) = 1)$ holds, which is a $\NExpTime$ task, and that $((\forall v \in \alphabet^{\leq 2^{p(\StringLength{x})}}) \PrefixMatrixSeparator S(w,t,v) = 1)$ holds, which is a $\CoNExpTime$ task.
\end{proof}

Besides asking for a certificate\nbdash-based characterization of $\NPNExp$, \citet{Hemachandra1989} also posed the open problem of finding an alternating Turing machine characterization of $\NPNExp$.
\zcref[S]{theo_NP_NEXP_equals_NEXP_2Pol,theo_np_nexp_1_not_equal_np_nexp}, showing that 
$\BoundedOracle{\NPTime}{\DExp}{1} = \NPNExp$ and suggesting that 
$\BoundedOracle{\NPTime}{\NExpTime}{1} \neq \NPNExp$, give us evidence that such a characterization, although possible, may be less ``natural'' than those for the main levels of $\PolHier$ \cite{ChandraKS81} or $\WExpHier$ \cite{Hemachandra1989,Mocas1996}.
Indeed, accommodating the $\DExp = (\NExpTime \land \CoNExpTime)$ oracle access may require a tailored alternating computation.
By the certificate\nbdash-based characterization of $\NPNExp$ in \zcref{theo_np_nexp_certificate_characterization}, an alternating Turing machine could begin with an existential polynomial\nbdash-time computation guessing the certificate ``$t$''.
Since $\BoundedOracle{\NPTime}{\NExpTime}{1} = \NPNExp$ is unlikely to hold, the machine cannot simply switch to a universal phase for verification.
Instead, using $\BoundedOracle{\NPTime}{\DExp}{1} = \NPNExp$, the machine may enter a universal state with two branches:
one initiating an existential exponential\nbdash-time computation verifying the 
$\NExpTime$ condition, and the other remaining universal while performing an exponential\nbdash-time computation verifying the $\CoNExpTime$ condition.

We stress here that $\NPNExp$, by our Hausdorff characterization, is an intermediate level of \WExpHier, and not a main level.
Main levels of the iterated exponential hierarchies can easily be characterized via alternating machines (see \zcref{sec_def_iterated_exp_hierarchy}), whereas the most natural characterization for intermediate levels we have shown to be Hausdorff classes of increasing lengths (see \zcref{sec_charting_top}).

\resumetoc

\section{Hard Problems}
\label{sec_hard_problems}

In this section, via the Hausdorff characterization, we obtain hard problems for the intermediate levels of the iterated exponential hierarchies.
In the first subsection, for all the iterated exponential hierarchies, we provide:
\begin{itemize}[nosep,label=--]
  \item for all steps, canonical complete problems for the first intermediate level; and
  \item for the first steps, canonical complete problems for all the intermediate levels, but the first and the last.
\end{itemize}
In the second subsection, we consider hard problems over Quantified Boolean Second Order formulas.
These enable us to exhibit complete problems for all the intermediate levels of all the steps of the \WEHStressedText.
In the last subsection, we obtain matching lower bounds for $\PNExpLog$ problems whose hardness was left open in the literature due to the lack of known $\PNExpLogc$ problems.

\subsection{Hard Problems for Some of the Intermediate Levels}
\label{sec_canonical_hard_problems}

In this section, we provide canonical complete problems for some of the intermediate levels of the iterated exponential hierarchies.
We first look at problems complete for the first intermediate level in every step. 
Then, we define a family of problems complete for all the intermediate levels, but the first and the last, in the first steps. 

The following is a problem complete for $\BoundedParOracle{\iExpTime{i}}{\SigmaP{c}}{\PolFunctions} = \BoundedOracle{\iExpTime{i}}{\SigmaP{c}}{\LogFunctions} = \BoundedHausdCLASS{\iExpPolFunctions{0}}{\Oracle{\iNExpTime{i}}{\SigmaP{c-1}}}$.
Similar problems restricted to \PolHier were provided by \citet{Wagner1987,Wagner1990}.
The membership of this problem is obtained via a simple counting algorithm invoking an oracle, and the hardness is obtained via a ``template'' reduction pivoting on the notion of Hausdorff language. 
Details of the proof are deferred to \zcref{sec_detailed_proofs_sec_canonical_hard_problems}.

\begin{theorem}[store=HardnessOddityGeneral]
\label{theo_hardness_oddity_general}
Let $i \geq 0$ and $c \geq 1$ be integers, and let $\Language{A}$ be a language complete for $\Oracle{\iNExpTime{i}}{\SigmaP{c-1}}$ (resp., $\ComplementPrefixKerned\Oracle{\iNExpTime{i}}{\SigmaP{c-1}}$).
Then, for a tuple $\StringTup{w} = \tup{w_1,\dots,w_n}$ of strings, deciding whether the number of \yesinsts of $\Language{A}$ in~$\StringTup{w}$ is odd is complete for $\BoundedOracle{\iExpTime{i}}{\SigmaP{c}}{\LogFunctions} = \BoundedParOracle{\iExpTime{i}}{\SigmaP{c}}{\PolFunctions}$.
Hardness holds even if the tuples $\StringTup{w} = \tup{w_1,\dots,w_n}$ are such that ${\Language{A}}(w_1) \geq \dots \geq {\Language{A}}(w_n)$, and $n$ is an even number.
\end{theorem}

Inspired by a problem in \cite{LukasiewiczM17}, from the problem in the statement of \zcref{theo_hardness_oddity_general}, we obtain the one stated in \zcref{theo_hardness_count_comp_general_2L_2S}, which has a counting and comparison flavor, and is complete for $\BoundedOracle{\iExpTime{i}}{\SigmaP{c}}{\LogFunctions} = \BoundedParOracle{\iExpTime{i}}{\SigmaP{c}}{\PolFunctions}$.
From the latter, we obtain another complete for the same class and having a counting a comparison flavor, which is stated in \zcref{theo_hardness_count_comp_general_2L_1S}.
The proofs of their complexity are obtained via simple counting algorithms, for the memberships, and via sequences of reductions, for the hardnesses;
details are provided in \zcref{sec_detailed_proofs_sec_canonical_hard_problems}.

\begin{theorem}[store=HardnessCountCompGeneral]
\label{theo_hardness_count_comp_general_2L_2S}
Let $i \geq 0$ and $c \geq 1$ be integers, and let $\Language{A}$ and $\Language{B}$ be two languages complete for $\Oracle{\iNExpTime{i}}{\SigmaP{c-1}}$ (resp., $\ComplementPrefixKerned\Oracle{\iNExpTime{i}}{\SigmaP{c-1}}$).
Then, for two tuples $\StringTup{w} = \tup{w_1,\dots,w_n}$ and $\StringTup{v} = \tup{v_1,\dots,v_m}$ of strings,
deciding whether the number of \yesinsts of $\Language{A}$ in $\StringTup{w}$ is greater than the number of \yesinsts of $\Language{B}$ in $\StringTup{v}$ is complete for $\BoundedOracle{\iExpTime{i}}{\SigmaP{c}}{\LogFunctions} = \BoundedParOracle{\iExpTime{i}}{\SigmaP{c}}{\PolFunctions}$.
Hardness holds even if $\Language{A} = \Language{B}$, the tuples $\StringTup{w}= \tup{w_1,\dots,w_n}$ and $\StringTup{v} = \tup{v_1,\dots,v_m}$ are such that $\Language{A}(w_1) \geq \dots \geq \Language{A}(w_n)$ and $\Language{B}(v_1) \geq \dots \geq \Language{B}(v_m)$, and $n = m$.
\end{theorem}

\begin{theorem}[store=HardnessCountCompTwoLangOneSet]
\label{theo_hardness_count_comp_general_2L_1S}
Let $i \geq 0$ and $c \geq 1$ be integers, and let $\Language{A}$ and $\Language{B}$ be two (distinct) languages complete for $\Oracle{\iNExpTime{i}}{\SigmaP{c-1}}$ (resp., $\ComplementPrefixKerned\Oracle{\iNExpTime{i}}{\SigmaP{c-1}}$).
Then, for a tuple $\StringTup{w} = \tup{w_1,\dots,w_n}$ of strings, deciding whether the number of \yesinsts of $\Language{A}$ in $\StringTup{w}$ is greater than the number of \yesinsts of $\Language{B}$ in $\StringTup{w}$ is complete for $\BoundedOracle{\iExpTime{i}}{\SigmaP{c}}{\LogFunctions} = \BoundedParOracle{\iExpTime{i}}{\SigmaP{c}}{\PolFunctions}$.
\end{theorem}

To conclude, for integers $i \geq 0$ and $j \geq 1$, we define a family of problems complete for $\Oracle{\iNExpTime{i}}{\iNExpTime{j}} = \BoundedHausdCLASS{\iExpPolFunctions{i+1}}{\iNExpTime{(i+j)}} = \BoundedParOracle{\iExpTime{(i+j)}}{\NPTime}{\iExpPolFunctions{i+1}} = \BoundedOracle{\iExpTime{(i+j)}}{\NPTime}{\iExpPolFunctions{i}}$ (see \zcref{theo_summary_equivalence_intermediate_levels_Hausdorff,theo_summary_nexp_nexp_Hausdorff});
observe that the integers $i$ and $j$ mentioned in the statement of the \zcref*[typeset=name,nocap]{theo_canonical_np_nexp_hard_problem} below are fixed, i.e., they are \emph{not} part of the input.

\begin{theorem}[store=NPNexpHardnessTwoMachines]
\label{theo_canonical_np_nexp_hard_problem}
Let $i \geq 0$ and $j \geq 1$ be (fixed) integers.
Let $\Machine{M}_\alpha$ and $\Machine{M}_\beta$ be two (strings encoding) Turing machines, and let $r$ and $t$ be two integers represented in unary notation, with $\StringLength{t}$ polynomially\nbdash-bounded in~$\StringLength{r}$.
Then, deciding whether there exists a string $v$ of length at most $\iExp{i}{r}$ such that there exists an accepting computation of at most $\iExp{i+j}{t}$ steps for $M_\alpha (v)$ and there does not exist an accepting computation of at most $\iExp{i+j}{t}$ steps for $M_\beta(v)$ is $\Oracle{\iNExpTime{i}}{\iNExpTime{j}}$\CompleteSuffix.
Hardness holds even if the length of $v$ must be exactly~$\iExp{i}{r}$.
\end{theorem}

\subsection{Hard Problems over QBSFs for the Intermediate Levels of EH}
\label{sec_qbsf_hard_problems}

In this section, we prove complete for the intermediate levels of the \WEHStressedText some problems over Quantified Boolean Second-order Formulas (QBSFs).
We first provide preliminaries on \QBSFs, then we introduce the problems, and we conclude by proving them complete for the respective complexity classes.
These proofs will rely on the notions of Hausdorff languages and classes introduced in this paper.

\stoptoc

\subsubsection{Preliminaries on QBSFs}

Quantified Boolean Second-order Formulas~(QBSFs)~\cite{Luck2016-techrep}, or Second-Order Quantified Boolean Formulas~(SOQBFs)~\cite{Jiang2023}, are defined starting from usual quantified Boolean propositional formulas by adding the possibility to quantify also Boolean function variables.
We here focus on \emph{prenex simple} formulas, which are formulas whose quantifiers appear at the beginning, i.e., prenex, and no (proper) function appear as argument of another function, i.e., simple~\cite{Luck2021}.
The formulas that we deal with are hence \QBSFs of the form:
\[
  \Phi(\VarSet{g},\VarSet{y}) =
    (\SOQ_1 \VarSet{f}^1) \cdots (\SOQ_n \VarSet{f}^n)
    (\FOQ_1 \VarSet{x}^1) \cdots (Q_m \VarSet{x}^m) \PrefixMatrixSeparator
    \phi(\VarSet{f}^1,\dots,\VarSet{f}^n,\VarSet{g},\VarSet{x}^1,\dots,\VarSet{x}^m,\VarSet{y}),
\]
where:
$\VarSet{f}^i$, for all $i$, and $\VarSet{g}$ are tuples of distinct Boolean \emph{function} variables;
$\VarSet{x}^j$, for all $j$, and $\VarSet{y}$ are tuples of distinct Boolean \emph{propositional} variables;%
\footnote{In some works, see, e.g., \cite{Luck2016-techrep,Luck2021}, \emph{all} Boolean variables in \QBSFs are \emph{function} variables, where function variables of arity~$0$ are called \emph{propositional} variables, and the others are called (proper) function variables.}
$\SOQ_1,\dots,\SOQ_n \in \set{\exists,\forall}$ are $n \geq 0$ alternating \emph{second}\nbdash-order quantifiers; and
$\FOQ_1,\dots,\FOQ_m \in \set{\exists,\forall}$ are $m \geq 0$ alternating \emph{first}\nbdash-order quantifiers.
The function and propositional variables in $\VarSet{g}$ and in $\VarSet{y}$ are \defin{free} as they are not bound by quantifiers.
The quantifier part of $\Phi$ is the \defin{prefix} of $\Phi$, and the quantifier\nbdash-free formula $\phi$ is the \defin{matrix} of $\Phi$.
In $\phi$, the usual Boolean connectives link the following elements:
\begin{itemize}[nosep]
  \item Boolean values $1$ and $0$ for $\valtrue$ and $\valfalse$, respectively;
  \item \defin{literals}, which are positive or negated occurrences of
  \begin{itemize}[nosep]
    \item Boolean propositional variables; or
    \item expressions of the form $f(z_1,\dots,z_a)$, where $f$ is a Boolean function variable of arity $a$, and $z_1,\dots,z_a$ are Boolean values or Boolean propositional variables (or function variables of arity~$0$);
  \end{itemize}
  \item subformulas obtained by linking via Boolean connectives the elements above.
\end{itemize}
\emph{Henceforth, by Boolean formulas we mean formulas over Boolean propositions \emph{and} functions, unless stated otherwise.}
An example of a prenex simple \QBSF, where the function $h$ and the proposition $z$ are free variables,~is:
\[\Phi(h,z) = (\exists f,g) (\forall x) (\exists y) \; (y \lor \lnot f(z,x) \land \lnot z \leftrightarrow h(y,0,x)) \rightarrow (g(1,y,z) \land \lnot x \lor f(y,y) \land 1).\]

\defin{Clauses} and \defin{terms} are disjunctions and conjunctions of literals, respectively.
A prenex \QBSF is in \CNF or in \DNF if its matrix is a conjunction of clauses or a disjunction of terms, respectively.

The truth value of a \QBSF $\Phi$ depends on the interpretation of its \emph{free} function and propositional variables.
Let $x$ be a Boolean propositional variable.
An interpretation $\Interpr{I}$ of $x$ is a substitution of $x$ with a Boolean value.
We call interpretations of propositional variables also (\defin{truth}\nbdash-)\defin{assignments}.
The notation $\EvalInterpr{x}{\Interpr{I}}$ means that $x$ is substituted by the Boolean value that $\Interpr{I}$ assigns to $x$.
Let $f$ be a Boolean function variable of arity~$n$.
An interpretation $\Interpr{I}$ of $f$ is a substitution of $f$ with a (specific) Boolean function $\set{\valtrue,\valfalse}^n \to \set{\valtrue,\valfalse}$ mapping $n$\nbdash-tuples of Boolean values to a Boolean value.
We call interpretations of function variables also (\defin{function}\nbdash-)\defin{instantiations}.
The notation $\EvalInterpr{f}{\Interpr{I}}$ means that $f$ is substituted by the Boolean function that is the instantiation of $f$ in $\Interpr{I}$.
If $\VarSet{f}$ and $\VarSet{x}$ are a set of Boolean function and propositional variables, respectively, an interpretation of $\VarSet{f}$ and $\VarSet{x}$ instantiates all the functions in $\VarSet{f}$ and assign truth-values to all the propositions in $\VarSet{x}$, respectively.
The truth value of $\Phi$ \Wrt $\Interpr{I}$, denoted by $\EvalInterpr{\Phi}{\Interpr{I}}$, is obtained in the natural way (for more details, see, e.g., \cite{Luck2016-conference,Luck2016-techrep,Luck2021,Jiang2023}).

An interpretation $\Interpr{I}$ \defin{satisfies} a \QBSF $\Phi$ if $\Interpr{I}$ makes $\Phi$ $\valtrue$;
in which case, $\Interpr{I}$ is a \defin{model} of $\Phi$, and we denote this by $\Interpr{I} \models \Phi$. 
A \QBSF $\Phi$ is \defin{satisfiable} iff it admits a model, otherwise $\Phi$ is \defin{unsatisfiable};
the formula $\Phi$ is \defin{valid} iff all interpretations satisfy $\Phi$.
Let $\Phi$ and $\Psi$ be two \QBSFs, then $\Phi$ \defin{entails} $\Psi$, denoted $\Phi \models \Psi$, iff, for all interpretations $\Interpr{I}$ of the free variables in $\Phi$ \emph{and} $\Psi$, if $\Interpr{I} \models \Phi$ then $\Interpr{I} \models \Psi$.
The formulas $\Phi$ and $\Psi$ are \defin{equivalent} iff $\Phi \models \Psi$ \emph{and} $\Psi \models \Phi$, i.e., each model of $\Phi$ is a model of $\Psi$, and viceversa.

A \QBSF is a \defin{sentence} iff it has no free variables.
A sentence $\Phi$ has a \emph{single} interpretation, which is the empty one that does not assign values to propositional variables or instantiate function variables, as there are no free variables in $\Phi$.
By this, a sentence is valid iff it is satisfied by the empty interpretation, and hence sentences are satisfiable iff they are valid.
Thus, we may interchangeably use the terms validity and satisfiability for sentences.

Since Boolean functions of arity~$0$ can be seen as Boolean propositions~\cite{Luck2016-techrep,Luck2021}, prenex \QBSFs can be assumed \Wlog to have $\SOQ_n \neq \FOQ_1$ (if they have quantifiers over propositional variables at all).

\SigmaKFormulas{c} are \QBSFs whose alternating second-order quantifiers start with $\SOQ_1 = \exists$ and are $c$ in total---as a mnemonic, the bar over the `$\Sigma$' symbol highlights that each second\nbdash-order quantifier may quantify \emph{sets} of Boolean function variables.
\SigmaKOddFormulas{c} (resp., \SigmaKEvenFormulas{c}) are \SigmaKFormulas{c} in which $c$ is odd (resp., even), and hence the last second-order quantifier in the formulas' prefix is $\exists$ (resp., $\forall$)---as a mnemonic, this is highlighted in the subscript.
\CompactSigmaKOddFormulas{c} (resp., \CompactSigmaKEvenFormulas{c}) are \SigmaKOddFormulas{c} (resp., \SigmaKEvenFormulas{c}) in which there is a \emph{single} first-order quantifier in the formulas' prefix---as a mnemonic, the bar over the symbols $\Pi_1$ and $\Sigma_1$ highlights that the first order quantification can be over a set of propositional variables.
Although \SigmaKOddFormulas{c} (resp., \SigmaKEvenFormulas{c}) are a superset of \CompactSigmaKOddFormulas{c} (resp., \CompactSigmaKEvenFormulas{c}), they are semantically equivalent, i.e., for every \SigmaKOddFormula{c} (resp., \SigmaKEvenFormula{c}) there exists an equivalent \CompactSigmaKOddFormula{c} (resp., \CompactSigmaKEvenFormula{c})~\cite{Jiang2023}.

\SimpleSigmaKOddFormulas{c} (resp., \SimpleSigmaKEvenFormulas{c}) are a subclass of \CompactSigmaKOddFormulas{c} (resp., \CompactSigmaKEvenFormulas{c}) in which the second-order quantifiers are over \emph{single} Boolean function variables---as a mnemonic, in the notation we remove the bar over the initial `$\Sigma$' symbol.
These are \QBSFs of the form:
\begin{align*}
    \SimpleSigmaKOddFormulaPrefix{c}\text{-formulas} &\colon \quad \Phi(\VarSet{g},\VarSet{y}) = (\exists f^1)(\forall f^2) \cdots (\exists f^c) (\forall \VarSet{x})\; \phi(f^1,f^2,\dots,f^c,\VarSet{g},\VarSet{x},\VarSet{y}); \\ 
    \SimpleSigmaKEvenFormulaPrefix{c}\text{-formulas} &\colon \quad \Phi(\VarSet{g},\VarSet{y}) = (\exists f^1)(\forall f^2) \cdots (\forall f^c) (\exists \VarSet{x})\; \phi(f^1,f^2,\dots,f^c,\VarSet{g},\VarSet{x},\VarSet{y}).
\end{align*}

Deciding the validity of \SigmaKSentences{c} is in $\SigmaWExp{c} = \Oracle{\NExpTime}{\SigmaP{c-1}}$, and is \SigmaWExph{c} even if the formulas are restricted to be \CNF{} 
\SimpleSigmaKOddFormulaPrefix{c}- and \SimpleSigmaKEvenSentences{c} (see~\cite{Lohrey2012,Luck2016-techrep}, and references therein).

\subsubsection{Definition of the problems}

For two binary strings/sequences $s$ and $t$ of \emph{equal} length, $s$ is \defin{lexicographically greater} than $t$ iff $s$ has $\valtrue$/$1$ and $t$ has $\valfalse$/$0$ in the left-most position at which $s$ and $t$ differ.
The \defin{unfolding} of a Boolean function $f$ is the binary string obtained by concatenating the values of $f$ for all possible argument combinations, from the lexicographically greatest to the least.
For example, if $f(0,0) = f(1,0) = \valfalse$ and $f(0,1) = f(1,1) = \valtrue$, the unfolding of $f$ is ``1010''.
The \defin{Hamming weight} of a truth assignment is the number of propositional variables receiving  $\valtrue$ in the assignment~\cite{CreignouV2015,ChenF2019}.
We can extend this to function variables, where the \defin{Hamming weight} of a function instantiation is the number of $\valtrue$ values in the unfolding of the instantiated function.

Let $\VarSet{f}$ and $\VarSet{x}$ be sets of function and propositional variables, respectively, and let $\Interpr{I}$ be an interpretation for $\VarSet{f}$ \emph{and} $\VarSet{x}$.
The \defin{Hamming weight} of $\Interpr{I}$ is the sum of the Hamming weights of its function instantiations and truth assignments.
Let $o$ be a joint order over the variables in $\VarSet{f}$ \emph{and} $\VarSet{x}$.
The \defin{unfolding} of $\Interpr{I}$ (\Wrt $o$) is the binary string obtained by concatenating the unfolding of its function instantiations and truth assignments, ordered according to $o$, from the most to the least significant.
For two interpretations $\Interpr{I}$ and $\Interpr{J}$, we say that $\Interpr{I}$ is \defin{lexicographically greater (\Wrt $o$)} than $\Interpr{J}$ if the unfolding (\Wrt $o$) of $\Interpr{I}$ is lexicographically greater than that of~$\Interpr{J}$.

Below, for a set $\VarSet{a} = \set{a_1, \dots, a_n}$ of propositional or function variables, if we say that $\VarSet{a}$ is ordered, unless differently stated, we mean that the variables are ordered according to their subscripts;
i.e., $a_j$ is more significant than $a_i$ iff $j > i$, for any pair of variables $a_i,a_j \in \VarSet{a}$.
Ordered sets will hence often be written as $\VarSet{a} = \set{a_n, \dots, a_1}$, to emphasize that variables with higher subscripts are more significant.

Given the above concepts, we can now define the problems that we will investigate.

\vspace{\topsep}

\Problem{\MaxSatSigmaFormula{c}}%
{A satisfiable \SigmaKFormula{c} $\Phi(\VarSet{x})$, where $\VarSet{x} = \set{x_1,\dots,x_n}$ is a set of propositional variables.}%
{Is the Hamming weight of a maximum\nbdash-Hamming\nbdash-weight model of $\Phi(\VarSet{x})$ odd/even?}

\Problem{\LexMaxSigmaFormula{c}}%
{A satisfiable \SigmaKFormula{c} $\Phi(\VarSet{x})$, where $\VarSet{x} = \set{x_{n},\dots,x_1}$ is an ordered set of propositional variables.}%
{Does the lexicographic\nbdash-maximum model of $\Phi(\VarSet{x})$ assign $\valtrue$/$\valfalse$ to $x_1$?} 
\Problem{\LexMaxFuncSigmaFormula{c}}%
{A satisfiable \SigmaKFormula{c} $\Phi(\VarSet{f})$, where $\VarSet{f} = \set{f_n,\dots,f_1}$ is an ordered set of function variables.}%
{Does the lexicographic\nbdash-maximum model $\Interpr{I}$ of $\Phi(\VarSet{f})$ instantiates $f_1$ so that $\EvalInterpr{f_1}{\Interpr{I}}(0,\dots,0) = \valtrue$/$\valfalse$?}

We can show that \MaxSatSigmaFormula{c}, \LexMaxSigmaFormula{c}, and \LexMaxFuncSigmaFormula{c}, are complete for $\DeltaWExpBound{c+1}{\LogFunctions} = \BoundedOracle{\ExpTime}{\SigmaP{c}}{\LogFunctions}$, $\DeltaWExpBound{c+1}{\PolFunctions} = \BoundedOracle{\ExpTime}{\SigmaP{c}}{\PolFunctions}$, and $\DeltaWExp{c+1} = \Oracle{\ExpTime}{\SigmaP{c}}$, respectively.
These problems 
share the flavor of the classical complete problems of the \PHText intermediate levels (see, e.g., \cite{Wagner1987,Buss1988,Krentel1988,Wagner1988,Krentel92}).
Compared to those of \PolHier, we here have an additional complete problem, as the steps of \WExpHier have three intermediate levels rather than two (above the \BHText of the step).
When $c = 1$, \MaxSatSigmaFormula{1}, \LexMaxSigmaFormula{1}, and \LexMaxFuncSigmaFormula{1}, are hence complete for $\DeltaWExpBound{2}{\LogFunctions} = \BoundedOracle{\ExpTime}{\NPTime}{\LogFunctions}$, $\DeltaWExpBound{2}{\PolFunctions} = \BoundedOracle{\ExpTime}{\NPTime}{\PolFunctions}$, and $\DeltaWExp{2} = \Oracle{\ExpTime}{\NPTime}$, respectively.
Thus, by \zcref{theo_theta_level_equals_polHausdorff,theo_delta_level_equals_expHausdorff}, the first two problems are also complete for the classes $\PNExpPar = \PNExpLog$ and $\PNExp = \NPNExp$ of the \SEHText, respectively.

\subsubsection{Proving the problems complete}
\label{sec_proving_qbsf-problems_hard}

Proving the hardness of the problems above will rely on reductions from Hausdorff languages (of the intermediate levels of $\WExpHier$) to tasks involving in some way the decision of whether suitable \SigmaKFormulas{c} are satisfiabile/valid.
Similarly to \citeauthor{Cook1971}'s~\cite{Cook1971} reduction showing the \NPTimeh{}ness of \Sat, parts of these formulas will encode Turing machine computations deciding $\Oracle{\NExpTime}{\SigmaP{c-1}}$ Hausdorff predicates.
Hence, in our case, these are exponential\nbdash-time computations, which cannot be encoded into formulas via only Boolean propositional variables, like in \citeauthor{Cook1971}'s reduction.
This is because reductions need to be polynomial, and we are thus not allowed to generate exponentially\nbdash-long formulas.
To overcome this, we will make large use of Boolean function variables, which will be the main building blocks of our construction.
Hence, before delving into the formal details, we provide some comments on the notation that we will use in the rest of the section.

For a Boolean function $f$, in most cases its arguments will be partitioned into groups, each of which encoding in binary some information, often numbers/indices.
For example, for a function $f$ relating three non-negative integers $a$, $a'$, and $a''$, represented over $r$, $s$, and $t$, bits, respectively, we will have three (ordered) sets of Boolean propositional variables $\VarSet{a} = \set{a_{r-1},\dots,a_0}$, $\VarSet{a}' = \set{a_{s-1}',\dots,a'_0}$, and $\VarSet{a}'' = \set{a_{t-1}'',\dots,a''_0}$, which will be used as arguments for $f$, that hence is of arity $r+s+t$.
More specifically, we have the arguments of $f$ partitioned as follows, in which we use the semicolon to highlight the boundaries between adjacent argument groups:

\begin{center}
\begin{tabular}{l p{8cm}}
  $f(\underbracket[.5pt]{\overbracket[.5pt]{\bullet,\bullet,\dots,\bullet,\bullet}^{a}}_{r\text{-many}}; \underbracket[.5pt]{\overbracket[.5pt]{\circ,\circ,\dots,\circ,\circ}^{a'}}_{s\text{-many}};
  \underbracket[.5pt]{\overbracket[.5pt]{\bullet,\bullet,\dots,\bullet,\bullet}^{a''}}_{t\text{-many}})$ & stating whether the integers $a$, $a'$, and $a''$ are related in the way we intend to represent in the function $f$; e.g., $f$ might encode the binary addition $a + a' = a''$. \\
\end{tabular}
\end{center}

While dealing with a function $f$ like the one above, in our discussion we will follow these typographical conventions.
When in $f$'s arguments we refer to a set of propositional variables, we directly use the name of the set, whereas when we refer to the binary encoding of a value, we use a bold font.
For example, with the notation \[f( a_{r-1},a_{r-2},0,\dots,0,a_2,a_1,a_0; \BooleanEncoding{\alpha};  \VarSet{a}'')\] we mean that the first group of $f$'s arguments includes the two most significant variables from $\VarSet{a}$, followed by zeros, and then the three least significant variables from $\VarSet{a}$ (the semicolon helps to identify the end of the first group and the start of the second);
the second group contains the Boolean encoding of $\alpha$ (or $\alpha$ itself, if it is a binary string); and
the third group includes all variables from $\VarSet{a}''$ in the order $\set{a''_{t-1},\dots,a''_0}$.

To simplify the definition of Boolean function variables having among their arguments binary\nbdash-represented numbers, for these numerical values we will adopt \emph{fixed\nbdash-length} representations.
This means that numerical arguments of Boolean functions will be represented over a predetermined number of bits, thereby allowing leading zeros.
This is different from the canonical binary representation adopted in the previous sections.

To introduce below the \QBSFs encoding the computations of machines deciding Hausdorff predicates, we need to define Boolean functions encoding the successor, the majority, the inequality, and the addition, relations between integers.
The proof details of the following \zcref*[typeset=name,nocap]{theo_SigmaFormula_arithmetic} are in \zcref{sec_detailed_proofs_sec_qbsf_hard_problems}.

\begin{lemma}[store=ArithmeticEncodedAsBooleanFunctions]
\label{theo_SigmaFormula_arithmetic}
\label{theo_SigmaFormula_majority}
\label{theo_SigmaFormula_successor}
\label{theo_SigmaFormula_inequality}
Let $u > 0$ be an integer,
let $\mi{succ}$, $\mi{less}$, and $\mi{neq}$, 
be Boolean functions of arity $2u$,
and let $\mi{add}$ be a Boolean function of arity $3u$. 
Additionally, let
$\VarSet{a}$, $\VarSet{a}'$, $\VarSet{a}''$, $\VarSet{b}$, and $\VarSet{b}'$,
be ordered sets of $u$ Boolean propositional variables, interpreted as encoding in binary over $u$ bits the numbers $a$, $a'$, $a''$, $b$, and $b'$, respectively.
Then, there exists a (quantifier-free) \CNF Boolean formula $\mu^u(\mi{succ},\mi{less},\mi{neq},\mi{add},
\VarSet{a},\VarSet{a}',\VarSet{a}'',\VarSet{b},\VarSet{b}')$, of polynomial size in $u$,
for which the only model $\mathcal{I}$ of the quantified formula $(\forall \VarSet{a},\VarSet{a}',\VarSet{a}'',\VarSet{b},\VarSet{b}') \PrefixMatrixSeparator \mu^u(\mi{succ},\mi{less},\mi{neq},\mi{add},
\linebreak[0] \VarSet{a},\VarSet{a}',\VarSet{a}'',\VarSet{b},\VarSet{b}')$, instantiating the functions $\mi{succ}$, $\mi{less}$, $\mi{neq}$, and $\mi{add}$, 
is such that:
\begin{itemize}[nosep,label=--,left=0pt]
  \item $\EvalInterpr{\mi{succ}}{\Interpr{I}}(\VarSet{a};\VarSet{b})$ is $\valtrue$ if and only if $a + 1 = b$;
  \item $\EvalInterpr{\mi{less}}{\Interpr{I}}(\VarSet{a};\VarSet{b})$ is $\valtrue$ if and only if $a < b$;
  \item $\EvalInterpr{\mi{neq}}{\Interpr{I}}(\VarSet{a};\VarSet{b})$ is $\valtrue$ if and only if $a \neq b$;
  \item $\EvalInterpr{\mi{add}}{\Interpr{I}}(\VarSet{a};\VarSet{a}',\VarSet{b})$ is $\valtrue$ if and only if $a + a' = b$.
\end{itemize}
\end{lemma}

To prove the hardness of \MaxSatSigmaFormula{c}, \LexMaxSigmaFormula{c}, and \LexMaxFuncSigmaFormula{c}, we will build \SigmaKFormulas{c} incorporating a \QBSF $\FormulaTMHausdorffParametricNumberOfBits{u}$.
This formula 
will state, for a \emph{fixed} Hausdorff predicate $\Language{D}$ and a \emph{fixed} string $w$, whether $\Language{D}(w,z) = 1$, as $z$ \emph{varies}---the meaning of the superscript $u$ is explained below.
As the value $z$ is a parameter and the string $w$ is fixed, the string $w$ will be hard\nbdash-coded within $\FormulaTMHausdorffParametricNumberOfBits{u}$, while the integer $z$ will be represented within $\FormulaTMHausdorffParametricNumberOfBits{u}$ via Boolean variables.
Observe that, for our purposes, we have to consider Hausdorff predicates $\Language{D}$ of double\nbdash-exponential length.
By this, for such Hausdorff predicates, the pairs $\pair{w,z}$ of interest 
have values of $z$ that are bounded by $2^{2^{\PolynomialLengthHausPredD(\StringLength{w})}} - 1$, for some polynomial $\PolynomialLengthHausPredD(\StringLength{w})$.
Notice that such values of $z$ need $2^{\PolynomialLengthHausPredD(\StringLength{w})}$ bits to be represented.
However, if we used Boolean \emph{propositional} variables to represent the bits encoding $z$, we would need exponentially\nbdash-many of them, and this would prevent us from obtaining polynomial reductions.
To overcome this, observe that the \emph{positions} of the bits of the binary encoding over $2^{\PolynomialLengthHausPredD(\StringLength{w})}$ bits of $z$ can be represented with only $\PolynomialLengthHausPredD(\StringLength{w})$ bits, when a $0$\nbdash-based \emph{index} is used.
Therefore, the formula $\FormulaTMHausdorffParametricNumberOfBits{u}$ will be designed so that the binary representation over $2^{\PolynomialLengthHausPredD(\StringLength{w})}$ bits of $z$ will be ``read'' from a \emph{bit\nbdash-indexing function} variable $\mi{ind}^{\PolynomialLengthHausPredD(\StringLength{w})}$ of arity $\PolynomialLengthHausPredD(\StringLength{w})$, appearing free in $\FormulaTMHausdorffParametricNumberOfBits{u}(\mi{ind}^{\PolynomialLengthHausPredD(\StringLength{w})})$, and whose meaning is:
\begin{center}
\begin{tabular}{l p{7.5cm}}
  $\mi{ind}^{\PolynomialLengthHausPredD(\StringLength{w})}(\underbracket[.5pt]{\overbracket[.5pt]{\bullet,\bullet,\dots,\bullet,\bullet}^{j}}_{{\PolynomialLengthHausPredD(\StringLength{w})}\text{-many}})$ & stating whether the $(j{+}1)$-th bit, starting from the least significant ones, of the fixed\nbdash-length binary representation over $2^{\PolynomialLengthHausPredD(\StringLength{w})}$ bits of $z$ is `$1$', or not.\\
\end{tabular}
\end{center}

Regarding the superscript ``$u$'' of $\FormulaTMHausdorffParametricNumberOfBits{u}$, within the formula $\FormulaTMHausdorffParametricNumberOfBits{u}$ we will need to compare and sum integers, therefore the formula $\mu$ of \zcref{theo_SigmaFormula_arithmetic} will be included in $\FormulaTMHausdorffParametricNumberOfBits{u}$.
The value $u$ just denotes that $\mu$ in $\FormulaTMHausdorffParametricNumberOfBits{u}$ will actually be $\mu^u$, that is, $\mu$ will be defined for integers represented over $u$\nbdash-many bits.

The properties of $\FormulaTMHausdorffParametricNumberOfBits{u}$ are stated in the following \zcref*[typeset=name,nocap]{theo_coding_nexp_machine_for Hausdorff_predicate}, whose proof, %
which is deferred to \zcref{sec_detailed_proofs_sec_qbsf_hard_problems},
provide the details of how $\FormulaTMHausdorffParametricNumberOfBits{u}$ is built.

\begin{lemma}[store=CodingNEXPMachineHausdorffPredicate]
\label{theo_coding_nexp_machine_for Hausdorff_predicate}
Let $\Language{D}$ be a $\Oracle{\NExpTime}{\SigmaP{c-1}}$ Hausdorff predicate, where $c \geq 1$ is an integer.
If $\Language{D}$ has double\nbdash-exponential (resp., exponential, polynomial) length,
then there exist polynomials $\PolynomialBoundingEverything(n)$ and $\PolynomialLengthHausPredD(n)$, with  $\PolynomialBoundingEverything(n) \geq \PolynomialLengthHausPredD(n)$, such that
$2^{2^{\PolynomialLengthHausPredD(n)}} - 1$ (resp., ${2^{\PolynomialLengthHausPredD(n)}} - 1$, ${{\PolynomialLengthHausPredD(n)}} - 1$) bounds the length of $\Language{D}$
and 
from every string $w$ can be built, in polynomial time \Wrt~$\StringLength{w}$, a \CNF \SigmaKFormula{c} $\FormulaTMHausdorffWithP(\mi{ind}^{\PolynomialLengthHausPredD(\StringLength{w})})$
having the following properties:
\begin{itemize}[nosep,label=--,left=0pt]
  \item for every integer $z \geq 0$,
      if $\Interpr{I}_z$ is an interpretation of $\mi{ind}^{\PolynomialLengthHausPredD(\StringLength{w})}$ as the bit\nbdash-indexing function of $z$ over $2^{\PolynomialLengthHausPredD(\StringLength{w})}$ bits, then $\Interpr{I}_z$ satisfies $\FormulaTMHausdorffWithP(\mi{ind}^{\PolynomialLengthHausPredD(\StringLength{w})})$ if and only if $\Language{D}(w,z) = 1$;
  \item $\FormulaTMHausdorffWithP(\mi{ind}^{\PolynomialLengthHausPredD(\StringLength{w})})$ includes the formula $\mu^{\PolynomialBoundingEverything(\StringLength{w})}$ and existentially quantifies its function variables and universally quantifies its propositional variables; and
  \item when $c$ is odd, $\FormulaTMHausdorffWithP$ is a \CNF \CompactSigmaKOddFormula{c}, and when $c$ is even, $\FormulaTMHausdorffWithP$ is a \CNF \CompactSigmaKEvenFormula{c}.
\end{itemize}
\end{lemma}

The complexity of the problems introduced above are summarized in the following three results.
The proofs are deferred to \zcref{sec_detailed_proofs_sec_qbsf_hard_problems}.
We start with the complexity of \LexMaxFuncSigmaFormula{c}, shown to be $\Oracle{\ExpTime}{\SigmaP{c}}$\CompleteSuffix.

\begin{theorem}[store=ComplexityLexMaxFunc]
\label{theo_complexity_LexMaxFunc}
Let $\Phi(\VarSet{f})$ be a satisfiable \SigmaKFormula{c}, where $\VarSet{f} = \set{f_n,\dots,f_1}$ is an ordered set of Boolean function variables.
Then, deciding whether the lexicographic\nbdash-maximum model of $\Phi(\VarSet{f})$ instantiates $f_1$ so that $\EvalInterpr{f_1}{\Interpr{I}}(0,\dots,0)$ is $\valtrue$/$\valfalse$ is $\Oracle{\ExpTime}{\SigmaP{c}}$\CompleteSuffix.
Hardness holds even if $\Phi(\VarSet{f})$ is a \CNF \SimpleSigmaKOddFormula{c} (resp., a \CNF \SimpleSigmaKEvenFormula{c}) when $c$ is odd (resp., even), and $\VarSet{f}$ contains a single Boolean function variable.
\end{theorem}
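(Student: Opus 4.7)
For membership in $\Oracle{\ExpTime}{\SigmaP{k}}$, the plan is to determine the lexicographic-maximum unfolding of $\VarSet{f}$ greedily, bit-by-bit from the most significant to the least. Although the unfolding has exponential length, an $\ExpTime$ machine has time for exponentially-many queries. For each position in turn, the machine queries its $\SigmaP{k}$ oracle with the partial assignment determined so far: ``does some model of $\Phi(\VarSet{f})$ extend this prefix and set the current bit to $\valtrue$?'' Relative to its own (exponential) input size, such a query is decided in $\SigmaP{k}$ by nondeterministically guessing the remaining bits of $\VarSet{f}$ and evaluating the induced \SigmaKSentence{k} (see~\cite{Lohrey2012,Luck2016-techrep}). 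After the process, the machine reads off $f_1(0,\dots,0)$ and answers accordingly.

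For the hardness, I would reduce any $\Language{L} \in \Oracle{\ExpTime}{\SigmaP{k}}$ to \LexMaxFuncSigmaFormula{k}. By \cref{theo_summary_equivalence_intermediate_levels_Hausdorff}, $\Oracle{\ExpTime}{\SigmaP{k}} = \BoundedHausdRedCLASS{2^{2^{\PolFunctions}}}{\Oracle{\NExpTime}{\SigmaP{k-1}}}$, so there are a polynomial $p$ and an $\Oracle{\NExpTime}{\SigmaP{k-1}}$ Hausdorff predicate $\Language{D}$ with $w \in \Language{L}$ iff $\hat z = \maxmod(\{z : 1 \leq z \leq 2^{2^{p(|w|)}} \land \Language{D}(w,z)=1\})$ is odd. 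Given $w$, I construct $\Phi_{\Language{L}(w)}(f)$ with a \emph{single} free function $f$ of arity $p(|w|){+}1$. The bits $f(1,\cdot)$ are interpreted as a $2^{p(|w|)}$-bit binary encoding of $z$, with $z$'s most significant bit at $f(1,1,\dots,1)$ and its least significant bit at $f(1,0,\dots,0)$. The formula imposes: (i) either all $f(1,\cdot)$ are $\valfalse$ (so $z=0$, a trivial fallback) or there exists an existentially-quantified $\mi{str}$ of polynomial arity $\lceil\log m\rceil+2$ (with $m=|w|{+}1{+}2^{p(|w|)}$) that encodes $w\#z$ on the input tape of the machine for $\Language{D}$ and satisfies $\Psi^{\Language{D}}_{|w|,m}(\mi{str})$ from \cref{theo_coding_nexp_machine}; and (ii) the parity clause $f(0,0,\dots,0) \leftrightarrow f(1,0,\dots,0)$. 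The link between $\mi{str}$ and $f(1,\cdot)$ is stated via one universally-quantified clause `$\mi{add}(\BooleanEncoding{|w|{+}1},\VarSet{j},\VarSet{p}) \rightarrow (\mi{str}(\VarSet{p};\BooleanEncoding{1}) \leftrightarrow f(1,\VarSet{j}))$' using the $\mi{add}$ function of \cref{theo_SigmaFormula_majority_successor_inequality}. Because $f(1,\cdot)$ occupies the lex-most-significant positions of $f$'s unfolding, the lex-max first drives $z$ up to $\hat z$ (the greatest value consistent with satisfaction, by Hausdorff monotonicity), after which the parity clause forces $f(0,0,\dots,0)=\hat z \bmod 2$. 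Hence $f_1(0,\dots,0)=\valtrue$ iff $\hat z$ is odd iff $w\in\Language{L}$; hardness for ``$\valfalse$'' follows by closure under complement. Applying \cref{theo_SigmaFormula_substitute_multiple_functions_with_one} to aggregate bound functions per quantifier and standard rewriting yield the required \CNF \SimpleSigmaKOddFormula{k} or \SimpleSigmaKEvenFormula{k}.

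The main obstacle is keeping the reduction polynomial-time even though $m$ is exponential in $|w|$. The key observation is that $\log m$ remains polynomial, so every occurrence of $m$ in $\Psi^{\Language{D}}_{|w|,m}$ and in the linking clauses arises only inside Boolean encodings of constants or as the arity of $\mi{str}$, never as the length of an enumerated conjunction. The Hausdorff exponent $2^{2^{\PolFunctions}}$ is thus absorbed into a single universally-quantified position variable $\VarSet{j}$ of $p(|w|)$ bits rather than unrolled into $2^{p(|w|)}$ conjuncts, which is precisely what makes the use of a free \emph{function} variable (instead of the polynomial set of propositional variables used in \cref{theo_complexity_LexMax}) essential.
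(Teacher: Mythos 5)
Your proposal is correct and follows essentially the paper's route: membership via exponentially many exponentially-sized queries to a $\SigmaP{k}$ oracle (your greedy bit-fixing in place of the paper's binary search over instantiations), and hardness via $\Oracle{\ExpTime}{\SigmaP{k}} = \BoundedHausdRedCLASS{2^{2^{\PolFunctions}}}{\Oracle{\NExpTime}{\SigmaP{k-1}}}$ together with the machine-encoding formula of \cref{theo_coding_nexp_machine} and a single free function $f$ whose lexicographically maximized block encodes the maximum Hausdorff index, read off through a parity bit (the paper instead puts $z$ in the low block of $f$'s unfolding, guards $\Psi^{\Language{D}'}$ with a single bypass bit $f(1;1,\dots,1)$ rather than your all-zero fallback, and shifts to $\Language{D}'(w,z)=\Language{D}(w,z{+}1)$ so that the distinguished value $f(0;0,\dots,0)$ is literally the least significant bit of the maximizing index). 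One detail needs fixing: your linking clause places $f(1,\VarSet{j})$ at input position $\StringLength{w}{+}1{+}j$ without complementing the index, which contradicts your stated convention that $z$'s most significant bit sits at $f(1,1,\dots,1)$; that convention is exactly what makes lexicographic maximization of the unfolding coincide with numerical maximization of $z$ (and keeps $\mi{str}$ in the most-significant-bits-first format that $\Psi^{\Language{D}}$ expects), so you need the paper's inversion step, i.e., a clause using $\mi{add}(\VarSet{j};\VarSet{j}';\BooleanEncoding{2^{p(\StringLength{w})}-1})$ before offsetting by $\StringLength{w}+1$.
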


The following \zcref*[typeset=name,nocap]{theo_complexity_LexMax} shows that \LexMaxSigmaFormula{c} is $\BoundedOracle{\ExpTime}{\SigmaP{c}}{\PolFunctions}$\CompleteSuffix.

\begin{theorem}[store=ComplexityLexMax]
\label{theo_complexity_LexMax}
Let $\Phi(\VarSet{x})$ be a satisfiable \SigmaKFormula{c}, where $\VarSet{x} = \set{x_{n},\dots,x_1}$ is an ordered set of Boolean propositional variables.
Then, deciding whether the lexicographic\nbdash-maximum model of $\Phi(\VarSet{x})$ assigns $\valtrue$/$\valfalse$ to $x_1$ is $\BoundedOracle{\ExpTime}{\SigmaP{c}}{\PolFunctions}$\CompleteSuffix.
Hardness holds even if $\Phi(\VarSet{x})$ is a \CNF \SimpleSigmaKOddFormula{c} (resp., a \CNF \SimpleSigmaKEvenFormula{c}) when $c$ is odd (resp., even).
\end{theorem}

The next \zcref*[typeset=name,nocap]{theo_complexity_MaxSat} shows that \MaxSatSigmaFormula{c} is $\BoundedOracle{\ExpTime}{\SigmaP{c}}{\LogFunctions}$\CompleteSuffix.

\begin{theorem}[store=ComplexityMaxSat]
\label{theo_complexity_MaxSat}
Let $\Phi(\VarSet{x})$ be a satisfiable \SigmaKFormula{c}, where $\VarSet{x} = \set{x_1,\dots,x_n}$ is a set of Boolean propositional variables.
Then, deciding whether the Hamming weight of the maximum\nbdash-Hamming\nbdash-weight model of $\Phi(\VarSet{x})$ is odd/even is $\BoundedOracle{\ExpTime}{\SigmaP{c}}{\LogFunctions}$\CompleteSuffix.
Hardness holds even if $\Phi(\VarSet{x})$ is a \CNF \SimpleSigmaKOddFormula{c} (resp., a \CNF \SimpleSigmaKEvenFormula{c}) when $c$ is odd (resp., even), and the interpretations $\Interpr{I}$ of $\VarSet{x}$ must satisfy $\EvalInterpr{x_i}{\Interpr{I}} \rightarrow \EvalInterpr{x_{i-1}}{\Interpr{I}}$, for all $i$ with $1 < i \leq n$.
\end{theorem}

\resumetoc

\subsection%
[Natural \texorpdfstring{$\mathrm{P}^{\mathrm{{NE{\scriptscriptstyle XP}}}[\LogFunctions]}$}{P\textasciicircum{}NEXP[Log]}-hard Problems]%
{Natural \texorpdfstring{$\mathbf{P}^{\mathbf{{NE{\scriptscriptstyle XP}}}\boldsymbol{[\LogFunctions]}}$}{}-hard Problems}
\label{sec_datalog_hard_problems}

In this section, we deal with some problems whose \PNExpLogh{}ness was left open in the literature, due to the lack of $\PNExpLogc$ problems to exhibit a reduction from.
These problems come from the area of reasoning over \DatalogPM ontologies, which are ontologies expressed via existential rules.

A \DatalogPM knowledge base~\cite{CaliGL2012}, or ontology, is a pair $\KB = \KBDetails$, where $\DB$ is a relational database, and $\Dep$ is a set of existential rules, or tuple generating dependencies (TGDs)~\cite{BeeriV1984}.
TGDs are first\nbdash-order logic formulas of the form
\(\forall \VarSet{X} \forall \VarSet{Y} \PrefixMatrixSeparator (\phi(\VarSet{X}, \VarSet{Y}) \ra \exists \VarSet{Z} \PrefixMatrixSeparator \psi(\VarSet{X}, \VarSet{Z})),\)
where $\VarSet{X}$, $\VarSet{Y}$, and $\VarSet{Z}$, are pairwise disjoint tuples of first\nbdash-order variables, and $\phi(\VarSet{X},\VarSet{Y})$ and $\psi(\VarSet{X}, \VarSet{Z})$ are (non-empty) conjunctions of atoms.
For brevity, we will write TGDs as `$\phi(\VarSet{X},\VarSet{Y}) \ra \exists \VarSet{Z} \PrefixMatrixSeparator \psi(\VarSet{X},\VarSet{Z})$'. 
Intuitively, TGDs allow us to derive from facts in the database, and from atoms already derived via previous TGD applications, additional true facts enriching the knowledge available in the database.
This is achieved via this simple step:
if the TGD body $\phi(\VarSet{X},\VarSet{Y})$ is satisfied by a particular instantiation of the variables over the set of atoms currently known to be true (tuples in the database are assumed to be true atoms), then the TGD head $\exists \VarSet{Z} \PrefixMatrixSeparator \psi(\VarSet{X},\VarSet{Z})$ must be satisfied as well.
In particular, if in the set of the true atoms there is no atom satisfying the TGD head, then a suitable atom is ``generated'' and added to the set of true atoms;
if needed, new first\nbdash-order objects are ``created'' as instantiations of the existentially quantified FO variables $\VarSet{Z}$ appearing in the rule head.
This freshly created objects are called \emph{null} values, intuitively because we do not know who they precisely are, but we do know that they must exist.

As an example, consider the following very simple TGD stating that if $D$ is a department in a university $U$, then there must be a head of the department who is a professor $P$ affiliated with the department:
\[
  \mi{dep}(D,U) \ra (\exists P) \PrefixMatrixSeparator \mi{prof}(P) \land \mi{affil}(P,D) \land \mi{head\mhyphen{}of\mhyphen{}dep}(P,D).
\]

Assume that, in the process of applying TGDs, at some point we do know that ``computer science'' is a department of a university, i.e., that the atom $\mi{dep}(\mi{cs},u)$ is $\valtrue$, but there is no atom stating who the head of the computer science department is.
Since the head of the TGD has to be satisfied, a fresh null value, say $\nu_1$, is created, and the atoms $\mi{prof}(\nu_1), \mi{affil}(\nu_1,cs),\mi{head\mhyphen{}of\mhyphen{}dep}(\nu_1,cs)$ are added to the set of true atoms.

Querying a knowledge base $\KB = \KBDetails$ via a Boolean Conjunctive Query (\BCQ) $\Query$ intuitively means querying via $\Query$ all the information that is stored in the database $\DB$ augmented with all the information that can be inferred from $\DB$ via the TGDs in $\Dep$.
We denote by $\KBDetails \models \Query$ the fact that the Boolean query $\Query$ is entailed by the database $\DB$ via the rules in $\Dep$.
A Union of Boolean Conjunctive queries (\UCQ) is a query $\Query = \Query_1 \lor \dots \lor \Query_n$ which is the disjunction of multiple \BCQs.
A $\KB = \KBDetails$ entails such a \UCQ $\Query = \Query_1 \lor \dots \lor \Query_n$ iff $\KBDetails \models \Query_i$ for some $i$.
For more comprehensive preliminaries and additional references on this topic, see \cite{TsamouraCMU2021,LukasiewiczMMMP2022}.

An explanation $\Expl$ for a \BCQ $\Query$ \Wrt the knowledge base $\KB$ is a set of facts from $\DB$ that is sufficient to entail the query via the TGDs in $\Dep$;
a minimal explanation, or \Minex, for $\Query$ is an explanation that satisfies some minimality criteria~\cite{CeylanLMV2019}.
Among these criteria we can find, for example, subset-inclusion and cardinality minimality~\cite{CeylanLMMV2021}.
An explanation $\Expl$ is subset\nbdash-minimal if there is no proper subset $\Expl' \subset \Expl$ that is also an explanation for the query.
An explanation $\Expl$ is cardinality\nbdash-minimal if there is no other explanation $\Expl'$ such that $\SetSize{\Expl'} < \SetSize{\Expl}$ (see~\cite{CeylanLMMV2021}).
If $\Expl$ is a cardinality\nbdash-minimal explanation, we say that $\Expl$ is a $\leq$\nbdash-\Minex.

Over minimal explanations various problems can be defined~\cite{CeylanLMMV2021}, and among the problems there presented, here we focus on the following ones (below, the symbol $\DLFrag{A}$ means that the TGDs of the input knowledge base have to fulfill an ``acyclicity'' condition (see, e.g.,~\cite{CalauttiGMT22,LukasiewiczMMMP2022})):
\begin{itemize}[nosep,label=--,left=0pt]
  \item \MinExRelMinCard{}($\DLFrag{A}$): for a knowledge base $\KB = \KBDetails$, a \UCQ $\Query$, and a fact $f \in \DB$, decide whether there exists a $\leq$\nbdash-\Minex for $\Query$ including $f$;
  \item \MinExNecMinCard{}($\DLFrag{A}$): for a knowledge base $\KB = \KBDetails$, a \UCQ $\Query$, and a fact $f \in \DB$, decide whether every $\leq$\nbdash-\Minex for $\Query$ includes $f$;
  \item \MinExIrrelMinCard{}($\DLFrag{A}$): for a knowledge base $\KB = \KBDetails$, a \UCQ $\Query$, and a family of sets of database facts $\SetOfSets{F} = \set{F_1,\dots,F_n}$, decide whether there exists a $\leq$\nbdash-\Minex $\Expl$ for $\Query$ such that $F_i \not\subseteq \Expl$, for all $1 \leq i \leq n$.
\end{itemize}

The complexity of these problems can be analyzed within different ``complexity settings'', depending on which parts of the input are kept constant.
For example, if we study the complexity of these problems by considering the query and the TGDs fixed, then we say that we are analyzing the data complexity of the problem~\cite{Vardi1982}.
On the other hand, if also the TGDs and the query may vary, then we are analyzing the combined complexity of the problem~\cite{Vardi1982}.
The \emph{ba}\nbdash-combined complexity setting is a variant of the combined complexity one, in which the query and the TGDs may vary, but the arity of the predicates of the relational schema is bounded by a constant.
For the above three problems, it was shown the membership in \PNExpLog, for the \emph{ba}\nbdash-combined and the combined complexity settings~\cite{CeylanLMMV2021}.
However, their \PNExpLogh{}ness has been left open, for both the \emph{ba}\nbdash-combined and the combined complexity settings.

Our hardness proofs will rely on reductions from the tiling problem, that we now recall.
A \emph{tiling system} $\TilingSyst{t} = \tup{T,V,H}$ consists of a set of tile-types $T = \{\tau_1,\dots,\tau_k\}$, together with the vertical and horizontal adjacency rules $V \subseteq T \times T$ and $H \subseteq T \times T$, respectively~\cite{Furer83}---the rules are represented by pairs of tile\nbdash-types that may be placed contiguously vertically or horizontally, respectively, on the surface.
A \emph{tiling} for $\TilingSyst{t}$ is a function $f\colon \{1,\dots,i,\dots\} \times \{1,\dots,j,\dots\} \mapsto T$ placing a tile of the specified type at position $(i,j)$ on the surface,%
\footnote{The finiteness of the function's domain depends on whether a finite or infinite surface has to be tiled.}
such that $\pair{f(i,j),f(i+1,j)} \subseteq H$ and $\pair{f(i,j),f(i,j+1)} \subseteq V$;
i.e., $f$ complies with the horizontal and vertical adjacency rules.
A tiling problem asks, for a tiling system, whether it admits a tiling covering the entirety of a given surface.
A tiling problem is said \emph{bounded} if the surface to cover is finite, \emph{unbounded} otherwise, and it is said \emph{constrained} if the covering must comply with some tiles already allocated (like, e.g., on the first row), \emph{unconstrained} otherwise.

The following problem is \NExpTimec~\cite{Furer83}:
for an instance $\tup{\TilingSyst{t},\tau,n}$, where $\TilingSyst{t} = \tup{T,V,H}$ is a tiling system, $\tau \in T$ is a tile\nbdash-type, and $n$ is an integer in \emph{unary} notation, decide whether $\TilingSyst{t}$ admits a tiling of the exponential square $2^n \times 2^n$ (i.e., a finite surface) that places a tile of type $\tau$ at the first position of the first row---the \NExpTimeh{}ness of this problem holds even when no initial tile\nbdash-type is imposed~\cite{Furer83}.

The \emph{Exponential Tiling Problem} (\ExpTilProb) \cite[Theorem~15]{EiterLP2016} and \cite[Lemma~3.2]{LukasiewiczMMMP2022} here considered is a variation of the one above, from which \ExpTilProb inherits its \NExpTimeh{}ness:
for an instance $\tup{\TilingSyst{t},s,n}$, where $\TilingSyst{t}$ is a tiling system, $s$ is an initial tiling condition, which is a sequence of tile\nbdash-types with $s[i]$ denoting the $i$\nbdash-th element of $s$, and $n$ is an integer in \emph{unary} notation, decide whether $\TilingSyst{t}$ admits a tiling of the exponential square $2^n \times 2^n$ that places a tile of type $s[i]$ at the $i$\nbdash-th position of the first row.

Given these preliminaries, we can now close the hardness results of the above mentioned problems, and we start by showing the \PNExpLogh{}ness of \MinExRelMinCard{}($\DLFrag{A}$).

\begin{theorem}[store=thmMinexRelABAHardness]
\label{thm_minex-rel_A_ba_hardness}
\MinExRelMinCard{}($\DLFrag{A}$) is \PNExpLogh in the $\mi{ba}$-combined (resp., combined) complexity.
Hardness holds even on instances whose queries are \BCQs.
\end{theorem}

\begin{proof}
Consider the following problem: 
given a tuple $\StringTup{t} = \tup{t_1,\dots,t_m}$ of $m$ (independent) instances $t_i = \tup{\TilingSyst{t}_i,s_i,n_i}$ of \ExpTilProb, decide whether the number of \yesinsts of \ExpTilProb in $\StringTup{t}$ is odd.
By \zcref{theo_theta_level_equals_polHausdorff,theo_hardness_oddity_general}, this problem is complete for $\LogOracle{\ExpTime}{\NPTime} = \PNExpLog$, and the hardness holds even for tuples $\StringTup{t}$ such that, if $t_{i+1}$ is a \yesinst of \ExpTilProb, then also $t_i$ is a \yesinst of \ExpTilProb, for all $i \geq 1$. 

To prove the theorem, we exhibit a polynomial reduction from the problem of deciding whether the number of \yesinsts of \ExpTilProb in $\StringTup{t}$ is odd to \MinExRelMinCard{}($\DLFrag{A}$).
In particular, from $\StringTup{t}$, we build an instance $\tup{\KB,\Query,f}$ of \MinExRelMinCard{}($\DLFrag{A}$), where $\KB = \KBDef{\DB,\Dep}$ is a knowledge base, $\Query$ is a \BCQ, and $f\in \DB$ is a fact.

To encode the instances $t_i$ in $\KB$, we use the TGD and database encoding presented in \cite[Theorem~15]{EiterLP2016}, \cite[Lemma~23]{EiterTechRep2016}, and \cite[Lemma~3.2]{LukasiewiczMMMP2022}.
More specifically, we have $m$ sets of bounded\nbdash-arity acyclic TGDs $\Dep_{\TilingSyst{t}_i,n_i,|s_i|}$, for $i = 1, \dots, m$, to encode the structure of the tiling task, that is, these rules describe how a correct tiling looks like when it has to cover the exponential square $2^{n_i} \times 2^{n_i}$ and has an initial condition of \emph{length} $|s_i|$;
and there are $m$ sets of facts $\DB_{\TilingSyst{t}_i}$, for $i = 1, \dots, m$, to encode the adjacency rules of $\TilingSyst{t}_i$.
There are also $m$ sets of facts $\DB_{s_i}$, for $i = 1, \dots, m$, encoding the initial tiling conditions $s_i$.
The $m$ sets of TGDs and facts for each instance $t_i$ of \ExpTilProb are made disjoint by using disjoint predicates (we index them with a superscript~$i$);
that is, we have
$\Dep^i = {\Dep_{\TilingSyst{t}_i,n_i,|s_i|}}^{i}$, and $\DB^i = {\DB_{\TilingSyst{t}_i}}^{i} \cup {\DB_{s_i}}^{i}$, for $i = 1, \dots, m$.
The rules $\Dep^i$ and the database $\DB^i$, for $i = 1, \dots, m$, are such that the tiling system $\TilingSyst{t}_i$ admits a tiling of the exponential square $2^{n_i} \times 2^{n_i}$ with initial tiling condition $s_i$ iff $\DB^i$ entails the fact $\mi{yes}^i()$ via $\Dep^i$ \cite{EiterLP2016,EiterTechRep2016,LukasiewiczMMMP2022}.
By inspection of the reductions there proposed, the construction can easily be amended so that the fact entailed is $\mi{yes}(i)$;
e.g., this can be achieved by adding the constant `$i$' in the facts encoding the initial condition for $\TilingSyst{t}_i$, so that this constant can be propagated toward $\mi{yes}(i)$.
We can now provide the details of the reduction.

\smallskip

\noindent
\emph{(The database).}
The database is $\DB = \bigcup_{i = 1}^{m} \DB^i \cup \bigcup_{i = 1}^{m} \set{\mi{dummy\mhyphen{}yes}(i),\mi{aux}(i)} \cup \set{\mi{odd}()} \cup \set{\mi{even}(i) \mid 2 \leq i \leq m \land i \text{ is even}} \cup \set{\mi{succ}(i,i+1) \mid 1 \leq i \leq m-1}$, where $i$ and $i+1$ in the atoms are numerical constants.

\smallskip

\noindent
\emph{(The program).}
The program contains the TGDs $\Dep^i$, for all $1 \leq i \leq m$, plus the following rules:
\begin{align*}
\mi{dummy\mhyphen{}yes}(X) \land \mi{aux}(X) & \ra \mi{check}(X) \\
\mi{yes}(X) \land \mi{odd}() & \ra \mi{check}(X) \\
\mi{yes}(X) \land \mi{even}(X) & \ra \mi{check}(X) \\
\mi{yes}(X) \land \mi{even}(X) \land \mi{succ}(Y,X) & \ra \mi{check}(Y).
\end{align*}

\smallskip

\noindent
\emph{(The query).}
The query is $\Query = \DB^1 \land \dots \land \DB^m \land \mi{Even} \land \mi{Succ} \land \mi{check}(1) \land \dots \land \mi{check}(m)$, where with $\DB^i$ in the query we mean the conjunction of all database facts from $\DB^i$, and $\mi{Even}$ and $\mi{Succ}$ denote the conjunction of all the $\mi{even}$\nbdash-facts and $\mi{succ}$\nbdash-facts in the database, respectively. 

\smallskip

\noindent
\emph{(The distinguished fact).}
We take $f = \mi{odd}()$.

\smallskip

Observe that the program has bounded arity and is acyclic, that the reduction is such that $\KBDef{\DB,\Dep} \models \Query$, because $(\bigcup_{i = 1}^{m} \DB^i \cup \linebreak[0] \bigcup_{i = 1}^{m} \set{\mi{dummy\mhyphen{}yes}(i),\linebreak[0]\mi{aux}(i)} \cup \set{\mi{even}(i) \mid 2 \leq i \leq m \land i \text{ is even}} \cup \set{\mi{succ}(i,i+1) \mid 1 \leq i \leq m-1}) \subseteq \DB$, and the reduction can be computed in polynomial time.

\medskip

\noindent
We now prove that the number of \yesinsts of \ExpTilProb in $\StringTup{t}$ is odd iff there exists a $\leq$\nbdash-\Minex for $\KBDef{\DB,\Dep} \models \Query$ containing $f = \mi{odd}()$.
Let $\hat z$ denote the maximum index $i$ for which $t_i$ is a \yesinst of \ExpTilProb.
Since we are assuming that when $t_{i+1}$ is a \yesinst of \ExpTilProb, also $t_i$ is a \yesinst of \ExpTilProb, the number of \yesinsts of \ExpTilProb in $\StringTup{t}$ is odd iff $\hat z$ is odd (if none of the $t_i$ is \yesinst of \ExpTilProb, then we assume $\hat z = 0$).
We start by highlighting a property of minimal\nbdash-cardinality explanations.

\medbreak

\begin{proofsubproperty}
\label{subprop_facts_in_explanations}
Let $\Expl$ be a $\leq$\nbdash-\Minex for $\Query$.
Then, for all $j$ with $1 \leq j\leq \hat z$, the facts $\mi{dummy\mhyphen{}yes}(j)$ and $\mi{aux}(j)$ do not belong to $\Expl$, instead, for all $j$ with $\hat z < j \leq m$, the facts $\mi{dummy\mhyphen{}yes}(j)$ and $\mi{aux}(j)$ belong to $\Expl$.
\end{proofsubproperty}

\begin{subproof}
Let us consider the case in which $j \leq \hat z$.
By assumption on the instances $\StringTup{t}$, we have that $t_j$ is a \yesinst of \ExpTilProb.
There are two cases: either (a)~$j$ is even, or (b)~$j$ is odd.

Consider Case~(a).
Since $j$ is even, $\mi{even}(j)$ is in $\Expl$, for otherwise $\Expl$ would not entail $\Query$.
Because $t_j$ is a \yesinst of \ExpTilProb, the fact $\mi{check}(j)$ can be obtained via the TGD `$\mi{yes}(X) \land \mi{even}(X) \ra \mi{check}(X)$', as $\mi{yes}(j)$ is entailed by the TGDs in $\Sigma^j$.
Hence, $\mi{dummy\mhyphen{}yes}(j)$ and $\mi{aux}(j)$ are not needed in $\Expl$ to entail $\mi{check}(j)$.
Thus, they do not belong to $\Expl$, because $\Expl$ is a cardinality\nbdash-minimal explanation for $\Query$.

Consider now Case~(b).
There are two subcases: either (i)~$j < \hat z$, or (ii)~$j = \hat z$.

Let us focus on Case~(i).
Since $j < \hat z$, we have that $j + 1 \leq \hat z$ and $j + 1$ is even (as $j$ is odd).
By $j + 1 \leq \hat z$, $t_{j+1}$ is a \yesinst of \ExpTilProb.
Moreover, $\Expl$ must contain $\mi{even}(j+1)$ (as $j+1$ is even) and $\mi{succ}(j,j+1)$, for otherwise $\Expl$ would not imply $\Query$.
Thus, $\mi{check}(j)$ is entailed by the TGD `$\mi{yes}(X) \land \mi{even}(X) \land \mi{succ}(Y,X) \ra \mi{check}(Y)$', because $\mi{yes}(j+1)$ is entailed by the TGDs in $\Sigma^{j+1}$.
Hence, $\mi{dummy\mhyphen{}yes}(j)$ and $\mi{aux}(j)$ are not needed in $\Expl$, which therefore does not contain them (see above).

For Case~(ii), we have $j = \hat z$, hence we refer to the index $\hat z$.
Since $t_{\hat z}$ is a \yesinst of \ExpTilProb, $\mi{yes}(\hat z)$ is entailed by the TGDs in $\Sigma^{\hat z}$.
We now show that neither $\mi{dummy\mhyphen{}yes}(\hat z)$ nor $\mi{aux}(\hat z)$ is in $\Expl$.

Let us assume by contradiction that at least one of the two facts $\mi{dummy\mhyphen{}yes}(\hat z)$ or $\mi{aux}(\hat z)$ is in $\Expl$.
There are two cases: either exactly one of them is in $\Expl$, or both of them are in $\Expl$.
Remember that $\Expl$ is an explanation for $\Query$, hence $\KBDef{\Expl,\Dep} \models \Query$.
If only one of the two facts, let us name it $g$, is in $\Expl$, then $\mi{check}(\hat z)$ must be entailed by the TGD `$\mi{yes}(X) \land \mi{odd}() \ra \mi{check}(X)$'.
Thus, $g$, although in $\Expl$, is not involved in the entailment of $\mi{check}(\hat z)$, hence $g$ is not necessary in $\Expl$.
By this, $\Expl$ is not cardinality\nbdash-minimal: a contradiction.

If both $\mi{dummy\mhyphen{}yes}(\hat z)$ and $\mi{aux}(\hat z)$ are in $\Expl$, we can obtain from $\Expl$ a smaller explanation for $\Query$.
Indeed, let us define $\Expl' = (\Expl \setminus \set{\mi{dummy\mhyphen{}yes}(\hat z),\mi{aux}(\hat z)}) \cup \set{\mi{odd}()}$.
Since $\Expl \supseteq \set{\mi{dummy\mhyphen{}yes}(\hat z),\mi{aux}(\hat z)}$, we have that $\SetSize{\Expl'} \leq (\SetSize{\Expl} - 2) + 1$: a contradiction, as $\Expl$ is assumed to be cardinality\nbdash-minimal.

\medskip

Let us now consider the case in which $j > \hat z$.
By assumption on $\StringTup{t}$, the instance $t_j$ is a \noinst of \ExpTilProb.
Thus, $\mi{yes}(j)$ is not entailed by the TGDs $\Sigma^j$, and hence $\mi{dummy\mhyphen{}yes}(j)$ and $\mi{aux}(j)$ need to be in $\Expl$ in order for $\mi{check}(j)$ to be entailed via the TGD `$\mi{dummy\mhyphen{}yes}(X) \land \mi{aux}(X)  \ra \mi{check}(X)$'.
\end{subproof}

\medbreak

We now prove that $\hat z$ is odd iff there is a $\leq$\nbdash-\Minex for $\Query$ including $\mi{odd}()$.
First, we show that, regardless of the actual value of $\hat z \geq 0$, there always is a single cardinality\nbdash-minimal explanation $\Expl$ for $\Query$.
Indeed, all explanations for $\Query$ must include all facts from $\bigcup_{i = 1}^{m} D^i \cup \set{\mi{succ}(i,i+1) \mid 1 \leq i \leq m-1} \cup \set{\mi{even}(i) \mid 2 \leq i \leq m \land i \text{ is even}}$.
Moreover, by \zcref{subprop_facts_in_explanations}, for every $\hat z \geq 0$, each $\leq$\nbdash-\Minex for $\Query$ does not include $\mi{dummy\mhyphen{}yes}(j)$ and $\mi{aux}(j)$, for $j \leq \hat z$, and includes $\mi{dummy\mhyphen{}yes}(j)$ and $\mi{aux}(j)$, for $j > \hat z$.
So, if there were more than one cardinality\nbdash-minimal explanation, there would be two $\leq$\nbdash-\Minexs differing only for the presence or not of $\mi{odd}()$ in the explanation, because $\mi{odd}()$ is the \emph{only} other fact besides those in $\DB^i$, the $\mi{succ}$-, $\mi{even}$-, $\mi{dummy\mhyphen{}yes}$-, and $\mi{aux}$\nbdash-facts.
However, this would contradict that both these sets are cardinality\nbdash-minimal explanations.

If $\hat z$ is odd, then $\hat z \geq 1$.
By \zcref{subprop_facts_in_explanations}, $\mi{dummy\mhyphen{}yes}(\hat z)$ and $\mi{aux}(\hat z)$ are not in $\Expl$, hence $\mi{check}(\hat z)$ is entailed via the TGD `$\mi{yes}(X) \land \mi{odd}() \ra \mi{check}(X)$'.
By this, $\mi{odd}()$ must be in the single $\leq$\nbdash-\Minex $\Expl$.

If $\hat z$ is even, there are two cases:
either $\hat z = 0$, or $\hat z \geq 2$.
When $\hat z = 0$, none of the $t_i$ is a \yesinst of \ExpTilProb, hence none of the $\mi{yes}$\nbdash-facts is entailed by the TGDs $\Sigma^i$.
Thus, all the $\mi{check}$\nbdash-facts needed to entail the query are obtained via the TGD `$\mi{dummy\mhyphen{}yes}(X) \land \mi{aux}(X) \ra \mi{check}(X)$', and not via the TGD `$\mi{yes}(X) \land \mi{odd}() \ra \mi{check}(X)$'.
By this, the only $\leq$\nbdash-\Minex $\Expl$ for $\Query$ does not contain $\mi{odd}()$.
On the other hand, when $\hat z \geq 2$, the entailment of the fact $\mi{check}(\hat z)$ is via the TGD `$\mi{yes}(X) \land \mi{even}(X) \ra \mi{check}(X)$', which does not require the presence of the fact $\mi{odd}()$ in $\Expl$.
\end{proof}

From the previous theorem, we can obtain the hardness results for the other two problems.

\begin{corollary}
\MinExNecMinCard{}($\DLFrag{A})$ and \MinExIrrelMinCard{}($\DLFrag{A}$) are \PNExpLogh in the $\mi{ba}$-com\-bined (resp., combined) complexity.
Hardness holds even on instances whose queries are \BCQs.
\end{corollary}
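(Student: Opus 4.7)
The plan is to recycle, with a one-line modification, the reduction from the proof of Theorem~\ref{thm_minex-rel_A_ba_hardness}, exploiting the key structural fact established there that the knowledge base $\KBDef{\DB,\Dep}$ and the query $\Query$ produced from a monotone tuple $\StringTup{t}$ admit a \emph{single} $\leq$-\Minex, namely $\Expl_{\hat z}$, whose containment of $\mi{odd}()$ exactly mirrors the parity of $\hat z$ (equivalently, of the number of \yesinsts of \ExpTilProb in $\StringTup{t}$).

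For \MinExNecMinCard{}($\DLFrag{A}$), I would invoke the very same reduction of Theorem~\ref{thm_minex-rel_A_ba_hardness} unchanged, with the distinguished fact $f = \mi{odd}()$. Because there is a unique $\leq$-\Minex, the two questions ``does there exist a $\leq$-\Minex containing $\mi{odd}()$?'' and ``does every $\leq$-\Minex contain $\mi{odd}()$?'' collapse into the same condition on the produced instances, namely $\mi{odd}() \in \Expl_{\hat z}$. Hence the reduction already exhibited transfers \PNExpLogh{}ness to \MinExNecMinCard{}($\DLFrag{A}$); the query remains a \BCQ and the bounded-arity property is untouched, so both the $\mi{ba}$-combined and combined complexity settings are covered.

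For \MinExIrrelMinCard{}($\DLFrag{A}$), I would keep the same $\KBDef{\DB,\Dep}$ and $\Query$ and set the forbidden family $\SetOfSets{F} = \set{\set{\mi{odd}()}}$. Again by the uniqueness of $\Expl_{\hat z}$, some $\leq$-\Minex avoids $\SetOfSets{F}$ iff $\mi{odd}() \notin \Expl_{\hat z}$ iff $\hat z$ is even, i.e.\ iff the number of \yesinsts in $\StringTup{t}$ is even. I would then appeal to the closure of $\PNExpLog = \LogOracle{\PTime}{\NExpTime}$ under complement (it is a deterministic oracle class, so Karp-hardness immediately transfers to complements) to conclude that the even-count variant on monotone tuples is itself \PNExpLogh, and therefore so is \MinExIrrelMinCard{}($\DLFrag{A}$); the reduction is polynomial and preserves bounded arity.

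No genuinely new obstacle arises: the substantive engineering that makes $\Expl_{\hat z}$ unique and aligns $\mi{odd}() \in \Expl_{\hat z}$ with the parity of $\hat z$ was already carried out within Theorem~\ref{thm_minex-rel_A_ba_hardness}. The only points requiring care are verifying that the trivial substitution of the input parameter (from a distinguished fact to a singleton forbidden family) preserves polynomial-time computability and the bounded-arity property, and explicitly invoking closure of $\PNExpLog$ under complement when passing from the odd-count to the even-count source problem in the irrelevance case.
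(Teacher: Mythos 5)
Your proposal is correct and takes essentially the same route as the paper: hardness of \MinExNecMinCard{}($\DLFrag{A}$) follows from the fact that the reduction of \cref{thm_minex-rel_A_ba_hardness} yields a unique cardinality-minimal explanation, and hardness of \MinExIrrelMinCard{}($\DLFrag{A}$) follows via closure of \PNExpLog under complement. The only cosmetic difference is where the complementation is applied—you complement the odd/even counting source problem and give the singleton forbidden family $\set{\set{\mi{odd}()}}$ explicitly, while the paper complements \MinExNecMinCard{}($\DLFrag{A}$) directly—but this is the same argument.
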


\begin{proof}
By inspection of the proof of \zcref{thm_minex-rel_A_ba_hardness}, \MinExNecMinCard{}($\DLFrag{A}$) is proven \PNExpLogh, as the reduction in the proof is such that only one cardinality\nbdash-minimal explanation exists.
From the hardness of \MinExNecMinCard{}($\DLFrag{A}$) follows the \PNExpLogh{}ness of \MinExIrrelMinCard{}($\DLFrag{A}$) as well, because \PNExpLog is closed under complement.
\end{proof}

\begin{appendices}

\section{Simple Maths Complexity}
\label{sec_maths_complexity}

We review some results on the complexity of evaluating a few mathematical functions over input \emph{integers} represented in binary or unary;
the output is always represented in binary.
Results are summarized in \zcref{tab_maths_complexity}.

{%
\renewcommand{\arraystretch}{1.5}
\begin{table}[t]
\centering%
\begin{tabular}{ l m{3.16cm} l l l }
\toprule
  \textbf{Function} & \textbf{Input} & \textbf{Output size} & \textbf{Space compl.} & \textbf{Time compl.} \\
  \midrule
  \rowcolor[gray]{0.925}
  $a + b$ & Integers $a$ and $b$ & $\max \set{\StringLength{a}{,}\StringLength{b}} + 1$ & $O(\log (\StringLength{a} {+} \StringLength{b}))$ & $O(\mi{poly}(\StringLength{a} {+} \StringLength{b}))$ \\
  \rowcolor[gray]{0.975}
  $a - b$ & Integers $a$ and $b$\;\!\raisebox{.25ex}{*} & $\StringLength{a}$ & $O(\log (\StringLength{a} {+} \StringLength{b}))$ & $O(\mi{poly}(\StringLength{a} {+} \StringLength{b}))$ 
  \\
  \rowcolor[gray]{0.925}
  $a \cdot b$ & Integers $a$ and $b$ & $\StringLength{a} + \StringLength{b}$ & $O(\log (\StringLength{a} {+} \StringLength{b}))$ & $O(\mi{poly}(\StringLength{a} {+} \StringLength{b}))$ \\
  \rowcolor[gray]{0.975}
  $\lfloor a / b \rfloor$ & Integers $a$ and $b$ & $\max \set{0{,}\StringLength{a} {-} \StringLength{b}} + 1$  & $O(\log (\StringLength{a} {+} \StringLength{b}))$ & $O(\mi{poly}(\StringLength{a} {+} \StringLength{b}))$ \\
  \rowcolor[gray]{0.925}
  $a^b$ & Integers $a$ and $b$ & $O(2^{\StringLength{b}}\cdot \StringLength{a})$ & $O(\StringLength{a^b})$ & $O(\mi{poly}({\StringLength{a^b}}))$ \\
  \rowcolor[gray]{0.975}
  $\lfloor \log a \rfloor + 1$ & An integer $a$ & $\lfloor \log \StringLength{a} \rfloor + 1$ & $O(\log \StringLength{a})$ & $O(\StringLength{a})$ \\
  \rowcolor[gray]{0.925}
  $p(\StringLength{w})$ & An integer $n$ encoded in unary as the length of the input string $w$\;\!\raisebox{.25ex}{\textdagger} & $O(\log \StringLength{w})$ & $O(\StringLength{p(\StringLength{w})})$ & $O(\mi{poly}({\StringLength{p(\StringLength{w})}}))$ \\
  \rowcolor[gray]{0.975}
  $\iExp{i}{p(\StringLength{w})}$ & An integer $n$ encoded in unary as the length of the input string $w$\;\!\raisebox{.25ex}{\textdaggerdbl} & $\iExp{i-1}{p(\StringLength{w})} + 1$ & $O(\iExp{i-2}{p(\StringLength{w})})$ & $O(\iExp{i-1}{p(\StringLength{w})})$ \\
  \bottomrule
\end{tabular}
\caption{%
Complexity of evaluating some mathematical functions.
Input integers are non-negative and represented in binary, unless stated otherwise.
The intermediate results and integer variables in the algorithms, likewise their output, are represented in binary.
For integers $a$ and $b$, we denote by $\StringLength{a}$ and $\StringLength{b}$ the size of their binary representations, whereas, for a string $w$, we denote by $\StringLength{w}$ its length.
By $p(n)$ we mean the \emph{value} of the polynomial $p$ at $n$, hence $p(\StringLength{w})$ is the value of $p$ at $n = \StringLength{w}$.
For a function $f$, we use $\StringLength{f(n)}$ to refer to the size of the binary representation of the \emph{value} of $f$ at $n$.
For example, for the time complexity of exponentiation, $O(\mi{poly}({\StringLength{a^b}}))$ means that it is polynomial in the binary representation size of the value $a^b$. 
Complexity results for $a^b$ assume the use of \zcref{alg_exponentiation_by_squaring}.
Complexity results involving evaluations of polynomials assume the use of \zcref{alg_polynomial_evaluation} (Horner's rule for polynomials' evaluation would not asymptotically improve over this).
\raisebox{.25ex}{*}Integers $a$ and $b$ are such that $a \geq b$.
\raisebox{.25ex}{\textdagger}The polynomial $p$ is fixed.
\raisebox{.25ex}{\textdaggerdbl}The polynomial $p$ and the integer $i \geq 1$ are fixed.}
\label{tab_maths_complexity}
\end{table}
}

\paragraph{Arithmetic.}\hspace{0pt}\newline
(Iterated) addition, subtraction, (iterated) multiplication, and division, are computable in logspace, and so in polynomial time, 
in the \emph{size} of the \emph{binary}\nbdash-represented input numbers \mbox{(see \cite{Immerman1999,HesseAB2002}, and references therein)}.

\begin{algorithm}[t]
\caption{An exponentiation by squaring algorithm.}
\label{alg_exponentiation_by_squaring}

\BlankLine

\KwIn{Integers $a$ and $b$ (in binary).}
\KwOut{The value $a^b$ (in binary).}

\BlankLine

\nonl \Func{\AlgPow{$a,b$}}{

    $\mi{result} \leftarrow 1$\;
    $\mi{multiplier} \leftarrow a$\;
    
    \While{$b > 0$}{
        \If{$b \bmod 2 = 1$}{\nllabel{alg_exp_mod_div_by_two}
            $\mi{result} \leftarrow \mi{result} \cdot \mi{multiplier}$\;
        }
    
        $b \leftarrow \lfloor b/2 \rfloor$ \nllabel{alg_exp_int_div_by_two} \;
        $\mi{multiplier} \leftarrow \mi{multiplier} \cdot \mi{multiplier}$\;
    }
    
    \Return{$\mi{result}$}\;
}
\end{algorithm}

\paragraph{Exponentiation.}\hspace{0pt}\newline
Let us now focus on the following problem:
given binary-represented integers $a$ and $b$, compute $a^b$.
Let $\StringLength{a}$ and $\StringLength{b}$ denote the size of the binary representation of $a$ and $b$, respectively.
For the reader's convenience, \zcref{alg_exponentiation_by_squaring} reports an exponentiation by squaring algorithm;
remember that `$b \bmod 2$' (\zcref{alg_exp_mod_div_by_two}) and `$\lfloor b/2 \rfloor$' (\zcref{alg_exp_int_div_by_two}) can be evaluated in linear time, as these can be carried out by looking at, and by dropping, the least significant bit of $b$, respectively.
An exponentiation by squaring algorithm evaluates $a^b$ via $2{\cdot}\StringLength{b}$ multiplications.
The multiplications can be evaluated in logarithmic space and in polynomial time (see the paragraph ``Arithmetic'') in the size of the intermediate results, which are the variables $\mi{result}$ and $\mi{multiplier}$ in \zcref{alg_exponentiation_by_squaring}.
The intermediate results, however, can be considerably bigger than the input during the exponentiation procedure.

The size of the intermediate results is bounded by the output size, which is $O(2^{\StringLength{b}}\cdot\StringLength{a})$, i.e., exponential in the input size---simply observe that $a^b$ is defined as the multiplication of $a$ by itself $b$ times, each multiplication adds $\StringLength{a}$ bits to the result, and the value $b$ is $O(2^{\StringLength{b}})$.
Besides the space to store the operands, extra space is needed to carry out the multiplications.
This extra work space is logarithmic in the exponentially sized operands, and hence polynomial in the input size.
Hence, the size of the intermediate results is dominant, and the space complexity of this algorithm is $O(2^{\StringLength{b}}\cdot\StringLength{a})$.
The algorithm's time complexity is polynomial in the size of the intermediate results, as it performs a linear number of multiplications, each taking polynomial time \Wrt the operands' size.
By this, the time complexity of this algorithm is a polynomial of $O(2^{\StringLength{b}}\cdot\StringLength{a})$.

\paragraph{Logarithms.}\hspace{0pt}\newline
We now focus on the computation of
the integer logarithm to the base~$2$ of a binary\nbdash-represented integer $a$.
We consider the function $\lfloor \log a \rfloor + 1 = \lceil \log (a+1) \rceil$, as this is just the number of bits needed to represent $a$.
This can easily be evaluated by counting the number of bits of $a$'s binary representations excluding all the leading zeroes.
This can be carried out in linear time, and in logarithmic space (to store the counter), which is also the output size (see, e.g., \cite[Chapter~16, ``Incrementing a binary counter'']{CormenLRS2022}).

\begin{algorithm}[t]
\caption{A simple algorithm to evaluate (fixed) polynomials.}
\label{alg_polynomial_evaluation}

\BlankLine

\KwIn{A string $w$ whose length $\StringLength{w}$ represents in unary notation the number $n = \StringLength{w}$.}
\KwOut{The value $p(n)$, represented in binary, of the (fixed) polynomial $p$ at $n$.}

\BlankLine

\tcc*[h]{The polynomial $p$ is fixed; $k$ is its degree, and ${\set{a_i}}_{0 \leq i \leq k}$ are its $i$\nbdash-th degree coefficients; these values are directly encoded in the algorithm}

\BlankLine

\nonl \Func{\AlgFixedPolEval{$w$}}{
    
    $n \leftarrow \StringLength{w}$ \nllabel{alg_pol_from_unary_to_binary}\;
    $\mi{result} \leftarrow a_0$\;
    $\mi{power}\_\mi{of}\_n \leftarrow 1$\;
    
    \For{$i \leftarrow 1$ \KwTo $k$}{
        $\mi{power}\_\mi{of}\_n \leftarrow \mi{power}\_\mi{of}\_n \cdot n$\;
        $\mi{result} \leftarrow \mi{result} + a_i \cdot \mi{power}\_\mi{of}\_n$\;
    }
    
    \Return{$\mi{result}$}\;
}
\end{algorithm}

\paragraph{Polynomials.}\hspace{0pt}\newline
Let $p(n) = a_0 + a_1 {\cdot} n + a_2 {\cdot} n^2 + \dots + a_k {\cdot} n^k$ be a univariate polynomial of degree~$k$, where ${\set{a_i}}_{0 \leq i \leq k}$ are its $i$\nbdash-th degree coefficients.
We assume $p$ to be fixed, that is, $p$ is not part of the input to the algorithm.
Let us consider the complexity of computing the value $p(n)$, i.e., the value of $p$ at $n$, where $n$ is the length of an input string $w$.
Consider the (not so) naive \zcref{alg_polynomial_evaluation} for polynomials' evaluation---Horner's rule for polynomials' evaluation does not asymptotically improve over this.
\zcref[S]{alg_polynomial_evaluation} first converts the input from unary to binary (\zcref{alg_pol_from_unary_to_binary});
this can be carried out in linear time and logarithmic space (see, e.g., \cite[Chapter~16, ``Incrementing a binary counter'']{CormenLRS2022}).
Then, \zcref{alg_polynomial_evaluation} evaluates $p(n)$ via $2 {\cdot} k$ multiplications and $k$ additions over the intermediate results, which are the variables $\mi{result}$ and $\mi{power}\_\mi{of}\_n$.
These operations are computable in logarithmic space and polynomial time in the size of the intermediate results (see the paragraph ``Arithmetic'').

Let $a = \max \set{a_0,a_1,\dots,a_k}$.
We have $p(n) = \sum_{i = 0}^{k} a_i \cdot n^i \leq \sum_{i = 0}^{k} a \cdot n^i \leq \sum_{i = 0}^{k} a \cdot n^k = (k+1) \cdot a \cdot n^k$.
By this, if $\StringLength{a}$, $\StringLength{k+1}$, and $\StringLength{n}$, denote the binary\nbdash-representation size of the values $a$, $k+1$, and $n$, respectively, the size of $(k+1) \cdot a$ is $\StringLength{k + 1} + \StringLength{a}$ and the size of $n^k$ is $k \cdot \StringLength{n}$, by which the overall output size $\StringLength{p(n)}$ is bounded by $\StringLength{k+1} + \StringLength{a} + k \cdot \StringLength{n}$ (see \zcref{tab_maths_complexity} for the output size of a multiplication, and remember that $n^k$ is the multiplication of $n$ by itself $k$ times).
Since $n = \StringLength{w}$ and hence $\StringLength{n} = \lfloor\log \StringLength{w}\rfloor + 1$, the output size $\StringLength{p(n)}$ is $O(\log \StringLength{w})$, as $k$ and $a$ are constants due to $p$ being a fixed polynomial.

The space complexity of the algorithm is characterized by the space needed to store the intermediate results, plus the extra work space needed to carry out the arithmetic operations over them.
Clearly, the output size bounds the size of the intermediate results.
The extra space needed to carry out the operations, on the other hand, is logarithmic in the logarithmically sized operands, and hence doubly\nbdash-logarithmic in the input size.
Thus, the size of the intermediate results is dominant, and the space complexity of this algorithm is $O(\log \StringLength{w})$.

The algorithm's time complexity is polynomial in the intermediate results' size, as we perform a constant number of operations over the intermediate results (remember that the polynomial $p$ is fixed), each of them taking polynomial time in the operands' size.
Since the operands' size is bounded by the size of the output, the time complexity of this algorithm is $O(\mi{poly}({\StringLength{p(\StringLength{w})}}))$.

\paragraph{Iterated exponentials of polynomials.}\hspace{0pt}\newline
Consider the computation of $\iExp{i}{p(n)}$, for a fixed integer $i \geq 1$ and a fixed polynomial $p$, and $n$ is the length of an input string $w$---see \zcref{sec_prelim_iterated_exponentials} for the definition of iterated exponentials.
We proceed by induction on~$i$.

Let us first focus on the base case, i.e., when $i = 1$ and hence when computing $2^{p(n)}$.
The value $2^{p(n)}$ is easily represented in binary by a bit `$1$' followed by a number of bits `$0$' equaling the \emph{value} $p(n)$.
Thus, since $n = \StringLength{w}$, the output size of $2^{p(n)}$ is $p(\StringLength{w}) + 1$, which is $\iExp{0}{p(\StringLength{w})} + 1$, and hence $\iExp{i-1}{p(\StringLength{w})} + 1$ when $i = 1$.

The output can easily be produced by first computing $p(n) = p(\StringLength{w})$, then writing one bit `$1$', and then writing the required number of bits `$0$' by incrementing a counter from $1$ to $p(\StringLength{w})$.
The space complexity of this approach amounts to the space needed to evaluate $p(\StringLength{w})$, to store the value $p(\StringLength{w})$, and to store the counter (in binary).
Evaluating $p(\StringLength{w})$, representing $p(\StringLength{w})$, and storing the counter, all require $O(\log \StringLength{w})$ space (see the paragraph ``Polynomials'' above),
which is $\iExp{-1}{p(\StringLength{w})}$, and hence $O(\iExp{i-2}{p(\StringLength{w})})$ when $i = 1$.
The execution time of the outlined procedure consists of the time needed to evaluate $p(\StringLength{w})$, and to count from $1$ to $p(\StringLength{w})$.
The former is polynomial in the \emph{size} of $p(\StringLength{w})$ (see the paragraph ``Polynomials'' above), whereas the latter is linear in the \emph{value} of $p(\StringLength{w})$ (see, e.g., \cite[Chapter~16, ``Incrementing a binary counter'']{CormenLRS2022}).
Since the latter is dominating, the time complexity is $O(\iExp{0}{p(\StringLength{w})})$ and hence $O(\iExp{i-1}{p(\StringLength{w})})$ when $i = 1$. 

Let us now assume that the results hold for up to a generic $j > 1$.
We show that the results hold for $i = j + 1$ as well.
Since $\iExp{i}{p(\StringLength{w})} = \iExp{j+1}{p(\StringLength{w})} = 2^{(\iExp{j}{p(\StringLength{w})})}$, the procedure to compute $\iExp{j+1}{p(\StringLength{w})}$ is similar to the one outlined in the base case:
we first compute $\iExp{j}{p(\StringLength{w})}$, then we write a bit `$1$', followed by a number of bits `$0$' equalling $\iExp{j}{p(\StringLength{w})}$.
By this, the output size is $\iExp{j}{p(\StringLength{w})} + 1 = \iExp{i-1}{p(\StringLength{w})} + 1$.
Regarding the space complexity of the procedure, we need the space to compute $\iExp{j}{p(\StringLength{w})}$, which by induction is $O(\iExp{j-2}{p(\StringLength{w})})$.
Once computed, we need the space to store the value $\iExp{j}{p(\StringLength{w})}$, which is $\iExp{j-1}{p(\StringLength{w})}+1$.
We also need to count from $1$ to $\iExp{j}{p(\StringLength{w})}$, which requires a counter of size $\iExp{j-1}{p(\StringLength{w})}+1$.
Hence, the space complexity is $O(\iExp{j-1}{p(\StringLength{w})})$ that is $O(\iExp{i-2}{p(\StringLength{w})})$.
Regarding the time complexity of the procedure, we need the time to compute $\iExp{j}{p(\StringLength{w})}$, which by induction is $O(\iExp{j-1}{p(\StringLength{w})})$.
Then, we need to count from $1$ to $\iExp{j}{p(\StringLength{w})}$, which takes $O(\iExp{j}{p(\StringLength{w})})$ time and is dominating---remember, again, that the overall time spent to increment the counter is linear in the value of $\iExp{j}{p(\StringLength{w})}$; see \cite[Chapter~16, ``Incrementing a binary counter'']{CormenLRS2022}.
Therefore, the time complexity is $O(\iExp{j}{p(\StringLength{w})})$, which is $O(\iExp{i-1}{p(\StringLength{w})})$.

\medbreak

The complexity of computing $\smash{{\left(\iExp{i}{p(n)}\right)}^{\iExp{j}{q(n)}}}$, for fixed $i, j \geq 0$ and fixed polynomials $p$ and $q$, where $n$ is the length of an input string $w$, can be obtained by combining the complexity of the evaluation of $a^b$ and of $\iExp{i}{p(n)}$.

\section{The Boolean Hierarchies}
\label{sec_extended_Boolean_Hierarchies}

\citet{PapadimitriouY1984} introduced the complexity class \DP to characterize the complexity of several natural problems, such as deciding whether the size of a maximum clique in a graph is precisely a given integer.
Languages in \DP can be defined as the intersection of a language in \NPTime and one in \CoNPTime.
For this reason, \DP has often been identified as $\DP = \NPTime \land \CoNPTime$.
This notation does not mean $\DP = \NPTime \cap \CoNPTime$;
instead, \DP is a class of languages that are an intersection of languages, i.e., $\DP = \set{\Language{A} \cap \Language{B} \mid \Language{A} \in \NPTime \land \Language{B} \in \CoNPTime}$.

Boolean hierarchies over \NPTime were obtained by generalizing the definition of \DP.
Instead of the intersection of just two languages, longer ``Boolean'' combinations 
were considered, i.e., union, intersection, and complement (or, equivalently, difference) of multiple \NPTime languages.
From this perspective, \DP can also be defined as $\DP = \set{\Language{A} \cap \compl{\Language{B}} \mid \Language{A}, \Language{B} \in \NPTime} = \set{\Language{A} \setminus \Language{B} \mid \Language{A}, \Language{B} \in \NPTime}$.
Investigating Boolean combinations of \NPTime languages was therefore tantamount to studying the Boolean closure of \NPTime---observe that \NPTime is a ring of sets, as it is closed under union and intersection, and clearly contains the empty language.

Different authors investigated the Boolean closure of \NPTime by layering this closure into the levels of (apparently) different ``Boolean hierarchies'', which only afterwards were proven equivalent.
These hierarchies are each defined by a specific Boolean-combination of \NPTime languages.
Languages belonging to higher levels of the hierarchies can be defined by increasing-size language combinations of the specific kind.
These Boolean hierarchies have \NPTime 
at the base level.
Then, the $k$\nbdash-th level of the hierarchies is obtained by iterating the closure of \NPTime under~$k$ operations involving complement/difference~\cite{BertoniBJSY1989,BruschiJY1990}.
The Boolean closure of \NPTime is obtained as the union of all the infinite-many levels of these hierarchies.
Among these hierarchies there are:
\begin{itemize}[nosep,label=--]
  \item the (nested) differences hierarchy~\cite{CaiH1985,Beigel1991} (inspired by the (nested) differences hierarchy of generic sets~\cite{Addison1965} and of recursively enumerable sets~\cite{Ershov1968});
  \item the (symmetric) differences hierarchy~\cite{KoblerSW87} (inspired by the (symmetric) differences hierarchy of recursively enumerable sets~\cite{Posner1980});
  \item the (union of) differences hierarchy~\cite{CaiH1985};
  \item the (Hausdorff) differences hierarchy~\citep{CaiH1985,Wechsung1985,WechsungW1985,Wagner1987,Wagner1990}; and
  \item the Boolean (``alternating sums'') hierarchy itself~\cite{CaiH1985,CaiH1986};
\end{itemize}
defined as follows (see also~\cite{HemaspaandraR1997,BertoniBJSY1989,CaiGHHS1988}).
For two language classes $\ComplexityClass{X}$ and $\ComplexityClass{Y}$ we have:
\begin{align*}
  \ComplexityClass{X} \land \ComplexityClass{Y} &= \set{\Language{A} \cap \Language{B} \mid \Language{A} \in \ComplexityClass{X} \land \Language{B} \in \ComplexityClass{Y}} &  \ComplexityClass{X} \setsymmdifference \ComplexityClass{Y} &= \set{(\Language{A} \setminus \Language{B}) \cup (\Language{B} \setminus \Language{A}) \mid \Language{A} \in \ComplexityClass{X} \land \Language{B} \in \ComplexityClass{Y}} \\
  \ComplexityClass{X} \lor \ComplexityClass{Y} &= \set{\Language{A} \cup \Language{B} \mid \Language{A} \in \ComplexityClass{X} \land \Language{B} \in \ComplexityClass{Y}} &  \ComplexityClass{X} \setminus \ComplexityClass{Y} &= \set{\Language{A} \setminus \Language{B} \mid \Language{A} \in \ComplexityClass{X} \land \Language{B} \in \ComplexityClass{Y}}.
\end{align*}

\begin{enumerate}[label=(\roman*)]
  \item The levels $\ComplexityClass{C}_k$ of the Boolean (``alternating sums'') hierarchy over \NPTime are defined as:
  \begin{flalign*}
  & \ComplexityClass{C}_k =
  \begin{cases}
    \NPTime, & \text{if } k = 1 \\
    \ComplexityClass{C}_{k-1} \land \CoNPTime, & \text{if } k \geq 2 \text{ and } k \text{ is even} \\
    \ComplexityClass{C}_{k-1} \lor \NPTime, & \text{if } k \geq 2 \text{ and } k \text{ is odd}.
  \end{cases} &
  \end{flalign*}

  \item The levels $\ComplexityClass{D}_k$ of the (nested) differences hierarchy over \NPTime are defined as:
  \begin{flalign*}
  & \ComplexityClass{D}_k =
  \begin{cases}
    \NPTime, & \text{if } k = 1 \\
    \NPTime \setminus \ComplexityClass{D}_{k-1}, & \text{if } k \geq 2.
  \end{cases} &
  \end{flalign*}

  \item The levels $\ComplexityClass{DIFF}_k$ of the (symmetric) differences hierarchy over \NPTime are defined as:%
      \footnote{The notation $\ComplexityClass{DIFF}$ was used by \citet{Beigel1991} and by \citet{KoblerSW87}; however, the former referred to the \emph{nested} differences hierarchy whereas the latter referred to the \emph{symmetric} differences hierarchy. The notation $\ComplexityClass{D}$ for the nested differences hierarchy was introduced by \Citet{CaiH1985}.}
  \begin{flalign*}
  & \ComplexityClass{DIFF}_k =
  \begin{cases}
    \NPTime, & \text{if } k = 1 \\
    \ComplexityClass{DIFF}_{k-1} \setsymmdifference \NPTime, & \text{if } k \geq 2.
  \end{cases} &
  \end{flalign*}

  \item The levels $\ComplexityClass{E}_k$ of the (union of) differences hierarchy over \NPTime are defined as:
  \begin{flalign*}
  & \ComplexityClass{E}_k =
  \begin{cases}
    \NPTime, & \text{if } k = 1 \\
    \underbracket[.5pt]{\NPTime \setminus \NPTime}_{= \DP}, & \text{if } k = 2 \\
    \ComplexityClass{E}_{k-1} \lor \NPTime, & \text{if } k > 2 \text{ and } k \text{ is odd} \\
    \ComplexityClass{E}_{k-2} \lor (\underbracket[.5pt]{\NPTime \setminus \NPTime}_{= \DP}), & \text{if } k > 2 \text{ and } k \text{ is even}.
  \end{cases} &
  \end{flalign*}

  \item The levels $\ComplexityClass{E}_k$ of the (Hausdorff) differences hierarchy over \NPTime are defined similarly to the levels of the union of differences hierarchy above, with the additional constraint that languages in the considered Boolean combinations constitute a Hausdorff sequence.
      This means that languages $\Language{L} \in \ComplexityClass{E}_{k}$ are characterized by \NPTime languages $\Language{L}_1\supseteq \dots \supseteq \Language{L}_k$ such that $\Language{L} = (\Language{L}_1 \setminus \Language{L}_2) \cup (\Language{L}_3 \setminus \Language{L}_4) \cup \dots \cup (\Language{L}_{k-1} \setminus \Language{L}_k)$.
\end{enumerate}

The definitions above were shown all equivalent level by level~\cite{Addison1965,KoblerSW87,CaiGHHS1988,Wagner1988,BertoniBJSY1989,HemaspaandraR1997}.
By this, authors have generically referred to them as the \BHText (over \NPTime), even if they were differently defined.
By the ``union of differences'' definition of the \BHText follows that every language in the \BHText can be characterized by a finite union of \DP languages~\cite{CaiGHHS1988}.
The notation $\BH[k]$ denoted the $k$\nbdash-th level of the \BHText.
Notice that the levels of the Hausdorff differences hierarchy are tightly linked to Hausdorff reductions of increasing constant lengths.
Indeed, \citet{Wagner1988,Wagner1990} defined the $k$\nbdash-th level of the Boolean hierarchy over \NPTime as the class of $\NPTime$ Hausdorff languages of length $k$, to use the nomenclature introduced in this paper. 
By this, $\DP$ is the class of $\NPTime$ Hausdorff languages of length~$2$.

The equivalence of the different \BHText definitions was proven to hold, not only for the hierarchies over $\NPTime$, but whenever these hierarchies are built over a complexity class containing $\alphabet^*$ and $\emptyset$, and is closed under union and intersection~\cite{Addison1965,CaiGHHS1988,BertoniBJSY1989,HemaspaandraR1997}.
Therefore, since the levels of the above hierarchies only refers to different ways of combining \NPTime languages, the very same definitions can be adopted for the Boolean Hierarchies over other complexity classes~\cite{BertoniBJSY1989,HemaspaandraR1997}.
In fact, for example, the Boolean Hierarchies over $\SigmaP{2}$, \textnormal{RP}, \textnormal{UP}, and \NExpTime, appeared in~\cite{ChangK1996}, \cite{BertoniBJSY1989}, \cite{HemaspaandraR1997}, and~\cite{Dawar1998}, respectively.

\citet{Wagner1988,Wagner1990} also defined the \defin{Extended \BHText over $\NPTime$} as follows.
For a strictly positive nondecreasing computable function $r\colon \NaturalsDomain \to \NaturalsDomain$, $\BoundedHausdCLASS{r(n)}{\NPTime}$ is the ``$r(n)$\nbdash-level'' of the (Extended) Boolean Hierarchy over $\NPTime$, which, in our terms, is the class of $\NPTime$ Hausdorff languages of length $r(n)$.
Although \citeauthor{Wagner1990} introduced this notion, he did not deepen its analysis to extended it to higher level exponential time classes.
This most likely happened because his Extended Boolean Hierarchy definition was tightly linked to the Hausdorff reduction notion introduced in \cite{Wagner1990}, which is highly tailored for the polynomial case (see also \zcref{footnote_wagner_hausdorff_reductions_limited_to_polynomial_case}).

\section{Infinite Hausdorff Sequences}
\label{sec_infinite_hausdorff_sequences}

In this section, we outline a possible way to define infinite Hausdorff sequences.
We will see that such a definition is too powerful, as non recursively enumerable languages can be Hausdorff characterized over regular ones.
This could spur a more precise analysis to understand how this definition could be refined and become viable.

Let $\ComplexityClass{C}$ be a language class closed under union and intersection, and including the empty language.
The (infinite) sequence $\HausdSeq{H} = {\set{\Language{D}_z}}_{z \geq 1}$ of languages from $\ComplexityClass{C}$ is an \defin{(infinite) Hausdorff sequence (over $\ComplexityClass{C}$)} iff
\textbf{(\HausdSeqRequirementI)}~$\Language{D}_z \supseteq \Language{D}_{z+1}$ for all $z \geq 1$, and \textbf{(\HausdSeqRequirementII)}~for every string $w$, there is a $\Language{D}_{z_w}$ such that $w \notin \Language{D}_{z_w}$.
We define the (\defin{infinite}) \defin{Hausdorff summation}, or \defin{combination}, of $\HausdSeq{H}$ as $\bigcup_{\text{odd }z \geq 1} (\Language{D}_{z} \setminus \Language{D}_{z+1})$. 
A language $\Language{L}$ is \defin{Hausdorff characterized} by the infinite sequence $\HausdSeq{H}$ iff the infinite Hausdorff summation of $\HausdSeq{H}$ equals $\Language{L}$.
By this, for every string $w$, it holds that $w \in \Language{L} \Leftrightarrow\linebreak[0] \SetSize{\set{z \mid z \geq 1 \land w \in \Language{D}_z}}$ is odd.
Observe that (\HausdSeqRequirementI) and~(\HausdSeqRequirementII) 
imply that, for every string $w$, the quantity $\SetSize{\set{z \mid z \geq 1 \land w \in \Language{D}_z}}$ is finite, although defined over an infinite sequence, and hence always unambiguously determined.
Similarly to Hausdorff predicates (see \zcref{sec_Hausdorff_reductions_classes}), we can define the Hausdorff index of a string \Wrt an infinite Hausdorff sequence $\HausdSeq{H}$, and the length of an infinite Hausdorff sequence.
By (\HausdSeqRequirementI) and~(\HausdSeqRequirementII), for every string $w$ there is an index $z$ at which $w \in \Language{D}_{z}$;
if $w \notin \Language{D}_1$, and hence $w \notin \Language{D}_z$ for all $z \geq 1$, we let this maximum index to be~$0$.
We call this value the \defin{Hausdorff index} of $w$ \Wrt $\HausdSeq{H}$, and we formally define it as $\HausdIndex{w}{\HausdSeq{H}} = \max (\set{0, \sup (\set{z \mid z \geq 1 \land w \in \Language{D}_{z}})})$.
From the notion of Hausdorff index stems that of bounded Hausdorff sequence. 
Intuitively, a Hausdorff sequence $\HausdSeq{H}$ is bounded if there is some function bounding every string's Hausdorff index \Wrt $\HausdSeq{H}$.
More precisely, if $g\colon \NaturalsDomain \to \NaturalsDomain$ is a Hausdorff length function, we say that a Hausdorff sequence $\HausdSeq{H}$ is \defin{$g(n)$\nbdash-long} iff $g(n)$ is such that $\HausdIndex{w}{\HausdSeq{H}} \leq g(\StringLength{w})$, for all strings $w$.

Although these definitions might seem completely reasonable, we can show that they are overly powerful, as non recursively enumerable languages can be Hausdorff characterized over the set of regular languages.

\begin{example}
\label{example_hausdorff_sequence}
Let $\Regular$ be the class of all regular languages over $\alphabet = \set{0,1}$, and let $\Language{L} \subseteq \StringUniverse$ be a non recursively enumerable language.
We can assume \Wlog that $\Language{L}$ does not include the empty string---indeed, if $\Language{L}'$ were a generic non recursively enumerable language, we could define a language $\Language{L}$ containing all and only the strings in $\Language{L}'$ but padded with a very same extra symbol.
Remember that $\Regular$ is closed under union, intersection, and difference, and contains the empty language (see, e.g., \cite{Hopcroft1979}).
Hence, every \emph{finite} Hausdorff summation of regular languages yields a regular language.
Since $\Language{L}$ is not recursively enumerable, and hence non-regular, if a Hausdorff sequence characterizing $\Language{L}$ exists, it must be \emph{infinite}.
We show that $\Language{L}$ can actually be characterized over $\Regular$ by an infinite Hausdorff sequence $\HausdSeq{H} = {\set{\Language{D}_z}}_{z \geq 1}$ of bounded length. 
For each integer $z \geq 1$, let:
\begin{equation*}
  \Language{D}_z = %
    \begin{cases}
      (\Language{L} \cap \alphabet^{z}) \cup \alphabet^{\geq z+1}, & \text{if $z$ is odd}\\
      \alphabet^{\geq z} \setminus (\Language{L} \cap \alphabet^{z}), & \text{if $z$ is even}.
    \end{cases}
\end{equation*}
We claim that $\HausdSeq{H}$ is a Hausdorff sequence over $\Regular$.
We first show that $\HausdSeq{H}$ is over $\Regular$.
Notice that $(\Language{L} \cap \alphabet^{z})$ is a regular language.
Indeed, although $\Language{L}$ is non\nbdash-regular, the set $(\Language{L} \cap \alphabet^{z})$ contains finitely\nbdash-many strings, and every set $S$ containing only a finite number of strings is a regular language, as it can be expressed by the regular expression that is the (finite) disjunction of all the strings belonging to~$S$. 
By this, the languages $\Language{D}_z$ are indeed regular, as they are defined via union, intersection, and difference, of regular languages, and such combinations yield regular languages.
Besides this, $\HausdSeq{H}$ fulfills~(\HausdSeqRequirementI), as it is easy to verify that $\Language{D}_z \supseteq \Language{D}_{z+1}$ for all $z \geq 1$.

We are left to show that $\HausdSeq{H}$ fulfills~(\HausdSeqRequirementII), that is, for every string $w \in \StringUniverse$, there is a point in the Hausdorff sequence beyond which $w$ does not belong to the languages of the sequence. 
Let $w \in \alphabet^*$ be a string.
There are two case:
either (a)~$\StringLength{w}$ is odd,
or (b)~$\StringLength{w}$ is even.
Consider first case~(a).
Since $\StringLength{w}$ is odd, 
$\Language{D}_{\StringLength{w} + 1} = \alphabet^{\geq \StringLength{w} + 1} \setminus (\Language{L} \cap \alphabet^{\StringLength{w} + 1})$.
By this, 
$w \notin \Language{D}_{\StringLength{w} + 1}$, because $w \notin \alphabet^{\geq \StringLength{w} + 1}$.
Consider now case~(b).
By $\StringLength{w}$ being even, 
$\Language{D}_{\StringLength{w} + 1} = (\Language{L} \cap \alphabet^{\StringLength{w} + 1}) \cup \alphabet^{\geq \StringLength{w} + 2}$.
Hence, 
$w \notin \Language{D}_{\StringLength{w} + 1}$, because $w \notin \alphabet^{\StringLength{w} + 1}$ and $w \notin \alphabet^{\geq \StringLength{w} + 2}$.

Furthermore, from what we have highlighted above, the Hausdorff index \Wrt $\HausdSeq{H}$ of a generic string $w$ does not exceed $\StringLength{w}$.
Therefore, $\HausdSeq{H}$ is a bounded Hausdorff sequence of length $g(n) = n$.

Lastly, notice that, for each \emph{odd} integer $z \geq 1$, we have $(\Language{D}_z \setminus \Language{D}_{z+1}) = (\Language{L} \cap \alphabet^{z}) \cup (\Language{L} \cap \alphabet^{z+1})$.
By this, $\Language{L}$ equals the infinite Hausdorff summation of $\HausdSeq{H}$, and hence $\Language{L}$ is Hausdorff characterized by $\HausdSeq{H}$.
\hfill~$\blacksquare$
\end{example}

In the example above, we have seen that, if not carefully defined, an infinite Hausdorff sequence over the regular languages can characterize even a non recursively enumerable language.
The tricky point in the example above seems to be the definition of the languages $(\Language{L} \cap \alphabet^{z})$.
These languages, since they contain only finitely\nbdash-many strings, are regular.
However, at the moment in which these languages are defined, it might be \emph{unknown} which strings belong to them.
Perhaps it is on this point that one may carry out a finer analysis to obtain a clearer separation between languages that can be considered and languages that cannot.
So that the notion of infinite Hausdorff sequence introduced at the beginning of this section could become viable.

\section{Deferred Proofs}
\label{sec_deferred_proofs_top}

\stoptoc

\subsection[Proofs for Section~\ref*{sec_prelim_iterated_exponentials}]{Proofs for \zcref{sec_prelim_iterated_exponentials}}
\label{sec_detailed_proofs_properties_iterated_exponentials}

\getkeytheorem{asymptoticClassPropertiesIteratedExponentialsNonContainment}

\begin{proof}
We prove $\iExp{i}{O(\iExp{j}{n^k})} \not\subseteq O(\iExp{i+j}{n^{k}})$ by exhibiting a function $f(n)$ which is $\iExp{i}{O(\iExp{j}{n^k})}$ but not $O(\iExp{i+j}{n^{k}})$.
Let $f(n) = \iExp{i}{c \mycdot \iExp{j}{n^k}} \in \iExp{i}{O(\iExp{j}{n^k})}$ be a function, for some constant $c \geq \iExp{j}{1} + 1$.
If $f(n) \in O(\iExp{i+j}{n^{k}})$ were the case, there would be constants $d$ (\Wlog, assume $d \geq \iExp{i+j}{1}$) and $n_0$ such that, for all $n \geq n_0$, $f(n) = \iExp{i}{c \mycdot \iExp{j}{n^k}} \leq d \cdot \iExp{i+j}{n^k}$.
Let us rewrite this inequality (below, $\iLog{i}{(\cdot)}$ is a(n iterated) logarithm, as $i \geq 1$):
\begin{multline*}
\iExp{i}{c \mycdot \iExp{j}{n^k}} \leq d \cdot \iExp{i+j}{n^k} {} \Leftrightarrow {} 
    \iLog{i}{(\iExp{i}{c \mycdot \iExp{j}{n^k}})} \leq \iLog{i}{(d \cdot \iExp{i+j}{n^k})} \Leftrightarrow {} \\
    c \mycdot \iExp{j}{n^k} \leq \iExp{j}{n^k} + \iLog{i}{d} \Leftrightarrow 
\iExp{j}{n^k}\cdot (c-1) \leq \iLog{i}{d}.
\end{multline*}

Now, there are two cases:
either $j = 0$, or $j \geq 1$.
When $j = 0$, we have:
\begin{gather}
\iExp{j}{n^k}\cdot (c-1) \leq \iLog{i}{d} \Leftrightarrow n^k \cdot (c-1) \leq \iLog{i}{d} \Leftrightarrow {} \nonumber \displaybreak[0] \\
n^k \leq \frac{\iLog{i}{d}}{c-1}. \label{eq_not_containment_iterated_exp_j0}
\end{gather}

When $j \geq 1$, we have:
\begin{gather}
\iExp{j}{n^k}\cdot (c-1) \leq \iLog{i}{d} \Leftrightarrow
\iLog{j}{(\iExp{j}{n^k} \cdot (c-1))} \leq \iLog{j}{(\iLog{i}{d})} \Leftrightarrow
n^k + \iLog{j}{(c-1)} \leq \iLog{i+j}{d} \Leftrightarrow {} \nonumber \displaybreak[0] \\
n^k \leq \iLog{i+j}{d} - \iLog{j}{(c-1)}. \label{eq_not_containment_iterated_exp_j1}
\end{gather}

Since $k \geq 1$, the left\nbdash-hand side values of Eqs.~(\ref{eq_not_containment_iterated_exp_j0}, \ref{eq_not_containment_iterated_exp_j1}) indefinitely grow as $n$ grows, whereas the right\nbdash-hand sides of Eqs.~(\ref{eq_not_containment_iterated_exp_j0}, \ref{eq_not_containment_iterated_exp_j1}) are constants (remember that $d$ and $c$ are assumed to be such that $d \geq \iExp{i+j}{1}$ and $c \geq \iExp{j}{1} + 1$, so $\iLog{i+j}{d}$, $\iLog{i}{d}$, and $\iLog{j}{(c-1)}$, are defined).
Thus, there are no constants $d$ and $n_0$ such that the inequalities \eqref{eq_not_containment_iterated_exp_j0} and \eqref{eq_not_containment_iterated_exp_j1} hold for all $n \geq n_0$.
Hence, $f(n) \notin O(\iExp{i+j}{n^{k}})$, and so $\iExp{i}{O(\iExp{j}{n^k})} \nsubseteq O(\iExp{i+j}{n^{k}})$. 
\end{proof}

\getkeytheorem{asymptoticClassPropertiesIteratedExponentialsContainment}

\begin{proof}
The property clearly holds when $i = 0$, as $O(\iExp{j}{n^k}) \subseteq O(\iExp{j}{n^{k+1}})$ when $j,k \geq 0$.

Assume now $i \geq 1$.
Let $f(n) \in \iExp{i}{O(\iExp{j}{n^k})}$ be a function.
There hence are constants $c$ and $n_a$ (\Wlog, assume $c, n_a \geq 1$) such that, for all $n \geq n_a$, $f(n) \leq \iExp{i}{c \mycdot \iExp{j}{n^k}}$.
We prove $f(n) \in O(\iExp{i+j}{n^{k+1}})$ by exhibiting constants $d$ and $n_b$ such that, for all $n \geq n_b$, $f(n) \leq d \cdot \iExp{i+j}{n^{k+1}}$;
the following proof holds for $j = 0$ \emph{and} $j \geq 1$.

When $n \geq n_a$, by $j \geq 0$ and $c \geq 1$ we have $f(n) \leq \iExp{i}{c \mycdot \iExp{j}{n^k}} \leq \iExp{i}{\iExp{j}{c \mycdot n^k}} = \iExp{i+j}{c \mycdot n^k}$.
Thus, if there existed $d$ and $n_b$ such that $n_b \geq n_a$ and $\iExp{i+j}{c \mycdot n^k} \leq d \cdot \iExp{i+j}{n^{k+1}}$ , for all $n \geq n_b$, then $f(n) \in O(\iExp{i+j}{n^{k+1}})$ would be proven.
We rewrite $\iExp{i+j}{c \mycdot n^k} \leq d \cdot \iExp{i+j}{n^{k+1}}$ as (below, $\iLog{i+j}{(\cdot)}$ is a(n iterated) logarithm, as $i \geq 1$ and $j \geq 0$):
\begin{gather}
  \iExp{i+j}{c \mycdot n^k} \leq d \cdot \iExp{i+j}{n^{k+1}} \Leftrightarrow
    \iLog{i+j}{(\iExp{i+j}{c \mycdot n^k})} \leq \iLog{i+j}{(d \cdot \iExp{i+j}{n^{k+1}})} \Leftrightarrow
    c \cdot n^k \leq n^{k+1} + \iLog{i+j}{d} \Leftrightarrow {} \nonumber \\
    n^k \cdot (c - n) \leq \iLog{i+j}{d}. \label{eq_containment_iterated_exp}
\end{gather}

For $d \geq \iExp{i+j}{1}$, the right-hand side of Eq.~\eqref{eq_containment_iterated_exp} is a positive constant.
By letting $n_b = \max \set{n_a,c+1}$, the quantity $(c - n)$ in the left-hand side of Eq.~\eqref{eq_containment_iterated_exp} is negative for all $n \geq n_b > c$.
By this and $n_b \geq n_a$, for all $n \geq n_b$, it holds $f(n) \leq \iExp{i}{c \mycdot \iExp{j}{n^k}} \leq d \cdot \iExp{i+j}{n^{k + 1}}$.
Hence, $f(n) \in O(\iExp{i+j}{n^{k+1}})$, and $\iExp{i}{O(\iExp{j}{n^k})} \subseteq O(\iExp{i+j}{n^{k+1}})$.
\end{proof}

\getkeytheorem{operationsBetweenIteratedExponentials}

\begin{proof}\hspace{0pt}
For functions $x(n)$ and $y(n)$, by ``$y(n)$ eventually exceeds $x(n)$'' we mean that there exists $n_0 \geq 1$ such that, for all $n \geq n_0$, $x(n) \leq y(n)$.
\emph{All inequalities $x(n) \leq y(n)$ in this proof mean ``$y(n)$ eventually exceeds $x(n)$''.}

The inclusion relationships in the statement of the \zcref*[typeset=name,nocap]{theo_operations_between_iterated_exponentials} are obtained by the arguments below and by the fact that, for functions $x(n)$ and $y(n)$, it is easy to see that $O(x(n)) \subseteq O(y(n))$ iff $x(n) \in O(y(n))$.
\begin{enumerate}[left=0pt,label=\arabic*)]
  \item
    Since $f(n) \in O(\iExpPolFunctions{i})$, there are integers $d$, $d'$, and~$\ell$, such that  $f(n) \leq d' \cdot \iExp{\max\set{0,i}}{d \cdot n^\ell}$.
    For this reason \[c \cdot \! {f(n)}^k \leq c \cdot {\bigl( d' \cdot \iExp{\max{\set{0,i}}}{d \cdot n^\ell} \bigr)}^k \leq c \cdot {(d')}^k \cdot \iExp{\max\set{0,i}}{k \cdot d^k \cdot n^{\ell \cdot k}},\]
    where the last inequality is obtained by combining ${\bigl( \iExp{0}{d \cdot n^\ell} \bigr)}^k = {(d \cdot n^\ell)}^k = d^k \cdot n^{\ell \cdot k}$, and, when $i \geq 1$, ${\bigl( \iExp{i}{d \cdot n^\ell} \bigr)}^k \leq \iExp{i}{k \cdot d \cdot n^\ell}$.
    To conclude, notice that $c \cdot {(d')}^k \cdot \iExp{\max\set{0,i}}{k \cdot d^k \cdot n^{\ell \cdot k}}$ belongs to $O(\iExpPolFunctions{\max\set{0,i}})$.

  \item
    Since $f(n) \in O(\iExpPolFunctions{i})$ and $g(n) \in O(\iExpPolFunctions{j})$, there are integers $c$, $c'$, $k$, $d$, $d'$, and~$\ell$, such that $f(n) \leq c' \cdot \iExp{i}{c \cdot n^k}$ and $g(n) \leq d' \cdot \iExp{j}{d \cdot n^\ell}$.
    Hence,
    \begin{multline*}
      f(n) + g(n) \leq c' \cdot \iExp{i}{c \cdot n^k} + d' \cdot \iExp{j}{d \cdot n^\ell} \leq {} \\
      \max\set{c',d'} \cdot \iExp{\max\set{i,j}}{\max\set{c,d}\cdot n^{\max\set{k,\ell}}} + \max\set{c',d'} \cdot \iExp{\max\set{i,j}}{\max\set{c,d}\cdot n^{\max\set{k,\ell}}} \leq {} \\
      2 \cdot \max\set{c',d'} \cdot \iExp{\max\set{i,j}}{\max\set{c,d}\cdot n^{\max\set{k,\ell}}}.
    \end{multline*}
    Observe that $2 \cdot \max\set{c',d'} \cdot \iExp{\max\set{i,j}}{\max\set{c,d}\cdot n^{\max\set{k,\ell}}}$ belongs to $O(\iExpPolFunctions{\max\set{i,j}})$.

  \item
    Since $f(n) \in O(\iExpPolFunctions{i})$ and $g(n) \in O(\iExpPolFunctions{j})$, there are integers $c$, $c'$, $k$, $d$, $d'$, and~$\ell$, such that $f(n) \leq c' \cdot \iExp{\max\set{0,i}}{c \cdot n^k}$ and $g(n) \leq d' \cdot \iExp{\max\set{0,j}}{d \cdot n^\ell}$.
    Hence,
    \begin{multline*}
      f(n) \cdot g(n) \leq c' \cdot \iExp{\max\set{0,i}}{c \cdot n^k} \! \cdot d' \cdot \iExp{\max\set{0,j}}{d \cdot n^\ell} \leq {} \\
      c' \cdot d' \cdot \iExp{\max\set{0,i,j}}{\max\set{c,d} \cdot n^{\max\set{k,\ell}}} \! \cdot \iExp{\max\set{0,i,j}}{\max\set{c,d} \cdot n^{\max\set{k,\ell}}} = {} \\ 
      c' \cdot d' \cdot {\left(\iExp{\max\set{0,i,j}}{\max\set{c,d} \cdot n^{\max\set{k,\ell}}}\right)}^{2} \leq c' \cdot d' \cdot \iExp{\max\set{0,i,j}}{2 \cdot \max\set{c^2,d^2} \cdot n^{\max\set{2k,2\ell}}},
    \end{multline*}
    where the last inequality is obtained in a similar way to what we did in Point~1 above.
    Observe that $c' \cdot d' \cdot \iExp{\max\set{0,i,j}}{2 \cdot \max\set{c^2,d^2} \cdot n^{\max\set{2k,2\ell}}}$ belongs to $O(\iExpPolFunctions{\max\set{0,i,j}})$.

  \item
    Since $f(n) \in O(\iExpPolFunctions{i})$ and $g(n) \in O(\iExpPolFunctions{j})$, there are integers $c$, $c'$, $k$, $d$, $d'$, and~$\ell$, such that $f(n) \leq c' \cdot \iExp{\max\set{1,i}}{c \cdot n^k}$ and $g(n) \leq d' \cdot \iExp{\max\set{0,j}}{d \cdot n^\ell}$.
    Therefore,
    \begin{multline*}
        f(n)^{g(n)} \leq {\left(c' \cdot \iExp{\max\set{1,i}}{c \cdot n^k}\right)}^{d' \mycdot \iExp{\max\set{0,j}}{d \cdot n^\ell}} = {(c')}^{d' \mycdot \iExp{\max\set{0,j}}{d \cdot n^\ell}} \cdot {\left( \iExp{\max\set{1,i}}{c \cdot n^k} \right)}^{d' \mycdot \iExp{\max\set{0,j}}{d \cdot n^\ell}} \leq {} \\
        {\left( \iExp{\max\set{1,i}}{c \cdot n^k} \right)}^{d' \mycdot \iExp{\max\set{0,j}}{d \cdot n^\ell}} \cdot {\left( \iExp{\max\set{1,i}}{c \cdot n^k} \right)}^{d' \mycdot \iExp{\max\set{0,j}}{d \cdot n^\ell}} = {\left( {\left( \iExp{\max\set{1,i}}{c \cdot n^k} \right)}^{d' \mycdot \iExp{\max\set{0,j}}{d \cdot n^\ell}} \right)}^{2} = {} \\
        {\left({2}^{\bigl(\iExp{\max\set{0,i-1}}{c \cdot n^k} \mycdot d' \mycdot \iExp{\max\set{0,j}}{d \cdot n^\ell}\bigr)}\right)}^{2} = {\left({2}^{2 \mycdot \bigl(\iExp{\max\set{0,i-1}}{c \cdot n^k} \mycdot d' \mycdot \iExp{\max\set{0,j}}{d \cdot n^\ell}\bigr)}\right)}
    \end{multline*}
    
    By Point~3 above, there exist integers $x$, $y$, and $z$, such that 
    \begin{multline*}
        {\left({2}^{2 \mycdot \iExp{\max\set{0,i-1}}{c \cdot n^k} \mycdot d' \mycdot \iExp{\max\set{0,j}}{d \cdot n^\ell}}\right)} \leq 2^{\left(x \mycdot \iExp{\max\set{0,i-1,j}}{y \cdot n^{z}}\right)} \leq 2^{\left(\iExp{\max\set{0,i-1,j}}{x \cdot y \cdot n^{z}}\right)} = {} \\
        \iExp{\max\set{1,i,j+1}}{x \cdot y \cdot n^{z}}.
    \end{multline*}
    Notice that $\iExp{\max\set{1,i,j+1}}{x \cdot y \cdot n^{z}}$ belongs to $O(\iExpPolFunctions{\max\set{1,i,j+1}})$.

\item
    We consider two cases:
    first $i \geq 0$, and then $i = -1$.
    Let us focus on $i \geq 0$.
    Since $f(n) \in O(\iExpPolFunctions{i})$, there are integers $c$, $c'$, and $k$, such that $f(n) \leq c' \cdot \iExp{i}{c \cdot n^k}$.
    We hence have $f(g(n)) \leq c' \cdot \iExp{i}{c \cdot {g(n)}^k}$.
    By Point~1 above, there are integers $d$, $d'$, and~$\ell$, such that $c \cdot {g(n)}^k \leq d' \cdot \iExp{\max\set{0,j}}{d \cdot n^\ell}$.
    Therefore,
    \[
        f(g(n)) \leq c' \cdot \iExp{i}{d' \cdot \iExp{\max\set{0,j}}{d \cdot n^\ell}} \leq \iExp{i}{c' \cdot d' \cdot \iExp{\max\set{0,j}}{d \cdot n^\ell}} \leq \iExp{i}{\iExp{\max\set{0,j}}{c' \cdot d' \cdot d \cdot n^\ell}} \leq \iExp{\max\set{i,i+j}}{c' \cdot d' \cdot d \cdot n^\ell}.
    \]
    Clearly, the latter is $O(\iExpPolFunctions{\max\set{i,i+j}})$.
    
    Let us now consider $i = -1$.
    By $f(n) \in O(\iExpPolFunctions{-1})$, there is an integer $c$ such that $f(n) \leq c \cdot \log n$.
    We consider three cases:
    $j \geq 1$, $j = 0$, and $j = -1$.
    We start with $j \geq 1$.
    Since $g(n) \in O(\iExpPolFunctions{j})$, there are integers $d$, $d'$, and $\ell$, such that $g(n) \leq d' \cdot \iExp{j}{d \cdot n^\ell}$.
    By this, we have:
    \[
        f(g(n)) \leq c \cdot \log (g(n)) \leq c \cdot \log (d' \cdot \iExp{j}{d \cdot n^\ell}) \leq c \cdot \log (\iExp{j}{d' \cdot d \cdot n^\ell}) \leq c \cdot \iExp{j-1}{d' \cdot d \cdot n^\ell} \leq \iExp{j-1}{c' \cdot d' \cdot d \cdot n^\ell}. 
    \]
    The latter is $O(\iExpPolFunctions{j-1})$, and, since $i = -1$ and $j \geq 1$, it is also $O(\iExpPolFunctions{\max\set{i,i+j}})$.
      
    Consider now $j = 0$.
    By $g(n) \in O(\iExpPolFunctions{0})$, there are integers $d$ and $\ell$ such that $g(n) \leq d \cdot n^\ell$.
    By this:
    \[
        f(g(n)) \leq c \cdot \log (g(n)) \leq c \cdot \log (d \cdot n^\ell) \leq c \cdot \log d + c \cdot \ell \cdot \log n.
    \]
    The latter is $O(\iExpPolFunctions{-1})$, and, since $i = -1$ and $j = 0$, it is also $O(\iExpPolFunctions{\max\set{i,i+j}})$.
    
    We conclude with $j = -1$.
    By $g(n) \in O(\iExpPolFunctions{-1})$, there is an integer $d$ such that $g(n) \leq d \cdot \log n$.
    By this:
    \[
        f(g(n)) \leq c \cdot \log (g(n)) \leq c \cdot \log (d \cdot \log n) \leq c \cdot \log d + c \cdot \log \log n.
    \]
    The latter is $O(\iExpPolFunctions{-2})$, and so $O(\iExpPolFunctions{-1})$.
    By $i = -1$ and $j = -1$, this is also $O(\iExpPolFunctions{\max\set{i,i+j}})$.
    \qedhere    
\end{enumerate}
\end{proof}

\subsection[Proofs for Section~\ref*{sec_charting_exp_oracles}]{Proofs for \zcref{sec_charting_exp_oracles}}
\label{sec_detailed_proofs_charting_exp_oracles}

\getkeytheorem{EEcontainment}

\begin{proof}
We first prove that $\Oracle{\iExpTime{i}}{\Oracle{\iExpTime{j}}{\SigmaP{c}}} \subseteq \Oracle{\iExpTime{(i+j)}}{\SigmaP{c}}$.

Let $\Language{L} \in \Oracle{\iExpTime{i}}{\Oracle{\iExpTime{j}}{\SigmaP{c}}}$ be a language.
There are oracle machines $\Oracle{\Machine{M}'}{?} \in \iExpTime{i}$ and $\Oracle{\Machine{M}''}{?} \in \iExpTime{j}$, and a $\SigmaP{c}$ oracle $\Omega$, such that $\Language{L} = \LanguageOf{\Oracle{\Machine{M}'}{\Oracle{\Machine{M}''}{\Omega}}}$.
We claim that there is an oracle machine $\Oracle{\Machine{N}}{?} \in \iExpTime{(i+j)}$ such that $\Language{L} = \LanguageOf{\Oracle{\Machine{N}}{\Omega}}$.
Observe that $\Machine{M}'$ is an \iExponential{i} oracle machine issuing queries to the \iExponential{j} oracle machine $\Machine{M}''$.
The latter may hence carry out \iExponential{(i+j)} computations in the size of the string in input to $\Machine{M}'$, as $\Machine{M}'$ may issue \iExponential{i}{}ly\nbdash-long queries to $\Machine{M}''$.
Therefore, an \iExponential{(i+j)} oracle machine $\Machine{N}$ can actually simulate, via the aid of the oracle $\Omega$, the working of $\Oracle{\Machine{M}'}{\Oracle{\Machine{M}''}{\Omega}}$.

\Proofsep

We now prove that $\Oracle{\iExpTime{(i+j)}}{\SigmaP{c}} \subseteq \BoundedOracle{\iExpTime{k}}{\Oracle{\iExpTime{(i+j-k)}}{\SigmaP{c}}}{1}$.

Let $\Language{L} \in \Oracle{\iExpTime{(i+j)}}{\SigmaP{c}}$ be a language.
There are an $\iExpTime{(i+j)}$ oracle machine $\Oracle{\Machine{M}}{?}$ and a $\SigmaP{c}$ oracle $\Omega$ such that $\Language{L} = \LanguageOf{\Oracle{\Machine{M}}{\Omega}}$.
We claim that there are oracle machines $\BoundedOracle{\Machine{N}'}{?}{1} \in \iExpTime{k}$ and $\Oracle{\Machine{N}''}{?} \in \iExpTime{(i+j-k)}$ such that $\Language{L} = \LanguageOf{\Oracle{\Machine{N}'}{\Oracle{\Machine{N''}}{\Omega}}}$. 
These machines may work as follows:
$\Machine{N}'$ just passes to $\Machine{N}''$ a \iExponential{k}{}ly padded version $\wt{w}$ of the input string~$w$.
Upon reception of $\wt{w}$, $\Machine{N}''$ can carry out a computation that is \iExponential{(i+j)} in the size of~$w$.
Therefore, with the aid of the oracle $\Omega$, $\Machine{N}''$ can simulate on $\wt{w}$ the computation of $\Oracle{\Machine{M}}{\Omega}$ on~$w$.
\end{proof}

\getkeytheorem{EEequivalence}

\begin{proof}
By \zcref{theo_exp_exp_containment}, $\Oracle{\iExpTime{i}}{\Oracle{\iExpTime{j}}{\SigmaP{c}}} \subseteq \Oracle{\iExpTime{\ell}}{\SigmaP{c}} \subseteq \BoundedOracle{\iExpTime{i'}}{\Oracle{\iExpTime{j'}}{\SigmaP{c}}}{1}$.
Clearly, $\BoundedOracle{\iExpTime{i'}}{\Oracle{\iExpTime{j'}}{\SigmaP{c}}}{1} \subseteq \Oracle{\iExpTime{i'}}{\Oracle{\iExpTime{j'}}{\SigmaP{c}}}$.
Again by \zcref{theo_exp_exp_containment}, $\Oracle{\iExpTime{i'}}{\Oracle{\iExpTime{j'}}{\SigmaP{c}}} \subseteq \Oracle{\iExpTime{\ell}}{\SigmaP{c}} \subseteq \BoundedOracle{\iExpTime{i}}{\Oracle{\iExpTime{j}}{\SigmaP{c}}}{1}$.
To conclude, by $\BoundedOracle{\iExpTime{i}}{\Oracle{\iExpTime{j}}{\SigmaP{c}}}{1} \subseteq \Oracle{\iExpTime{i}}{\Oracle{\iExpTime{j}}{\SigmaP{c}}}$, the statement follows.
\end{proof}

\getkeytheorem{NEcontainment}

\begin{proof}
We start by showing that $\Oracle{\iNExpTime{i}}{\Oracle{\iExpTime{j}}{\SigmaP{c}}} \subseteq \Oracle{\iExpTime{(i+j)}}{\SigmaP{c}}$.

Let $\Language{L} \in \Oracle{\iNExpTime{i}}{\Oracle{\iExpTime{j}}{\SigmaP{c}}}$ be a language.
There are oracle machines $\Oracle{\Machine{M}'}{?} \in \iNExpTime{i}$ and $\Oracle{\Machine{M}''}{?} \in \iExpTime{j}$, and a $\SigmaP{c}$ oracle $\Omega$, such that $\Language{L} = \LanguageOf{\Oracle{\Machine{M}'}{\Oracle{\Machine{M}''}{\Omega}}}$.
We exhibit an oracle machine $\Oracle{\Machine{N}}{?} \in \iExpTime{(i+j)}$ such that $\Language{L} = \LanguageOf{\Oracle{\Machine{N}}{\Omega}}$.

Remember that the computation of a \emph{non}\/deterministic \iExponential{i}{}\nbdash-time machine $\Machine{M}$ can be simulated in deterministic \iExponential{(i+1)} time by a depth\nbdash-first exploration of the computation tree of $\Machine{M}$.

The oracle machine $\Machine{N}$ can decide $\Language{L}$ by simulating the working of $\Oracle{\Machine{M}'}{\Oracle{\Machine{M}''}{\Omega}}$ as follows.
Since $j \geq 1$, $\Machine{N}$ can simulate within \iExponential{(i+j)} time the working of $\Machine{M}'$, as $\Machine{M}'$ works in \iExponential{i} time.
When $\Machine{M}'$ calls its \iExponential{j}{}\nbdash-time oracle $\Machine{M}''$, we claim the machine $\Machine{N}$ can simulate the working of $\Machine{M}''$ as well.
Indeed, $\Machine{M}''$ is a \emph{deterministic} machine working in at most \iExponential{(i+j)} time, as $\Machine{M}''$ may receive from $\Machine{M}'$ \iExponential{i}{}ly-long queries, and $\Machine{N}$ by assumption can work in \iExponential{(i+j)} time.
When $\Machine{M}''$ calls the oracle $\Omega$, then $\Machine{N}$ calls $\Omega$ as well and progresses in the simulation.

\Proofsep

The inclusion $\Oracle{\iExpTime{(i+j)}}{\SigmaP{c}} \subseteq \BoundedOracle{\iNExpTime{k}}{\Oracle{\iExpTime{(i+j-k)}}{\SigmaP{c}}}{1}$ can be obtained by an argument very similar to the one showing $\Oracle{\iExpTime{(i+j)}}{\SigmaP{c}} \subseteq \BoundedOracle{\iExpTime{k}}{\Oracle{\iExpTime{(i+j-k)}}{\SigmaP{c}}}{1}$ in the proof of~\zcref{theo_exp_exp_containment}.
The only difference is that the oracle machine $\BoundedOracle{\Machine{N}'}{?}{1}$ there mentioned is here a $\iNExpTime{k}$ machine rather than a $\iExpTime{k}$ one.
\end{proof}

\getkeytheorem{NEequivalence}

\begin{proof}
By \zcref{theo_nexp_exp_containment}, $\Oracle{\iNExpTime{i}}{\Oracle{\iExpTime{j}}{\SigmaP{c}}} \subseteq \Oracle{\iExpTime{\ell}}{\SigmaP{c}} \subseteq \BoundedOracle{\iNExpTime{i'}}{\Oracle{\iExpTime{j'}}{\SigmaP{c}}}{1}$.
Clearly, the inclusion  $\BoundedOracle{\iNExpTime{i'}}{\Oracle{\iExpTime{j'}}{\SigmaP{c}}}{1} \subseteq \Oracle{\iNExpTime{i'}}{\Oracle{\iExpTime{j'}}{\SigmaP{c}}}$ holds.
Again, by \zcref{theo_nexp_exp_containment}, $\Oracle{\iNExpTime{i'}}{\Oracle{\iExpTime{j'}}{\SigmaP{c}}} \subseteq \Oracle{\iExpTime{\ell}}{\SigmaP{c}} \subseteq \BoundedOracle{\iNExpTime{i}}{\Oracle{\iExpTime{j}}{\SigmaP{c}}}{1}$.
We also have that $\BoundedOracle{\iNExpTime{i}}{\Oracle{\iExpTime{j}}{\SigmaP{c}}}{1} \subseteq \Oracle{\iNExpTime{i}}{\Oracle{\iExpTime{j}}{\SigmaP{c}}}$, from which the statement follows.
\end{proof}

\getkeytheorem{SEcontainment}

\begin{proof}
We start by showing that $\Oracle{\iExpSpace{(i-1)}}{\Oracle{\iExpTime{j}}{\SigmaP{c}}} \subseteq \Oracle{\iExpTime{(i+j)}}{\SigmaP{c}}$.

Let $\Language{L} \in \Oracle{\iExpSpace{(i-1)}}{\Oracle{\iExpTime{j}}{\SigmaP{c}}}$ be a language.
There are oracle machines $\Oracle{\Machine{M}'}{?} \in \iExpSpace{(i-1)}$ and $\Oracle{\Machine{M}''}{?} \in \iExpTime{j}$, and a \SigmaP{c} oracle $\Omega$, such that $\Language{L} = \LanguageOf{\Oracle{\Machine{M}'}{\Oracle{\Machine{M}''}{\Omega}}}$.
We exhibit a $\iExpTime{(i+j)}$ oracle machine $\Oracle{\Machine{N}}{?}$ such that $\Language{L} = \LanguageOf{\Oracle{\Machine{N}}{\Omega}}$.
The machine $\Machine{N}$ can decide $\Language{L}$ by simulating $\Oracle{\Machine{M}'}{\Oracle{\Machine{M}''}{\Omega}}$ as follows.
Since $j \geq 0$, $\Machine{N}$ can simulate within \iExponential{(i+j)} time the working of $\Machine{M}'$, as the $\iExpSpace{(i-1)}$ machine $\Machine{M}'$ runs for at most \iExponential{i} time.
When $\Machine{M}'$ calls its \iExponential{j}{}\nbdash-time oracle $\Machine{M}''$, we claim that $\Machine{N}$ can simulate the working of $\Machine{M}''$ as well.
Indeed, $\Machine{M}''$ is a \emph{deterministic} machine working in at most \iExponential{(i+j)} time, as $\Machine{M}''$ may receive from $\Machine{M}'$ \iExponential{i}{}ly-long queries, and $\Machine{N}$ by assumption can work in \iExponential{(i+j)} time.
When $\Machine{M}''$ calls $\Omega$, then $\Machine{N}$ calls $\Omega$ as well, and the simulation proceeds.

\Proofsep

The inclusion $\Oracle{\iExpTime{(i+j)}}{\SigmaP{c}} \subseteq \BoundedOracle{\iExpSpace{(k-1)}}{\Oracle{\iExpTime{(i+j-k)}}{\SigmaP{c}}}{1}$ can be obtained by an argument very similar to the one showing $\Oracle{\iExpTime{(i+j)}}{\SigmaP{c}} \subseteq \BoundedOracle{\iExpTime{k}}{\Oracle{\iExpTime{(i+j-k)}}{\SigmaP{c}}}{1}$ in the proof of~\zcref{theo_exp_exp_containment}.
The only difference is that the oracle machine $\BoundedOracle{\Machine{N}'}{?}{1}$ there mentioned is here a $\iExpSpace{(k-1)}$ machine rather than a $\iExpTime{k}$ one.
Remember that a $\iExpSpace{(k-1)}$ machine can run for \iExponential{k} time, and can hence produce the \iExponential{k}{}ly padded query required in the proof mentioned.
\end{proof}

\getkeytheorem{SEequivalence}

\begin{proof}
By \zcref{theo_expspace_exp_containment}, $\Oracle{\iExpSpace{(i-1)}}{\Oracle{\iExpTime{j}}{\SigmaP{c}}} \subseteq \Oracle{\iExpTime{\ell}}{\SigmaP{c}} \subseteq \BoundedOracle{\iExpSpace{(i'-1)}}{\Oracle{\iExpTime{j'}}{\SigmaP{c}}}{1}$.
Clearly, we have $\BoundedOracle{\iExpSpace{(i'-1)}}{\Oracle{\iExpTime{j'}}{\SigmaP{c}}}{1} \subseteq \Oracle{\iExpSpace{(i'-1)}}{\Oracle{\iExpTime{j'}}{\SigmaP{c}}}$.
By \zcref{theo_expspace_exp_containment}, $\Oracle{\iExpSpace{(i'-1)}}{\Oracle{\iExpTime{j'}}{\SigmaP{c}}} \subseteq \Oracle{\iExpTime{\ell}}{\SigmaP{c}} \subseteq \BoundedOracle{\iExpSpace{(i-1)}}{\Oracle{\iExpTime{j}}{\SigmaP{c}}}{1}$.
From $\BoundedOracle{\iExpSpace{(i-1)}}{\Oracle{\iExpTime{j}}{\SigmaP{c}}}{1} \subseteq \Oracle{\iExpSpace{(i-1)}}{\Oracle{\iExpTime{j}}{\SigmaP{c}}}$ the statement follows.
\end{proof}

\subsection[Proofs for Section~\ref*{sec_charting_nexp_oracles}]{Proofs for \zcref{sec_charting_nexp_oracles}}
\label{sec_detailed_proofs_charting_nexp_oracles}

\getkeytheorem{ENContainmentParallelCalls}

\begin{proof}
By definition, $\BoundedParOracle{\iExpTime{i}}{\Oracle{\iNExpTime{j}}{\SigmaP{c-1}}}{f(n)} = \DoubleBoundedParOracle{\iExpTime{i}}{\Oracle{\iNExpTime{j}}{\SigmaP{c-1}}}{f(n)}{1}$.
By \zcref{theo_exp_nexp_containment}, we have
\(\DoubleBoundedParOracle{\iExpTime{i}}{\Oracle{\iNExpTime{j}}{\SigmaP{c-1}}}{f(n)}{1} \subseteq \BoundedHausdCLASS{f(n)+1}{\Oracle{\iNExpTime{\ell}}{\SigmaP{c-1}}} \subseteq \DoubleBoundedPlusParOracle{\iExpTime{i'}}{\iNExpTime{j'}}{f(n)}{1}\).
Notice that
$\DoubleBoundedPlusParOracle{\iExpTime{i'}}{\iNExpTime{j'}}{f(n)}{1} \subseteq \BoundedParOracle{\iExpTime{i'}}{\Oracle{\iNExpTime{j'}}{\SigmaP{c-1}}}{f(n)+1}$, from which the statement follows.
\end{proof}

\getkeytheorem{ENContainmentAdaptiveCalls}

\begin{proof}
Notice that $\BoundedOracle{\iExpTime{i}}{\Oracle{\iNExpTime{j}}{\SigmaP{c-1}}}{f(n)} = \DoubleBoundedParOracle{\iExpTime{i}}{\Oracle{\iNExpTime{j}}{\SigmaP{c-1}}}{1}{f(n)}$.
By \zcref{theo_exp_nexp_containment}, we have
\(\DoubleBoundedParOracle{\iExpTime{i}}{\Oracle{\iNExpTime{j}}{\SigmaP{c-1}}}{1}{f(n)} \subseteq \BoundedHausdCLASS{2^{f(n)}}{\Oracle{\iNExpTime{\ell}}{\SigmaP{c-1}}} \subseteq \DoubleBoundedPlusParOracle{\iExpTime{i'}}{\Oracle{\iNExpTime{j'}}{\SigmaP{c-1}}}{1}{f(n)}\).
Observe now that
$\DoubleBoundedPlusParOracle{\iExpTime{i'}}{\Oracle{\iNExpTime{j'}}{\SigmaP{c-1}}}{1}{f(n)} \subseteq \BoundedOracle{\iExpTime{i'}}{\Oracle{\iNExpTime{j'}}{\SigmaP{c-1}}}{f(n)+1}$, and the statement follows.
\end{proof}

\getkeytheorem{ConstantRoundsParallelEqualsSingleRound}

\begin{proof}
Clearly, $\DoubleBoundedParOracle{\iExpTime{i}}{\Oracle{\iNExpTime{j}}{\SigmaP{c-1}}}{\iExpPolFunctions{g}}{k} \supseteq \BoundedParOracle{\iExpTime{i}}{\Oracle{\iNExpTime{j}}{\SigmaP{c-1}}}{\iExpPolFunctions{g}}$. By \zcref{theo_summary_generalized_equivalence_intermediate_levels_Hausdorff}, it holds $\BoundedParOracle{\iExpTime{i}}{\Oracle{\iNExpTime{j}}{\SigmaP{c-1}}}{\iExpPolFunctions{g}} = \BoundedParOracle{\iExpTime{i'}}{\Oracle{\iNExpTime{j'}}{\SigmaP{c-1}}}{\iExpPolFunctions{g}}$.
Thus, $\DoubleBoundedParOracle{\iExpTime{i}}{\Oracle{\iNExpTime{j}}{\SigmaP{c-1}}}{\iExpPolFunctions{g}}{k} \supseteq \BoundedParOracle{\iExpTime{i'}}{\Oracle{\iNExpTime{j'}}{\SigmaP{c-1}}}{\iExpPolFunctions{g}}$.

We now prove that $\DoubleBoundedParOracle{\iExpTime{i}}{\Oracle{\iNExpTime{j}}{\SigmaP{c-1}}}{\iExpPolFunctions{g}}{k} \subseteq \BoundedParOracle{\iExpTime{i'}}{\Oracle{\iNExpTime{j'}}{\SigmaP{c-1}}}{\iExpPolFunctions{g}}$.
Let $\Language{L} \in \DoubleBoundedParOracle{\iExpTime{i}}{\Oracle{\iNExpTime{j}}{\SigmaP{c-1}}}{\iExpPolFunctions{g}}{k}$ be a language.
We show that $\Language{L} \in \BoundedParOracle{\iExpTime{i'}}{\Oracle{\iNExpTime{j'}}{\SigmaP{c-1}}}{\iExpPolFunctions{g}}$ as well.
Since $\Language{L} \in \DoubleBoundedParOracle{\iExpTime{i}}{\Oracle{\iNExpTime{j}}{\SigmaP{c-1}}}{\iExpPolFunctions{g}}{k}$, there is a polynomial $p(n) \in \PolFunctions$ such that $\Language{L} \in \DoubleBoundedParOracle{\iExpTime{i}}{\Oracle{\iNExpTime{j}}{\SigmaP{c-1}}}{\iExp{g}{p(n)}}{k}$.
By \zcref{theo_exp_nexp_containment}, $\Language{L} \in \BoundedHausdCLASS{{(\iExp{g}{p(n)} + 1)}^{k}}{\Oracle{\iNExpTime{(i+j)}}{\SigmaP{c-1}}}$.
Observe that, for a fixed integer $k$, ${(\iExp{g}{p(n)} + 1)}^{k}$ is bounded by a function in $\iExpPolFunctions{g}$ (see \zcref{theo_polynomial_of_iterated_exponentials}.2 and \zcref{theo_polynomial_of_iterated_exponentials}.1), let us say $\iExp{g}{p'(n)}$, where $p'(n) \in \PolFunctions$.
Hence, by \zcref{theo_restricted_main_levels_longer_Hausdorff_as_expressive}, $\Language{L} \in \BoundedHausdCLASS{\iExp{g}{p'(n)}}{\Oracle{\iNExpTime{(i+j)}}{\SigmaP{c-1}}}$.
Since $\BoundedHausdCLASS{\iExp{g}{p'(n)}}{\Oracle{\iNExpTime{(i+j)}}{\SigmaP{c-1}}} \subseteq \BoundedHausdCLASS{\iExpPolFunctions{g}}{\Oracle{\iNExpTime{(i+j)}}{\SigmaP{c-1}}}$, we have $\Language{L} \in \BoundedHausdCLASS{\iExpPolFunctions{g}}{\Oracle{\iNExpTime{(i+j)}}{\SigmaP{c-1}}}$.
By \zcref{theo_summary_generalized_equivalence_intermediate_levels_Hausdorff}, $\BoundedHausdCLASS{\iExpPolFunctions{g}}{\Oracle{\iNExpTime{(i+j)}}{\SigmaP{c-1}}} = \BoundedParOracle{\iExpTime{i'}}{\Oracle{\iNExpTime{j'}}{\SigmaP{c-1}}}{\iExpPolFunctions{g}}$, by which $\Language{L} \in \BoundedParOracle{\iExpTime{i'}}{\Oracle{\iNExpTime{j'}}{\SigmaP{c-1}}}{\iExpPolFunctions{g}}$ follows.
\end{proof}

\getkeytheorem{ChainBooleanHierarchy}

\begin{proof}
By \zcref{theo_specific_chain_parallel_hausdorff}, for an integer constant $k$, we have $\BoundedHausdCLASS{k}{\Oracle{\iNExpTime{i}}{\SigmaP{c-1}}} \subseteq \BoundedParOracle{\iExpTime{i}}{\SigmaP{c}}{k} \subseteq \BoundedHausdCLASS{k+1}{\Oracle{\iNExpTime{i}}{\SigmaP{c-1}}}$.
On the other hand, if we consider the complement classes,\footnote{We here use the following property: if $\ComplexityClass{C}$ and $\ComplexityClass{D}$ are two language classes, then $\ComplexityClass{C} \subseteq \ComplexityClass{D} \Leftrightarrow \ComplementPrefix\ComplexityClass{C} \subseteq \ComplementPrefix\ComplexityClass{D}$ \cite[Lemma~1.15]{BalcazarDG1995}.} we obtain that $\ComplementPrefixKerned\BoundedHausdCLASS{k}{\Oracle{\iNExpTime{i}}{\SigmaP{c-1}}} \subseteq \ComplementPrefix\BoundedParOracle{\iExpTime{i}}{\SigmaP{c}}{k} \subseteq \ComplementPrefixKerned\BoundedHausdCLASS{k+1}{\Oracle{\iNExpTime{i}}{\SigmaP{c-1}}}$.
Since $\ComplementPrefix\BoundedParOracle{\iExpTime{i}}{\SigmaP{c}}{k} = \BoundedParOracle{\iExpTime{i}}{\SigmaP{c}}{k}$, by the relationships just highlighted, it follows that $(\BoundedHausdCLASS{k}{\Oracle{\iNExpTime{i}}{\SigmaP{c-1}}} \cup \ComplementPrefixKerned\BoundedHausdCLASS{k}{\Oracle{\iNExpTime{i}}{\SigmaP{c-1}}}) \subseteq \BoundedParOracle{\iExpTime{i}}{\SigmaP{c}}{k} \subseteq (\BoundedHausdCLASS{k+1}{\Oracle{\iNExpTime{i}}{\SigmaP{c-1}}} \cap \ComplementPrefixKerned\BoundedHausdCLASS{k+1}{\Oracle{\iNExpTime{i}}{\SigmaP{c-1}}})$.
\end{proof}

\getkeytheorem{SummaryNNHausdorff}

\begin{proof}
By \zcref{theo_nexp_nexp_containment}, we have $\Oracle{\iNExpTime{i}}{\Oracle{\iNExpTime{j}}{\SigmaP{c-1}}} \subseteq \BoundedHausdCLASS{\iExpPolFunctions{i+1}}{\Oracle{\iNExpTime{(i+j)}}{\SigmaP{c-1}}} \subseteq \BoundedOracle{\iNExpTime{i}}{\bigl(\BoundedHausdCLASS{{\scriptstyle 2}}{\Oracle{\iNExpTime{j}}{\SigmaP{c-1}}}\bigr)}{1} \subseteq \BoundedParOracle{\iNExpTime{i}}{\Oracle{\iNExpTime{j}}{\SigmaP{c-1}}}{2}$.
By $\BoundedParOracle{\iNExpTime{i}}{\Oracle{\iNExpTime{j}}{\SigmaP{c-1}}}{2} \subseteq \Oracle{\iNExpTime{i}}{\Oracle{\iNExpTime{j}}{\SigmaP{c-1}}}$, the statement follows.
\end{proof}

\getkeytheorem{NN1EquivalenceBH}

\begin{proof}
By \zcref{theo_NN1_containment_BH}, $\BoundedOracle{\iNExpTime{i}}{\Oracle{\iNExpTime{j}}{\SigmaP{c-1}}}{1} \subseteq \ComplementPrefixKerned\BoundedHausdCLASS{2}{\Oracle{\iNExpTime{\ell}}{\SigmaP{c-1}}} \subseteq \BoundedOracle{\iNExpTime{i'}}{\Oracle{\iNExpTime{j'}}{\SigmaP{c-1}}}{1}$.
Again by \zcref{theo_NN1_containment_BH}, $\BoundedOracle{\iNExpTime{i'}}{\Oracle{\iNExpTime{j'}}{\SigmaP{c-1}}}{1} \subseteq \ComplementPrefixKerned\BoundedHausdCLASS{2}{\Oracle{\iNExpTime{\ell}}{\SigmaP{c-1}}} \subseteq \BoundedOracle{\iNExpTime{i}}{\Oracle{\iNExpTime{j}}{\SigmaP{c-1}}}{1}$, and the statement follows.
\end{proof}

\getkeytheorem{SummarySNHausdorff}

\begin{proof}
\zcref[S]{theo_expspace_nexp_containment} implies that
\[\Oracle{\iExpSpace{(i-1)}}{\Oracle{\iNExpTime{j}}{\SigmaP{c-1}}} \subseteq \BoundedHausdCLASS{\iExpPolFunctions{i}}{\Oracle{\iNExpTime{(i+j)}}{\SigmaP{c-1}}} \subseteq \mkern -2mu
\begin{cases}
  \BoundedOracle{\iExpSpace{(i-1)}}{\Oracle{\iNExpTime{j}}{\SigmaP{c-1}}}{\iExpPolFunctions{(i-1)}} \\
  \BoundedParOracle{\iExpSpace{(i-1)}}{\Oracle{\iNExpTime{j}}{\SigmaP{c-1}}}{\iExpPolFunctions{i}}.
\end{cases}\]
From the fact that
\[
\left.\begin{aligned}
\BoundedOracle{\iExpSpace{(i-1)}}{\Oracle{\iNExpTime{j}}{\SigmaP{c-1}}}{\iExpPolFunctions{(i-1)}}& \\
\BoundedParOracle{\iExpSpace{(i-1)}}{\Oracle{\iNExpTime{j}}{\SigmaP{c-1}}}{\iExpPolFunctions{i}} = \ParOracle{\iExpSpace{(i-1)}}{\Oracle{\iNExpTime{j}}{\SigmaP{c-1}}}&
\end{aligned}
\right\}
\mkern -2mu \subseteq \Oracle{\iExpSpace{(i-1)}}{\Oracle{\iNExpTime{j}}{\SigmaP{c-1}}}
\]
also hold, the statement follows.
\end{proof}

\subsection[Proofs for Section~\ref*{sec_canonical_hard_problems}]{Proofs for \zcref{sec_canonical_hard_problems}}
\label{sec_detailed_proofs_sec_canonical_hard_problems}

\getkeytheorem{HardnessOddityGeneral}

\begin{proof}
We start by considering the case in which the language $\Language{A}$ is complete for $\Oracle{\iNExpTime{i}}{\SigmaP{c-1}}$.

(\emph{Membership}).
The task can easily be shown in $\BoundedParOracle{\iExpTime{i}}{\SigmaP{c}}{\PolFunctions}$.
Indeed, we first write on the query tape, for each string $w_z \in \StringTup{w}$, a string $\wt{w}_z$, where $\wt{w}_z$ is an \iExponential{i}{}ly-padded version of $w_z$.
Then, we ask the $\SigmaP{c}$ oracle to decide all the queries in parallel.
Notice that the oracle can actually decide $\Language{A} \in \Oracle{\iNExpTime{i}}{\SigmaP{c-1}}$, because the $\SigmaP{c} = \Oracle{\NPTime}{\SigmaP{c-1}}$ oracle receives in input \iExponential{i}{}ly-padded strings;
hence, the $\NPTime$ part of $\Oracle{\NPTime}{\SigmaP{c-1}}$ can act as an $\iNExpTime{i}$ oracle machine.
We answer \yeslbl iff the number of \yesansws obtained from the oracle is odd.

(\emph{Hardness}).
Let $\Language{L}$ be an arbitrary language in $\BoundedOracle{\iExpTime{i}}{\SigmaP{c}}{\LogFunctions} = \BoundedParOracle{\iExpTime{i}}{\SigmaP{c}}{\PolFunctions}$.
We exhibit a reduction from $\Language{L}$ to the task in the statement of this \zcref*[typeset=name,nocap]{theo_hardness_oddity_general}.
This reduction builds from a string $w$ a tuple $\StringTup{w} = \tup{w_1,\dots,w_n}$ of strings such that $w \in \Language{L}$ iff the number of \yesinsts of $\Language{A}$ in $\StringTup{w}$ is odd.

Since $\Language{L} \in \BoundedOracle{\iExpTime{i}}{\SigmaP{c}}{\LogFunctions} = \BoundedParOracle{\iExpTime{i}}{\SigmaP{c}}{\PolFunctions} = \BoundedHausdCLASS{\PolFunctions}{\Oracle{\iNExpTime{i}}{\SigmaP{c-1}}}$ (see \zcref{theo_summary_equivalence_intermediate_levels_Hausdorff}), $\Language{L}$ is an $\Oracle{\iNExpTime{i}}{\SigmaP{c-1}}$ Hausdorff language of length $p(n)$, for some polynomial $p(n)$.
By this, there is an $\Oracle{\iNExpTime{i}}{\SigmaP{c-1}}$ Hausdorff predicate $\Language{D}$ of length $p(n)$ characterizing $\Language{L}$, which means that, for every string $w$, $w \in \Language{L} \Leftrightarrow \SetSize{\set{z \mid 1\leq z \leq p(\StringLength{w}) \land \linebreak[0] \Language{D}(w,z) = 1}}$ is odd.
Notice that the values of $z$ of interest are polynomially bounded in $\StringLength{w}$ (by $p(\StringLength{w})$; see above).
Hence, the size of the representation of $z$ in the pairs $\tup{w,z}$ does not actually play a role in the complexity of the Hausdorff predicate $\Language{D}$, even if we considered $z$ as a part of the input, which the complexity of $\Language{D}$ has to be measured with respect to---%
remember that the complexity of Hausdorff predicates is measured \Wrt $\StringLength{w}$ only.
Therefore, since $\Language{A}$ is complete for $\Oracle{\iNExpTime{i}}{\SigmaP{c-1}}$, there is a polynomial reduction $g$ from $\Language{D}$ to $\Language{A}$.

Let $w_z = g(w,z)$, and consider the tuple $\StringTup{w} = \tup{w_1,\dots,w_{p(\StringLength{w})}}$ of strings.
Assume \Wlog that $p(\StringLength{w})$ is an even value, as, if this were not the case, we can append to the tuple a last (fixed) string that trivially is a \noinst for $\Language{A}$.
By construction, ${\Language{A}}(w_1) \geq \dots \geq {\Language{A}}(w_{p(\StringLength{w})})$, and the number of \yesinsts of $\Language{A}$ in $\StringTup{w}$ is odd iff $w \in \Language{L}$.
Hence, via the function $g$ we can build a polynomial reduction from $\Language{L} \in \BoundedOracle{\iExpTime{i}}{\SigmaP{c}}{\LogFunctions} = \linebreak[0] \BoundedParOracle{\iExpTime{i}}{\SigmaP{c}}{\PolFunctions}$ to the task of the statement, which is thus proven hard for $\BoundedOracle{\iExpTime{i}}{\SigmaP{c}}{\LogFunctions} = \BoundedParOracle{\iExpTime{i}}{\SigmaP{c}}{\PolFunctions}$.

\Proofsep

\noindent
Let us now consider the case in which the language $\Language{A}$ is complete for $\ComplementPrefixKerned\Oracle{\iNExpTime{i}}{\SigmaP{c-1}}$.

(\emph{Membership}).
Proving the membership in $\BoundedOracle{\iExpTime{i}}{\SigmaP{c}}{\LogFunctions} = \BoundedParOracle{\iExpTime{i}}{\SigmaP{c}}{\PolFunctions}$ for this case 
is very similar to the case above;
simply, the oracle now decides the $\Oracle{\iNExpTime{i}}{\SigmaP{c-1}}$\CompleteSuffix language~$\compl{\Language{A}}$ complement to $\Language{A}$, and we answer \yeslbl iff $n$ minus the number of \yesansws from \mbox{the oracle for $\compl{\Language{A}}$ is odd}.

(\emph{Hardness}).
Consider the $\Oracle{\iNExpTime{i}}{\SigmaP{c-1}}$\CompleteSuffix language $\compl{\Language{A}}$ complement to $\Language{A}$, and consider the task of deciding, for a tuple $\StringTup{w} = \tup{w_1,\dots,w_n}$ of strings, with $\compl{\Language{A}}(w_1) \geq \dots \geq \compl{\Language{A}}(w_n)$, whether the number of \yeslbl-instances of $\compl{\Language{A}}$ in $\StringTup{w}$ is odd.
This task, by the discussion above, is hard for $\BoundedOracle{\iExpTime{i}}{\SigmaP{c}}{\LogFunctions} = \BoundedParOracle{\iExpTime{i}}{\SigmaP{c}}{\PolFunctions}$, even if $n$ is restricted to be an even number.
We reduce this task to that of deciding, for a tuple $\StringTup{v} = \tup{v_1,\dots,v_m}$ of strings, with $\Language{A}(v_1) \geq \dots \geq \Language{A}(v_m)$, whether the number of \yesinsts of $\Language{A}$ in $\StringTup{v}$ is odd.
This will show that the task of the statement is hard for $\BoundedOracle{\iExpTime{i}}{\SigmaP{c}}{\LogFunctions} = \BoundedParOracle{\iExpTime{i}}{\SigmaP{c}}{\PolFunctions}$ also when $\Language{A}$ is complete for $\ComplementPrefixKerned\Oracle{\iNExpTime{i}}{\SigmaP{c-1}}$.

The tuple $\StringTup{v}$ is simply obtained from $\StringTup{w}$ by reverting the order of the strings in $\StringTup{w}$.
More precisely, $\StringTup{v}$ contains as many strings as those in $\StringTup{w}$, and, for each $1\leq z\leq n$, we let $v_z = w_{(n+1)-z}$.
Observe that $\compl{\Language{A}}(w_1) \geq \dots \geq \compl{\Language{A}}(w_n)$ implies that $\Language{A}(w_1) \leq \dots \leq \Language{A}(w_n)$, and, by the definition of the strings $v_z$, it follows that $\Language{A}(v_1) \geq \dots \geq \Language{A}(v_n)$---hence, the constraint on the input tuple is met.

We now show that $\StringTup{w}$ contains an odd number of \yeslbl-instances of $\compl{\Language{A}}$ if and only if $\StringTup{v}$ contains an odd number of \yeslbl-instances of $\Language{A}$.
If there is an odd number of \yesinsts of $\compl{\Language{A}}$ in~$\StringTup{w}$, then, since $n$ is even, there is also an odd number of \noinsts of $\compl{\Language{A}}$ in~$\StringTup{w}$.
By the fact that the strings $v_z \in \StringTup{v}$ are the same strings of those in~$\StringTup{w}$, just in a different order, we have that the number of \yesinsts of $\Language{A}$ in~$\StringTup{v}$ is odd as well.
By symmetry, if the number of \yesinsts of $\compl{\Language{A}}$ in~$\StringTup{w}$ is even, then the number of \yesinsts of $\Language{A}$ in~$\StringTup{v}$ is even as well.
\end{proof}

\getkeytheorem{HardnessCountCompGeneral}

\begin{proof}
(\emph{Membership}).
A $\BoundedParOracle{\iExpTime{i}}{\SigmaP{c}}{\PolFunctions}$ procedure deciding the task of the statement can be as follows.
Let $\Language{C}$ be a language complete for $\Oracle{\iNExpTime{i}}{\SigmaP{c-1}}$ (resp., $\ComplementPrefixKerned\Oracle{\iNExpTime{i}}{\SigmaP{c-1}}$).
Since $\Language{A}$ and $\Language{B}$ are in $\Oracle{\iNExpTime{i}}{\SigmaP{c-1}}$ (resp., $\ComplementPrefixKerned\Oracle{\iNExpTime{i}}{\SigmaP{c-1}}$), there exist polynomial reductions $g$ and $h$ from $\Language{A}$ and $\Language{B}$ to $\Language{C}$, respectively.
We prepare the queries $g(w_1), \dots, g(w_n),\linebreak[0] h(v_1), \dots, h(v_m)$, each of them \iExponential{i}{}lly-padded, and then we ask them in parallel to an oracle for $\Language{C}$---an argument similar to that of the proof of \zcref{theo_hardness_oddity_general} shows that an oracle in $\SigmaP{c}$ can decide the language $\Language{C} \in \Oracle{\iNExpTime{i}}{\SigmaP{c-1}}$ when receiving \iExponential{i}{}ly-padded queries.
We conclude by counting the number of \yesansws in the two groups, and answer accordingly (see the proof of \zcref{theo_hardness_oddity_general}).

(\emph{Hardness}).
Let $\Language{A}$ be a language complete for $\Oracle{\iNExpTime{i}}{\SigmaP{c-1}}$ (resp., $\ComplementPrefixKerned\Oracle{\iNExpTime{i}}{\SigmaP{c-1}}$).
Let $\StringTup{u} = \tup{u_1,\dots,u_\ell}$ be a tuple of strings, with $\ell$ even and $\Language{A}(u_1) \geq \dots \geq \Language{A}(u_\ell)$.
By \zcref{theo_hardness_oddity_general}, deciding whether the number of \yesinsts of $\Language{A}$ in $\StringTup{u}$ is odd is complete for $\BoundedOracle{\iExpTime{i}}{\SigmaP{c}}{\LogFunctions} = \BoundedParOracle{\iExpTime{i}}{\SigmaP{c}}{\PolFunctions}$.
Starting from $\StringTup{u} = \tup{u_1,\dots,u_\ell}$, the reduction builds two tuples of strings $\StringTup{w} = \tup{w_1,\dots,w_n}$ and $\StringTup{v} = \tup{v_1,\dots,v_m}$ such that the number of \yesinsts of $\Language{A}$ in $\StringTup{u}$ is odd iff the number of \yesinsts of $\Language{A}$ in $\StringTup{w}$ is greater than the number of \yesinsts of $\Language{B}$(${} = \Language{A}$) in $\StringTup{v}$.

Intuitively, we build the tuples $\StringTup{w}$ and $\StringTup{v}$ by filling $\StringTup{w}$ with the odd\nbdash-indexed words from $\StringTup{u}$, and $\StringTup{v}$ with the even\nbdash-indexed words from $\StringTup{u}$.
More precisely, we define the tuples $\StringTup{w} = \tup{w_1,\dots,w_n}$ and $\StringTup{v} = \tup{v_1,\dots,v_m}$ as follows:
$w_z = u_{2z-1}$ and $v_z = u_{2z}$, for all $1 \leq z \leq \ell / 2$.
By definition, we have $\StringLength{\StringTup{w}} = \StringLength{\StringTup{v}}$ (because $\ell$ is even), $\Language{A}(w_1) \geq \dots \geq \Language{A}(w_n)$, and  $\Language{B}(v_1) \geq \dots \geq \Language{B}(v_m)$ (remember that we have chosen $\Language{B} = \Language{A}$).

It is not hard to verify that the number of \yesinsts of $\Language{A}$ in $\StringTup{u}$ is odd iff the number of \yesinsts of $\Language{A}$ in $\StringTup{w}$ is greater than the number of \yesinsts of $\Language{B}$(${} = \Language{A}$) in $\StringTup{v}$.
\end{proof}

\getkeytheorem{HardnessCountCompTwoLangOneSet}

\begin{proof}
(\emph{Membership}).
Proven by a procedure similar to that in the proof of \zcref{theo_hardness_count_comp_general_2L_2S}.
The difference here is that the parallel queries submitted to the oracle are $g(w_1), \dots, g(w_n),\linebreak[0] h(w_1), \dots, h(w_n)$.
Notice that some of the strings of $\StringTup{w}$ are instances of $\Language{A}$ and some others are instances of $\Language{B}$, and hence they might be over different alphabets---this does not happen when $\Language{A}$ and $\Language{B}$ are defined over the same alphabet, e.g., the binary one.
In the case of different alphabets, the reduction functions $g$ and $h$, since they both are run over all strings of $\StringTup{w}$, must be designed so that when they read an unexpected symbol, they accordingly produce a \noinst of $\Language{A}$ or $\Language{B}$.

(\emph{Hardness}).
Let $\Language{C}$ be a language complete for $\Oracle{\iNExpTime{i}}{\SigmaP{c-1}}$ (resp., $\ComplementPrefixKerned\Oracle{\iNExpTime{i}}{\SigmaP{c-1}}$).
Let $\StringTup{u} = \tup{u_1,\dots,u_r}$ and $\StringTup{v} = \tup{v_1,\dots,v_s}$ be two tuples of strings.
By \zcref{theo_hardness_count_comp_general_2L_2S}, deciding whether the number of \yesinsts of $\Language{C}$ in $\StringTup{u}$ is greater than the number of \yesinsts of $\Language{C}$ in $\StringTup{v}$ is complete for $\BoundedOracle{\iExpTime{i}}{\SigmaP{c}}{\LogFunctions} = \BoundedParOracle{\iExpTime{i}}{\SigmaP{c}}{\PolFunctions}$.

We now exhibit a reduction transforming $\StringTup{u}$ and $\StringTup{v}$ into a tuple of strings $\StringTup{w} = \tup{w_1,\dots,w_n}$ such that the number of \yesinsts of $\Language{C}$ in $\StringTup{u}$ is greater than the number of \yesinsts of $\Language{C}$ in $\StringTup{v}$ iff, for languages $\Language{A}$ and $\Language{B}$, 
the number of \yesinsts of $\Language{A}$ in $\StringTup{w}$ is greater than the number of \yesinsts of $\Language{B}$ in $\StringTup{w}$.

Let $\hashsep$ and $\dollarsep$ be two fresh alphabet symbols.
In what follows, when strings include these fresh symbols, we actually mean suitable re-encodings of these strings over the binary alphabet.
We define the languages $\Language{C}_1$ and $\Language{C}_2$ as follows:
$\Language{C}_1 = \set{\hashsep x \mid x \in \StringUniverse \land x \in \Language{C}}$ and $\Language{C}_2 = \set{\dollarsep x \mid x \in \StringUniverse \land x \in \Language{C}}$.
These two languages are in $\Oracle{\iNExpTime{i}}{\SigmaP{c-1}}$ (resp., $\ComplementPrefixKerned\Oracle{\iNExpTime{i}}{\SigmaP{c-1}}$) by definition.
We define the tuple of strings $\StringTup{w}' = \tup{\hashsep u_1,\dots,\hashsep u_r,\dollarsep v_1,\dots,\dollarsep v_s}$.
It is not hard to verify that the number of \yesinsts of $\Language{C}$ in $\StringTup{u}$ is greater than the number of \yesinsts of $\Language{C}$ in $\StringTup{v}$ iff the number of \yesinsts of $\Language{C}_1$ in $\StringTup{w}'$ is greater than the number of \yesinsts of $\Language{C}_2$ in $\StringTup{w}'$.

Since $\Language{C}_1$ and $\Language{C}_2$ are in $\Oracle{\iNExpTime{i}}{\SigmaP{c-1}}$ (resp., $\ComplementPrefixKerned\Oracle{\iNExpTime{i}}{\SigmaP{c-1}}$) and $\Language{A}$ and $\Language{B}$ are complete for $\Oracle{\iNExpTime{i}}{\SigmaP{c-1}}$ (resp., $\ComplementPrefixKerned\Oracle{\iNExpTime{i}}{\SigmaP{c-1}}$), there exist polynomial reductions $g$ and $h$ from $\Language{C}_1$ and $\Language{C}_2$ to $\Language{A}$ to $\Language{B}$, respectively.
Consider the tuple of strings $\StringTup{w} = \tup{g(\hashsep u_1),\dots,g(\hashsep u_r),h(\dollarsep v_1),\dots,h(\dollarsep v_s)}$.
By definition of $\StringTup{w}$, the number of \yesinsts of $\Language{C}_1$ in $\StringTup{w}'$ is greater than the number of \yesinsts of $\Language{C}_2$ in $\StringTup{w}'$ iff the number of \yesinsts of $\Language{A}$ in $\StringTup{w}$ is greater than the number of \yesinsts of $\Language{B}$ in $\StringTup{w}$.
\end{proof}

\getkeytheorem{NPNexpHardnessTwoMachines}

\begin{proof}
For presentation purposes, let us call \NiExpNjExpGenericProblem{i}{j} the problem defined in the statement.

\emph{(Membership).}
First notice that, for a triple $\tup{\Machine{M}_x,y,z}$, where $\Machine{M}_x$ is a (string encoding a) machine, $y$ is a string, and $z$ is an integer in unary, with $\StringLength{z}$ polynomially\nbdash-bounded in $\StringLength{y}$, deciding whether $\Machine{M}_x$ accepts $y$ within $\iExp{k}{z}$ steps, where $k \geq 0$ is a fixed integer, can be shown in $\iNExpTime{k}$.
The proof is similar to showing that deciding whether $\Machine{M}_x$ accepts $y$ within $z$ steps is in \NPTime (see, e.g., \cite[Theorem~2.9]{Arora2009} or \cite[Theorem~2.19]{Goldreich2008}).

We claim that \NiExpNjExpGenericProblem{i}{j} is in $\Oracle{\iNExpTime{i}}{\iNExpTime{j}}$.
Indeed, an $\iNExpTime{i}$ oracle machine $\Oracle{\Machine{M}}{?}$ can first guess a string $v$ that is \iExponential{i}{}ly-long in~$\StringLength{r}$.
Then, via two $\iNExpTime{j}$ oracle calls ``$\tup{\Machine{M}_\alpha,v,t}$'' and ``$\tup{\Machine{M}_\beta,v,t}$'' to a suitable oracle $\Gamma$, the machine $\Oracle{\Machine{M}}{?}$ can check that~$v$ is accepted by $M_\alpha$ within $\iExp{i+j}{t}$ steps, and $v$ is not accepted by $M_\beta$ within $\iExp{i+j}{t}$ steps.
Observe that $\Gamma$ does \emph{not} need to be a $\iNExpTime{(i+j)}$ oracle, but being a $\iNExpTime{j}$ is enough.
Indeed, the queries that $\Gamma$ receives from $\Oracle{\Machine{M}}{?}$ contain a string~$v$ that is \iExponential{i}{}ly-long in the size of the string in input to $\Oracle{\Machine{M}}{?}$; 
for this reason, although $\Gamma$ is a \iExponential{j}\nbdash-time machine, it can run for $\iExp{i+j}{t}$ steps.

\emph{(Hardness).}
Let $\Language{L}$ be an arbitrary $\Oracle{\iNExpTime{i}}{\iNExpTime{j}}$ language.
We prove $\Language{L} \KarpRed{}$\NiExpNjExpGenericProblem{i}{j} by exhibiting a reduction that, for every string $w$, produces a four\nbdash-tuple $\mathcal{I}_w = \tup{\Machine{M}_{\alpha_w},\Machine{M}_{\beta_w},r_w,t_w}$ such that $w \in \Language{L}$ iff $\mathcal{I}_w$ is a \yesinst of \NiExpNjExpGenericProblem{i}{j}.
Since $\Language{L} \in \Oracle{\iNExpTime{i}}{\iNExpTime{j}}$, by \zcref{theo_summary_nexp_nexp_Hausdorff} $\Language{L} \in \BoundedHausdCLASS{\iExpPolFunctions{i+1}}{\iNExpTime{(i+j)}}$ as well.
Therefore, there are a polynomial $p(n)$ and a $\iNExpTime{(i+j)}$ Hausdorff predicate $\Language{D}$ of length $\iExp{i+1}{p(n)}$ such that, for every string $w$, $w \in \Language{L}$ iff the Hausdorff index $\HausdIndex{w}{\Language{D}}$ of $w$ \Wrt $\Language{D}$ is odd.
Hence,
\begin{align}
w \in \Language{L} 
    & \Leftrightarrow (\exists z \in \NaturalsDomain) \; (1 \leq z \leq \iExp{i+1}{p(\StringLength{w})} \land \text{$z$ is odd} \land \Language{D}(w,z) = 1 \land \Language{D}(w,z+1) = 0) \nonumber \\
    & \Leftrightarrow (\exists z \in \NaturalsDomain) \; (0 \leq z \leq \iExp{i+1}{p(\StringLength{w})}-1 \land \text{$z$ is even} \land \Language{D}(w,z+1) = 1 \land \Language{D}(w,z+2) = 0). \label{eq_w_belonging_to_L_iff_Hausdorff_pair}
\end{align}

From~$w$, we obtain $\mathcal{I}_w = \tup{\Machine{M}_{\alpha_w},\Machine{M}_{\beta_w},r_w,t_w}$ as follows.
Let $\Omega$ be a machine deciding $\Language{D}$.
Strings $\Machine{M}_{\alpha_w}$ and $\Machine{M}_{\beta_w}$ encode transition functions for machines that, on every binary input string $z$, 
answer as follows:
\begin{equation}
\label{eq_deinition_M_alpha_and_beta}
\begin{aligned}
\Machine{M}_{\alpha_w}(z) = 1 & \Leftrightarrow z \in 0 | ( 1 (0|1)^* 0 ) \land \Omega(w, z + 1) = 1 \\
\Machine{M}_{\beta_w}(z) = 1 & \Leftrightarrow z \in 0 | ( 1 (0|1)^* ) \land \Omega(w, z + 2) = 1.
\end{aligned}
\end{equation}

Consider $\Machine{M}_{\alpha_w}$.
Its transition function specifies a three\nbdash-phase computation:
(i)~$\Machine{M}_{\alpha_w}$ checks whether the input string $z$ belong to the regular langiage described by the regular expression $0 | ( 1 (0|1)^* 0 )$, which is checking that $z$ is a canonical binary representation of an even number, then increments $z$ by one and obtain $z'$ which is written on a work tape;
(ii)~the string $w$, which is hardcoded in the transition function of $\Machine{M}_{\alpha_w}$, is written on the work tape followed by the symbol `$\hashsep$', so to have the string $w \hashsep z'$ on the work tape; and
(iii) $\Machine{M}_{\alpha_w}$ acts as $\Omega$ by using the work tape as the input tape.
We claim that (the string encoding) the transition function of $\Machine{M}_{\alpha_w}$ can be obtained in polynomial time from~$w$.
Indeed, the parts of the transition function encoding phases~(i) and~(iii) are constant \Wrt~$w$, and the part of the transition function encoding phase~(ii) requires linear time in~$\StringLength{w}$ to be generated.
Similarly, 
the transition function of~$\Machine{M}_{\beta_w}$ can be obtained in polynomial time from~$w$.

Numbers $r_w$ and $t_w$ are defined as follows.
The former is simply $r_w = p(\StringLength{w})$.
Notice that, since $p(\cdot)$ is a fixed polynomial as it depends on the Hausdorff predicate $\Language{D}$, $r_w$ can be obtained in polynomial time in the size of~$w$ (see \zcref{sec_maths_complexity}, Polynomials).
Regarding the number $t_w$, it needs to be big enough so that $\Machine{M}_{\alpha_w}$ and $\Machine{M}_{\beta_w}$ can conclude their computations within $\iExp{i+j}{t_w}$ steps on an input of length at most $\iExp{i}{r_w}$.

We can estimate a lower bound value for $t_w$ as follows.
Since $\Language{D}$ is a $\iNExpTime{(i+j)}$ Hausdorff predicate, the machine $\Omega$ deciding $\pair{w,z} \in \Language{D}$ runs in time $\iExp{i+j}{q(\StringLength{w})}$, for some polynomial $q(n)$---%
remember that the running time of a machine deciding a Hausdorff predicate is linked to the size of $w$ only. 
Let us consider $\Machine{M}_{\alpha_w}$ and its computation phases.
Phase~(i) can be carried out in $\StringLength{z} + 2$ steps.
Indeed, we need to scan the input string $z$ once (costing $\StringLength{z}$ steps) and double check that:
either $z = {} \text{``$0$''}$ (which is the canonical binary form of the even number $0$), or $z$ starts with `$1$' and it ends with `$0$', so $z$ is a canonical binary representation of some even integer.
While doing this, we copy $z$ on a work tape. 
We then substitute on the work tape the right-most bit `$0$' with a bit `$1$' (and we move the head to the left while overwriting);
this is tantamount to adding $1$ to the copy of $z$ on the work tape, and this operation costs $2$ steps.
Phase~(ii), requires to move the work tape head towards the left, which requires $\StringLength{z} - 1$ steps (as the head is on the second symbol from the right), and then to write ``$w\hashsep$'' on work tape and position the head on the first symbol of $w$ we need $\StringLength{w} + 2$ steps (we can write $w$ leftward).
Phase~(iii) requires time $\iExp{i+j}{q(\StringLength{w})}$, as it replicates the execution of $\Omega$ over ``$w \hashsep z'$''.
Since $0 \leq z \leq \iExp{i+1}{p(\StringLength{w})}-1$, and hence $\StringLength{z} \leq \iExp{i}{p(\StringLength{w})}$, the overall running time for $\Machine{M}_{\alpha_w}$ over an input string $z$ is bounded by $2 \cdot \iExp{i}{p(\StringLength{w})} + 1 + \StringLength{w} + 2 + \iExp{i+j}{q(\StringLength{w})}$.
The running time of $\Machine{M}_{\beta_w}$ is bounded by the same function, as the computation of $\Machine{M}_{\beta_w}$ takes no longer than the computation of $\Machine{M}_{\alpha_w}$.
Notice that:
\begin{multline*}
2 \cdot \iExp{i}{p(\StringLength{w})} + \StringLength{w} + 3 + \iExp{i+j}{q(\StringLength{w})}
{} \leq {}
\iExp{i+j}{2 \cdot p(\StringLength{w})} + \iExp{i+j}{\StringLength{w} + 3} + \iExp{i+j}{q(\StringLength{w})}
{} \leq {} \\
\iExp{i+j}{2 \cdot p(\StringLength{w})} \cdot \iExp{i+j}{\StringLength{w} + 3} \cdot \iExp{i+j}{q(\StringLength{w})}
{} \leq {}
\iExp{i+j}{2 \cdot p(\StringLength{w}) + \StringLength{w} + 3 + q(\StringLength{w})}.
\end{multline*}
We hence take $t_w > 2 \cdot p(\StringLength{w}) + \StringLength{w} + 3 + q(\StringLength{w})$, which can be computed in polynomial time in~$\StringLength{w}$, as $p(\cdot)$ and $q(\cdot)$ are fixed polynomial depending on $\Language{D}$ and $\Omega$ (see \zcref{sec_maths_complexity}, Arithmetic and Polynomials).

Since
(a) $r_w$ is defined so that $r_w$ bits can represent integers up to $\iExp{i}{p(\StringLength{w})}-1$,
(b)~$t_w$ is chosen so that $\Machine{M}_{\alpha_w}$ and $\Machine{M}_{\beta_w}$ can complete within $\iExp{i+j}{t_w}$ steps their computations over inputs of size $r_w$, and
(c)~$\Machine{M}_{\alpha_w}$ and $\Machine{M}_{\beta_w}$ are defined as in \zcref{eq_deinition_M_alpha_and_beta},
by \zcref{eq_w_belonging_to_L_iff_Hausdorff_pair} $w \in \Language{L}$ iff $\tup{\Machine{M}_{\alpha_w},\Machine{M}_{\beta_w},r_w,t_w}$ is a \yesinst of \NiExpNjExpGenericProblem{i}{j}.
\end{proof}

\subsection[Proofs for Section~\ref*{sec_qbsf_hard_problems}]{Proofs for \zcref{sec_qbsf_hard_problems}}
\label{sec_detailed_proofs_sec_qbsf_hard_problems}

\getkeytheorem{ArithmeticEncodedAsBooleanFunctions}

\begin{proof}\hspace{0pt}
\begin{description}
  \item[Successor:]
    We characterize the successor function $\mi{succ}$ recursively.
    For numbers fitting within a single bit (i.e., their $u-1$ most significant bits are `$0$'):
    $b_0$ is the successor of $a_0$ iff $b_0$ is $\valtrue$ \emph{and} $a_0$ is $\valfalse$.
    Which~is:
    \begin{multline*}
       S_1 \equiv  \mi{succ}(\underbracket[.5pt]{0,\dots,0}_{(u-1)\text{-many}},0;\underbracket[.5pt]{0,\dots,0}_{(u-1)\text{-many}},1)
        \land
        \lnot \mi{succ}(\underbracket[.5pt]{0,\dots,0}_{(u-1)\text{-many}},0;\underbracket[.5pt]{0,\dots,0}_{(u-1)\text{-many}},0)
        \land {} \\
        \lnot \mi{succ}(\underbracket[.5pt]{0,\dots,0}_{(u-1)\text{-many}},1;\underbracket[.5pt]{0,\dots,0}_{(u-1)\text{-many}},0)
        \land
        \lnot \mi{succ}(\underbracket[.5pt]{0,\dots,0}_{(u-1)\text{-many}},1;\underbracket[.5pt]{0,\dots,0}_{(u-1)\text{-many}},1).    
    \end{multline*}
    
    To define the function $\mi{succ}$ for numbers fitting $i > 1$ bits (i.e., their $u-i$ most significant bits are `$0$'), we can simply rely on the definition of $\mi{succ}$ for numbers fitting within $(i-1)$ bits.
    Let us consider the two numbers $\VarSet{a}\langle i \rangle$ and $\VarSet{b}\langle i \rangle$ of $i$ bits represented over the Boolean variables $\set{a_{i-1},\dots,a_0}$ and $\set{b_{i-1},\dots,b_0}$, respectively.
    The successor relation between $\VarSet{b}\langle i \rangle$ and $\VarSet{a}\langle i \rangle$ follows the rules below:
    \begin{itemize}[nosep,label=--]
      \item if $b_{i-1} = a_{i-1}$, then $\VarSet{b}\langle i \rangle$ is the successor of $\VarSet{a}\langle i \rangle$ iff $\VarSet{b}\langle i {-} 1 \rangle$ is the successor of $\VarSet{a}\langle i {-} 1 \rangle$; and,
      \item if $b_{i-1} = \valtrue$ and $a_{i-1} = \valfalse$, then $\VarSet{b}\langle i \rangle$ is the successor of $\VarSet{a}\langle i \rangle$ iff $b_{i-2} = \dots = b_0 = \valfalse$ and $a_{i-2} = \dots = a_0 = \valtrue$; and,
      \item if $b_{i-1} = \valfalse$ and $a_{i-1} = \valtrue$, then $\VarSet{b}\langle i \rangle$ is \emph{not} the successor of $\VarSet{a}\langle i \rangle$ (irrespective of what $\VarSet{b}\langle i {-} 1 \rangle$ and $\VarSet{a}\langle i {-} 1 \rangle$ are).
    \end{itemize}
    The above rules can be encoded as:
    \begin{align*}
       S_i \equiv {} &\begin{multlined}[t][.85\columnwidth]
       \Bigl(\mi{succ}(\underbracket[.5pt]{0,\dots,0}_{(u-i)\text{-many}},0,a_{i-2},\dots,a_0;\underbracket[.5pt]{0,\dots,0}_{(u-i)\text{-many}},0,b_{i-2},\dots,b_0) \rightarrow {} \\ \mi{succ}(\underbracket[.5pt]{0,\dots,0}_{(u-i)\text{-many}},1,a_{i-2},\dots,a_0;\underbracket[.5pt]{0,\dots,0}_{(u-i)\text{-many}},1,b_{i-2},\dots,b_0)\Bigr)
       \land{}
       \end{multlined}\displaybreak[0]\\
       &\begin{multlined}[t][.85\columnwidth]
       \Bigl(\lnot \mi{succ}(\underbracket[.5pt]{0,\dots,0}_{(u-i)\text{-many}},0,a_{i-2},\dots,a_0;\underbracket[.5pt]{0,\dots,0}_{(u-i)\text{-many}},0,b_{i-2},\dots,b_0) \rightarrow {} \\ \lnot \mi{succ}(\underbracket[.5pt]{0,\dots,0}_{(u-i)\text{-many}},1,a_{i-2},\dots,a_0;\underbracket[.5pt]{0,\dots,0}_{(u-i)\text{-many}},1,b_{i-2},\dots,b_0)\Bigr)
       \land{}
       \end{multlined}\displaybreak[0]\\
       &\begin{multlined}[t][.85\columnwidth] \mi{succ}(\underbracket[.5pt]{0,\dots,0}_{(u-i)\text{-many}},0,\underbracket[.5pt]{1,\dots,1}_{(i-1)\text{-many}};\underbracket[.5pt]{0,\dots,0}_{(u-i)\text{-many}},1,\underbracket[.5pt]{0,\dots,0}_{(i-1)\text{-many}}) \land{}\\
       \shoveright{\bigwedge\limits_{j = 0}^{i - 2} \lnot \mi{succ}(\underbracket[.5pt]{0,\dots,0}_{(u-i)\text{-many}},0,\underbracket[.5pt]{a_{i-2},\dots,0,\dots,a_0}_{\text{`$0$' in $j$-th position}};\underbracket[.5pt]{0,\dots,0}_{(u-i)\text{-many}},1,b_{i-2},\dots,b_0)
       \land {}}\\
       \bigwedge\limits_{j = 0}^{i - 2} \lnot \mi{succ}(\underbracket[.5pt]{0,\dots,0}_{(u-i)\text{-many}},0,a_{i-2},\dots,a_0;\underbracket[.5pt]{0,\dots,0}_{(u-i)\text{-many}},1,\underbracket[.5pt]{b_{i-2},\dots,1,\dots,b_0}_{\text{`$1$' in $j$-th position}})
       \land {}
       \end{multlined}\displaybreak[0]\\
       & \lnot \mi{succ}(\underbracket[.5pt]{0,\dots,0}_{(u-i)\text{-many}},1,a_{i-2},\dots,a_0;\underbracket[.5pt]{0,\dots,0}_{(u-i)\text{-many}},0,b_{i-2},\dots,b_0).
    \end{align*}
    
    Hence, the successor relation, for numbers represented over $u$\nbdash-many bits, can be encoded as:
    \[
        S(\mi{succ},\VarSet{a},\VarSet{b}) \equiv \bigwedge\limits_{i=1}^{u} S_i,
    \]
    which, by $\alpha \rightarrow \beta \equiv \lnot \alpha \lor \beta$, is a \CNF formula of polynomially\nbdash-many clauses of two literals at most.
    
  \item[Majority:]
    The majority function can be defined based on the successor one and on transitivity.
    Specifically, if $b$ is the successor of $a$, then $b$ is greater than $a$;
    and, if $b$ is the successor of $a'$, which in turn is greater than $a$, then $b$ is greater than $a$, as well.
    We also need to impose that, for every pair of integers $a$ and $b$, 
    if $b$ is greater than $a$, then $a$ is \emph{not} greater than $b$ and hence $\lnot \mi{less}(\VarSet{b},\VarSet{a})$ has to be $\valtrue$.
    This can be encoded as:
    \begin{align*}
        M(\mi{less},\mi{succ},\VarSet{a},\VarSet{a}',\VarSet{b}) \equiv {} 
        &\bigl( \mi{succ}(\VarSet{a};\VarSet{b}) \rightarrow \mi{less}(\VarSet{a};\VarSet{b}) \bigr) \land
        \bigl( \mi{less}(\VarSet{a};\VarSet{a}') \land \mi{succ}(\VarSet{a}';\VarSet{b}) \rightarrow \mi{less}(\VarSet{a};\VarSet{b}) \bigr) \land {} \\
        &\bigl(\mi{less}(\VarSet{a};\VarSet{b}) \rightarrow \lnot \mi{less}(\VarSet{b};\VarSet{a})\bigr).
    \end{align*}
    This formula is in \CNF and has three clauses of three literals at most.
    
  \item[Inequality:]
    The inequality function is defined using the majority one.
    Simply, if $b$ is greater than $a$, then $a$ and $b$ are different.
    We also require that an integer is not different from itself.
    This can be encoded as:
    \[
        D(\mi{less},\mi{neq},\VarSet{a},\VarSet{b}) \equiv 
        \bigl( \mi{less}(\VarSet{a};\VarSet{b}) \rightarrow \mi{neq}(\VarSet{a};\VarSet{b}) \bigr) \land
        \bigl( \mi{less}(\VarSet{a};\VarSet{b}) \rightarrow \mi{neq}(\VarSet{b};\VarSet{a}) \bigr) \land
        \lnot \mi{neq}(\VarSet{a};\VarSet{a}).
    \]
    Notice that we do not need to explicitly deal with the case in which $\mi{less}(\VarSet{b},\VarSet{a})$ is $\valtrue$, as the variables $\VarSet{a}$ and $\VarSet{b}$ are universally quantified.
    The two clauses are needed because the inequality relation is symmetric, whereas the majority one is not.
    This formula is in \CNF and has three clauses of two literals at most.
    
  \item[Addition:]
    The addition function is defined via the successor and inequality ones.
    The formulas encode these rules:
    summing $0$ to $a$ yields $a$, and
    summing the successor of $a'$ to $a$ yields the successor of $a + a'$.
    \begin{align*}
        A(\mi{add},\mi{neq},\mi{succ},\VarSet{a},\VarSet{a}',\VarSet{a}'',\VarSet{b},\VarSet{b}') \equiv {} 
        &\mi{add}(\VarSet{a};\underbracket[.5pt]{0,\dots,0}_{u\text{-many}};\VarSet{a}) \land \bigl(\mi{neq}(\VarSet{a};\VarSet{b}) \rightarrow \lnot\mi{add}(\VarSet{a};\underbracket[.5pt]{0,\dots,0}_{u\text{-many}};\VarSet{b})\bigr) \land {} \\
        &\bigl(\mi{add}(\VarSet{a};\VarSet{a}';\VarSet{b}) \land \mi{succ}(\VarSet{a}';\VarSet{a}'') \land \mi{succ}(\VarSet{b};\VarSet{b}') \rightarrow \mi{add}(\VarSet{a};\VarSet{a}'';\VarSet{b}')\bigr) \land {} \\
        &\bigl(\mi{add}(\VarSet{a};\VarSet{a}';\VarSet{b}) \land \mi{neq}(\VarSet{b};\VarSet{b}') \rightarrow \lnot \mi{add}(\VarSet{a};\VarSet{a}';\VarSet{b}')\bigr).
    \end{align*}
    This formula is in \CNF and has four clauses of four literals at most.
  
\end{description}

\noindent
The overall formula is then:
\begin{multline*}
    \mu^u(\mi{succ},\mi{less},\mi{neq},\mi{add},\VarSet{a},\VarSet{a}',\VarSet{a}'',\VarSet{b},\VarSet{b}') \equiv S(\mi{succ},\VarSet{a},\VarSet{b}) \land M(\mi{less},\mi{succ},\VarSet{a},\VarSet{a}',\VarSet{b}) \land {} \\ D(\mi{less},\mi{neq},\VarSet{a},\VarSet{b}) \land A(\mi{add},\mi{neq},\mi{succ},\VarSet{a},\VarSet{a}',\VarSet{a}'',\VarSet{b},\VarSet{b}').
\end{multline*}
which is in \CNF, and with polynomially-many clauses of four literals at most.
It is not hard to see that, by its definition, the only model of the quantified formula $(\forall \VarSet{a},\VarSet{a}',\VarSet{a}'',\VarSet{b},\VarSet{b}') \PrefixMatrixSeparator \mu^u(\mi{succ},\mi{less},\mi{neq},\mi{add},\VarSet{a},\VarSet{a}',\VarSet{a}'',\VarSet{b},\VarSet{b}')$ must be such that the functions $\mi{succ}$, $\mi{less}$, $\mi{neq}$, and $\mi{add}$, are interpreted as the successor, majority, inequality, and addition, relations between fixed\nbdash-length binary\nbdash-represented integers.
\end{proof}

\getkeytheorem{CodingNEXPMachineHausdorffPredicate}

\begin{proof}
The construction here proposed is inspired by \citeauthor{Cook1971}'s~\cite{Cook1971} reduction proving the \NPTimeh{}ness of \Sat.
The key difference between \citeauthor{Cook1971}'s reduction 
and the construction here proposed is that, instead of employing Boolean propositional variables to encode a polynomial\nbdash-time machine computation, we here employ Boolean function variables to compactly (i.e., polynomially) encode exponential\nbdash-time computations.

Let $\Language{D} \subseteq \HausdPredDomain$ be a $\Oracle{\NExpTime}{\SigmaP{c-1}}$ Hausdorff predicate.
Since $\Language{D} \in \Oracle{\NExpTime}{\SigmaP{c-1}}$, by \zcref{theo_restricted_main_levels_longer_Hausdorff_bound} the length of $\Language{D}$ is doubly\nbdash-exponential at most.
If $\Language{D}$ has doubly\nbdash-exponential (resp., exponential, polynomial) length, the length of $\Language{D}$ is bounded by $2^{2^{\PolynomialLengthHausPredD(n)}} - 1$ (resp., $2^{\PolynomialLengthHausPredD(n)} - 1$, $\PolynomialLengthHausPredD(n) - 1$), for some polynomial $\PolynomialLengthHausPredD(n)$ (see \zcref{theo_Hausd_pred_bounded_complexity_imply_bounded_length}).

Remember also that the problem of deciding whether a pair $\pair{w,z}$ belongs to $\Language{D}$ can be solved by processing a single input string ``$w \hashsep z$'', and the computation time is measured \Wrt $\StringLength{w}$ only (see \zcref{sec_Hausdorff_reductions_classes}, last paragraph).
By $\Language{D} \in \Oracle{\NExpTime}{\SigmaP{c-1}}$ and the certificate\nbdash-based definition of the \WEHStressedText main levels (see \zcref{sec_prelim_PH_ExpH}), there is a polynomial $\PolynomialSizeCertificates(n)$ and a \emph{deterministic polynomial} $(c{+}1)$\nbdash-ary predicate $R$ such that:
\begin{multline*}
  \pair{w,z} \in \Language{D} \Leftrightarrow 
    (\exists u_1 \in \alphabet^{\leq 2^{\PolynomialSizeCertificates(\StringLength{w})}})
    (\forall u_2 \in \alphabet^{\leq 2^{\PolynomialSizeCertificates(\StringLength{w})}}) \cdots
    (Q_c u_c \in \alphabet^{\leq 2^{\PolynomialSizeCertificates(\StringLength{w})}}) \\
    R(w \hashsep z,u_1,\dots,u_c) = 1,
\end{multline*}
where $Q_c = \exists$ or $Q_c = \forall$ when $c$ is odd or even, respectively.
By known padding techniques (see, e.g., \cite{Arora2009,Lohrey2012}), the quantifications `$(Q_i u_i \in \alphabet^{\leq 2^{\PolynomialSizeCertificates(\StringLength{w})}})$', where $Q_i \in \set{\exists , \forall}$, can be replaced by `$(Q_i u_i \in \alphabet^{2^{\PolynomialSizeCertificates(\StringLength{w})}})$', which means that the length of the certificates $u_i$ can be assumed to precisely be $2^{\PolynomialSizeCertificates(\StringLength{w})}$.
Therefore, we have:
\begin{multline}
\label{eq_condition_NEXP_Hausdorff_language}
  \pair{w,z} \in \Language{D} \Leftrightarrow 
    (\exists u_1 \in \alphabet^{2^{\PolynomialSizeCertificates(\StringLength{w})}})
    (\forall u_2 \in \alphabet^{2^{\PolynomialSizeCertificates(\StringLength{w})}}) \cdots
    (Q_c u_c \in \alphabet^{2^{\PolynomialSizeCertificates(\StringLength{w})}}) \\
    R(w \hashsep z,u_1,\dots,u_c) = 1.
\end{multline}

Since $R$ is deterministic polynomial, there exists a deterministic machine $\Omega$ deciding $R$ in polynomial time in the combined sizes of $w$ and $u_1, \dots, u_c$---the size of $z$ is excluded, as the computation time for Hausdorff predicates is measured \Wrt the size of $w$ only.
Because $\StringLength{u_i} = 2^{\PolynomialSizeCertificates(\StringLength{w})}$ for all $i$, there exists a polynomial $\PolynomialTimeMachineOmega(n)$ such that $\Omega$'s running time is bounded by $2^{\PolynomialTimeMachineOmega(\StringLength{w})} - 2$, which is a function of~$\StringLength{w}$ only---%
such a tailored running time is chosen with the only aim of simplifying the presentation below.

If we assume that $\PolynomialLengthHausPredD(n) = a \cdot n^b$, $\PolynomialSizeCertificates(n) = c \cdot n^d$, and $\PolynomialTimeMachineOmega(n) = e \cdot n^f$, we define the polynomial $\PolynomialBoundingEverything(n) = \max\set{a,c,e} \cdot n^{\max\set{b,d,f}}$.
By this, we have $\PolynomialBoundingEverything(n) \geq \PolynomialLengthHausPredD(n)$, $\PolynomialBoundingEverything(n) \geq \PolynomialSizeCertificates(n)$, and $\PolynomialBoundingEverything(n) \geq \PolynomialTimeMachineOmega(n)$.

The formula $\FormulaTMHausdorffWithP$ will encode the working of the machine $\Omega$ on an input string ``$w \hashsep z \dollarsep u_1 \dollarsep \cdots \dollarsep u_c$'', where $u_i$ are the certificates of \zcref{eq_condition_NEXP_Hausdorff_language}.
These certificates will suitably be ``quantified'', to match the characterization of \zcref{eq_condition_NEXP_Hausdorff_language}.
Before presenting the detailed construction, we now give an overview, first of the encoding of the computation of $\Omega$, and then of the encoding of the input string on the machine tape.

We assume \Wlog that $\Omega$ is a Turing machine with a single semi\nbdash-infinite read/write input tape, whose extremal\nbdash-left tape cell is marked by the tape symbol~`$\rhd$' (see~\cite{Hopcroft1979,Arora2009}).
Inspired by \Sat{}'s \NPTimeh{}ness proof reported by \citet{GareyJ1979}, we use Boolean \emph{function} variables~$q$, $h$, and~$t$, encoding the (control) state, the tape head position, and the tape content, respectively, at any given step of $\Omega$'s computation.
For this reason, we need to be able to identify the steps of the computation, and the position of the symbols on the tape.
Since we assume that $\Omega$ halts within $2^{\PolynomialTimeMachineOmega(\StringLength{w})} - 2$ steps, the portion of interest of the tape is constituted by $2^{\PolynomialTimeMachineOmega(\StringLength{w})}$ cells, the tape head can be in $2^{\PolynomialTimeMachineOmega(\StringLength{w})}$ different locations, and the number of IDs traversed during a computation is $2^{\PolynomialTimeMachineOmega(\StringLength{w})}$. 
By this, the tape cell positions, needed to locate the symbols and the head on the tape, and the numerical identifiers to distinguish the steps within the computation sequence, can be represented in binary with $\PolynomialTimeMachineOmega(\StringLength{w})$ bits, if we refer to them by $0$\nbdash-based indices.
Representing which symbols are on tape, and which state the machine is in, can be done via indices as well.
These are independent from the input string, hence their number is a constant.
Let us assume the tape symbols to be $\Gamma_\Omega = \set{\alpha_0,\dots,\alpha_{s-1}}$ and the machine states to be $Q_\Omega = \set{q_0,\dots,q_{r-1}}$.
Hence, their indices can be represented with $\lceil \log s \rceil$ and $\lceil \log r \rceil$ bits, respectively.

The intended meaning of the Boolean functions $q$, $h$, and~$t$, will be as follows:
\begin{center}
\begin{tabular}{l p{7.5cm}}
  $q( \underbracket[.5pt]{\overbracket[.5pt]{\bullet,\bullet,\dots,\bullet,\bullet}^{i}}_{\PolynomialTimeMachineOmega(\StringLength{w})\text{-many}}; \underbracket[.5pt]{\overbracket[.5pt]{\circ,\circ,\dots,\circ,\circ}^{k}}_{\lceil \log r \rceil\text{-many}})$ & stating whether, at the $i$-th computation step, $\Omega$'s state is $q_k$, or not; \\[4ex]
  $h( \underbracket[.5pt]{\overbracket[.5pt]{\bullet,\bullet,\dots,\bullet,\bullet}^{i}}_{\PolynomialTimeMachineOmega(\StringLength{w})\text{-many}}; \underbracket[.5pt]{\overbracket[.5pt]{\circ,\circ,\dots,\circ,\circ}^{j}}_{\PolynomialTimeMachineOmega(\StringLength{w})\text{-many}})$ & stating whether, at the $i$-th computation step, $\Omega$'s tape head is in the $j$-th position, or not; and \\[4ex]
  $t( \underbracket[.5pt]{\overbracket[.5pt]{\bullet,\bullet,\dots,\bullet,\bullet}^{i}}_{\PolynomialTimeMachineOmega(\StringLength{w})\text{-many}}; \underbracket[.5pt]{\overbracket[.5pt]{\circ,\circ,\dots,\circ,\circ}^{j}}_{\PolynomialTimeMachineOmega(\StringLength{w})\text{-many}};
  \underbracket[.5pt]{\overbracket[.5pt]{\bullet,\bullet,\dots,\bullet,\bullet}^{\ell}}_{\lceil \log s \rceil \text{-many}})$ & stating whether, at the $i$-th computation step, in the $j$-th position of $\Omega$'s tape the symbol is $\alpha_\ell$,~or~not. \\
\end{tabular}
\end{center}

For example, since we use $0$\nbdash-based indices, if $\Omega$ carries out a computation in which, after two computation steps (hence, in the third ID, whose index is $2$), the fourth tape symbol (hence, position indexed $3$) is the blank $\blanksymbol$ (let us say that the symbol index for $\blanksymbol$ is $0$), we have $t(\boxed{0,\dots,0,1,0} ; \boxed{0,\dots,0,1,1} ; \boxed{0,\dots,0,0,0}) = \valtrue$.

Let us now focus on how we encode in $\FormulaTMHausdorffWithP$ that the input string ``$w \hashsep z \dollarsep u_1 \dollarsep \cdots \dollarsep u_c$'' is on the machine tape at the beginning of the computation.
Since $w$ is given, there will be a part of the formula stating that ``$w \hashsep$'' appears at the left end of the tape.
As anticipated, the formula $\FormulaTMHausdorffWithP$ will rely on a \emph{free} function variable $\mi{ind}^{\PolynomialLengthHausPredD(\StringLength{w})}$ allowing us to ``import'' $z$ from outside $\FormulaTMHausdorffWithP$.
This function variable is a bit\nbdash-indexing function for the value $z$ represented over $2^{\PolynomialLengthHausPredD(\StringLength{w})}$ bits.
The formula $\FormulaTMHausdorffWithP$ will hence have a part devoted to ``copy'' $z$ from $\mi{ind}^{\PolynomialLengthHausPredD(\StringLength{w})}$ to the tape, right after ``$w \hashsep$''.
Notice that $z$ is encoded in $\mi{ind}^{\PolynomialLengthHausPredD(\StringLength{w})}$ via a fixed\nbdash-length binary representation, whereas it has to be encoded on the tape in canonical binary form, as expected by $\Omega$, 
hence parts of $\FormulaTMHausdorffWithP$ will be devoted to this translation.
Additionally, parts of $\FormulaTMHausdorffWithP$ will ``write'' the certificates $u_i$ on the input tape right after ``$w \hashsep z$'', one certificate after the other, separated by the symbol~`$\dollarsep$'.
We will employ Boolean function variables $\mi{cert}_1,\dots,\linebreak[0]\mi{cert}_c$ that, similarly to $\mi{ind}^{\PolynomialLengthHausPredD(\StringLength{w})}$, are bit\nbdash-indexing functions encoding the certificates $u_1,\dots,u_c$, respectively, which are binary strings of length $2^{\PolynomialSizeCertificates(\StringLength{w})}$;
the arity of the functions $\mi{cert}_i$ will hence be $\PolynomialSizeCertificates(\StringLength{w})$.
Following the certificate\nbdash-based characterization of~\zcref{eq_condition_NEXP_Hausdorff_language}, in $\FormulaTMHausdorffWithP$ the function variables $\mi{cert}_i$ will be quantified existentially or universally depending on whether $i$ is odd or even, respectively.
In this way, we obtain the needed ``quantification'' of the certificates of the input string.

\Proofsep

We now detail the formula $\FormulaTMHausdorffWithP$ encoding the working of $\Omega$ over an input string ``$w \hashsep z \dollarsep u_1 \dollarsep \cdots \dollarsep u_c$''.
We start by focusing on an \emph{odd} number $c$ of certificates, and provide a construction yielding a \CompactSigmaKOddFormula{c}.
For even $c$, a naive adaptation would yield a \CompactSigmaKOddFormula{(c+1)};
a more tailored construction for even $c$ producing a \CompactSigmaKEvenFormula{c} will be given later.
For odd $c$, at a high level $\FormulaTMHausdorffWithP$ is a \CompactSigmaKOddFormula{c} of the form:
\begin{multline*}
  \FormulaTMHausdorffWithP(\mi{ind}^{\PolynomialLengthHausPredD(\StringLength{w})}) \equiv {} \\ 
  (\exists \dots,\mi{cert}_1)
  (\forall \mi{cert}_2) \cdots
  (\forall \mi{cert}_{c-1})
  (\exists \mi{cert}_c, \dots)
  (\forall \mi{prop\mhyphen{}vars}) \PrefixMatrixSeparator
  \gamma(\mi{ind}^{\PolynomialLengthHausPredD(\StringLength{w})},\dots),
\end{multline*}
where `$(\exists \dots, \mi{cert}_1)$' means that, together with $\mi{cert}_1$, more function variables appear there;
similarly for `$(\exists \mi{cert}_c, \dots)$'.
By `$(\forall \mi{prop\mhyphen{}vars})$' we mean that propositional variables are quantified there.

We use ordered sets of propositional variables $\VarSet{i} = \set{i_{\PolynomialTimeMachineOmega(\StringLength{w})-1},\dots,i_0}$, $\VarSet{i}' = \set{i'_{\PolynomialTimeMachineOmega(\StringLength{w})-1},\linebreak[0]\dots,i'_0}$, $\VarSet{j} = \set{j_{\PolynomialTimeMachineOmega(\StringLength{w})-1},\linebreak[0]\dots,j_0}$, $\VarSet{j}' = \set{j'_{\PolynomialTimeMachineOmega(\StringLength{w})-1},\linebreak[0]\dots,j'_0}$, $\VarSet{k} = \set{k_{\lceil \log r \rceil - 1},\linebreak[0]\dots,k_0}$, $\VarSet{k}' = \set{k'_{\lceil \log r \rceil - 1},\linebreak[0]\dots,k'_0}$, $\VarSet{\ell} = \set{\ell_{\lceil \log s \rceil - 1},\linebreak[0]\dots,\ell_0}$, and $\VarSet{\ell}' = \set{\ell'_{\lceil \log s \rceil - 1}\linebreak[0],\dots,\ell'_0}$, to ``store'' the fixed\nbdash-length binary encoding of the indices of the steps ($\VarSet{i}$ and $\VarSet{i}'$), the positions on tape ($\VarSet{j}$ and $\VarSet{j}'$), the (indices of the) machine states ($\VarSet{k}$ and $\VarSet{k}'$), and the (indices of the) tape symbols ($\VarSet{\ell}$ and $\VarSet{\ell}'$), respectively.
We additionally have ordered sets of propositional variables $\VarSet{d} = \set{d_{\PolynomialLengthHausPredD(\StringLength{w})-1},\dots,d_0}$ and $\VarSet{e} = \set{e_{\PolynomialLengthHausPredD(\StringLength{w})-1},\dots,e_0}$, which, although not strictly essential, are sometimes employed in formulas to obtain a tidier presentation.
We also use the successor $\mi{succ}$, majority $\mi{less}$, inequality $\mi{neq}$, and addition $\mi{add}$, functions of \zcref{theo_SigmaFormula_arithmetic}.
In the construction below, these functions are defined for binary numbers of $\PolynomialBoundingEverything(\StringLength{w})$ bits.
\emph{However, in the formula we also use these functions over smaller sets of propositional variables;
in these cases, we mean that a padding of $\valfalse$ values is employed for the most significant bits.}
We also have function variables $\mi{in\mhyphen{}range}$ and $\mi{max\mhyphen{}bit}$ aiming at individuating in the fixed\nbdash-length binary representation of $z$ the position of the most significant bit of its canonical binary form.

We now provide the details of the construction.
We build a \CompactSigmaKOddFormula{c} of the form below, where those on first line are second\nbdash-order quantifiers, whereas that on second line is a first\nbdash-order quantifier:
\begin{equation*}
\begin{aligned}[b]
&\FormulaTMHausdorffWithP(\mi{ind}^{\PolynomialLengthHausPredD(\StringLength{w})}) \equiv {} \\
  &\qquad(\exists \mi{succ}, \mi{less}, \mi{neq}, \mi{add}, \mi{in\mhyphen{}range}, \mi{max\mhyphen{}bit}, \mi{cert}_1)
  (\forall \mi{cert}_2) \cdots (\forall \mi{cert}_{c-1}) (\exists \mi{cert}_c, q, h, t) \\
  &\qquad(\forall \VarSet{a},\VarSet{a}',\VarSet{a}'',\VarSet{b},\VarSet{b}',\VarSet{i},\VarSet{i}', \VarSet{j},\VarSet{j}', \VarSet{k},\VarSet{k}', \VarSet{\ell},\VarSet{\ell}',\VarSet{d},\VarSet{e}) \\ &\qquad\qquad\mu^{\PolynomialBoundingEverything(\StringLength{w})}(\mi{succ},\mi{less},\mi{neq},\mi{add},\VarSet{a},\VarSet{a}',\VarSet{a}'',\VarSet{b},\VarSet{b}') \land {} \\
  &\qquad\qquad\mi{WITHIN}(\mi{ind}^{\PolynomialBoundingEverything(\StringLength{w})},\mi{succ},\mi{in\mhyphen{}range,\VarSet{d},\VarSet{e}}) \land \mi{MAX\mhyphen{}BIT}(\mi{succ},\mi{in\mhyphen{}range},\mi{max\mhyphen{}bit},\VarSet{d},\VarSet{e}) \land {} \\
  &\qquad\qquad\psi(\mi{succ},\mi{less},\mi{neq},\mi{add}, \mi{ind}^{\PolynomialLengthHausPredD(\StringLength{w})}, \mi{cert}_1, \dots, \mi{cert}_c, q,h,t, \VarSet{i},\VarSet{i}', \VarSet{j},\VarSet{j}', \VarSet{k},\VarSet{k}', \VarSet{\ell},\VarSet{\ell}',\VarSet{d},\VarSet{e})
\end{aligned}
\end{equation*}
such that an interpretation $\Interpr{I}_z$ of $\mi{ind}^{\PolynomialLengthHausPredD(\StringLength{w})}$ encoding the integer $z$ satisfies $\FormulaTMHausdorffWithP(\mi{ind}^{\PolynomialLengthHausPredD(\StringLength{w})})$ iff $\Language{D}(w,z) = 1$.
This is achieved by designing $\psi$ to be satisfied by $\Interpr{I}_z$ iff $\Omega$ accepts its input string, which is ``$w \hashsep z$'' followed by $c$ certificates separated by `$\dollarsep$' symbols.
Since $\Omega$ decides the predicate $R$ of \zcref{eq_condition_NEXP_Hausdorff_language} \emph{and} the certificates are existentially or universally quantified in $\FormulaTMHausdorffWithP(\mi{ind}^{\PolynomialLengthHausPredD(\StringLength{w})})$ according to the definition of the main level of the \WEHText which $\Language{D}$ belongs to, we obtain $\Interpr{I}_z \models \FormulaTMHausdorffWithP(\mi{ind}^{\PolynomialLengthHausPredD(\StringLength{w})})$ iff $\Language{D}(w,z) = 1$.

Below, we will need to show that $\FormulaTMHausdorffWithP(\mi{ind}^{\PolynomialLengthHausPredD(\StringLength{w})})$ can be obtained in polynomial time \Wrt~$\StringLength{w}$.
The formulas introduced will be defined over functions whose arity is polynomial \Wrt $\StringLength{w}$.
Hence, we will be able to prove that these formulas can be obtained in polynomial time \Wrt $\StringLength{w}$ by simply showing that the number of their literals is polynomial \Wrt $\StringLength{w}$, or even a constant.
To make the meanings of the formulas easier for the reader to understand, we will use implicative forms.
Remember that, e.g., $(\alpha \land \beta) \rightarrow (\gamma \lor \delta) \equiv (\lnot \alpha \lor \lnot \beta \lor \gamma \lor \delta)$, hence such an implication, if $\alpha, \beta, \gamma$, and $\delta$, are literals, is actually a clause of four literals.

Before describing the formula $\psi$, we focus on the subformulas $\mi{WITHIN}$ and $\mi{MAX\mhyphen{}BIT}$, whose aim is identifying the position of the most significant bit of the canonical binary form of $z$.
The subformula $\mi{MAX\mhyphen{}BIT}$ encodes into the function variable $\mi{max\mhyphen{}bit}$ the position of the most\nbdash-significant bit of the canonical form of $z$.
Remember that the indices used as arguments of $\mi{ind}^{\PolynomialLengthHausPredD(\StringLength{w})}$ for the positions of the bits of $z$ are $0$\nbdash-based, and lower indices are for the least significant bits.
For a set of ordered Boolean propositional variables $\VarSet{d} = \set{d_{\PolynomialLengthHausPredD(\StringLength{w})-1},\dots,d_{0}}$ encoding in binary the integer value $d$, the function $\mi{max\mhyphen{}bit}$ is designed as follows:
\begin{center}
\begin{tabular}{l p{7.5cm}}
  $\mi{max\mhyphen{}bit}(\underbracket[.5pt]{\overbracket[.5pt]{\bullet,\bullet,\dots,\bullet,\bullet}^{d}}_{\PolynomialLengthHausPredD(\StringLength{w})\text{-many}})$ & stating whether the $d$-th bit, starting from the least significant ones, of the fixed\nbdash-length binary representation of $z$ in $\mi{ind}^{\PolynomialLengthHausPredD(\StringLength{w})}$ is the most significant bit `$1$', or not; if $z = 0 $, then $\mi{max\mhyphen{}bit}(\BooleanEncoding{0}) = \valtrue$. \\
\end{tabular}
\end{center}

The definition of the function $\mi{max\mhyphen{}bit}$ in the subformula $\mi{MAX\mhyphen{}BIT}$ is obtained via the helping function $\mi{in\mhyphen{}range}$, which is characterized in turn in the subformula $\mi{WITHIN}$, and whose meaning is:%
\footnote{The function $\mi{max\mhyphen{}bit}$ can also be defined in a simpler way by using \emph{existentially} quantified propositional variables, avoiding the need for the function $\mi{in\mhyphen{}range}$. However, since the other propositional variables in $\FormulaTMHausdorffWithP(\mi{ind}^{\PolynomialLengthHausPredD(\StringLength{w})})$ are \emph{universally} quantified, they cannot all fall under the same first\nbdash-order quantifier. To address this, the definition of $\mi{max\mhyphen{}bit}$ could use $0$\nbdash-arity function variables, which can be quantified existentially under a second\nbdash-order quantifier. With the only aim of having in the presentation a clear distinction between propositional variables and function variables, we prefer to adopt a slightly more involved definition of $\mi{max\mhyphen{}bit}$.}
\begin{center}
\begin{tabular}{l p{7.5cm}}
  $\mi{in\mhyphen{}range}(\underbracket[.5pt]{\overbracket[.5pt]{\bullet,\bullet,\dots,\bullet,\bullet}^{d}}_{\PolynomialLengthHausPredD(\StringLength{w})\text{-many}})$ & stating whether the $d$-th bit, starting from the least significant ones, of the fixed\nbdash-length binary representation of $z$ in $\mi{ind}^{\PolynomialLengthHausPredD(\StringLength{w})}$, has to be included in the canonical binary form of $z$. \\
\end{tabular}
\end{center}

Let us focus first on the subformula $\mi{WITHIN}$ (see below);
notice that the propositional variables appearing in $\mi{WITHIN}$ will universally be quantified in the overall formula.
The first rule states that the first bit (position $0$) is always part of the canonical form.
The second and third rules state that if at position $d$ there is a bit `$1$' in $z$, then position $d$ and lower positions have to be included in the canonical form.
The fourth and fifth rules state that if a position $d$ falls within the canonical form, then there must be a bit `$1$' of $z$ at position $d$ or higher.
\begin{align*}
  \mi{WITHIN} \equiv {} & \mi{in\mhyphen{}range}(\BooleanEncoding{0}) \land {} \\
  & (\mi{ind}^{\PolynomialLengthHausPredD(\StringLength{w})} (\VarSet{d}) \rightarrow \mi{in\mhyphen{}range}(\VarSet{d})) \land {} \\
  & (\mi{in\mhyphen{}range}(\VarSet{d}) \land \mi{succ}(\VarSet{e},\VarSet{d}) \rightarrow \mi{in\mhyphen{}range}(\VarSet{e})) \land {} \\
  & (\mi{in\mhyphen{}range} (\BooleanEncoding{2^{\PolynomialLengthHausPredD(\StringLength{w})} - 1}) \rightarrow \mi{ind}^{\PolynomialLengthHausPredD(\StringLength{w})} (\BooleanEncoding{2^{\PolynomialLengthHausPredD(\StringLength{w})} - 1})) \land {} \\
  & (\mi{in\mhyphen{}range}(\VarSet{d}) \land \mi{succ}(\VarSet{d},\VarSet{e}) \rightarrow \mi{ind}^{\PolynomialLengthHausPredD(\StringLength{w})} (\VarSet{d}) \lor \mi{in\mhyphen{}range}(\VarSet{e})).
\end{align*}
The formula $\mi{WITHIN}$ is a \CNF of five clauses with four literals at most.

Let us now focus on the subformula $\mi{MAX\mhyphen{}BIT}$ (see below);
again, the propositional variables appearing in $\mi{MAX\mhyphen{}BIT}$ will universally be quantified in the overall formula.
The first rule states that if the last bit of the fixed\nbdash-length representation belongs to the canonical form, than that position is also the maximum one of the canonical form.
The second rule states that if a position $d$ is in the canonical form and instead the position $e = d + 1$ is not, then the position $d$ is the maximum one of the canonical form.
The third rule state that if a position $d$ is the maximum one of the canonical form, then all the other positions are not the maximum ones.
\begin{align*}
  \mi{MAX\mhyphen{}BIT} \equiv {} & (\mi{in\mhyphen{}range}(\BooleanEncoding{2^{\PolynomialLengthHausPredD(\StringLength{w})} - 1}) \rightarrow \mi{max\mhyphen{}bit}(\BooleanEncoding{2^{\PolynomialLengthHausPredD(\StringLength{w})} - 1})) \land {} \\
  & (\mi{in\mhyphen{}range}(\VarSet{d}) \land \mi{succ}(\VarSet{d},\VarSet{e}) \land \lnot \mi{in\mhyphen{}range}(\VarSet{e}) \rightarrow \mi{max\mhyphen{}bit}(\VarSet{d})) \land {} \\
  & (\mi{max\mhyphen{}bit}(\VarSet{d}) \land \mi{neq}(\VarSet{d},\VarSet{e}) \rightarrow \lnot \mi{max\mhyphen{}bit}(\VarSet{e})).
\end{align*}
The formula $\mi{MAX\mhyphen{}BIT}$ is a \CNF of three clauses with four literals at most.

We now provide the details of the formula $\psi$ encoding the computation of $\Omega$.
Following Cook's reduction reported in~\cite{GareyJ1979,Hopcroft1979}, $\psi$ will be characterized by four subformulas:
one imposing the consistency of the functions~$q$, $h$, and $t$;
one imposing that~$q$, $h$, and $t$, correctly encode the initial configuration of $\Omega$ over the input string;
one imposing that~$q$, $h$, and $t$, correctly encode a sequence of computation steps by $\Omega$; and
one checking whether the last configuration encoded in~$q$, $h$, and $t$, refers to an accepting configuration for~$\Omega$.
Notice that, in the subformulas below, all the function variables, but $\mi{ind}^{\PolynomialLengthHausPredD(\StringLength{w})}$ which will be free, will be quantified existentially, while all the propositional variables will be quantified universally, in the overall formula.

\begin{itemize}[label=--,left=0pt]
  \item \textbf{Consistency (\bm{$C$})}:
    With this subformula, we impose that~$q$, $h$, and $t$, encode that, at each computation step, the machine is in no more than one state, the tape head is in no more than one position, and each tape cell contains no more than one symbol, respectively.
    Other parts of the formula (see below) will imply that~$q$, $h$, and~$t$, also encode that, at each computation step, the machine is in at least one state, the tape head is in at least one position, and each tape cell contains at least one symbol, respectively.
    These conditions together will impose that~$q$, $h$, and~$t$, encode that, at each computation step, the machine is in exactly one state, the tape head is in exactly one position, and each tape cell contains exactly one symbol, respectively.
    \begin{align*}
    C \equiv {} 
        &\Bigl( q(\VarSet{i};\VarSet{k}) \land \mi{neq}(\VarSet{k};\VarSet{k}') \rightarrow \lnot q(\VarSet{i};\VarSet{k}') \Bigr) \land {} \\
        &\Bigl( h(\VarSet{i};\VarSet{j}) \land \mi{neq}(\VarSet{j};\VarSet{j}') \rightarrow \lnot h(\VarSet{i};\VarSet{j}')\Bigr) \land {} \\
        &\Bigl( t(\VarSet{i};\VarSet{j};\VarSet{\ell}) \land \mi{neq}(\VarSet{\ell};\VarSet{\ell}') \rightarrow \lnot t(\VarSet{i};\VarSet{j};\VarSet{\ell}') \Bigr).
    \end{align*}   
    The formula ${C}$ is a \CNF of three clauses containing three literals.
    
  \item \textbf{Start correct (\bm{$S$})}:
    This subformula constrains $q$, $h$, and $t$, to encode that, at step $0$, the machine is in the initial state $q_0$, the tape head is at position $0$, and the tape content is ``$w \hashsep z \dollarsep u_1 \dollarsep \cdots \dollarsep u_c$'' (preceded by the tape marker `$\rhd$', followed by blanks, and with $z$ in \emph{canonical} binary form).
    We define $S$ via additional subformulas.
    
    The subformula $W$ encodes the string $w$ in the function $t$;
    below, $w[\tau]$ denotes the $\tau$\nbdash-th symbol of the string~$w$%
    ---remember that `$\rhd$' is at position $0$ of the tape, and this is encoded in another part of the formula:
    \[
         W \equiv \bigwedge_{1 \leq \tau \leq n} t(\BooleanEncoding{0};\BooleanEncoding{\tau};\BooleanEncoding{w[\tau]}).
    \]
    The subformula $W$ is a CNF containing a number of clauses which is polynomial \Wrt $\StringLength{w}$, and each of these clauses contains a single literal.
    
    The subformula $Z$ specifies that the string ``$\hashsep  z$'' appears after $w$ on the input tape.
    This is encoded into $t$ by copying $z$ from the unfolding of the function $\mi{ind}^{\PolynomialLengthHausPredD(\StringLength{w})}$.
    Remember that the fixed\nbdash-length representation of $z$ encoded in $\mi{ind}^{\PolynomialLengthHausPredD(\StringLength{w})}$ has to be translated into canonical binary form to be encoded into $t$.
    The tape locations of $z$'s canonical form symbols are obtained as follows.
    Assume that the variables $\VarSet{d}$ encode the index $d$ of the most significant bit of the canonical form of $z$.
    We have that, for a value $e$ such that $0 \leq e \leq d$, the $e$\nbdash-th symbol of the canonical form of $z$, encoded in the Boolean value $\mi{ind}^{\PolynomialLengthHausPredD(\StringLength{w})}(\VarSet{e})$, is at the tape position $j$, obtained by adding $e$ to $z$'s first symbol position, which is $\StringLength{w} + 2$.
    By this, the subformula $Z$ is:
    \begin{align*}
          Z \equiv {} & t(\BooleanEncoding{0};\BooleanEncoding{\StringLength{w}{+}1};\BooleanEncoding{\hashsep}) \land {} \\ 
          & \qquad \Bigl( \mi{max\mhyphen{}bit}(\VarSet{d}) \land \lnot \mi{less}(\VarSet{d},\VarSet{e}) \land \mi{ind}^{\PolynomialLengthHausPredD(\StringLength{w})}(\VarSet{e}) \land \mi{add}(\BooleanEncoding{\StringLength{w}{+}2};\VarSet{e};\VarSet{j}) \rightarrow t(\BooleanEncoding{0};\VarSet{j},\BooleanEncoding{1}) \Bigr) \land {} \\ 
          &\qquad \Bigl( \mi{max\mhyphen{}bit}(\VarSet{d}) \land \lnot \mi{less}(\VarSet{d},\VarSet{e}) \land \lnot \mi{ind}^{\PolynomialLengthHausPredD(\StringLength{w})}(\VarSet{e}) \land \mi{add}(\BooleanEncoding{\StringLength{w}{+}2};\VarSet{e};\VarSet{j}) \rightarrow t(\BooleanEncoding{0};\VarSet{j},\BooleanEncoding{0}) \Bigr).
    \end{align*}
    The formula ${Z}$ is a \CNF of three clauses containing at most five literals.
    
    The subformula $U_i$ specifies that, on the input tape, the string ``$ \dollarsep u_i$'' appears after the string ``$\dollarsep u_{i-1}$'', and the string ``$\dollarsep u_1$'' appears after ``$w \hashsep  z$''.  
    This is encoded into $t$ by copying $u_i$ from the unfolding of the function $\mi{cert}_i$---remember that the certificates are binary strings.
    The tape locations of $u_i$'s symbols are obtained as follows.
    Let $d$ be the \emph{index} of the most significant bit of the canonical form of $z$, which means that the canonical form of $z$ comprises $d + 1$ bits.
    Since all certificates are assumed to have length $2^{\PolynomialSizeCertificates(\StringLength{w})}$, the `$\dollarsep$' symbol preceding $u_i$ is at position $\StringLength{w} + 2 + d + 1 + (i - 1) \cdot (2^{\PolynomialSizeCertificates(\StringLength{w})} + 1)$ on tape.
    The following tape position clearly contains $u_i$'s first symbol.
    The $e$\nbdash-th symbol of $u_i$, encoded in the Boolean value $\mi{cert}_{i}(\VarSet{e})$, where $e$ is such that $0 \leq e \leq 2^{\PolynomialSizeCertificates(\StringLength{w})} - 1$, is at the tape position $j$, obtained by adding $e$ to $u_i$'s first symbol position.
    Hence the subformulas $U_i$ are as follows:    
    \begin{align*}
        U_i \equiv {} 
        &\Bigl( \mi{max\mhyphen{}bit}(\VarSet{d}) \land \mi{add}(\BooleanEncoding{\StringLength{w}{+}3{+}(i{-}1){\cdot}(2^{\PolynomialSizeCertificates(\StringLength{w})}{+}1)};\VarSet{d};\VarSet{j}) \rightarrow t(\BooleanEncoding{0};\VarSet{j};\BooleanEncoding{\dollarsep}) \Bigr) \land {} \\
        &\begin{multlined}[t]
        \Bigl( \mi{max\mhyphen{}bit}(\VarSet{d}) \land \mi{add}(\BooleanEncoding{\StringLength{w}{+}4{+}(i{-}1){\cdot}(2^{\PolynomialSizeCertificates(\StringLength{w})}{+}1)};\VarSet{d};\VarSet{j}') \land \mi{cert}_{i}(\VarSet{e}) \land \mi{add}(\VarSet{j}';\VarSet{e},\VarSet{j}) \rightarrow {} \quad \\
        t(\BooleanEncoding{0};\VarSet{j};\BooleanEncoding{1})
        \Bigr) \land {} 
        \end{multlined}\\
        &\begin{multlined}[t]
        \Bigl(
        \mi{max\mhyphen{}bit}(\VarSet{d}) \land \mi{add}(\BooleanEncoding{\StringLength{w}{+}4{+}(i{-}1){\cdot}(2^{\PolynomialSizeCertificates(\StringLength{w})}{+}1)};\VarSet{d};\VarSet{j}') \land \lnot \mi{cert}_{i}(\VarSet{e}) \land \mi{add}(\VarSet{j}';\VarSet{e},\VarSet{j}) \rightarrow {} \\
        t(\BooleanEncoding{0};\VarSet{j};\BooleanEncoding{0}) \Bigr).
        \end{multlined}
    \end{align*}
    The formula ${U_i}$ is a \CNF of three clauses containing at most five literals.
    
    To conclude, the subformula $B$ specifies that the input tape, beyond the input string, is blank:
    \[
        B \equiv \Bigl(
        \mi{max\mhyphen{}bit}(\VarSet{d}) \land \mi{add}(\BooleanEncoding{\StringLength{w}{+}3{+}c{\cdot}(2^{\PolynomialSizeCertificates(\StringLength{w})}{+}1)};\VarSet{d};\VarSet{j}') \land \mi{less}(\VarSet{j}';\VarSet{j}) \rightarrow t(\BooleanEncoding{0};\VarSet{j};\BooleanEncoding{\blanksymbol}) \Bigr).
    \]
    The formula ${B}$ is a \CNF of one clause containing three literals.
    
    Hence, the subformula $S$ is:
    \[
        S \equiv q(\BooleanEncoding{0};\BooleanEncoding{q_0}) \land h(\BooleanEncoding{0};\BooleanEncoding{0}) \land t(\BooleanEncoding{0};\BooleanEncoding{0};\BooleanEncoding{\rhd}) \land W \land Z \land \bigwedge_{i = 1}^{c} U_i \land B.
    \]
    This subformula also forces $q$, $h$, and~$t$, to encode that, at step $0$, the machine is in at least one state, the head is in at least one position, and each tape cell contains at least one symbol (see the comments above).
    Since $c$ is fixed, $S$ is a CNF with polynomially\nbdash-many clauses \Wrt $\StringLength{w}$, containing each a constant number of literals.
    
    This subformula can hence be built in polynomial time in $\StringLength{w}$ if the values $\StringLength{w} + 3 + (i - 1) \cdot (2^{\PolynomialSizeCertificates(\StringLength{w})} + 1)$, with $1 \leq i \leq c + 1$, and the ones obtained by adding one to the latter, can be computed in polynomial time.
    Computing the binary representation of $\StringLength{w}$ requires linear time in $\StringLength{w}$ (see, e.g., \cite[Chapter~16, ``Incrementing a binary counter'']{CormenLRS2022}). 
    The value $2^{\PolynomialSizeCertificates(\StringLength{w})}$ can be obtained in polynomial time \Wrt $\StringLength{w}$ (see \zcref{sec_maths_complexity}, Iterated exponentials of polynomials).
    Once the value $2^{\PolynomialSizeCertificates(\StringLength{w})}$ has been computed, the rest are simple arithmetic calculations, which can be carried out in polynomial time (see \zcref{sec_maths_complexity}, Arithmetic).
    
  \item \textbf{Next move correct (\bm{$N$})}:
    With this subformula, we impose that~$q$, $h$, and~$t$, encode a legal sequence of computation steps for $\Omega$ over the input string.
    This subformula is in turn constituted by three pieces:
    \begin{itemize}[nosep,label=--]
      \item the ``inertia'' piece ($N^I$) dealing with tape cells that, far from the tape head, do not change their content from one step to the next;
      \item the ``head'' piece ($N^H_\Omega$) dealing with the tape cell that, beneath the tape head, changes its content according to the transition function of $\Omega$; and
      \item the ``padding'' piece ($N^P_\Omega$) propagating a final configuration of $\Omega$, either accepting or not, exactly as it is toward the computation step with maximum index.
    \end{itemize}
    
    The inertia piece can easily be encoded in the following subformula:
     \[{N}^{I} \equiv 
        \Bigl(t(\VarSet{i},\VarSet{j};\VarSet{\ell}) \land \lnot h(\VarSet{i};\VarSet{j}) \land \mi{succ}(\VarSet{i};\VarSet{i}') \Bigr) \rightarrow t(\VarSet{i}';\VarSet{j},\VarSet{\ell}).
    \]
    Notice that, if $\mi{succ}(\VarSet{i};\VarSet{i}')$ is $\valtrue$, then $i < 2^{\PolynomialTimeMachineOmega(n)}-1$, i.e., $i$ is \emph{not} the last step;
    $N^I$ is a clause of four literals.
    
    The head piece encodes all the possible transitions of $\Omega$ (remember that $\Omega$ is \emph{deterministic}):
    \begin{align*}{N}^{H}_{\Omega} \equiv {}
        &\begin{multlined}[t]
        {\bigwedge\limits_{\substack{q,p \in Q_{\Omega},\; \alpha, \beta \in \Gamma_{\Omega}:\\ \delta(q,\alpha) = (p,\beta,\rightarrow)}}}
        \biggl(
        \Bigl(
        q(\VarSet{i};\BooleanEncoding{q}) \land
        h(\VarSet{i};\VarSet{j}) \land
        t(\VarSet{i};\VarSet{j};\BooleanEncoding{\alpha}) \land
        \mi{succ}(\VarSet{i};\VarSet{i}') \land \mi{succ}(\VarSet{j};\VarSet{j}') \Bigr) \rightarrow {} \qquad \\
        \Bigl(
            q(\VarSet{i}';\BooleanEncoding{p}) \land
            h(\VarSet{i}';\VarSet{j}') \land
            t(\VarSet{i}';\VarSet{j};\BooleanEncoding{\beta})
        \Bigr)\biggr) \land {}
        \end{multlined} \displaybreak[0] \\
        &\begin{multlined}[t]
        {\bigwedge\limits_{\substack{q,p \in Q_{\Omega},\; \alpha,\beta \in \Gamma_{\Omega}:\\ \delta(q,\alpha) = (p,\beta,\leftarrow)}}}
        \biggl(
        \Bigl(
        q(\VarSet{i};\BooleanEncoding{q}) \land
        h(\VarSet{i};\VarSet{j}) \land
        t(\VarSet{i};\VarSet{j};\BooleanEncoding{\alpha}) \land\mi{succ}(\VarSet{i};\VarSet{i}') \land \mi{succ}(\VarSet{j}';\VarSet{j}) \Bigr) \rightarrow {} \qquad \\
        \Bigl(
            q(\VarSet{i}';\BooleanEncoding{p}) \land
            h(\VarSet{i}';\VarSet{j}') \land
            t(\VarSet{i}';\VarSet{j};\BooleanEncoding{\beta})
        \Bigr)\biggr).
        \end{multlined}
    \end{align*}
    Notice that, if $\mi{succ}(\VarSet{j};\VarSet{j}')$ (resp., $\mi{succ}(\VarSet{j}';\VarSet{j})$) is $\valtrue$, then $j < 2^{\PolynomialTimeMachineOmega(n)}-1$ (resp., $0 < j$), i.e., $j$ is \emph{not} the right\nbdash-most (resp., left\nbdash-most) tape position, hence the head can move to the right (resp., to the left).
    
    The padding piece can be defined as follows, where we consider the case when $\Omega$ halts in the accepting state $q_F$ (first line) and when $\Omega$ halts in a non\nbdash-accepting state (second line):
    \begin{align*}{N}^{P}_{\Omega} \equiv {}
        &\Bigl(
        q(\VarSet{i};\BooleanEncoding{q_F}) \land
        h(\VarSet{i};\VarSet{j}) \land
        t(\VarSet{i};\VarSet{j};\VarSet{\ell}) \land
        \mi{succ}(\VarSet{i};\VarSet{i}') \Bigr) \rightarrow
        \Bigl(
            q(\VarSet{i}';\BooleanEncoding{q_F}) \land
            h(\VarSet{i}';\VarSet{j}) \land
            t(\VarSet{i}';\VarSet{j},\VarSet{\ell})
        \Bigr) \land {} \\
        &{\bigwedge\limits_{\substack{q \in Q_{\Omega} \setminus \set{q_F},\; \alpha \in \Gamma_{\Omega}:\\ \delta(q,\alpha) \text{ is not defined}}}}
        \Bigl(
        q(\VarSet{i};\BooleanEncoding{q}) \land
        h(\VarSet{i};\VarSet{j}) \land
        t(\VarSet{i};\VarSet{j};\BooleanEncoding{\alpha}) \land
        \mi{succ}(\VarSet{i};\VarSet{i}') \Bigr) \rightarrow
        \Bigl(
            q(\VarSet{i}';\BooleanEncoding{q}) \land
            h(\VarSet{i}';\VarSet{j}) \land
            t(\VarSet{i}';\VarSet{j},\BooleanEncoding{\alpha})
        \Bigr)
    \end{align*}
    
    The formula $N^I$ is always the same (irrespective of $\Omega)$.
    Whereas, $N^{H}_{\Omega}$ and $N^{P}_{\Omega}$ depend on the transition function of~$\Omega$, which are however \emph{fixed} once $\Omega$ is fixed.
    This means that $N^I$, $N^{H}_{\Omega}$, and $N^{P}_{\Omega}$, can be translated into \CNF formulas having a constant number of clauses containing each a constant number of literals.
    
    To conclude, we have that ${N}$ is the conjunction of all the formulas above:
    \[{N} \equiv
        {N}^{I} \land {N}^{H}_{\Omega} \land N^{P}_{\Omega}.
    \]
    
    Since all formulas constituting $N$ can be expressed in \CNF, also $N$ is a \CNF formula made of a constant number of clauses containing each a constant number of literals. 
    Furthermore, via its components, $N$ also imposes that~$q$, $h$, and~$t$, are such that, for every step $i > 0$ and every position $j$, assign $\valtrue$ for at least one machine state, one tape position, and one tape symbol, respectively (see the comments above).
    
  \item \textbf{Finish correct (\bm{$E$})}:
    The last subformula checks whether the instantiations of the functions variables encode an accepting computation of $\Omega$ over its input.
    Thanks to $N^{P}_{\Omega}$ (see above), we simply need to check that, at the step with maximum index, $q$ encodes that $\Omega$ is in its accepting state $q_F$.
    This subformula is clearly in \CNF.
    \[{E} \equiv
        q(\BooleanEncoding{2^{\PolynomialTimeMachineOmega(n)}-1};\BooleanEncoding{q_F}).
    \]
\end{itemize}

\noindent
To conclude, the \CompactSigmaKOddFormula{c} that we build is:
\begin{equation}
\label{eq_high_level_construction_exists_lastSOQ}
\begin{aligned}[b]
&\FormulaTMHausdorffWithP(\mi{ind}^{\PolynomialLengthHausPredD(\StringLength{w})}) \equiv {} \\
  &\qquad(\exists \mi{succ}, \mi{less}, \mi{neq}, \mi{add}, \mi{in\mhyphen{}range}, \mi{max\mhyphen{}bit}, \mi{cert}_1)
  (\forall \mi{cert}_2) \cdots (\forall \mi{cert}_{c-1}) (\exists \mi{cert}_c, q, h, t) \\
  &\qquad(\forall \VarSet{a},\VarSet{a}',\VarSet{a}'',\VarSet{b},\VarSet{b}',\VarSet{i},\VarSet{i}', \VarSet{j},\VarSet{j}', \VarSet{k},\VarSet{k}', \VarSet{\ell},\VarSet{\ell}', \VarSet{d},\VarSet{e}) \\ &\qquad\qquad\mu^{\PolynomialBoundingEverything(\StringLength{w})}(\mi{succ},\mi{less},\mi{neq},\mi{add},\VarSet{a},\VarSet{a}',\VarSet{a}'',\VarSet{b},\VarSet{b}') \land {} \\
  &\qquad\qquad\mi{WITHIN}(\mi{ind}^{\PolynomialLengthHausPredD(\StringLength{w})},\mi{succ},\mi{in\mhyphen{}range,\VarSet{d},\VarSet{e}}) \land \mi{MAX\mhyphen{}BIT}(\mi{succ},\mi{in\mhyphen{}range},\mi{max\mhyphen{}bit},\VarSet{d},\VarSet{e}) \land {} \\
  &\qquad\qquad C \land S \land N \land E.
\end{aligned}
\end{equation}
The formula $\FormulaTMHausdorffWithP(\mi{ind}^{\PolynomialLengthHausPredD(\StringLength{w})})$ is in \CNF, as all its components are in \CNF.
The subformula $\mu^{\PolynomialBoundingEverything(\StringLength{w})}$ has polynomially\nbdash-many clauses (\Wrt $\StringLength{w}$), each of them with a constant number of literals.
The rest of the formula is also made by a polynomial number of clauses with a constant number of literals.
Since all these can be obtained in polynomial time in $\StringLength{w}$, the formula $\FormulaTMHausdorffWithP(\mi{ind}^{\PolynomialLengthHausPredD(\StringLength{w})})$ is obtained in polynomial time in $\StringLength{w}$.
By construction, if $\Interpr{I}_z$ is an interpretation of $\mi{ind}^{\PolynomialLengthHausPredD(\StringLength{w})}$ as the bit\nbdash-indexing function of the integer $z$, then $\Interpr{I}_z \models \FormulaTMHausdorffWithP(\mi{ind}^{\PolynomialLengthHausPredD(\StringLength{w})})$ iff $\Language{D}(w,z) = 1$.

\Proofsep

Let us now consider the case in which the number $c$ of certificates is even.
The construction starts from \zcref{eq_high_level_construction_exists_lastSOQ}.
A first simple rewriting of the latter would produce the \CompactSigmaKOddFormula{(c+1)} below:
\begin{equation}
\label{eq_high_level_construction_forall_lastSOQ_first_attempt}
\begin{aligned}[b]
&\FormulaTMHausdorffWithP(\mi{ind}^{\PolynomialLengthHausPredD(\StringLength{w})}) \equiv {} \\
  &\qquad(\exists \mi{succ}, \mi{less}, \mi{neq}, \mi{add}, \mi{in\mhyphen{}range}, \mi{max\mhyphen{}bit}, \mi{cert}_1)
  (\forall \mi{cert}_2) \cdots (\exists \mi{cert}_{c-1}) (\forall \mi{cert}_c) (\exists q, h, t) \\
  &\qquad(\forall \VarSet{a},\VarSet{a}',\VarSet{a}'',\VarSet{b},\VarSet{b}',\VarSet{i},\VarSet{i}', \VarSet{j},\VarSet{j}', \VarSet{k},\VarSet{k}', \VarSet{\ell},\VarSet{\ell}', \VarSet{d},\VarSet{e}) \\ &\qquad\qquad\mu^{\PolynomialBoundingEverything(\StringLength{w})}(\mi{succ},\mi{less},\mi{neq},\mi{add},\VarSet{a},\VarSet{a}',\VarSet{a}'',\VarSet{b},\VarSet{b}') \land {} \\
  &\qquad\qquad\mi{WITHIN}(\mi{ind}^{\PolynomialLengthHausPredD(\StringLength{w})},\mi{succ},\mi{in\mhyphen{}range,\VarSet{d},\VarSet{e}}) \land \mi{MAX\mhyphen{}BIT}(\mi{succ},\mi{in\mhyphen{}range},\mi{max\mhyphen{}bit},\VarSet{d},\VarSet{e}) \land {} \\
  &\qquad\qquad C \land S \land N \land E,
\end{aligned}
\end{equation}
where those on first line are second\nbdash-order quantifiers, and that on second line is a first\nbdash-order quantifier.
To obtain a \CompactSigmaKEvenFormula{c}, we need to amend the formula $\FormulaTMHausdorffWithP(\mi{ind}^{\PolynomialLengthHausPredD(\StringLength{w})})$ in \zcref{eq_high_level_construction_forall_lastSOQ_first_attempt} so that its last second\nbdash-order quantifier is a universal one, and the first\nbdash-order quantifier is an existential one.

The propositional variables $\VarSet{a},\VarSet{a}',\VarSet{a}'',\VarSet{b},\VarSet{b}',\VarSet{d},\VarSet{e}$ can be moved from the \emph{first\nbdash-order universal} quantifier to the last \emph{second\nbdash-order universal} quantifier, which is the one quantifying the function variable $\mi{cert}_c$.
To this aim, these propositional variables are converted throughout the formula to $0$\nbdash-arity function variables, so that they can appear in the scope of a second\nbdash-order universal quantifier.
In this way, $\VarSet{a},\VarSet{a}',\VarSet{a}'',\VarSet{b},\VarSet{b}',\VarSet{d},\VarSet{e}$ are still universally quantified, and we do not need to amend the subformulas $\mu^{\PolynomialBoundingEverything(\StringLength{w})}$, $\mi{WITHIN}$ ${MAX\mhyphen{}BIT}$, and $\mi{S}$ further (besides the conversion of the propositional variables to $0$\nbdash-arity function variables).
In this way, we obtain:
\begin{equation}
\label{eq_high_level_construction_forall_lastSOQ_second_attempt}
\begin{aligned}[b]
&\FormulaTMHausdorffWithP(\mi{ind}^{\PolynomialLengthHausPredD(\StringLength{w})}) \equiv {} \\
  &\qquad(\exists \mi{succ}, \mi{less}, \mi{neq}, \mi{add}, \mi{in\mhyphen{}range}, \mi{max\mhyphen{}bit}, \mi{cert}_1)
  \cdots
  (\forall \mi{cert}_c, \VarSet{a}, \VarSet{a}', \VarSet{a}'', \VarSet{b}, \VarSet{b}', \VarSet{d}, \VarSet{e}) (\exists q, h, t) \\
  &\qquad(\forall \VarSet{i},\VarSet{i}', \VarSet{j},\VarSet{j}', \VarSet{k},\VarSet{k}', \VarSet{\ell},\VarSet{\ell}') \\ &\qquad\qquad\mu^{\PolynomialBoundingEverything(\StringLength{w})}(\mi{succ},\mi{less},\mi{neq},\mi{add},\VarSet{a},\VarSet{a}',\VarSet{a}'',\VarSet{b},\VarSet{b}') \land {} \\
  &\qquad\qquad\mi{WITHIN}(\mi{ind}^{\PolynomialLengthHausPredD(\StringLength{w})},\mi{succ},\mi{in\mhyphen{}range,\VarSet{d},\VarSet{e}}) \land \mi{MAX\mhyphen{}BIT}(\mi{succ},\mi{in\mhyphen{}range},\mi{max\mhyphen{}bit},\VarSet{d},\VarSet{e}) \land {} \\
  &\qquad\qquad C \land S \land N \land E,
\end{aligned}
\end{equation}
To obtain a \CompactSigmaKEvenFormula{c} from \zcref{eq_high_level_construction_forall_lastSOQ_second_attempt}, we have to spare the last \emph{second\nbdash-order existential} quantifier,
i.e., the one quantifying the function variables~$q$, $h$, and~$t$.
These function variables need hence to be moved toward a preceding second\nbdash-order quantifier.
One may think that, to avoid to change the subformula $C \land S \land N \land E$, we might keep these function variables existentially quantified, and so move them toward a preceding \emph{second\nbdash-order existential} quantifier.
However, this cannot be done, because~$q$, $h$, and~$t$, depend on the function variable $\mi{cert}_c$.
Thus, at most we can move~$q$, $h$, and~$t$, toward the previous second\nbdash-order quantifier, i.e., the one quantifying $\mi{cert}_c$, which is however a universal quantifier.
For this to properly work:
(i)~we replace the \emph{first\nbdash-order universal} quantifier
with an \emph{existential} one, and 
(ii)~we mutate ``$C \land S \land N \land E$'' into an implicative form, and obtain:
\begin{equation}
\label{eq_high_level_construction_forall_lastSOQ}
\begin{aligned}[b]
&\FormulaTMHausdorffWithP(\mi{ind}^{\PolynomialLengthHausPredD(\StringLength{w})}) \equiv {} \\
  &\qquad(\exists \mi{succ}, \mi{less}, \mi{neq}, \mi{add}, \mi{in\mhyphen{}range}, \mi{max\mhyphen{}bit}, \mi{cert}_1)
  \cdots
  (\forall \mi{cert}_c, \VarSet{a}, \VarSet{a}', \VarSet{a}'', \VarSet{b}, \VarSet{b}', \VarSet{d}, \VarSet{e}, q, h, t) \\
  &\qquad(\exists \VarSet{i},\VarSet{i}', \VarSet{j},\VarSet{j}', \VarSet{k},\VarSet{k}', \VarSet{\ell},\VarSet{\ell}') \\ &\qquad\qquad\mu^{\PolynomialBoundingEverything(\StringLength{w})}(\mi{succ},\mi{less},\mi{neq},\mi{add},\VarSet{a},\VarSet{a}',\VarSet{a}'',\VarSet{b},\VarSet{b}') \land {} \\
  &\qquad\qquad\mi{WITHIN}(\mi{ind}^{\PolynomialLengthHausPredD(\StringLength{w})},\mi{succ},\mi{in\mhyphen{}range,\VarSet{d},\VarSet{e}}) \land \mi{MAX\mhyphen{}BIT}(\mi{succ},\mi{in\mhyphen{}range},\mi{max\mhyphen{}bit},\VarSet{d},\VarSet{e}) \land {} \\
  &\qquad\qquad (\lnot E \rightarrow \lnot C \lor \lnot S \lor \lnot N),
\end{aligned}
\end{equation}

We now show that the interpretation $\Interpr{I}_z$ of $\mi{ind}^{\PolynomialLengthHausPredD(\StringLength{w})}$ encoding~$z$ satisfies the implicative form employed in \zcref{eq_high_level_construction_forall_lastSOQ} iff $\Omega$ accepts the string $x = {}$``$w \hashsep z \dollarsep u_1 \dollarsep \cdots \dollarsep u_c$''.
Observe that in \zcref{eq_high_level_construction_forall_lastSOQ} the function variables~$q$, $h$, and~$t$, are universally quantified, and the propositional variables $\VarSet{i},\VarSet{i}', \VarSet{j},\VarSet{j}', \VarSet{k},\VarSet{k}', \VarSet{\ell},\VarSet{\ell}'$ are existentially quantified.
Hence, the satisfaction of the formula above by $\Interpr{I}_z$ mirrors whether $\Omega$ accepts $x$ or not iff the following holds:
when $\Omega$ accepts $x$, then for all instantiations of $q$, $h$, and~$t$, there is a truth assignment for $\VarSet{i},\VarSet{i}', \VarSet{j},\VarSet{j}', \VarSet{k},\VarSet{k}', \VarSet{\ell},\VarSet{\ell}'$ satisfying `$(\lnot E \rightarrow \lnot C \lor \lnot S \lor \lnot N)$'; and
when $\Omega$ rejects $x$, then there is an instantiation of $q$, $h$, and~$t$, such that, for all truth assignments for $\VarSet{i},\VarSet{i}', \VarSet{j},\VarSet{j}', \VarSet{k},\VarSet{k}', \VarSet{\ell},\VarSet{\ell}'$, the implication `$(\lnot E \rightarrow \lnot C \lor \lnot S \lor \lnot N)$' is not satisfied.

Observe that there exists an assignment for $\VarSet{i},\VarSet{i}', \VarSet{j},\VarSet{j}', \VarSet{k},\VarSet{k}', \VarSet{\ell},\VarSet{\ell}'$ evaluating:
\begin{itemize}[nosep,label=--,left=0pt]
  \item $\lnot C$ to $\valtrue$, when $q$, $h$, or~$t$, are instantiated to \emph{break} the consistency requirement;
  \item $\lnot S$ to $\valtrue$, when $q$, $h$, or~$t$, are instantiated to \emph{incorrectly} encode the initial configuration of $\Omega$; and
  \item $\lnot N$ to $\valtrue$, when $q$, $h$, or~$t$, are instantiated to \emph{incorrectly} encode the computation of $\Omega$ over $x$.
\end{itemize}

Assume that $\Omega$ accepts $x$.
Given the interpretation $\Interpr{I}_z$, the set of all possible instantiations of~$q$, $h$, and~$t$, can be partitioned into two subsets:
(a)~the instantiations evaluating $E$ to $\valtrue$, and
(b)~the instantiations evaluating $E$ to $\valfalse$ (notice that $E$ does not depend on the propositional variables $\VarSet{i},\VarSet{i}', \VarSet{j},\VarSet{j}', \VarSet{k},\VarSet{k}', \VarSet{\ell},\VarSet{\ell}'$; see above).
For the instantiations~(a), since $\lnot E$ is evaluated to $\valfalse$, the implication `$(\lnot E \rightarrow \lnot C \lor \lnot S \lor \lnot N)$' is trivially satisfied irrespective of the truth values of $C$, $S$, and $N$.
On the other hand, for the instantiations~(b), since $\lnot E$ is evaluated to $\valtrue$, the implication `$(\lnot E \rightarrow \lnot C \lor \lnot S \lor \lnot N)$' is satisfied iff there exists an assignment for $\VarSet{i},\VarSet{i}', \VarSet{j},\VarSet{j}', \VarSet{k},\VarSet{k}', \VarSet{\ell},\VarSet{\ell}'$ evaluating `$\lnot C \lor \lnot S \lor \lnot N$' to $\valtrue$.
Because $\Omega$ is assumed to accept $x$, all instantiations of~$q$, $h$, and~$t$, not satisfying $E$ must have an error in the encoding of the computation of $\Omega$.
Hence, there must be truth assignments for $\VarSet{i},\VarSet{i}', \VarSet{j},\VarSet{j}', \VarSet{k},\VarSet{k}', \VarSet{\ell},\VarSet{\ell}'$, satisfying `$\lnot C \lor \lnot S \lor \lnot N$', and hence satisfying the implication.

Assume now that $\Omega$ rejects $x$.
Given the interpretation $\Interpr{I}_z$, consider the instantiation of~$q$, $h$, and~$t$, correctly encoding $\Omega$'s computation on $x$.
Since $\Omega$ rejects $x$, in this instantiation $\lnot E$ is $\valtrue$ (remember that $E$ does not depend on the propositional variables $\VarSet{i},\VarSet{i}', \VarSet{j},\VarSet{j}', \VarSet{k},\VarSet{k}', \VarSet{\ell},\VarSet{\ell}'$).
By this, the implication `$(\lnot E \rightarrow \lnot C \lor \lnot S \lor \lnot N)$' is satisfied iff there is a truth assignment for $\VarSet{i},\VarSet{i}', \VarSet{j},\VarSet{j}', \VarSet{k},\VarSet{k}', \VarSet{\ell},\VarSet{\ell}'$ evaluating `$\lnot C \lor \lnot S \lor \lnot N$' to $\valtrue$.
However, since~$q$, $h$, and~$t$, correctly encode the computation of $\Omega$ on $x$, for all truth assignments for $\VarSet{i},\VarSet{i}', \VarSet{j},\VarSet{j}', \VarSet{k},\VarSet{k}', \VarSet{\ell},\VarSet{\ell}'$, it holds that `$C \land S \land N$', and hence the implication is not satisfied by any of these truth assignments.

To conclude, we show that $(\lnot E \rightarrow \lnot C \lor \lnot S \lor \lnot N)$ can be translated to \CNF in polynomial time.
Observe that the implication $(\lnot E \rightarrow \lnot C \lor \lnot S \lor \lnot N)$ is equivalent to $F = (E \lor \lnot C \lor \lnot S \lor \lnot N)$.
Notice that $E$, $C$, $S$, and $N$, are \CNF formulas (see above).
Hence, the disjunction of their negations, by De~Morgan's laws, is a \DNF formula.
Observe first that $E$, $C$, and $N$, have a constant number of clauses containing each a constant number of literals.
Therefore, the disjunction of their negations has a constant number of terms containing each a constant number of literals.
Let us now focus on $S$.
Remember that the formula $S$ is constituted by various subformulas, of which only $W$ has a number of literals depending on the size of the string $w$;
all the other parts of $S$ have a constant number of clauses containing each a constant number of literals.
Thus, except for $W$, the negation of the remaining part of $S$ is a \DNF containing a constant number of terms with a constant number of literals.
If we look at the subformula $W$ (see above), it is a conjunction of $\StringLength{w}$\nbdash-many literals;
hence its negation is a disjunction of $\StringLength{w}$\nbdash-many literals.
Since $F$ is a \DNF formula, $F$ can be rewritten as $F = G \lor \lnot W$, where $G$ collects all the terms coming from all the parts of $F$ but $W$ (which is a subformula of $S$).
Notice that $G$ is made by a constant number of terms containing each a constant number of literals, whereas $\lnot W$ has $\StringLength{w}$\nbdash-many terms of \emph{one} literal.
Therefore, the number of clauses of the \CNF form of $F$ contains a constant number of clauses containing each a linear number of literals.%
\footnote{If $\gamma = \bigwedge_{i = 1}^m D_i$ is a \DNF formula, where $D_i$ are its terms, its \CNF form can easily be obtained by De~Morgan's and distribution laws.
If $|D_i|$ is the number of literals in $D_i$, the \CNF form of $\gamma$ obtained in this way has $\prod_{i = 1}^m |D_i|$ clauses containing each $m$ literals.}

Thus, the entire matrix of \zcref{eq_high_level_construction_forall_lastSOQ} can be translated to a \CNF formula of polynomial size in $\StringLength{w}$.
By this, also the formula in \zcref{eq_high_level_construction_forall_lastSOQ} con be obtained in polynomial time in $\StringLength{w}$.
\end{proof}

The next result states that we can efficiently build from a generic \SigmaKFormula{c} $\Psi$ an equivalent \SigmaKFormula{c} $\Psi'$ where each set of quantified Boolean function variables is replaced by a single Boolean function variable.

\begin{lemma}[store=SigmaFormulaSubstituteMultipleFunctionsWithOne]
\label{theo_SigmaFormula_substitute_multiple_functions_with_one}
Let $\Psi(\VarSet{g},\VarSet{y}) =
(\exists \VarSet{f^1})
\cdots
(\SOQ_c \VarSet{f^c})
(\FOQ_1 \VarSet{x^1})
\cdots
(\FOQ_m \VarSet{x^m}) \PrefixMatrixSeparator
\psi(\VarSet{f^1},\dots,\VarSet{f^c}\VarSet{g},\VarSet{x^1},\dots,\VarSet{x^m},\VarSet{y})$ be a \SigmaKFormula{c}, with $\psi$ in \CNF (resp., \DNF).
An equivalent \SigmaKFormula{c} $\Psi'(\VarSet{g},\VarSet{y}) =
(\exists h_1)
\cdots
(\SOQ_c h_c) \linebreak[0]
(\FOQ_1 \VarSet{x^1})
\cdots
(\FOQ_m \VarSet{x^m}) \linebreak[0] \PrefixMatrixSeparator 
\psi'(h_1,\dots,h_c,\linebreak[0] \VarSet{g},\VarSet{x^1},\dots,\VarSet{x^m},\VarSet{y})$, with $\psi'$ in \CNF (resp., \DNF), can be built from $\Psi$ in polynomial time.
\end{lemma}

\begin{proof}
Let $\Psi(\VarSet{g},\VarSet{y}) =
(\exists f^1_1,\dots,f^1_{n_1})
(\forall f^2_1,\dots,f^2_{n_2})
\cdots
(\SOQ_c f^c_1,\dots,f^c_{n_c})
(\FOQ_1 \VarSet{x^1})
\cdots
(\FOQ_m \VarSet{x^m}) \PrefixMatrixSeparator
\psi(f^1_1,\dots,f^1_{n_1},\linebreak[0] \dots,f^c_1,\dots,f^c_{n_c},\VarSet{g},\VarSet{x^1},\dots,\VarSet{x^m},\VarSet{y})$, be a \SigmaKFormula{c}, where ${\{f^i_j\}}_{1 \leq j \leq n_i}$ are Boolean function variables of arity $a_{i,j}$, respectively, for all $i$ with $1 \leq i \leq c$.
We show below how to aggregate all the Boolean function variables $\set{f^i_1,\dots,f^i_{n_i}}$ of the $i$\nbdash-th quantifier in just one function variable $h_i$.
A similar substitution can be applied to all the sets of quantified function variables of $\Psi$, and obtain the rewritten \SigmaKFormula{c} $\Psi'(\VarSet{g},\VarSet{y}) =
(\exists h_1)
(\forall h_2)
\cdots
(\SOQ_c h_c)
(\FOQ_1 \VarSet{x^1})
\cdots
(\FOQ_m \VarSet{x^m}) \PrefixMatrixSeparator
\psi'(h_1,\dots,h_c,\linebreak[0] \VarSet{g},\VarSet{x^1},\dots,\VarSet{x^m},\VarSet{y})$.

Let $\wt{a}_i = \max_{1 \leq j \leq n_i} (a_{i,j})$.
Consider the Boolean function $h_i$ of arity $m_i + \wt{a}_i$, where $m_i = \lfloor \log n_i \rfloor +1$, that we use in $\Psi'$ to replace the functions $f^i_1,\dots,f^i_{n_i}$.
The first $m_i$ arguments of $h_i$ are used as an index to distinguish the function $f^i_j$ being replaced, while the remaining $\wt{a}_i$ arguments of $h_i$ accommodate the arguments of the functions $f^i_j$ (for all $j$ with $1 \leq j \leq n_i$).
Let $f^i_j(x_1,\dots,x_b)$ be an occurrence in $\psi$ of the Boolean function variable $f^i_j$.
In $\psi'$, we simply replace $f^i_j(x_1,\dots,x_b)$ with
\[h_i(\BooleanEncoding{j}; \underbracket[.5pt]{0,\dots,0}_{\mathclap{(\wt{a}_i-b)\text{-many}}},x_1,\dots,x_b),
\]
where `$\BooleanEncoding{j}$' denotes the Boolean encoding over $m_i$ bits of the value $j$.
By replacing in a similar way all the occurrences of the Boolean function variables in $\psi$ we obtain~$\psi'$.
Clearly, this can be carried out in polynomial time in the size of $\Psi$.
Furthermore, since we operate only a substitution of the function literals, the structure of $\psi$ is preserved in $\psi'$, and hence, if $\psi$ is in \CNF (resp., \DNF), then $\psi'$ is in \CNF (resp., \DNF).

Observe that in $\Psi'$ there are no fresh propositional variables, and the arguments of $h_i$, for all $i$, do not include propositional variables besides those appearing as arguments in the occurrences of the functions~$f^i_j$ in~$\Psi$.
For this reason, it is not hard to see that $\Psi$ and $\Psi'$ are equivalent.
\end{proof}

Notice that, in the function mapping introduced in the proof of the \zcref*[typeset=name,nocap]{theo_SigmaFormula_substitute_multiple_functions_with_one} above, the occurrences in $\psi$ of a function $f^i_j()$ of arity~$0$ are always substituted in $\psi'$ by the term $h_i(\BooleanEncoding{j};0,\dots,0)$.
Hence, a single Boolean value is associated with $f^i_j()$'s mapping over $h_i$ in $\psi'$, as expected.  

\getkeytheorem{ComplexityLexMaxFunc}

\begin{proof}
(\emph{Membership}).
We prove \LexMaxFuncSigmaFormula{c} in $\Oracle{\ExpTime}{\SigmaP{c}}$.
Consider first this auxiliary problem, named~$\Language{A}$:
for a pair $\pair{\Psi(\VarSet{g}),\Interpr{I}}$, where $\Psi(\VarSet{g})$ is a \SigmaKFormula{c}, with $\VarSet{g}$ an ordered set of Boolean function variables, and $\Interpr{I}$ is an interpretation for $\VarSet{g}$, decide whether there exists a model $\Interpr{I}'$ of $\Psi(\VarSet{g})$ such that $\Interpr{I}'$ is \emph{not} lexicographically\nbdash-smaller than $\Interpr{I}$.
This problem is easily shown in $\Oracle{\NPTime}{\SigmaP{c-1}} = \SigmaP{c}$.
First notice that the interpretation $\Interpr{I}$, which is part of the input, is of \emph{exponential} size in the size of $\Psi(\VarSet{g})$, because $\Interpr{I}$ has to encode the exponentially\nbdash-long instantiations of the function variables in $\VarSet{g}$.
Therefore, a computation that is \emph{polynomial} in the size of the pair $\pair{\Psi(\VarSet{g}),\Interpr{I}}$, is actually exponential in the size of $\Psi(\VarSet{g})$.
To answer $\Language{A}$, we guess an interpretation $\Interpr{I}'$ for the function variables in $\VarSet{g}$ (feasible in nondeterministic polynomial time), then check that $\Interpr{I}'$ is not lexicographically\nbdash-smaller than $\Interpr{I}$ (feasible in polynomial time), and that $\EvalInterpr{\Psi(\VarSet{g})}{\Interpr{I}'}$ is $\valtrue$.
The latter is feasible in $\Oracle{\NExpTime}{\SigmaP{c-1}}$ in the size of $\Psi(\VarSet{g})$~\cite{Lohrey2012,Luck2016-techrep}.
By what we have observed, this is also in $\Oracle{\NPTime}{\SigmaP{c-1}}$ in the size of the input pair $\pair{\Psi(\VarSet{g}),\Interpr{I}}$.

Let us now focus on \LexMaxFuncSigmaFormula{c}.
Let $\Phi(\VarSet{f})$ be a \SigmaKFormula{c}, where $\VarSet{f} = \set{f_n,\dots,f_1}$ is an ordered set of Boolean function variables of arity $a_n,\dots,a_1$, respectively.
An $\ExpTime$ oracle machine $\Oracle{\Machine{M}}{?}$ can decide \LexMaxFuncSigmaFormula{c} via the aid of a $\SigmaP{c}$ oracle for $\Language{A}$.
Intuitively, $\Machine{M}$ performs a binary search in the space of all the possible instantiations for the Boolean functions $\set{f_n,\dots,f_1}$.
Remember that, for a Boolean function $f$, the unfolding of $f$ is the binary string that in first position has the Boolean value $f(1,\dots,1)$, in second position has the Boolean value $f(1,\dots,1,0)$, and so on.
Therefore, the search space for $\Machine{M}$ is simply the space of all the possible binary strings which are obtained as the juxtaposition of the unfoldings of $f_n,\dots,f_1$ (in this order) for all the possible instantiations of $f_n,\dots,f_1$.
The length of such strings is $\ell = \sum_{i = 1}^{n} 2^{a_i}$, and hence $\Machine{M}$ has to search within a space of $2^{\ell}$ binary strings (of exponential length).

By this, $\Machine{M}$ can systematically explore the search space, via a binary search, to individuate the lexicographic\nbdash-maximum model for $\Phi(\VarSet{f})$ via an oracle for $\Language{A}$.
The search space that $\Machine{M}$ explores is doubly\nbdash-exponential in the size of the input to $\Machine{M}$, and hence it can be explored via binary search with exponentially\nbdash-many queries to the oracle.
Once the lexicographic\nbdash-maximum model $\Interpr{I}$ for $\Phi(\VarSet{f})$ has been computed, $\Machine{M}$ can return its answer by looking at whether the Boolean value of $\EvalInterpr{f_1}{\Interpr{I}}(0,\dots,0)$ is $\valtrue$/$\valfalse$.

(\emph{Hardness}).
We now prove that \LexMaxFuncSigmaFormula{c} is $\Oracle{\ExpTime}{\SigmaP{c}}$\HardSuffix by showing that there exists a polynomial reduction from every language $\Language{L} \in \Oracle{\ExpTime}{\SigmaP{c}}$ to \LexMaxFuncSigmaFormula{c}.

Below we exhibit a reduction producing a formula with a single free Boolean function variable $f$, and we show the $\Oracle{\ExpTime}{\SigmaP{c}}$\HardSuffix{}ness of deciding whether $f(0,\dots,0) = \valtrue$ in the lexicographic\nbdash-maximum model of the formula.
Since $\Oracle{\ExpTime}{\SigmaP{c}}$ is closed under complement, the $\Oracle{\ExpTime}{\SigmaP{c}}$\HardSuffix{}ness of deciding whether $f(0,\dots,0)$ is $\valfalse$ in the lexicographic\nbdash-maximum model of the formula will follow.

Let $\Language{L} \in \Oracle{\ExpTime}{\SigmaP{c}}$ be a language.
By \zcref{theo_summary_equivalence_intermediate_levels_Hausdorff}, $\Language{L} \in \BoundedHausdCLASS{2^{2^{\PolFunctions}}}{\Oracle{\NExpTime}{\SigmaP{c-1}}}$.
There hence exists a $\Oracle{\NExpTime}{\SigmaP{c-1}}$ Hausdorff predicate $\Language{D}$ 
of doubly\nbdash-exponential length characterizing $\Language{L}$.
Since $\Language{D}$ is of doubly\nbdash-exponential length, by \zcref{theo_coding_nexp_machine_for Hausdorff_predicate} there exist suitable polynomials $\PolynomialBoundingEverything(n)$ and $\PolynomialLengthHausPredD(n)$ such that $2^{2^{\PolynomialLengthHausPredD(n)}} - 1$ bounds the length of $\Language{D}$ and $\FormulaTMHausdorffWithP(\mi{ind}^{\PolynomialLengthHausPredD(\StringLength{w})})$ encodes whether $\Language{D}(w,z) = 1$ or not, when $\mi{ind}^{\PolynomialLengthHausPredD(\StringLength{w})}$ is interpreted as the bit\nbdash-indexing function of $z$ over $2^{\PolynomialLengthHausPredD(\StringLength{w})}$ bits.
Observe now that the formula $\FormulaTMHausdorffWithP(\mi{ind}^{\PolynomialLengthHausPredD(\StringLength{w})})$ is actually the formula that the reduction has to produce, because, for every string $w$, it holds that $w \in \Language{L}$ iff the lexicografic\nbdash-maximum model $\wt{\Interpr{I}}$ of $\FormulaTMHausdorffWithP(\mi{ind}^{\PolynomialLengthHausPredD(\StringLength{w})})$ instantiates $\mi{ind}^{\PolynomialLengthHausPredD(\StringLength{w})}$ so that $\EvalInterpr{\mi{ind}^{\PolynomialLengthHausPredD(\StringLength{w})}}{\wt{\Interpr{I}}}(0,\dots,0) = \valtrue$.

By \zcref{theo_SigmaFormula_substitute_multiple_functions_with_one}, and some simple rewriting, we can also obtain in polynomial time a \CNF \SimpleSigmaKOddFormula{c} or a \SimpleSigmaKEvenFormula{c} equivalent to $\FormulaTMHausdorffWithP(\mi{ind}^{\PolynomialLengthHausPredD(\StringLength{w})})$, depending on whether $c$ is odd or even, respectively.
\end{proof}

\getkeytheorem{ComplexityLexMax}

\begin{proof}
(\emph{Membership}).
By the equivalence $\Oracle{\NPTime}{\Oracle{\NExpTime}{\SigmaP{c-1}}} = \BoundedOracle{\ExpTime}{\SigmaP{c}}{\PolFunctions}$ stemming from \zcref{theo_summary_equivalence_intermediate_levels_Hausdorff,theo_summary_nexp_nexp_Hausdorff}, we prove that \LexMaxSigmaFormula{c} is in $\BoundedOracle{\ExpTime}{\SigmaP{c}}{\PolFunctions}$ by showing that \LexMaxSigmaFormula{c} is in $\Oracle{\NPTime}{\Oracle{\NExpTime}{\SigmaP{c-1}}}$.

Consider first this auxiliary problem, that we name $\Language{A}$:
for a pair $\pair{\Psi(\VarSet{y}),\Interpr{I}}$, where $\Psi(\VarSet{y})$ is a \SigmaKFormula{c}, with $\VarSet{y}$ an ordered set of Boolean propositional variables, and $\Interpr{I}$ is an interpretation for $\VarSet{y}$, decide whether $\Psi(\VarSet{y})$ has a model $\Interpr{I}'$ lexicographically greater than~$\Interpr{I}$.
The problem $\Language{A}$ is easily shown in $\Oracle{\NExpTime}{\SigmaP{c-1}}$:
guess an interpretation $\Interpr{I}'$ for $\VarSet{y}$ (feasible in nondeterministic polynomial time), then check that $\Interpr{I}'$ is lexicographically greater than $\Interpr{I}$ (feasible in polynomial time), and that $\EvalInterpr{\Psi(\VarSet{y})}{\Interpr{I}'}$ is $\valtrue$ (feasible in $\Oracle{\NExpTime}{\SigmaP{c-1}}$, see~\cite{Lohrey2012,Luck2016-techrep});
observe here that, unlike in the membership proof of \zcref{theo_complexity_LexMaxFunc}, the size of the interpretation $\Interpr{I}$ is only polynomial in the size of $\Psi(\VarSet{y})$, because here $\Interpr{I}$ is a truth\nbdash-assignment for the Boolean propositional variables $\VarSet{y}$, and not, like in \zcref{theo_complexity_LexMaxFunc}, a function\nbdash-instantiation of Boolean function variables.

Let us now focus on \LexMaxSigmaFormula{c}, and let $\Phi(\VarSet{x})$ be a \SigmaKFormula{c}.
An $\NPTime$ oracle machine $\Oracle{\Machine{M}}{?}$, aided by a $\Oracle{\NExpTime}{\SigmaP{c-1}}$ oracle for the problem $\Language{A}$, can decide \LexMaxSigmaFormula{c} as follows.
First, $\Machine{M}$ guesses an interpretation $\Interpr{I}$ for $\VarSet{x}$ (feasible in nondeterministic polynomial time).
If $\Interpr{I}$ assigns $\valtrue$ to all the variables in $\VarSet{x}$, then $\Machine{M}$ checks that $\Interpr{I}$ satisfies $\Phi(\VarSet{x})$, and next $\Machine{M}$ answers \yeslbl, because $\EvalInterpr{x_1}{\Interpr{I}} = \valtrue$.
On the other hand, if $\Interpr{I}$ does \emph{not} assign $\valtrue$ to all the variables in $\VarSet{x}$, then $\Machine{M}$ also computes the lexicographic\nbdash-successor $\Interpr{I}^{+}$ of $\Interpr{I}$ (feasible in polynomial time).
After this, $\Machine{M}$ can check that $\Interpr{I}^{+}$ is the lexicographic\nbdash-maximum model of $\Phi(\VarSet{x})$ by asking two questions to its oracle:
namely, whether $\pair{\Phi(\VarSet{x}),\Interpr{I}}$ and $\pair{\Phi(\VarSet{x}),\Interpr{I}^{+}}$ are a \yeslbl- and a \noinst, respectively, of the problem $\Language{A}$.
Once $\Machine{M}$ has ascertained that $\Interpr{I}^{+}$ is the lexicographic\nbdash-maximum model of $\Phi(\VarSet{x})$, $\Machine{M}$ returns its answer according to the Boolean value assignment for $x_1$ in $\Interpr{I}^{+}$.%
\footnote{An $\BoundedOracle{\ExpTime}{\SigmaP{c}}{\PolFunctions}$ procedure for \LexMaxSigmaFormula{c} could compute the lexicographic\nbdash-maximum model of $\Phi(\VarSet{x})$ via binary search.}

(\emph{Hardness}).
We now prove that \LexMaxSigmaFormula{c} is $\BoundedOracle{\ExpTime}{\SigmaP{c}}{\PolFunctions}$\HardSuffix by showing that there exists a polynomial reduction from every language $\Language{L} \in \BoundedOracle{\ExpTime}{\SigmaP{c}}{\PolFunctions}$ to \LexMaxSigmaFormula{c}.
Below we will focus on showing the $\BoundedOracle{\ExpTime}{\SigmaP{c}}{\PolFunctions}$\HardSuffix{}ness of deciding whether $x_1$ is $\valtrue$ in the lexicographic\nbdash-maximum model of $\Phi(\VarSet{x})$.
With this result in place, since $\BoundedOracle{\ExpTime}{\SigmaP{c}}{\PolFunctions}$ is closed under complement, the $\BoundedOracle{\ExpTime}{\SigmaP{c}}{\PolFunctions}$\HardSuffix{}ness of deciding whether $x_1$ is $\valfalse$ in the lexicographic\nbdash-maximum model of $\Phi(\VarSet{x})$ will consequently be proven.

Let $\Language{L}$ be an arbitrary language in $\BoundedOracle{\ExpTime}{\SigmaP{c}}{\PolFunctions}$.
We prove $\Language{L} \KarpRed {}$\LexMaxSigmaFormula{c} by exhibiting a reduction that, for every string $w$, constructs a formula $\Phi^{\Language{L}}_{w}(\VarSet{x})$, with $\VarSet{x}$ an ordered set of Boolean propositional variables, such that $\Phi^{\Language{L}}_{w}(\VarSet{x})$ is a \SimpleSigmaKOddFormula{c} or a \SimpleSigmaKEvenFormula{c} depending on whether $c$ is odd or even, respectively, and $w \in \Language{L}$ iff the lexicographic\nbdash-maximum model of $\Phi^{\Language{L}}_{w}(\VarSet{x})$ assigns $\valtrue$ to the least\nbdash-significant variable in~$\VarSet{x}$.

Since $\Language{L} \in \BoundedOracle{\ExpTime}{\SigmaP{c}}{\PolFunctions}$, by \zcref{theo_summary_equivalence_intermediate_levels_Hausdorff} $\Language{L} \in \BoundedHausdCLASS{2^{\PolFunctions}}{\Oracle{\NExpTime}{\SigmaP{c-1}}}$.
Hence, there is a $\Oracle{\NExpTime}{\SigmaP{c-1}}$ Hausdorff predicate $\Language{D}$ 
of exponential length characterizing $\Language{L}$.
Because $\Language{D}$ is of exponential length, by \zcref{theo_coding_nexp_machine_for Hausdorff_predicate} there are polynomials $\PolynomialBoundingEverything(n)$ and $\PolynomialLengthHausPredD(n)$ such that $2^{\PolynomialLengthHausPredD(n)} - 1$ bounds the length of $\Language{D}$ and $\FormulaTMHausdorffWithP(\mi{ind}^{\PolynomialLengthHausPredD(\StringLength{w})})$ encodes whether $\Language{D}(w,z) = 1$, when $\mi{ind}^{\PolynomialLengthHausPredD(\StringLength{w})}$ is interpreted as the bit\nbdash-indexing function of $z$ over $2^{\PolynomialLengthHausPredD(\StringLength{w})}$ bits.

Let us now focus on $\Phi^{\Language{L}}_{w}(\VarSet{x})$.
Take the ordered set of propositional variables $\VarSet{x} = \set{x_{\PolynomialLengthHausPredD(\StringLength{w})},\dots,x_1}$;
with these variables, we can represent in binary integers between~$0$ and~$2^{\PolynomialLengthHausPredD(\StringLength{w})}-1$ (remember that $2^{\PolynomialLengthHausPredD(n)} - 1$ bounds the length of $\Language{D}$, hence, for each string $w$, the Hausdorff index 
of $w$ \Wrt $\Language{D}$ is such that $\HausdIndex{w}{\Language{D}} \leq 2^{\PolynomialLengthHausPredD(\StringLength{w})} - 1$). 

Consider first a subformula copying the integer value encoded in the variables of $\VarSet{x}$ onto the bit\nbdash-indexing function $\mi{ind}^{\PolynomialLengthHausPredD(\StringLength{w})}$, so that this value can be read from within $\FormulaTMHausdorffWithP(\mi{ind}^{\PolynomialLengthHausPredD(\StringLength{w})})$.
In the formula below, $\VarSet{y} = \set{y_{\PolynomialLengthHausPredD(\StringLength{w})}, \dots, y_1}$ is a set of Boolean propositional variables.
Below, the literal $\lnot \mi{less}(\VarSet{y},\BooleanEncoding{\PolynomialLengthHausPredD(\StringLength{w})})$ refers to the function $\mi{less}$ defined in the formula $\mu$ of \zcref{theo_SigmaFormula_arithmetic}, and embedded in $\FormulaTMHausdorffWithP(\mi{ind}^{\PolynomialLengthHausPredD(\StringLength{w})})$ (see \zcref{theo_coding_nexp_machine_for Hausdorff_predicate}), which will be part of the overall formula.
Notice that $\mi{less}$ has arguments of $\PolynomialBoundingEverything(\StringLength{w})$\nbdash-many bits, and $\PolynomialBoundingEverything(n) \geq \PolynomialLengthHausPredD(n)$, hence the literal is meaningful (the extra most significant bits are filled with zeros).
\begin{align*}
\mi{copy}(\mi{ind}^{\PolynomialLengthHausPredD(\StringLength{w})},\VarSet{x},\VarSet{y}) \equiv
  &\bigwedge_{1 \leq k \leq \PolynomialLengthHausPredD(\StringLength{w})}
     \Bigl(
        \bigl(x_k \rightarrow \mi{ind}^{\PolynomialLengthHausPredD(\StringLength{w})}(\BooleanEncoding{k {-} 1})\bigr) \land 
        \bigl(\lnot x_k \rightarrow \lnot \mi{ind}^{\PolynomialLengthHausPredD(\StringLength{w})}(\BooleanEncoding{k {-} 1})\bigr)
     \Bigr) \land {} \\
     &\bigl(\lnot \mi{less}(\VarSet{y},\BooleanEncoding{\PolynomialLengthHausPredD(\StringLength{w})}) \rightarrow \lnot \mi{ind}^{\PolynomialLengthHausPredD(\StringLength{w})}(\VarSet{y})\bigr).
\end{align*}
Since the variables $\VarSet{y}$ will be quantified universally in the final formula, the expression ``$\bigl(\lnot \mi{less}(\VarSet{y},\BooleanEncoding{\PolynomialLengthHausPredD(\StringLength{w})}) \rightarrow \linebreak[0] \lnot \mi{ind}^{\PolynomialLengthHausPredD(\StringLength{w})}(\VarSet{y})\bigr)$'' aims at completing the value encoded in the bit\nbdash-indexing function by setting to `$0$' in $\mi{ind}^{\PolynomialLengthHausPredD(\StringLength{w})}$ all bits of $z$ in positions strictly greater than $\PolynomialLengthHausPredD(\StringLength{w}){}-1$.
Clearly, this subformula is in \CNF and can be obtained in polynomial time in $\StringLength{w}$.

The overall formula for this reduction is:
\begin{equation*}
  \Phi^{\Language{L}}_{w}(\VarSet{x}) \equiv 
    (\exists \mi{ind}^{\PolynomialLengthHausPredD(\StringLength{w})}) (\forall \VarSet{y}) \PrefixMatrixSeparator \mi{copy}(\mi{ind}^{\PolynomialLengthHausPredD(\StringLength{w})},\VarSet{x},\VarSet{y}) \land 
    \FormulaTMHausdorffWithP(\mi{ind}^{\PolynomialLengthHausPredD(\StringLength{w})}).
\end{equation*}

Observe that $\Phi^{\Language{L}}_{w}(\VarSet{x})$ can be obtained from $w$ in polynomial time, and is in \CNF, as $\FormulaTMHausdorffWithP(\mi{ind}^{\PolynomialLengthHausPredD(\StringLength{w})})$ is in \CNF. By \zcref{theo_SigmaFormula_substitute_multiple_functions_with_one}, and some simple rewriting, we can also obtain in polynomial time a \CNF \SimpleSigmaKOddFormula{c} or a \SimpleSigmaKEvenFormula{c} equivalent to $\Phi^{\Language{L}}_{w}(\VarSet{x})$, depending on whether $c$ is odd or even, respectively;
observe that in the case of the \SimpleSigmaKEvenFormula{c}, the Boolean propositional variables $\VarSet{y}$ need to be converted to Boolean function variables of arity zero and moved to the last second\nbdash-order universal quantifier (of the rewritten formula).
Notice moreover that $\Phi^{\Language{L}}_{w}(\VarSet{x})$ is satisfiable, because we assume $\Language{D}(w,0) = 1$ for all strings $w$, and hence the interpretation assigning $\valfalse$ to all variables in $\VarSet{x}$ satisfies $\Phi^{\Language{L}}_{w}(\VarSet{x})$.

It remains to show that $w \in \Language{L}$ iff the lexicographic\nbdash-maximum model of $\Phi^{\Language{L}}_{w}(\VarSet{x})$ assigns $\valtrue$ to~$x_1$.
This follows from \zcref{theo_coding_nexp_machine_for Hausdorff_predicate}, as the subformula $\mi{copy}$ transfers the integer value encoded in $\VarSet{x}$ to $\mi{ind}^{\PolynomialLengthHausPredD(\StringLength{w})}$.
\end{proof}

\getkeytheorem{ComplexityMaxSat}

\begin{proof}
(\emph{Membership}).
By the equivalence $\BoundedParOracle{\ExpTime}{\SigmaP{c}}{\PolFunctions} = \BoundedOracle{\ExpTime}{\SigmaP{c}}{\LogFunctions}$ stemming from \zcref{theo_summary_equivalence_intermediate_levels_Hausdorff}, we prove that \MaxSatSigmaFormula{c} is in $\BoundedOracle{\ExpTime}{\SigmaP{c}}{\LogFunctions}$ by showing that \MaxSatSigmaFormula{c} is in $\BoundedParOracle{\ExpTime}{(\Oracle{\NPTime}{\SigmaP{c-1}})}{\PolFunctions} = \BoundedParOracle{\ExpTime}{\SigmaP{c}}{\PolFunctions}$.

Consider first this auxiliary problem, that we name $\Language{A}$:
for a pair $\pair{\Psi(\VarSet{y}),k}$, where $\Psi(\VarSet{y})$ is a \SigmaKFormula{c}, with $\VarSet{y}$ a set of Boolean propositional variables, and $k \geq 0$ is an integer, decide whether there exists a model of $\Psi(\VarSet{y})$ whose Hamming weight is at least~$k$.
The problem $\Language{A}$ is easily shown in $\Oracle{\NExpTime}{\SigmaP{c-1}}$:
guess an interpretation $\Interpr{I}$ for $\VarSet{y}$ (feasible in nondeterministic polynomial time), then check that the Hamming weight of $\Interpr{I}$ is at least $k$ (feasible in polynomial time), and that $\Interpr{I}$ satisfies $\Psi(\VarSet{y})$ (feasible in $\Oracle{\NExpTime}{\SigmaP{c-1}}$, see~\cite{Lohrey2012,Luck2016-techrep});
like in the membership proof of \zcref{theo_complexity_LexMax}, the size of the interpretation $\Interpr{I}$ is only polynomial in the size of $\Psi(\VarSet{y})$, because here $\Interpr{I}$ is a truth\nbdash-assignment for the Boolean propositional variables $\VarSet{y}$, and not, like in \zcref{theo_complexity_LexMaxFunc}, a function\nbdash-instantiation of Boolean function variables.

Let us now focus on \MaxSatSigmaFormula{c}, and let $\Phi(\VarSet{x})$ be a \SigmaKFormula{c}.
An $\ExpTime$ oracle machine $\ParOracle{\Machine{M}}{?}$, aided by a $\Oracle{\NPTime}{\SigmaP{c-1}}$ oracle for the problem $\Language{A}$, can decide \MaxSatSigmaFormula{c} as follows.
First, $\Machine{M}$ prepares \emph{exponentially\nbdash-padded} queries $\pair{\Phi(\VarSet{x}),k}$, for all $0 \leq k \leq \SetSize{\VarSet{x}}$, to ask in \emph{parallel} to its oracle whether $\Phi(\VarSet{x})$ has a model of Hamming weight (at least) $k$.
Notice that the $\Oracle{\NPTime}{\SigmaP{c-1}}$ oracle can actually answer the query, because it receives a query that is exponentially\nbdash-long in the input of $\Machine{M}$.
By looking at the maximum value $k$ for which the query $\pair{\Phi(\VarSet{x}),k}$ has received a \yesansw, $\Machine{M}$ can accordingly answer depending on whether $k$ is odd/even.%
\footnote{An $\LogOracle{\ExpTime}{\SigmaP{c}}$ procedure for \MaxSatSigmaFormula{c} could compute the maximum Hamming weight of a model of $\Phi(\VarSet{x})$ via binary search.}

(\emph{Hardness}).
We now prove that \MaxSatSigmaFormula{c} is $\BoundedOracle{\ExpTime}{\SigmaP{c}}{\LogFunctions}$\HardSuffix by showing that there exists a polynomial reduction from every language $\Language{L} \in \BoundedOracle{\ExpTime}{\SigmaP{c}}{\LogFunctions}$ to \MaxSatSigmaFormula{c}.
Below we show the $\BoundedOracle{\ExpTime}{\SigmaP{c}}{\LogFunctions}$\HardSuffix{}ness of deciding whether the weight of the maximum\nbdash-Hamming\nbdash-weight model of $\Phi(\VarSet{x})$ is odd.
The $\BoundedOracle{\ExpTime}{\SigmaP{c}}{\LogFunctions}$\HardSuffix{}ness of deciding whether the weight of the maximum\nbdash-Hamming\nbdash-weight model of $\Phi(\VarSet{x})$ is even follows from the fact that $\BoundedOracle{\ExpTime}{\SigmaP{c}}{\LogFunctions}$ is closed under complement.

Let $\Language{L}$ be an arbitrary language in $\BoundedOracle{\ExpTime}{\SigmaP{c}}{\LogFunctions}$.
We prove $\Language{L} \KarpRed {}$\MaxSatSigmaFormula{c} by exhibiting a reduction that, for every string $w$, constructs a formula $\Phi^{\Language{L}}_{w}(\VarSet{x})$, with $\VarSet{x}$ a set of Boolean propositional variables, such that $\Phi^{\Language{L}}_{w}(\VarSet{x})$ is a \SimpleSigmaKOddFormula{c} or a \SimpleSigmaKEvenFormula{c}, depending on whether $c$ is odd or even, respectively, and $w \in \Language{L}$ iff the weight of the maximum\nbdash-Hamming\nbdash-weight model of $\Phi^{\Language{L}}_{w}(\VarSet{x})$ is odd.

Since $\Language{L} \in \BoundedOracle{\ExpTime}{\SigmaP{c}}{\LogFunctions}$, by \zcref{theo_summary_equivalence_intermediate_levels_Hausdorff} $\Language{L} \in \BoundedHausdCLASS{\PolFunctions}{\Oracle{\NExpTime}{\SigmaP{c-1}}}$.
There hence exists a $\Oracle{\NExpTime}{\SigmaP{c-1}}$ Hausdorff predicate $\Language{D}$ 
of polynomial length characterizing $\Language{L}$.
Because $\Language{D}$ is of polynomial length, by \zcref{theo_coding_nexp_machine_for Hausdorff_predicate} there exist polynomials $\PolynomialBoundingEverything(n)$ and $\PolynomialLengthHausPredD(n)$ such that $\PolynomialLengthHausPredD(n) - 1$ bounds the length of $\Language{D}$ and $\FormulaTMHausdorffWithP(\mi{ind}^{\PolynomialLengthHausPredD(\StringLength{w})})$ encodes whether $\Language{D}(w,z) = 1$, when $\mi{ind}^{\PolynomialLengthHausPredD(\StringLength{w})}$ is interpreted as the bit\nbdash-indexing function of $z$ over $2^{\PolynomialLengthHausPredD(\StringLength{w})}$ bits.

Let us now focus on $\Phi^{\Language{L}}_{w}(\VarSet{x})$.
Take the set of propositional variables $\VarSet{x} = \set{x_{\PolynomialLengthHausPredD(\StringLength{w}){}-1},\linebreak[0]\dots,\linebreak[0]x_1}$;
the Hamming weight of assignments for these variables is between $0$ and $\PolynomialLengthHausPredD(\StringLength{w}){}-1$ (recall that $\PolynomialLengthHausPredD(n) - 1$ bounds the length of $\Language{D}$, hence, for each string $w$, the Hausdorff index of $w$ \Wrt $\Language{D}$ is such that $\HausdIndex{w}{\Language{D}} \leq \PolynomialLengthHausPredD(\StringLength{w}) - 1$).

Consider first a subformula encoding the Hamming weight of an assignment for the variables in $\VarSet{x}$ onto the bit\nbdash-indexing function $\mi{ind}^{\PolynomialLengthHausPredD(\StringLength{w})}$, so that this value can be read from within $\FormulaTMHausdorffWithP(\mi{ind}^{\PolynomialLengthHausPredD(\StringLength{w})})$.
We start by defining a notation used in this subformula.
Observe that to represent in binary the value $\PolynomialLengthHausPredD(\StringLength{w}){} - 1$, which is the maximum Hamming weight of an assignment for the variables in $\VarSet{x}$, we need $\lceil \log \PolynomialLengthHausPredD(\StringLength{w}) \rceil$\nbdash-many bits.
In the formula below, with ``$\mi{encode\mhyphen{}value\mhyphen{}ind}(\BooleanEncoding{k})$'' we mean the conjunction of literals of $\mi{ind}^{\PolynomialLengthHausPredD(\StringLength{w})}$ encoding in the bit\nbdash-indexing function the first $\lceil \log \PolynomialLengthHausPredD(\StringLength{w}) \rceil$\nbdash-many bits of the  value $k$.
For example, ``$\mi{encode\mhyphen{}value\mhyphen{}ind}(\BooleanEncoding{5})$'' replaces the conjunction:
\begin{multline*}\text{``}\bigl(\lnot \mi{ind}^{\PolynomialLengthHausPredD(\StringLength{w})}(\BooleanEncoding{\lceil \log \PolynomialLengthHausPredD(\StringLength{w})\rceil - 1}) \land \dots \land {} \\
\lnot \mi{ind}^{\PolynomialLengthHausPredD(\StringLength{w})}(\BooleanEncoding{3}) \land \mi{ind}^{\PolynomialLengthHausPredD(\StringLength{w})}(\BooleanEncoding{2}) \land \lnot \mi{ind}^{\PolynomialLengthHausPredD(\StringLength{w})}(\BooleanEncoding{1}) \land  \mi{ind}^{\PolynomialLengthHausPredD(\StringLength{w})}(\BooleanEncoding{0})\bigr)\text{''}.\end{multline*}
Observe that this encoding is obtained by a subformula whose length is logarithmic \Wrt~$\StringLength{w}$.

We now look at the subformula encoding the Hamming weight of $\VarSet{x}$ onto the bit\nbdash-indexing function.
A part of this subformula will impose a necessary condition for the assignments for $\VarSet{x}$ to satisfy the formula, that is, if $x_i$ is assigned $\valtrue$, then also $x_{i-1}$ must be assigned $\valtrue$.
In this way, the Hamming weight of an assignment for $\VarSet{x}$ is simply the maximum index $i$ at which $x_i$ receives a $\valtrue$ assignment.
In the formula below, $\VarSet{y} = \set{y_{\PolynomialLengthHausPredD(\StringLength{w})}, \dots, y_1}$ is a set of Boolean propositional variables, and
the literal $\lnot \mi{less}(\VarSet{y},\BooleanEncoding{\lceil \log \PolynomialLengthHausPredD(\StringLength{w}) \rceil})$ refers to the function $\mi{less}$ defined in the formula $\mu$ of \zcref{theo_SigmaFormula_arithmetic}, and embedded in $\FormulaTMHausdorffWithP(\mi{ind}^{\PolynomialLengthHausPredD(\StringLength{w})})$ (see \zcref{theo_coding_nexp_machine_for Hausdorff_predicate}), which will be part of the overall formula.
Notice that $\mi{less}$ has arguments of $\PolynomialBoundingEverything(\StringLength{w})$\nbdash-many bits, and $\PolynomialBoundingEverything(n) \geq \PolynomialLengthHausPredD(n)$, hence the literal is meaningful (the extra most significant bits are filled with zeros).
\begin{align*}
\mi{hw\mhyphen{}enc}(\mi{ind}^{\PolynomialLengthHausPredD(\StringLength{w})},\VarSet{x},\VarSet{y}) \equiv {}
  &\bigwedge_{2 \leq k \leq \PolynomialLengthHausPredD(\StringLength{w}){} - 1} (x_k \rightarrow x_{k-1}) \land {} \\
  &(\lnot x_{1} \rightarrow \mi{encode\mhyphen{}value\mhyphen{}ind}(\BooleanEncoding{0}) ) \land {} \\
  &\bigwedge_{2 \leq k \leq \PolynomialLengthHausPredD(\StringLength{w}){} - 1}
     \bigl(
       (\lnot x_k \land x_{k-1}) \rightarrow \mi{encode\mhyphen{}value\mhyphen{}ind}(\BooleanEncoding{k-1})
     \bigr) \land {} \\
  &(x_{\PolynomialLengthHausPredD(\StringLength{w}){} - 1} \rightarrow \mi{encode\mhyphen{}value\mhyphen{}ind}(\BooleanEncoding{\PolynomialLengthHausPredD(\StringLength{w}){}-1})) \land {} \\
  &\bigl(\lnot \mi{less}(\VarSet{y},\BooleanEncoding{\lceil \log \PolynomialLengthHausPredD(\StringLength{w}) \rceil}) \rightarrow \lnot \mi{ind}^{\PolynomialLengthHausPredD(\StringLength{w})}(\VarSet{y})\bigr).
\end{align*}
Since the variables $\VarSet{y}$ will be quantified universally in the final formula, the part ``$\bigl(\lnot \mi{less}(\VarSet{y},\BooleanEncoding{\lceil \log \PolynomialLengthHausPredD(\StringLength{w}) \rceil}) \rightarrow \lnot \mi{ind}^{\PolynomialLengthHausPredD(\StringLength{w})}(\VarSet{y})\bigr)$'' aims at completing the value encoded in the bit\nbdash-indexing function by setting to `$0$' in $\mi{ind}^{\PolynomialLengthHausPredD(\StringLength{w})}$ all bits of $z$ in positions strictly greater than $\lceil \log \PolynomialLengthHausPredD(\StringLength{w}) \rceil{}-1$.
This formula, as it is, is not in \CNF, as ``$\mi{encode\mhyphen{}value\mhyphen{}ind}$'' is a replacement for a conjunction of literals.
However, this formula can  be rewritten into a \CNF one having a polynomial number of clauses containing each a constant number of literals.

The overall formula is:
\begin{equation*}
  \Phi^{\Language{L}}_{w}(\VarSet{x}) \equiv 
    (\exists \mi{ind}^{\PolynomialLengthHausPredD(\StringLength{w})}) (\forall \VarSet{y}) \PrefixMatrixSeparator \mi{hw\mhyphen{}enc}(\mi{ind}^{\PolynomialLengthHausPredD(\StringLength{w})},\VarSet{x},\VarSet{y}) \land 
    \FormulaTMHausdorffWithP(\mi{ind}^{\PolynomialLengthHausPredD(\StringLength{w})}).
\end{equation*}

Observe that $\Phi^{\Language{L}}_{w}(\VarSet{x})$ can be obtained from $w$ in polynomial time, and is in \CNF, as $\FormulaTMHausdorffWithP(\mi{ind}^{\PolynomialLengthHausPredD(\StringLength{w})})$ is in \CNF, and $\mi{hw\mhyphen{}enc}$ can be rewritten in \CNF (see above).
By \zcref{theo_SigmaFormula_substitute_multiple_functions_with_one}, and some simple rewriting,
we can also obtain in polynomial time a \CNF \SimpleSigmaKOddFormula{c} or a \SimpleSigmaKEvenFormula{c} equivalent to $\Phi^{\Language{L}}_{w}(\VarSet{x})$, depending on whether $c$ is odd or even, respectively;
observe that in the case of the \SimpleSigmaKEvenFormula{c}, the Boolean propositional variables $\VarSet{y}$ need to be converted to Boolean function variables of arity zero and moved to the last second\nbdash-order universal quantifier of the rewritten formula.
Notice moreover that $\Phi^{\Language{L}}_{w}(\VarSet{x})$ is satisfiable, because we assume $\Language{D}(w,0) = 1$ for all strings $w$, and hence the interpretation assigning $\valfalse$ to all variables in $\VarSet{x}$ satisfies $\Phi^{\Language{L}}_{w}(\VarSet{x})$.

We are left to show that $w \in \Language{L}$ iff the Hamming weight of the maximum\nbdash-Hamming\nbdash-weight model of $\Phi^{\Language{L}}_{w}(\VarSet{x})$ is odd.
This follows from \zcref{theo_coding_nexp_machine_for Hausdorff_predicate} and from the definition of the subformula $\mi{hw\mhyphen{}enc}$.
Indeed, in $\mi{hw\mhyphen{}enc}$ we impose that interpretations $\Interpr{I}$ of $\VarSet{x}$ aiming at satisfying $\Phi^{\Language{L}}_{w}(\VarSet{x})$ must satisfy the following condition:
for all $i$ such that $2 \leq i \leq \PolynomialLengthHausPredD(\StringLength{w}){}-1$, if $\EvalInterpr{x_i}{\Interpr{I}} = \valtrue$, then $\EvalInterpr{x_{i-1}}{\Interpr{I}} = \valtrue$.
By this, the Hamming weight of an interpretation $\Interpr{I}$ abiding by this condition is simply the highest index $i$ for which $x_i$ is assigned $\valtrue$ in $\Interpr{I}$, and is zero if $\EvalInterpr{x_1}{\Interpr{I}} = \valfalse$.
This is encoded in the subformula $\mi{hw\mhyphen{}enc}$, where we furthermore copy this value of the Hamming weight onto the bit\nbdash-indexing function $\mi{ind}^{\PolynomialLengthHausPredD(\StringLength{w})}$.
By this and \zcref{theo_coding_nexp_machine_for Hausdorff_predicate}, we have that a string $w$ belongs to $\Language{L}$ iff the Hamming weight of the maximum\nbdash-Hamming\nbdash-weight model of $\Phi^{\Language{L}}_{w}(\VarSet{x})$ is odd.
\end{proof}

\resumetoc

\section{Some Basic Notions in Computational Complexity}
\label{sec_general_preliminaries}

\paragraph{Asymptotic notation.}\hspace{0pt}\newline
We use the asymptotic notation $O(\cdot)$, $o(\cdot)$, $\Omega(\cdot)$, and $\omega(\cdot)$, with its usual meaning (see, e.g., \cite{CormenLRS2022}).
The next remarks given for $O(\cdot)$ are extended to $o(\cdot)$, $\Omega(\cdot)$, and $\omega(\cdot)$.
Following~\cite{BalcazarDG1995}, $O(f(n))$ also denotes the \emph{set} of functions that are $O(f(n))$.
With ``$g(n) \in O(f(n))$'' we mean that $g(n)$ is~$O(f(n))$.
For a function $g \colon \NaturalsDomain \to \NaturalsDomain$, and a family $F$ of functions $f \colon \NaturalsDomain \to \NaturalsDomain$, $g(F)$ is the \emph{set} of functions obtained by composing $g$ with each function $f(n)$ in $F$, i.e., $g(F) = \bigcup_{f(n) \in F} \set{g(f(n))}$.
For a family $F$ of functions, we let $O(F) = \bigcup_{f(n) \in F} O(f(n))$.

\paragraph{Alphabets, strings, and languages.}\hspace{0pt}\newline
In this paper, we consider languages over the \defin{binary} alphabet $\alphabet = \set{0,1}$.
We denote by $\StringUniverse$ the set of all strings over $\alphabet$, including the empty string $\emptystring$.
For a string $w \in \StringUniverse$, $\StringLength{w}$ is $w$'s \defin{length}.
For an integer $n \geq 0$, $\alphabet^{n}$ and $\alphabet^{\leq n}$ are the sets of strings over $\alphabet$ of length \emph{exactly} $n$ and \emph{at most} $n$, respectively.
The sets $\alphabet^{\geq n}$, $\alphabet^{< n}$, and $\alphabet^{> n}$ are defined in the natural way.
A language $\Language{L}$ over an alphabet $\alphabet$ is a subset of $\StringUniverse$.
For a language $\Language{L} \subseteq \StringUniverse$ and a string $w$, $\Language{L}(w)$ is the Boolean predicate associated with $\Language{L}$ such that $\Language{L}(w) = 1$ if $w \in \Language{L}$, i.e., if $w$ is a \yesinst of $\Language{L}$, and $\Language{L}(w) = 0$ if $w \notin \Language{L}$, i.e., if $w$ is a \noinst of $\Language{L}$.
For a language $\Language{L}$, the \defin{(language) complement to $\Language{L}$} is $\compl{\Language{L}} = \StringUniverse \setminus \Language{L}$.
For an integer $k \geq 2$, a \defin{$k$\nbdash-ary relation $\Relation{R}$ over $\StringUniverse$} is a subset of the $k$\nbdash-fold cartesian product of $\StringUniverse$.
Since a suitable encoding can represent every tuple $\StringTup{w} = \tup{w_1,\dots,w_n}$ of strings with just one string $y_{\StringTup{w}}$, a language $\Language{L}_{\Relation{R}}$ can be associated with a relation $\Relation{R}$ such that $\StringTup{w} \in \Relation{R} \Leftrightarrow y_{\StringTup{w}} \in \Language{L}_{\Relation{R}}$.
By this, we regard relations as languages;
e.g.,~we will say that a tuple $\StringTup{w} = \tup{w_1,\dots,w_k}$ of strings is a \yesinst or a \noinst of $\Relation{R}$, and that $\Relation{R}(w_1,\dots,w_k)$ is a predicate whose value is $1/\valtrue$ or $0/\valfalse$, respectively.

\paragraph{Turing machines and their computations.}\hspace{0pt}\newline
Unless differently stated, we consider multi-tape Turing machines with bidirectional\footnote{Bidirectional here means that the head of the tapes can move to the left and to the right.} semi\nbdash-infinite tapes with a (single) \emph{read-only} input tape and (possibly several) \emph{read/write} work tapes.
We here consider only \emph{recursive decision} problems, hence our machines do not have an output tape and accept, or reject, their input string by \emph{halting} in an accepting, or non\nbdash-accepting, state, respectively.
A Turing machine can be represented via a binary string through a suitable encoding of its transition function (see, e.g., \cite{Hopcroft1979}).

Remember that the computation of a deterministic machine is always characterized by a single next step at any given moment, whereas the computation of a \emph{non}\/deterministic machine may have several possible next steps at some points.
In this paper, all machines are deterministic, unless otherwise specified.
The computation that a deterministic (resp., a nondeterministic) machine~$\Machine{M}$ performs (resp., may perform) over an input string~$w$ can be described by the sequence of configurations that~$\Machine{M}$ traverses (resp., may traverse) when executing on~$w$.

A \defin{configuration}, or \defin{instantaneous description (ID)}, of a (non)deterministic machine $\Machine{M}$ is a comprehensive snapshot of $\Machine{M}$'s execution at a particular moment.
An ID of $\Machine{M}$ includes the current (control) state, the current content of all tapes, and the current positions of all heads on the tapes.
From an ID and the knowledge of the machine's transition function, for a deterministic (resp., nondeterministic) machine $\Machine{M}$ it is possible to know what the next action performed by $\Machine{M}$ is (resp., what the possible next actions available to $\Machine{M}$ are).

The IDs that a \emph{non}\/deterministic machine $\Machine{M}$ may traverse when executing on input~$w$ can be arranged into a \defin{computation tree}.
The starting ID of $\Machine{M}$ on $w$ is the root of this tree, and its edges go from an ID (node) $\alpha$ to an ID $\beta$ iff $\beta$ is a legal next configuration of $\alpha$, according to $\Machine{M}$'s transition function.
A \defin{partial computation} of $\Machine{M}$ on~$w$, or for $\Machine{M}(w)$, is a sequence of IDs forming a contiguous path within the computation tree of $\Machine{M}$ on~$w$.
A \defin{computation} for $\Machine{M}(w)$ is one of its partial computations starting from the root and ending in a leaf of the computation tree of $\Machine{M}$ on~$w$.
A computation is \defin{accepting} or \defin{rejecting} if its last ID contains an accepting or non-accepting state, respectively.
A machine $\Machine{M}$ accepts an input $w$ iff there exists an accepting computation in the computation tree of $\Machine{M}$ on~$w$.
Computation trees of deterministic machines are actually lists.
Since a (partial) computation $\pi$ for $\Machine{M}(w)$ is a sequence of IDs, $\pi$ can be represented as a string over the binary alphabet via a suitable encoding;
such an encoding is characterized by a linear overhead only (see, e.g., \cite{Hopcroft1979}).

For a (non)deterministic machine $\Machine{M}$, we denote by $\Machine{M}(w)$ the \defin{output of the machine $\Machine{M}$ on input $w$}, where $\Machine{M}(w) = 1$, if $\Machine{M}$ accepts $w$, otherwise $\Machine{M}(w) = 0$;
we denote by $\LanguageOf{\Machine{M}}$ the \defin{language decided by the machine $\Machine{M}$};
we say that $\Machine{M}$ decides/solves a language/problem $\Language{L}$ iff $\Language{L} = \LanguageOf{\Machine{M}}$.

\paragraph{Resource-bounded computations.}\hspace{0pt}\newline
The \defin{computation time} (resp., \defin{computation space}) \defin{of a machine $\Machine{M}$ on input $w$} is the length of the longest computation (resp., is the maximum number, on all $\Machine{M}$'s \emph{work} tapes, of distinct cells scanned in any computation) in the computation tree of $\Machine{M}$ over $w$.
For a strictly positive~\cite{BalcazarDG1995} nondecreasing~\cite{LewisSH1965,HartmanisS1965,StearnsHL1965} function $t\colon \NaturalsDomain \to \NaturalsDomain$ (resp., $s\colon \NaturalsDomain\to\NaturalsDomain$), the machine $\Machine{M}$ has \defin{running time} $t(n)$ (resp., \defin{running space} $s(n)$) iff, on all but finitely-many inputs $w$, the computation time (resp., space) of $\Machine{M}$ over $w$ does not exceed $t(\StringLength{w})$ (resp., $s(\StringLength{w})$)~\cite{Kozen2006};
$t(n)$ (resp., $s(n)$) is also named \defin{time function} (resp., \defin{space function}).
A time function $t(n)$ (resp., space function $s(n)$) is \defin{time constructible} (resp., \defin{space constructible}) iff there is a Turing machine $\Machine{M}$ which, on every input of \emph{length} $n$, halts in exactly $t(n)$ steps (resp., marks off $s(n)$ work tape cells and halts, never using more than $s(n)$ space in the process)~\cite{HartmanisS1965,StearnsHL1965,BalcazarDG1995,Kozen2006}.
A different characterization of time and space constructibility can be proven equivalent to the one above.
A time function $t(n)$ such that, for all but finitely many values $n$, $t(n) \geq (1{+}\epsilon) \cdot n$, where $\epsilon > 0$, (resp., a space function $s(n)$) is time (resp., space) constructible iff there is a Turing machine that on an input of \emph{length} $n$ outputs the binary\nbdash-represented value $t(n)$ (resp., $s(n)$) in time $O(t(n))$ (resp., in space $O(s(n))$)---see, e.g., \cite{Kannan1982,Kobayashi1985,BalcazarDG1995}.
A machine \defin{$\Machine{M}$ decides a language $\Language{L}$ in time (resp., space) $f(n)$} iff $\Language{L} = \LanguageOf{\Machine{M}}$ and $\Machine{M}$ has running time (resp., space) $f(n)$.

\paragraph{Complexity classes.}\hspace{0pt}\newline
A \defin{complexity class} is a set of \emph{languages} that can be decided by machines of a specific sort (i.e., either deterministic or nondeterministic) within a given bound of computational resources.
Classically, computational resources characterizing complexity classes are computation time~\cite{HartmanisS1965}, work space~\cite{StearnsHL1965}, and, as we will see below, the possibility to have access a computation oracle.
For a complexity class $\ComplexityClass{C}$, \ComplementPrefix$\ComplexityClass{C}$ is the class of languages whose complements are in~$\ComplexityClass{C}$.
With a slight abuse of terminology, we say that a machine $\Machine{M}$ belongs to a complexity class $\ComplexityClass{C}$ if $\Machine{M}$ is of the sort and uses the amount of computational resources characterizing~$\ComplexityClass{C}$.

By $\DTime{t(n)}$ and $\DSpace{s(n)}$ (resp., $\NTime{t(n)}$ and $\NSpace{s(n)}$) we denote the class of languages decided by deterministic (resp., nondeterministic) machines in time $t(n)$ and space $s(n)$, respectively.
The functions $t(n)$ and $s(n)$ are generally assumed to be time and space constructible, respectively, as the complexity classes defined via these kinds of functions enjoy some desirable properties (see, e.g., \cite{Hopcroft1979,BalcazarDG1995}).
For a family of functions $F$, we let $\DTime{F} = \bigcup_{f(n) \in F} \DTime{f(n)}$;
a similar notation is also defined for $\mathrm{NTIME}$, $\mathrm{DSPACE}$, and $\mathrm{NSPACE}$.
For every constant $c \geq 1$, it clearly holds that $\DTime{f(n)} \subseteq \DTime{c \cdot \! f(n)}$, and for $\mathrm{NTIME}$, $\mathrm{DSPACE}$, and $\mathrm{NSPACE}$ as well.
Interestingly, the linear speed-up~\cite{HartmanisS1965} and space compression~\cite{StearnsHL1965} theorems show that, for each constant $c \geq 1$, if $t(n)$ is a time function such that $t(n) \geq n + 1$, and $s(n)$ is a space function such that $s(n) \geq 1$, we have $\DTime{c \cdot t(n)} \subseteq \DTime{t(n)}$, $\NTime{c \cdot t(n)} \subseteq \NTime{t(n)}$, $\DSpace{c \cdot s(n)} \subseteq \DSpace{s(n)}$, and $\NSpace{c \cdot s(n)} \subseteq \DSpace{s(n)}$.%
\footnote{More precisely, $\DTime{c \cdot t(n)} \subseteq \DTime{t(n)}$ and $\NTime{c \cdot t(n)} \subseteq \NTime{t(n)}$ hold when $t(n) \in \omega(n)$.
When $t(n) = c \cdot n$, it holds that, for every $\epsilon > 0$, $\DTime{t(n)} \subseteq \DTime{(1 {+} \epsilon) \cdot n}$ and $\NTime{t(n)} \subseteq \NTime{(1 {+} \epsilon) \cdot n}$, which intuitively means that the constant $c$ can be made arbitrarily close to~$1$. See, e.g., \cite{HartmanisS1965,StearnsHL1965,Hopcroft1979,Papadimitriou1994,BalcazarDG1995,Kozen2006}.}
This implies that $\DTime{f(n)} = \DTime{O(f(n))}$, that is, constant factors can be ignored when analyzing machines' running times.
For example, if $\Language{L}$ is a language in $\DTime{O(2^{n^k})}$, there is a machine deciding $\Language{L}$ in time $2^{n^k}$.
Similar remarks holds for $\mathrm{NTIME}$, $\mathrm{DSPACE}$, and $\mathrm{NSPACE}$ as well.

The classes \PTime and \ExpTime (resp., \NPTime and \NExpTime) contain the languages decided by deterministic (resp., nondeterministic) machines in polynomial and exponential time, respectively.
The classes \LogSpace, \PSpace, and \ExpSpace, contain the languages decided by deterministic machines in logarithmic, polynomial, and exponential, space, respectively.
More precise definitions of these, and other, classes are given in \zcref{sec_prelim_central_complexity_classes}.

When we will say that a language, problem, relation, or predicate, is a polynomial(\nbdash-time) one, i.e., without mentioning whether it is deterministic or not, we mean that it can be decided by a \emph{deterministic} polynomial-time machine.
We will explicitly mention when we refer to \emph{non}\/deterministic classes.

\paragraph{Oracle Turing machines.}\hspace{0pt}\newline
Intuitively, oracles are subroutines that can be invoked by machines.
An \defin{oracle Turing machine $\Oracle{\Machine{M}}{?}$}, is a (non)deterministic Turing machine $\Machine{M}$ that, during its computation, may ask to an oracle to answer membership queries at \emph{unit} cost;
i.e., $\Oracle{\Machine{M}}{?}$ may ask whether some strings belong or not to the oracle's language.
The definition of $\Oracle{\Machine{M}}{?}$ is \emph{independent} from its oracle, and the symbol ``$?$'' indicates that oracles for different languages can be ``attached'' to $\Machine{M}$~\cite{Papadimitriou1994}.
For a language $\Language{A}$, by $\Oracle{\Machine{M}}{\Language{A}}$ we mean that the oracle attached to $\Oracle{\Machine{M}}{?}$ decides~$\Language{A}$.
For an oracle machine $\Oracle{\Machine{M}}{?}$ and a language $\Language{A}$, $\Oracle{\Machine{M}}{\Language{A}}(w)$ denotes the output of the oracle machine on input $w$ when the oracle decides $\Language{A}$;
$\LanguageOf{\Oracle{\Machine{M}}{\Language{A}}}$ denotes the language decided by the oracle machine $\Oracle{\Machine{M}}{?}$ with an oracle for~$\Language{A}$.

More specifically, an oracle machine $\Oracle{\Machine{M}}{?}$ is equipped with an additional \emph{write-only} \emph{unidirectional}\footnote{Unidirectional here means that the head of this tape can only move to the right.}
work tape, called the \defin{query tape}, and has three specific states, $\querystate$ (the \defin{query state}), and $\yesanswerstate$ and $\noanswerstate$ (the \defin{answer states}), to interact with the oracle.
The computation of an oracle machine $\Oracle{\Machine{M}}{?}$ proceeds like in an Turing machine, except for the transitions from the query state $\querystate$.
Once entering in $\querystate$, thanks to the oracle's ``counsel'', $\Oracle{\Machine{M}}{?}$ moves to either $\yesanswerstate$ or $\noanswerstate$ depending on whether the current string on the query tape belongs to the oracle's language or not.
We assume that, when $\Oracle{\Machine{M}}{?}$ moves to one of the answer states, at the same time the content of the query tape is deleted.
The answer state which $\Oracle{\Machine{M}}{?}$ has moved to allows $\Oracle{\Machine{M}}{?}$ to use the oracle's answer in its subsequent computation.
In this paper, to streamline the notation, if it is clear from the context that a machine $\Oracle{\Machine{M}}{?}$ is actually an oracle Turing machine, we may refer to it just by~$\Machine{M}$.
Since we will always consider oracles for recursive languages, there will always be machines for such languages.
Hence, for presentation purposes, we could sometimes refer to the oracle as to an additional machine which the oracle machine (i.e., the caller) can ask questions to.
For this reason, we might sometimes refer to the computation carried out by the oracle;
nonetheless, the time cost paid by the caller for the computation carried out by the oracle is always \emph{one} step.

Like for non\nbdash-oracle machines, oracle machine computations can be described via sequences of IDs.
Oracle machine IDs are very similar to the IDs of non-oracle machines:
the query tape is simply an additional tape, whose content and head's position are part of the ID.
Notice that this makes the oracle machine IDs here considered different from those sometimes considered in the literature, where the query tape is \emph{not} included in the IDs (see, e.g., \cite{LadnerL76,Hemaspaandra1994}).
Also the IDs that a (non)deterministic oracle machine $\Oracle{\Machine{M}}{?}$ with oracle $\Language{A}$ may traverse when executing on input~$w$ can be arranged into computation trees.
These computation trees are a natural extension of those for non-oracle machines:
nodes are again associated with the possible IDs for $\Oracle{\Machine{M}}{?}$ when executing on $w$, and the edges connect IDs accounting for legal next configurations, both due to $M$'s own transition function and due to the oracle's answer(s) for the specific language~$\Language{A}$.

\paragraph{Bounded query oracles.}\hspace{0pt}\newline
For a strictly positive nondecreasing function $f\colon \NaturalsDomain\to\NaturalsDomain$ (resp., for a constant integer $k$), $\BoundedOracle{\Machine{M}}{?}{f(n)}$ (resp., $\BoundedOracle{\Machine{M}}{?}{k}$) denotes an oracle machine allowed to issue at most $f(n)$ (resp., at most $k$) queries to its oracle, where $n$ is the input string size.
The notation $\ParOracle{\Machine{M}}{?}$ state that the oracle machine is allowed to issue a \emph{single} round of \defin{parallel queries}, i.e.,
the queries that $\ParOracle{\Machine{M}}{?}$ submits to its oracle are collected and asked all at once.
This way of querying the oracle is also called \defin{nonadaptive}, in contrast with the standard sequential way of asking queries, which instead can be \defin{adaptive}~\cite{Book1988}.
Oracle machines issuing parallel queries have an additional answer tape where they receive the answers from their oracles.
When present, this tape, together with its head position, are part of the IDs.
For parallel queries, we might relax the constraint on the \emph{single} round of parallel queries;
for a function $f\colon \NaturalsDomain\to\NaturalsDomain$ (resp., a constant integer $k$), $\ParBoundedOracle{\Machine{M}}{?}{f(n)}$ (resp., $\ParBoundedOracle{\Machine{M}}{?}{k}$) denotes an oracle machine allowed to issue at most $f(n)$ (resp., at most $k$) rounds of parallel queries to its oracle, where $n$ is the input string size.
The constraints can also be combined:
for two strictly positive nondecreasing functions $f(n)$ and $g(n)$, $\DoubleBoundedParOracle{\Machine{M}}{?}{f(n)}{g(n)}$ denotes an oracle machine allowed to ask $g(n)$ rounds of parallel queries, and in each round at most $f(n)$ queries can be asked;
the notation $\BoundedParOracle{\Machine{M}}{?}{f(n)}$ has the natural meaning, and the functions $f(n)$ and $g(n)$ can be replaced in the notation by constant values, with the natural meanings.

\paragraph{Oracle complexity classes.}\hspace{0pt}\newline
If $\ComplexityClass{C}$ is a (non)deterministic \emph{time}\nbdash-bounded complexity class and $\Language{A}$ is a language, $\ComplexityClass{C}^\Language{A}$ is the class of languages decided by oracle machines in $\ComplexityClass{C}$ querying an oracle for $\Language{A}$;
for a complexity class $\ComplexityClass{D}$, we denote by $\ComplexityClass{C}^\ComplexityClass{D}$ the class of languages decided by oracle machines in $\ComplexityClass{C}$ querying an oracle for a language in~$\ComplexityClass{D}$.
In the following, when we say that an oracle machine $\Oracle{\Machine{M}}{?}$ queries an oracle in~$\ComplexityClass{D}$, 
we mean that $\Machine{M}$ queries an oracle for a $\ComplexityClass{D}$\CompleteSuffix language;
such an oracle is capable of deciding any language in~$\ComplexityClass{D}$ (see the paragraph on reductions and hardness below).
The notation introduced to denote constraints over the queries is naturally extended to oracle complexity classes, e.g., $\DoubleBoundedParOracle{\ComplexityClass{C}}{\ComplexityClass{D}}{f(n)}{g(n)}$.
For a family $F$ of functions, $\BoundedOracle{\ComplexityClass{C}}{\ComplexityClass{D}}{F}$ (resp., $\BoundedParOracle{\ComplexityClass{C}}{\ComplexityClass{D}}{F}$) is the class of languages decided by oracle machines in $\ComplexityClass{C}$ querying an oracle in $\ComplexityClass{D}$ at most $f(n)$\nbdash-many times (resp., with at most $f(n)$\nbdash-many parallel queries), with $f(n) \in F$;
more formally, $\BoundedOracle{\ComplexityClass{C}}{\ComplexityClass{D}}{F} = \bigcup_{f(n) \in F} \BoundedOracle{\ComplexityClass{C}}{\ComplexityClass{D}}{f(n)}$ and $\BoundedParOracle{\ComplexityClass{C}}{\ComplexityClass{D}}{F} = \bigcup_{f(n) \in F} \BoundedParOracle{\ComplexityClass{C}}{\ComplexityClass{D}}{f(n)}$.

Based on these definitions of oracle complexity classes, hierarchies of complexity classes can also be defined.
The classical \PHText and \WEHStressedText are introduced in \zcref{sec_prelim_PH_ExpH}.

\paragraph{Space-bounded oracle Turing machines.}\hspace{0pt}\newline
Defining \emph{space}-bounded oracle machines is not straightforward~\cite{Hartmanis1988}, as the space used on the query tape might, or might not, contribute toward the computation space of the oracle machine.
Three major definitions were proposed:
the bounded query model~\cite{Simon1977,Book1979}, the unrestricted query model~\cite{LadnerL76}, and the deterministic query model~\cite{RuzzoST84} (see also \cite{JBuss1988,Hartmanis1988,Michel1992} for additional comments and references).

In the \defin{bounded query model}, defined by \citet{Simon1977} and \citet{Book1979}, the space used on the query tape is counted toward the oracle machine's computation space.
An issue of this definition is that there are languages $\Language{A}$ for which $\Language{A} \notin \Oracle{\LogSpace}{\Language{A}}$ (i.e., $\Language{A}$ is not logspace Turing reducible to itself---see below the definition of Turing reduction);
intuitively, this happens because all queries issued by the logspace-bounded oracle machine must be of logarithmic size, and hence the whole input string cannot be passed to the oracle, if needed.

In the \defin{unrestricted query model}, defined by \citet{LadnerL76}, the space used on the query tape is \emph{not} counted toward the computation space.
In this model, an oracle machine $\Oracle{M}{?}$ with an $O(f(n))$ space-bound (on the work tapes) can ask queries of size $2^{O({f(n)})}$, because such a machine can run for $2^{O({f(n)})}$ time.
Moreover, if such an oracle machine were nondeterministic, it would be able to overall generate $2^{2^{O(f(n))}}$ different queries throughout the entire computation tree of all its possible computations.
\Citet*{RuzzoST84} suggested that this might be too much for a space-bounded oracle machine, and could explain the unexpected behavior of nondeterministic logspace-bounded oracle machines recounted by \citet{LadnerL76}.

To overcome some of these non-intuitive consequences of the unrestricted query model, the \defin{deterministic query model} was introduced by \citet*{RuzzoST84} as an intermediate model.
In the deterministic query model, the space used on the query tape is \emph{not} counted toward the computation space, as in the unrestricted model, however, it is required that the oracle machine acts \emph{deterministically} while writing on the query tape.
More specifically, the oracle machine, for every query asked, since when it writes the first query's symbol on the tape, until the moment in which the query is actually submitted to the oracle, must carry out a deterministic computation (even if the oracle machine were nondeterministic).
In this way, a space-bounded oracle machine $\Oracle{M}{?}$ can still ask questions that are exponentially-longer than $\Oracle{M}{?}$'s space-bound, however the number of different queries in the computation tree of a nondeterministic oracle machine is not any more double-exponential.%
\footnote{Even the deterministic query model has some non-desirable consequences. For this reason, \citet{JBuss1988} proposed an additional query model for space-bounded oracle machines.}

\paragraph{Alternating Turing machines.}\hspace{0pt}\newline
\defin{Alternating Turing machines}~\cite{ChandraKS81} are a kind of nondeterministic machines whose (control) states are partitioned in two types:
the \emph{existential} and the \emph{universal} states.
Computations of alternating machines are defined via computation trees as for standard nondeterministic machines;
what changes is the acceptance condition, which is defined according to the following rules.
Let $\Machine{M}$ be an alternating machine, let~$w$ be a string, and let $\mathcal{T}$ be the computation tree of $\Machine{M}$ on~$w$.
An ID of $\Machine{M}$ is existential or universal iff its state is existential or universal, respectively.
The leaf IDs of $\mathcal{T}$ are (labelled as) accepting or rejecting iff the state in the ID is accepting or not, respectively.
All other IDs $\alpha$ of $\mathcal{T}$ are inductively labelled as accepting or rejecting as follows:
\begin{itemize}[nosep,label=--]
  \item if $\alpha$ is existential, $\alpha$ is accepting iff at least one of $\alpha$'s successor IDs in $\mathcal{T}$ is accepting; and
  \item if $\alpha$ is universal, $\alpha$ is accepting iff all $\alpha$'s successor IDs in $\mathcal{T}$ are accepting.
\end{itemize}
Given these rules, $\Machine{M}$ accepts~$w$ iff the root of $\mathcal{T}$ is labelled as accepting.
The initial state of $\Machine{M}$ is \emph{not} required to be existential.
A (standard) nondeterministic machine is an alternating machine whose states are all existential.

Running time and computation space for alternating machines are defined as for standard nondeterministic machines.
We denote by $\ATime{f(n)}$ (resp., $\ASpace{f(n)}$) the class of languages that can be decided by alternating machines in time (resp., in space) $O(f(n))$.
Interestingly, another ``computational resource'' can be bounded on alternating machines, which is the number of allowed alternations between existential and universal states during the computations.
More precisely, the number of alternations that an alternating machine $\Machine{M}$ performs in a computation on~$w$ is the number of transitions between existential and universal states during that computation \emph{plus one}---we add one to account for the first state type that the machine starts with.
An alternating machine $\Machine{M}$ runs with at most $k$ alternations iff, on \emph{every} string~$w$, \emph{all} the computations of~$\Machine{M}$ on~$w$, in the computation tree of $\Machine{M}$ on~$w$, have at most $k$ alternations.
When the number of allowed alternations is fixed, starting in an existential or in a universal state can make a difference.
If $\Machine{M}$ is an alternating machine which runs with at most $k$ alternations and starts in an existential (resp., a universal) state, we say that $\Machine{M}$ is a $\Sigma_k$\nbdash-alternating (resp., a $\Pi_k$\nbdash-alternating) machine.
We denote by $\BoundedExATime{k}{f(n)}$ (resp., $\BoundedUnATime{k}{f(n)}$) the class of languages that can be decided by $\Sigma_k$\nbdash-alternating (resp., by $\Pi_k$\nbdash-alternating) machines in time $O(f(n))$.

\paragraph{Reductions.}\hspace{0pt}\newline
The concept of reduction is an interesting tool introduced to study the relative complexity between problems.
Various notions of reduction have been introduced~\cite{Ladner1975a,Ladner1975}.

Let $\Language{A}$ and $\Language{B}$ be two languages, $\Language{A}$ is (\defin{many-one}, or \defin{Karp}) \defin{reducible} to $\Language{B}$, denoted by $\Language{A} \KarpRed[] \Language{B}$, iff there exists a \emph{computable} function $f\colon \StringUniverse \to \StringUniverse$ such that, for every string $w$, it holds that $w \in \Language{A} \Leftrightarrow f(w) \in \Language{B}$~\cite{Karp1972}.
The function $f$ is said to be a (\defin{many-one}, or \defin{Karp})
\defin{reduction} from $\Language{A}$ to $\Language{B}$, and is polynomial(-time) if $f$ is computable in polynomial time;
polynomial (many-one/Karp)
reductions are denoted by $\KarpRed[p]$.
For a complexity class $\ComplexityClass{C}$ such that $\ComplexityClass{C} \supseteq \NPTime$, or $\ComplexityClass{C} \supseteq \CoNPTime$,
a language $\Language{L}$ is \defin{$\ComplexityClass{C}$-hard} iff, for all languages $\Language{L}' \in \ComplexityClass{C}$, it holds that $\Language{L}' \KarpRed \Language{L}$.
A language $\Language{L}$ is \defin{$\ComplexityClass{C}$-complete} iff $\Language{L} \in \ComplexityClass{C}$ and $\Language{L}$ is $\ComplexityClass{C}$-hard.
By this, a Turing machine deciding a $\ComplexityClass{C}$-complete language can decide, modulo a prior polynomial re-encoding of the input strings, every language in~$\ComplexityClass{C}$.

Let $\Language{A}$ and $\Language{B}$ be two languages, $\Language{A}$ is \defin{Turing-reducible} to $\Language{B}$, denoted by $\Language{A} \TuringRed[] \Language{B}$, iff there exists an oracle machine $\Oracle{\Machine{M}}{?}$ deciding $\Language{A}$ by having access to an oracle for $\Language{B}$, i.e., $\Language{A} = \LanguageOf{\Oracle{\Machine{M}}{\Language{B}}}$.
This essentially means that there exists an algorithm deciding $\Language{A}$ by means of (multiple) calls to a subroutine for $\Language{B}$.
This algorithm is said to be a \emph{Turing reduction} from $\Language{A}$ to $\Language{B}$.
A Turing reduction is polynomial(-time) if the running time of $\Oracle{\Machine{M}}{?}$ is bounded by a polynomial;
this is denoted by $\TuringRed[p]$.
Polynomial Turing reductions are also called \defin{Cook reductions} (defined in~\cite{Cook1971}).
Oracle machines characterizing Turing reductions can be subject to the constraints seen above over the issued queries;
i.e., bounding the number of allowed queries and/or sequential vs.\ parallel queries.

Let $\Language{A}$ and $\Language{B}$ be two languages, $\Language{A}$ is \defin{truth-table reducible} to $\Language{B}$, denoted by $\Language{A} \TTRed[] \Language{B}$, iff there exists a \emph{computable} function $f\colon \StringUniverse \to \StringUniverse$ such that, for every string $w$, it holds that the evaluation of $f$ over $w$ outputs a tuple of strings $\tup{\Circuit{C}_w,y_1,\dots,y_{k(w)}}$, where $\Circuit{C}_w$ is (the string representing) a Boolean circuit with $k(w)$ input gates (where $k(w)$ depends on $w$), and $y_1, \dots, y_{k(w)}$ are strings, for which
$w \in \Language{A} \Leftrightarrow \CirctuitValue{\Circuit{C}_w}{\Language{B}(y_1),\dots,\Language{B}(y_{k(w)})} = 1$,
where $\CirctuitValue{\Circuit{C}}{b_1,\dots,b_n}$ denotes the output value of $\Circuit{C}$ when receiving in input the Boolean values $b_1, \dots, b_n$.
The function $f$ is said to be a \defin{truth-table reduction}, or \defin{tt-reduction}, from $\Language{A}$ to $\Language{B}$, and is polynomial(-time) if $f$ is computable in polynomial time;
polynomial tt-reductions are denoted by $\TTRed$.

Polynomial tt-reductions are known to be equivalent to polynomial Turing reductions with the additional constraint that the oracle machine can perform only a single round of parallel queries to its oracle~\cite{Ladner1975}.
Hence, for two languages $\Language{A}$ and $\Language{B}$, it holds that $\Language{A} \TTRed \Language{B}$ iff there exists a polynomial-time oracle machine $\ParOracle{\Machine{M}}{?}$ such that $\Language{A} = \LanguageOf{\ParOracle{\Machine{M}}{B}}$.
Therefore, if we denote by ${\TTRed[p]}(\ComplexityClass{C})$ the class of languages that can be polynomially tt-reduced to a language in the complexity class~$\ComplexityClass{C}$, it holds that the complexity class $\TTRedCLASS{\ComplexityClass{C}}$ equals~$\ParOracle{\PTime}{\ComplexityClass{C}}$.
 
\end{appendices}

\endgroup

\emergencystretch=3em

\urlstyle{same}

\printbibliography[heading=bibintoc]

@String{AAAI21    = {Proc. of AAAI}}

@String{AAAI23    = {Proc. of AAAI}}

@String{AIJ       = {Artif. Intell.}}

@String{CSL21     = {Proc. of CSL}}

@String{FCT85     = {Proc. of FCT}}

@String{FCT89     = {Proc. of FCT}}

@String{GandALF16 = {Proc. of the 7th International Symposium on Games, Automata, Logics and Formal Verification {(GandALF 2016)}}}

@String{IJCAI19   = {Proc. of IJCAI}}

@String{INFCOMP   = {Inf. Comput.}}

@String{IPL       = {Inf. Process. Lett.}}

@String{JACM      = {J. ACM}}

@String{JCSS      = {J. Comput. Syst. Sci.}}

@String{KR16      = {Proc. of KR}}

@String{MFCS93    = {Proc. of MFCS}}

@String{MFCS95    = {Proc. of MFCS}}

@String{SCT88     = {Proc. of SCT}}

@String{SICOMP    = {SIAM J. Comput.}}

@String{STACS88   = {Proc. of STACS}}

@String{STOC71    = {Proc. of STOC}}

@String{STOC82    = {Proc. of STOC}}

@String{STOC87    = {Proc. of STOC}}

@String{SWCT65    = {Proc. of SWCT}}

@String{TCS       = {Theor. Comput. Sci.}}

@InCollection{Addison1965,
  author    = {Addison, John W.},
  booktitle = {The Theory of Models: Proceedings of the 1963 International Symposium at Berkeley},
  title     = {The Method of Alternating Chains},
  doi       = {10.1016/B978-0-7204-2233-7.50009-5},
  editor    = {Addison, John W. and Henkin, Leon and Tarski, Alfred},
  pages     = {1--16},
  publisher = {North-Holland Publishing Company},
  address   = {Amsterdam, The Netherlands},
  year      = {1965},
}

@Article{AllenderKRR2011,
  author  = {Allender, Eric and Kouck{\'{y}}, Michal and Ronneburger, Detlef and Roy, Sambuddha},
  title   = {The Pervasive Reach of Resource-Bounded {K}olmogorov Complexity in Computational Complexity Theory},
  doi     = {10.1016/j.jcss.2010.06.004},
  number  = {1},
  pages   = {14--40},
  volume  = {77},
  journal = JCSS,
  year    = {2011},
}

@Book{Arora2009,
  author    = {Arora, Sanjeev and Barak, Boaz},
  title     = {Computational Complexity: {A} Modern Approach},
  doi       = {10.1017/CBO9780511804090},
  publisher = {Cambridge University Press},
  address   = {Cambridge, UK},
  year      = {2009},
}

@InProceedings{ArvindKM1993,
  author    = {Arvind, Vikraman and K{\"{o}}bler, Johannes and Mundhenk, Martin},
  booktitle = MFCS93,
  title     = {Hausdorff Reductions to Sparse Sets and to Sets of High Information Content},
  doi       = {10.1007/3-540-57182-5_15},
  pages     = {232--241},
  year      = {1993},
}

@Book{BalcazarDG1990,
  author    = {Balc{\'{a}}zar, Jos{\'{e}} L. and D{\'{i}}az, Josep and Gabarr{\'{o}}, Joaquim},
  title     = {Structural Complexity~{II}},
  doi       = {10.1007/978-3-642-75357-2},
  publisher = {Springer},
  address   = {Berlin, Germany},
  year      = {1990},
}

@Book{BalcazarDG1995,
  author    = {Balc{\'{a}}zar, Jos{\'{e}} L. and D{\'{i}}az, Josep and Gabarr{\'{o}}, Joaquim},
  title     = {Structural Complexity~{I}},
  doi       = {10.1007/978-3-642-79235-9},
  edition   = {2nd},
  publisher = {Springer},
  address   = {Berlin, Germany},
  year      = {1995},
}

@Article{BeeriV1984,
  author  = {Beeri, Catriel and Vardi, Moshe Y.},
  title   = {A Proof Procedure for Data Dependencies},
  doi     = {10.1145/1634.1636},
  number  = {4},
  pages   = {718--741},
  volume  = {31},
  journal = JACM,
  year    = {1984},
}

@Article{Beigel1991,
  author  = {Beigel, Richard},
  title   = {Bounded Queries to {SAT} and the {B}oolean Hierarchy},
  doi     = {10.1016/0304-3975(91)90160-4},
  number  = {2},
  pages   = {199--223},
  volume  = {84},
  journal = TCS,
  year    = {1991},
}

@InProceedings{BertoniBJSY1989,
  author    = {Bertoni, Alberto and Bruschi, Danilo and Joseph, Deborah and Sitharam, Meera and Young, Paul},
  booktitle = FCT89,
  title     = {Generalized {B}oolean Hierarchies and {B}oolean Hierarchies Over {RP} (conference Abstract)},
  doi       = {10.1007/3-540-51498-8_4},
  pages     = {35--46},
  year      = {1989},
}

@Article{Book1979,
  author  = {Book, Ronald V.},
  title   = {On Languages Accepted by Space-Bounded Oracle Machines},
  doi     = {10.1007/BF00266049},
  pages   = {177--185},
  volume  = {12},
  journal = {Acta Inform.},
  year    = {1979},
}

@Article{Book1988,
  author  = {Book, Ronald V. and Ko, Ker{-}I.},
  title   = {On Sets Truth-Table Reducible to Sparse Sets},
  doi     = {10.1137/0217056},
  number  = {5},
  pages   = {903--919},
  volume  = {17},
  journal = SICOMP,
  year    = {1988},
}

@Article{BruschiJY1990,
  author  = {Bruschi, Danilo and Joseph, Deborah and Young, Paul},
  title   = {Strong Separations for the {B}oolean Hierarchy Over {RP}},
  doi     = {10.1142/S0129054190000151},
  number  = {3},
  pages   = {201--218},
  volume  = {1},
  journal = {Int. J. Found. Comput. Sci.},
  year    = {1990},
}

@Article{JBuss1988,
  author  = {Buss, Jonathan F.},
  title   = {Relativized Alternation and Space-Bounded Computation},
  doi     = {10.1016/0022-0000(88)90034-7},
  number  = {3},
  pages   = {351--378},
  volume  = {36},
  journal = JCSS,
  year    = {1988},
}

@InProceedings{Buss1988,
  author    = {Buss, Samuel R. and Hay, Louise},
  booktitle = SCT88,
  title     = {On Truth-Table Reducibility to {SAT} and the Difference Hierarchy Over {NP}},
  doi       = {10.1109/sct.1988.5282},
  pages     = {224--233},
  year      = {1988},
}

@Article{Buss1991,
  author  = {Buss, Samuel R. and Hay, Luoise},
  title   = {On Truth-Table Reducibility to {SAT}},
  doi     = {10.1016/0890-5401(91)90075-D},
  number  = {1},
  pages   = {86--102},
  volume  = {91},
  journal = INFCOMP,
  year    = {1991},
}

@Article{CaiGHHS1988,
  author  = {Cai, Jin{-}yi and Gundermann, Thomas and Hartmanis, Juris and Hemachandra, Lane A. and Sewelson, Vivian and Wagner, Klaus W. and Wechsung, Gerd},
  title   = {The {B}oolean Hierarchy {I}: Structural Properties},
  doi     = {10.1137/0217078},
  number  = {6},
  pages   = {1232--1252},
  volume  = {17},
  journal = SICOMP,
  year    = {1988},
}

@TechReport{CaiH1985,
  author      = {Cai, Jin{-}yi and Hemachandra, Lane A.},
  institution = {Computer Science Department, Cornell University},
  title       = {The {B}oolean Hierarchy: Hardware Over {NP}},
  number      = {TR85-724},
  url         = {https://hdl.handle.net/1813/6564},
  address     = {Ithaca, NY, USA},
  month       = dec,
  year        = {1985},
}

@InProceedings{CaiH1986,
  author    = {Cai, Jin{-}yi and Hemachandra, Lane A.},
  booktitle = {Structure in Complexity Theory, Proceedings of the Conference held at the University of California, Berkeley, California},
  title     = {The {B}oolean Hierarchy: Hardware Over {NP}},
  doi       = {10.1007/3-540-16486-3_93},
  pages     = {105--124},
  year      = {1986},
}

@Article{CalauttiGMT22,
  author  = {Calautti, Marco and Greco, Sergio and Molinaro, Cristian and Trubitsyna, Irina},
  title   = {Preference-Based Inconsistency-Tolerant Query Answering under Existential Rules},
  doi     = {10.1016/j.artint.2022.103772},
  eid     = {103772},
  volume  = {312},
  journal = AIJ,
  year    = {2022},
}

@Article{CaliGL2012,
  author  = {Cal{\`{i}}, Andrea and Gottlob, Georg and Lukasiewicz, Thomas},
  title   = {A General Datalog-Based Framework for Tractable Query Answering Over Ontologies},
  doi     = {10.1016/j.websem.2012.03.001},
  pages   = {57--83},
  volume  = {14},
  journal = {J. Web Semant.},
  year    = {2012},
}

@InProceedings{CeylanLMMV2021,
  author    = {Ceylan, {\.{I}}smail {\.{I}}lkan and Lukasiewicz, Thomas and Malizia, Enrico and Molinaro, Cristian and Vaicenavicius, Andrius},
  booktitle = AAAI21,
  title     = {Preferred Explanations for Ontology-Mediated Queries under Existential Rules},
  doi       = {10.1609/aaai.v35I7.16778},
  pages     = {6262--6270},
  year      = {2021},
}

@InProceedings{CeylanLMV2019,
  author    = {Ceylan, {\.{I}}smail {\.{I}}lkan and Lukasiewicz, Thomas and Malizia, Enrico and Vaicenavicius, Andrius},
  booktitle = IJCAI19,
  title     = {Explanations for Query Answers under Existential Rules},
  doi       = {10.24963/ijcai.2019/227},
  pages     = {1639--1646},
  year      = {2019},
}

@Article{ChandraKS81,
  author  = {Chandra, Ashok K. and Kozen, Dexter C. and Stockmeyer, Larry J.},
  title   = {Alternation},
  doi     = {10.1145/322234.322243},
  number  = {1},
  pages   = {114--133},
  volume  = {28},
  journal = JACM,
  year    = {1981},
}

@Article{ChangK1996,
  author  = {Chang, Richard and Kadin, Jim},
  title   = {The {B}oolean {H}ierarchy and the {P}olynomial {H}ierarchy: {A} Closer Connection},
  doi     = {10.1137/S0097539790178069},
  number  = {2},
  pages   = {340--354},
  volume  = {25},
  journal = SICOMP,
  year    = {1996},
}

@Article{ChenF2019,
  author  = {Chen, Yijia and Flum, J{\"{o}}rg},
  title   = {Some Lower Bounds in Parameterized {AC$^0$}},
  doi     = {10.1016/j.ic.2019.03.008},
  pages   = {116--134},
  volume  = {267},
  journal = INFCOMP,
  year    = {2019},
}

@InProceedings{Cook1971,
  author    = {Cook, Stephen A.},
  booktitle = STOC71,
  title     = {The Complexity of Theorem-Proving Procedures},
  doi       = {10.1145/800157.805047},
  pages     = {151--158},
  year      = {1971},
}

@Book{CormenLRS2022,
  author    = {Cormen, Thomas H. and Leiserson, Charles E. and Rivest, Ronald L. and Stein, Clifford},
  title     = {Introduction to Algorithms},
  edition   = {4th},
  publisher = {The {MIT} Press},
  address   = {Cambridge, MA, USA},
  year      = {2022},
}

@Article{CreignouV2015,
  author  = {Creignou, Nadia and Vollmer, Heribert},
  title   = {Parameterized Complexity of Weighted Satisfiability Problems: {D}ecision, Enumeration, Counting},
  doi     = {10.3233/FI-2015-1159},
  pages   = {297--316},
  volume  = {136},
  journal = {Fundam. Inform.},
  year    = {2015},
}

@Article{Dawar1998,
  author  = {Dawar, Anuj and Gottlob, Georg and Hella, Lauri},
  title   = {Capturing Relativized Complexity Classes without Order},
  doi     = {10.1002/malq.19980440108},
  pages   = {109--122},
  volume  = {44},
  journal = {Math. Log. Q.},
  year    = {1998},
}

@Article{EiterGL1997,
  author  = {Eiter, Thomas and Gottlob, Georg and Leone, Nicola},
  title   = {Abduction from Logic Programs: Semantics and Complexity},
  doi     = {10.1016/S0304-3975(96)00179-X},
  number  = {1--2},
  pages   = {129--177},
  volume  = {189},
  journal = TCS,
  year    = {1997},
}

@InProceedings{EiterLP2016,
  author    = {Eiter, Thomas and Lukasiewicz, Thomas and Predoiu, Livia},
  booktitle = KR16,
  title     = {Generalized Consistent Query Answering under Existential Rules},
  pages     = {359--368},
  url       = {http://www.aaai.org/ocs/index.php/KR/KR16/paper/view/12888},
  year      = {2016},
}

@Unpublished{EiterTechRep2016,
  author = {Eiter, Thomas and Lukasiewicz, Thomas and Predoiu, Livia},
  title  = {Generalized Consistent Query Answering under Existential Rules},
  note   = {Manuscript (extended version of~{\cite{EiterLP2016}})},
  year   = {2016},
}

@Article{Ershov1968,
  author  = {Ershov, Yuri L.},
  title   = {A Hierarchy of Sets.~{I}},
  doi     = {10.1007/BF02218750},
  number  = {1},
  pages   = {25--43},
  volume  = {7},
  journal = {Algebra and Log.},
  year    = {1968},
}

@InProceedings{Furer83,
  author    = {F{\"{u}}rer, Martin},
  booktitle = {Logic and Machines: Decision Problems and Complexity {(LaM~1983)}},
  title     = {The Computational Complexity of the Unconstrained Limited Domino Problem (with Implications for Logical Decision Problems)},
  doi       = {10.1007/3-540-13331-3_48},
  pages     = {312--319},
  year      = {1983},
}

@Book{GareyJ1979,
  author    = {Garey, Michael R. and Johnson, David S.},
  title     = {Computers and Intractability: {A} Guide to the Theory of {NP}-Completeness},
  publisher = {W. H. Freeman \& Co},
  address   = {New York, NY, USA},
  year      = {1979},
}

@Article{Ginsburg1945,
  author  = {Ginsburg, Jekuthiel},
  title   = {Iterated Exponentials},
  pages   = {340--352},
  url     = {https://oeis.org/A000405/a000405.pdf},
  volume  = {11},
  journal = {Scr. Math.},
  year    = {1945},
}

@Book{Goldreich2008,
  author    = {Goldreich, Oded},
  title     = {Computational Complexity: {A} Conceptual Perspective},
  doi       = {10.1017/CBO9780511804106},
  publisher = {Cambridge University Press},
  address   = {Cambridge, UK},
  year      = {2008},
}

@Article{Goodstein1947,
  author  = {Goodstein, Reuben L.},
  title   = {Transfinite Ordinals in Recursive Number Theory},
  doi     = {10.2307/2266486},
  number  = {4},
  pages   = {123--129},
  volume  = {12},
  journal = {J. Symb. Log.},
  year    = {1947},
}

@InProceedings{Gottlob1995,
  author    = {Gottlob, Georg and Leone, Nicola and Veith, Helmut},
  booktitle = MFCS95,
  title     = {Second Order Logic and the Weak Exponential Hierarchies},
  doi       = {10.1007/3-540-60246-1_113},
  pages     = {66--81},
  year      = {1995},
}

@InProceedings{Luck2016-conference,
  author    = {Hannula, Miika and Kontinen, Juha and L{\"{u}}ck, Martin and Virtema, Jonni},
  booktitle = GandALF16,
  title     = {On Quantified Propositional Logics and the Exponential Time Hierarchy},
  doi       = {10.4204/EPTCS.226.14},
  pages     = {198--212},
  year      = {2016},
}

@InProceedings{Luck2021,
  author    = {Hannula, Miika and Kontinen, Juha and L{\"{u}}ck, Martin and Virtema, Jonni},
  booktitle = CSL21,
  title     = {On the Complexity of {H}orn and {K}rom Fragments of Second-Order {B}oolean Logic},
  doi       = {10.4230/LIPIcs.CSL.2021.27},
  pages     = {27:1--27:22},
  year      = {2021},
}

@Article{Hartmanis1990,
  author  = {Hartmanis, Juris},
  title   = {New Developments in Structural Complexity Theory},
  doi     = {10.1016/0304-3975(90)90191-J},
  number  = {1},
  pages   = {79--93},
  volume  = {71},
  journal = TCS,
  year    = {1990},
}

@Article{Hartmanis1988,
  author  = {Hartmanis, Juris and Chang, Richard and Kadin, James and Mitchell, Stephen},
  title   = {Structural Complexity Column: Some Observations about Relativization of Space Bounded Computations},
  doi     = {10.1142/9789812794499_0031},
  pages   = {82--91},
  volume  = {35},
  journal = {Bull. EATCS},
  year    = {1988},
}

@Article{Hartmanis1985,
  author  = {Hartmanis, Juris and Immerman, Neil and Sewelson, Vivian},
  title   = {Sparse Sets in {NP$-$P}: {EXPTIME} Versus {NEXPTIME}},
  doi     = {10.1016/S0019-9958(85)80004-8},
  number  = {2--3},
  pages   = {158--181},
  volume  = {65},
  journal = {Inf. Control.},
  year    = {1985},
}

@Article{HartmanisS1965,
  author  = {Hartmanis, Juris and Stearns, Richard Edwin},
  title   = {On the Computational Complexity of Algorithms},
  doi     = {10.1090/S0002-9947-1965-0170805-7},
  pages   = {285--306},
  volume  = {117},
  journal = {Trans. Am. Math. Soc.},
  year    = {1965},
}

@Book{Hausdorff1962,
  author    = {Hausdorff, Felix},
  title     = {Set Theory},
  addendum  = {Translation into English of the original 3rd {German} ed. (1937) of {\emph{Mengenlehre}}},
  edition   = {2nd {English}},
  publisher = {Chelsea Publishing},
  address   = {New York, NY, USA},
  year      = {1962},
}

@TechReport{Hemachandra1986,
  author  = {Hemachandra, Lane A.},
  title   = {The Sky Is Falling: The Strong Exponential Hierarchy Collapses},
  number  = {TR86-777},
  url     = {https://hdl.handle.net/1813/6617},
  address = {Ithaca, NY, USA},
  month   = aug,
  school  = {Computer Science Department, Cornell University},
  year    = {1986},
}

@InProceedings{Hemachandra1987,
  author    = {Hemachandra, Lane A.},
  booktitle = STOC87,
  title     = {The {S}trong {E}xponential {H}ierarchy Collapses},
  doi       = {10.1145/28395.28408},
  pages     = {110--122},
  year      = {1987},
}

@Article{Hemachandra1989,
  author  = {Hemachandra, Lane A.},
  title   = {The {S}trong {E}xponential {H}ierarchy Collapses},
  doi     = {10.1016/0022-0000(89)90025-1},
  number  = {3},
  pages   = {299--322},
  volume  = {39},
  journal = JCSS,
  year    = {1989},
}

@Article{Hemaspaandra1994,
  author  = {Hemaspaandra, Edith},
  title   = {Census Techniques Collapse Space Classes},
  doi     = {10.1016/0020-0190(94)00059-X},
  number  = {2},
  pages   = {79--84},
  volume  = {51},
  journal = IPL,
  year    = {1994},
}

@Article{HemaspaandraR1997,
  author  = {Hemaspaandra, Lane A. and Rothe, J{\"{o}}rg},
  title   = {Unambiguous Computation: {B}oolean Hierarchies and Sparse {T}uring-Complete Sets},
  doi     = {10.1137/S0097539794261970},
  number  = {3},
  pages   = {634--653},
  volume  = {26},
  journal = SICOMP,
  year    = {1997},
}

@Article{HesseAB2002,
  author  = {Hesse, William and Allender, Eric and Barrington, David A. Mix},
  title   = {Uniform Constant-Depth Threshold Circuits for Division and Iterated Multiplication},
  doi     = {10.1016/S0022-0000(02)00025-9},
  number  = {4},
  pages   = {695--716},
  volume  = {65},
  journal = JCSS,
  year    = {2002},
}

@Book{Hopcroft1979,
  author    = {Hopcroft, John E. and Ullman, Jeffrey D.},
  title     = {Introduction to Automata Theory, Languages, and Computation},
  edition   = {1st},
  publisher = {Addison-Wesley},
  address   = {Reading, MA, USA},
  year      = {1979},
}

@Book{Immerman1999,
  author    = {Immerman, Neil},
  title     = {Descriptive Complexity},
  doi       = {10.1007/978-1-4612-0539-5},
  publisher = {Springer},
  address   = {New York, NY, USA},
  year      = {1999},
}

@Article{JennerKL1989,
  author  = {Jenner, Birgit and Kirsig, Bernd and Lange, Klaus{-}J{\"{o}}rn},
  title   = {The Logarithmic Alternation Hierarchy Collapses: {$\mathrm{A}\Sigma_2^{\mathscr{L}} = \mathrm{A}\Pi_2^{\mathscr{L}}$}},
  doi     = {10.1016/0890-5401(89)90012-6},
  number  = {3},
  pages   = {269--287},
  volume  = {80},
  journal = INFCOMP,
  year    = {1989},
}

@InProceedings{Jiang2023,
  author    = {Jiang, Jie{-}Hong R.},
  booktitle = AAAI23,
  title     = {Second-Order Quantified Boolean Logic},
  doi       = {10.1609/aaai.v37I4.25515},
  pages     = {4007--4015},
  year      = {2023},
}

@Article{Kadin1989,
  author  = {Kadin, Jim},
  title   = {{$\textnormal{P}^{\textnormal{NP}[O(\log n)]}$} and Sparse {T}uring-Complete Sets for {NP}},
  doi     = {10.1016/0022-0000(89)90024-X},
  number  = {3},
  pages   = {282--298},
  volume  = {39},
  journal = JCSS,
  year    = {1989},
}

@Article{Kannan1982,
  author  = {Kannan, Ravindran},
  title   = {Circuit-Size Lower Bounds and Non-Reducibility to Sparse Sets},
  doi     = {10.1016/s0019-9958(82)90382-5},
  number  = {1--3},
  pages   = {40--56},
  volume  = {55},
  journal = {Inf. Control.},
  year    = {1982},
}

@InCollection{Karp1972,
  author    = {Karp, Richard M.},
  booktitle = {Complexity of Computer Computations},
  title     = {Reducibility among Combinatorial Problems},
  doi       = {10.1007/978-1-4684-2001-2_9},
  editor    = {Miller, Raymond E. and Tatcher, James W.},
  pages     = {85--103},
  publisher = {Plenum Press},
  address   = {New York, NY, USA},
  year      = {1972},
}

@Article{Knoebel1981,
  author  = {Knoebel, R. Arthur},
  title   = {Exponentials Reiterated},
  doi     = {10.1080/00029890.1981.11995239},
  pages   = {235--252},
  volume  = {88},
  journal = {Am. Math. Mon.},
  year    = {1981},
}

@Article{Kobayashi1985,
  author  = {Kobayashi, Kojiro},
  title   = {On Proving Time Constructibility of Functions},
  doi     = {10.1016/0304-3975(85)90015-5},
  pages   = {215--225},
  volume  = {35},
  journal = TCS,
  year    = {1985},
}

@Article{KoblerSW87,
  author  = {K{\"{o}}bler, Johannes and Sch{\"{o}}ning, Uwe and Wagner, Klaus W.},
  title   = {The Difference and Truth-Table Hierarchies for {NP}},
  doi     = {10.1051/ita/1987210404191},
  number  = {4},
  pages   = {419--435},
  volume  = {21},
  journal = {{RAIRO}-Theor. Inf. Appl.},
  year    = {1987},
}

@Book{Kozen2006,
  author    = {Kozen, Dexter C.},
  title     = {Theory of Computation},
  doi       = {10.1007/1-84628-477-5},
  publisher = {Springer},
  address   = {London, UK},
  year      = {2006},
}

@Article{Krentel1988,
  author  = {Krentel, Mark W.},
  title   = {The Complexity of Optimization Problems},
  doi     = {10.1016/0022-0000(88)90039-6},
  number  = {3},
  pages   = {490--509},
  volume  = {36},
  journal = JCSS,
  year    = {1988},
}

@Article{Krentel92,
  author  = {Krentel, Mark W.},
  title   = {Generalizations of {Opt P} to the {P}olynomial {H}ierarchy},
  doi     = {10.1016/0304-3975(92)90073-O},
  number  = {2},
  pages   = {183--198},
  volume  = {97},
  journal = TCS,
  year    = {1992},
}

@Article{Ladner1975a,
  author  = {Ladner, Richard E.},
  title   = {On the Structure of Polynomial Time Reducibility},
  doi     = {10.1145/321864.321877},
  number  = {1},
  pages   = {155--171},
  volume  = {22},
  journal = JACM,
  year    = {1975},
}

@Article{LadnerL76,
  author  = {Ladner, Richard E. and Lynch, Nancy A.},
  title   = {Relativization of Questions about Log Space Computability},
  doi     = {10.1007/BF01683260},
  pages   = {19--32},
  volume  = {10},
  journal = {Math. Syst. Theory},
  year    = {1976},
}

@Article{Ladner1975,
  author  = {Ladner, Richard E. and Lynch, Nancy A. and Selman, Alan L.},
  title   = {A Comparison of Polynomial Time Reducibilities},
  doi     = {10.1016/0304-3975(75)90016-X},
  number  = {2},
  pages   = {103--123},
  volume  = {1},
  journal = TCS,
  year    = {1975},
}

@InProceedings{LewisSH1965,
  author    = {Lewis, Philip M. and Stearns, Richard Edwin and Hartmanis, Juris},
  booktitle = SWCT65,
  title     = {Memory Bounds for Recognition of Context-Free and Context-Sensitive Languages},
  doi       = {10.1109/FOCS.1965.14},
  pages     = {191--202},
  year      = {1965},
}

@Article{Lohrey2012,
  author  = {Lohrey, Markus},
  title   = {Model-Checking Hierarchical Structures},
  doi     = {10.1016/j.jcss.2011.05.006},
  number  = {2},
  pages   = {461--490},
  volume  = {78},
  journal = JCSS,
  year    = {2012},
}

@TechReport{Luck2016-techrep,
  author      = {L{\"{u}}ck, Martin},
  institution = {CoRR},
  title       = {Complete Problems of Propositional Logic for the Exponential Hierarchy},
  doi         = {10.48550/arXiv.1602.03050},
  number      = {arXiv:{\linebreak[0]}1602.03050},
  year        = {2016},
}

@Article{LukasiewiczM17,
  author  = {Lukasiewicz, Thomas and Malizia, Enrico},
  title   = {A Novel Characterization of the Complexity Class {$\Theta^{p}_k$} Based on Counting and Comparison},
  doi     = {10.1016/j.tcs.2017.06.023},
  pages   = {21--33},
  volume  = {694},
  journal = TCS,
  year    = {2017},
}

@Article{LukasiewiczMMMP2022,
  author  = {Lukasiewicz, Thomas and Malizia, Enrico and Martinez, Maria Vanina and Molinaro, Cristian and Pieris, Andreas and Simari, Gerardo I.},
  title   = {Inconsistency-Tolerant Query Answering for Existential Rules},
  doi     = {10.1016/j.artint.2022.103685},
  eid     = {103685},
  volume  = {307},
  journal = AIJ,
  year    = {2022},
}

@Article{Mahaney82,
  author  = {Mahaney, Stephen R.},
  title   = {Sparse Complete Sets of {NP}: Solution of a Conjecture of {B}erman and {H}artmanis},
  doi     = {10.1016/0022-0000(82)90002-2},
  number  = {2},
  pages   = {130--143},
  volume  = {25},
  journal = JCSS,
  year    = {1982},
}

@Article{Michel1992,
  author  = {Michel, Pascal},
  title   = {A Survey of Space Complexity},
  doi     = {10.1016/0304-3975(92)90151-5},
  number  = {1},
  pages   = {99--132},
  volume  = {101},
  journal = TCS,
  year    = {1992},
}

@Article{Mocas1996,
  author  = {Mocas, Sarah},
  title   = {Separating Classes in the {E}xponential-{T}ime {H}ierarchy from Classes in {PH}},
  doi     = {10.1016/0304-3975(95)00078-X},
  number  = {1--2},
  pages   = {221--231},
  volume  = {158},
  journal = TCS,
  year    = {1996},
}

@Book{Papadimitriou1994,
  author    = {Papadimitriou, Christos H.},
  title     = {Computational Complexity},
  publisher = {Addison-Wesley},
  address   = {Reading, MA, USA},
  year      = {1994},
}

@Article{PapadimitriouY1984,
  author  = {Papadimitriou, Christos H. and Yannakakis, Mihalis},
  title   = {The Complexity of Facets (and Some Facets of Complexity)},
  doi     = {10.1016/0022-0000(84)90068-0},
  number  = {2},
  pages   = {244--259},
  volume  = {28},
  journal = JCSS,
  year    = {1984},
}

@InCollection{Posner1980,
  author    = {Posner, David B.},
  booktitle = {Recursion Theory, its Generalisations and Applications: Proceedings of Logic Colloquium {`79}, Leeds, August~1979},
  title     = {A Survey of Non-R.e. Degrees {$\leq O'$}},
  doi       = {10.1017/CBO9780511629181.003},
  editor    = {Drake, Frank R. and Wainer, Stanley S.},
  pages     = {52--109},
  publisher = {Cambridge University Press},
  address   = {Cambridge, UK},
  year      = {1980},
}

@Article{RuzzoST84,
  author  = {Ruzzo, Walter L. and Simon, Janos and Tompa, Martin},
  title   = {Space-Bounded Hierarchies and Probabilistic Computations},
  doi     = {10.1016/0022-0000(84)90066-7},
  number  = {2},
  pages   = {216--230},
  volume  = {28},
  journal = JCSS,
  year    = {1984},
}

@Article{Savitch1970,
  author  = {Savitch, Walter J.},
  title   = {Relationships between Nondeterministic and Deterministic Tape Complexities},
  doi     = {10.1016/S0022-0000(70)80006-X},
  number  = {2},
  pages   = {177--192},
  volume  = {4},
  journal = JCSS,
  year    = {1970},
}

@InProceedings{SchoningW88,
  author    = {Sch{\"{o}}ning, Uwe and Wagner, Klaus W.},
  booktitle = STACS88,
  title     = {Collapsing Oracle Hierarchies, Census Functions and Logarithmically Many Queries},
  doi       = {10.1007/BFB0035835},
  pages     = {91--97},
  year      = {1988},
}

@PhdThesis{Simon1977,
  author = {Simon, Istvan},
  title  = {On Some Subrecursive Reducibilities},
  note   = {Computer Science Department Technical Report STAN-CS-77-608},
  month  = apr,
  school = {Stanford University},
  year   = {1977},
}

@PhdThesis{Simon1975,
  author = {Simon, Janos},
  title  = {On Some Central Problems in Computational Complexity},
  note   = {Computer Science Department Technical Report TR 75-224},
  url    = {https://hdl.handle.net/1813/6975},
  month  = jan,
  school = {Cornell University},
  year   = {1975},
}

@InProceedings{StearnsHL1965,
  author    = {Stearns, Richard Edwin and Hartmanis, Juris and Lewis, Philip M.},
  booktitle = SWCT65,
  title     = {Hierarchies of Memory Limited Computations},
  doi       = {10.1109/FOCS.1965.11},
  pages     = {179--190},
  year      = {1965},
}

@Article{Stockmeyer1976,
  author  = {Stockmeyer, Larry J.},
  title   = {The {P}olynomial-{T}ime {H}ierarchy},
  doi     = {10.1016/0304-3975(76)90061-X},
  number  = {1},
  pages   = {1--22},
  volume  = {3},
  journal = TCS,
  year    = {1976},
}

@Article{TsamouraCMU2021,
  author  = {Tsamoura, Efthymia and Carral, David and Malizia, Enrico and Urbani, Jacopo},
  title   = {Materializing Knowledge Bases Via Trigger Graphs},
  doi     = {10.14778/3447689.3447699},
  number  = {6},
  pages   = {943--956},
  volume  = {14},
  journal = {Proc. {VLDB} Endow.},
  year    = {2021},
}

@InProceedings{Vardi1982,
  author    = {Vardi, Moshe Y.},
  booktitle = STOC82,
  title     = {The Complexity of Relational Query Languages (extended Abstract)},
  doi       = {10.1145/800070.802186},
  pages     = {137--146},
  year      = {1982},
}

@Article{Wagner1987,
  author  = {Wagner, Klaus W.},
  title   = {More Complicated Questions about Maxima and Minima, and Some Closures of {NP}},
  doi     = {10.1016/0304-3975(87)90049-1},
  number  = {1--2},
  pages   = {53--80},
  volume  = {51},
  journal = TCS,
  year    = {1987},
}

@InProceedings{Wagner1988,
  author    = {Wagner, Klaus W.},
  booktitle = SCT88,
  title     = {Bounded Query Computations},
  doi       = {10.1109/sct.1988.5286},
  pages     = {260--277},
  year      = {1988},
}

@Article{Wagner1990,
  author  = {Wagner, Klaus W.},
  title   = {Bounded Query Classes},
  doi     = {10.1137/0219058},
  number  = {5},
  pages   = {833--846},
  volume  = {19},
  journal = SICOMP,
  year    = {1990},
}

@InProceedings{Wechsung1985,
  author    = {Wechsung, Gerd},
  booktitle = FCT85,
  title     = {On the {B}oolean Closure of {NP}},
  doi       = {10.1007/BFB0028832},
  pages     = {485--493},
  year      = {1985},
}

@Unpublished{WechsungW1985,
  author = {Wechsung, Gerd and Wagner, Klaus},
  title  = {On the {B}oolean Closure of {NP}},
  note   = {Manuscript (extended abstract as~{\cite{Wechsung1985}})},
  year   = {1985},
}

@Article{Wrathall1976,
  author  = {Wrathall, Celia},
  title   = {Complete Sets and the {P}olynomial-{T}ime {H}ierarchy},
  doi     = {10.1016/0304-3975(76)90062-1},
  number  = {1},
  pages   = {23--33},
  volume  = {3},
  journal = TCS,
  year    = {1976},
}

\end{document}